\documentclass[a4paper,10pt]{article}

\usepackage{hyperref}
\hypersetup{
  colorlinks=true,
  linkcolor=blue,
  filecolor=magenta,
  urlcolor=cyan,
  citecolor=blue,
  bookmarksopen=true,    
}

\usepackage[cm]{fullpage}

\usepackage{amsthm}
\usepackage{thmtools}
\usepackage{mathtools}

\declaretheorem[name=Theorem,numberwithin=section]{theorem}
\declaretheorem[name=Lemma,sibling=theorem]{lemma}
\declaretheorem[name=Corollary,sibling=theorem]{corollary}

\declaretheorem[style=definition,name=Definition,sibling=theorem]{definition}
\declaretheorem[style=definition,name=Example,sibling=theorem]{example}

\declaretheorem[style=remark,name=Remark,sibling=theorem]{remark}
\declaretheorem[style=remark,name=Notation,numbered=no]{notation}
\usepackage{thm-restate}

\usepackage[T1]{fontenc} 
\usepackage[utf8]{inputenc} 

\usepackage{amsmath} 
\allowdisplaybreaks
\usepackage{cite} 
\usepackage{cmll}
\usepackage{multicol}
\usepackage{proof}
\usepackage{amssymb}
\usepackage{amsfonts}
\usepackage{stmaryrd}
\SetSymbolFont{stmry}{bold}{U}{stmry}{m}{n} 

\usepackage{quiver}
\usepackage{tikz}
\usetikzlibrary{cd}
\newcommand\ket[1]{\ensuremath{|{#1}\rangle}}

\newcommand\OC{\ensuremath{{\mathcal L}_!^{\mathcal S}}}
\newcommand\fv{\ensuremath{\mathsf{fv}}}

\newcommand\zero{\ensuremath{\mathfrak 0}}
\newcommand\one{\ensuremath{\mathfrak 1}}
\newcommand\abstr[1]{#1.}

\newcommand\inl{\mbox{\small \it inl}}
\newcommand\inr{\mbox{\small \it inr}}
\newcommand\elimone{\delta_{\one}}
\newcommand\elimzero{\delta_{\zero}}
\newcommand\elimwith{\delta_{\with}}
\newcommand\elimplus{\delta_{\oplus}}

\newcommand\elimtens{\delta_{\otimes}}
\newcommand\elimbang{\delta_\oc}

\DeclareRobustCommand{\plus}{%
  \mathbin{%
    \begingroup
      \dimen0=1.4ex\relax 
      \dimen2=0.35ex\relax 
      \dimen4=\dimen0 \advance\dimen4 by -\dimen2 \divide\dimen4 by 2 
      \vcenter{\hbox{%
        \rlap{\hbox to \dimen0{\hss\vrule width \dimen2 height \dimen0\hss}}%
        \raise\dimen4\hbox to \dimen0{\hss\rule{\dimen0}{\dimen2}\hss}%
      }}%
    \endgroup
  }%
}
\newcommand\dotprod{\ensuremath{\mathrel{\raisebox{0.15ex}{\scalebox{0.8}{$\bullet$}}}}}

\newcommand\SN{\ensuremath{\mathsf{SN}}}

\newcommand\pair[2]{\langle #1, #2 \rangle}

\newcommand\topintro{\langle\rangle}

\newcommand\lra{\longrightarrow}
\newcommand\lla{\longleftarrow}
\newcommand\lras{\longrightarrow^*}
\newcommand\llas{\mathrel{{}^*{\longleftarrow}}}

\newcommand\hatmultimap{\ensuremath{\mathbin{\hat\multimap}}}
\newcommand\hatotimes{\ensuremath{\mathbin{\hat\otimes}}}
\newcommand\hatand{\ensuremath{\mathbin{\hat\with}}}
\newcommand\hatoplus{\ensuremath{\mathbin{\hat\oplus}}}
\newcommand\hatbang{\ensuremath{\mathop{\hat\oc}}}

\newcommand\citeintitle[2]{\cite[#1]{#2}}

\newcommand\scalars{\mathcal S}
\newcommand\abscat{\mathbf C_\oc^\scalars}
\newcommand\mono[1]{\llparenthesis {#1} \rrparenthesis}
\newcommand\interpretation[1]{\left\llbracket{#1}\right\rrbracket}
\newcommand\ruleinterpretation[1]{\interpretation{\vcenter{#1}}}
\newcommand\bang{\oc}
\newcommand\sm{\mathsf{SM}_\scalars}
\newcommand\set{\mathsf{Set}}
\newcommand\singleton{{\{\ast\}}}
\newcommand\naturalnumbers{\mathbb N}
\newcommand\complexnumbers{\mathbb C}

\title{An Algebraic Extension of Intuitionistic Linear Logic:\\
The \texorpdfstring{$\OC$}{L-S-!}-Calculus and Its Categorical Model}

\author{Alejandro D\'{\i}az-Caro$^{1,2}$ \and Malena Ivnisky$^{3,4,5}$ \and Octavio Malherbe$^6$}

\date{
  \small
  $^1$Universit\'e de Lorraine, CNRS, Inria, LORIA, Nancy, France\\
  $^2$Universidad Nacional de Quilmes, DCyT, Bernal, PBA, Argentina\\
  $^3$Universidad de Buenos Aires, DC, FCEyN, Buenos Aires, Argentina\\
  $^4$Universidad de Buenos Aires-CONICET, ICC, Buenos Aires, Argentina\\
  $^5$Universidad de la Rep\'ublica--MEC, PEDECIBA, Montevideo, Uruguay\\
  $^6$Universidad de la Rep\'ublica, IMERL, FIng, Montevideo, Uruguay
}

\begin{document}
\maketitle

\abstract{
  We introduce the $\OC$-calculus, a linear lambda-calculus extended with
  scalar multiplication and term addition, that acts as a proof language for
  intuitionistic linear logic (ILL). These algebraic operations enable the
  direct expression of linearity at the syntactic level, a property not
  typically available in standard proof-term calculi. Building upon previous
  work, we develop the $\OC$-calculus as an extension of the
  $\mathcal{L^S}$-calculus with the $\oc$ modality. We prove key
  meta-theoretical properties—subject reduction, confluence, strong
  normalisation, and an introduction property—as well as preserve the
  expressiveness of the original $\mathcal{L^S}$-calculus, including the
  encoding of vectors and matrices, and the correspondence between proof-terms
  and linear functions. A denotational semantics is provided in the framework
  of linear categories with biproducts, ensuring a sound and adequate
  interpretation of the calculus. This work is part of a broader programme
  aiming to build a measurement-free quantum programming language grounded in
  linear logic.
}

\section{Introduction}\label{sec:intro}
The design of proof languages plays a central role in both logic and
programming languages, particularly when reasoning about resource usage and
algebraic computation. Linear Logic~\cite{Girard87} provides a framework for
such reasoning, but its standard proof-term calculi do not make linear
properties—such as distribution over addition or scalar multiplication—explicit
at the syntactic level. This lack of syntactic expressiveness limits its
applicability in settings like quantum computing, where linearity is not just a
property but a fundamental constraint.

Linear Logic is named as such because it is modelled by vector spaces and
linear maps, and more generally by monoidal
categories~\cite{EilenbergKelly66,Benabou63,Seely89,Barr91}. These types of
categories also include the so-called Cartesian categories, generating a
formal place of interaction between purely algebraic structures and purely
logical structures, i.e.~between algebraic operations and the exponential
connective ``$\oc$''.  In the strictly linear fragment (without $\oc$),
functions between two propositions are linear functions. However,
expressing this linearity within the proof-term language itself is
challenging. Properties such as $f(u+v) = f(u)+f(v)$ and $f(a \cdot u) = a
\cdot f(u)$, for some scalar $a$, require operations like addition and
scalar multiplication, which are typically absent in the proof language.

In~\cite{DiazcaroDowekMSCS24} this challenge has been addressed by introducing
the $\mathcal{L^S}$-calculus, a proof language for the strictly linear fragment
of intuitionistic linear logic (IMALL) extended with syntactic constructs for
addition ($\plus$) and scalar multiplication ($\dotprod$). While the extension does not change
provability, it enables the expression of linear properties at the term level,
such as distributivity of functions over sums and scalar multiplication. For
example, any proof-term $t$ of $A\multimap B$ satisfies that $t~(u\plus v)$ is observationally equivalent to $t~u \plus t~v$.
Similarly, $t~(a\dotprod u)$ is equivalent to $a\dotprod t~u$.

This extension involves changing the proof-term $\star$ of proposition $\one$ into a
family of proof-terms $a.\star$, one for each scalar $a$ in a given fixed commutative semiring
$\mathcal S$:
\[
  \infer[\mbox{\small $\one_i$($a$)}]{\vdash a.\star:\one}{}
\]

The following two deduction rules have also been added:
\[
  \infer[{\mbox{\small sum}}]{\Gamma\vdash t\plus u:A}{\Gamma\vdash t:A & \Gamma\vdash u:A}
  \qquad\qquad
  \infer[{\mbox{\small prod($a$)}}]{\Gamma\vdash a\dotprod t:A}{\Gamma\vdash t:A}
\]

Incorporating these rules requires adding commuting rules to preserve
cut-elimination. Indeed, the new rules may appear between an introduction and
an elimination of some connective. For example, consider the following
derivation.
\[
  \infer[\mbox{\small $\with_{e1}$}]{\Gamma\vdash C}
  {
    \infer[\mbox{\small prod($a$)}]{\Gamma\vdash A\with B}
    {
      \infer[\mbox{\small $\with_i$}]{\Gamma\vdash A\with B}
      {
	\Gamma\vdash A
	&
	\Gamma\vdash B
      }
    }
    &
    \Gamma,A\vdash C
  }
\]

To achieve cut-elimination, we must commute the rule prod($a$) either with the
introduction
or with the elimination, as follows:

\[
  \infer[\mbox{\small$\with_{e1}$}]{\Gamma\vdash C}
  {
    \infer[\mbox{\small$\with_i$}]{\Gamma\vdash A\with B}
    {
      \infer[\mbox{\small prod($a$)}]{\Gamma\vdash A}{\Gamma\vdash A}
      &
      \infer[\mbox{\small prod($a$)}]{\Gamma\vdash B}{\Gamma\vdash B}
    }
    &
    \Gamma,A\vdash C
  }
  \qquad
  \qquad
  \infer[\mbox{\small prod($a$)}]{\Gamma\vdash C}
  {
    \infer[\mbox{\small$\with_{e1}$}]{\Gamma\vdash C}
    {
      \infer[\mbox{\small$\with_i$}]{\Gamma\vdash A\with B}
      {
	\Gamma\vdash A
	&
	\Gamma\vdash B
      }
      &
      \Gamma,A\vdash C
    }
  }
\]

Both of these are reducible. We refer to the sum and prod($a$) rules
as \emph{interstitial rules}, as they can appear in the interstice between an
introduction and an elimination. We choose to commute these rules with the
introductions as much as possible.  This means we introduce the following
commutation rule
\(
  a\dotprod\pair{t}{u}\lra\pair{a\dotprod t}{a\dotprod u}
\)
instead of the alternative rule
\(
  \elimwith^1(a\dotprod t,x^A.u) \lra a\dotprod\elimwith^1(t,x^A.u)
\).
This choice provides a better introduction property, for example, a closed irreducible proof-term
of a proposition $A\with B$ is a pair.

In the conference version of this paper \cite{DiazcaroDowekIvniskyMalherbeWoLLIC2024} we extended the proof system
to second-order intuitionistic linear logic, adding the exponential connective
and a universal quantifier. We proved that the linearity result still holds for
second order. 

In this journal version of the paper, polymorphism has been removed to allow for a thorough study of the ILL fragment. We provide a denotational semantics for this fragment of the language, which we refer to as the $\OC$-calculus.

While our primary focus is on introducing a minimal extension to the proof
language within the realm of intuitionistic linear logic, our work draws
inspiration from various domains, particularly quantum programming languages.
These languages were trailblazers in merging programming constructs with
algebraic operations, such as addition and scalar multiplication.

QML~\cite{AltenkirchGrattageLICS05} introduced the concept of superposition of terms through the
$\mathsf{if}^\circ$ constructor, allowing the representation of linear
combinations $a.u+b.v$ by the expression $\mathsf{if}^\circ\ a.\ket 0 + b.\ket
1\ \mathsf{then}\ u\ \mathsf{else}\ v$. The linearity (and even unitarity)
properties of QML were established through a translation to quantum circuits.

The ZX calculus~\cite{ZXBook17}, a graphical language based on a categorical
model, lacks direct syntax for addition or scalar multiplication but defines a
framework where such constructs can be interpreted. This language is extended
by the Many Worlds Calculus~\cite{ChardonnetDevismeValironVilmartLMCS25} which allows for
linear combinations of diagrams.

The algebraic lambda-calculus~\cite{Vaux2009} and Lineal~\cite{Lineal}
exhibit syntax similarities with $\mathcal L^{\mathcal S}$-calculus. However,
the algebraic lambda-calculus lacks a proof of linearity in its simple
intuitionistic type system. In contrast, Lineal is untyped and axiomatizes the linearity,
relying on explicit definitions like $f(u+v) = f(u) +
f(v)$ and $f(a.u) = a.f(u)$.  Several type systems have been proposed for
Lineal~\cite{ArrighiDiazcaroLMCS12,ArrighiDiazcaroValironIC17,LambdaS,DiazcaroGuillermoMiquelValironLICS19,DiazcaroMalherbeLMCS22}.
However, none of these systems are related to linear logic, and their purpose
is not to prove linearity, as we do, but rather to axiomatize it.

One of the objectives of our line of research is to define a measurement-free quantum programming language rooted in Linear Logic. Or, another way to state it, is to extend a proof language of Intuitionistic Linear Logic with the necessary constructs to represent quantum algorithms. The $\mathcal L^{\mathcal S}$-calculus was the first step in this direction~\cite{DiazcaroDowekMSCS24}.
In our conference paper~\cite{DiazcaroDowekIvniskyMalherbeWoLLIC2024}, for
which the current paper is its journal version, we introduced the $\mathcal
L_{\oc\forall}^{\mathcal S}$-calculus as the second step, extending the
language to Second-Order Intuitionistic Linear Logic, where we can encode
natural numbers, and iterators, and in general, classical computing, apart from
vectors and matrices for quantum computing. The $\OC$-calculus is a minimal
extension of the $\mathcal L^{\mathcal S}$-calculus, adding just the
exponential connective.

The reason to discard polymorphism is just to study its categorical semantics
without needing to appeal to linear hyperdoctrines~\cite{Maneggia}, which would
introduce unnecessary complexity to the model. The $\OC$-calculus is a step
towards a quantum programming language based on Linear Logic. Next steps
include to add measurement, and to restrict the language to unitary maps
following the techniques from~\cite{DiazcaroGuillermoMiquelValironLICS19}.

Our contributions are as follows:
\begin{itemize}
  \item We extend the $\mathcal L^{\mathcal S}$-calculus to intuitionistic linear logic, resulting in the $\OC$-calculus (Section~\ref{sec:secls}).
  \item We prove its correctness (Section~\ref{sec:correctness}), namely,
    Subject Reduction (Theorem~\ref{thm:SR}), Confluence
    (Theorem~\ref{thm:Confluence}), Strong Normalisation 
    (Theorem~\ref{thm:ST}), and the Introduction Property
    (Theorem~\ref{thm:introductions}). In particular, the proof of strong
    normalisation involves applying a technique due to Girard, called ultra-reduction. It is mostly a straightforward extension of the proof for the $\mathcal L^{\mathcal S}$-calculus.
  \item Since it is an extension, the encodings for vectors
    (Section~\ref{sec:secvectors}) and matrices
    (Section~\ref{sec:secmatrices}), already
    present in the $\mathcal L^{\mathcal S}$-calculus are still valid. We
    provide detailed explanations of these encodings for self-containment.
  \item The syntactic linearity proofs
    (Section~\ref{sec:seclinearity}) extend the
  linearity proofs for the $\OC$-calculus.

  \item The main contribution of this journal version, with respect to the
    conference paper~\cite{DiazcaroDowekIvniskyMalherbeWoLLIC2024}, is a
    refinement on the reduction rules and the denotational semantics of the
    $\OC$-calculus
    (Section~\ref{sec:denotationalsemantics}).

  We provide a categorical characterisation of the $\OC$-calculus, proving that
  it is sound (Theorem~\ref{thm:soundness}) and
  adequate (Theorem~\ref{thm:semanticsadequacy}) with
  respect to any linear category with biproducts such that there exists a
  monomorphism from the semiring of scalars to the semiring $\mathrm{Hom}(I,
  I)$ (Definition~\ref{def:abscat}).
\end{itemize}

\section{The \texorpdfstring{$\OC$}{LS!}-calculus}
\subsection{Syntax and Typing Rules}
\label{sec:secls}

The $\OC$-logic is intuitionistic linear logic (we follow the presentation of
$\mathsf{DILL}$~\cite{DILL}, extended with the additive linear
connectives).
\[
  A =  \one \mid A \multimap A \mid A \otimes A \mid \top \mid \zero
  \mid A \with A \mid A \oplus A \mid \oc A 
\]

Let ${\mathcal S}$ be a commutative semiring of {\it scalars}, for instance $\{\star\}$,
$\mathbb N$, ${\mathbb Q}$, ${\mathbb R}$, or ${\mathbb C}$.  The proof-terms
of the $\OC$-calculus are given in Figure~\ref{fig:proofterms}, where $a$ is a scalar
in $\mathcal S$.
\begin{figure}[t]
  \[
    \begin{array}{rllc}
      t = x \mid t \plus u \mid a \dotprod t &\\
      & \mid a.\star & \mid \elimone(t,u) & (\one)\\  
      & \mid \lambda x^A.t & \mid t~u &(\multimap)\\
      &\mid t \otimes u & \mid \elimtens(t, x^A y^B.u) & (\otimes)\\
      & \mid \langle \rangle  && (\top)\\
      & &\mid\elimzero(t) & (\zero)\\
      &\mid \pair{t}{u} &\mid \elimwith^1(t,x^A.u) \mid \elimwith^2(t,x^B.u) & (\with)\\
      &\mid \inl(t)\mid \inr(t) & \mid \elimplus(t,x^A.u,y^B.v) & (\oplus)\\
      & \mid \oc t & \mid \elimbang(t, x^A.u) & (\oc)\\
      &\textrm{Introductions} & \textrm{Eliminations} & \textrm{Connective}\\
    \end{array}
  \]
  \caption{The proof-terms of the $\OC$-calculus.\label{fig:proofterms}}
\end{figure}

The $\alpha$-equivalence relation and the free and bound variables of a
proof-term are defined as usual, we write as $\fv(t)$ the set of free variables
of $t$.  Proof-terms are defined modulo $\alpha$-equivalence.  A proof-term is
closed if it contains no free variables. 
We write $(u/x)t$ for the
substitution of $u$ for $x$ in $t$. 

A judgement has the form $\Upsilon;\Gamma\vdash t:A$, where $\Upsilon$ is the
intuitionistic
context and $\Gamma$ the linear one.  The deduction rules are those of
Figure~\ref{fig:figuretypingrules}.  These rules are exactly the deduction rules of
intuitionistic linear natural deduction, with proof-terms, with
two differences: the interstitial rules and the scalars (see
Section~\ref{sec:intro}). This typing system admits
weakening in the intuitionistic context, which will sometimes be used
implicitly.

\begin{figure*}[t]
  \[
    \infer[{\mbox{\small {lin-}ax}}]{{\Upsilon;} x^A \vdash x:A}{}
    \qquad
    \infer[{\mbox{\small ax}}]{\Upsilon, x^A;\varnothing \vdash x:A}{}
    \qquad
    \infer[{\mbox{\small sum}}]{{\Upsilon;} \Gamma \vdash t \plus u:A}{{\Upsilon;} \Gamma \vdash t:A & {\Upsilon;} \Gamma \vdash u:A}
    \qquad
    \infer[{\mbox{\small prod}(a)}]{{\Upsilon;} \Gamma \vdash a \dotprod t:A}{{\Upsilon;} \Gamma \vdash t:A}
  \]
  \[
    \infer[{\mbox{\small $\one_i$}(a)}] {{\Upsilon;\varnothing} \vdash a.\star:\one}{}
    \qquad
    \infer[{\mbox{\small $\one_e$}}]{{\Upsilon;}\Gamma, \Delta \vdash \elimone(t,u):A}{{\Upsilon;}\Gamma \vdash t:\one & {\Upsilon;}\Delta \vdash u:A}
  \]
  \[
    \infer[{\mbox{\small $\multimap_i$}}]{{\Upsilon;}\Gamma \vdash \lambda x^A.t:A \multimap B}{{\Upsilon;}\Gamma, x^A \vdash t:B}
    \qquad
    \infer[{\mbox{\small $\multimap_e$}}]{{\Upsilon;}\Gamma, \Delta \vdash t~u:B}{{\Upsilon;}\Gamma \vdash t:A\multimap B & {\Upsilon;}\Delta \vdash u:A}
  \]
  \[
    \infer[{\mbox{\small $\otimes_i$}}]{{\Upsilon;}\Gamma, \Delta \vdash t \otimes u:A \otimes B}{{\Upsilon;}\Gamma \vdash t:A & {\Upsilon;}\Delta \vdash u:B}
    \qquad
    \infer[{\mbox{\small $\otimes_e$}}]{{\Upsilon;}\Gamma, \Delta \vdash \elimtens(t, x^A y^B.u):C}{{\Upsilon;}\Gamma \vdash t:A \otimes B & {\Upsilon;}\Delta, x:A, y:B \vdash u:C}
  \]
  \[
    \infer[{\mbox{\small $\top_i$}}]{{\Upsilon;} \Gamma \vdash \langle \rangle:\top}{}
    \qquad
    \infer[{\mbox{\small $\zero_e$}}]{{\Upsilon;}\Gamma, \Delta \vdash \elimzero(t):C}{{\Upsilon;}\Gamma \vdash t:\zero}
  \]
  \[
    \infer[{\mbox{\small $\with_i$}}]{{\Upsilon;}\Gamma \vdash \pair{t}{u}:A \with B}{{\Upsilon;}\Gamma \vdash t:A & {\Upsilon;}\Gamma \vdash u:B}
    \qquad
    \infer[{\mbox{\small $\with_{e1}$}}]{{\Upsilon;}\Gamma, \Delta \vdash \elimwith^1(t,x^A.u):C}{{\Upsilon;}\Gamma \vdash t:A \with B & {\Upsilon;}\Delta, x^A \vdash u:C}
    \qquad
    \infer[{\mbox{\small $\with_{e2}$}}]{{\Upsilon;}\Gamma, \Delta \vdash \elimwith^2(t,x^B.u):C}{{\Upsilon;}\Gamma \vdash t:A \with B & {\Upsilon;}\Delta, x^B \vdash u:C}
  \]
  \[
    \infer[{\mbox{\small $\oplus_{i1}$}}]{{\Upsilon;}\Gamma \vdash \inl(t):A \oplus B}{{\Upsilon;}\Gamma \vdash t:A}
    \qquad
    \infer[{\mbox{\small $\oplus_{i2}$}}]{{\Upsilon;}\Gamma \vdash \inr(t):A \oplus B}{{\Upsilon;}\Gamma \vdash t:B}
    \qquad
    \infer[{\mbox{\small $\oplus_e$}}]{{\Upsilon;}\Gamma, \Delta \vdash \elimplus(t,x^A.u,y^B.v):C}{{\Upsilon;}\Gamma \vdash t:A \oplus B & {\Upsilon;}\Delta, x^A \vdash u:C & {\Upsilon;}\Delta, y^B \vdash v:C}
  \]
  \[
    \infer[{\mbox{\small $\oc_i$}}]{\Upsilon; \varnothing \vdash \oc t: \oc A}{\Upsilon; \varnothing \vdash t:A}
    \qquad
    \infer[{\mbox{\small $\oc_e$}}]{\Upsilon; \Gamma, \Delta \vdash \elimbang(t, x^A.u):B}{\Upsilon; \Gamma \vdash t:\oc A & \Upsilon, x^A; \Delta \vdash u:B}
  \]
  \caption{The deduction rules of the $\OC$-calculus.\label{fig:figuretypingrules}}
\end{figure*}

\begin{figure}[t]
  \centering
  \begin{align*}
    \elimone(a.\star,t) & \lra  a \dotprod t \\
    (\lambda x^A.t)~u & \lra  (u/x)t \\
    \elimtens(u \otimes v,x^A y^B.w) & \lra  (u/x,v/y)w  \\
    \elimwith^1(\pair{t_1}{t_2}, x^A.v) & \lra  (t_1/x)v \\
    \elimwith^2(\pair{t_1}{t_2}, x^A.v) & \lra  (t_2/x)v \\
    \elimplus(\inl(t),x^A.v,y^B.w) & \lra  (t/x)v\\
    \elimplus(\inr(u),x^A.v,y^B.w) & \lra  (u/y)w\\
    \elimbang(\oc t, x^A.u) & \lra (t/x)u
  \end{align*}
  \[
    \begin{array}{r@{\,}l@{\qquad}r@{\,}l}
      {a.\star} \plus b.\star&\lra  (a+b).\star
      	&a \dotprod b.\star&\lra  (a \times b).\star\\
      (\lambda x^A.t) \plus (\lambda x^A.u) & \lra  \lambda x^A.(t \plus u)
      	&a \dotprod \lambda x^A. t &\lra  \lambda x^A. a \dotprod t\\
      \elimtens(t \plus u,x^A y^B.v) & \lra \elimtens(t,x^A y^B.v) \plus \elimtens(u,x^A y^B.v)
      	&\elimtens(a \dotprod t,x^A y^B.v) & \lra a \dotprod \elimtens(t,x^A y^B.v)\\
      \langle \rangle \plus \langle \rangle & \lra  \langle \rangle
      	&a \dotprod \langle\rangle &\lra  \langle\rangle\\
      \pair{t}{u} \plus \pair{v}{w} & \lra  \pair{t \plus v}{u \plus w}
      	&a \dotprod \pair{t}{u} &\lra  \pair{a \dotprod t}{a \dotprod u}\\
      \elimplus(t \plus u,x^A.v,y^B.w) & \lra \elimplus(t,x^A.v,y^B.w) \plus \elimplus(u,x^A.v,y^B.w)
      	&\elimplus(a \dotprod t,x^A.v,y^B.w) & \lra a \dotprod \elimplus(t,x^A.v,y^B.w)\\
	\elimbang(t \plus u, x^A.v) & \lra \elimbang(t, x^A.v) \plus \elimbang(u, x^A.v)
	&\elimbang(a \dotprod t, x^A.v) & \lra a \dotprod \elimbang(t, x^A.v)
    \end{array}
  \]
  \caption{The reduction rules of the $\OC$-calculus.\label{fig:figureductionrules}}
\end{figure}

The reduction rules are those of Figure~\ref{fig:figureductionrules}.  As
usual, the reduction relation is written $\lra$, its inverse $\lla$,
its reflexive-transitive closure $\lras$, the reflexive-transitive
closure of its inverse $\llas$, and its reflexive-symmetric-transitive
closure $\equiv$.  The first group of rules correspond to the reduction of
cuts on the connectives $\one$, $\multimap$, $\otimes$, $\with$, $\oplus$, and $\oc$.
The next two groups enable us to commute the interstitial rules sum and prod($a$)
with the introduction rules of the connectives $\one$, $\multimap$,
$\top$, and $\with$, and with the elimination rule of the connectives
$\otimes$, $\oplus$, and $\oc$.
For instance, the rule
\(
  \pair{t}{u} \plus \pair{v}{w} \lra  \pair{t \plus v}{u \plus w}
\)
pushes the symbol $\plus$ inside the pair.  The zero-ary commutation rules add
and multiply the scalars:
\(
  {a.\star} \plus b.\star \lra  (a+b).\star
\),
\(
  a \dotprod b.\star \lra  (a \times b).\star
\).

In the preliminary presentation of the calculus
\cite{DiazcaroDowekIvniskyMalherbeWoLLIC2024}, the interstitial rules commuted
with the introduction of the $\oc$ connective, instead of the elimination
rules. This is not appropriate, since it would require its model to use a
semiadditive comonad to interpret $\oc$, which is not the case in the
considered models.

\subsection{Correctness}\label{sec:correctness}
We now prove the subject reduction, confluence, strong normalisation, and introduction
properties of the $\OC$-calculus.

\subsubsection{Subject reduction}
\label{sec:SR}
The subject reduction property is not completely trivial.  As noted in the
introduction, we commute the sum rule with the introductions \emph{as much as
possible}. It is not possible in the case of the connectives $\otimes$ and
$\oplus$, since it would break subject reduction.  For example, the rule $(t\otimes u)\plus(v\otimes w)\lra (t\plus v)\otimes(u\plus w)$ would not be valid.
Intuitively, if we consider the $\otimes$ to be a mathematical tensor product
and $+$ a vector sums, it is clear that the rule is not valid.
Formally, it suffices to check that we have
\(
  \varnothing;x^A,y^A\vdash (x\otimes y)\plus (y\otimes x):A\otimes A
\)
but
\(
  \varnothing;x^A,y^A \nvdash (x \plus y)\otimes (y \plus x):A\otimes A
\),
since the rule $\otimes_i$ is multiplicative. 

\begin{restatable}[Subject reduction]{theorem}{SR}
  \label{thm:SR}
  If $\Upsilon; \Gamma \vdash t:A$ and $t \lra u$, then $\Upsilon; \Gamma \vdash u:A$.
\end{restatable}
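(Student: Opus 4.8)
The plan is to prove the statement by induction on the derivation of $t \lra u$, which is generated from the base reduction rules of Figure~\ref{fig:figureductionrules} by congruence closure. The congruence cases are immediate: for a reduction occurring inside a subterm, I would invert the typing rule applied at the root of $t$, apply the induction hypothesis to the reduced subterm, and reassemble the same rule on the unchanged siblings. This inversion step is clean because the calculus is syntax-directed—each proof-term constructor, including the interstitial markers $\plus$ and $\dotprod$, determines uniquely the last typing rule used—so the typing of any subterm can be read off unambiguously. The genuine content therefore lies in the base cases, which split into the cut-elimination rules (the first block) and the commutation rules (the two blocks pushing $\plus$ and $\dotprod$ through constructors).

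Before the induction, I would establish two substitution lemmas, one for each context of a judgement. The \emph{linear substitution lemma} states that if $\Upsilon; \Gamma, x^A \vdash t : B$ and $\Upsilon; \Delta \vdash u : A$, then $\Upsilon; \Gamma, \Delta \vdash (u/x)t : B$; it is proved by induction on the derivation of $t$, the linearity of $x$ ensuring that $x$ falls into exactly one premise at each context-splitting rule. The \emph{intuitionistic substitution lemma} states that if $\Upsilon, x^A; \Gamma \vdash t : B$ and $\Upsilon; \varnothing \vdash u : A$, then $\Upsilon; \Gamma \vdash (u/x)t : B$; here $x$ may occur any number of times, and the empty linear context on $u$ is what licenses its duplication and erasure. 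The cut-elimination base cases then follow at once: the rules for $\multimap$, $\otimes$, $\with$, and $\oplus$ use linear substitution; the rule $\elimbang(\oc t, x^A.u) \lra (t/x)u$ uses intuitionistic substitution, since inverting $\oc_e$ and $\oc_i$ yields $\Upsilon; \varnothing \vdash t : A$ with $x$ placed in the intuitionistic context; and $\elimone(a.\star, t) \lra a \dotprod t$ holds because $\one_i$ forces the empty linear context, so the contexts realign after an application of prod($a$).

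For the commutation base cases I would invert the typing of the left-hand side and rebuild the right-hand side, and this is where I expect the one real obstacle: the interaction of the multiplicative context splitting with the commutation of the sum rule. For $\elimtens(t \plus u, x^A y^B.v) \lra \elimtens(t, x^A y^B.v) \plus \elimtens(u, x^A y^B.v)$, inversion gives $\Upsilon; \Gamma \vdash t : A \otimes B$ and $\Upsilon; \Gamma \vdash u : A \otimes B$ sharing $\Gamma$, together with $\Upsilon; \Delta, x^A, y^B \vdash v : C$, so the whole term is typed in $\Gamma, \Delta$. On the right, the continuation $v$ is duplicated into two eliminations, each reusing $\Delta$; this is sound precisely because the sum rule imposes the \emph{same} linear context on both summands, so the resulting sum is again typed in $\Gamma, \Delta$. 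The identical pattern, with the identical justification, covers the $\elimplus$ and $\elimbang$ commutations. Dually, the rules pushing $\plus$ and $\dotprod$ through the introductions $\one_i$, $\multimap_i$, $\top_i$, and $\with_i$ are unproblematic because those rules do not split the linear context, and the zero-ary rules ${a.\star} \plus b.\star \lra (a+b).\star$ and $a \dotprod b.\star \lra (a \times b).\star$ preserve the type $\one$ trivially. Throughout, the restriction noted before the theorem—that $\plus$ is \emph{not} commuted through $\otimes_i$, $\oplus_{i1}$, or $\oplus_{i2}$—is exactly what keeps every case type-correct, since any such rule would demand merging the disjoint contexts of a multiplicative introduction.
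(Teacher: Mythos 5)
Your proposal is correct and follows essentially the same route as the paper: the same two substitution lemmas (linear and intuitionistic, with the $\oc$ cut handled by the intuitionistic one), followed by induction on the reduction relation, with the commutation cases justified exactly as the paper does—by the fact that the sum rule imposes the same linear context on both summands, so duplicating the elimination continuation is type-correct. The only difference is presentational: the paper treats the congruence cases implicitly, while you spell out the inversion-and-reassembly argument.
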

\begin{proof}
  By induction on the relation $\lra$. The proof is given in Appendix~\ref{proof:SR}.
\end{proof}

\subsubsection{Confluence}

\begin{theorem}[Confluence]
\label{thm:Confluence}
The $\OC$-calculus is confluent.
\end{theorem}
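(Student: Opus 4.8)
The plan is to derive confluence from \emph{local confluence} by Newman's Lemma, using that the calculus is strongly normalising (Theorem~\ref{thm:ST}). Since that strong normalisation proof proceeds by Girard's reducibility/ultra-reduction method and never appeals to confluence, there is no circularity in citing it here although it is stated afterwards. It therefore suffices to prove local confluence: whenever $\Upsilon;\Gamma\vdash t:A$ with $u_1\lla t\lra u_2$, the two one-step reducts have a common reduct $v$, with $u_1\lras v$ and $u_2\lras v$.

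For local confluence I would run a critical-pair analysis on the rules of Figure~\ref{fig:figureductionrules}. All of them are left-linear, the only point of care being the repeated bound variable in $(\lambda x^A.t)\plus(\lambda x^A.u)\lra\lambda x^A.(t\plus u)$, which is matched up to $\alpha$-equivalence. The key structural observation is that the rules are engineered so that no two left-hand sides unify at a common position: each cut rule is headed by one of the pairwise distinct elimination symbols applied to a matching introduction, whereas each commutation rule is headed either by $\plus$, by $a\dotprod(-)$, or by an elimination whose principal argument is a sum or a scalar product. Because there is deliberately no rule distributing $\dotprod$ over $\plus$, and no sum- or scalar-commutation rule for the introductions of $\otimes$, $\oplus$ and $\oc$, a sum or scalar product of type $A\otimes B$, $A\oplus B$ or $\oc A$ is already irreducible at its root; hence none of these principal arguments can trigger a second, overlapping redex. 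Consequently every local divergence arises from two redexes at disjoint positions, or from one sitting inside a variable position of the other, and both patterns are joinable in the usual way.

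The delicate case, and where I expect the actual work to concentrate, is the only duplicating rule, $\elimbang(\oc t,x^A.u)\lra(t/x)u$: here $x$ is discharged into the intuitionistic context and may occur several times in $u$, so firing the rule copies $t$ together with any redex inside it. The attendant variable overlaps---reducing within $t$ before duplicating, versus duplicating and then reducing each copy---are closed by reducing all copies, which rests on the auxiliary fact that reduction is stable under substitution ($t\lras t'$ implies $(t/x)u\lras(t'/x)u$, and $u\lras u'$ implies $(t/x)u\lras(t/x)u'$). Modulo this bookkeeping, and the routine care of working up to $\alpha$-equivalence when a cut rule performs a substitution, every local peak closes, and Newman's Lemma upgrades the result to confluence. (The same analysis in fact shows the rewrite system is orthogonal once $\oc$-duplication is taken into account, so a Tait--Martin-L\"of parallel-reduction argument would yield confluence without invoking Theorem~\ref{thm:ST}; I would still favour the Newman route, as its bookkeeping is lighter.)
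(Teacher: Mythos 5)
Your proof is correct, but it takes a genuinely different route from the paper's. The paper's entire proof is a one-liner: on well-formed proof-terms the system of Figure~\ref{fig:figureductionrules} is left-linear and has no critical pairs, so confluence follows directly from \cite[Theorem 6.8]{Nipkow} (orthogonal higher-order rewrite systems are confluent); no termination argument is needed and no forward reference to Theorem~\ref{thm:ST} is incurred. You instead prove local confluence by an overlap analysis and then invoke Newman's Lemma together with strong normalisation, whose non-circularity you rightly check. Both arguments hinge on the same key point, which you correctly identify: syntactically there \emph{are} overlaps --- the left-hand side $a.\star \plus b.\star$ unifies with the principal argument of $\elimtens(t \plus u, x^A y^B.v)$, giving the peak $\elimtens((a+b).\star, x^A y^B.v) \lla \elimtens(a.\star \plus b.\star, x^A y^B.v) \lra \elimtens(a.\star, x^A y^B.v) \plus \elimtens(b.\star, x^A y^B.v)$, which is \emph{not} joinable --- so your opening claim that no two left-hand sides unify is literally false untyped, and it is precisely the typing restriction (your observation that sums and scalar products of type $A \otimes B$, $A \oplus B$, $\oc A$ are root-irreducible) that kills all such overlaps; this is why the paper is careful to say ``applied to well-formed proof-terms''. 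As for what each approach buys: the paper's is shorter, preserves the natural ordering of the theorems (confluence independent of SN), and, being an orthogonality argument, would survive extensions breaking termination; yours is more elementary and self-contained (Newman plus a hand-made critical-pair analysis instead of the orthogonal-HRS theorem), at the cost of the forward dependence on Theorem~\ref{thm:ST} and of the higher-order bookkeeping for variable overlaps under the substitution-performing rules $\elimbang$, $\beta$, etc., which you correctly reduce to stability of reduction under substitution. Your closing parenthetical --- parallel reduction on the orthogonal system --- is in fact exactly the paper's route.
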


\begin{proof}
The reduction system of Figure~\ref{fig:figureductionrules} applied to
well-formed proof-terms is left linear and has no critical pairs.  By
\cite[Theorem 6.8]{Nipkow}, it is confluent.
\end{proof}

\subsubsection{Strong normalisation}\label{sec:ST}
We now prove that all reduction sequences are finite.  To handle the symbols
$\plus$ and $\dotprod$ and the associated reduction rules, we prove the strong
normalisation of an extended reduction system, in the spirit of Girard's
ultra-reduction\footnote{Ultra-reduction is used, in particular, in the
adequacy of the connectives $\plus$ and $\dotprod$.}~\cite{GirardPhDThesis},
whose strong normalisation obviously implies that of the rules of
Figure~\ref{fig:figureductionrules}.

\begin{definition}[Ultra-reduction]
  Ultra-reduction is defined with the rules of Figure~\ref{fig:figureductionrules},
  plus the rules
  \[
    t \plus u  \lra  t \qquad\qquad
    t \plus u  \lra  u \qquad\qquad
    a \dotprod t  \lra  t
  \]
\end{definition}

Our proof is a direct extension from the proof of the $\mathcal L^{\mathcal
S}$-calculus~\cite{DiazcaroDowekMSCS24}.  We use a presentation ready to be
extended to polymorphism, by following the method introduced by
Tait~\cite{TaitJSL67} for Gödel's System T and generalised to System F by
Girard~\cite{GirardPhDThesis}.

\begin{definition}
  $\SN$ is the set of strongly normalising proof-terms and $Red(t)$ is the set of one-step reducts of $t$. That is,
  $\SN = \{ t \mid t\textrm{ strongly normalises}\}$ and $Red(t) = \{u\mid t\lra u\}$.
\end{definition}

\begin{definition}[Reducibility candidates]
  A set of proof-terms $E$ is a reducibility candidate if and only if the following conditions are satisfied.

    (CR1) $E \subseteq \SN$.

    (CR2) If $t \in E$ and $t \lra t'$, then $t' \in E$.

    (CR3) If $t$ is not an introduction and $Red(t) \subseteq E$, then $t \in E$.

The set of all reducibility candidates is called $\mathcal R$. 
\end{definition}

\begin{definition}
  Let $E,F$ be sets of proof-terms. We define the following sets.
  \begin{align*}
    E \hatmultimap F &= \{t \in \SN \mid \textrm{if }t \lras \lambda x^A.u\textrm{, then for every }v \in E, (v/x)u \in F\}\\
    E \hatotimes F &= \{t \in \SN \mid \textrm{if }t \lras u \otimes v\textrm{, then }u \in E\textrm{ and }v \in F\}\\
    E \hatand F &= \{t \in \SN \mid \textrm{if }t \lras \langle u,v \rangle\textrm{, then }u \in E\textrm{ and }v \in F\}\\
    E \hatoplus F &= \{t \in \SN \mid \textrm{if }t \lras \inl(u)\textrm{,\,then\,}u \in E\textrm{\,and if }t \lras \inr(v)\textrm{,\,then }v \in F\}\\
    \hatbang E &= \{t \in \SN \mid \textrm{if }t \lras \oc u\textrm{, then }u \in E\}
  \end{align*}
\end{definition}

\begin{remark}
  Notice that ultra-reduction is used to define the sets $E \hatotimes F$, $E
  \hatoplus F$ and $\hatbang$.
  
  Indeed, we only need to require, for example, that the term reduces to a
  tensor, and not to a linear combination of tensors, since any linear
  combination will itself reduce by the ultra-reduction rules.
\end{remark}

\begin{definition}
  For any proposition $A$, the set of proof-terms $\llbracket A \rrbracket$ is defined as follows:
  \[
    \begin{array}{r@{\,}l@{\qquad}r@{\,}l@{\qquad}r@{\,}l@{\qquad}r@{\,}l}
      \llbracket\one\rrbracket &= \SN
      &\llbracket A \otimes B \rrbracket &= \llbracket A \rrbracket \hatotimes \llbracket B \rrbracket
      &\llbracket \zero \rrbracket &= \SN
      &\llbracket A \oplus B \rrbracket &= \llbracket A \rrbracket \hatoplus \llbracket B \rrbracket\\
      \llbracket A \multimap B \rrbracket &= \llbracket A \rrbracket \hatmultimap \llbracket B \rrbracket 
      &\llbracket \top \rrbracket &= \SN
      &\llbracket A \with B \rrbracket &= \llbracket A \rrbracket \hatand \llbracket B \rrbracket
      &\llbracket \oc A \rrbracket &= \hatbang\llbracket A\rrbracket
    \end{array}
  \]
\end{definition}

\begin{restatable}{lemma}{typeinterpretationsarerc}
  \label{lem:typeinterpretationsarerc}
  For any proposition $A$, $\llbracket A \rrbracket \in \mathcal R$.
\end{restatable}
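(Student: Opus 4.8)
The plan is to proceed by induction on the structure of the proposition $A$, reducing the whole statement to two ingredients: first, that $\SN$ itself is a reducibility candidate, which settles the base cases $\one$, $\zero$ and $\top$ (where $\interpretation{A}=\SN$); and second, that each of the five type operations $\hatmultimap$, $\hatotimes$, $\hatand$, $\hatoplus$ and $\hatbang$ sends reducibility candidates to reducibility candidates, which settles the inductive cases. So the real work is two auxiliary facts, after which the induction is immediate.

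First I would check $\SN \in \mathcal R$. Here CR1 is trivial, CR2 is exactly the fact that a reduct of a strongly normalising proof-term is strongly normalising, and CR3 holds because if every one-step reduct of $t$ lies in $\SN$ then $t$ cannot begin an infinite reduction sequence (any such sequence would pass through a reduct), so $t \in \SN$. Then, for each operation I would verify CR1--CR3. CR1 holds by construction, since every one of these sets is defined as a subset of $\SN$. For CR2, if $t$ belongs to the set and $t \lra t'$, then $t' \in \SN$, and any introduction form reached from $t'$ (a $\lambda$-abstraction, a tensor $u \otimes v$, a pair, an $\inl/\inr$, or a $\oc u$) is a fortiori reached from $t$ via $t \lra t' \lras w$, so the defining condition transfers from $t$ to $t'$.

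The crux is CR3, and here the argument is uniform across all five operations: assume $t$ is not an introduction and $Red(t)$ is contained in the set. Membership $t \in \SN$ follows from the already-established $\SN \in \mathcal R$. For the second clause, suppose $t \lras w$ with $w$ the relevant introduction form; since $t$ is not an introduction we have $t \neq w$, so the reduction factors as $t \lra t' \lras w$ with $t' \in Red(t)$, and then the defining condition for $t'$ supplies the required memberships (e.g.\ $(v/x)u \in F$ for the arrow, or $u \in E$ and $v \in F$ for the tensor, and both the $\inl$ and $\inr$ clauses for $\hatoplus$). This same factorisation is what each case comes down to.

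The point to emphasise---and the only place where the algebraic fragment plays a role---is that the interstitial constructors $t \plus u$ and $a \dotprod t$ are \emph{not} introductions in the grammar of Figure~\ref{fig:proofterms}, so any proof-term with one of them at the head is handled by CR3 rather than by the defining clauses. This is exactly why ultra-reduction is built into the definitions of $\hatotimes$, $\hatoplus$ and $\hatbang$: a linear combination such as $(u \otimes v) \plus (u' \otimes v')$ does reduce to a tensor under ultra-reduction, so its reducts already lie in the set and the CR3 factorisation applies cleanly. I expect no genuine obstacle here; the induction hypothesis on the immediate subformulas is used only lightly (to guarantee the component sets sit inside $\SN$), and the entire difficulty is concentrated in getting the CR3 factorisation right and in checking against Figure~\ref{fig:proofterms} that $\plus$ and $\dotprod$ are indeed excluded from the introductions.
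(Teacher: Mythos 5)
Your proposal is correct and follows essentially the same route as the paper's proof: induction on $A$, with the base cases $\one$, $\zero$, $\top$ settled by $\SN \in \mathcal R$, and each compound case verified by checking CR1 (by construction), CR2 (prefixing $t \lra t'$ to any reduction $t' \lras w$ reaching an introduction), and CR3 (factoring $t \lras w$ through a one-step reduct $t' \in Red(t)$, which exists precisely because $t$ is not an introduction). Your observation that $\plus$ and $\dotprod$ are not introductions, so that ultra-reduction makes the CR3 factorisation go through for $\hatotimes$, $\hatoplus$ and $\hatbang$, matches the paper's own remark on why ultra-reduction enters the definitions.
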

\begin{proof}
  By induction on $A$. The proof is given in Appendix~\ref{proof:ST}.
\end{proof}

\begin{lemma}[Variables]
  \label{lem:Var}
  For any proposition $A$, the set $\llbracket A \rrbracket$ contains all the proof-term variables.
\end{lemma}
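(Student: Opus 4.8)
The plan is to invoke the third closure condition (CR3) of reducibility candidates directly. By Lemma~\ref{lem:typeinterpretationsarerc} we already know that $\llbracket A \rrbracket \in \mathcal R$ for every proposition $A$, so $\llbracket A \rrbracket$ satisfies CR3. It therefore suffices to verify the two hypotheses of CR3 for an arbitrary proof-term variable $x$, namely that $x$ is not an introduction and that $Red(x) \subseteq \llbracket A \rrbracket$.

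First I would observe, by inspection of the grammar in Figure~\ref{fig:proofterms}, that variables form their own syntactic category, disjoint from the introduction constructs; hence $x$ is not an introduction. Second, I would note that no reduction rule of Figure~\ref{fig:figureductionrules} (nor any of the additional ultra-reduction rules) has a bare variable on its left-hand side, so $x$ is irreducible and $Red(x) = \varnothing$. The inclusion $Red(x) \subseteq \llbracket A \rrbracket$ then holds vacuously. Applying CR3 yields $x \in \llbracket A \rrbracket$, and since both $A$ and $x$ were arbitrary, every variable belongs to $\llbracket A \rrbracket$ for every proposition $A$.

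There is essentially no obstacle in this argument: the entire content is the classification of variables as non-introductions and as normal forms, both of which are immediate from the syntax and the shape of the reduction rules. The only point meriting explicit mention is that the inclusion of the empty reduct set is satisfied vacuously, which is precisely what makes CR3 applicable without any further reasoning about the structure of $\llbracket A \rrbracket$.
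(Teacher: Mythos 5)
Your proof is correct and follows exactly the paper's own argument: apply Lemma~\ref{lem:typeinterpretationsarerc} to get $\llbracket A \rrbracket \in \mathcal R$, observe that $Red(x) = \varnothing$ (and that $x$ is not an introduction), and conclude by CR3. The only difference is that you spell out the routine syntactic checks that the paper leaves implicit.
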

\begin{proof}
  By Lemma~\ref{lem:typeinterpretationsarerc}, $\llbracket A \rrbracket\in \mathcal R$. Since $Red(x)$ is empty, by CR3, $x \in \llbracket A \rrbracket$.
\end{proof}

\begin{restatable}[Adequacy for strong normalisation]{lemma}{adequacy}
  \label{lem:adequacy}
  If $\Upsilon;\Gamma\vdash t:A$, then for any substitution $\sigma$ such that for each $x^B\in\Upsilon\cup\Gamma$, $\sigma(x)\in\llbracket B\rrbracket$, we have
  $\sigma t \in \llbracket A \rrbracket$.
\end{restatable}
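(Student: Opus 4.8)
The plan is to prove the statement by induction on the derivation of $\Upsilon;\Gamma\vdash t:A$, keeping the substitution $\sigma$ universally quantified so that the induction hypothesis can be applied to \emph{extended} substitutions in the cases that bind a variable. Throughout I would rely on Lemma~\ref{lem:typeinterpretationsarerc} (each $\llbracket A\rrbracket$ is a reducibility candidate) and Lemma~\ref{lem:Var}. The two axiom rules are immediate: for \mbox{lin-}ax and ax the term is a variable $x^B$ with $B=A$, and $\sigma(x)\in\llbracket A\rrbracket$ holds by the hypothesis on $\sigma$. For a rule that binds a variable—$\multimap_i$, $\otimes_e$, $\with_{e1}$, $\with_{e2}$, $\oplus_e$, $\oc_e$—I would first rename so the bound variable is fresh for $\sigma$, and then, whenever the defining clause of the relevant $\hat{\cdot}$ set hands me a value $v$ for that variable (with $v$ in the appropriate interpretation), apply the induction hypothesis to the extended substitution $\sigma[x\mapsto v]$, which is a valid substitution for the premise's context precisely because $v$ lies in the corresponding $\llbracket\cdot\rrbracket$.

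Before the main case analysis I would isolate two auxiliary facts. First, that $\SN$ (under ultra-reduction) is closed under forming $t\plus u$ and $a\dotprod t$ from strongly normalising arguments, and that each term constructor preserves strong normalisation; the ultra-reduction rules $t\plus u\lra t$, $t\plus u\lra u$, $a\dotprod t\lra t$ make this robust, since an infinite reduction of a combination projects to an infinite reduction of one of its components. Second, and most importantly for the interstitial rules sum and $\mathrm{prod}(a)$, that each $\llbracket A\rrbracket$ is closed under $\plus$ and $\dotprod$. I would prove this by induction on $A$ using CR3: a term such as $t\plus u$ is not an introduction, so it suffices to show its one-step reducts stay in $\llbracket A\rrbracket$; internal reducts are handled by CR2, the ultra-reduction reducts $t$ and $u$ lie in $\llbracket A\rrbracket$ by assumption, and the commutation reducts (for instance $\pair{t}{u}\plus\pair{v}{w}\lra\pair{t\plus v}{u\plus w}$) are handled by unfolding the definition of the matching $\hat{\cdot}$ set together with the inductive hypothesis on the smaller propositions. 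This is exactly where ultra-reduction pays off: any linear combination that reduces to an introduction can be traced back to an introduction-headed reduct of one of its summands.

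With these in hand the introduction rules are direct from the definitions of $\hatmultimap$, $\hatotimes$, $\hatand$, $\hatoplus$, and $\hatbang$: for $\otimes_i$, say, the term $\sigma t\otimes\sigma u$ is strongly normalising by the first auxiliary fact, and since the calculus has no rule pushing $\plus$ or $\dotprod$ under a tensor (the rule $(t\otimes u)\plus(v\otimes w)\lra(t\plus v)\otimes(u\plus w)$ is deliberately absent), every reduct of the form $u'\otimes v'$ arises from reducing $\sigma t$ and $\sigma u$ separately, so membership in $\llbracket A\rrbracket\hatotimes\llbracket B\rrbracket$ follows by CR2. The elimination rules are the substantive part: each eliminator term is not an introduction, so I would invoke CR3 and show that every one-step reduct lands in the target interpretation, arguing by a nested induction on the strong-normalisation measures of the immediate subterms. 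The reducts split into internal reductions (handled by the nested induction and CR2), the principal cut (handled by the defining property of the matching $\hat{\cdot}$ set—e.g.\ for $\multimap_e$, if $\sigma t\lras\lambda x^A.w$ then $(\sigma u/x)w\in\llbracket B\rrbracket$ by definition of $\hatmultimap$; for $\oc_e$, the produced $(w/x)\sigma u$ is covered by the induction hypothesis on $\sigma[x\mapsto w]$), and the algebraic commutation reducts such as $\elimtens(t\plus u,x^Ay^B.v)\lra\elimtens(t,x^Ay^B.v)\plus\elimtens(u,x^Ay^B.v)$ together with the corresponding $\dotprod$ and $\elimbang$ rules, which are settled by the closure of $\llbracket C\rrbracket$ under $\plus$ and $\dotprod$ established above.

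I expect the main obstacle to be the elimination cases in combination with the algebraic commutation rules: verifying CR3 there requires simultaneously controlling the genuine cut reductions, the internal reductions, and the rules that distribute an eliminator over $\plus$ and $\dotprod$, while ensuring the accompanying induction on the strong-normalisation measures is well-founded. Getting the closure-under-$\plus$/$\dotprod$ lemma and the closure-of-$\SN$-under-combinations lemma stated sharply enough that they feed cleanly into these commutation subcases is the delicate bookkeeping on which the whole argument turns.
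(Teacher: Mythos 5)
Your proposal is correct and follows essentially the same route as the paper: a Tait-style induction over the typing judgement with the substitution kept universally quantified (so binders are handled by extending $\sigma$), supported by exactly the paper's auxiliary decomposition---strong-normalisation closure under $\plus$ and $\dotprod$, closure of each $\llbracket A\rrbracket$ under the interstitial operations, direct verification for introductions, and CR3 with a nested induction on reduction lengths for eliminations, with ultra-reduction absorbing the commutation reducts. The only local variant is that you establish the $\plus$/$\dotprod$ closure lemma via CR3 (which needs the nested measure induction you gesture at, since an internal reduct $t'\plus u$ is again a sum), whereas the paper verifies membership in $\llbracket A\rrbracket$ directly by case analysis on how $t\plus u$ can reduce to an introduction; both work and the difference is cosmetic.
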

  \begin{proof}
  By induction on $t$.
  If $t$ is a variable, then, by hypothesis, $\sigma t \in \llbracket A
  \rrbracket$. For the other proof-term constructors, we use the adequacy
  lemmas provided in Appendix~\ref{proof:Adequacy}.
\end{proof}

\begin{theorem}[Strong normalisation]\label{thm:ST}
  If $\Upsilon;\Gamma\vdash t:A$, then,
  $t\in\SN$.
\end{theorem}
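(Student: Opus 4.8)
The plan is to read off the theorem directly from the adequacy lemma by instantiating it at the identity substitution. First I would consider the substitution $\sigma$ that maps every variable to itself. To apply Lemma~\ref{lem:adequacy} to the derivation $\Upsilon;\Gamma\vdash t:A$, its side condition must be checked: for every $x^B\in\Upsilon\cup\Gamma$ we need $\sigma(x)=x\in\llbracket B\rrbracket$. This is exactly the content of the Variables lemma (Lemma~\ref{lem:Var}), which guarantees that each $\llbracket B\rrbracket$ contains all proof-term variables. Hence the hypothesis of the adequacy lemma is satisfied by the identity substitution.

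Applying Lemma~\ref{lem:adequacy} with $\sigma=\mathrm{id}$ then yields $\sigma t\in\llbracket A\rrbracket$, that is $t\in\llbracket A\rrbracket$, since the identity substitution leaves $t$ unchanged. To convert this membership into strong normalisation, I would invoke Lemma~\ref{lem:typeinterpretationsarerc}, which states $\llbracket A\rrbracket\in\mathcal R$, together with the first closure condition of reducibility candidates, namely $(\mathrm{CR}1)$: every reducibility candidate is contained in $\SN$. Therefore $t\in\llbracket A\rrbracket\subseteq\SN$, and $t$ is strongly normalising.

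One final remark I would include concerns the reduction relation: the reducibility machinery, and hence $\SN$, is set up with respect to ultra-reduction. Since ultra-reduction extends the reduction relation of Figure~\ref{fig:figureductionrules} with the extra rules $t\plus u\lra t$, $t\plus u\lra u$, and $a\dotprod t\lra t$, every ordinary reduction sequence is in particular an ultra-reduction sequence, so $t\in\SN$ immediately gives finiteness of all ordinary reduction sequences as well.

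I do not expect any genuine obstacle in this final step: all the difficulty of the argument is concentrated upstream, in the adequacy lemma (Lemma~\ref{lem:adequacy}) and the per-constructor adequacy lemmas it relies on, which I am permitted to assume here. The only points requiring care are purely bookkeeping ones — correctly matching the side condition of adequacy against the Variables lemma, and recalling that membership in a candidate entails strong normalisation via $(\mathrm{CR}1)$.
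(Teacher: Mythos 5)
Your proof is correct and follows exactly the paper's own argument: identity substitution justified by Lemma~\ref{lem:Var}, then Lemma~\ref{lem:adequacy} to get $t\in\llbracket A\rrbracket$, then Lemma~\ref{lem:typeinterpretationsarerc} with (CR1) to conclude $t\in\SN$. Your closing remark that strong normalisation under ultra-reduction implies strong normalisation under the ordinary rules is also the same observation the paper makes when introducing ultra-reduction.
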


\begin{proof}
  By Lemma~\ref{lem:Var}, for each $x^B\in\Upsilon\cup\Gamma$, $x^{B}\in\llbracket B \rrbracket$.  Then, by Lemma~\ref{lem:adequacy}, $t = \mathsf{id}(t) \in \llbracket A \rrbracket$. Hence, by Lemma~\ref{lem:typeinterpretationsarerc}, $t\in\SN$. 
\end{proof}

\subsubsection{Introduction property}\label{sec:introductionthm}

\begin{restatable}[Introduction]{theorem}{introductions}
  \label{thm:introductions}
  Let $t$ be a closed irreducible proof-term of $A$.
  \begin{itemize}
    \item If $A$ is $\one$, then $t$ has the form $a.\star$.

    \item If $A$ has the form $B \multimap C$, then $t$ has the form
      $\lambda x^B.u$.

    \item If $A$ has the form $B \otimes C$, then $t$ has the form $u \otimes v$,
      $u \plus v$, or $a \dotprod u$.

    \item If $A$ is $\top$, then $t$ is $\langle \rangle$.

    \item The proposition $A$ is not $\zero$.

    \item If $A$ has the form $B \with C$, then $t$ has the form
      $\pair{u}{v}$.

    \item If $A$ has the form $B \oplus C$, then $t$ has the form $\inl(u)$, $\inr(u)$, $u \plus v$, or $a \dotprod u$.
    \item If $A$ has the form $\oc B$, then $t$ has the form $\oc u$, $u \plus v$, or $a \dotprod u$.
  \end{itemize}
\end{restatable}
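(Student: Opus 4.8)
The plan is to argue by induction on the size of the closed irreducible proof-term $t$, establishing all eight clauses simultaneously. The simultaneity is essential: ruling out the elimination constructors requires knowing the canonical shape of their principal premises, which is exactly what the remaining clauses supply for structurally smaller terms. Since $t$ is closed, every immediate subterm of $t$ lying under no binder is again closed, and inverting the (syntax-directed) typing rule whose conclusion matches the head constructor of $t$ determines the type of each such subterm. The introduction constructors $a.\star$, $\lambda x^B.u$, $u\otimes v$, $\langle\rangle$, $\pair{u}{v}$, $\inl(u)$, $\inr(u)$ and $\oc u$ are then immediate: each already has the form required for its unique type, and since none of them can be typed by $\zero$, they satisfy the clause ``$A$ is not $\zero$''.

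Next I would show that no elimination is closed and irreducible. In each case the principal premise lies under no binder and is typed in the empty linear context, so it is itself a closed irreducible proof-term, of type $\one$, $B\multimap C$, $B\otimes C$, $B\with C$, $B\oplus C$ or $\oc B$ respectively; the induction hypothesis then forces it into a canonical form that exposes a redex in $t$. For instance $\elimone(a.\star,u)\lra a\dotprod u$, $(\lambda x.w)\,u\lra(u/x)w$ and $\elimbang(\oc v,x.w)\lra(v/x)w$, while a premise of the form $v\plus w$ or $a\dotprod v$ triggers a commuting rule such as $\elimtens(v\plus w,\ldots)\lra\elimtens(v,\ldots)\plus\elimtens(w,\ldots)$ or $\elimbang(a\dotprod v,\ldots)\lra a\dotprod\elimbang(v,\ldots)$. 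In every sub-case $t$ reduces, contradicting irreducibility. The elimination $\elimzero(u)$ is excluded differently: its premise would be a closed irreducible proof-term of type $\zero$, which the induction hypothesis forbids, and this same argument is what proves the clause ``$A$ is not $\zero$''.

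The heart of the argument is the treatment of the interstitial constructors $t_1\plus t_2$ and $a\dotprod t_1$. Here the subterms carry the same type $A$ as $t$ and are closed and irreducible, so the induction hypothesis applies to them, and I would split on $A$. When $A$ is one of $\one$, $B\multimap C$, $\top$ or $B\with C$, the induction hypothesis forces the subterms into precisely the introduction form with which $\plus$ and $\dotprod$ commute, so one of the rules $a.\star\plus b.\star\lra(a+b).\star$, $(\lambda x.t)\plus(\lambda x.u)\lra\lambda x.(t\plus u)$, $\pair{\cdot}{\cdot}\plus\pair{\cdot}{\cdot}\lra\pair{\cdot\plus\cdot}{\cdot\plus\cdot}$ or a scalar analogue fires, again contradicting irreducibility. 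When instead $A$ is $B\otimes C$, $B\oplus C$ or $\oc B$, no reduction rule pushes $\plus$ or $\dotprod$ inside the corresponding introduction, so $t$ is already in one of the admitted forms $u\plus v$ or $a\dotprod u$ and there is nothing further to prove; the case $A=\zero$ is once more impossible by the induction hypothesis on the subterms.

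I expect the principal difficulty to be bookkeeping rather than conceptual. One must check, constructor by constructor, that the relevant subterm is genuinely closed, so that no variable escapes from under a binder, and genuinely irreducible, so that the induction hypothesis may legitimately be invoked; and one must keep straight the asymmetry, dictated by the reduction rules of Figure~\ref{fig:figureductionrules}, between the connectives whose interstitial rules commute with introductions ($\one,\multimap,\top,\with$) and those whose rules commute with eliminations ($\otimes,\oplus,\oc$). It is exactly this asymmetry that determines which types admit $u\plus v$ and $a\dotprod u$ as canonical forms, and getting it right for each connective is the crux of a careful write-up.
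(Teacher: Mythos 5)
Your proposal is correct and follows essentially the same route as the paper: structural induction on $t$ establishing all clauses simultaneously, ruling out eliminations by applying the induction hypothesis to the closed irreducible principal premise (whose canonical form exposes a redex, including the commuting redexes for $\plus$ and $\dotprod$), and handling sums and products by a case split on $A$ in which the interstitial commutation rules for $\one$, $\multimap$, $\top$, $\with$ yield a contradiction while $\otimes$, $\oplus$, $\oc$ admit them as canonical forms. The only cosmetic difference is that you induct on term size where the paper inducts on the term structure, which is immaterial here.
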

\begin{proof}
  By induction on $t$. The proof is given in
Appendix~\ref{proof:introductionthm}.
\end{proof}

\subsection{Encodings}\label{sec:secvectorsmatrices}
In this section, we present the encodings for vectors
(Section~\ref{sec:secvectors}) and matrices
(Section~\ref{sec:secmatrices}), which were
initially introduced in~\cite{DiazcaroDowekMSCS24} and are replicated here for
self-containment. 

Since $\mathcal S$ is a commutative semiring, we work with semimodules.
However, to help intuition, the reader may think of $\mathcal S$ as a field,
obtaining a vector space instead.

\subsubsection{Vectors}
\label{sec:secvectors}

As there is one rule $\one_i$ for each scalar $a$, there is one closed
irreducible proof-term $a.\star$ for each scalar $a$.  Thus, the closed irreducible
proof-terms $a.\star$ of $\one$ are in one-to-one correspondence with the elements
of ${\mathcal S}$.  Therefore, the proof-terms $\pair{a.\star}{b.\star}$ of $\one
\with \one$ are in one-to-one correspondence with the elements of ${\mathcal S}^2$, the
proof-terms $\pair{\pair{a.\star}{b.\star}}{c.\star}$ of $(\one \with \one) \with
\one$, and also the proof-terms $\pair{a.\star}{\pair{b.\star}{c.\star}}$ of $\one
\with (\one \with \one)$, are in one-to-one correspondence with the elements
of ${\mathcal S}^3$, etc.

\begin{definition}[The set ${\mathcal V}$]
  The set ${\mathcal V}$ is inductively defined as follows: $\one \in
  {\mathcal V}$, and if $A$ and $B$ are in ${\mathcal V}$, then so is $A
  \with B$.
\end{definition}

We now show that if $A \in {\mathcal V}$, then the set of closed
irreducible proof-terms of $A$ has a structure of $\mathcal S$-semimodule.

\begin{definition}[Zero vector]
  If $A \in {\mathcal V}$, we define the proof-term $0_A$ of $A$ by induction
on $A$.  If $A = \one$, then $0_A = 0.\star$.  If $A = A_1 \with A_2$,
then $0_A = \pair{0_{A_1}}{0_{A_2}}$.
\end{definition}

\begin{lemma}[$\mathcal S$-semimodule structure~\citeintitle{Lemma 3.4}{DiazcaroDowekMSCS24}] \label{lem:vecstructure}
  If $A \in {\mathcal V}$ and $t$, $t_1$, $t_2$, and $t_3$ are closed proof-terms of
  $A$, then
  \begin{multicols}{2}
    \begin{enumerate}
      \item $(t_1 \plus t_2) \plus t_3 \equiv t_1 \plus (t_2 \plus t_3)$
      \item $t_1 \plus t_2 \equiv t_2 \plus t_1$
      \item $t \plus 0_A \equiv t$
      \item $a \dotprod b \dotprod t \equiv (a \times b) \dotprod t$
      \item $1 \dotprod t \equiv t$
      \item $a \dotprod (t_1 \plus t_2) \equiv a \dotprod t_1 \plus a \dotprod t_2$
      \item $(a + b) \dotprod t \equiv a \dotprod t \plus b \dotprod t$
	\qed
    \end{enumerate}
  \end{multicols}
\end{lemma}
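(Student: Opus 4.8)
The plan is to prove all seven identities simultaneously by induction on the structure of $A \in \mathcal{V}$, exploiting that $\equiv$ is a congruence (since $\lra$ is closed under all term constructors, so is its reflexive-symmetric-transitive closure) and that, by strong normalisation (Theorem~\ref{thm:ST}) and confluence (Theorem~\ref{thm:Confluence}), every closed proof-term has a unique normal form. The first preliminary step is to record, using the Introduction property (Theorem~\ref{thm:introductions}) together with subject reduction (Theorem~\ref{thm:SR}), that a closed normal proof-term of $\one$ is exactly some $a.\star$, and that a closed normal proof-term of $B \with C$ is a pair $\pair{u}{v}$ with $u$, $v$ closed normal proof-terms of $B$, $C$. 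Consequently any closed proof-term of $\one$ is $\equiv$-equal to a unique $a.\star$, and any closed proof-term of $B \with C$ is $\equiv$-equal to a pair $\pair{u}{v}$ whose components are closed proof-terms of $B$, $C$.

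For the base case $A = \one$, I would reduce each $t_i$ (resp.\ $t$) to its normal form $a_i.\star$ (resp.\ $a.\star$), and then observe that each identity collapses, via the scalar reduction rules ${a.\star} \plus b.\star \lra (a+b).\star$ and $a \dotprod b.\star \lra (a\times b).\star$, to a corresponding axiom of the commutative semiring $\mathcal{S}$. For instance, the two sides of~(1) reduce to $((a_1+a_2)+a_3).\star$ and $(a_1+(a_2+a_3)).\star$, which are the same term by associativity of $+$ in $\mathcal{S}$; identity~(3) uses $a + 0 = a$ together with $0_\one = 0.\star$, identity~(4) uses associativity of $\times$, identity~(5) uses $1 \times a = a$, and~(6)--(7) use the two distributivity laws.

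For the inductive step $A = A_1 \with A_2$, I would first replace each closed proof-term by an $\equiv$-equal pair, say $t_i \equiv \pair{s_i}{r_i}$ with $s_i$, $r_i$ closed proof-terms of $A_1$, $A_2$ (existence from normalisation, shape from the Introduction property, typing of the components from subject reduction). The commutation rules $\pair{t}{u} \plus \pair{v}{w} \lra \pair{t \plus v}{u \plus w}$ and $a \dotprod \pair{t}{u} \lra \pair{a \dotprod t}{a \dotprod u}$ let me push $\plus$ and $\dotprod$ componentwise, so that, for example, the left-hand side of~(1) becomes $\equiv$-equal to $\pair{(s_1 \plus s_2) \plus s_3}{(r_1 \plus r_2) \plus r_3}$. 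Applying the induction hypothesis to each component, and using congruence of $\equiv$ to recombine, the two sides coincide. The remaining identities follow by the same recipe, with the definition $0_A = \pair{0_{A_1}}{0_{A_2}}$ ensuring that~(3) reduces componentwise to the inductive instances of~(3).

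The main obstacle I anticipate is not any single computation but getting the meta-theoretic plumbing exactly right: justifying that the schema ``reduce to normal form, compute componentwise, recombine'' is legitimate requires confluence and strong normalisation to guarantee well-defined normal forms, the Introduction property to pin down their shape, and subject reduction to ensure the components still inhabit $A_1$ and $A_2$ so that the induction hypothesis applies. Once these facts are assembled, each of the seven cases is a short rewrite followed by an appeal to a single semiring axiom.
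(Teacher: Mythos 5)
Your argument is correct, but it is worth noting how it relates to what the paper actually does: the paper gives \emph{no} proof of this lemma at all --- it is stated with a \qed{} and imported wholesale from the $\mathcal{L^S}$-calculus paper (Lemma 3.4 of the cited reference), on the implicit grounds that the $\OC$-calculus extends that calculus without disturbing the terms, types, or reductions relevant to propositions in $\mathcal V$. Your proposal instead gives a self-contained proof inside the $\OC$-calculus itself, by induction on $A$, with the base case $A=\one$ discharged by the scalar rules ${a.\star}\plus b.\star\lra(a+b).\star$ and $a\dotprod b.\star\lra(a\times b).\star$ plus the semiring axioms, and the step case $A=A_1\with A_2$ discharged by the pair commutation rules and the induction hypothesis; the meta-theoretic ingredients you invoke (strong normalisation, confluence, subject reduction, the introduction property, and the fact that $\equiv$ is a congruence because $\lra$ is closed under contexts) are all available as theorems of this paper, and the one step you gloss slightly --- that the components of a closed pair of type $A_1\with A_2$ are themselves closed proof-terms of $A_1$ and $A_2$ --- follows by trivial inversion, since $\with_i$ is the only typing rule whose subject is a pair. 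This is essentially the argument the cited paper gives, so mathematically the two routes coincide; what your version buys is independence from the transfer argument (no need to argue that the lemma survives the extension by $\oc$, since you work directly in the extended calculus using its own metatheory), at the cost of repeating a proof the authors chose to cite.
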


\begin{definition}[Dimension of a proposition in ${\mathcal V}$]
To each proposition $A \in {\mathcal V}$, we associate a positive
natural number $d(A)$, which is the number of occurrences of the
symbol $\one$ in $A$: $d(\one) = 1$ and $d(B
\with C) = d(B) + d(C)$.
\end{definition}

If $A \in {\mathcal V}$ and $d(A) = n$, then the closed irreducible proof-terms
of $A$ and the vectors of ${\mathcal S}^n$ are in one-to-one
correspondence.

\begin{definition}[One-to-one correspondence]
\label{def:onetoone}
Let $A \in {\mathcal V}$ with $d(A) = n$.  To each closed irreducible
proof-term $t$ of $A$, we associate a vector $\underline{t}$ of ${\mathcal
  S}^n$ as follows:

  If $A = \one$, then $t = a.\star$. We let $\underline{t} =
  \left(\begin{smallmatrix} a \end{smallmatrix}\right)$.

  If $A = A_1 \with A_2$, then $t = \pair{u}{v}$.  We let
  $\underline{t}$ be the vector with two blocks $\underline{u}$ and
  $\underline{v}$: $\underline{t} = \left(\begin{smallmatrix}
    \underline{u}\\\underline{v} \end{smallmatrix}\right)$.

To each vector ${\bf u}$ of ${\mathcal S}^n$, we associate a closed irreducible proof-term $\overline{\bf u}^A$ of $A$ as follows:

If $n = 1$, then ${\bf u} = \left(\begin{smallmatrix}
  a \end{smallmatrix}\right)$. We let $\overline{\bf u}^A = a.\star$.

If $n > 1$, then $A = A_1 \with A_2$, let $n_1$ and $n_2$ be
  the dimensions of $A_1$ and $A_2$.  Let ${\bf u}_1$ and ${\bf u}_2$
  be the two blocks of ${\bf u}$ of $n_1$ and $n_2$ lines, so ${\bf u}
  = \left(\begin{smallmatrix} {\bf u}_1\\ {\bf
      u}_2\end{smallmatrix}\right)$.  We let $\overline{\bf u}^A =
    \pair{\overline{{\bf u}_1}^{A_1}}{\overline{{\bf u}_2}^{A_2}}$.
\end{definition}

We extend the definition of $\underline{t}$ to any closed proof-term of
$A$, $\underline{t}$ is by definition $\underline{t'}$ where $t'$ is
the irreducible form of $t$.

The next theorem shows that the symbol $\plus$ expresses the sum of
vectors and the symbol $\dotprod$, the product of a vector by a scalar.

\begin{theorem}[Sum and scalar product of vectors~\citeintitle{Lemmas 3.7 and 3.8}{DiazcaroDowekMSCS24}]
\label{thm:parallelsum}
Let $A \in {\mathcal V}$, $u$ and $v$ two closed proof-terms of $A$, and $a\in\mathcal S$ a scalar.
Then, $\underline{u \plus v} = \underline{u} + \underline{v}$ and
$\underline{a \dotprod u} = a \underline{u}$.
\end{theorem}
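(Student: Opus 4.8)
The plan is to proceed by induction on the structure of the proposition $A \in {\mathcal V}$, treating the two claims $\underline{u \plus v} = \underline{u} + \underline{v}$ and $\underline{a \dotprod u} = a\,\underline{u}$ in parallel. Throughout I rely on three facts already established: by strong normalisation (Theorem~\ref{thm:ST}) and confluence (Theorem~\ref{thm:Confluence}) every closed proof-term has a unique irreducible form, so the extension of $\underline{\cdot}$ to arbitrary closed proof-terms via their normal form is well defined; by subject reduction (Theorem~\ref{thm:SR}) the normal form of a closed proof-term of $A$ is again a closed proof-term of $A$; and by the introduction property (Theorem~\ref{thm:introductions}) its shape is determined by the head connective of $A$. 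Since $u$ and $v$ are closed of type $A$, the sum and prod$(a)$ rules make $u \plus v$ and $a \dotprod u$ closed proof-terms of $A$ as well, so both sides of each equation are defined.

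For the base case $A = \one$, the introduction property forces the normal forms of $u$ and $v$ to be $a.\star$ and $b.\star$, so that $\underline{u} = (a)$ and $\underline{v} = (b)$ by Definition~\ref{def:onetoone}. Reducing $u \plus v \lras a.\star \plus b.\star \lra (a+b).\star$ and invoking uniqueness of normal forms gives $\underline{u \plus v} = (a+b) = \underline{u} + \underline{v}$; likewise $a \dotprod u \lras a \dotprod b.\star \lra (a \times b).\star$ yields $\underline{a \dotprod u} = (a \times b) = a\,\underline{u}$. This is exactly the arithmetic content of the scalar commutation rules on $\one$.

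For the inductive case $A = A_1 \with A_2$, the introduction property gives normal forms $\pair{u_1}{u_2}$ and $\pair{v_1}{v_2}$ with $u_i, v_i$ closed irreducible of $A_i$, so that $\underline{u}$ is the block vector stacking $\underline{u_1}$ over $\underline{u_2}$ and $\underline{v}$ stacks $\underline{v_1}$ over $\underline{v_2}$. The key reduction step is to push the interstitial symbol inside the pair using the commutation rules, $u \plus v \lras \pair{u_1}{u_2} \plus \pair{v_1}{v_2} \lra \pair{u_1 \plus v_1}{u_2 \plus v_2}$, and similarly $a \dotprod u \lras \pair{a \dotprod u_1}{a \dotprod u_2}$. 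I then observe that a bare pair is never itself a redex and that the only reductions of $\pair{s}{t}$ act inside $s$ and $t$ independently, so the normal form of $u \plus v$ is $\pair{w_1}{w_2}$ where $w_i$ is the normal form of $u_i \plus v_i$; hence $\underline{u \plus v}$ is the block vector stacking $\underline{u_1 \plus v_1}$ over $\underline{u_2 \plus v_2}$. Applying the induction hypothesis on $A_1$ and $A_2$ rewrites each block as $\underline{u_i} + \underline{v_i}$, and stacking gives exactly $\underline{u} + \underline{v}$; the scalar case is identical, producing $a\,\underline{u_i}$ in each block.

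The only delicate point, and hence the crux, is the bookkeeping that turns ``reduce $u$ and $v$ first, then commute the symbol inward'' into a statement about the \emph{unique} normal form of $u \plus v$. This requires confluence to guarantee that the chosen reduction path reaches the unique irreducible form, together with the syntactic observation that pairs reduce componentwise, so that no commuting rule merges the two components or exposes a redex straddling them. Everything else is a direct unfolding of Definition~\ref{def:onetoone} and an appeal to the induction hypothesis, so I expect no genuine difficulty beyond this careful tracking of normal forms.
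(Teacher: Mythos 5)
Your proof is correct and takes essentially the same approach as the source: the paper itself gives no proof of this theorem, importing it as Lemmas 3.7 and 3.8 of the cited $\mathcal{L}^{\mathcal{S}}$-calculus paper, whose argument is exactly yours—induction on $A \in \mathcal{V}$, reduction to normal forms justified by strong normalisation, confluence and subject reduction, the zero-ary rules on $\one$, and the commutation rules pushing $\plus$ and $\dotprod$ inside pairs, with the observation that pairs reduce componentwise. The only blemish is notational: in the base case you let the normal form of $u$ be $a.\star$, clashing with the scalar $a$ of the statement, and then silently write it as $b.\star$ in the scalar-product computation; one of the two names should be changed.
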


\subsubsection{Matrices}
\label{sec:secmatrices}

We now want to prove that if $A, B \in {\mathcal V}$ with $d(A) = m$
and $d(B) = n$, and $F$ is a linear function from ${\mathcal S}^m$ to
${\mathcal S}^n$, then there exists a closed proof-term $f$ of $A
\multimap B$ such that, for all vectors ${\bf u} \in {\mathcal S}^m$,
$\underline{f~\overline{\bf u}^A} = F({\bf u})$.  This can
equivalently be formulated as the fact that if $M$ is a matrix with
$m$ columns and $n$ lines, then there exists a closed proof-term $f$ of $A
\multimap B$ such that for all vectors ${\bf u} \in {\mathcal S}^m$,
$\underline{f~\overline{\bf u}^A} = M {\bf u}$.

A similar theorem has been proved also in~\cite{odot} for a non-linear
calculus. 

\begin{theorem}[Matrices~\citeintitle{Theorem 3.10}{DiazcaroDowekMSCS24}]
  \label{thm:matrices}
  Let $A, B \in {\mathcal V}$ with $d(A) = m$ and $d(B) = n$ and let $M$
  be a matrix with $m$ columns and $n$ lines, then there exists a closed
  proof-term $t$ of $A \multimap B$ such that, for all vectors ${\bf u} \in
  {\mathcal S}^m$, $\underline{t~\overline{\bf u}^A} = M {\bf u}$.
  \qed
\end{theorem}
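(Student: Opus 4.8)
The plan is to reduce an arbitrary matrix to two families of building blocks — linear functionals (single rows) and the stacking of outputs — and to assemble the term by a nested induction: an outer induction on the codomain $B$ that peels off the $\with$'s to reduce to the case $B = \one$, and an inner induction on the domain $A$ that handles the resulting functionals $A \multimap \one$. Throughout, correctness of a constructed term $t$ is verified by reducing $t~\overline{\mathbf u}^A$ to its closed irreducible form and reading off the associated vector via Definition~\ref{def:onetoone}, using Theorem~\ref{thm:parallelsum} to commute $\underline{\cdot}$ past $\plus$ and $\dotprod$.

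First I would settle the functional case: for $A \in \mathcal V$ with $d(A) = m$ and a row vector $\mathbf r = (r_1, \ldots, r_m)$, there is a closed $t : A \multimap \one$ with $\underline{t~\overline{\mathbf u}^A} = \sum_j r_j u_j$. For $A = \one$ I take $t = \lambda x^\one. r_1 \dotprod x$, since $(\lambda x^\one. r_1 \dotprod x)~(a.\star) \lras (r_1 \times a).\star$. For $A = A_1 \with A_2$, splitting $\mathbf r = (\mathbf r_1, \mathbf r_2)$ and obtaining $t_1, t_2$ by the induction hypothesis, I would set
\[
  t = \lambda x^A.\bigl(\elimwith^1(x, y^{A_1}.\,t_1~y) \plus \elimwith^2(x, z^{A_2}.\,t_2~z)\bigr).
\]
Applied to $\overline{\mathbf u}^A = \pair{\overline{\mathbf u_1}^{A_1}}{\overline{\mathbf u_2}^{A_2}}$, this reduces through the two $\with$-projection rules to $(t_1~\overline{\mathbf u_1}^{A_1}) \plus (t_2~\overline{\mathbf u_2}^{A_2})$, whose vector is $\mathbf r_1\cdot\mathbf u_1 + \mathbf r_2\cdot\mathbf u_2 = \mathbf r\cdot\mathbf u$ by Theorem~\ref{thm:parallelsum} and the induction hypothesis.

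Then I would carry out the outer induction on $B$. For $B = \one$ the matrix $M$ is a single row and the functional case applies directly. For $B = B_1 \with B_2$, splitting $M$ into row-blocks $M_1$ (the top $d(B_1)$ rows) and $M_2$ (the remaining rows) and obtaining $t_1 : A \multimap B_1$ and $t_2 : A \multimap B_2$ by induction, I would set $t = \lambda x^A. \pair{t_1~x}{t_2~x}$; this reduces on $\overline{\mathbf u}^A$ to $\pair{t_1~\overline{\mathbf u}^A}{t_2~\overline{\mathbf u}^A}$, whose associated vector stacks $M_1\mathbf u$ over $M_2\mathbf u$ by Definition~\ref{def:onetoone}, i.e.\ equals $M\mathbf u$, completing the induction.

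The main obstacle to keep an eye on is linearity: in both inductive steps the bound variable $x^A$ apparently occurs twice, which would be illegal for a multiplicative connective. What makes the terms type-check is that both $\plus$ (the sum rule) and $\langle\cdot,\cdot\rangle$ (the $\with_i$ rule) are \emph{additive} — their premises share the same linear context — so the two occurrences of $x$ are legitimate. This is the same phenomenon behind the failure of the would-be tensor rule discussed before Theorem~\ref{thm:SR}, and it is precisely why the codomain is assembled from $\with$ rather than $\otimes$. The remaining bookkeeping is routine: subject reduction (Theorem~\ref{thm:SR}) preserves the claimed types along each reduction, while confluence (Theorem~\ref{thm:Confluence}) and strong normalisation (Theorem~\ref{thm:ST}) make $\underline{\cdot}$ well defined on the non-normal terms arising in the computation.
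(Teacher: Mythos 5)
Your proposal is correct, but it takes a genuinely different route from the paper's. The paper does not reproduce the proof here (it defers to Theorem 3.10 of the cited $\mathcal{L^S}$-calculus paper), but the example following the theorem reveals the construction used there: a \emph{column-wise} decomposition by a single induction on the domain $A$. In that encoding each column $\mathbf c_j$ of $M$ is stored as the constant term $\overline{\mathbf c_j}^B$, the $j$-th input entry is extracted by a $\with$-projection, and the scaling is performed by the cut rule $\elimone(a.\star,t)\lra a\dotprod t$, so the whole term has the shape $\lambda x^A.\elimwith^1(x,y.\elimone(y,\overline{\mathbf c_1}^B))\plus\elimwith^2(x,z.\elimone(z,\overline{\mathbf c_2}^B))$, with correctness following from Theorem~\ref{thm:parallelsum} in one induction. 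You instead decompose \emph{row-wise}: an outer induction on the codomain $B$ reduces the problem to linear functionals $A\multimap\one$ via $\lambda x^A.\pair{t_1~x}{t_2~x}$, and an inner induction on $A$ handles the functionals, with scaling done by an explicit $r_1\dotprod x$ rather than by $\elimone$. Both constructions type-check for the same reason you identify — sum and $\with_i$ are additive, so the repeated occurrences of $x$ are legal — and both verify correctness with Theorem~\ref{thm:parallelsum}. The paper's approach is shorter (one induction, no auxiliary lemma) and produces a term in which each column appears literally as a constant vector; yours costs a nested induction and a separate functional lemma, but it isolates the conceptually pleasant fact that a matrix is a $\with$-tuple of rows, i.e.\ of functionals, and it exercises the pairing clause of Definition~\ref{def:onetoone} directly. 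The two resulting proof-terms are genuinely different (already for the $2\times 2$ example) yet denote the same linear map.
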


\begin{example}[Matrices with two columns and two lines]
  The matrix
  $\left(\begin{smallmatrix} a & c\\b & d \end{smallmatrix}\right)$
  is expressed as the proof-term
    $t = \lambda x^{\one \with \one}. \elimwith^1(x,y^\one.
    \elimone(y,\pair{a.\star}{b.\star})) \plus \elimwith^2(x,z^\one.
    \elimone(z,\pair{c.\star}{d.\star}))$.
  Then,
    $t~\pair{e.\star}{f.\star} \lras
    \pair{(a \times e + c \times f).\star}{(b \times e + d \times f).\star}$.
\end{example}

\subsection{Linearity}
\label{sec:seclinearity}

This section is an extension of the linearity results from \cite{DiazcaroDowekMSCS24}.

In this section, we state the converse to Theorem~\ref{thm:matrices}, that is,
that if $A,B \in {\mathcal V}$, then each closed proof-term $t$ of $A \multimap B$
expresses a linear function.

We want to state that for any closed proof-term $t$ of $A \multimap B$, if $u_1$ and
$u_2$ are closed proof-terms of $A$, then 
\begin{equation*}
  t~(u_1 \plus u_2) \equiv t~u_1 \plus t~u_2 
  \qquad\qquad\textrm{and}\qquad\qquad
  t~(a \dotprod u_1) \equiv a \dotprod t~u_1
\end{equation*}

The property, however, is not true in general. Consider, for example,
\(
  t = \lambda x^\one.\lambda y^{\one\multimap\one}. y~x
\)
and we have 
\[
  t~(1.\star \plus\,2.\star)
  \lras \lambda y^{\one\multimap\one}. y~3.{\star}
  \not\equiv
  \lambda y^{\one\multimap\one}. {(y~1.\star)} \plus {(y~2.\star)}
  \llas
  (t~{1.\star}) \plus (t~{2.\star})
\]

Nevertheless, although the proof-terms $\lambda y^{\one\multimap\one}.
y~3.\star$ and $\lambda y^{\one\multimap\one}. {(y~1.\star)} \plus
{(y~2.\star)}$ are not equivalent, if we apply them to the identity $\lambda
z^\one. z$, then both proof-terms
$(\lambda y^{\one\multimap\one}.
y~3.\star)~\lambda z^\one. z$ and $(\lambda y^{\one\multimap\one}.
{(y~1.\star)} \plus {(y~2.\star)})~\lambda z^\one. z$ reduce to
$3.\star$.  This leads to the introduction of a notion of observational equivalence.

For convenience, if $\fv(t) \subseteq \{x\}$, we will use the notation
$t[u]$ for $(u/x)t$.

\begin{definition}[Observational equivalence]
  Two proof-terms $t_1$ and $t_2$ of a proposition $A$ are
  \textit{observationally equivalent}, $t_1 \sim t_2$, if for all propositions
  $B$ in $\mathcal V$ and for all proof-terms $c$ such that $\varnothing;x^A
  \vdash c:B$, we have $c[t_1] \equiv c[t_2]$.
\end{definition}

\begin{remark}\label{rmk:intuitionisticvslinearobsctx}
  Testing the proof-terms in an observational context $c$ where
  $x^A;\varnothing \vdash c:B$ is equivalent to testing the proof-terms in the
  observational context $\varnothing; y^{\oc A} \vdash \elimbang(y, x^A.c):B$.
\end{remark}	

We first define the measure of a proof-term (Definition~\ref{def:measure}) and elimination contexts (Definition~\ref{def:eliminationContext}), and we prove some technical results (Lemmas~\ref{lem:decomp}, \ref{horrible} and \ref{lem:measureofelimcontext}). These results are used in the proof of Lemma~\ref{lem:linearity}. As a corollary we obtain the syntactic linearity result (Theorem~\ref{thm:linearity}). Then, Corollary~\ref{cor:corollary2}
generalises the proof for any $B$ stating the observational equivalence.

\begin{definition}[Measure of a proof-term]\label{def:measure}
	We define the measure $\mu$ as follows:
  \[
    \begin{array}{rl@{\qquad }rl}
      \mu(x) &= 0 &
      \mu(t \plus u) &= 1 + \max(\mu(t), \mu(u)) \\
      \mu(a \dotprod t) &= 1 + \mu(t) &
      \mu(a.\star) &= 1 \\
      \mu(\elimone(t,u)) &= 1 + \mu(t) + \mu(u) &
      \mu(\lambda x^A.t) &= 1 + \mu(t) \\
      \mu(t~u) &= 1 + \mu(t) + \mu(u) &
      \mu(t \otimes u) &= 1 + \mu(t) + \mu(u) \\
      \mu(\elimtens(t,x^A y^B.u)) &= 1 + \mu(t) + \mu(u) &
      \mu(\topintro) &= 1\\
      \mu(\elimzero(t)) &= 1 + \mu(t) &
      \mu(\pair{t}{u}) &= 1 + \max(\mu(t), \mu(u))\\
      \mu(\elimwith^1(t,y^A.u)) &= 1 + \mu(t) + \mu(u) &
      \mu(\elimwith^2(t,y^A.u)) &= 1 + \mu(t) + \mu(u)\\
      \mu(\inl(t)) &= 1 + \mu(t) &
      \mu(\inr(t)) &= 1 + \mu(t)\\
      \multicolumn{4}{c}{\mu(\elimplus(t,y^A.u,z^B.v)) = 1 + \mu(t) + \max(\mu(u), \mu(v))}\\
	  \mu(\oc t) &= 1 + \mu(t) &
	  \mu(\elimbang(t, x^A.u)) &= 1 + \mu(t) + \mu(u)
    \end{array}
  \]
\end{definition}

\begin{definition}[Elimination context]
  An elimination context is a proof-term with a free variable, written
  $\_$, that is a proof-term in the language
  \begin{align*}
    K =& \_
    \mid \elimone(K,u)
    \mid K~u
    \mid \elimtens(K,x^A y^B.u)
    \mid \elimzero(K)\\
    \mid& \elimwith^1(K,x^A.u)
    \mid \elimwith^2(K,x^B.u)
    \mid \elimplus(K,x^A.u,y^B.v)
	\mid \elimbang(K, x^A.u)
  \end{align*}
\end{definition}

It is possible to have elimination contexts where the free variable $\_$ is in the intuitionistic or in the linear context, since it is used exactly once by definition.

\begin{restatable}[Decomposition of a proof-term]{lemma}{decomp}
\label{lem:decomp}
	If $t$ is an irreducible proof-term such that $\varnothing;x^C \vdash t:A$, then there exist an elimination context $K$, a proof-term $u$ and a proposition $B$ such that $\_^B;\varnothing \vdash K:A$ or $\varnothing; \_^B \vdash K:A$, $\varnothing;x^C \vdash u:B$, $u$ is either the variable $x$, an introduction, a sum, or a product, and $t = K[u]$.
\end{restatable}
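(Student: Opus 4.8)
The plan is to prove the decomposition by induction on the structure of the irreducible, well-typed proof-term $t$, peeling off elimination forms from the outside in. The key observation is that since $\varnothing;x^C \vdash t:A$, the single free variable $x$ must be used exactly once, so $x$ occurs in exactly one immediate subterm whenever $t$ is a compound proof-term.

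\medskip

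\noindent\textbf{Base cases.} If $t$ is the variable $x$ itself, take $K = \_$, $u = x$, and $B = C$; then $t = K[u]$ and $u$ is the variable $x$, as required. If $t$ is an introduction, a sum $t_1 \plus t_2$, or a product $a \dotprod t'$, take $K = \_$ and $u = t$; here $B = A$ and $u$ is an introduction, sum, or product, again as required.

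\medskip

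\noindent\textbf{Inductive step.} The remaining cases are when $t$ is an elimination. First I would argue that the head of an elimination whose principal subterm eventually carries the variable $x$ cannot itself be a redex: since $t$ is irreducible, the principal argument of the outermost elimination is \emph{not} an introduction matching that elimination, nor a sum or product (otherwise one of the commutation rules of Figure~\ref{fig:figureductionrules} would apply). For a typical case, say $t = \elimwith^1(t_1, y^{A'}.t_2)$ with $\varnothing;\Gamma\vdash t_1 : A' \with B'$ and $\varnothing;\Delta,y^{A'}\vdash t_2 : A$, where $\{x^C\}$ splits disjointly across $\Gamma$ and $\Delta$. Because $x$ is used exactly once, it lies either in $t_1$ or in $t_2$, but not both. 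If $x \in \fv(t_1)$, then $\varnothing;x^C \vdash t_1 : A'\with B'$ and $t_1$ is irreducible, so the induction hypothesis gives an elimination context $K_1$, a proof-term $u$, and a proposition $B$ with $t_1 = K_1[u]$; I then set $K = \elimwith^1(K_1, y^{A'}.t_2)$, which is again an elimination context, and $t = K[u]$. If instead $x \in \fv(t_2)$, then $t_1$ is closed, and since $t$ is irreducible the Introduction Property (Theorem~\ref{thm:introductions}) forces $t_1$ to \emph{not} be a pair---but a closed irreducible term of type $A'\with B'$ must be a pair, which would make $t$ reducible, a contradiction; hence this subcase does not arise and $x$ must lie in $t_1$. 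The same pattern handles $\elimone$, application, $\elimtens$, $\elimzero$, $\elimplus$, $\elimwith^2$, and $\elimbang$: the free variable either sits in the principal (eliminated) subterm, letting me recurse and extend $K$, or it sits in a continuation, in which case the now-closed principal subterm would, by the Introduction Property, be an introduction of the appropriate shape, contradicting irreducibility.

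\medskip

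\noindent\textbf{Main obstacle.} The delicate point is the case analysis showing that $x$ cannot appear in a continuation subterm (the $u$, $v$ arguments of the eliminators) while the principal subterm remains closed and irreducible. This requires invoking the Introduction Property to conclude that a closed irreducible principal subterm of the relevant type is an introduction, which would trigger one of the cut-elimination rules and contradict irreducibility of $t$. One must also be careful with the eliminators that have commutation rules ($\elimtens$, $\elimplus$, $\elimbang$): even when $x$ lies in the principal subterm, I must check that the principal subterm is not itself a sum or product at its head, since those too would make $t$ reducible; this is exactly what guarantees the recursion bottoms out at a $u$ that is a variable, introduction, sum, or product placed directly into the hole, rather than hiding another eliminable redex.
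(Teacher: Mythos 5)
Your proof is correct and follows essentially the same route as the paper's: induction on $t$, using the Introduction Property (Theorem~\ref{thm:introductions}) in each elimination case to show that the principal subterm cannot be closed (otherwise $t$ would be reducible, or, in the $\zero$ case, no such closed irreducible proof would exist), so the induction hypothesis applies to that subterm and $K$ is extended by one elimination. The only superfluous step is your closing worry about the principal subterm being a sum or product when $x$ lies inside it: the induction hypothesis already yields a decomposition of the required shape regardless, so no separate check is needed there.
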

\begin{proof}
	By induction on $t$. The proof is given in Appendix~\ref{proof:decomp}.
\end{proof}

\begin{lemma}[Decomposition of an elimination context]
  \label{horrible}
  If $K$ is an elimination context such that $\_^A; \varnothing \vdash K:B$ or $\varnothing; \_^A \vdash K:B$
  and $K \neq \_$, then $K$ has the form $K_1[K_2]$
  where $K_1$ is an elimination context and
  $K_2$ is an elimination context formed with a single
  elimination rule, that is the elimination rule of the top symbol of $A$.
\end{lemma}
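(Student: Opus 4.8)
The plan is to proceed by induction on the structure of the elimination context $K$. Since $K \neq \_$, the grammar of elimination contexts forces $K$ to have the form $E[K']$, where $E$ is a single elimination construct —one of $\elimone(\_,u)$, $\_~u$, $\elimtens(\_,x^{A_1}y^{A_2}.u)$, $\elimzero(\_)$, $\elimwith^1(\_,x^{A_1}.u)$, $\elimwith^2(\_,x^{A_2}.u)$, $\elimplus(\_,x^{A_1}.u,y^{A_2}.v)$, or $\elimbang(\_,x^{A_1}.u)$— with its recursive slot written as a hole, and $K'$ is the sub-elimination-context occupying the principal (eliminated) position of $E$. The hole $\_^A$ lies inside $K'$, because in every clause of the grammar the recursively-occurring context sits in the principal position while the remaining arguments are ordinary (hole-free) proof-terms; hence $K'$ is itself an elimination context whose hole still has type $A$.

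First I would treat the base case $K' = \_$. Here $K = E[\_]$ is already a single-elimination context applied directly to the hole, so I set $K_1 = \_$ and $K_2 = K$. It then remains to check that $E$ is the elimination rule of the top symbol of $A$. This is read off from the typing derivation: the hole is the principal argument of $E$, and each elimination construct constrains the type of its principal argument to have a fixed top connective ($\elimone$ forces $\one$, application forces $\multimap$, $\elimtens$ forces $\otimes$, $\elimzero$ forces $\zero$, $\elimwith^i$ forces $\with$, $\elimplus$ forces $\oplus$, and $\elimbang$ forces $\oc$). Since $\_^A$ is typed as that principal argument, $A$ must have exactly the matching top connective, so $K_2 = E$ is the elimination of the top symbol of $A$.

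For the inductive step $K' \neq \_$, I apply the induction hypothesis to $K'$ —legitimate, since $K'$ is a strictly smaller elimination context whose hole has type $A$— obtaining $K' = K_1'[K_2]$ with $K_1'$ an elimination context and $K_2$ a single-elimination context realising the elimination of the top symbol of $A$. Plugging this into $E$ gives $K = E[K_1'[K_2]] = (E[K_1'])[K_2]$ by associativity of context composition, so I set $K_1 = E[K_1']$, which is again an elimination context by the grammar, and keep the same $K_2$. This closes the induction.

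The only genuinely delicate point is the connective-matching verification in the base case; everything else is bookkeeping on the grammar and on context composition. That verification is a finite case analysis over the eight elimination constructs, using inversion on the typing rules of Figure~\ref{fig:figuretypingrules} to pin down the top connective of $A$ from the shape of $E$. I would also note that the whole argument is insensitive to whether the hole sits in the intuitionistic or the linear context, since no elimination rule relocates its principal argument between the two contexts, so both cases of the hypothesis are handled uniformly.
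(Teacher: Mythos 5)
Your proof is correct and takes essentially the same approach as the paper's: peel off the outermost elimination construct, handle the base case where the inner context is the hole by setting $K_1 = \_$ and $K_2 = K$ (with typing inversion forcing $K_2$ to be the elimination of the top connective of $A$), and otherwise apply the induction hypothesis to the inner sub-context and recompose. Your explicit case analysis on the eight elimination constructs and the remark about the hole sitting in the intuitionistic or linear context merely spell out what the paper compresses into ``as the proof-term is well-formed.''
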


\begin{proof}
  As $K$ is not $\_$, then it has the form $K = L_1[L_2]$. This is because the free variable $\_$ appears exactly once as a subterm of every elimination context.
  If $L_2 = \_$, we take $K_1 = \_$, $K_2 = L_1$ and, as the proof-term is
  well-formed, $K_2$ must be an elimination of the top symbol of $A$.
  Otherwise, by induction hypothesis, $L_2$ has the form $L_2 = K'_1[K'_2]$,
  and $K'_2$ is an elimination of the top symbol of $A$.
  Hence, $K = L_1[K'_1[K'_2]]$. We take $K_1 = L_1[K'_1]$, $K_2 = K'_2$.
\end{proof}

\begin{restatable}{lemma}{measureofelimcontext}
  \label{lem:measureofelimcontext}
  $\mu(K[t]) = \mu(K) + \mu(t)$
\end{restatable}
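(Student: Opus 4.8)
The plan is to proceed by structural induction on the elimination context $K$, using Definition~\ref{def:measure}. The whole argument rests on one structural observation: in the grammar for elimination contexts the hole-carrying subterm always occurs as the \emph{first} (principal) argument of an elimination, and in the matching clause of Definition~\ref{def:measure} the measure of that principal argument is always combined with the rest by addition, never by $\max$.

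For the base case $K = \_$ we have $K[t] = t$, and since the hole is a variable, $\mu(K) = \mu(\_) = 0$; hence $\mu(K[t]) = \mu(t) = \mu(K) + \mu(t)$. For the inductive step, each non-trivial clause reduces to the same computation. For example, with $K = \elimwith^1(K', y^A.u)$ we have $K[t] = \elimwith^1(K'[t], y^A.u)$, so
\[
  \mu(K[t]) = 1 + \mu(K'[t]) + \mu(u) = 1 + \mu(K') + \mu(t) + \mu(u) = \mu(K) + \mu(t),
\]
where the second equality is the induction hypothesis $\mu(K'[t]) = \mu(K') + \mu(t)$. The identical reorganisation applies verbatim to the clauses $\elimone(K',u)$, $K'\,u$, $\elimtens(K', x^A y^B.u)$, $\elimzero(K')$, $\elimwith^2(K', y^B.u)$, and $\elimbang(K', x^A.u)$, since in each the principal subterm contributes additively and the remaining arguments contain no hole, so their measures are unchanged by the substitution.

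The only clause that warrants a second glance is $\elimplus$, whose measure involves a $\max$: with $K = \elimplus(K', y^A.u, z^B.v)$ we obtain $\mu(K[t]) = 1 + \mu(K'[t]) + \max(\mu(u),\mu(v))$. Here the hole lives inside $K'$, i.e.\ in the additive summand, whereas $\max(\mu(u),\mu(v))$ is untouched because $u$ and $v$ carry no hole; applying the induction hypothesis to $\mu(K'[t])$ once more yields $\mu(K) + \mu(t)$. Since the recursion in the grammar is linear — the hole occurs exactly once, always in principal position, and never under a $\max$ — there is no genuine obstacle. The single point to verify is precisely this last fact, that the hole never falls beneath a $\max$, which is guaranteed syntactically by the definition of elimination context.
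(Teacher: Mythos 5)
Your proof is correct and follows essentially the same route as the paper's: structural induction on $K$, with the base case $K = \_$ giving $\mu(K)=0$ and each elimination clause unfolding by Definition~\ref{def:measure} so that the induction hypothesis applies to the principal (hole-carrying) argument, whose measure always contributes additively. Your explicit remark that the hole never falls under the $\max$ in the $\elimplus$ clause is the same observation the paper uses implicitly, so nothing is missing.
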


\begin{proof}
  By induction on $K$. The proof is given in Appendix~\ref{proof:measureofelimcontext}.
\end{proof}

\begin{restatable}{lemma}{linearity}
	\label{lem:linearity}
	If $A$ is a proposition, $B$ is a proposition of ${\mathcal V}$, $t$ is
  	a proof-term such that $\varnothing;x^A \vdash t:B$ and $u_1$ and $u_2$ are two closed proof-terms of $A$, then
	\(
    t[u_1 \plus u_2] \equiv t[u_1] \plus t[u_2]
  \)
  and
  \(
    t[a \dotprod u_1] \equiv a \dotprod t[u_1]
  \).
\end{restatable}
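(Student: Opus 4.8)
The plan is to prove the two identities by induction on the measure $\mu(t)$, after reducing to the case where $t$ is irreducible. Since $\equiv$ is a congruence and substitution commutes with reduction, if $t'$ is the normal form of $t$ (which exists and is unique by Theorems~\ref{thm:ST} and~\ref{thm:Confluence}), then $t[v]\equiv t'[v]$ for every closed $v$, so it suffices to establish the statement for irreducible $t$; by subject reduction $t'$ still has type $B$ in context $\varnothing;x^A$. For such a $t$ I would apply Lemma~\ref{lem:decomp} to write $t=K[u]$ with $u$ a variable, an introduction, a sum, or a product, and, when $K\neq\_$, use Lemma~\ref{horrible} to split $K=K_1[K_2]$ with $K_2$ the single elimination of the top connective of the type $C$ of $u$. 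The measure identity $\mu(K[s])=\mu(K)+\mu(s)$ of Lemma~\ref{lem:measureofelimcontext} is what guarantees that the elimination contexts and subterms fed to the induction hypothesis have strictly smaller measure than $t$.

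In the base case $K=\_$ the head $u=t$ has type $B\in\mathcal{V}$, so $t$ is a variable, a sum, a product, or (since $B$ is built from $\one$ and $\with$) a pair; the closed form $a.\star$ is excluded because the linear context is nonempty. If $t=x$ the identities hold on the nose. If $t=t_1\plus t_2$, $t=a\dotprod t'$, or $t=\pair{t_1}{t_2}$, I would apply the induction hypothesis to the immediate subterms (all of smaller measure and, for the pair, of type again in $\mathcal{V}$) and then reassemble using the semimodule equations of Lemma~\ref{lem:vecstructure} together with the commutation rules $\pair{t}{u}\plus\pair{v}{w}\lra\pair{t\plus v}{u\plus w}$ and $a\dotprod\pair{t}{u}\lra\pair{a\dotprod t}{a\dotprod u}$.

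The inductive case $K\neq\_$ splits according to $K_2$, and irreducibility forbids $u$ from being an introduction of the top connective of $C$. When $K_2$ is one of the eliminations that commute with $\plus$ and $\dotprod$, namely $\elimtens$, $\elimplus$, or $\elimbang$, the head must be $u=x$ (a sum or product would itself form a commutation redex); I would then push the operation through $K_2$, e.g.\ $K_2[u_1\plus u_2]\lra K_2[u_1]\plus K_2[u_2]$, and conclude by the induction hypothesis applied to $K_1[y]$ for a fresh variable $y$ (still of type $B\in\mathcal{V}$ and of smaller measure). The delicate case is when $K_2$ has no commutation rule: $\multimap_e$, $\with_e$, or $\elimone$ (the subcase $\zero_e$ is vacuous, since by Theorem~\ref{thm:introductions} there is no closed proof-term of $\zero$). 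Here the key move is that after substituting the closed terms $u_1,u_2$ the principal argument of $K_2$ becomes closed, so by the Introduction Property it reduces to a canonical form—$\lambda y.d$, $\pair{p}{q}$, or $a.\star$—which fires the cut-elimination rule of $K_2$. I would then route the resulting closed subterms through the outer context $K_1$, whose output type is the original $B\in\mathcal{V}$: for $\multimap_e$ one reaches $K_1[w_1\plus w_2]$ and applies the induction hypothesis to $K_1[y]$; for $\with_e$ one reaches $K_1[w[p_1\plus p_2]]$ and applies it to the continuation $K_1[w]$ (of type $B$ and of strictly smaller measure); and for $\elimone$ one pulls the scalar all the way out with the product part of the induction hypothesis and finishes with the distributivity law $(a+b)\dotprod s\equiv a\dotprod s\plus b\dotprod s$ of Lemma~\ref{lem:vecstructure}, now legitimately applied at a type in $\mathcal{V}$. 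When $u$ is itself a sum or product under such a non-commuting $K_2$, the same idea applies: after substitution each summand becomes closed, reduces to canonical form, the elimination distributes over the closed sum, and the pieces are recombined using the induction hypothesis on the strictly smaller $K[u_i']$ together with the commutativity and associativity of $\plus$.

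The main obstacle is precisely this last family of cases. Because $\multimap_e$, $\with_e$, and $\elimone$ do not commute with $\plus$ and $\dotprod$, there is no syntactic way to distribute the outer elimination context over a sum; the distribution becomes available only after the closed principal argument has been reduced to a canonical form, which is exactly why both the closedness of $u_1,u_2$ and the restriction $B\in\mathcal{V}$ are indispensable—the semimodule identities of Lemma~\ref{lem:vecstructure} hold only at types in $\mathcal{V}$. The accompanying bookkeeping—verifying that every appeal to the induction hypothesis strictly decreases $\mu$, and that intermediate terms may be renormalised without losing this decrease—is the most laborious part, and requires particular care for $\oc$, whose elimination may duplicate its argument.
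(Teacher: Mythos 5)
Your proposal is correct and follows essentially the same route as the paper's proof: reduce to an irreducible $t$, induct on $\mu(t)$, decompose via Lemmas~\ref{lem:decomp} and~\ref{horrible}, let the commutation rules fire when $K_2$ is $\elimtens$, $\elimplus$, or $\elimbang$, and for $\multimap_e$, $\with_{e}$, and $\one_e$ use closedness of $u_1,u_2$ plus the Introduction Property (Theorem~\ref{thm:introductions}) to reach canonical forms before recombining through $K_1$ with Lemma~\ref{lem:vecstructure}. The only divergence is organizational: the paper treats sum and product heads uniformly for arbitrary $K$ (applying the induction hypothesis both to the summands and to $K$ itself), whereas you split these cases between the $K=\_$ base case and the non-commuting eliminations; this is a harmless rearrangement, and your sketch there is no less precise than the paper's own treatment.
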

\begin{proof}
	The proof is given in Appendix~\ref{proof:linearity}.
\end{proof}

\begin{theorem}[Linearity]\label{thm:linearity}
	If $A$ is a proposition, $B$ is a proposition of ${\mathcal V}$, $f$ is
  	a closed proof-term of $A \multimap B$ and $u_1$ and $u_2$ are two closed proof-terms of $A$, then \(
	f~(u_1 \plus u_2) \equiv f~u_1 \plus f~u_2
  \)
  and
  \(
	f~(a \dotprod u_1) \equiv a \dotprod (f~u_1)
  \)
\end{theorem}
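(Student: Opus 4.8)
The plan is to derive the Linearity Theorem directly from the already-established Lemma~\ref{lem:linearity} by instantiating the abstract proof-term $t$ with a concrete observational context built from $f$. Since $f$ is a closed proof-term of $A \multimap B$, and $B \in \mathcal V$, I would set $t := f~x$ where $x^A$ is a fresh linear variable. First I would check that this $t$ is well-typed in the required judgement form: from $\varnothing;\varnothing \vdash f : A \multimap B$ (weakening is harmless) together with the axiom $\varnothing; x^A \vdash x : A$, the rule $\multimap_e$ gives $\varnothing; x^A \vdash f~x : B$, which is exactly the hypothesis $\varnothing; x^A \vdash t : B$ demanded by Lemma~\ref{lem:linearity}.

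With this instantiation the notation $t[u] = (u/x)(f~x)$ unfolds to $f~u$, because $f$ is closed and so the substitution only reaches the occurrence of $x$ in the argument position. Hence Lemma~\ref{lem:linearity} yields immediately
\[
  f~(u_1 \plus u_2) = t[u_1 \plus u_2] \equiv t[u_1] \plus t[u_2] = f~u_1 \plus f~u_2
\]
and
\[
  f~(a \dotprod u_1) = t[a \dotprod u_1] \equiv a \dotprod t[u_1] = a \dotprod (f~u_1),
\]
which are precisely the two equivalences in the statement. The two closed proof-terms $u_1$ and $u_2$ of $A$ supplied to the theorem are passed unchanged as the $u_1, u_2$ of the lemma, so no additional hypotheses need checking.

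The only subtlety, and the step I would be most careful about, is the substitution bookkeeping: one must confirm that $(u/x)(f~x) = f~u$ genuinely holds, i.e. that $x \notin \fv(f)$ so that substitution does not accidentally capture or duplicate anything inside $f$. This is guaranteed because $f$ is closed by assumption, and $x$ was chosen fresh. Everything else is a direct rewriting of the lemma's conclusion, so I expect no genuine obstacle here; the theorem is essentially a packaging of Lemma~\ref{lem:linearity} into application form, with the abstraction variable $x$ serving as the placeholder that the lemma's $t[\cdot]$ notation manipulates. It is worth noting that the linearity holds as the observational equivalence $\equiv$ rather than as literal reduction, matching the lemma, and the counterexample discussed before the definition of observational equivalence shows this is the best one can hope for when $B \notin \mathcal V$.
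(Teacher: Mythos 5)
Your proposal is correct, and it is a genuinely more direct route than the one the paper takes. The paper first invokes strong normalisation (Theorem~\ref{thm:ST}) and the introduction property (Theorem~\ref{thm:introductions}) to write $f \lras \lambda x^A.t$ with $\varnothing;x^A \vdash t:B$, then beta-reduces $f~(u_1 \plus u_2)$, $f~u_1 \plus f~u_2$, $f~(a\dotprod u_1)$ and $a \dotprod (f~u_1)$ down to substitution instances of that body $t$, and only then applies Lemma~\ref{lem:linearity}. You instead apply Lemma~\ref{lem:linearity} directly to the term $t := f~x$, which is legitimate: the lemma is stated for an arbitrary proof-term with $\varnothing;x^A \vdash t:B$ (its own proof handles reducible terms via the ``without loss of generality, $t$ is irreducible'' step), your typing derivation $\varnothing;x^A \vdash f~x:B$ by lin-ax and $\multimap_e$ is exactly right, and $(u/x)(f~x) = f~u$ holds because $f$ is closed. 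What your approach buys is economy: the theorem itself no longer needs strong normalisation, the introduction property, or subject reduction — all of that machinery is confined to the proof of Lemma~\ref{lem:linearity}, where it already lives. What the paper's normal-form detour buys is uniformity with the proof of Corollary~\ref{coro:linearitygeneral}, which reuses the same decomposition $f \lras \lambda x^A.t$ to place the body $t$ inside observational contexts; but for Theorem~\ref{thm:linearity} in isolation your packaging is the cleaner one.
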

\begin{proof}
	By strong normalisation and the introduction property, we have that $f \lras \lambda x^A.t$, where $\varnothing;x^A \vdash t:B$. Thus,
	\begin{align*}
		f~(u_1 \plus u_2) 
		&\lras t[u_1 \plus u_2]\\
		f~u_1 \plus f~u_2
		&\lras t[u_1] \plus t[u_2]\\
		f~(a \dotprod u_1)
		&\lras t[a \dotprod u_1]\\
		a \dotprod (f~u_1)
		&\lras a \dotprod (t[u_1])
	\end{align*}

	We conclude by Lemma~\ref{lem:linearity}.
\end{proof}

\begin{restatable}{corollary}{linearitygeneral}
\label{coro:linearitygeneral}
  If $A$ and $B$ are any propositions,
  $f$ a proof-term of $A \multimap B$ and
  $u_1$ and $u_2$ two closed proof-terms of $A$, then
  \[
	f~(u_1 \plus u_2) \sim f~u_1 \plus f~u_2
  	\qquad\qquad\textrm{and}\qquad\qquad
  	f~(a \dotprod u_1) \sim a \dotprod (f~u_1)
  \]
\end{restatable}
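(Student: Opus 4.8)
The plan is to reduce the general, observational statement to the special case already proved in Lemma~\ref{lem:linearity}, by absorbing the application of $f$ into the observational test context. Throughout I take $f$ to be closed, so that $f~u_1$, $f~u_2$, $u_1 \plus u_2$ and $f~(u_1 \plus u_2)$ are all closed proof-terms of the appropriate type. To establish $\sim$, I first unfold its definition: it suffices to fix an arbitrary proposition $C \in \mathcal V$ and an arbitrary observational context $c$ with $\varnothing; x^B \vdash c:C$, and to prove the two equivalences $c[f~(u_1 \plus u_2)] \equiv c[f~u_1 \plus f~u_2]$ and $c[f~(a \dotprod u_1)] \equiv c[a \dotprod(f~u_1)]$.

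The key construction is the composed context $d := (f~y/x)c$, obtained by plugging $f$ applied to a fresh hole variable $y^A$ into $c$; equivalently $d[\_] = c[f~\_]$. Since $f$ is closed and $x$ occurs exactly once in $c$, the variable $y$ occurs exactly once in $d$, and $\varnothing; y^A \vdash d:C$ with $C \in \mathcal V$. Hence $d$ meets the hypotheses of Lemma~\ref{lem:linearity}, and applying that lemma to $d$ with the closed proof-terms $u_1, u_2$ yields $d[u_1 \plus u_2] \equiv d[u_1] \plus d[u_2]$ and $d[a \dotprod u_1] \equiv a \dotprod d[u_1]$. Unfolding $d[\cdot] = c[f~\cdot]$, these read $c[f~(u_1 \plus u_2)] \equiv c[f~u_1] \plus c[f~u_2]$ and $c[f~(a \dotprod u_1)] \equiv a \dotprod c[f~u_1]$.

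It then remains to bridge $c[f~u_1] \plus c[f~u_2]$ and $c[f~u_1 \plus f~u_2]$. For this I would invoke Lemma~\ref{lem:linearity} a second time, now on $c$ itself (which again has codomain $C \in \mathcal V$), with the closed proof-terms $f~u_1$ and $f~u_2$ of $B$; this gives $c[f~u_1 \plus f~u_2] \equiv c[f~u_1] \plus c[f~u_2]$ and $c[a \dotprod(f~u_1)] \equiv a \dotprod c[f~u_1]$. Chaining the two applications by transitivity of $\equiv$ produces exactly $c[f~(u_1 \plus u_2)] \equiv c[f~u_1 \plus f~u_2]$ and $c[f~(a \dotprod u_1)] \equiv c[a \dotprod(f~u_1)]$; since $c$ and $C$ were arbitrary, this is precisely $f~(u_1 \plus u_2) \sim f~u_1 \plus f~u_2$ and $f~(a \dotprod u_1) \sim a \dotprod(f~u_1)$.

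The main obstacle, and the real content of the argument, is that Lemma~\ref{lem:linearity} is confined to codomains in $\mathcal V$, whereas here $B$ is arbitrary; the device of folding $f$ into the test context is exactly what restores a $\mathcal V$-valued context and makes the lemma applicable. The only delicate point to verify carefully is that $d = c[f~\_]$ is a genuine single-occurrence (linear) context of the required type, which relies on $f$ being closed and on $x$ occurring exactly once in $c$.
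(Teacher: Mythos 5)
Your proof is correct and takes essentially the same route as the paper's: both absorb the application of $f$ into the $\mathcal V$-valued test context and then apply Lemma~\ref{lem:linearity} twice (once to the composed context, once to $c$ itself), chaining the results by transitivity of $\equiv$. The only difference is that the paper first normalises $f$ to $\lambda x^A.t$ via strong normalisation and the introduction property and works with the composed term $c[t]$, whereas you compose with $f~\_$ directly---a harmless shortcut, since Lemma~\ref{lem:linearity} does not require its subject term to be irreducible.
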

\begin{proof}
	The proof is given in Appendix~\ref{proof:linearitygeneral}.
\end{proof}

\begin{remark}
	Notice that if in Theorem~\ref{thm:linearity} the free variable $x$ was in the intuitionistic context, that is $x^A; \varnothing \vdash t:A$, the term $\lambda x^A.t$ would not be a proof of $A \multimap B$. In this case, the term $\lambda y^{\oc A}.\elimbang(y, x^A.t)$ is a proof of the proposition $\bang A \multimap B$, and this is covered by the linearity theorem, cf. Remark~\ref{rmk:intuitionisticvslinearobsctx}.
\end{remark}

Finally, the next corollary is the converse of Theorem~\ref{thm:matrices}.

\begin{corollary}
  \label{cor:corollary2}
  Let $A, B \in {\mathcal V}$, such that $d(A) = m$ and $d(B) = n$, 
  and  $f$ be a closed proof-term of $A \multimap B$.
  Then the function $F$ from ${\mathcal S}^m$ to ${\mathcal S}^n$,
  defined as
  $F({\bf u}) = \underline{f~\overline{\bf u}^A}$ is linear.
  \qed
\end{corollary}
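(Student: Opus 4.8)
The plan is to derive the corollary by packaging together the two syntactic results already in hand: the linearity of proof-terms of $A\multimap B$ (Theorem~\ref{thm:linearity}) and the fact that $\plus$ and $\dotprod$ compute the sum and scalar product of the associated vectors (Theorem~\ref{thm:parallelsum}). First I would check that $F$ is well defined: since $f$ is closed of $A\multimap B$ and $\overline{\bf u}^A$ is closed of $A$, the proof-term $f~\overline{\bf u}^A$ is a closed proof-term of $B$, and because $B\in{\mathcal V}$ the vector $\underline{f~\overline{\bf u}^A}$ is defined through the irreducible form of $f~\overline{\bf u}^A$.

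The key preliminary observation is that $\underline{\cdot}$ is invariant under the convertibility relation $\equiv$ on closed proof-terms of a proposition in ${\mathcal V}$. This follows from confluence (Theorem~\ref{thm:Confluence}) and strong normalisation (Theorem~\ref{thm:ST}): every closed proof-term has a unique irreducible form, two proof-terms are $\equiv$-equivalent iff they share it, and $\underline{\cdot}$ was defined precisely through this irreducible form. Using this I would record that $\overline{{\bf u}+{\bf v}}^A \equiv \overline{\bf u}^A \plus \overline{\bf v}^A$ and $\overline{a{\bf u}}^A \equiv a\dotprod\overline{\bf u}^A$. Indeed, by Theorem~\ref{thm:parallelsum} together with the bijection $\underline{\overline{\bf u}^A}={\bf u}$, the two sides of each equation have the same associated vector (${\bf u}+{\bf v}$ and $a{\bf u}$ respectively); the right-hand side reduces to the unique irreducible proof-term associated with that vector, which is exactly the left-hand side.

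The computation is then a short chain of equalities, where applying $f$ to convertible arguments yields convertible results (congruence of $\equiv$). For additivity,
\begin{align*}
  F({\bf u}+{\bf v})
  &= \underline{f~\overline{{\bf u}+{\bf v}}^A}
   = \underline{f~(\overline{\bf u}^A\plus\overline{\bf v}^A)}
   = \underline{f~\overline{\bf u}^A \plus f~\overline{\bf v}^A}\\
  &= \underline{f~\overline{\bf u}^A} + \underline{f~\overline{\bf v}^A}
   = F({\bf u})+F({\bf v}),
\end{align*}
where the second equality uses the preliminary observation, the third uses Theorem~\ref{thm:linearity}, and the fourth uses Theorem~\ref{thm:parallelsum}. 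Homogeneity is identical, replacing $\overline{\bf u}^A\plus\overline{\bf v}^A$ by $a\dotprod\overline{\bf u}^A$ and using the scalar clauses of Theorems~\ref{thm:linearity} and~\ref{thm:parallelsum}:
\begin{align*}
  F(a{\bf u})
  &= \underline{f~\overline{a{\bf u}}^A}
   = \underline{f~(a\dotprod\overline{\bf u}^A)}
   = \underline{a\dotprod(f~\overline{\bf u}^A)}
   = a\,\underline{f~\overline{\bf u}^A}
   = aF({\bf u}).
\end{align*}

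I do not expect a genuine obstacle here, as the statement is essentially a reformulation of Theorems~\ref{thm:linearity} and~\ref{thm:parallelsum} at the level of vectors. The only point demanding care is the bridge between the syntactic equivalence $\equiv$ and the semantic object $\underline{\cdot}$, namely the invariance of $\underline{\cdot}$ under $\equiv$ on closed proof-terms of a ${\mathcal V}$-proposition; this is exactly where confluence and strong normalisation enter, and it is the one step I would make explicit rather than leave implicit.
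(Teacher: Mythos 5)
Your proposal is correct and follows essentially the same route as the paper's proof: the identical chain of equalities combining the linearity result (the paper cites Lemma~\ref{lem:linearity}, you equivalently use Theorem~\ref{thm:linearity}) with Theorem~\ref{thm:parallelsum}. The only difference is one of presentation: you make explicit the invariance of $\underline{\cdot}$ under $\equiv$ (via confluence and strong normalisation) and the identification $\overline{{\bf u}+{\bf v}}^A \equiv \overline{\bf u}^A \plus \overline{\bf v}^A$, steps the paper leaves implicit.
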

\begin{proof}
  Using Lemma~\ref{lem:linearity} and Theorem~\ref{thm:parallelsum}, we have
\begin{align*}
      F({\bf u} + {\bf v}) 
      = \underline{t~\overline{\bf u + \bf v}^A}
      = \underline{t~(\overline{\bf u}^A \plus \overline{\bf v}^A)} 
      &= \underline{t~\overline{\bf u}^A \plus t~\overline{\bf v}^A}
      = \underline{t~\overline{\bf u}^A} + \underline{t~\overline{\bf v}^A}
      = F({\bf u}) + F({\bf v})
      \\
      F(a {\bf u}) 
      = \underline{t~\overline{a \bf u}^A}
      = \underline{t~(a \dotprod \overline{\bf u}^A)} 
      &= \underline{a\dotprod t~\overline{\bf u}^A}
      = a \underline{t~\overline{\bf u}^A}
      = a F({\bf u})
      \tag*{\qedhere}
\end{align*}
\end{proof}

\section{Denotational semantics}\label{sec:denotationalsemantics}
\subsection{The categorical construction and its properties}
In this section, we introduce the fundamental categorical constructions that serve as the basis for defining the denotational semantics of the $\OC$-calculus. These constructions are inspired by the work of Bierman~\cite{bierman} on intuitionistic linear logic. Many of the additive properties required for our setting have been established in a recent draft~\cite{DiazcaroMalherbe24}, which develops a categorical semantics for the $\mathcal L\odot^{\mathcal S}$-calculus~\cite{DiazcaroDowekMSCS24}, an extension of the $\mathcal{L^S}$-calculus incorporating the non-deterministic \emph{sup} connective $\odot$. We adapt these constructions to incorporate the exponential connective $\oc$.

\subsubsection{Linear categories}
The first key definition is that of a linear category (Definition~\ref{def:linearcategory}). To introduce it properly, we first recall some basic notions and properties (Definitions~\ref{def:symmetricmonoidalcomonad}, \ref{def:coalgebra}, and \ref{def:commutativecomonoid}).

\begin{definition}[Symmetric monoidal comonad \citeintitle{Definition  28}{bierman}]\label{def:symmetricmonoidalcomonad}
	A comonad on a category $\mathcal C$ is a triple $(\oc,\varepsilon,\delta)$ where $\oc:\mathcal C \to \mathcal C$, and $\varepsilon:\oc \Rightarrow Id$, $\delta:\oc \Rightarrow \oc\oc$ are natural transformations which make the following diagrams commute:
\[\begin{tikzcd}[ampersand replacement=\&,cramped]
	{\oc A} \&\& {\oc \oc A} \&\&\&\& {\oc A} \\
	\\
	{\oc \oc A} \&\& {\oc \oc \oc A} \&\& {\oc A} \&\& {\oc \oc A} \&\& {\oc A}
	\arrow["{\delta_A}", from=1-1, to=1-3]
	\arrow["{\delta_A}"', from=1-1, to=3-1]
	\arrow["{\oc \delta_A}", from=1-3, to=3-3]
	\arrow[equals, from=1-7, to=3-5]
	\arrow["{\delta_A}", from=1-7, to=3-7]
	\arrow[equals, from=1-7, to=3-9]
	\arrow["{\delta_{\oc A}}"', from=3-1, to=3-3]
	\arrow["{\varepsilon_{\oc A}}", from=3-7, to=3-5]
	\arrow["{\oc \varepsilon_A}"', from=3-7, to=3-9]
\end{tikzcd}\]

If $\mathcal C$ is a symmetric monoidal category, a comonad is symmetric monoidal when $\oc$ is a symmetric monoidal functor and $\varepsilon$, $\delta$ are monoidal natural transformations.
\end{definition}

\begin{notation}
	In a symmetric monoidal category $(\mathcal C, \otimes, I)$, a symmetric monoidal comonad is denoted by\\ $(\oc, \varepsilon, \delta, m_{A,B}, m_I)$, where $m_{A,B}: \oc A \otimes \oc B \to \oc(A \otimes B)$ and $m_I: I \to \oc I$ are the structure morphisms that make $\oc$ a monoidal functor.
\end{notation}

\begin{definition}[Coalgebra \citeintitle{Definition 29}{bierman}]\label{def:coalgebra}
	Given a comonad $(\oc, \varepsilon, \delta)$ on a category $\mathcal C$, a coalgebra is a pair $(A, h_A: A \to \oc A)$ where $A$ is an object of $\mathcal C$ and $h_A$ is a morphism in $\mathcal C$ which makes the following diagrams commute:
\[\begin{tikzcd}[ampersand replacement=\&,cramped]
	A \&\& {\oc A} \&\& A \&\& {\oc A} \\
	\\
	{\oc A} \&\& {\oc \oc A} \&\&\&\& A
	\arrow["{h_A}", from=1-1, to=1-3]
	\arrow["{h_A}"', from=1-1, to=3-1]
	\arrow["{\oc h_A}", from=1-3, to=3-3]
	\arrow["{h_A}", from=1-5, to=1-7]
	\arrow[equals, from=1-5, to=3-7]
	\arrow["{\varepsilon_A}", from=1-7, to=3-7]
	\arrow["{\delta_A}"', from=3-1, to=3-3]
\end{tikzcd}\]

	A coalgebra morphism between two coalgebras $(A,h)$ and $(A',h')$ is a morphism $f:A \to A'$ in $\mathcal C$ which makes the following diagram commute:
\[\begin{tikzcd}[ampersand replacement=\&,cramped]
	A \&\& {\oc A} \\
	\\
	{A'} \&\& {\oc A'}
	\arrow["h", from=1-1, to=1-3]
	\arrow["f"', from=1-1, to=3-1]
	\arrow["{\oc f}", from=1-3, to=3-3]
	\arrow["{h'}"', from=3-1, to=3-3]
\end{tikzcd}\]
\end{definition}

\begin{definition}[Commutative comonoid \citeintitle{Definitions 31 and 32}{bierman}]\label{def:commutativecomonoid}
	Given a symmetric monoidal category\\ $(\mathcal C, \otimes, I, \alpha, \rho, \lambda, \sigma)$, a commutative comonoid in $\mathcal C$ is a triple $(C, d, e)$ where $C$ is an object in $\mathcal C$, and $d:C \to C \otimes C$ and $e:C \to I$ are morphisms in $\mathcal C$ such that the following diagrams commute:

\[\begin{tikzcd}[ampersand replacement=\&,cramped,column sep=small]
	{C \otimes C} \&\& C \&\& {C \otimes C} \&\&\& C \&\&\& C \&\& {C \otimes C} \\
	\\
	{C \otimes (C \otimes C)} \&\&\&\& {(C \otimes C) \otimes C} \& {I \otimes C} \&\& {C \otimes C} \&\& {C \otimes I} \& C \&\& {C \otimes C}
	\arrow["{id_C \otimes d}"', from=1-1, to=3-1]
	\arrow["d"', from=1-3, to=1-1]
	\arrow["d", from=1-3, to=1-5]
	\arrow["{d \otimes id_C}", from=1-5, to=3-5]
	\arrow["{\lambda_C^{-1}}"', from=1-8, to=3-6]
	\arrow["d", from=1-8, to=3-8]
	\arrow["{\rho_C^{-1}}", from=1-8, to=3-10]
	\arrow["d", from=1-11, to=1-13]
	\arrow[equals, from=1-11, to=3-11]
	\arrow["{\sigma_{C,C}}", from=1-13, to=3-13]
	\arrow["{\alpha_{C,C,C}}"', from=3-1, to=3-5]
	\arrow["{e \otimes id_C}", from=3-8, to=3-6]
	\arrow["{id_C \otimes e}"', from=3-8, to=3-10]
	\arrow["d"', from=3-11, to=3-13]
\end{tikzcd}\]

A comonoid morphism between two comonoids $(C, d, e)$ and $(C', d', e')$ is a morphism $f:C \to C'$ such that the following diagrams commute:
\[\begin{tikzcd}[ampersand replacement=\&,cramped]
	C \&\& {C'} \&\& C \&\& {C'} \\
	\\
	I \&\& I \&\& {C \otimes C} \&\& {C' \otimes C'}
	\arrow["f", from=1-1, to=1-3]
	\arrow["e"', from=1-1, to=3-1]
	\arrow["{e'}", from=1-3, to=3-3]
	\arrow["f", from=1-5, to=1-7]
	\arrow["d"', from=1-5, to=3-5]
	\arrow["{d'}", from=1-7, to=3-7]
	\arrow[equals, from=3-1, to=3-3]
	\arrow["{f \otimes f}"', from=3-5, to=3-7]
\end{tikzcd}\]
\end{definition}

\begin{definition}[Linear category \citeintitle{Definition 35}{bierman}]\label{def:linearcategory}
A linear category consists of a symmetric monoidal closed category $(\mathcal{C},\otimes,I)$ with finite products and coproducts, equipped with a symmetric monoidal comonad $(\oc, \varepsilon, \delta, m_{A,B}, m_I)$, such that:
  \begin{enumerate}
    \item For every free $\oc$-coalgebra $(\oc A, \delta_A)$ there are two distinguished monoidal natural transformations with components $e_A:\oc A \to I$ and $d_A:\oc A \to \oc A \otimes \oc A$ which form a commutative comonoid and are coalgebra morphisms.
    \item Whenever $f:(\oc A,\delta_A) \to (\oc B,\delta_B)$ is a coalgebra morphism between free coalgebras, then it is also a comonoid morphism.
  \end{enumerate}
\end{definition}

The second condition can be generalised to any coalgebra morphism:

\begin{lemma}[\hspace{-0.1mm}\citeintitle{Corollary 3}{bierman}]\label{lem:biermanprop}
	Whenever $f: (A, h) \to (A', h')$ is a coalgebra morphism between any two coalgebras, then it is a comonoid morphism.
	\qed
\end{lemma}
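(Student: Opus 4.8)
The plan is to obtain the comonoid structure carried by an arbitrary coalgebra by transporting the free one, and then to reduce the morphism property to the naturality of the comonad data. First I would record the two consequences of the coalgebra axioms of Definition~\ref{def:coalgebra} for $(A,h)$: they give $\varepsilon_A \circ h = \mathrm{id}_A$ and $\delta_A \circ h = \oc h \circ h$, the latter saying precisely that $h$ is a coalgebra morphism $(A,h) \to (\oc A, \delta_A)$ which is moreover split by $\varepsilon_A$. Using the distinguished comonoid $(e_A, d_A)$ on the free coalgebra $(\oc A, \delta_A)$ supplied by the first clause of Definition~\ref{def:linearcategory}, I would equip $(A,h)$ with the transported structure
\[
  e_{(A,h)} := e_A \circ h, \qquad d_{(A,h)} := (\varepsilon_A \otimes \varepsilon_A) \circ d_A \circ h,
\]
and check (this is Bierman's definition) that it is a commutative comonoid, using the comonoid identities for $(e_A, d_A)$, the coalgebra axioms above, and the fact, guaranteed by the same clause, that $e_A$ and $d_A$ are themselves coalgebra morphisms.

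Next, given a coalgebra morphism $f:(A,h)\to(A',h')$, unfolded as $h'\circ f=\oc f\circ h$, I would verify the counit square by the computation $e_{(A',h')}\circ f=e_{A'}\circ h'\circ f=e_{A'}\circ\oc f\circ h=e_A\circ h=e_{(A,h)}$, where the third equality is the naturality of $e$. For the comultiplication square I would compute $d_{(A',h')}\circ f=(\varepsilon_{A'}\otimes\varepsilon_{A'})\circ d_{A'}\circ\oc f\circ h=(\varepsilon_{A'}\otimes\varepsilon_{A'})\circ(\oc f\otimes\oc f)\circ d_A\circ h=(f\otimes f)\circ(\varepsilon_A\otimes\varepsilon_A)\circ d_A\circ h=(f\otimes f)\circ d_{(A,h)}$, invoking in turn the coalgebra-morphism equation, the naturality of $d$, and the naturality of $\varepsilon$ together with bifunctoriality of $\otimes$.

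The real content, and the only delicate step, is the first one: verifying that the transported data genuinely forms a commutative comonoid on the non-free object $A$, and that it agrees on free coalgebras with the given structure $(e_A, d_A)$ (a consistency check that uses the comonad triangle identities). Once this is in place, the morphism property is a pure diagram chase relying only on the naturality of $e$, $d$ and $\varepsilon$ and the defining equation of a coalgebra morphism; notably it does not re-use the second clause of Definition~\ref{def:linearcategory}, which instead is what underlies the well-definedness of the comonoid structure. One could alternatively try to deduce the statement from that clause by factoring $f$ through $\oc f:(\oc A,\delta_A)\to(\oc A',\delta_{A'})$, which is a coalgebra morphism between free coalgebras by naturality of $\delta$; but this route requires knowing beforehand that $h$ and $h'$ are themselves comonoid morphisms, an instance of the very Lemma being proved, so the direct transport computation is the cleaner path.
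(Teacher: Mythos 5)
This lemma is not proved in the paper: it is quoted from Bierman (his Corollary 3) and stated without proof, so the only thing to compare against is the standard argument behind that citation --- which is essentially what you reconstruct, and your reconstruction is correct. Transporting the free-coalgebra comonoid along $h$ (i.e.\ $e_{(A,h)} = e_A \circ h$ and $d_{(A,h)} = (\varepsilon_A \otimes \varepsilon_A) \circ d_A \circ h$) is indeed the comonoid structure the statement implicitly refers to, and your two chases for the counit and comultiplication squares are right as written: they use only the coalgebra-morphism equation $h' \circ f = \oc f \circ h$ together with naturality of $e$, $d$ and $\varepsilon$. Your closing remarks are also accurate on both counts: the delicate well-definedness step is precisely where clause~2 of Definition~\ref{def:linearcategory} enters --- for instance, the counit law for the transported structure needs $d_{\oc A} \circ \delta_A = (\delta_A \otimes \delta_A) \circ d_A$ and $e_{\oc A} \circ \delta_A = e_A$, which are clause~2 applied to the free-coalgebra morphism $\delta_A \colon (\oc A, \delta_A) \to (\oc\oc A, \delta_{\oc A})$ (a coalgebra morphism by the comonad law), so your initial list of tools for that step (clause-1 data only) is slightly too optimistic even though your later attribution is the right one --- while the tempting factorisation through $\oc f$ is indeed circular, since it presupposes that $h$ and $h'$ are themselves comonoid morphisms, i.e.\ an instance of the very lemma being proved.
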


In the next section we use the fact that the following constructions are coalgebras:
\begin{lemma}[\hspace{-0.1mm}\citeintitle{Propositions 7 and 8}{bierman}]\label{lem:biermancoalgebras}
	Given a monoidal comonad $(\oc, \varepsilon, \delta, m_{A,B}, m_I)$ on a symmetric monoidal category $\mathcal C$, then $(\oc A, \delta_A:\oc A \to \oc \oc A)$ and $(I, m_I: I \to \oc I)$ are coalgebras.
	\qed
\end{lemma}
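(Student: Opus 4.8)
The plan is to verify, in each of the two cases, the two defining equations of a coalgebra from Definition~\ref{def:coalgebra}—the coassociativity square and the counit triangle—and to observe that in both cases these equations are not new facts but direct instances of axioms already available: the comonad laws of Definition~\ref{def:symmetricmonoidalcomonad} and the monoidality of $\varepsilon$ and $\delta$. So the whole argument reduces to matching diagrams; essentially no computation is required.

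For the free coalgebra $(\oc A, \delta_A)$, I would instantiate the coalgebra laws of Definition~\ref{def:coalgebra} with $h = \delta_A$. The coassociativity square then demands $\oc\delta_A \circ \delta_A = \delta_{\oc A} \circ \delta_A$, which is precisely the left-hand commuting square of Definition~\ref{def:symmetricmonoidalcomonad}. The counit triangle demands $\varepsilon_{\oc A} \circ \delta_A = \mathrm{id}_{\oc A}$, which is exactly the left branch of the comonad counit axiom $\varepsilon_{\oc A}\circ\delta_A = \mathrm{id}_{\oc A} = \oc\varepsilon_A \circ \delta_A$. Thus both coalgebra conditions hold by the comonad axioms alone.

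For $(I, m_I)$, I would unfold the requirement that $\varepsilon$ and $\delta$ are \emph{monoidal} natural transformations, concentrating on their unit-compatibility conditions. The counit triangle for $(I,m_I)$ reads $\varepsilon_I \circ m_I = \mathrm{id}_I$, which is exactly the unit coherence of the monoidal natural transformation $\varepsilon : \oc \Rightarrow \mathrm{Id}$, since the identity functor carries the trivial unit morphism $\mathrm{id}_I$. The coassociativity square for $(I,m_I)$ reads $\oc m_I \circ m_I = \delta_I \circ m_I$; here one uses that the composite endofunctor $\oc\oc$ is lax monoidal with unit morphism $I \xrightarrow{m_I} \oc I \xrightarrow{\oc m_I} \oc\oc I$, so the left-hand side is precisely the unit of $\oc\oc$, and the equation becomes the unit coherence of the monoidal natural transformation $\delta : \oc \Rightarrow \oc\oc$.

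The only point that requires genuine care is the $(I, m_I)$ case: one must fix the induced monoidal structure on the composite functor $\oc\oc$ correctly—so that its unit morphism is indeed $\oc m_I \circ m_I$ rather than some other ordering—and then recognise the two coalgebra equations as the unit-coherence conditions of $\varepsilon$ and $\delta$ under the standard monoidal-functor conventions. This is bookkeeping rather than mathematics, and once the conventions are pinned down the identifications are immediate.
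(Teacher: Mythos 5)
Your proof is correct, and it is the standard argument: the paper itself does not prove this lemma but delegates it to Bierman's Propositions 7 and 8, whose proofs proceed exactly as you do — the coalgebra laws for $(\oc A, \delta_A)$ are literal instances of the comonad coassociativity and counit axioms, and those for $(I, m_I)$ are the unit-coherence conditions of the monoidal natural transformations $\varepsilon : \oc \Rightarrow \mathrm{Id}$ and $\delta : \oc \Rightarrow \oc\oc$, the latter with respect to the composite monoidal structure on $\oc\oc$ whose unit is $\oc m_I \circ m_I$. Your closing caveat about fixing that convention is exactly the right point of care, and with it pinned down the argument is complete.
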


\begin{restatable}{lemma}{generalizacionpropnueve}
	\label{lem:generalizacionprop9}
	Given a monoidal comonad $(\oc, \varepsilon, \delta, m_{A,B}, m_I)$ on a symmetric monoidal category $\mathcal C$, and a coalgebra $(A,h: A \to \oc A)$, then $(A \otimes A, m \circ (h \otimes h))$ is a coalgebra.
\end{restatable}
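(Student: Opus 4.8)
The plan is to verify directly the two defining diagrams of a coalgebra from Definition~\ref{def:coalgebra}, instantiated at the object $A \otimes A$ with structure map $k = m_{A,A} \circ (h \otimes h) : A \otimes A \to \oc(A \otimes A)$. Writing $k$ for this composite, I must check the counit law $\varepsilon_{A \otimes A} \circ k = id_{A \otimes A}$ and the comultiplication law $\oc k \circ k = \delta_{A \otimes A} \circ k$. The available tools are exactly the hypotheses: the coalgebra equations for $(A, h)$, namely $\varepsilon_A \circ h = id_A$ and $\oc h \circ h = \delta_A \circ h$; the naturality of the coherence morphism $m$; and the fact that $\varepsilon$ and $\delta$ are \emph{monoidal} natural transformations, which encodes their compatibility with $m$.

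For the counit law, I would first invoke the monoidal naturality of $\varepsilon : \oc \Rightarrow Id$, which gives $\varepsilon_{A \otimes A} \circ m_{A,A} = \varepsilon_A \otimes \varepsilon_A$ (the monoidal structure of $Id$ being trivial). Substituting this into $\varepsilon_{A \otimes A} \circ k$ and using bifunctoriality of $\otimes$ yields $(\varepsilon_A \circ h) \otimes (\varepsilon_A \circ h)$, which collapses to $id_A \otimes id_A = id_{A \otimes A}$ by the counit equation for $(A, h)$. This part is essentially immediate.

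The comultiplication law is the substantive step, and I would prove it by rewriting both sides to a common expression. On the right, monoidal naturality of $\delta : \oc \Rightarrow \oc\oc$ — where the composite functor $\oc\oc$ carries the coherence $\oc m_{A,A} \circ m_{\oc A, \oc A}$ — gives $\delta_{A \otimes A} \circ m_{A,A} = \oc m_{A,A} \circ m_{\oc A, \oc A} \circ (\delta_A \otimes \delta_A)$; after composing with $h \otimes h$ and applying the comultiplication equation $\delta_A \circ h = \oc h \circ h$ to each tensor factor, the right-hand side becomes $\oc m_{A,A} \circ m_{\oc A, \oc A} \circ (\oc h \otimes \oc h) \circ (h \otimes h)$. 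On the left, I expand $\oc k = \oc m_{A,A} \circ \oc(h \otimes h)$ and then use the naturality of $m$ at the morphism pair $(h, h)$, i.e.\ $\oc(h \otimes h) \circ m_{A,A} = m_{\oc A, \oc A} \circ (\oc h \otimes \oc h)$, to bring the left-hand side to the exact same expression. Equality of the two forms closes the proof.

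The main obstacle is purely bookkeeping: correctly identifying the coherence morphism of the composite functor $\oc\oc$ and applying the right instance of monoidal naturality for $\delta$, together with tracking the several uses of bifunctoriality of $\otimes$. No deep idea is required beyond recognising that naturality of $m$ (used on the left) and monoidal naturality of $\delta$ (used on the right) produce the same middle term $m_{\oc A, \oc A} \circ (\oc h \otimes \oc h)$, so that the two sides meet once the coalgebra equations for $(A,h)$ are applied to each factor.
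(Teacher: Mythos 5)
Your proposal is correct and follows essentially the same route as the paper's proof: the counit law via monoidal naturality of $\varepsilon$ plus the coalgebra counit equation for $(A,h)$, and the comultiplication law by meeting at the common middle term $\oc m \circ m \circ (\oc h \otimes \oc h) \circ (h \otimes h)$ using naturality of $m$ on one side and monoidal naturality of $\delta$ plus the coalgebra equation on the other. The only difference is presentational — the paper pastes commuting squares in a diagram where you argue equationally — so there is nothing to add.
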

\begin{proof}
	See Appendix~\ref{proof:gen9}.
\end{proof}

\subsubsection{Generalised properties}\label{subsec:genprop}
This section provides generalised versions of $\varepsilon$ and $\delta$ (from Definition~\ref{def:symmetricmonoidalcomonad}), and of $e$ and $d$ (from Definition~\ref{def:linearcategory}). These generalisations are useful for handling the typing contexts of the calculus, and follow the definitions given in \cite[Appendix~A]{Maneggia}. We give their definitions (Definition~\ref{def:generalizededeepsilon}) and several properties.

\begin{definition}[Generalised versions of $\varepsilon$, $\delta$, $e$, and $d$]\label{def:generalizededeepsilon}~
  \begin{itemize}
    \item $\varepsilon_{A_1, \dots, A_n}:\bigotimes_{i=1}^n \oc A_i \to \bigotimes_{i=1}^n A_i$
      \[
	\varepsilon_{A_1, \dots, A_n} = \left\{
	  \begin{array}{ll}
	    I \xrightarrow{id_I} I & n=0\\
	    \oc A_1 \xrightarrow{\varepsilon_{A_1}} A_1 & n=1\\
	    (\bigotimes_{i=1}^{n-1}\oc A_i) \otimes \oc A_n \xrightarrow{\varepsilon_{A_1, \dots, A_{n-1}} \otimes \varepsilon_{A_n}} \bigotimes_{i=1}^n A_i & n > 1
	  \end{array}
	\right.
      \]
     \item $\delta_{A_1, \dots, A_n}:\bigotimes_{i=1}^n \oc A_i \to \oc (\bigotimes_{i=1}^n \oc A_i)$
     \[
      \delta_{A_1, \dots, A_n} = \left\{
        \begin{array}{ll}
          I \xrightarrow{m_I} \oc I & n=0\\
          \oc A_1 \xrightarrow{\delta_{A_1}} \oc \oc A_1 & n=1\\
          (\bigotimes_{i=1}^{n-1}\oc A_i) \otimes \oc A_n \xrightarrow{\delta_{A_1, \dots, A_{n-1}} \otimes \delta_{A_n}} \oc (\bigotimes_{i=1}^{n-1}\oc A_i) \otimes \oc \oc A_n \xrightarrow{m} \oc (\bigotimes_{i=1}^n \oc A_i) & n>1
        \end{array}
      \right.
     \]
     
    \item $e_{A_1, \dots, A_n}: \bigotimes_{i=1}^n \oc A_i \to I$
      \[
	e_{A_1, \dots, A_n} = \left\{
	  \begin{array}{ll}
	    I \xrightarrow{id_I} I & n=0\\
	    \oc A_1 \xrightarrow{e_{A_1}} I & n=1\\
	    (\bigotimes_{i=1}^{n-1}\oc A_i) \otimes \oc A_n \xrightarrow{e_{A_1, \dots, A_{n-1}} \otimes e_{A_n}} I \otimes I \xrightarrow{\lambda_I} I & n > 1
	  \end{array}
	\right. 
      \]
     \item $d_{A_1, \dots, A_n}:\bigotimes_{i=1}^n \oc A_i \to (\bigotimes_{i=1}^n \oc A_i) \otimes (\bigotimes_{i=1}^n \oc A_i)$
     \[
      d_{A_1, \dots, A_n} = \left\{
        \begin{array}{ll}
          I \xrightarrow{\lambda_I^{-1}} I \otimes I & n=0\\
          \oc A_1 \xrightarrow{d_{A_1}} \oc A_1 \otimes \oc A_1 & n=1\\
          (\bigotimes_{i=1}^{n-1} \oc A_i) \otimes \oc A_n \xrightarrow{d_{A_1, \dots, A_{n-1}} \otimes d_{A_n}} ((\bigotimes_{i=1}^{n-1} \oc A_i) \otimes (\bigotimes_{i=1}^{n-1} \oc A_i)) \otimes (\oc A_n \otimes \oc A_n)\\ 
          \phantom{(\bigotimes_{i=1}^{n-1} \oc A_i) \otimes \oc A_n} \xrightarrow{id \otimes \sigma \otimes id} (\bigotimes_{i=1}^n \oc A_i) \otimes (\bigotimes_{i=1}^n \oc A_i) & n>1
        \end{array}
      \right.
     \]
  \end{itemize}
\end{definition}

\begin{restatable}{lemma}{dnat}
	  \label{lem:dnat}
		For every $n$, $d_{A_1, \dots, A_n}$ is a natural transformation.
\end{restatable}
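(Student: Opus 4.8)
The plan is to argue by induction on $n$, showing at each stage that $d_{A_1,\dots,A_n}$ is assembled by composition and tensoring from morphisms already known to be natural, and then to invoke the two structural facts that naturality is preserved by the tensor bifunctor and by composition. Concretely, fix a tuple of morphisms $f_i\colon A_i\to B_i$. Naturality of $d_{A_1,\dots,A_n}$ is the assertion that the square with top edge $d_{A_1,\dots,A_n}$, bottom edge $d_{B_1,\dots,B_n}$, left edge $\bigotimes_{i=1}^n\oc f_i$, and right edge $(\bigotimes_{i=1}^n\oc f_i)\otimes(\bigotimes_{i=1}^n\oc f_i)$ commutes. I would establish this square directly in the base cases and then propagate it through the recursion.

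For the base cases: when $n=0$ both functors are constant at $I$, the only morphism involved is $id_I$, and the defining map $\lambda_I^{-1}$ is fixed, so the square commutes trivially. When $n=1$, the map $d_{A_1}$ is the component at $A_1$ of the natural transformation $d$ supplied by Definition~\ref{def:linearcategory}, where $d$ is a monoidal natural transformation and in particular natural; hence the square commutes by hypothesis.

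For the inductive step, recall that $d_{A_1,\dots,A_n}=(id\otimes\sigma\otimes id)\circ(d_{A_1,\dots,A_{n-1}}\otimes d_{A_n})$. By the induction hypothesis $d_{A_1,\dots,A_{n-1}}$ is natural, and $d_{A_n}$ is natural as a component of $d$; since $\otimes$ is a bifunctor, the tensor $d_{A_1,\dots,A_{n-1}}\otimes d_{A_n}$ is again natural. The rearrangement $id\otimes\sigma\otimes id$ is a tensor of identities with the symmetry $\sigma$ of the symmetric monoidal structure, which is natural, so it too is natural; any associators implicit in the typing of this composite are natural by the monoidal coherence. A composite of natural transformations is natural, so $d_{A_1,\dots,A_n}$ is natural, completing the induction.

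The one delicate point, and the step I expect to require the most care, is verifying that the middle morphism $id\otimes\sigma\otimes id$ genuinely commutes with the functorial action $\bigotimes_{i}\oc f_i$ on both copies. This is exactly an instance of the naturality square of the symmetry $\sigma_{P,Q}$, with $P=\bigotimes_{i<n}\oc A_i$ and $Q=\oc A_n$, tensored with identities, and it is precisely where the symmetric structure is used. Once the middle arrow is identified with this instance of naturality of $\sigma$, the remaining verifications reduce to routine bookkeeping with the bifunctoriality of $\otimes$ and the coherence theorem for associators, so I would not spell these out in full.
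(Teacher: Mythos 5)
Your proof is correct and follows essentially the same route as the paper: induction on $n$, with the $n=0$ case trivial, the $n=1$ case given by naturality of the comonoid map $d$, and the inductive step reducing the naturality square for $d_{A_1,\dots,A_n}=(id\otimes\sigma\otimes id)\circ(d_{A_1,\dots,A_{n-1}}\otimes d_{A_n})$ to the induction hypothesis, naturality of $d_{A_n}$ and of $\sigma$, and bifunctoriality of $\otimes$. The paper merely spells out this same reasoning as an explicit pasted diagram whose subsquares commute for exactly the reasons you cite, so the two arguments coincide in substance.
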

\begin{proof}
	See Appendix~\ref{proof:genprop}.
\end{proof}

Lemmas~\ref{lem:comonoidgen} to~\ref{lem:generalizedaxiom} show that these generalised versions satisfy the properties needed in the proof of Soundness (Theorem~\ref{thm:soundness}).

\begin{restatable}{lemma}{comonoidgen}
	\label{lem:comonoidgen}
	For every $n$, $(\bigotimes_{i=1}^n \oc A_i, d_{A_1, \dots, A_n}, e_{A_1, \dots, A_n})$ is a commutative comonoid.
\end{restatable}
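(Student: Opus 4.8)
The plan is to proceed by induction on $n$, verifying at each stage the three defining axioms of a commutative comonoid from Definition~\ref{def:commutativecomonoid}—coassociativity, the two counit laws, and commutativity—for the pair $(d_{A_1,\dots,A_n}, e_{A_1,\dots,A_n})$.

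For the base cases: when $n=0$ the structure is $(I, \lambda_I^{-1}, id_I)$, and the comonoid axioms collapse to standard coherence identities for the monoidal unit (in particular $\lambda_I = \rho_I$ and the triangle/pentagon equations), hence hold in any symmetric monoidal category. When $n=1$ the structure is $(\oc A_1, d_{A_1}, e_{A_1})$, which is a commutative comonoid by the first condition in the definition of a linear category (Definition~\ref{def:linearcategory}).

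For the inductive step $n>1$, I observe that $\bigotimes_{i=1}^n \oc A_i = (\bigotimes_{i=1}^{n-1}\oc A_i) \otimes \oc A_n$, and that the definitions of $d_{A_1,\dots,A_n}$ and $e_{A_1,\dots,A_n}$ are precisely the canonical comonoid structure on a tensor product of two comonoids: the counit is $\lambda_I \circ (e_{A_1,\dots,A_{n-1}} \otimes e_{A_n})$ and the comultiplication is $(id \otimes \sigma \otimes id) \circ (d_{A_1,\dots,A_{n-1}} \otimes d_{A_n})$. It therefore suffices to establish the general fact that, in a symmetric monoidal category, the tensor product of two commutative comonoids is again a commutative comonoid, and then apply it to the comonoid $(\bigotimes_{i=1}^{n-1}\oc A_i, d_{A_1,\dots,A_{n-1}}, e_{A_1,\dots,A_{n-1}})$—which is a comonoid by the induction hypothesis—together with the comonoid $(\oc A_n, d_{A_n}, e_{A_n})$ from the $n=1$ case.

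The main obstacle is the verification of this general tensor-of-comonoids fact, and specifically its coassociativity axiom, where the two comultiplications $d_{A_1,\dots,A_{n-1}}$ and $d_{A_n}$ must be shuffled past one another using the symmetry $\sigma$ while the resulting sixfold tensor is reassociated with the associator $\alpha$. I would discharge this by appealing to Mac Lane's coherence theorem, which lets all structural isomorphisms built from $\alpha$, $\lambda$, $\rho$, and $\sigma$ be treated as canonical, so that the chase reduces to checking that the coassociativity diagrams of the two factor comonoids assemble correctly. The two counit laws and commutativity are then comparatively direct, following from naturality of $\lambda$, $\rho$, and $\sigma$ together with the corresponding axioms for the factor comonoids.
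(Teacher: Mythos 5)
Your proposal is correct, and it follows the same overall strategy as the paper: induction on $n$, with the base case $n=0$ discharged by coherence and the case $n=1$ by condition~1 of Definition~\ref{def:linearcategory}. The difference lies in how the inductive step is organised. The paper proves the three comonoid axioms as three separate induction lemmas (Lemmas~\ref{lem:dcomonoidaxiom}, \ref{lem:comonoidn} and~\ref{lem:dsymm}, for coassociativity, the counit laws, and commutativity respectively), each of whose $n>1$ case is an explicit diagram chase using the definitions of $d_{A_1,\dots,A_n}$ and $e_{A_1,\dots,A_n}$, the induction hypothesis, naturality of $\sigma$, functoriality of $\otimes$, coherence, and the comonoid axioms for $(\oc A_n, d_{A_n}, e_{A_n})$. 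You instead run a single induction whose step is delegated to the general fact that the tensor product of two commutative comonoids in a symmetric monoidal category is again a commutative comonoid; this is exactly what the paper's diagram chases verify, specialised to the factors $(\bigotimes_{i=1}^{n-1}\oc A_i, d_{A_1,\dots,A_{n-1}}, e_{A_1,\dots,A_{n-1}})$ and $(\oc A_n, d_{A_n}, e_{A_n})$, so the mathematical content coincides. Your packaging buys modularity: the tensor-of-comonoids lemma is standard, proved once, and reusable. The paper's version stays self-contained and produces the specific commuting diagrams that are then cited in the later soundness arguments. One small caution on your handling of coassociativity: Mac Lane coherence only identifies the purely structural maps; to slide $d_{A_1,\dots,A_{n-1}} \otimes d_{A_n}$ past the interchange symmetry $id \otimes \sigma \otimes id$ you also need naturality of $\sigma$ (and of $\alpha$), which is implicit in your phrase ``assemble correctly'' and is the same combination of ingredients the paper invokes in its chases.
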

\begin{proof}
	See Appendix~\ref{proof:genprop}.
\end{proof}

\begin{restatable}{lemma}{coalgebragen}
	\label{lem:coalgebragen}
	For every $n$, $(\bigotimes_{i=1}^n \oc A_i, \delta_{A_1, \dots, A_n})$ is a coalgebra.
\end{restatable}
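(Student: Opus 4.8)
The plan is to prove the statement by induction on $n$, directly verifying the two defining diagrams of a coalgebra (Definition~\ref{def:coalgebra}) for the pair $(\bigotimes_{i=1}^n \oc A_i,\ \delta_{A_1,\dots,A_n})$. Writing $C_n = \bigotimes_{i=1}^n \oc A_i$ for brevity, I must establish the counit law $\varepsilon_{C_n} \circ \delta_{A_1,\dots,A_n} = \mathrm{id}_{C_n}$ and the coassociativity law $\oc\delta_{A_1,\dots,A_n} \circ \delta_{A_1,\dots,A_n} = \delta_{C_n} \circ \delta_{A_1,\dots,A_n}$, where $\varepsilon_{C_n}$ and $\delta_{C_n}$ are the ordinary counit and comultiplication of the comonad evaluated at the object $C_n$. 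The base cases are immediate from Lemma~\ref{lem:biermancoalgebras}: for $n=0$ the pair is $(I, m_I)$, and for $n=1$ it is $(\oc A_1, \delta_{A_1})$, both of which are coalgebras.

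For the inductive step ($n>1$) I would unfold the recursive definition $\delta_{A_1,\dots,A_n} = m \circ (\delta_{A_1,\dots,A_{n-1}} \otimes \delta_{A_n})$, so that $C_n = C_{n-1}\otimes \oc A_n$ exhibits $C_n$ as the tensor of the underlying objects of two coalgebras: $(C_{n-1}, \delta_{A_1,\dots,A_{n-1}})$, which is a coalgebra by the induction hypothesis, and the free coalgebra $(\oc A_n, \delta_{A_n})$, which is one by Lemma~\ref{lem:biermancoalgebras}. The whole argument is then an instance of the principle that the tensor of two $\oc$-coalgebras is again a $\oc$-coalgebra, specialised to this pair, with the monoidal comonad structure $(m_{A,B}, m_I)$ supplying the needed coherences.

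The counit law follows quickly. Since $\varepsilon$ is a monoidal natural transformation we have $\varepsilon_{C_{n-1}\otimes \oc A_n} \circ m = \varepsilon_{C_{n-1}} \otimes \varepsilon_{\oc A_n}$, so that
\[
  \varepsilon_{C_n} \circ \delta_{A_1,\dots,A_n}
  = (\varepsilon_{C_{n-1}} \otimes \varepsilon_{\oc A_n}) \circ (\delta_{A_1,\dots,A_{n-1}} \otimes \delta_{A_n})
  = (\varepsilon_{C_{n-1}}\circ\delta_{A_1,\dots,A_{n-1}}) \otimes (\varepsilon_{\oc A_n}\circ\delta_{A_n}),
\]
and by functoriality of $\otimes$ this equals $\mathrm{id}_{C_{n-1}} \otimes \mathrm{id}_{\oc A_n} = \mathrm{id}_{C_n}$, using the counit laws of the two component coalgebras.

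The coassociativity law is the step I expect to be the main obstacle, and I would handle it by a diagram chase combining three ingredients: the monoidality coherence of $\delta$, which rewrites $\delta_{C_{n-1}\otimes \oc A_n}\circ m$ as $\oc m \circ m_{\oc C_{n-1},\,\oc\oc A_n} \circ (\delta_{C_{n-1}} \otimes \delta_{\oc A_n})$; the naturality of $m$ and of $\delta$; and the coassociativity squares of the two component coalgebras, provided by the induction hypothesis and by Lemma~\ref{lem:biermancoalgebras}. Concretely, I would reduce both $\oc\delta_{A_1,\dots,A_n}\circ\delta_{A_1,\dots,A_n}$ and $\delta_{C_n}\circ\delta_{A_1,\dots,A_n}$, via these coherences, to the single morphism obtained by applying $m\circ(-\otimes -)$ to the coassociativity squares of the components, and then check that the two reductions land on the same composite. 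This bookkeeping — tracking the interaction of $m$ with $\delta$ and $\oc\delta$ through naturality — is precisely what the monoidal-comonad axioms are designed to make commute, so I expect it to close with no new ideas beyond careful coherence management, while being the only genuinely delicate part of the proof.
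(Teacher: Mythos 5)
Your proposal is correct and follows essentially the same route as the paper: the paper proves the two coalgebra laws by induction on $n$ in two auxiliary lemmas (Lemmas~\ref{lem:deltacoalgebraprop1} and~\ref{lem:deltacoalgebraprop2}), with the same base cases and the same ingredients you identify — monoidality of $\varepsilon$ and $\delta$, naturality of $m$, the comonad axioms for the $\oc A_n$ factor, and the induction hypothesis for $\bigotimes_{i=1}^{n-1}\oc A_i$. Your only departure is cosmetic: you package the inductive step as the general principle that the tensor of two coalgebras is a coalgebra (a two-object version of the paper's Lemma~\ref{lem:generalizacionprop9}), whereas the paper inlines the corresponding diagram chases directly.
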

\begin{proof}
	See Appendix~\ref{proof:genprop}.
\end{proof}
	
\begin{corollary}\label{cor:deltacoalgebramorph}
	For every $n$, $\delta_{A_1, \dots, A_n}:(\bigotimes_{i=1}^n \oc A_i, \delta_{A_1, \dots, A_n}) \to (\oc (\bigotimes_{i=1}^n \oc A_i), \delta_{\bigotimes_{i=1}^n \oc A_i})$ is a coalgebra morphism.
\end{corollary}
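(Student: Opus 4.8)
The plan is to unfold the definition of a coalgebra morphism (Definition~\ref{def:coalgebra}) and to observe that the square it demands is literally one of the two defining axioms of the coalgebra $(\bigotimes_{i=1}^n \oc A_i, \delta_{A_1,\dots,A_n})$, whose coalgebra status has just been secured by Lemma~\ref{lem:coalgebragen}. So the corollary is immediate once the notation is matched up correctly, and no diagram chase beyond citing that lemma will be required.

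Concretely, I would write $X = \bigotimes_{i=1}^n \oc A_i$ and abbreviate $g = \delta_{A_1,\dots,A_n}\colon X \to \oc X$. By Lemma~\ref{lem:biermancoalgebras} the target $(\oc X, \delta_X)$ is the free coalgebra on $X$, whose structure map $\delta_X$ is exactly the component $\delta_{\bigotimes_{i=1}^n \oc A_i}$ appearing in the statement. Thus, to show that $g$ is a coalgebra morphism $(X, g) \to (\oc X, \delta_X)$, I must check that the square
\[\begin{tikzcd}[ampersand replacement=\&,cramped]
	X \&\& {\oc X} \\
	\\
	{\oc X} \&\& {\oc \oc X}
	\arrow["g", from=1-1, to=1-3]
	\arrow["g"', from=1-1, to=3-1]
	\arrow["{\oc g}", from=1-3, to=3-3]
	\arrow["{\delta_X}"', from=3-1, to=3-3]
\end{tikzcd}\]
commutes, i.e.\ that $\oc g \circ g = \delta_X \circ g$.

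But this is precisely the first of the two defining diagrams of a coalgebra (Definition~\ref{def:coalgebra}) instantiated at the pair $(X, g)$. Since Lemma~\ref{lem:coalgebragen} asserts that $(X, g) = (\bigotimes_{i=1}^n \oc A_i, \delta_{A_1,\dots,A_n})$ is a coalgebra, that square already commutes, and the corollary follows without further computation.

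There is essentially no obstacle here: the whole content is the standard categorical fact that the structure map of any coalgebra is a coalgebra morphism into the free coalgebra, specialised to the coalgebra built in Lemma~\ref{lem:coalgebragen}. The only point that requires care is the bookkeeping of the two distinct maps called ``$\delta$'', namely distinguishing the comonad component $\delta_X = \delta_{\bigotimes_{i=1}^n \oc A_i}$ (the free-coalgebra structure supplied by Lemma~\ref{lem:biermancoalgebras}) from the generalised transformation $\delta_{A_1,\dots,A_n}$ of Definition~\ref{def:generalizededeepsilon}.
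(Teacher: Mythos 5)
Your proof is correct and takes essentially the same route as the paper: the paper likewise invokes Lemma~\ref{lem:coalgebragen} for the source, Lemma~\ref{lem:biermancoalgebras} for the target, and concludes that the corollary is a direct consequence of Lemma~\ref{lem:coalgebragen}. Your write-up merely makes explicit what the paper leaves implicit, namely that the coalgebra-morphism square for $\delta_{A_1,\dots,A_n}$ coincides with the first coalgebra axiom of $(\bigotimes_{i=1}^n \oc A_i, \delta_{A_1,\dots,A_n})$.
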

\begin{proof}
	$(\bigotimes_{i=1}^n \oc A_i, \delta_{A_1, \dots, A_n})$ is a coalgebra by Lemma~\ref{lem:coalgebragen}. $(\oc (\bigotimes_{i=1}^n \oc A_i), \delta_{\bigotimes_{i=1}^n \oc A_i})$ is a coalgebra by Lemma~\ref{lem:biermancoalgebras}. This Corollary is a direct consequence of Lemma~\ref{lem:coalgebragen}.
\end{proof}
		
\begin{restatable}{lemma}{dcoalgebramorph}
	\label{lem:dcoalgebramorph}
	For every $n$, $d_{A_1, \dots, A_n}:(\bigotimes_{i=1}^n \oc A_i, \delta_{A_1, \dots, A_n}) \to ((\bigotimes_{i=1}^n \oc A_i) \otimes (\bigotimes_{i=1}^n \oc A_i), m \circ (\delta_{A_1, \dots, A_n} \otimes \delta_{A_1, \dots, A_n}))$ is a coalgebra morphism.
\end{restatable}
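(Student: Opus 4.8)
The plan is to prove, by induction on $n$, that $d_{A_1,\dots,A_n}$ satisfies the defining square of a coalgebra morphism, that is,
\[
  \oc d_{A_1,\dots,A_n} \circ \delta_{A_1,\dots,A_n}
  = m \circ (\delta_{A_1,\dots,A_n} \otimes \delta_{A_1,\dots,A_n}) \circ d_{A_1,\dots,A_n}.
\]
Here the domain carries the coalgebra structure of Lemma~\ref{lem:coalgebragen}, and the codomain $((\bigotimes_{i=1}^n \oc A_i) \otimes (\bigotimes_{i=1}^n \oc A_i),\, m \circ (\delta_{A_1,\dots,A_n} \otimes \delta_{A_1,\dots,A_n}))$ is a coalgebra by Lemma~\ref{lem:generalizacionprop9} applied to the coalgebra of Lemma~\ref{lem:coalgebragen}, so the statement is well posed.

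For the base cases I would argue as follows. When $n=1$ the equation is exactly the assertion, built into Definition~\ref{def:linearcategory}(1), that $d_{A_1}$ is a coalgebra morphism from $(\oc A_1, \delta_{A_1})$ into the coalgebra $(\oc A_1 \otimes \oc A_1,\, m \circ (\delta_{A_1} \otimes \delta_{A_1}))$ supplied by Lemma~\ref{lem:generalizacionprop9}. When $n=0$ we have $\delta_{A_1,\dots,A_n} = m_I$ and $d_{A_1,\dots,A_n} = \lambda_I^{-1}$, so the goal reduces to $\oc(\lambda_I^{-1}) \circ m_I = m \circ (m_I \otimes m_I) \circ \lambda_I^{-1}$, which is a unit-coherence identity of the monoidal functor $\oc$ together with the naturality of $m$.

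For the inductive step ($n>1$) I would expand both composites using the recursive clauses $\delta_{A_1,\dots,A_n} = m \circ (\delta_{A_1,\dots,A_{n-1}} \otimes \delta_{A_n})$ and $d_{A_1,\dots,A_n} = (id \otimes \sigma \otimes id) \circ (d_{A_1,\dots,A_{n-1}} \otimes d_{A_n})$. The left-hand side then exhibits the subcomposites $\oc d_{A_1,\dots,A_{n-1}} \circ \delta_{A_1,\dots,A_{n-1}}$ and $\oc d_{A_n} \circ \delta_{A_n}$, which are rewritten by the induction hypothesis and by the $n=1$ base case, respectively. The crucial move is then to commute the outer $\delta$ and $m$ arising from $\delta_{A_1,\dots,A_n}$ by invoking the fact that $\delta$ is a monoidal natural transformation (Definition~\ref{def:symmetricmonoidalcomonad}), i.e. $\delta_{X \otimes Y} \circ m = \oc m \circ m \circ (\delta_X \otimes \delta_Y)$; the remaining discrepancy is closed by the naturality of $m$ and of $\sigma$ and by the coherence of the symmetric monoidal structure.

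The main obstacle is the inductive step, which is a sizeable diagram chase. The real difficulty is orchestrating the single application of the monoidality of $\delta$ so that the $m$'s and $\delta$'s line up correctly, while simultaneously re-associating the $n$-fold tensor and commuting the symmetry $\sigma$ introduced by the definition of $d_{A_1,\dots,A_n}$. Every associativity, unit, and symmetry isomorphism must be tracked precisely; but once the monoidality square for $\delta$ is positioned correctly, the rest of the verification is routine naturality and Mac Lane coherence.
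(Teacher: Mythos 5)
Your overall strategy coincides with the paper's: induction on $n$, the $n=1$ case being exactly condition (1) of Definition~\ref{def:linearcategory} (with codomain structure supplied by Lemma~\ref{lem:generalizacionprop9}), the $n=0$ case a unit-coherence computation, and the inductive step obtained by expanding the recursive definitions of $\delta_{A_1,\dots,A_n}$ and $d_{A_1,\dots,A_n}$ and applying naturality of $m$, the induction hypothesis, and the $n=1$ case. (Minor point: for $n=0$ the paper closes the diagram with naturality of $\lambda$, functoriality of $\otimes$, and the unit coherence of the monoidal functor $\oc$; naturality of $m$ plays no role there.)

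However, your ``crucial move'' in the inductive step is misidentified, and this is a genuine gap. After writing
\[
\oc d_{A_1,\dots,A_n}\circ\delta_{A_1,\dots,A_n}
=\oc(id\otimes\sigma\otimes id)\circ\oc(d_{A_1,\dots,A_{n-1}}\otimes d_{A_n})\circ m\circ(\delta_{A_1,\dots,A_{n-1}}\otimes\delta_{A_n})
\]
and applying naturality of $m$, the induction hypothesis, and the coalgebra-morphism property of $d_{A_n}$, every $\delta$ remaining in the equation is either a component $\delta_{A_i}$ or a generalised $\delta_{A_1,\dots,A_{n-1}}$ built recursively from components and $m$'s; no comonad $\delta$ indexed by a tensor object ever appears, so the monoidal-naturality square $\delta_{X\otimes Y}\circ m=\oc m\circ m\circ(\delta_X\otimes\delta_Y)$ has nothing to apply to. What actually remains to be proved, once the middle-four interchange is commuted past $(\delta\otimes\delta)\otimes(\delta\otimes\delta)$ by naturality of $\sigma$, is the identity
\[
\oc(id\otimes\sigma\otimes id)\circ m\circ(m\otimes m)=m\circ(m\otimes m)\circ(id\otimes\sigma\otimes id),
\]
which the paper isolates as Lemma~\ref{lem:dcoalgebramorph:aux}. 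Its proof requires the symmetric monoidal \emph{functor} structure of $\oc$ --- in particular the compatibility $\oc\sigma\circ m=m\circ\sigma$ --- together with the associativity coherence and naturality of $m$. Your fallback list (``naturality of $m$ and of $\sigma$ and Mac Lane coherence'') does not include this functor-level compatibility, so as written the argument cannot close at exactly the step the paper treats as a separate auxiliary lemma. Note that the monoidality of the natural transformation $\delta$ is needed in this development, but only inside Lemma~\ref{lem:generalizacionprop9}, i.e.\ to show that $((\bigotimes_{i=1}^n\oc A_i)\otimes(\bigotimes_{i=1}^n\oc A_i),\,m\circ(\delta_{A_1,\dots,A_n}\otimes\delta_{A_1,\dots,A_n}))$ is a coalgebra at all --- a fact you correctly cite for well-posedness, not for the morphism square itself.
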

\begin{proof}
	$(\bigotimes_{i=1}^n \oc A_i, \delta_{A_1, \dots, A_n})$ is a coalgebra by Lemma~\ref{lem:coalgebragen}. $((\bigotimes_{i=1}^n \oc A_i) \otimes (\bigotimes_{i=1}^n \oc A_i), m \circ (\delta_{A_1, \dots, A_n} \otimes \delta_{A_1, \dots, A_n}))$ is a coalgebra by Lemmas~\ref{lem:coalgebragen} and \ref{lem:generalizacionprop9}. See proof details in Appendix~\ref{proof:genprop}.
\end{proof}

\begin{restatable}{lemma}{generalizedaxiom}
		\label{lem:generalizedaxiom}
		For every $n$, $e_{A_1, \dots, A_n}: (\bigotimes_{i=1}^n \oc A_i, \delta_{A_1, \dots, A_n}) \to (I,m_I)$ is a coalgebra morphism.
\end{restatable}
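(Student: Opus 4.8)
The plan is to prove the statement by induction on $n$, directly verifying the defining square of a coalgebra morphism from Definition~\ref{def:coalgebra}. Writing $T_n = \bigotimes_{i=1}^n \oc A_i$, the source $(T_n,\delta_{A_1,\dots,A_n})$ is a coalgebra by Lemma~\ref{lem:coalgebragen} and the target $(I,m_I)$ is one by Lemma~\ref{lem:biermancoalgebras}, so the statement is well-posed, and what remains is to show the equality $\oc e_{A_1,\dots,A_n} \circ \delta_{A_1,\dots,A_n} = m_I \circ e_{A_1,\dots,A_n}$ of morphisms $T_n \to \oc I$.

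For the base cases: when $n=0$, both $e$ and $\delta$ specialise (by Definition~\ref{def:generalizededeepsilon}) to $id_I$ and $m_I$ respectively, so the equation reduces to $\oc(id_I)\circ m_I = m_I \circ id_I$, which holds trivially. When $n=1$, the claim is precisely that $e_{A_1}\colon(\oc A_1,\delta_{A_1})\to(I,m_I)$ is a coalgebra morphism, which is part of the definition of a linear category (Definition~\ref{def:linearcategory}, condition~1).

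For the inductive step ($n>1$), I would set $P = \bigotimes_{i=1}^{n-1}\oc A_i$ and abbreviate $e_P = e_{A_1,\dots,A_{n-1}}$, $\delta_P = \delta_{A_1,\dots,A_{n-1}}$. Unfolding Definition~\ref{def:generalizededeepsilon} gives $e_{A_1,\dots,A_n} = \lambda_I \circ (e_P \otimes e_{A_n})$ and $\delta_{A_1,\dots,A_n} = m \circ (\delta_P \otimes \delta_{A_n})$. The first move is to push $\oc e_{A_1,\dots,A_n}$ through the structure map $m$ using the naturality of $m$ applied to $e_P$ and $e_{A_n}$, rewriting the left-hand side as $\oc\lambda_I \circ m_{I,I}\circ ((\oc e_P \circ \delta_P)\otimes(\oc e_{A_n}\circ \delta_{A_n}))$. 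Then the induction hypothesis replaces $\oc e_P\circ\delta_P$ by $m_I\circ e_P$, and the $n=1$ base case replaces $\oc e_{A_n}\circ\delta_{A_n}$ by $m_I\circ e_{A_n}$; factoring out the common $(e_P\otimes e_{A_n})$, it suffices to identify $\oc\lambda_I\circ m_{I,I}\circ(m_I\otimes m_I)$ with $m_I\circ\lambda_I$, which is exactly the right-hand side $m_I\circ e_{A_1,\dots,A_n}$.

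The main obstacle is this last coherence identity $\oc\lambda_I\circ m_{I,I}\circ(m_I\otimes m_I) = m_I\circ\lambda_I$. I expect to establish it purely from the axioms of the monoidal functor $\oc$: factoring $(m_I\otimes m_I) = (m_I\otimes id_{\oc I})\circ(id_I\otimes m_I)$, the left-unitality coherence of $\oc$ yields $\oc\lambda_I\circ m_{I,I}\circ(m_I\otimes id_{\oc I}) = \lambda_{\oc I}$, and then naturality of the left unitor $\lambda$ applied to $m_I\colon I\to\oc I$ gives $\lambda_{\oc I}\circ(id_I\otimes m_I) = m_I\circ\lambda_I$. Everything else is routine bookkeeping of tensors and associators, entirely analogous to the companion proofs of Lemmas~\ref{lem:coalgebragen} and~\ref{lem:dcoalgebramorph}.
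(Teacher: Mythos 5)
Your proof is correct and follows essentially the same route as the paper: induction on $n$, with the $n=0$ case trivial, the $n=1$ case being exactly the coalgebra-morphism property of $e_A$ from the linear category structure, and the inductive step combining the definitions of $e_{A_1,\dots,A_n}$ and $\delta_{A_1,\dots,A_n}$, naturality of $m$, the induction hypothesis, and the unitor coherence axiom of the monoidal functor $\oc$. The paper organises this as a diagram chase (using $\rho$-coherence where you use $\lambda$-coherence and naturality of $\lambda$), but the ingredients and structure are the same.
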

\begin{proof}
	$(\bigotimes_{i=1}^n \oc A_i, \delta_{A_1, \dots, A_n})$ is a coalgebra by Lemma~\ref{lem:coalgebragen}. $(I,m_I)$ is a coalgebra by Lemma~\ref{lem:biermancoalgebras}. See proof details in Appendix~\ref{proof:genprop}.
\end{proof}

\subsubsection{Semiadditive categories}
The next key ingredient is the notion of semiadditive category (Definition~\ref{def:semiadditivecategory}) and semiadditive functor (Definition~\ref{def:semiadditivefunctor}). To introduce these concepts, we need to recall the notions of zero object and zero morphism (Definition~\ref{def:zeroobject}), as well as the sum of morphisms (Definition~\ref{def:homsetsum}). We then state that any category with biproducts admits a semiadditive structure (Theorem~\ref{thm:semiadditivestructure}) and that its hom-sets carry a natural semiring structure (Corollary~\ref{cor:semiringstructure}).
We also state the preservation of the biproduct structure by semiadditive functors (Lemma~\ref{lem:semiadditiveiso}), a property we will need in what follows.

\begin{definition}[Zero object and zero morphism]\label{def:zeroobject}
	A zero object $\mathbf 0$ in a category $\mathcal C$ is an object which is both initial and terminal.
	In a category $\mathcal C$ with zero object $\mathbf 0$, a morphism $f: A \to B$ is a zero morphism when it factors through $\mathbf 0$. It is unique for every pair of objects, and we denote it by $0_{A,B}$.
\end{definition}

\begin{definition}[Sum of morphisms]\label{def:homsetsum}
  Let $\mathcal C$ be a category with biproduct, let $\Delta = \langle id, id \rangle$, $\nabla = [id, id]$ and let $f,g:A \to B$. Then, $f+g:A \to B$ is defined as the morphism that makes the following diagram commute:
\[\begin{tikzcd}[ampersand replacement=\&,cramped]
	A \&\&\& B \\
	\\
	{A \oplus A} \&\&\& {B \oplus B}
	\arrow["{f + g}", from=1-1, to=1-4]
	\arrow["\Delta"', from=1-1, to=3-1]
	\arrow["{f \oplus g}"', from=3-1, to=3-4]
	\arrow["\nabla"', from=3-4, to=1-4]
\end{tikzcd}\]
\end{definition}

\begin{definition}[Semiadditive category \citeintitle{Chapter I, Section 18}{mitchell}]\label{def:semiadditivecategory}
	A semiadditive category is a category $\mathcal C$ together with an abelian monoid structure on each of its morphism sets, subject to the following conditions:
	\begin{enumerate}
		\item The composition functions $[B \to C] \times [A \to B] \to [A \to C]$ are bilinear.
		\item The zero elements of the monoids behave as zero morphisms.
	\end{enumerate}
\end{definition}

\begin{theorem}[Semiadditive structure \citeintitle{Proposition 18.4}{mitchell}]\label{thm:semiadditivestructure}
  Suppose that $\mathcal C$ is a category with biproducts. Then $\mathcal C$ has a unique semiadditive structure, where the abelian monoid structure operation is given by Definition~\ref{def:homsetsum}, and the identity is given by the zero morphisms.
	\qed
\end{theorem}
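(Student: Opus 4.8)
The plan is to treat existence and uniqueness separately. For existence I take the binary operation of Definition~\ref{def:homsetsum} as the candidate abelian monoid structure on each $\mathcal C(A,B)$ and verify the axioms; for uniqueness I show that any semiadditive structure compatible with the given biproducts is forced to coincide with it.

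First I would set up the biproduct calculus. Writing $\iota_k$ and $\pi_k$ for the injections and projections of the biproduct $\oplus$, I record the identities $\pi_k \circ \iota_k = \mathrm{id}$ and $\pi_j \circ \iota_k = 0$ for $j \neq k$ (the cross terms factor through $\mathbf 0$, cf.\ Definition~\ref{def:zeroobject}), together with the universal properties $\pi_k \circ \Delta = \mathrm{id}$ and $\nabla \circ \iota_k = \mathrm{id}$ of $\Delta = \langle \mathrm{id}, \mathrm{id}\rangle$ and $\nabla = [\mathrm{id}, \mathrm{id}]$. From these I would derive the two reformulations $f + g = [f,g] \circ \Delta = \nabla \circ \langle f, g\rangle$ of the sum, as well as the naturality squares $\Delta_A \circ h = (h \oplus h) \circ \Delta_{A'}$ and $k \circ \nabla_B = \nabla_C \circ (k \oplus k)$, which carry most of the weight.

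Next I would check the abelian monoid axioms. Commutativity follows from the symmetry of the biproduct: the swap $\sigma$ satisfies $\sigma \circ \Delta = \Delta$, $\nabla \circ \sigma = \nabla$ and $\sigma \circ (f \oplus g) = (g \oplus f) \circ \sigma$, whence $f + g = g + f$. For the unit law $f + 0_{A,B} = f$, the cross-term identity gives $[f, 0_{A,B}] = f \circ \pi_1$, so $[f,0_{A,B}] \circ \Delta = f \circ \pi_1 \circ \Delta = f$. Associativity I would obtain from the ternary biproduct, factoring the threefold diagonal and codiagonal through their binary constituents up to the associator and invoking coherence. Bilinearity of composition is then immediate from the naturality squares and functoriality of $\oplus$: $(f+g)\circ h = \nabla \circ (f \oplus g) \circ (h\oplus h) \circ \Delta = (f\circ h) + (g \circ h)$, and dually $k \circ (f + g) = (k\circ f)+(k\circ g)$. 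That the zero morphisms are the monoid units and that they are annihilated under composition gives condition~2 of Definition~\ref{def:semiadditivecategory} directly, completing existence.

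For uniqueness, suppose $+'$ is any abelian monoid structure making $\mathcal C$ semiadditive with the same biproducts and with zero morphisms as units. Using bilinearity and the cross-term identities I would first show $\iota_1 \circ \pi_1 +' \iota_2 \circ \pi_2 = \mathrm{id}_{A \oplus B}$: composing with $\pi_1$ and $\pi_2$ recovers $\pi_1$ and $\pi_2$ respectively, so the universal property of the product forces this identity. Expanding $f \oplus g = (f \oplus g)\circ \mathrm{id} = \iota_1 f \pi_1 +' \iota_2 g \pi_2$, then precomposing with $\Delta$ and postcomposing with $\nabla$, the relations $\nabla \circ \iota_k = \mathrm{id}$ and $\pi_k \circ \Delta = \mathrm{id}$ collapse everything to $f +' g = \nabla \circ (f \oplus g) \circ \Delta$, which is exactly the operation of Definition~\ref{def:homsetsum}; hence $+' = +$. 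I expect the main obstacle to be the associativity step, where the coherence bookkeeping relating the ternary diagonal and codiagonal to the binary ones (through the associator and the interchange of the middle factors) is the only genuinely fiddly computation, while everything else reduces cleanly to the naturality of $\Delta$ and $\nabla$ and the cross-term identities of the biproduct.
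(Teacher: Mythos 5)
Your proposal is correct: the existence part (commutativity via the symmetry, the unit law via $[f,0_{A,B}]=f\circ\pi_1$, associativity via the ternary biproduct, bilinearity via naturality of $\Delta$ and $\nabla$) and the uniqueness part (deriving $\iota_1\pi_1 +' \iota_2\pi_2 = \mathrm{id}$ and collapsing $f +' g$ to $\nabla\circ(f\oplus g)\circ\Delta$) together constitute the standard argument. Note that the paper does not prove this theorem at all --- it is imported by citation from Mitchell (Proposition 18.4) --- and your reconstruction matches that classical proof, so there is nothing to compare beyond observing that your argument is exactly the one the citation stands for.
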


\begin{corollary}[Semiring structure \citeintitle{Corollary 3.4}{DiazcaroMalherbe24}]\label{cor:semiringstructure}
	In a category with biproduct every $Hom(A,A)$ is a semiring, where addition is given by Definition~\ref{def:homsetsum}, with unit $0_{A,A}$ and product is given by composition, with unit $id_A$.
	\qed
\end{corollary}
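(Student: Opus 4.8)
The plan is to derive each semiring axiom directly from the semiadditive structure guaranteed by Theorem~\ref{thm:semiadditivestructure}, together with the basic axioms of a category. Recall that a semiring consists of a commutative additive monoid, a (not necessarily commutative) multiplicative monoid, two-sided distributivity of product over sum, and absorption of the additive unit under multiplication. I would match each of these to data already established.

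First, I would invoke Theorem~\ref{thm:semiadditivestructure}: since the ambient category has biproducts, each hom-set — in particular $\mathrm{Hom}(A,A)$ — carries an abelian monoid structure whose operation is the sum of Definition~\ref{def:homsetsum} and whose unit is the zero morphism $0_{A,A}$. This supplies the commutative additive monoid $(\mathrm{Hom}(A,A), +, 0_{A,A})$ at once. Next, the multiplicative monoid is immediate from the category axioms alone: composition $\circ$ restricted to $\mathrm{Hom}(A,A)$ is associative and admits $id_A$ as a two-sided unit, so $(\mathrm{Hom}(A,A), \circ, id_A)$ is a monoid.

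For the two distributivity laws, I would appeal to condition~(1) of Definition~\ref{def:semiadditivecategory}: the semiadditive structure makes composition bilinear, that is $(f+g)\circ h = f\circ h + g\circ h$ and $h\circ(f+g) = h\circ f + h\circ g$ for all endomorphisms $f,g,h$ of $A$. These are exactly the right and left distributivity of the product over the sum. Finally, for absorption I would use condition~(2) of Definition~\ref{def:semiadditivecategory}, which states that the additive units behave as zero morphisms; since a zero morphism factors through $\mathbf 0$ and therefore composes on either side with any morphism to a zero morphism again, we obtain $0_{A,A}\circ f = 0_{A,A} = f\circ 0_{A,A}$.

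There is no genuine obstacle here: the corollary is essentially a repackaging of the semiadditive axioms as the semiring axioms, with all the substantive work already carried out in Theorem~\ref{thm:semiadditivestructure}. The only point that requires any care is to confirm that the two clauses of Definition~\ref{def:semiadditivecategory} supply \emph{precisely} the distributivity and absorption laws of a semiring, which they do by construction; note also that commutativity of the multiplication is neither claimed nor needed, as composition need not be commutative.
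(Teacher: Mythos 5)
Your proposal is correct: the paper itself gives no proof of this corollary (it is cited directly from~\cite{DiazcaroMalherbe24} and marked \qed), and your argument is exactly the routine verification being relied upon — Theorem~\ref{thm:semiadditivestructure} supplies the abelian additive monoid, the category axioms supply the multiplicative monoid $(\mathrm{Hom}(A,A),\circ,id_A)$, and conditions (1) and (2) of Definition~\ref{def:semiadditivecategory} supply distributivity and absorption, respectively. No gap; your remark that commutativity of the product is neither claimed nor needed is also the right reading of the statement.
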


\begin{definition}[Semiadditive functor]\label{def:semiadditivefunctor}
	Let $\mathcal C, \mathcal D$ be semiadditive categories. A functor $F:\mathcal C \to \mathcal D$ is semiadditive when the mapping of morphisms preserves the monoid structure.
\end{definition}

\begin{lemma}[\hspace{-.1mm}\citeintitle{Lemma 3.9}{DiazcaroMalherbe24}]\label{lem:semiadditiveiso}
  Let $F: \abscat \to \abscat$ be a semiadditive functor. Then, $F(A) \oplus F(B)$ and $F(A \oplus B)$ are naturally isomorphic, with the
  isomorphisms given by $f = \langle F(\pi_1), F(\pi_2)\rangle$ and $f^{-1} = [F(i_1), F(i_2)]$.
  \qed
\end{lemma}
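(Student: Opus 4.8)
The plan is to verify directly that the stated maps are mutually inverse and then check naturality. Write $\pi_1, \pi_2$ and $i_1, i_2$ for the projections and injections of the biproduct $A \oplus B$, and $p_1, p_2$ and $q_1, q_2$ for those of $F(A) \oplus F(B)$. I will use two ingredients. First, the biproduct equations $p_j \circ q_k = id$ when $j = k$ and $p_j \circ q_k = 0$ when $j \neq k$ (and likewise for $\pi, i$), together with the absorption identity $i_1 \circ \pi_1 + i_2 \circ \pi_2 = id_{A \oplus B}$, which holds in any category with biproducts by Theorem~\ref{thm:semiadditivestructure}. Second, the semiadditivity of $F$, i.e. $F(g + h) = F(g) + F(h)$ and $F(0) = 0$.

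For $f \circ f^{-1} : F(A) \oplus F(B) \to F(A) \oplus F(B)$, I would check equality with the identity component-wise, which is legitimate since $F(A) \oplus F(B)$ is a biproduct: a morphism $g$ out of and into it equals $id$ exactly when $p_j \circ g \circ q_k = p_j \circ q_k$ for all $j,k$. Using the universal properties of the pairing and copairing, $p_j \circ f \circ f^{-1} \circ q_k = F(\pi_j) \circ F(i_k) = F(\pi_j \circ i_k)$, which is $F(id) = id$ for $j = k$ and $F(0) = 0$ for $j \neq k$. These are exactly the components of $id_{F(A) \oplus F(B)}$, so $f \circ f^{-1} = id$. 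This direction uses only functoriality and preservation of zero morphisms.

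The converse $f^{-1} \circ f = id_{F(A \oplus B)}$ is where semiadditivity is essential. I would first rewrite, in the biproduct, $[F(i_1), F(i_2)] = F(i_1) \circ p_1 + F(i_2) \circ p_2$ and $\langle F(\pi_1), F(\pi_2)\rangle = q_1 \circ F(\pi_1) + q_2 \circ F(\pi_2)$. Expanding the composite by bilinearity of composition (Definition~\ref{def:semiadditivecategory}) gives four terms, of which the two cross terms vanish because $p_1 \circ q_2 = 0$ and $p_2 \circ q_1 = 0$, leaving $F(i_1) \circ F(\pi_1) + F(i_2) \circ F(\pi_2)$. By functoriality and semiadditivity this equals $F(i_1 \circ \pi_1 + i_2 \circ \pi_2) = F(id_{A \oplus B}) = id_{F(A \oplus B)}$, using the absorption identity. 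This is the crux: the identity on $F(A \oplus B)$ is recovered precisely because $F$ carries the sum $i_1 \circ \pi_1 + i_2 \circ \pi_2$ to the sum of its images.

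Finally, I would establish naturality in $A$ and $B$. For $a : A \to A'$ and $b : B \to B'$, since $f^{-1}$ is a map out of the coproduct $F(A) \oplus F(B)$, it suffices to verify the naturality square after precomposition with the injections $q_1, q_2$. Using the coproduct functoriality of the biproduct, so that $(a \oplus b) \circ i_1 = i_1' \circ a$ and $(F(a) \oplus F(b)) \circ q_1 = q_1' \circ F(a)$ (and analogously on the second injection), together with the functoriality of $F$, both composites reduce to $F(i_1' \circ a)$ on $q_1$ and to $F(i_2' \circ b)$ on $q_2$; hence the square commutes. Naturality of $f$ follows automatically, being the inverse of a natural isomorphism. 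The only real obstacle is bookkeeping: one must keep the biproduct data of $A \oplus B$, of $F(A) \oplus F(B)$, and of the primed objects distinct, and invoke semiadditivity at exactly the step where the diagonal terms are reassembled.
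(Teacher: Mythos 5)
Your proof is correct. Note that this paper contains no proof of the lemma to compare against: it is imported from the cited reference \cite{DiazcaroMalherbe24} (its Lemma 3.9) and stated with a \textqed-style closure, so the only benchmark is the standard argument, which is exactly what you give. Your division of labour is the right one: the composite $f \circ f^{-1} = id_{F(A)\oplus F(B)}$ needs only functoriality and preservation of zero morphisms (checked componentwise via $F(\pi_j \circ i_k)$), whereas $f^{-1} \circ f = id_{F(A\oplus B)}$ genuinely requires semiadditivity of $F$, applied after expanding the pairing and copairing as sums and invoking bilinearity of composition to kill the cross terms, so that $F(i_1 \circ \pi_1) + F(i_2 \circ \pi_2) = F(i_1 \circ \pi_1 + i_2 \circ \pi_2) = F(id)$; the naturality check on the injections, with naturality of $f$ inherited as the inverse of a natural isomorphism, is also sound and is what the paper's Lemma~\ref{lem:distmaps} implicitly relies on when instantiating $F$ as $- \otimes A$ and $[A \to -]$.
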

\subsubsection{The category \texorpdfstring{$\abscat$}{C-!-S}}
Finally, we define the category $\abscat$ (Definition~\ref{def:abscat}), which will serve as the basis for the denotational semantics of the $\OC$-calculus.

\begin{definition}[The category $\abscat$]\label{def:abscat}
	Let $\scalars$ be a fixed semiring. We define $\abscat$ to be a linear
	category with biproduct $\oplus$ where there is a monomorphism from
	$\scalars$ to $Hom(I,I)$.
\end{definition}

\begin{notation}
  In the category $\abscat$, the symmetric monoidal structure is given by $(\abscat, \otimes, I)$ where the coherence isomorphisms are denoted by:
  \begin{align*}
    \alpha_{A,B,C} : A \otimes (B \otimes C) &\to (A \otimes B) \otimes C &
    \lambda_A : I \otimes A & \to A &
    \rho_A : A \otimes I & \to A &
    \sigma_{A,B} : A \otimes B &\to B \otimes A
  \end{align*}
  the biproduct maps are denoted by:
  \begin{align*}
    \pi_1 : A \oplus B &\to A & 
    \pi_2 : A \oplus A &\to B & 
    i_1 : A &\to A \oplus B &
    i_2 : B &\to A \oplus B
  \end{align*}
  and the semiring monomorphism is denoted by $\mono{\cdot}: \scalars \to Hom(I,I)$.
\end{notation}

The following is an example of a concrete category with the properties stated in Definition~\ref{def:abscat}.

\begin{example}\label{ex:sm}
  The category $(\sm,\otimes,\scalars,\oplus)$, where objects are semimodules over the semiring $\scalars$, arrows are semimodule homomorphisms, the tensor product is the tensor product over semimodules and the biproduct is the Cartesian product. This category is symmetric monoidal closed, and the map $\mono{a}:\scalars \to \scalars$ is defined as $b \mapsto a \cdot_\scalars b$ for every $a \in \scalars$. The first model for the $\mathcal L^\scalars$-calculus has been defined in this category \cite{DiazcaroMalherbe24v2}.

  \begin{itemize}
    \item We can define a monoidal adjunction between $\sm$ and $\set$, which induces a monoidal comonad.
  
    Let $F:\set \to \sm$ be the functor defined by its action on $X$ and $f:X \to Y$ in $\set$:
    \begin{align*}
      FX &= \{\sum_{x \in X} s_x \cdot x \mid s_x \in \scalars\} &
      Ff(\sum_{x \in X} s_x \cdot x) &= \sum_{x \in X} s_x \cdot (f x)
      \end{align*}
  
    where $\sum$ are formal finite sums.

    Let $\phi_{XY}:FX \otimes FY \to F(X \times Y)$ and $\phi_\scalars:\scalars \to F \singleton$ be the following natural transformations:
    \begin{align*}
      \phi_{XY} \left(\sum_k(\sum_i s_{ik} \cdot x_{ik}) \otimes (\sum_j s'_{jk} \cdot y_{jk})\right) &= \sum_{ijk} (s_{ik} s'_{jk}) \cdot (x_{ik}, y_{jk}) &
      \phi_\scalars(s) &= s \cdot \ast
    \end{align*}

    With these definitions, $(F,\phi)$ is a strong monoidal functor.

    Let $G:\sm \to \set$ be the forgetful functor, and let $\theta_{MN}:GM \times GN \to G(M \otimes N)$ and $\theta_\singleton:\singleton \to G\scalars$ be the following natural transformations:
    \begin{align*}
      \theta_{MN}(x, y) &= x \otimes y &
      \theta_\singleton(\ast) &= 1_\scalars
    \end{align*}

    With these definitions, $(G,\theta)$ is a monoidal functor. $(F,\phi) \dashv (G,\theta)$ is a monoidal adjunction, where the counit of adjuction $\varepsilon:FG \Rightarrow Id$ and the unit of adjunction $\eta: Id \Rightarrow GF$ are given by:
    \begin{align*}
      \varepsilon_M \left(\sum_{x \in GM} s_x \cdot x\right) &= \sum_{x \in GM} s_x \bullet_M x &
      \eta_X(x) &= 1_\scalars \cdot x
    \end{align*}

    Let $m_{MN}:FGM \otimes FGN \to FG(M \otimes N)$ and $m_\scalars:\scalars \to FG \scalars$ be the following natural transformations:
    \begin{align*}
      m_{MN} &= F \theta_{MN} \circ \phi_{GM\ GN} &
      m_\scalars &= F \theta_\singleton \circ \phi_\scalars
    \end{align*}

    With these definitions, $(FG,m)$ is a monoidal functor, and the monoidal adjunction $((F,\phi),(G,\theta),\eta,\varepsilon)$ induces the monoidal comonad $(FG,\varepsilon:FG \Rightarrow Id,\delta:FG \Rightarrow FGFG,m)$, where $\delta = F \eta G$.

    \item  For every $M$ in $\sm$, $(FGM, d_M:FGM \to FGM \otimes FGM, e_M:FGM \to \scalars)$ is a commutative comonoid, where $d_M$ and $e_M$ are monoidal natural transformations defined as:
    \begin{align*}
      d_M &= \phi^{-1} \circ F \Delta &
      e_M &= \phi^{-1} \circ F \oc
    \end{align*}

    $d_M:(FGM,\delta_M) \to (FGM \otimes FGM, m_{FGM\ FGM} \circ (\delta_M \otimes \delta_M))$ and $e_M:(FGM, \delta_M) \to (\scalars,m_\scalars)$ are coalgebra morphisms.
    
    \item In $\sm$, when $f:(FGA,\delta_A) \to (FGB,\delta_B)$ is a coalgebra morphism, it is also a comonoid morphism.
  \end{itemize}

  Therefore, $\sm$ is a linear category.
\end{example}

\subsubsection{Distribution and scalar maps}
In this section, we define the distribution maps $dist$ and $\gamma$ in
$\abscat$ (Definition~\ref{def:distmaps})
and prove that they are natural transformations
(Lemma~\ref{lem:distmaps}). We also introduce the scalar map
$\hat{a}_A$
(Lemma~\ref{lem:hatanat}),
and establish several useful properties it satisfies with respect to the unit
object, the tensor product, the biproduct, and the hom-functor
(Lemma~\ref{lem:hataprop}). 

Corollaries~\ref{cor:nablasemiadditive} and~\ref{cor:deltasemiadditive} establish properties of the distribution maps with respect to $\nabla$ and $\Delta$, respectively. These results follow from Lemmas~\ref{lem:nablasemiadditive} and~\ref{lem:deltasemiadditive}.
Finally, Lemma~\ref{lem:sigmanabladelta} states the compatibility of the monoidal symmetry induced by the biproduct with $\nabla$ and $\Delta$.

\begin{definition}[Distribution maps]\label{def:distmaps} 
	In the category $\abscat$ we define the isomorphisms $dist: (A \oplus
	B) \otimes C \to (A \otimes C) \oplus (B \otimes C)$ and $\gamma :[A
	\to B \oplus C] \to [A \to B] \oplus [A \to C]$ as follows:\footnote{The isomorphism $\gamma$ may seem unusual because it does not correspond to a theorem of Linear Logic. Although these maps are not required to be theorems of Linear Logic, note that in Section~\ref{sec:interpretation} we define the interpretation of both connectives $\oplus$ and $\with$ as the categorical biproduct. Under this interpretation, the sequents $(A \multimap B \with C) \vdash (A \multimap B) \oplus (A \multimap C)$ and $(A \multimap B \with C) \vdash (A \multimap B) \with (A \multimap C)$ are theorems of Linear Logic, which are interpreted as morphisms of the same type as $\gamma$.}
	\begin{align*}
	  dist &= \langle \pi_1 \otimes id, \pi_2 \otimes id\rangle  &\qquad\qquad  \gamma &= \langle[id \to \pi_1], [id \to \pi_2]\rangle\\
	  dist^{-1} &= [i_1 \otimes id, i_2 \otimes id]  &  \gamma^{-1} &= [[id \to i_1], [id \to i_2]]
	\end{align*}
\end{definition}

\begin{lemma}\label{lem:distmaps}
	$dist$ and $\gamma$ are natural transformations.
\end{lemma}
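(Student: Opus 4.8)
The plan is to verify each naturality square by exploiting the fact that both $dist$ and $\gamma$ are defined through the pairing $\langle\cdot,\cdot\rangle$ of the biproduct viewed as a \emph{categorical product}. Recall that a biproduct $X \oplus Y$ is in particular a product, so two parallel morphisms with codomain $X \oplus Y$ are equal if and only if post-composing each of them with the projections $\pi_1$ and $\pi_2$ yields equal morphisms. Since $dist$ and $\gamma$ are transformations of functors depending on three object-variables $A$, $B$, $C$, I would first note that it suffices to check naturality separately in each variable: naturality in all three simultaneously then follows by pasting the three squares, using the bifunctoriality of $\otimes$ and the bifunctoriality of the internal hom $[-\to-]$. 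This reduces the whole lemma to a handful of one-variable diagram chases, each dispatched by the product universal property.

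For $dist$, fix $f\colon A \to A'$, $g\colon B \to B'$, $h\colon C \to C'$. The naturality square to check has left edge $(f \oplus g)\otimes h$ and right edge $(f\otimes h)\oplus(g\otimes h)$. I would post-compose both composites with $\pi_i$ and reduce using three routine identities: the pairing law $\pi_i \circ \langle k_1,k_2\rangle = k_i$; the bifunctoriality law $(p\otimes q)\circ(p'\otimes q') = (p\circ p')\otimes(q\circ q')$; and the projection law for $\oplus$ on morphisms, $\pi_i \circ (u_1 \oplus u_2) = u_i \circ \pi_i$. Tracing the square through $\pi_i$, both paths collapse to $(f_i \otimes h)\circ(\pi_i \otimes \mathit{id})$, which establishes commutativity. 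The identity $dist^{-1} = [i_1\otimes\mathit{id}, i_2\otimes\mathit{id}]$ is then seen to be inverse by the dual (cotupling) computation, although for naturality only the equation above is needed.

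For $\gamma$ the argument is structurally identical, with the functor $[-\to-]$ playing the role of $\otimes$. The ingredients become the pairing law again, the functoriality laws $[\mathit{id}\to k]\circ[\mathit{id}\to k'] = [\mathit{id}\to (k\circ k')]$ in the covariant argument and $[k\to\mathit{id}]\circ[k'\to\mathit{id}] = [(k'\circ k)\to\mathit{id}]$ in the contravariant one, and the projection law for $\oplus$. Naturality in $B$ and $C$ is then immediate: post-composing $\gamma$ with $\pi_i$ reduces it to $[\mathit{id}\to\pi_i]$, and both paths of each square agree after applying the covariant functoriality law. As a conceptual sanity check one may observe that $[A\to-]$ is right adjoint to $(-)\otimes A$, hence preserves the biproduct as a product, and $\gamma$ is precisely the canonical comparison isomorphism witnessing this preservation; its naturality is then the standard naturality of such comparison maps. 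I would present the hands-on chase as the main argument and invoke the adjunction only as corroboration.

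The one genuinely delicate point, and the step I expect to be the main obstacle, is naturality of $\gamma$ in the \emph{contravariant} variable $A$ of the internal hom. Here the vertical edges involve the reindexing map $[f\to\mathit{id}]$, whose variance is opposite to that of the other two variables, so I must check carefully that $[f\to\mathit{id}]$ commutes past $[\mathit{id}\to\pi_i]$ in the right order, i.e.\ that the covariant and contravariant actions of $[-\to-]$ are independent, which is exactly the bifunctoriality of the internal hom. Once this interchange law is invoked, the $\pi_i$-reduction closes the square just as in the covariant cases; every other obligation is a routine consequence of the product universal property and bifunctoriality.
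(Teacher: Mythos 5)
Your proof is correct, but it takes a genuinely different route from the paper's. The paper disposes of this lemma in one line: it observes that $- \otimes A$ and $[A \to -]$ are semiadditive functors and identifies $dist$ and $\gamma$ with the canonical comparison isomorphisms $\langle F(\pi_1), F(\pi_2)\rangle$ of Lemma~\ref{lem:semiadditiveiso}, whose naturality is part of that cited result. You instead verify the naturality squares by hand, post-composing with the projections and using only the product universal property of the biproduct together with bifunctoriality of $\otimes$ and of the internal hom; your computations (both paths through $\pi_i$ collapsing to $(f_i \otimes h)\circ(\pi_i \otimes id)$ for $dist$, and the interchange $[f \to id]\circ[id \to \pi_i] = [id \to \pi_i]\circ[f \to id] = [f \to \pi_i]$ for $\gamma$) are all correct. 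Your approach makes visible two points the paper leaves implicit: first, naturality requires no semiadditivity at all — only the product structure of $\oplus$; semiadditivity (or the biproduct equations) is what makes $dist$ and $\gamma$ \emph{invertible}, which the lemma does not even assert. Second, you establish naturality in all three variables, including the tensor factor $C$ of $dist$ and the contravariant hom variable $A$ of $\gamma$, whereas Lemma~\ref{lem:semiadditiveiso}, read literally, gives naturality only in the two biproduct arguments for a fixed functor $F$; naturality in the remaining variable, which the paper does use later (e.g.\ in the substitution lemmas), needs the small additional observation that the comparison maps also commute with natural transformations $F \Rightarrow G$ — an observation your variable-by-variable chase supplies directly. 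What the paper's route buys is brevity and uniformity with its semiadditive-functor toolkit (also behind Corollaries~\ref{cor:nablasemiadditive} and~\ref{cor:deltasemiadditive}); what yours buys is self-containment and a strictly stronger naturality statement.
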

\begin{proof}
	$- \otimes A$ and $[A \to -]$ are semiadditive functors, and $dist$ and $\gamma$ are the natural isomorphisms corresponding to Lemma~\ref{lem:semiadditiveiso}.
\end{proof}

\begin{lemma}[Scalar map \citeintitle{Lemma 3.11}{DiazcaroMalherbe24}]\label{lem:hatanat}
	Let $a: I \to I$ in $\abscat$. The map
    $\hat{a}_A: A \to A$ where $\hat a_A = \rho_A \circ (id \otimes a) \circ \rho_A^{-1}$ defines a natural transformation between the identity functors.
    \qed
\end{lemma}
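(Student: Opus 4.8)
The plan is to verify directly that $\hat a$ satisfies the naturality condition: for every morphism $f : A \to B$ in $\abscat$, I would show that $f \circ \hat a_A = \hat a_B \circ f$, which is precisely the assertion that $\hat a$ is a natural transformation from the identity functor to itself. The only structural ingredients needed are the naturality of the right unitor $\rho$ and the bifunctoriality (interchange law) of $\otimes$, both of which come for free from the symmetric monoidal structure of $\abscat$. The whole argument is then a short chase reducing both composites to a common normal form.

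Concretely, I would start from the left-hand side, expanding by definition to $f \circ \hat a_A = f \circ \rho_A \circ (id \otimes a) \circ \rho_A^{-1}$, and apply the naturality of $\rho$ in the form $f \circ \rho_A = \rho_B \circ (f \otimes id_I)$ to transport $f$ across the unitor. This yields $\rho_B \circ (f \otimes id_I) \circ (id_A \otimes a) \circ \rho_A^{-1}$, and by bifunctoriality of $\otimes$ the middle composite $(f \otimes id_I) \circ (id_A \otimes a)$ collapses to $f \otimes a$. Hence the left-hand side reduces to $\rho_B \circ (f \otimes a) \circ \rho_A^{-1}$.

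For the right-hand side I would expand $\hat a_B \circ f = \rho_B \circ (id \otimes a) \circ \rho_B^{-1} \circ f$ and apply naturality of $\rho$ in its inverse form $\rho_B^{-1} \circ f = (f \otimes id_I) \circ \rho_A^{-1}$, obtaining $\rho_B \circ (id_B \otimes a) \circ (f \otimes id_I) \circ \rho_A^{-1}$. Again by bifunctoriality, $(id_B \otimes a) \circ (f \otimes id_I) = f \otimes a$, so the right-hand side also reduces to $\rho_B \circ (f \otimes a) \circ \rho_A^{-1}$. The two normal forms coincide, which establishes the naturality square and hence the lemma.

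Since the argument is a short diagram chase, I do not anticipate a genuine obstacle; the only point requiring care is recording the correct direction of the naturality square for $\rho$ (and therefore for $\rho^{-1}$), so that $f$ is pushed across the unitor on the appropriate side in each of the two computations. Everything else is purely functoriality of the tensor, and notably no further hypothesis of the linear-category structure — neither the biproduct, nor the comonad, nor the scalar monomorphism $\mono{\cdot}$ — is needed for this particular statement.
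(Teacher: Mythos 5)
Your proof is correct. Note that the paper itself does not prove this lemma: it is imported verbatim from \cite[Lemma 3.11]{DiazcaroMalherbe24} (hence the \qed\ in the statement), so there is no in-paper argument to compare against. Your direct diagram chase is exactly the expected one: both composites $f \circ \hat a_A$ and $\hat a_B \circ f$ normalise to $\rho_B \circ (f \otimes a) \circ \rho_A^{-1}$ using only naturality of $\rho$ (and of $\rho^{-1}$) together with the interchange law $(g \otimes h) \circ (g' \otimes h') = (g \circ g') \otimes (h \circ h')$, and your observation that neither the biproduct, nor the comonad, nor the monomorphism $\mono{\cdot}$ plays any role is also accurate --- the statement holds in any monoidal category.
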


\begin{lemma}[Properties of the scalar map \citeintitle{Lemmas 3.12 and 3.14}{DiazcaroMalherbe24}]~ \label{lem:hataprop}
  \label{lem:ahathom}
	\begin{enumerate}
		\item $\hat a_I = a$
		\item $\hat a_{A \otimes B} = \hat a_A \otimes id_B = id_A \otimes \hat a_B$
		\item $\hat a_{A \oplus B} = \hat a_A \oplus \hat a_B$
		\item $\hat{a}_{[A \to B]} = [A \to \hat a_B]$
		  \qed
	\end{enumerate}
\end{lemma}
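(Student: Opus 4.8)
The plan is to treat all four equations as consequences of the single fact that $\hat a$ is a natural transformation of the identity functor (Lemma~\ref{lem:hatanat}), combined with the coherence isomorphisms of the symmetric monoidal closed structure and the universal property of the biproduct. Throughout I write $\mathrm{ev}_{A,B}:[A\to B]\otimes A\to B$ for the evaluation (the counit of the closed adjunction). Item~1 falls out immediately: using the coherence identity $\rho_I=\lambda_I$ and naturality of the left unitor applied to $a:I\to I$, namely $\lambda_I\circ(id_I\otimes a)=a\circ\lambda_I$, I would compute $\hat a_I=\rho_I\circ(id_I\otimes a)\circ\rho_I^{-1}=\lambda_I\circ(id_I\otimes a)\circ\lambda_I^{-1}=a$.

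Item~2 is the technical heart. I would first prove $\hat a_{A\otimes B}=id_A\otimes\hat a_B$ by a diagram chase. From the triangle axiom one obtains the standard coherence identity $id_A\otimes\rho_B=\rho_{A\otimes B}\circ\alpha_{A,B,I}$, and naturality of the associator gives $\alpha_{A,B,I}\circ(id_A\otimes(id_B\otimes a))=(id_{A\otimes B}\otimes a)\circ\alpha_{A,B,I}$. Substituting the definition of $\hat a_B$, distributing $id_A\otimes(-)$ by functoriality of $\otimes$, and cancelling $\alpha_{A,B,I}\circ\alpha_{A,B,I}^{-1}$ rewrites $id_A\otimes\hat a_B$ as $\rho_{A\otimes B}\circ(id_{A\otimes B}\otimes a)\circ\rho_{A\otimes B}^{-1}=\hat a_{A\otimes B}$. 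For the remaining equality $\hat a_{A\otimes B}=\hat a_A\otimes id_B$ I would bring in the symmetry: naturality of $\sigma$ gives $\hat a_A\otimes id_B=\sigma_{B,A}\circ(id_B\otimes\hat a_A)\circ\sigma_{A,B}$, and since $id_B\otimes\hat a_A=\hat a_{B\otimes A}$ by the case just proved, naturality of $\hat a$ with respect to $\sigma_{A,B}$ together with $\sigma_{B,A}\circ\sigma_{A,B}=id$ yields $\hat a_A\otimes id_B=\hat a_{A\otimes B}$.

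Items~3 and~4 then follow quickly once the scalar map is known to slide through the tensor. For item~3 I would use the universal property of the biproduct and naturality of $\hat a$ along the injections: $\hat a_A\oplus\hat a_B=[i_1\circ\hat a_A,\,i_2\circ\hat a_B]=[\hat a_{A\oplus B}\circ i_1,\,\hat a_{A\oplus B}\circ i_2]=\hat a_{A\oplus B}\circ[i_1,i_2]=\hat a_{A\oplus B}$. For item~4 I would characterise $[A\to\hat a_B]$ by the defining equation $\mathrm{ev}_{A,B}\circ([A\to\hat a_B]\otimes id_A)=\hat a_B\circ\mathrm{ev}_{A,B}$ of the closed structure, and then verify that $\hat a_{[A\to B]}$ satisfies it: by item~2 we have $\hat a_{[A\to B]}\otimes id_A=\hat a_{[A\to B]\otimes A}$, so naturality of $\hat a$ along $\mathrm{ev}_{A,B}$ gives $\mathrm{ev}_{A,B}\circ(\hat a_{[A\to B]}\otimes id_A)=\hat a_B\circ\mathrm{ev}_{A,B}$; uniqueness of the transpose then forces $\hat a_{[A\to B]}=[A\to\hat a_B]$.

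The main obstacle is item~2: every other part reduces to a one-line naturality argument, but each of them depends on the scalar map sliding through $\otimes$, and establishing that sliding requires careful bookkeeping of the unitor and associator coherences, plus the observation that the symmetry $\sigma$ is what lets us move the scalar onto the left-hand factor. All coherence identities invoked are standard consequences of Mac Lane's coherence theorem, so no genuinely new categorical input is needed beyond Lemma~\ref{lem:hatanat} and the biproduct and closed structures of $\abscat$.
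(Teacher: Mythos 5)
Your proposal is correct, but there is nothing in the paper to compare it against: the paper does not prove this lemma at all. It is stated with a citation to Lemmas~3.12 and~3.14 of~\cite{DiazcaroMalherbe24} and closed with a \qed, i.e.\ the result is imported from the companion work on the $\mathcal{L}^{\mathcal S}$-calculus model rather than established in this paper. Your argument therefore supplies a self-contained derivation where the paper defers, and every step checks out: item~1 is the standard computation via $\rho_I=\lambda_I$ and naturality of $\lambda$; item~2 correctly reduces $id_A\otimes\hat a_B=\hat a_{A\otimes B}$ to the coherence identity $id_A\otimes\rho_B=\rho_{A\otimes B}\circ\alpha_{A,B,I}$ plus naturality of $\alpha$, and then obtains $\hat a_A\otimes id_B=\hat a_{A\otimes B}$ from naturality of $\sigma$, the involution $\sigma_{B,A}\circ\sigma_{A,B}=id$, and naturality of $\hat a$ (Lemma~\ref{lem:hatanat}); item~3 is exactly the biproduct argument $[i_1\circ\hat a_A,\,i_2\circ\hat a_B]=\hat a_{A\oplus B}\circ[i_1,i_2]$ via naturality of $\hat a$ along the injections; and item~4 correctly exploits uniqueness of the exponential transpose, with item~2 feeding the verification $\mathrm{ev}\circ(\hat a_{[A\to B]}\otimes id_A)=\hat a_B\circ\mathrm{ev}$. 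Your closing assessment is also accurate: the only inputs beyond Lemma~\ref{lem:hatanat} are Mac Lane coherence, the biproduct structure, and monoidal closure, all of which $\abscat$ has by Definition~\ref{def:abscat}, so the proof is sound as it stands.
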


\begin{lemma}\label{lem:nablasemiadditive}
	Let $F:\abscat \to \abscat$ be a semiadditive functor. Then,
	\(
	  \nabla \circ \langle F(\pi_1), F(\pi_2) \rangle = F(\nabla)
	\).
\end{lemma}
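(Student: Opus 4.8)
The plan is to reduce the identity to the biproduct relations satisfied by the injections $i_1,i_2$ and the codiagonal $\nabla=[id,id]$, invoking Lemma~\ref{lem:semiadditiveiso} to control the morphism $\langle F(\pi_1),F(\pi_2)\rangle$ appearing on the left-hand side.

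First I would observe that $\langle F(\pi_1),F(\pi_2)\rangle$ is precisely the natural isomorphism $f:F(A\oplus A)\to F(A)\oplus F(A)$ provided by Lemma~\ref{lem:semiadditiveiso} (specialised to both summands equal to $A$), whose inverse is $f^{-1}=[F(i_1),F(i_2)]$. Because $f$ is invertible, proving $\nabla\circ f=F(\nabla)$ is equivalent to proving $F(\nabla)\circ f^{-1}=\nabla$, and the latter is a statement about a morphism out of the coproduct $F(A)\oplus F(A)$, which can be checked componentwise.

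Then I would compute directly using the universal property of the coproduct, i.e.\ the rule $g\circ[h_1,h_2]=[g\circ h_1,g\circ h_2]$:
\[
  F(\nabla)\circ f^{-1}=F(\nabla)\circ[F(i_1),F(i_2)]=[F(\nabla)\circ F(i_1),\,F(\nabla)\circ F(i_2)]=[F(\nabla\circ i_1),\,F(\nabla\circ i_2)].
\]
Since $\nabla\circ i_1=\nabla\circ i_2=id_A$ by the defining property of the codiagonal, and $F$ preserves identities, this equals $[id_{F(A)},id_{F(A)}]=\nabla$. Post-composing with $f$ recovers the claim.

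There is essentially no serious obstacle here: the only point that needs care is that the use of semiadditivity of $F$ is entirely encapsulated in Lemma~\ref{lem:semiadditiveiso}, which is what guarantees both that $\langle F(\pi_1),F(\pi_2)\rangle$ is an isomorphism and that its inverse has the explicit form $[F(i_1),F(i_2)]$; the remaining ingredients (functoriality $F(g)\circ F(h)=F(g\circ h)$, preservation of identities, and the relation $\nabla\circ i_k=id$) are immediate. An alternative, equally short route is to verify $\nabla\circ f=F(\nabla)$ after precomposition with $F(i_1)$ and $F(i_2)$, which are jointly epic (being $f^{-1}$ composed with the coproduct injections), using $\pi_j\circ i_k=\delta_{jk}$ together with the fact that a semiadditive $F$ preserves zero morphisms.
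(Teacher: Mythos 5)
Your proposal is correct and follows essentially the same route as the paper's proof: both establish $F(\nabla)\circ[F(i_1),F(i_2)]=\nabla$ by a componentwise computation using the coproduct's universal property, functoriality, and $\nabla\circ i_k=id$, and then transfer this to the stated identity via the isomorphism pair $\langle F(\pi_1),F(\pi_2)\rangle$, $[F(i_1),F(i_2)]$ from Lemma~\ref{lem:semiadditiveiso}.
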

\begin{proof}
	We show that $F(\nabla) \circ [F(i_1),F(i_2)] = \nabla$:
  \begin{align*}
    F(\nabla) \circ [F(i_1),F(i_2)] 
    &= F([id,id]) \circ [F(i_1),F(i_2)]\\ 
    &= [F([id,id]) \circ F(i_1), F([id,id]) \circ F(i_2)]\\
    &= [F([id,id] \circ i_1), F([id,id] \circ i_2)]\\
    &= [F(id), F(id)]\\
    &= [id, id]\\
    &= \nabla
  \end{align*}

  Therefore, $F(\nabla) \circ [F(i_1),F(i_2)] \circ \langle F(\pi_1), F(\pi_2) \rangle = \nabla \circ \langle F(\pi_1), F(\pi_2) \rangle$, and by the isomorphisms from Lemma~\ref{lem:semiadditiveiso}, $[F(i_1),F(i_2)] \circ \langle F(\pi_1), F(\pi_2) \rangle = id$.
\end{proof}

\begin{corollary}\label{cor:nablasemiadditive}~ 
	\begin{enumerate}
		\item $\nabla \circ dist = \nabla \otimes id$
		\item $\nabla \circ \gamma = [A \to \nabla]$
	\end{enumerate}
\end{corollary}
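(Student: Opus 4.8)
The plan is to read off both identities as direct instances of Lemma~\ref{lem:nablasemiadditive}, applied to the two semiadditive functors already identified in the proof of Lemma~\ref{lem:distmaps}, namely $-\otimes C$ and $[A\to -]$. The only conceptual point is to recognise that $dist$ and $\gamma$ are precisely the canonical isomorphisms $\langle F(\pi_1), F(\pi_2)\rangle$ of Lemma~\ref{lem:semiadditiveiso} for these respective functors; once that identification is made, each item is immediate.

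For item~1 I would instantiate $F = -\otimes C$, which is semiadditive. Then $F(\pi_1) = \pi_1 \otimes id$ and $F(\pi_2) = \pi_2 \otimes id$, so by Definition~\ref{def:distmaps} the composite $\langle F(\pi_1), F(\pi_2)\rangle$ is exactly $dist$. Since $F(\nabla) = \nabla \otimes id$, Lemma~\ref{lem:nablasemiadditive} gives
\[
  \nabla \circ dist = \nabla \circ \langle F(\pi_1), F(\pi_2)\rangle = F(\nabla) = \nabla \otimes id,
\]
which is the first claim.

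For item~2 I would instead take $F = [A\to -]$, also semiadditive. Here $F(\pi_1) = [A\to \pi_1] = [id\to \pi_1]$ and $F(\pi_2) = [A\to \pi_2] = [id\to \pi_2]$, so $\langle F(\pi_1), F(\pi_2)\rangle = \gamma$ by Definition~\ref{def:distmaps}. As $F(\nabla) = [A\to \nabla]$, Lemma~\ref{lem:nablasemiadditive} yields
\[
  \nabla \circ \gamma = \nabla \circ \langle F(\pi_1), F(\pi_2)\rangle = F(\nabla) = [A\to \nabla],
\]
which is the second claim.

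There is no genuine computational obstacle here: the corollary is a specialisation of the preceding lemma. The only thing to be careful about is matching notation, namely that the components $[id\to \pi_i]$ appearing in the definition of $\gamma$ are exactly the images $[A\to \pi_i]$ of the projections under the hom-functor $[A\to -]$, and that both $-\otimes C$ and $[A\to -]$ are semiadditive (which was already invoked in the proof of Lemma~\ref{lem:distmaps}). Given these identifications, both equalities follow at once.
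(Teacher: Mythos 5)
Your proposal is correct and follows exactly the same route as the paper: the paper's proof likewise reads both identities off Lemma~\ref{lem:nablasemiadditive} applied to the semiadditive functors $-\otimes A$ and $[A\to -]$, with $dist$ and $\gamma$ being the maps $\langle F(\pi_1),F(\pi_2)\rangle$ of Lemma~\ref{lem:semiadditiveiso}. Your write-up merely makes explicit the identifications that the paper leaves implicit.
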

\begin{proof}
	$- \otimes A$ and $[A \to -]$ are semiadditive functors, we conclude by Lemma~\ref{lem:nablasemiadditive}.
\end{proof}

\begin{lemma}[\hspace{-0.1mm}\citeintitle{Lemma 3.19}{DiazcaroMalherbe24}]\label{lem:deltasemiadditive}
	Let $F: \abscat \to \abscat$ be a semiadditive functor. Then,
	\(
	  [F(i_1), F(i_2)] \circ \Delta = F(\Delta)
	\).
	\qed
\end{lemma}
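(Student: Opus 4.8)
The plan is to prove this identity by dualising the argument of Lemma~\ref{lem:nablasemiadditive}. In that lemma one first establishes the auxiliary equation $F(\nabla) \circ [F(i_1), F(i_2)] = \nabla$ and then cancels the isomorphism of Lemma~\ref{lem:semiadditiveiso}. Here products and coproducts exchange roles, so the natural auxiliary equation to target is $\langle F(\pi_1), F(\pi_2)\rangle \circ F(\Delta) = \Delta$, after which the same cancellation finishes the proof. Throughout I would instantiate Lemma~\ref{lem:semiadditiveiso} at $B = A$, so that $\langle F(\pi_1), F(\pi_2)\rangle$ and $[F(i_1), F(i_2)]$ are mutually inverse isomorphisms between $F(A) \oplus F(A)$ and $F(A \oplus A)$.

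First I would prove the auxiliary equation. Writing $\Delta = \langle id, id\rangle$ and using the universal property of the product together with the functoriality of $F$, the computation is
\begin{align*}
  \langle F(\pi_1), F(\pi_2)\rangle \circ F(\Delta)
  &= \langle F(\pi_1) \circ F(\Delta), F(\pi_2) \circ F(\Delta)\rangle\\
  &= \langle F(\pi_1 \circ \Delta), F(\pi_2 \circ \Delta)\rangle\\
  &= \langle F(id), F(id)\rangle = \langle id, id\rangle = \Delta,
\end{align*}
where the third equality uses the biproduct identities $\pi_i \circ \Delta = id$ and that $F$ preserves identities. Notice that, unlike in Lemma~\ref{lem:nablasemiadditive}, this step uses only functoriality and the product structure, not the semiadditivity of $F$.

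Then I would compose on the left with $[F(i_1), F(i_2)]$ and invoke Lemma~\ref{lem:semiadditiveiso}. Concretely, from the auxiliary equation one obtains $[F(i_1), F(i_2)] \circ \langle F(\pi_1), F(\pi_2)\rangle \circ F(\Delta) = [F(i_1), F(i_2)] \circ \Delta$; since $[F(i_1), F(i_2)] \circ \langle F(\pi_1), F(\pi_2)\rangle = id_{F(A \oplus A)}$ by Lemma~\ref{lem:semiadditiveiso}, the left-hand side collapses to $F(\Delta)$, yielding $F(\Delta) = [F(i_1), F(i_2)] \circ \Delta$, which is the claim. I do not expect any genuine obstacle here: the argument is a routine dualisation, and the only points demanding care are keeping the directions of the maps straight (so that the types in $F(\Delta): F(A) \to F(A \oplus A)$ match) and applying the correct half of the isomorphism of Lemma~\ref{lem:semiadditiveiso} when cancelling.
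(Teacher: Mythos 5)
Your proof is correct and is essentially the intended argument: the paper states this lemma without an internal proof, deferring to \cite[Lemma 3.19]{DiazcaroMalherbe24}, and your argument is the exact dual of the paper's proof of Lemma~\ref{lem:nablasemiadditive}, with the same two-step structure --- an auxiliary identity ($\langle F(\pi_1), F(\pi_2)\rangle \circ F(\Delta) = \Delta$) established by functoriality and the universal property, followed by cancellation of the mutually inverse maps from Lemma~\ref{lem:semiadditiveiso}. The one inaccuracy is your aside claiming that, ``unlike'' the nabla case, your auxiliary step avoids the semiadditivity of $F$: the corresponding step in the paper's proof of Lemma~\ref{lem:nablasemiadditive} likewise uses only functoriality and the coproduct structure, and in both proofs the semiadditivity of $F$ enters solely through Lemma~\ref{lem:semiadditiveiso}; this misstatement does not affect the correctness of your proof.
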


\begin{corollary}\label{cor:deltasemiadditive}~ 
	\begin{enumerate}
		\item $dist^{-1} \circ \Delta = \Delta \otimes id$
		\item $\gamma^{-1} \circ \Delta = [A \to \Delta]$
	\end{enumerate}
\end{corollary}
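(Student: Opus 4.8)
The plan is to observe that both identities are direct instances of Lemma~\ref{lem:deltasemiadditive}, applied to the two semiadditive functors furnished by Lemma~\ref{lem:distmaps}. This exactly mirrors the proof of Corollary~\ref{cor:nablasemiadditive}, which instead instantiated Lemma~\ref{lem:nablasemiadditive}; here we simply use the $\Delta$-version in place of the $\nabla$-version.

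For the first identity, I would take $F = {-} \otimes C$, which is semiadditive by Lemma~\ref{lem:distmaps}. Computing its action on the biproduct injections and the diagonal gives $F(i_1) = i_1 \otimes id$, $F(i_2) = i_2 \otimes id$, and $F(\Delta) = \Delta \otimes id$. By Definition~\ref{def:distmaps}, $[F(i_1), F(i_2)] = [i_1 \otimes id, i_2 \otimes id] = dist^{-1}$. Lemma~\ref{lem:deltasemiadditive} then yields $[F(i_1), F(i_2)] \circ \Delta = F(\Delta)$, which is precisely $dist^{-1} \circ \Delta = \Delta \otimes id$.

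For the second identity, I would repeat the argument with $F = [A \to {-}]$, again semiadditive by Lemma~\ref{lem:distmaps}. Here $F(i_1) = [id \to i_1]$, $F(i_2) = [id \to i_2]$, and $F(\Delta) = [A \to \Delta]$, so that $[F(i_1), F(i_2)] = [[id \to i_1], [id \to i_2]] = \gamma^{-1}$ by Definition~\ref{def:distmaps}; Lemma~\ref{lem:deltasemiadditive} then gives $\gamma^{-1} \circ \Delta = [A \to \Delta]$.

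The only point requiring care — and the nearest thing to an obstacle — is matching the functorial images $F(i_k)$ and $F(\Delta)$ against the explicit formulas defining $dist^{-1}$ and $\gamma^{-1}$ in Definition~\ref{def:distmaps}. Once these coincide, both statements follow immediately from Lemma~\ref{lem:deltasemiadditive} with no further computation, so the proof reduces to a single sentence naming the two functors and citing that lemma.
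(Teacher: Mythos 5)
Your proposal is correct and follows essentially the same route as the paper: instantiate Lemma~\ref{lem:deltasemiadditive} with the semiadditive functors $- \otimes A$ and $[A \to -]$, matching $[F(i_1),F(i_2)]$ against the explicit formulas for $dist^{-1}$ and $\gamma^{-1}$ in Definition~\ref{def:distmaps}. The paper's proof is just a one-line version of this, leaving the functorial computations implicit.
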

\begin{proof}
	$- \otimes A$ and $[A \to -]$ are semiadditive functors, we conclude by Lemma~\ref{lem:deltasemiadditive}.
\end{proof}

\begin{lemma}[\hspace{-0.1mm}\citeintitle{Lemma 3.21}{DiazcaroMalherbe24}]\label{lem:sigmanabladelta}~ 
	\begin{enumerate}
		\item $(\nabla \oplus \nabla) \circ (id \oplus \sigma' \oplus id) = \nabla$
		\item $(id \oplus \sigma' \oplus id) \circ (\Delta \oplus \Delta) = \Delta$
	\end{enumerate}
	Here $\sigma'$ is the monoidal symmetry induced by the biproduct.
	\qed
\end{lemma}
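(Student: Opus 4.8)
The plan is to work inside the canonical semiadditive structure that any category with biproducts carries (Theorem~\ref{thm:semiadditivestructure} and Corollary~\ref{cor:semiringstructure}), where every morphism between finite biproducts of copies of $A$ is determined by its matrix of components $\pi_i \circ f \circ \iota_j \in Hom(A,A)$, composition is matrix multiplication, and $\oplus$ of morphisms is the block-diagonal (direct sum) of matrices. In these coordinates the associativity coherence isomorphisms hidden in the expression $id \oplus \sigma' \oplus id$ become identity matrices, so the reshuffling map is simply the $4 \times 4$ permutation that swaps the two middle copies. I first record the matrices of the building blocks: $\nabla = [id,id]$ is the row $(id\ \ id)$, $\Delta = \langle id,id\rangle$ is the column $(id\ \ id)^{\mathsf T}$, and the biproduct symmetry $\sigma'$ on $A \oplus A$ is the swap $\left(\begin{smallmatrix} 0 & id \\ id & 0\end{smallmatrix}\right)$.

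For item (1), both sides are morphisms $(A \oplus A)\oplus(A\oplus A) \to A\oplus A$. Since the domain is a (bi)product, hence a coproduct, it suffices to check equality after precomposing with each of the four injections $\iota_1,\dots,\iota_4 : A \to (A\oplus A)\oplus(A\oplus A)$. The middle symmetry sends $\iota_1\mapsto\iota_1$, $\iota_2\mapsto\iota_3$, $\iota_3\mapsto\iota_2$, $\iota_4\mapsto\iota_4$, and then $\nabla\oplus\nabla$ collapses each block using $\nabla\circ i_k = id$; tracking the resulting codomain injection yields $i_1, i_2, i_1, i_2$ respectively. This is exactly the effect of $\nabla_{A\oplus A} = [id_{A\oplus A}, id_{A\oplus A}]$ on the same injections, which proves the identity. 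Equivalently, one multiplies the $2\times 4$ matrix of $\nabla\oplus\nabla$ by the $4\times 4$ swap matrix and reads off the $2\times 4$ matrix of $\nabla_{A\oplus A}$.

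Item (2) is the formal dual of item (1): both sides are morphisms $A\oplus A \to (A\oplus A)\oplus(A\oplus A)$ into a product, so one checks equality after postcomposing with the four projections $\pi_1,\dots,\pi_4$, using $\pi_k\circ\Delta = id$ together with the same middle transposition. In matrix terms it is the transpose of the computation in item (1). I would therefore present only one of the two items in full and obtain the other by duality, since reversing arrows turns $\nabla$ into $\Delta$, injections into projections, and leaves $\sigma'$ a transposition.

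The only genuinely delicate point is the bookkeeping of the associativity isomorphisms implicit in $id\oplus\sigma'\oplus id$ and in the two different uses of $\oplus$ (the outer grouping giving $\nabla\oplus\nabla$ versus the flat four-fold biproduct on which $\sigma'$ acts). The matrix calculus is precisely what makes this harmless, because all coherence isomorphisms of the biproduct are identities in component coordinates; once one commits to that framework, each item reduces to a single small matrix multiplication and the verification is routine.
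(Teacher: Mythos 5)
Your proof is correct, but it cannot coincide with ``the paper's proof'' because the paper does not contain one: this lemma is imported by citation from the companion draft \cite[Lemma 3.21]{DiazcaroMalherbe24}, which is why the statement is closed with a qed symbol rather than followed by an argument. Your self-contained verification is the natural one and is sound: both identities are morphisms into/out of a biproduct, so item (1) is checked by precomposing with the four injections and item (2), dually, by postcomposing with the four projections; the middle symmetry $\sigma'$ transposes the two middle indices, coherence isomorphisms have identity matrices in component coordinates, and the resulting injection/projection chase ($\iota_1,\iota_2,\iota_3,\iota_4 \mapsto i_1,i_2,i_1,i_2$, and its transpose) matches $\nabla$ and $\Delta$ exactly. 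One presentational caveat: you set up the matrix calculus for biproducts of copies of a \emph{single} object $A$, whereas at the point where the paper actually invokes the lemma (the case $t = \pair{u_1}{u_2} \plus \pair{v_1}{v_2}$ of the soundness theorem) the four summands are mixed, $\interpretation{B} \oplus \interpretation{C} \oplus \interpretation{B} \oplus \interpretation{C}$, and $\sigma'$ swaps a $\interpretation{C}$ with a $\interpretation{B}$. Nothing in your argument uses that the summands coincide --- the matrix entries simply live in the appropriate hom-sets $Hom(X_j,Y_i)$ --- so you should state the calculus for arbitrary finite biproducts; with that cosmetic adjustment your proof applies verbatim to the lemma as it is used in the paper.
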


\subsection{Interpretation of the calculus}\label{sec:interpretation}
In this section, we define the interpretation of the $\OC$-calculus in $\abscat$. We begin by introducing the interpretation of types (Definition~\ref{def:interpretation}) and typing contexts (Definition~\ref{def:interpretationcontext}). We then define the interpretation of the typing rules (Definition~\ref{def:ruleinterpretation}), for which we first introduce a convenient notation for the duplication of banged contexts.

\begin{definition}[Interpretation of types]\label{def:interpretation}
	\begin{align*}
		\interpretation{\one} &= I
			& \interpretation{\zero} &= \mathbf 0\\
		\interpretation{A \multimap B} &= [\interpretation{A} \to \interpretation{B}]
			& \interpretation{A \with B} &= \interpretation{A} \oplus \interpretation{B}\\
		\interpretation{A \otimes B} &= \interpretation{A} \otimes \interpretation{B}
			& \interpretation{A \oplus B} &= \interpretation{A} \oplus \interpretation{B}\\
		\interpretation{\top} &= \mathbf 0
			& \interpretation{\bang A} &= \oc \interpretation{A}
	  \end{align*}
\end{definition}

\begin{notation}
  \begin{align*}
    \bang \varnothing &= \varnothing&
    \bang (x^A, \Gamma) &= x^{\bang A}, \bang \Gamma&
    \varnothing,\Gamma &= \Gamma, \varnothing = \Gamma&
  \end{align*}
\end{notation}

\begin{definition}[Interpretation of typing contexts]\label{def:interpretationcontext}
  \begin{align*}
		\interpretation{\varnothing} &= I&
		\interpretation{x^A,\Gamma} &= \interpretation{A} \otimes \interpretation{\Gamma}
  \end{align*}
\end{definition}

\begin{notation}[Duplication of banged contexts]
  $\overline{d}_{\Upsilon, \Gamma, \Delta} : \interpretation{\bang \Upsilon} \otimes \interpretation{\Gamma} \otimes \interpretation{\Delta} \to \interpretation{\bang \Upsilon} \otimes \interpretation{\Gamma} \otimes \interpretation{\bang \Upsilon} \otimes \interpretation{\Delta}$
  \begin{align*}
    &	\overline{d}_{\Upsilon, \Gamma, \Delta} =\\
    & \left\{
      \begin{array}{ll}
	\interpretation{\bang \Upsilon} \otimes \interpretation{\Gamma} \otimes \interpretation{\Delta} \xrightarrow{d_\Upsilon \otimes id} \interpretation{\bang \Upsilon} \otimes \interpretation{\bang \Upsilon} \otimes \interpretation{\Gamma} \otimes \interpretation{\Delta} \xrightarrow{id \otimes \sigma \otimes id} \interpretation{\bang \Upsilon} \otimes \interpretation{\Gamma} \otimes \interpretation{\bang \Upsilon} \otimes \interpretation{\Delta} & \Gamma \neq \varnothing, \Delta \neq \varnothing\\
	\interpretation{\bang \Upsilon} \otimes \interpretation{\Delta} \xrightarrow{d_\Upsilon \otimes id} \interpretation{\bang \Upsilon} \otimes \interpretation{\bang \Upsilon} \otimes \interpretation{\Delta} \xrightarrow{\rho^{-1} \otimes id} \interpretation{\bang \Upsilon} \otimes I \otimes \interpretation{\bang \Upsilon} \otimes \interpretation{\Delta} & \Gamma = \varnothing, \Delta \neq \varnothing\\
	\interpretation{\bang \Upsilon} \otimes \interpretation{\Gamma} \xrightarrow{d_\Upsilon \otimes id} \interpretation{\bang \Upsilon} \otimes \interpretation{\bang \Upsilon} \otimes \interpretation{\Gamma} \xrightarrow{id \otimes \sigma} \interpretation{\bang \Upsilon} \otimes \interpretation{\Gamma} \otimes \interpretation{\bang \Upsilon} \xrightarrow{id \otimes \rho^{-1}} \interpretation{\bang \Upsilon} \otimes \interpretation{\Gamma} \otimes \interpretation{\bang \Upsilon} \otimes I & \Gamma \neq \varnothing, \Delta = \varnothing
      \end{array}
    \right.
  \end{align*}
\end{notation}

\begin{definition}[Interpretation of typing rules]\label{def:ruleinterpretation}	~

    \noindent $\ruleinterpretation{\infer[{\mbox{\small lin-ax}}]{\Upsilon; x^A \vdash x:A}{}} = \interpretation{\bang \Upsilon} \otimes \interpretation{A} \xrightarrow{e_\Upsilon \otimes id} I \otimes \interpretation{A} \xrightarrow{\lambda} \interpretation{A}$
    \smallskip

    \noindent $\ruleinterpretation{\infer[{\mbox{\small ax}}]{\Upsilon, x^A;\varnothing \vdash x:A}{}} = \interpretation{\bang \Upsilon} \otimes \oc \interpretation{A} \otimes I \xrightarrow{\rho} \interpretation{\bang \Upsilon} \otimes \oc \interpretation{A} \xrightarrow{e_\Upsilon \otimes \varepsilon_A} I \otimes \interpretation{A} \xrightarrow{\lambda} \interpretation{A}$
    \smallskip

    \noindent $\ruleinterpretation{\infer[{\mbox{\small sum}}]{{\Upsilon;} \Gamma \vdash t \plus u:A}{{\Upsilon;} \Gamma \vdash t:A & \Upsilon; \Gamma \vdash u:A}} = \interpretation{\bang \Upsilon} \otimes \interpretation{\Gamma} \xrightarrow{\Delta} (\interpretation{\bang \Upsilon} \otimes \interpretation{\Gamma}) \oplus (\interpretation{\bang \Upsilon} \otimes \interpretation{\Gamma}) \xrightarrow{\interpretation{t} \oplus \interpretation{u}} \interpretation{A} \oplus \interpretation{A} \xrightarrow{\nabla} \interpretation{A}$
    \smallskip

    \noindent $\ruleinterpretation{\infer[{\mbox{\small prod}(a)}]{\Upsilon;\Gamma \vdash a \dotprod t:A}{{\Upsilon;} \Gamma \vdash t:A}} = \interpretation{\bang \Upsilon} \otimes \interpretation{\Gamma} \xrightarrow{\interpretation{t}} \interpretation{A} \xrightarrow{\widehat{\mono{a}}} \interpretation{A}$
    \smallskip

    \noindent $\ruleinterpretation{\infer[{\mbox{\small $\one_i$}(a)}] {{\Upsilon;\varnothing} \vdash a.\star:\one}{}} = \interpretation{\bang \Upsilon} \otimes I \xrightarrow{\rho} \interpretation{\bang \Upsilon} \xrightarrow{e_\Upsilon} I \xrightarrow{\mono{a}} I$
    \smallskip

    \noindent	$\ruleinterpretation{\infer[{\mbox{\small $\one_e$}}]{{\Upsilon;}\Gamma, \Delta \vdash \elimone(t,u):A}{{\Upsilon;}\Gamma \vdash t:\one &  {\Upsilon;}\Delta \vdash u:A}} = \interpretation{\bang \Upsilon} \otimes \interpretation{\Gamma} \otimes \interpretation{\Delta} \xrightarrow{\overline{d}_{\Upsilon,\Gamma,\Delta}} \interpretation{\bang \Upsilon} \otimes \interpretation{\Gamma} \otimes \interpretation{\bang \Upsilon} \otimes \interpretation{\Delta} \xrightarrow{\interpretation{t} \otimes \interpretation{u}} I \otimes \interpretation{A} \xrightarrow{\lambda} \interpretation{A}$
    \smallskip

    \noindent	$\ruleinterpretation{\infer[{\mbox{\small $\multimap_i$}}]{\Upsilon;\Gamma \vdash \lambda x^A.t:A \multimap B}{{\Upsilon;}\Gamma, x^A \vdash t:B}} = \interpretation{\bang \Upsilon} \otimes \interpretation{\Gamma} \xrightarrow{\eta_{\interpretation{A}}} [\interpretation{A} \to \interpretation{\bang \Upsilon} \otimes \interpretation{\Gamma} \otimes \interpretation{A}] \xrightarrow{[\interpretation{A} \to \interpretation{t}]} [\interpretation{A} \to \interpretation{B}]$
    \smallskip

    \noindent	$\ruleinterpretation{\infer[{\mbox{\small $\multimap_e$}}]{{\Upsilon;}\Gamma, \Delta \vdash t~u:B}{\Upsilon;\Gamma \vdash u:A & \Upsilon;\Delta \vdash t:A\multimap B}} =
      \begin{aligned}[t]
	&\interpretation{\bang \Upsilon} \otimes \interpretation{\Gamma} \otimes \interpretation{\Delta} \xrightarrow{\overline{d}_{\Upsilon,\Gamma,\Delta}} \interpretation{\bang \Upsilon} \otimes \interpretation{\Gamma} \otimes \interpretation{\bang \Upsilon} \otimes \interpretation{\Delta} \\
	&\xrightarrow{\interpretation{u} \otimes \interpretation{t}} \interpretation{A} \otimes [\interpretation{A} \to \interpretation{B}] \xrightarrow{\sigma} [\interpretation{A} \to \interpretation{B}] \otimes \interpretation{A} \xrightarrow{\varepsilon_{\interpretation{A}}} \interpretation{B}
      \end{aligned}$
    \smallskip

    \noindent $\ruleinterpretation{\infer[{\mbox{\small $\otimes_i$}}]{{\Upsilon;}\Gamma, \Delta \vdash t \otimes u:A \otimes B}{{\Upsilon;}\Gamma \vdash t:A & {\Upsilon;}\Delta \vdash u:B}} = \interpretation{\bang \Upsilon} \otimes \interpretation{\Gamma} \otimes \interpretation{\Delta} \xrightarrow{\overline{d}_{\Upsilon,\Gamma,\Delta}} \interpretation{\bang \Upsilon} \otimes \interpretation{\Gamma} \otimes \interpretation{\bang \Upsilon} \otimes \interpretation{\Delta} \xrightarrow{\interpretation{t} \otimes \interpretation{u}} \interpretation{A} \otimes \interpretation{B}$
    \smallskip

    \noindent $\ruleinterpretation{\infer[{\mbox{\small $\otimes_e$}}]{{\Upsilon;}\Gamma, \Delta \vdash \elimtens(t, x^A y^B.u):C}{{\Upsilon;}\Gamma \vdash t:A \otimes B & {\Upsilon;}\Delta, x^A, y^B \vdash u:C}} =
      \begin{aligned}[t]
	& \interpretation{\bang \Upsilon} \otimes \interpretation{\Gamma} \otimes \interpretation{\Delta} \xrightarrow{\overline{d}_{\Upsilon,\Gamma,\Delta}} \interpretation{\bang \Upsilon} \otimes \interpretation{\Gamma} \otimes \interpretation{\bang \Upsilon} \otimes \interpretation{\Delta}\\
	&\xrightarrow{\interpretation{t} \otimes id} \interpretation{A} \otimes \interpretation{B} \otimes \interpretation{\bang \Upsilon} \otimes \interpretation{\Delta} \xrightarrow{\sigma} \interpretation{\bang \Upsilon} \otimes \interpretation{\Delta} \otimes \interpretation{A} \otimes \interpretation{B} \\
	&\xrightarrow{\interpretation{u}} \interpretation{C}
      \end{aligned} $
    \smallskip

      \noindent $\ruleinterpretation{\infer[{\mbox{\small $\top_i$}}]{{\Upsilon;} \Gamma \vdash \topintro:\top}{}} = \interpretation{\bang \Upsilon} \otimes \interpretation{\Gamma} \xrightarrow{!} \mathbf 0 $
    \smallskip

      \noindent $ \ruleinterpretation{\infer[{\mbox{\small $\zero_e$}}]{{\Upsilon;}\Gamma, \Delta \vdash \elimzero(t):C}{{\Upsilon;}\Gamma \vdash t:\zero}} = \interpretation{\bang \Upsilon} \otimes \interpretation{\Gamma} \otimes \interpretation{\Delta} \xrightarrow{\interpretation{t} \otimes id} \mathbf 0 \otimes \interpretation{\Delta} \xrightarrow{0} \interpretation{C}$
    \smallskip

      \noindent $\ruleinterpretation{\infer[{\mbox{\small $\with_i$}}]{{\Upsilon;}\Gamma \vdash \pair{t}{u}:A \with B}{{\Upsilon;}\Gamma \vdash t:A & {\Upsilon;}\Gamma \vdash u:B}} = \interpretation{\bang \Upsilon} \otimes \interpretation{\Gamma} \xrightarrow{\Delta}(\interpretation{\bang \Upsilon} \otimes \interpretation{\Gamma}) \oplus (\interpretation{\bang \Upsilon} \otimes \interpretation{\Gamma}) \xrightarrow{\interpretation{t} \oplus \interpretation{u}} \interpretation{A} \oplus \interpretation{B}$
    \smallskip

      \noindent $\ruleinterpretation{\infer[{\mbox{\small $\with_{e1}$}}]{{\Upsilon;}\Gamma, \Delta \vdash \elimwith^1(t,x^A.u):C}{{\Upsilon;}\Gamma \vdash t:A \with B & {\Upsilon;}\Delta, x^A \vdash u:C}} =
	\begin{aligned}[t]
	  & \interpretation{\bang \Upsilon} \otimes \interpretation{\Gamma} \otimes \interpretation{\Delta} \xrightarrow{\overline{d}_{\Upsilon,\Gamma,\Delta}} \interpretation{\bang \Upsilon} \otimes \interpretation{\Gamma} \otimes \interpretation{\bang \Upsilon} \otimes \interpretation{\Delta}\\
	  &\xrightarrow{\interpretation{t} \otimes id} (\interpretation{A} \oplus \interpretation{B}) \otimes \interpretation{\bang \Upsilon} \otimes \interpretation{\Delta} \xrightarrow{\pi_1 \otimes id} \interpretation{A} \otimes \interpretation{\bang \Upsilon} \otimes \interpretation{\Delta}\\ 
	  &\xrightarrow{\sigma} \interpretation{\bang \Upsilon} \otimes \interpretation{\Delta} \otimes \interpretation{A} \xrightarrow{\interpretation{u}} \interpretation{C}
	\end{aligned}$
    \smallskip

      \noindent $\ruleinterpretation{\infer[{\mbox{\small $\with_{e2}$}}]{{\Upsilon;}\Gamma, \Delta \vdash \elimwith^2(t,x^B.u):C}{{\Upsilon;}\Gamma \vdash t:A \with B & {\Upsilon;}\Delta, x^B \vdash u:C}} =
	\begin{aligned}[t]
	  & \interpretation{\bang \Upsilon} \otimes \interpretation{\Gamma} \otimes \interpretation{\Delta} \xrightarrow{\overline{d}_{\Upsilon,\Gamma,\Delta}} \interpretation{\bang \Upsilon} \otimes \interpretation{\Gamma} \otimes \interpretation{\bang \Upsilon} \otimes \interpretation{\Delta}\\
	  &\xrightarrow{\interpretation{t} \otimes id} (\interpretation{A} \oplus \interpretation{B}) \otimes \interpretation{\bang \Upsilon} \otimes \interpretation{\Delta} \xrightarrow{\pi_2 \otimes id} \interpretation{B} \otimes \interpretation{\bang \Upsilon} \otimes \interpretation{\Delta}\\
	  &\xrightarrow{\sigma} \interpretation{\bang \Upsilon} \otimes \interpretation{\Delta} \otimes \interpretation{B} 
	  \xrightarrow{\interpretation{u}} \interpretation{C}
	\end{aligned}$
    \smallskip

      \noindent $\ruleinterpretation{\infer[{\mbox{\small $\oplus_{i1}$}}]{{\Upsilon;}\Gamma \vdash \inl(t):A \oplus B}{{\Upsilon;}\Gamma \vdash t:A}} = \interpretation{\bang \Upsilon} \otimes \interpretation{\Gamma} \xrightarrow{\interpretation{t}} \interpretation{A} \xrightarrow{i_1} \interpretation{A} \oplus \interpretation{B}$
    \smallskip

      \noindent $\ruleinterpretation{\infer[{\mbox{\small $\oplus_{i2}$}}]{{\Upsilon;}\Gamma \vdash \inr(t):A \oplus B}{{\Upsilon;}\Gamma \vdash t:B}} = \interpretation{\bang \Upsilon} \otimes \interpretation{\Gamma} \xrightarrow{\interpretation{t}} \interpretation{B} \xrightarrow{i_2} \interpretation{A} \oplus \interpretation{B}$
    \smallskip

      \noindent $\ruleinterpretation{\infer[{\mbox{\small $\oplus_e$}}]{{\Upsilon;}\Gamma, \Delta \vdash \elimplus(t,x^A.u,y^B.v):C}{{\Upsilon;}\Gamma \vdash t:A \oplus B & {\Upsilon;}\Delta, x^A \vdash u:C & {\Upsilon;}\Delta, y^B \vdash v:C}} =
	\begin{aligned}[t]
	& \interpretation{\bang \Upsilon} \otimes \interpretation{\Gamma} \otimes \interpretation{\Delta} \xrightarrow{\overline{d}_{\Upsilon,\Gamma,\Delta}} \interpretation{\bang \Upsilon} \otimes \interpretation{\Gamma} \otimes \interpretation{\bang \Upsilon} \otimes \interpretation{\Delta}\\
	&\xrightarrow{\interpretation{t} \otimes id} (\interpretation{A} \oplus \interpretation{B}) \otimes \interpretation{\bang \Upsilon} \otimes \interpretation{\Delta}\\
	&\xrightarrow{dist} (\interpretation{A} \otimes \interpretation{\bang \Upsilon} \otimes \interpretation{\Delta}) \oplus (\interpretation{B} \otimes \interpretation{\bang \Upsilon} \otimes \interpretation{\Delta})\\
	&\xrightarrow{\sigma \oplus \sigma} (\interpretation{\bang \Upsilon} \otimes \interpretation{\Delta} \otimes \interpretation{A}) \oplus (\interpretation{\bang \Upsilon} \otimes \interpretation{\Delta} \otimes \interpretation{B}) \\
	&\xrightarrow{[\interpretation{u}, \interpretation{v}]} \interpretation{C}
      \end{aligned}$
    \smallskip

    \noindent $\ruleinterpretation{\infer[{\mbox{\small $\oc_i$}}]{\Upsilon; \varnothing \vdash \oc t: \oc A}{\Upsilon; \varnothing \vdash t:A}} = \interpretation{\bang \Upsilon} \otimes I \xrightarrow{\rho} \interpretation{\bang \Upsilon} \xrightarrow{\delta_\Upsilon} \oc  \interpretation{\bang \Upsilon} \xrightarrow{\oc (\rho^{-1})} \oc  (\interpretation{\bang \Upsilon} \otimes I) \xrightarrow{\oc  \interpretation{t}} \oc  \interpretation{A}$
    \smallskip

    \noindent $\ruleinterpretation{\infer[{\mbox{\small $\oc_e$}}]{\Upsilon; \Gamma, \Delta \vdash \elimbang(t, x^A.u):B}{\Upsilon; \Gamma \vdash t:\oc A & \Upsilon, x^A; \Delta \vdash u:B}} =
      \begin{aligned}[t]
	& \interpretation{\bang \Upsilon} \otimes \interpretation{\Gamma} \otimes \interpretation{\Delta} \xrightarrow{\overline{d}_{\Upsilon,\Gamma,\Delta}} \interpretation{\bang \Upsilon} \otimes \interpretation{\Gamma} \otimes \interpretation{\bang \Upsilon} \otimes \interpretation{\Delta}\\
	&\xrightarrow{\interpretation{t} \otimes id} \oc  \interpretation{A} \otimes \interpretation{\bang \Upsilon} \otimes \interpretation{\Delta}
	\xrightarrow{\sigma \otimes id} \interpretation{\bang \Upsilon} \otimes \oc  \interpretation{A} \otimes \interpretation{\Delta} \xrightarrow{\interpretation{u}} \interpretation{B}
      \end{aligned}$
\end{definition}

\begin{example}
	Let the semiring $\scalars = (\naturalnumbers, \cdot, +)$ be the natural numbers with standard multiplication and addition, and let $a, b \in \mathbb N$. The term $v = \langle a.\star, b.\star\rangle$ has type $\one \with \one$ in the empty context, corresponding to a vector type as defined in Section~\ref{sec:secvectors}. In the category $\mathsf{SM}_\naturalnumbers$ (Example~\ref{ex:sm}), the term $v$ is interpreted by the following morphism:
	\[
		\interpretation{\langle a.\star, b.\star\rangle} =
		\naturalnumbers \xrightarrow{\Delta} \naturalnumbers \times \naturalnumbers \xrightarrow{\mono{a} \times \mono{b}} \naturalnumbers \times \naturalnumbers
	\]
	\[
		n \mapsto (an, bn)
	\]

	The scalar multiplication of $v$ by $2 \in \naturalnumbers$ is interpreted as follows:
	\[
		\interpretation{2 \dotprod \langle a.\star, b.\star\rangle} =
		\naturalnumbers
		\xrightarrow{\Delta} \naturalnumbers \times \naturalnumbers
		\xrightarrow{\mono{a} \times \mono{b}} \naturalnumbers \times \naturalnumbers
		\xrightarrow{\rho^{-1}_{\naturalnumbers \times \naturalnumbers}} \naturalnumbers \times \naturalnumbers \times \naturalnumbers
		\xrightarrow{id_{\naturalnumbers \times \naturalnumbers} \times \mono{2}} \naturalnumbers \times \naturalnumbers \times \naturalnumbers
		\xrightarrow{\rho_{\naturalnumbers \times \naturalnumbers}} \naturalnumbers \times \naturalnumbers
	\]
	\[
		n \mapsto (2an, 2bn)
	\]

	The interpretation of $v$ added to itself coincides with $\interpretation{2 \bullet v}$, as expected:
	\[
		\interpretation{\langle a.\star, b.\star\rangle \plus \langle a.\star, b.\star\rangle} =
		\naturalnumbers 
		\xrightarrow{\Delta} \naturalnumbers \times \naturalnumbers 
		\xrightarrow{\Delta \times \Delta} \naturalnumbers \times \naturalnumbers \times \naturalnumbers \times \naturalnumbers 
		\xrightarrow{\mono{a} \times \mono{b} \times \mono{a} \times \mono{b}} \naturalnumbers \times \naturalnumbers \times \naturalnumbers \times \naturalnumbers 
		\xrightarrow{\nabla} \naturalnumbers \times \naturalnumbers
	\]
	\[
		n \mapsto (2an, 2bn)
	\]

	The interpretation of $v$ as a duplicable vector differs from that where each coordinate can be duplicated individually:
	\[
		\interpretation{\oc \langle a.\star, b.\star\rangle} = 
		\naturalnumbers
		\xrightarrow{m_{\naturalnumbers}} FG \naturalnumbers
		\xrightarrow{FG \Delta} FG(\naturalnumbers \times \naturalnumbers)
		\xrightarrow{FG(\mono{a} \times \mono{b})} FG(\naturalnumbers \times \naturalnumbers)
	\]
	\[
		n \mapsto n \cdot (a,b)
	\]

	\[
		\interpretation{\langle \oc a.\star, \oc b.\star \rangle} =
		\naturalnumbers
		\xrightarrow{\Delta} \naturalnumbers \times \naturalnumbers
		\xrightarrow{m_{\naturalnumbers} \times m_{\naturalnumbers}} FG \naturalnumbers \times FG \naturalnumbers
		\xrightarrow{FG \mono{a} \times FG \mono{b}} FG \naturalnumbers \times FG \naturalnumbers
	\]
	\[
		n \mapsto (n \cdot a, n \cdot b)
	\]
\end{example}

\begin{example}
	Let $\scalars = \mathbb C$, and let $H$ be the term representing the Hadamard operator $\left(\begin{smallmatrix} 1 & 1\\1 & -1 \end{smallmatrix}\right)$, encoded as in Section~\ref{sec:secmatrices}:
	\[
		H = \lambda x^{\one \with \one}. \elimwith^1(x,y^\one.
    \elimone(y,\pair{1.\star}{1.\star})) \plus \elimwith^2(x,z^\one.
    \elimone(z,\pair{1.\star}{(-1).\star}))
	\]

	This term is interpreted in the category $\mathsf{SM}_{\mathbb C}$ (Example~\ref{ex:sm}) by the following morphism:
	\begin{align*}
		\interpretation{H} =&
		\complexnumbers
		\xrightarrow{\eta_{\complexnumbers \times \complexnumbers}} \mathsf{hom}(\complexnumbers \times \complexnumbers, \complexnumbers \otimes (\complexnumbers \times \complexnumbers))
		\xrightarrow{\mathsf{hom}(\complexnumbers \times \complexnumbers, \Delta)} \mathsf{hom}(\complexnumbers \times \complexnumbers, (\complexnumbers \otimes (\complexnumbers \times \complexnumbers)) \times (\complexnumbers \otimes (\complexnumbers \times \complexnumbers)))\\
		&\xrightarrow{\mathsf{hom}(\complexnumbers \times \complexnumbers, \lambda \times \lambda)} \mathsf{hom}(\complexnumbers \times \complexnumbers, (\complexnumbers \times \complexnumbers) \times (\complexnumbers \times \complexnumbers))
		\xrightarrow{\mathsf{hom}(\complexnumbers \times \complexnumbers, \pi_1 \times \pi_2)} \mathsf{hom}(\complexnumbers \times \complexnumbers, \complexnumbers \times \complexnumbers)\\
		&\xrightarrow{\mathsf{hom}(\complexnumbers \times \complexnumbers, \Delta \times \Delta)} \mathsf{hom}(\complexnumbers \times \complexnumbers, (\complexnumbers \times \complexnumbers) \times (\complexnumbers \times \complexnumbers))
		\xrightarrow{\mathsf{hom}(\complexnumbers \times \complexnumbers, (\mono{1} \times \mono{1}) \times (\mono{1} \times \mono{-1}))} \mathsf{hom}(\complexnumbers \times \complexnumbers, (\complexnumbers \times \complexnumbers) \times (\complexnumbers \times \complexnumbers))\\
		&\xrightarrow{\mathsf{hom}(\complexnumbers \times \complexnumbers, \nabla)} \mathsf{hom}(\complexnumbers \times \complexnumbers, \complexnumbers \times \complexnumbers)
	\end{align*}
	\[
		x \mapsto ((y, z) \mapsto x(y + z, y - z))
	\]
	This is the linear map that represents the action of the Hadamard operator on a vector in $\complexnumbers^2$.
\end{example}

\subsection{Soundness}
In this section, we prove the soundness of the $\OC$-calculus with respect to
the denotational semantics
(Theorem~\ref{thm:soundness}). We begin by
introducing the notion of substitution, which plays a key role in the proof.
This is done through two lemmas: one for the substitution of an intuitionistic
variable (Lemma~\ref{lem:nonlinearsubstitution})
and another for the substitution of a linear variable
(Lemma~\ref{lem:linearsubstitution}).

The proofs of these lemmas rely not only on the properties established in previous sections, but also on several technical lemmas, which are stated and proved in Appendix~\ref{proof:technical}.

\begin{restatable}[Intuitionistic substitution]{lemma}{nonlinearsubstitution}
  \label{lem:nonlinearsubstitution}
  Let $\Upsilon, x^B; \Gamma \vdash t:A$ and $\Upsilon; \varnothing \vdash u:B$. 
Then,
\[\begin{tikzcd}[cramped,ampersand replacement=\&]
	{\interpretation{\oc \Upsilon} \otimes \interpretation{\Gamma}} \& {\interpretation{\oc \Upsilon} \otimes \interpretation{\oc \Upsilon} \otimes \interpretation{\Gamma}} \\
	\& {\interpretation{\oc \Upsilon} \otimes I \otimes \interpretation{\oc \Upsilon} \otimes \interpretation{\Gamma}} \\
	\& {\oc\interpretation{B} \otimes \interpretation{\oc \Upsilon} \otimes \interpretation{\Gamma}} \\
	{\interpretation A} \& {\interpretation{\oc \Upsilon} \otimes \oc\interpretation{B} \otimes \interpretation{\Gamma}}
	\arrow["{d_\Upsilon \otimes id}", from=1-1, to=1-2]
	\arrow["{\interpretation{(v/x)u}}"', from=1-1, to=4-1]
	\arrow["{\rho^{-1}\otimes id\otimes id}", from=1-2, to=2-2]
	\arrow["{\interpretation{\oc u}\otimes id\otimes id}", from=2-2, to=3-2]
	\arrow["\sigma", from=3-2, to=4-2]
	\arrow["{\interpretation t}", from=4-2, to=4-1]
\end{tikzcd}\]
\end{restatable}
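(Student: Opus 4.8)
The plan is to argue by induction on the derivation of $\Upsilon, x^B; \Gamma \vdash t:A$, showing in every case that the interpretation of the substituted proof-term $(u/x)t$ (the left-hand vertical arrow of the diagram) equals the composite running along the right-hand side. Since $x$ is an intuitionistic variable, it may occur in $t$ any number of times, including none; the substitution replaces each occurrence by $u$, and semantically this multiplicity is governed by the commutative comonoid $(\interpretation{\oc\Upsilon}, d_\Upsilon, e_\Upsilon)$ (Lemma~\ref{lem:comonoidgen}) via the duplication maps $\overline{d}$ appearing throughout Definition~\ref{def:ruleinterpretation}. The whole argument turns on one preliminary observation: the morphism $\interpretation{\oc u}$ is \emph{compatible with duplication and erasure}, i.e.\ it is a comonoid morphism.

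First I would establish this preliminary fact. By Definition~\ref{def:ruleinterpretation}, $\interpretation{\oc u}$ is built from $\delta_\Upsilon$ and $\oc\interpretation{u}$, so it is a coalgebra morphism from the coalgebra structure on its domain $\interpretation{\oc\Upsilon}\otimes I$ (Corollary~\ref{cor:deltacoalgebramorph}) to the free coalgebra $(\oc\interpretation{B}, \delta_{\interpretation B})$; by Lemma~\ref{lem:biermanprop} it is therefore a comonoid morphism, hence commutes with the comultiplications $d$ and the counits $e$ on both sides. Alongside this I need the comonad collapse $\varepsilon_{\interpretation B} \circ \interpretation{\oc u} = \interpretation{u}$, which follows from naturality of $\varepsilon$ and the generalised counit law $\varepsilon \circ \delta_\Upsilon = id$; this is among the technical lemmas of Appendix~\ref{proof:technical}.

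With these in hand the base cases are direct. When $t = x$ (rule ax, which forces $\Gamma = \varnothing$), the substituted term is $u$, and the right-hand composite collapses to $\interpretation{u}$ by the counit law of $(\interpretation{\oc\Upsilon}, d_\Upsilon, e_\Upsilon)$ together with $\varepsilon_{\interpretation B} \circ \interpretation{\oc u} = \interpretation{u}$. When $t$ is any other variable (rules lin-ax, ax) or $t = a.\star$ (rule $\one_i$), the variable $x$ does not occur, the spare copy of $\interpretation{\oc\Upsilon}$ is erased by $e_\Upsilon$, and the composite reduces to $\interpretation{t}$ by the counit law. The single-premise rules that do not split the linear context (prod$(a)$, $\multimap_i$, $\oplus_{i1}$, $\oplus_{i2}$, $\oc_i$) are handled by pushing the right-hand composite through the outermost constructor using naturality of $\sigma$, $\rho$, the injections, and the scalar map $\widehat{\mono{a}}$, and then invoking the induction hypothesis on the premise.

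The substantive cases are the rules that split the context between $\Gamma$ and $\Delta$ ($\one_e$, $\multimap_e$, $\otimes_i$, $\otimes_e$, $\with_{e1}$, $\with_{e2}$, $\oplus_e$, $\oc_e$), together with sum and $\with_i$, which use the biproduct diagonal $\Delta$. In each such rule the intuitionistic context $\Upsilon, x^B$ is shared by the premises, so its interpretation is first duplicated by $\overline{d}$, after which $x$ may be routed into either or both premises. Applying the induction hypothesis to each premise separately yields two independent copies of the gadget ``duplicate $\interpretation{\oc\Upsilon}$ and feed it through $\interpretation{\oc u}$''; the task is to merge these into a \emph{single} such gadget whose $\oc\interpretation{B}$-output is then duplicated inside the rule. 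This is exactly where the comonoid laws of $(\interpretation{\oc\Upsilon}, d_\Upsilon, e_\Upsilon)$---coassociativity and cocommutativity---combine with the comonoid-morphism property of $\interpretation{\oc u}$ established above, using naturality of $d$ (Lemma~\ref{lem:dnat}) and the coherence isomorphisms to rebracket the copies of $\interpretation{\oc\Upsilon}$ and $\oc\interpretation{B}$; the sum and $\with_i$ cases additionally require commuting $\interpretation{\oc u}$ past $\Delta$, handled by Corollary~\ref{cor:deltasemiadditive}. I expect the main obstacle to be precisely this bookkeeping: reconciling the generalised duplication $\overline{d}_{\Upsilon,\Gamma,\Delta}$ with the nested comultiplications produced by the two applications of the induction hypothesis, which is the point where the comonoid structure of $\interpretation{\oc\Upsilon}$ and the comonoid-morphism property of $\interpretation{\oc u}$ must be used in concert.
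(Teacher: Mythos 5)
Your overall strategy is the paper's own: induction on $t$, with the composite $\alpha = \sigma\circ(\interpretation{\oc u}\otimes id)\circ(\rho^{-1}\otimes id)\circ d_\Upsilon$ treated as a single gadget whose compatibility with duplication and erasure is obtained by upgrading a coalgebra-morphism property to a comonoid-morphism property via Lemma~\ref{lem:biermanprop}. That is exactly how the paper proves its merging results (Lemma~\ref{lem:substitutionaux} and Corollary~\ref{coro:substitutionaux}), and your handling of the variable, erasure, and context-splitting cases matches the paper's use of Lemma~\ref{lem:eprop}, Lemma~\ref{lem:coalgebragen}, and Corollary~\ref{coro:substitutionaux}.

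There is, however, one genuine gap: the case $t = \oc v$ (rule $\oc_i$). You list it among the single-premise rules ``handled by pushing the right-hand composite through the outermost constructor using naturality of $\sigma$, $\rho$, the injections, and the scalar map'', but naturality cannot close this case. The interpretation of $\oc_i$ is $\oc\interpretation{v}\circ\oc(\rho^{-1})\circ\delta_{\Upsilon,B}\circ\rho$: the context is routed through the generalised comultiplication $\delta_{\Upsilon,B}$ and then everything happens \emph{under the functor} $\oc$. After applying the induction hypothesis, what remains to prove is $\delta_{\Upsilon,B}\circ\alpha = \oc\alpha\circ\delta_\Upsilon$ (modulo unitors), i.e.\ that the \emph{whole} gadget $\alpha$ --- not merely $\interpretation{\oc u}$ --- is a coalgebra morphism into the generalised coalgebra on $\interpretation{\oc\Upsilon}\otimes\oc\interpretation{B}$. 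This is the paper's Lemma~\ref{lem:deltaprop}, and it rests on the nontrivial facts that $d_\Upsilon$ is itself a coalgebra morphism (Lemma~\ref{lem:dcoalgebramorph}) and on Lemma~\ref{lem:deltaaltdef}; your preliminary observation gives the coalgebra property of $\interpretation{\oc u}$ alone, and you use it only to extract its comonoid consequence, never bringing it (or the corresponding property of $d_\Upsilon$) to bear on $\oc_i$. A further, minor, misattribution: commuting the gadget past $\Delta$ in the sum and $\with_i$ cases is just naturality of $\Delta$, as in the paper; Corollary~\ref{cor:deltasemiadditive} concerns $dist^{-1}$ and $\gamma^{-1}$ and is not what is needed there.
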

\begin{proof}
  By induction on $t$. See Appendix~\ref{proof:soundness}.
\end{proof}

\begin{restatable}[Linear substitution]{lemma}{linearsubstitution}
  \label{lem:linearsubstitution}
	Let $\Upsilon;\Gamma \vdash v:A$ and $\Upsilon; \Delta, x^A \vdash u: B$, then the following diagram commutes:
\[\begin{tikzcd}[ampersand replacement=\&,cramped]
	{\interpretation{\oc \Upsilon} \otimes \interpretation{\Gamma} \otimes \interpretation{\Delta}} \&\&\& {\interpretation{\oc \Upsilon} \otimes \interpretation{\oc \Upsilon} \otimes \interpretation{\Gamma} \otimes \interpretation{\Delta}} \\
	\\
	\&\&\& {\interpretation{\oc \Upsilon} \otimes \interpretation{\Delta} \otimes\interpretation{\oc \Upsilon} \otimes \interpretation{\Gamma}} \\
	\\
	{\interpretation B} \&\&\& {\interpretation{\oc \Upsilon} \otimes \interpretation{\Delta} \otimes\interpretation A}
	\arrow["{d_\Upsilon \otimes id}", from=1-1, to=1-4]
	\arrow["{\interpretation{(v/x)u}}"', from=1-1, to=5-1]
	\arrow["{id \otimes \sigma}", from=1-4, to=3-4]
	\arrow["{id \otimes \interpretation v}", from=3-4, to=5-4]
	\arrow["{\interpretation u}", from=5-4, to=5-1]
\end{tikzcd}\]
\end{restatable}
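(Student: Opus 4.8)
The plan is to argue by induction on the derivation of $\Upsilon;\Delta,x^A\vdash u:B$ (equivalently, on the structure of $u$), in exact parallel with the intuitionistic substitution Lemma~\ref{lem:nonlinearsubstitution}. The essential simplification relative to that lemma is that $x$ is \emph{linear}, hence occurs exactly once in $u$; consequently $\interpretation v$ is plugged in exactly once and is never duplicated. Write $S$ for the morphism obtained from the right-and-bottom path of the diagram by deleting the final $\interpretation u$, that is $S=(id\otimes\interpretation v)\circ(id\otimes\sigma)\circ(d_\Upsilon\otimes id)$. The crucial bookkeeping observation is that $S$ depends only on $v,\Upsilon,\Gamma,\Delta,A$ and not on $u$, so in every inductive step the same $S$ (or its evident restriction to the relevant sub-context) feeds the premise in which $x$ occurs. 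In each case the goal is to establish $\interpretation{(v/x)u}=\interpretation u\circ S$.

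For the base case, $u=x$ is derived by \mbox{lin-ax}, so $\Delta=\varnothing$, $B=A$ and $(v/x)u=v$. Unfolding $\interpretation{x}=\lambda\circ(e_\Upsilon\otimes id)$ and composing with $S$, the duplication $d_\Upsilon$ is immediately followed by $e_\Upsilon$ on one copy; the counit law of the commutative comonoid $(\interpretation{\oc\Upsilon},d_\Upsilon,e_\Upsilon)$ (Lemma~\ref{lem:comonoidgen}) collapses this to the identity up to the unitors, leaving exactly $\interpretation v$. The purely functorial cases $\multimap_i$, $\oplus_{i1}$, $\oplus_{i2}$ and the interstitial rule $\mbox{prod}(a)$ are the mildest: $x$ passes into the unique premise, and the claim follows from the induction hypothesis together with naturality of $\eta$ and the functoriality of $[\interpretation A\to-]$, $i_1$, $i_2$, and of the scalar map $\widehat{\mono{a}}$ (Lemma~\ref{lem:hatanat}).

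The context-\emph{sharing} rules $\mbox{sum}$ and $\with_i$ are handled by naturality of the biproduct diagonal. Here the linear context is shared, so $x$ occurs in both premises $t_1,t_2$ and $(v/x)(t_1\plus t_2)=(v/x)t_1\plus(v/x)t_2$. The sum rule is interpreted as $\nabla\circ(\interpretation{t_1}\oplus\interpretation{t_2})\circ\Delta$ (and $\with_i$ the same without $\nabla$), where $\Delta=\langle id,id\rangle$ is the diagonal of Definition~\ref{def:homsetsum}; naturality of this diagonal gives $\Delta\circ S=(S\oplus S)\circ\Delta$, after which the two induction hypotheses $\interpretation{(v/x)t_i}=\interpretation{t_i}\circ S$ slot into the two summands and the sum/pair is recovered. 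The companion commutations for the distribution maps, needed when such patterns sit under an elimination, are precisely Corollaries~\ref{cor:nablasemiadditive} and~\ref{cor:deltasemiadditive} and Lemma~\ref{lem:sigmanabladelta}.

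The genuinely laborious cases, and the expected main obstacle, are the context-\emph{splitting} rules $\one_e$, $\multimap_e$, $\otimes_i$, $\otimes_e$, $\with_{e1}$, $\with_{e2}$, $\oplus_e$, $\zero_e$ and $\oc_e$, which first duplicate $\oc\Upsilon$ via $\overline{d}_{\Upsilon,\Gamma,\Delta}$ and then permute factors with symmetries. In each such rule $x$ lands in exactly one premise, so the substituted term reuses that premise's copy of $\oc\Upsilon$ and, inside it, the induction hypothesis introduces a \emph{further} duplication $d_\Upsilon$ to serve $v$. The heart of the argument is to reconcile this two-stage duplication of the intuitionistic context with the single $\overline{d}$ appearing in $\interpretation{(v/x)u}$; this is exactly coassociativity and cocommutativity of the comonoid $(\interpretation{\oc\Upsilon},d_\Upsilon,e_\Upsilon)$ (Lemma~\ref{lem:comonoidgen}), together with naturality of $d$ (Lemma~\ref{lem:dnat}) and of $\sigma$, and is packaged by the technical lemmas of Appendix~\ref{proof:technical}. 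The rules that in addition invoke $dist$ (namely $\oplus_e$) or route a subterm of type $\oc A$ through $\varepsilon$ and $\delta$ (namely $\oc_e$) are the most delicate, since there one must commute $\interpretation v$ past both the distribution isomorphism and the comonad structure; I expect the bulk of the appendix computation to reside in these two cases. Collecting all cases closes the induction.
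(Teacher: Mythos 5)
Your proposal is correct and follows essentially the same route as the paper's proof: induction on $u$, the comonoid counit law for the base case $u=x$, naturality of the biproduct diagonal $\Delta$ for the context-sharing rules, and comonoid coassociativity/cocommutativity together with naturality of $d$ and $\sigma$ (the appendix's technical lemmas) for the context-splitting eliminations, with $\oc_e$ indeed the delicate new case (where the paper additionally invokes the weakening lemma when $x$ falls in the continuation, since $y^C$ then enters the intuitionistic context). Your opening claim that a linear $x$ "occurs exactly once in $u$" is imprecise — the additive rules sum and $\with_i$ share the linear context, so $x$ can occur in both premises — but this is harmless, as you treat exactly those cases correctly via naturality of $\Delta$.
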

\begin{proof}
  By induction on $u$. See Appendix~\ref{proof:soundness}.
\end{proof}

\begin{restatable}[Soundness]{theorem}{soundness}
  \label{thm:soundness}
  Let $\Upsilon;\Gamma \vdash t:A$. If $t \to r$, then $\interpretation{t} = \interpretation{r}$.
\end{restatable}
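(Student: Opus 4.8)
The plan is to proceed by induction on the derivation of the reduction $t \lra r$. By Subject Reduction (Theorem~\ref{thm:SR}), $r$ is typable in the same context with the same type, so $\interpretation{t}$ and $\interpretation{r}$ are parallel morphisms and the claimed equation is well-posed. Since the interpretation of Definition~\ref{def:ruleinterpretation} is compositional — the denotation of a proof-term is assembled by tensoring, composing, and applying functors to the denotations of its immediate subterms — the contextual (congruence) cases follow immediately from the induction hypothesis: replacing a subterm by one with equal denotation leaves the whole denotation unchanged. It therefore suffices to verify the equation for each redex, i.e.\ for each rule of Figure~\ref{fig:figureductionrules} fired at the root.

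I would then split the root rules into the cut-elimination group and the commutation group. For the cut rules I unfold both interpretations and reduce to the substitution lemmas. The rule $(\lambda x^A.t)~u \lra (u/x)t$ is settled by Linear Substitution (Lemma~\ref{lem:linearsubstitution}) together with the triangle identity of the monoidal-closed adjunction relating $\eta$ and $\varepsilon_{\interpretation A}$; the rules for $\otimes_e$, $\with_{e1}$, $\with_{e2}$, and $\oplus_e$ similarly invoke Lemma~\ref{lem:linearsubstitution} after collapsing the introduction/elimination pair, using $\pi_i\circ\Delta$ and $[{-},{-}]\circ i_j$ for the biproduct cases; and $\elimbang(\oc t, x^A.u) \lra (t/x)u$ is the case handled by Intuitionistic Substitution (Lemma~\ref{lem:nonlinearsubstitution}) together with the comonad identities. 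The rule $\elimone(a.\star, t) \lra a\dotprod t$ reduces to the defining identity of prod($a$), namely $\widehat{\mono a}$ acting on $\interpretation{A}$, via the properties of the scalar map (Lemma~\ref{lem:hataprop}).

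For the commutation group, the prod($a$) rules follow from naturality of the scalar map $\widehat{\mono a}$ (Lemma~\ref{lem:hatanat}) and its compatibility with $\otimes$, $\oplus$, and the hom-functor (Lemma~\ref{lem:hataprop}); for instance $a\dotprod\pair{t}{u} \lra \pair{a\dotprod t}{a\dotprod u}$ is exactly $\widehat{\mono a}_{\interpretation A\oplus\interpretation B}=\widehat{\mono a}_{\interpretation A}\oplus\widehat{\mono a}_{\interpretation B}$ precomposed with $\interpretation t\oplus\interpretation u$ and $\Delta$. The sum rules rest on the semiadditive structure: they amount to pushing the diagonal $\Delta$ and codiagonal $\nabla$ through the introduction and elimination morphisms, which is precisely what Corollaries~\ref{cor:nablasemiadditive} and~\ref{cor:deltasemiadditive} and Lemma~\ref{lem:sigmanabladelta} provide. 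Finally, the zero-ary rules $a.\star \plus b.\star \lra (a+b).\star$ and $a\dotprod b.\star \lra (a\times b).\star$ are discharged by the fact that $\mono{\cdot}$ is a semiring homomorphism into $Hom(I,I)$, whose semiring operations (Corollary~\ref{cor:semiringstructure}) are addition, given by $\nabla\circ({-}\oplus{-})\circ\Delta$, and multiplication, given by composition.

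The main obstacle will be the commutation rules that push $\plus$ and $\dotprod$ through the \emph{elimination} rules of $\otimes$, $\oplus$, and $\oc$, since their interpretations route through the banged-context duplication morphism $\overline{d}_{\Upsilon,\Gamma,\Delta}$ and, for $\oplus_e$, through the distribution isomorphism $dist$. Establishing, for example, $\elimplus(t\plus u, x^A.v, y^B.w) \lra \elimplus(t, x^A.v, y^B.w) \plus \elimplus(u, x^A.v, y^B.w)$ requires commuting $\Delta$ past $dist$ and past the comonoid duplication $d_\Upsilon$ on $\interpretation{\bang\Upsilon}$, relying on naturality of $dist$ (Lemma~\ref{lem:distmaps}) together with the semiadditivity identities; the $\oc_e$ commutations additionally lean on the comonoid and comonad coherences underlying $\overline{d}$. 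These diagram chases are routine but lengthy, and carrying the $\interpretation{\bang\Upsilon}$ duplication bookkeeping correctly through each case is the delicate part of the argument.
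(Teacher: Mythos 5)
Your proposal is correct and follows essentially the same route as the paper: induction on the reduction relation with congruence cases dispatched by compositionality, cut rules handled by the Linear and Intuitionistic Substitution Lemmas (\ref{lem:linearsubstitution}, \ref{lem:nonlinearsubstitution}) together with the adjunction identity, the prod($a$) commutations by the scalar-map properties (Lemmas~\ref{lem:hatanat}, \ref{lem:hataprop}), the sum commutations by the semiadditivity results (Corollaries~\ref{cor:nablasemiadditive}, \ref{cor:deltasemiadditive}, Lemma~\ref{lem:sigmanabladelta}, naturality of $dist$), and the zero-ary rules by the semiring monomorphism $\mono{\cdot}$. You also correctly single out the $\overline{d}_{\Upsilon,\Gamma,\Delta}$ bookkeeping through the $\otimes$, $\oplus$, and $\oc$ eliminations as the laborious part, which is exactly where the paper's proof spends its largest diagram chases.
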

\begin{proof}
  By induction on the relation $\to$. See Appendix~\ref{proof:soundness}.
\end{proof}

\subsection{Adequacy}
The adequacy theorem (Theorem~\ref{thm:semanticsadequacy}) states that if two proofs are interpreted by the same morphism in our model, then they are observationally equivalent. By observationally equivalent, we mean that the proofs \emph{produce} identical results in any context.

To formally define this notion of equivalence (Definition~\ref{def:equivalence}), we first introduce the notion of \emph{elimination context} (Definition~\ref{def:eliminationContext}). Unlike the standard notion of context, an elimination context is specifically designed to eliminate connectives: a proof of an implication is applied, a proof of a pair is projected, and so on. We define elimination contexts under the invariant that if a proof of a proposition $A$ can be placed within a context yielding a proof of a proposition $B$, then $B$ is \emph{smaller} than $A$.

To this end, we begin by defining the notion of the size of a proposition (Definition~\ref{def:size}).

\begin{definition}[Size of a proposition]
  \label{def:size}
  The size of a proposition (written $|A|$) is given by
  \begin{align*}
    |\one| &= 1 &
    |A \otimes B| &= |A|+|B| &
    |\zero| &= 1 &
    |A \oplus B|  &= |A|+|B| &
    \\
    |A \multimap B| &= |A|+|B| &
    |\top| &= 1 &
    |A \with B| &= |A|+|B| &
    |\bang A|  &= |A|+1
  \end{align*}
\end{definition}

\begin{definition}[Elimination context]
  \label{def:eliminationContext}
  An elimination context is a proof produced by the following grammar, where $\_$ denotes a distinguished variable.
  \[
    K = \_
    \mid K~u
    \mid \elimtens(K,x^A y^B.v)
    \mid \elimwith^1(K,x^A.r)
    \mid \elimwith^2(K,x^B.r)
    \mid \elimplus(K,x^A.r,y^B.s)
    \mid \elimbang(K,x^A.r)
  \]
  
  where:
  \begin{itemize}
    \item In the proof $\elimtens(K,x^A y^B.v)$, $\varnothing;x^A,y^B \vdash v:C$ with $|C| < |A \otimes B|$.
    \item In the proof $\elimplus(K,x^A.r,y^B.s)$, $\varnothing;x^A \vdash r:C$ and $\varnothing;y^B \vdash s:C$ with $|C| < |A \oplus B|$.
    \item In the proof $\elimbang(K,x^A.r)$, $x^A;\varnothing \vdash r:B$ with $|B| < |\bang A|$.
  \end{itemize}
  We write $K[t]$ for $(t/\_)K$.
\end{definition}

\begin{notation}
We write ${\_}^A \vdash K : \one$ to mean either ${\_}^A ; \varnothing \vdash K : \one$ or $\varnothing ; {\_}^A \vdash K : \one$.
\end{notation}

\begin{definition}[Observational equivalence]
  \label{def:equivalence}
  A proof $\vdash t : A$ is observationally equivalent to a proof $\vdash u : A$ (written $t \equiv u$) if, for every elimination context ${\_}^A \vdash K : \one$, we have
  \[
  K[t] \lra^* a.\star \quad \text{if and only if} \quad K[u] \lra^* a.\star.
  \]
\end{definition}

\begin{restatable}[Adequacy]{theorem}{semanticsadequacy}
  \label{thm:semanticsadequacy}
  Let $\vdash t:A$ and $\vdash r:A$. If $\interpretation{t} = \interpretation{r}$, then $t \equiv r$.
\end{restatable}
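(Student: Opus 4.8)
The plan is to reduce observational equivalence to a statement about the scalars obtained by normalising $K[t]$ and $K[r]$, and then to use the monomorphism $\mono{\cdot}:\scalars\to\mathrm{Hom}(I,I)$ from Definition~\ref{def:abscat} to transfer equality of interpretations back to equality of scalars. Fix an arbitrary elimination context with $\_^A\vdash K:\one$; by Definition~\ref{def:equivalence} it suffices to prove that for every scalar $a$ one has $K[t]\lras a.\star$ if and only if $K[r]\lras a.\star$.

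First I would show that the hypothesis $\interpretation{t}=\interpretation{r}$ propagates through the context, i.e.\ $\interpretation{K[t]}=\interpretation{K[r]}$. Since $K[s]=(s/\_)K$ is the substitution of the closed proof $s$ for the hole variable $\_$, this is an instance of the substitution lemmas. When $\varnothing;\_^A\vdash K:\one$ (linear hole) I would apply Lemma~\ref{lem:linearsubstitution} with $\Upsilon=\Gamma=\Delta=\varnothing$; when $\_^A;\varnothing\vdash K:\one$ (intuitionistic hole) I would apply Lemma~\ref{lem:nonlinearsubstitution} with $\Upsilon=\Gamma=\varnothing$. Each lemma expresses $\interpretation{K[s]}$ as a fixed composite that mentions $s$ only through its denotation --- through $\interpretation{s}$ in the linear case, and through $\interpretation{\oc s}$ in the intuitionistic case, which is itself functorially determined by $\interpretation{s}$. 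In either case $\interpretation{t}=\interpretation{r}$ forces $\interpretation{K[t]}=\interpretation{K[r]}$.

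Next I would normalise. By strong normalisation (Theorem~\ref{thm:ST}) and confluence (Theorem~\ref{thm:Confluence}), the closed proofs $K[t]$ and $K[r]$ of $\one$ each have a unique normal form, which by the introduction property (Theorem~\ref{thm:introductions}) must have the shape $a.\star$; write $K[t]\lras a.\star$ and $K[r]\lras b.\star$. Applying soundness (Theorem~\ref{thm:soundness}) along these reductions, extended from single steps to $\lras$ by transitivity, gives $\interpretation{K[t]}=\interpretation{a.\star}$ and $\interpretation{K[r]}=\interpretation{b.\star}$; by the $\one_i(a)$ clause of Definition~\ref{def:ruleinterpretation} with empty $\Upsilon$ (so that $e_\varnothing$ and the unit isomorphisms are identities up to coherence) these are, respectively, $\mono{a}$ and $\mono{b}$ as morphisms $I\to I$. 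Combining with $\interpretation{K[t]}=\interpretation{K[r]}$ yields $\mono{a}=\mono{b}$, and since $\mono{\cdot}$ is a monomorphism, hence injective on scalars, we conclude $a=b$.

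Finally, having that $K[t]$ and $K[r]$ share the single normal form $a.\star=b.\star$, uniqueness of normal forms gives, for every scalar $c$, that $K[t]\lras c.\star$ holds iff $c=a$ iff $K[r]\lras c.\star$; this is exactly the biconditional required by Definition~\ref{def:equivalence}, and as $K$ was arbitrary we obtain $t\equiv r$. I expect the only delicate point to be the first step, the compositionality $\interpretation{K[t]}=\interpretation{K[r]}$: one must check that the substitution lemmas genuinely apply to elimination contexts with the hole possibly in the intuitionistic zone, and that with all ambient contexts empty their composites depend on the plugged proof solely via its denotation. The monomorphism hypothesis of Definition~\ref{def:abscat} is the crux that makes the model adequate --- without injectivity of $\mono{\cdot}$ distinct scalars could be identified and the biconditional would fail.
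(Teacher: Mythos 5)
Your proof is correct, but it takes a genuinely different route from the paper's. The paper proves adequacy by induction on the size of $A$, with a case analysis on the top connective: for $\one$, $\multimap$ and $\with$ it normalises $t$ and $r$ to introduction forms and extracts equality of the component interpretations (via the monomorphism, the bijection arising from the adjunction, and the biproduct projections, respectively) before invoking the induction hypothesis, while for $\otimes$, $\oplus$ and $\oc$ it peels a single elimination off the context and applies the induction hypothesis at the smaller result type --- this is precisely where the size side-conditions of Definition~\ref{def:eliminationContext} are used, to make the induction well-founded. You instead dispense with the induction altogether: since an elimination context is just a proof-term with one distinguished variable, a single application of Lemma~\ref{lem:linearsubstitution} (linear hole) or Lemma~\ref{lem:nonlinearsubstitution} (intuitionistic hole), with all ambient contexts empty, shows that $\interpretation{K[t]}$ depends on $t$ only through $\interpretation{t}$ (respectively through $\oc\interpretation{t}$, which is functorial in $\interpretation{t}$); after that, normalisation of $K[t]$ and $K[r]$ at type $\one$, soundness, and injectivity of $\mono{\cdot}$ close the argument uniformly for every connective, and the size conditions of Definition~\ref{def:eliminationContext} are never needed. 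The paper's induction buys finer, component-wise information (for $\with$ it shows $t_i \equiv r_i$, for $\multimap$ that the bodies are equivalent under every closed argument), whereas your argument is shorter and treats all types at once; note that the paper itself already uses Lemma~\ref{lem:linearsubstitution} in exactly your spirit, but only locally, inside its $\multimap$ case. Two small points to tighten: you should also invoke subject reduction (Theorem~\ref{thm:SR}) so that soundness and the introduction property can legitimately be applied along the reduction sequences $K[t] \lras a.\star$ and $K[r] \lras b.\star$ (soundness is stated only for typed terms); and, strictly speaking, with empty contexts $\interpretation{a.\star}$ is the morphism $\mono{a}\circ\rho : I \otimes I \to I$, so you conclude $\mono{a}=\mono{b}$ only after cancelling the unit isomorphism --- this is the ``up to coherence'' step you allude to, and it is exactly what the paper does in its $\one$ case.
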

\begin{proof}
  By induction on the size of $A$.
  See Appendix~\ref{proof:adequacy}.
\end{proof}

\section{Conclusion}
In this paper, we have presented the $\OC$-calculus, an extension of the
$\mathcal{L^S}$-calculus with the
exponential connective, making it a more
expressive language.  We have proved all its correctness properties, and stated its
algebraic linearity for the linear fragment, which is naturally kept.

The $\mathcal{L^S}$-calculus was originally introduced as a core
language for quantum computing. Its ability to represent matrices and vectors
makes it suitable for expressing quantum programs when taking $\mathcal
S=\mathbb C$.  Moreover, by taking $\mathcal S=\mathbb R^+$, one can consider a
probabilistic language, and by taking $\mathcal S=\{\star\}$, a linear
extension of the parallel lambda calculus~\cite{BoudolIC94}.

To consider this calculus as a proper quantum language, we would need not only
to ensure algebraic linearity but also to ensure unitarity, using techniques
such as those in~\cite{DiazcaroGuillermoMiquelValironLICS19}.
Also, the language $\mathcal{L^S}$ can be extended with a non-deterministic connective $\odot$~\cite{odot}, from which a quantum measurement operator can be encoded. We did not add such a connective to our presentation, to stay in a pure linear logic setting, however, the extension is straightforward.
Another future work is to extend the categorical model
of the $\mathcal L\odot^{\mathcal S}$-calculus given in~\cite{DiazcaroMalherbe24}.
To accommodate the polymorphic version of the $\OC$-calculus~\cite{DiazcaroDowekIvniskyMalherbeWoLLIC2024}, we would need to use hyperdoctrines~\cite{Crole},
following the approach of~\cite{Maneggia}, a direction we are willing to pursue.

\section{Acknowledgments}
This work is supported by the European Union through the MSCA SE project QCOMICAL (Grant Agreement ID: 101182520), by the Plan France 2030 through the PEPR integrated project EPiQ (ANR-22-PETQ-0007), and by the Uruguayan CSIC grant 22520220100073UD.

\bibliographystyle{abbrv}
\bibliography{biblio-paper}

\appendix
\section{Proofs of the correctness properties (Section~\ref{sec:correctness})}
  \subsection{Theorem~\ref{thm:SR} (Subject Reduction)}\label{proof:SR}
  As usual, we need a substitution lemma to prove subject reduction (Theorem~\ref{thm:SR}).

\begin{lemma}[Substitution]
  \label{lem:polysubstitution}
  ~
  \begin{enumerate}
    \item If $\Upsilon;\Gamma, x^B \vdash t:A$ and $\Upsilon;\Delta \vdash u:B$, then $\Upsilon;\Gamma, \Delta \vdash (u/x)t:A$.
    \item If $\Upsilon, x^B;\Gamma \vdash t:A$ and $\Upsilon;\varnothing\vdash u:B$, then $\Upsilon;\Gamma\vdash (u/x)t:A$.
  \end{enumerate}
\end{lemma}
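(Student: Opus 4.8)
The plan is to prove the two statements independently, each by induction on the derivation of the typing judgement for $t$ (equivalently, by induction on the structure of $t$). The two parts call for genuinely different reasoning: in part~1 the substituted variable lives in the linear context, which is split multiplicatively by the binary rules, whereas in part~2 it lives in the intuitionistic context, which is shared additively across all premises. A preliminary step common to both inductions is to $\alpha$-rename the bound variables of $t$ so that they differ from $x$ and from $\fv(u)$, avoiding capture when the substitution is pushed under a binder.

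For part~1, the base case is lin-ax with $t = x$: here $\Gamma = \varnothing$, $A = B$, the substitution returns $u$, and the goal $\Upsilon;\Delta\vdash u:B$ is exactly the second hypothesis. The rules whose conclusion forces an empty linear context—namely ax, $\one_i$, and $\oc_i$—cannot be the last rule, since the linear context $\Gamma, x^B$ is nonempty, so these cases are vacuous; the rules that simply permit an arbitrary (weakenable) linear context, such as $\top_i$ and $\zero_e$, are handled directly, applying the induction hypothesis to the premise when $x$ occurs there and otherwise leaving the term unchanged. For a binary rule that splits the linear context multiplicatively ($\one_e$, $\multimap_e$, $\otimes_i$, $\otimes_e$, and the scrutinee premise of $\oplus_e$, $\with_{e1}$, $\with_{e2}$, $\oc_e$), linearity guarantees that $x^B$ lies in exactly one of the two sub-contexts; I apply the induction hypothesis to that premise, leave the other untouched, and reassemble the rule, merging $\Delta$ into the appropriate side. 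For the context-preserving rules (sum, prod, $\with_i$, and the two shared branches of the elimination rules) the variable $x$ occurs in every shared premise, so I apply the induction hypothesis to each and recombine, which is sound precisely because these rules keep the linear context identical across premises.

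For part~2, the base cases are the two axiom rules: if $t = x$ then $\Gamma = \varnothing$, $A = B$, and the result is $u$ with $\Upsilon;\varnothing\vdash u:B$; if $t$ is any other variable the substitution is the identity and the judgement stays typeable after deleting $x^B$ from $\Upsilon$. For every compound rule the intuitionistic context $\Upsilon, x^B$ is inherited by all premises, so $x$ may occur in each; I apply the induction hypothesis to every premise and recombine. The decisive feature is that $u$ is typed with an \emph{empty} linear context, so it can be freely duplicated across premises, discarded, and—crucially—inserted where an empty linear context is required, most notably in the premise of $\oc_i$, where $\Upsilon, x^B;\varnothing\vdash t':A'$ yields $\Upsilon;\varnothing\vdash (u/x)t':A'$ and $\oc_i$ applies again. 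When a rule extends the intuitionistic context (the second premise of $\oc_e$ introduces $y^{A'}$), I first weaken $u$ to $\Upsilon, y^{A'};\varnothing\vdash u:B$, which is legitimate since the system admits intuitionistic weakening, and then invoke the induction hypothesis.

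The obstacles are purely structural and concentrated in the exponential rules. In part~1 the point to check is that the multiplicative split really does place $x^B$ in a single branch—this is exactly where linearity of the system is used. In part~2 the point to check is that the empty-linear-context hypothesis on $u$ is precisely what licenses both the $\oc_i$ case and the duplication of $u$ across the shared premises of the additive rules. Once these two observations are in place, all remaining cases are routine rule-by-rule verifications.
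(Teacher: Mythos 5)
Your proposal is correct and follows essentially the same route as the paper's proof: induction on the structure of $t$, with the linear variable located in exactly one branch of each multiplicatively split rule, additive rules requiring the induction hypothesis in every shared premise, the empty-linear-context cases (ax, $\one_i$, $\oc_i$) vacuous in part~1, and the empty linear context of $u$ licensing duplication and the $\oc_i$ case in part~2. Your explicit invocation of intuitionistic weakening for the $\oc_e$ case of part~2 is slightly more careful than the paper, which uses that weakening implicitly (as it announces when presenting the typing system).
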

\begin{proof}
  ~
  \begin{enumerate}
    \item By induction on $t$.
      \begin{itemize}
	\item If $t = x$, then $\Gamma$ is empty and $A = B$. Thus, $\Upsilon;\Gamma, \Delta \vdash (u/x)t:A$ is the same as $\Upsilon;\Delta \vdash u:B$ and this is valid by hypothesis.

	\item If $t = v \plus w$, then $\Upsilon;\Gamma, x^B \vdash v:A$ and $\Upsilon;\Gamma, x^B \vdash w:A$. By the induction hypothesis, $\Upsilon;\Gamma, \Delta \vdash (u/x)v:A$ and $\Upsilon;\Gamma, \Delta \vdash (u/x)w:A$. Therefore, by rule sum, $\Upsilon;\Gamma, \Delta \vdash (u/x)(v \plus w):A$.

	\item If $t = a \dotprod v$, then $\Upsilon;\Gamma, x^B \vdash v:A$. By the induction hypothesis, $\Upsilon;\Gamma, \Delta \vdash (u/x)v:A$. Therefore, by rule prod($a$), $\Upsilon;\Gamma, \Delta \vdash (u/x)(a \dotprod v):A$.

	\item If $t = \elimone(v, w)$, then there are $\Gamma_1, \Gamma_2$ such that $\Gamma = \Gamma_1, \Gamma_2$ and there are two cases.
	  \begin{itemize}
	    \item If $\Upsilon;\Gamma_1, x^B \vdash v:\one$ and $\Upsilon;\Gamma_2 \vdash w:A$, by the induction hypothesis $\Upsilon;\Gamma_1, \Delta \vdash (u/x)v:\one$. By rule $\one_e$, $\Upsilon;\Gamma, \Delta \vdash \elimone((u/x)v, w):A$.
	    \item If $\Upsilon;\Gamma_1 \vdash v:\one$ and $\Upsilon;\Gamma_2, x^B \vdash w:A$, by the induction hypothesis $\Upsilon;\Gamma_2, \Delta \vdash (u/x)w:A$. By rule $\one_e$, $\Upsilon;\Gamma, \Delta \vdash \elimone(v,(u/x)w):A$.
	  \end{itemize}
	  Therefore, $\Upsilon;\Gamma, \Delta \vdash (u/x)\elimone(v,w):A$.

	\item If $t = \lambda y^C.v$, then $A = C \multimap D$ and $\Upsilon;\Gamma, x^B, y^C \vdash v:D$. By the induction hypothesis, $\Upsilon;\Gamma, y^C, \Delta \vdash (u/x)v:D$. Therefore, by rule $\multimap_i$, $\Upsilon;\Gamma, \Delta \vdash (u/x)\lambda y^C.v:C \multimap D$.

	\item If $t = v~w$, then there are $\Gamma_1$ and $\Gamma_2$ such that $\Gamma = \Gamma_1, \Gamma_2$ and there are two cases.
	  \begin{itemize}
	    \item If $\Upsilon;\Gamma_1, x^B \vdash v:C \multimap A$ and $\Upsilon;\Gamma_2 \vdash w:C$, by the induction hypothesis $\Upsilon;\Gamma_1, \Delta \vdash (u/x)v:C \multimap A$. By rule $\multimap_e$, $\Upsilon;\Gamma, \Delta \vdash (u/x)v~w:A$.
	    \item If $\Upsilon;\Gamma_1 \vdash v:C \multimap A$ and $\Upsilon;\Gamma_2, x^B \vdash w:C$, by the induction hypothesis $\Upsilon;\Gamma_2, \Delta \vdash (u/x)w:C$. By rule $\multimap_e$, $\Upsilon;\Gamma, \Delta \vdash v~(u/x)w:A$.
	  \end{itemize}
	  Therefore, $\Upsilon;\Gamma, \Delta \vdash (u/x)(v~w):A$.

	\item If $t = v \otimes w$, then $A = C \otimes D$ and there are $\Gamma_1$ and $\Gamma_2$ such that $\Gamma = \Gamma_1, \Gamma_2$ and there are two cases.
	  \begin{itemize}
	    \item If $\Upsilon;\Gamma_1, x^B \vdash v:C$ and $\Upsilon;\Gamma_2 \vdash w:D$, by the induction hypothesis $\Upsilon;\Gamma_1, \Delta \vdash (u/x)v:C$. By rule $\otimes_i$, $\Upsilon;\Gamma, \Delta \vdash ((u/x)v) \otimes w:C \otimes D$.
	    \item If $\Upsilon;\Gamma_1 \vdash v:C$ and $\Upsilon;\Gamma_2, x^B \vdash w:D$, by the induction hypothesis $\Upsilon;\Gamma_2, \Delta \vdash (u/x)w:D$. By rule $\otimes_i$, $\Upsilon;\Gamma, \Delta \vdash v \otimes ((u/x)w):C \otimes D$.
	  \end{itemize}
	  Therefore, $\Upsilon;\Gamma, \Delta \vdash (u/x)(v \otimes w):C \otimes D$.

	\item If $t = \elimtens(v, y^C z^D.w)$, then there are $\Gamma_1$ and $\Gamma_2$ such that $\Gamma = \Gamma_1, \Gamma_2$ and there are two cases.
	  \begin{itemize}
	    \item If $\Upsilon;\Gamma_1, x^B \vdash v:C \otimes D$ and $\Upsilon;\Gamma_2, y^C, z^D \vdash w:A$, by the induction hypothesis $\Upsilon;\Gamma_1, \Delta \vdash (u/x)v:C \otimes D$. By rule $\otimes_e$, $\Upsilon;\Gamma, \Delta \vdash \elimtens((u/x)v, y^C z^D.w):A$.
	    \item If $\Upsilon;\Gamma_1 \vdash v:C \otimes D$ and $\Upsilon;\Gamma_2, x^B, y^C, z^D \vdash w:A$, by the induction hypothesis $\Upsilon;\Gamma_2, y^C, z^D, \Delta \vdash (u/x)w:A$. By rule $\otimes_e$, $\Upsilon;\Gamma, \Delta \vdash \elimtens(v, y^C z^D.(u/x)w):A$.
	  \end{itemize}
	  Therefore, $\Upsilon;\Gamma, \Delta \vdash (u/x)\elimtens(v, y^C z^D.w):A$.

	\item If $t = \langle\rangle$, then $A = \top$. By rule $\top_i$ $\Upsilon;\Gamma, \Delta \vdash \langle\rangle:\top$.

	\item If $t = \elimzero(v)$, then there are $\Gamma_1$ and $\Gamma_2$ such that $\Gamma = \Gamma_1, \Gamma_2$ and there are two cases.
	  \begin{itemize}
	    \item If $\Upsilon;\Gamma_1, x^B \vdash v:\zero$, by the induction hypothesis $\Upsilon;\Gamma_1, \Delta \vdash (u/x)v:\zero$. By rule $\zero_e$, $\Upsilon;\Gamma, \Delta \vdash \elimzero((u/x)v):A$.
	    \item If $\Upsilon;\Gamma_1 \vdash v:\zero$ and $x \notin \fv(v)$, by rule $\zero_e$ $\Upsilon;\Gamma, \Delta \vdash \elimzero(v):A$.
	  \end{itemize}
	  Therefore, $\Upsilon;\Gamma, \Delta \vdash (u/x)\elimzero(v):A$.

	\item If $t = \langle v, w \rangle$, then $A = C \with D$, $\Upsilon;\Gamma, x^B \vdash v:C$ and $\Upsilon;\Gamma, x^B \vdash w:D$. By the induction hypothesis, $\Upsilon;\Gamma, \Delta \vdash (u/x)v:C$ and $\Upsilon;\Gamma, \Delta \vdash (u/x)w:D$. By rule $\with_i$, $\Upsilon;\Gamma, \Delta \vdash (u/x)\langle v,w \rangle:C \with D$.

	\item If $t = \elimwith^1(v, y^C.w)$, then there are $\Gamma_1$ and $\Gamma_2$ such that $\Gamma = \Gamma_1, \Gamma_2$ and there are two cases.
	  \begin{itemize}
	    \item If $\Upsilon;\Gamma_1, x^B \vdash v:C \with D$ and $\Upsilon;\Gamma_2, y^C \vdash w:A$, by the induction hypothesis $\Upsilon;\Gamma_1, \Delta \vdash (u/x)v:C \with D$. By rule $\with_{e1}$, $\Upsilon;\Gamma, \Delta \vdash \elimwith^1((u/x)v, y^C.w):A$.
	    \item If $\Upsilon;\Gamma_1 \vdash v:C \with D$ and $\Upsilon;\Gamma_2, x^B, y^C \vdash w:A$, by the induction hypothesis $\Upsilon;\Gamma_2, y^C, \Delta \vdash (u/x)w:A$. By rule $\with_{e1}$, $\Upsilon;\Gamma, \Delta \vdash \elimwith^1(v, y^C.(u/x)w):A$.
	  \end{itemize}
	  Therefore, $\Upsilon;\Gamma, \Delta \vdash (u/x)\elimwith^1(v, y^C.w):A$.

	\item If $t = \elimwith^2(v, y^D.w)$, then there are $\Gamma_1$ and $\Gamma_2$ such that $\Gamma = \Gamma_1, \Gamma_2$ and there are two cases.
	  \begin{itemize}
	    \item If $\Upsilon;\Gamma_1, x^B \vdash v:C \with D$ and $\Upsilon;\Gamma_2, y^D \vdash w:A$, by the induction hypothesis $\Upsilon;\Gamma_1, \Delta \vdash (u/x)v:C \with D$. By rule $\with_{e2}$, $\Upsilon;\Gamma, \Delta \vdash \elimwith^2((u/x)v, y^D.w):A$.
	    \item If $\Upsilon;\Gamma_1 \vdash v:C \with D$ and $\Upsilon;\Gamma_2, x^B, y^D \vdash w:A$, by the induction hypothesis $\Upsilon;\Gamma_2, y^D, \Delta \vdash (u/x)w:A$. By rule $\with_{e2}$, $\Upsilon;\Gamma, \Delta \vdash \elimwith^2(v, y^D.(u/x)w):A$.
	  \end{itemize}
	  Therefore, $\Upsilon;\Gamma, \Delta \vdash (u/x)\elimwith^2(v, y^D.w):A$.

	\item If $t = \inl(v)$, then $A = C \oplus D$ and $\Upsilon;\Gamma, x^B \vdash v:C$. By the induction hypothesis, $\Upsilon;\Gamma, \Delta \vdash (u/x)v:C$. Therefore, by rule $\oplus_{i1}$, $\Upsilon;\Gamma, \Delta \vdash (u/x)\inl(v):C \oplus D$.

	\item If $t = \inr(v)$, then $A = C \oplus D$ and $\Upsilon;\Gamma, x^B \vdash v:D$. By the induction hypothesis, $\Upsilon;\Gamma, \Delta \vdash (u/x)v:D$. Therefore, by rule $\oplus_{i2}$, $\Upsilon;\Gamma, \Delta \vdash (u/x)\inr(v):C \oplus D$.

	\item If $t = \elimplus(v, y^C.w, z^D.s)$, then there are $\Gamma_1$ and $\Gamma_2$ such that $\Gamma = \Gamma_1, \Gamma_2$ and there are two cases.
	  \begin{itemize}
	    \item If $\Upsilon;\Gamma_1, x^B \vdash v:C \oplus D$, $\Upsilon;\Gamma_2, y^C \vdash w:A$ and $\Upsilon;\Gamma_2, z^D \vdash s:A$, by the induction hypothesis $\Upsilon;\Gamma_1, \Delta \vdash (u/x)v:C \oplus D$. By rule $\oplus_e$, $\Upsilon;\Gamma, \Delta \vdash \elimplus((u/x)v, y^C.w, z^D.s):A$.
	    \item If $\Upsilon;\Gamma_1 \vdash v:C \oplus D$, $\Upsilon;\Gamma_2, x^B, y^C \vdash w:A$ and $\Upsilon;\Gamma_2, x^B, z^D \vdash s:A$, by the induction hypothesis $\Upsilon;\Gamma_2, y:C \vdash (u/x)w:A$ and $\Upsilon;\Gamma_2, z:D \vdash (u/x)s:A$. By rule $\oplus_e$,\\ $\Upsilon;\Gamma, \Delta \vdash \elimplus(v, y^C.(u/x)w, z^D.(u/x)s):A$.
	  \end{itemize}
	  Therefore, $\Upsilon;\Gamma, \Delta \vdash (u/x)\elimplus(v, y^C.w, z^D.s):A$.

	\item If $t = \oc v$, this is not possible since the linear context should be empty.

	\item If $t = \elimbang(v, y^C.w)$, then there are $\Gamma_1$ and $\Gamma_2$ such that $\Gamma = \Gamma_1, \Gamma_2$ and there are two cases.
	  \begin{itemize}
	    \item If $\Upsilon; \Gamma_1, x^B \vdash v:\oc C$ and $\Upsilon, y^C; \Gamma_2 \vdash w:A$, by the induction hypothesis $\Upsilon; \Gamma_1, \Delta \vdash (u/x) v:\oc C$. By rule $\oc_e$, $\Upsilon; \Gamma, \Delta \vdash \elimbang((u/x)v, y^C.w):A$.
	    \item If $\Upsilon; \Gamma_1 \vdash v:\oc C$ and $\Upsilon, y^C; \Gamma_2, x^B \vdash w:A$, by the induction hypothesis $\Upsilon, y^C; \Gamma_2, \Delta \vdash (u/x)w:A$. By rule $\oc_e$,\\ $\Upsilon; \Gamma, \Delta \vdash \elimbang(v, y^C.(u/x)w):A$.
	  \end{itemize}
	  Therefore, $\Upsilon; \Gamma, \Delta \vdash (u/x)\elimbang(v, y^C.w):A$.

      \end{itemize}
    \item By induction on $t$.
      \begin{itemize}
	\item If $t = x$, then $\Gamma$ is empty and $A = B$. Thus, $\Upsilon;\Gamma \vdash (u/x)t:A$ is the same as $\Upsilon;\varnothing \vdash u:B$ and this is valid by hypothesis.

	\item If $t = y \neq x$, then either $\Gamma = \{y^A\}$ or $\Gamma$ is empty and $y^A \in \Upsilon$.
	  \begin{itemize}
	    \item In the first case, $\Upsilon; y^A \vdash y:A$ by rule lin-ax.
	    \item In the second case, $\Upsilon; \varnothing \vdash y:A$ by rule ax.
	  \end{itemize}
	  Therefore, $\Upsilon; \Gamma \vdash y:A$.

	\item If $t = v \plus w$, then $\Upsilon, x^B;\Gamma \vdash v:A$ and $\Upsilon, x^B;\Gamma \vdash w:A$. By the induction hypothesis, $\Upsilon;\Gamma \vdash (u/x)v:A$ and $\Upsilon;\Gamma \vdash (u/x)w:A$. Therefore, by rule sum, $\Upsilon;\Gamma \vdash (u/x)(v \plus w):A$.

	\item If $t = a \dotprod v$, then $\Upsilon, x^B;\Gamma \vdash v:A$. By the induction hypothesis, $\Upsilon;\Gamma \vdash (u/x)v:A$. Therefore, by rule prod($a$), $\Upsilon;\Gamma \vdash (u/x)(a \dotprod v):A$.

	\item If $t = \elimone(v, w)$, then there are $\Gamma_1, \Gamma_2$ such that $\Gamma = \Gamma_1, \Gamma_2$, $\Upsilon, x^B; \Gamma_1 \vdash v:\one$ and $\Upsilon, x^B; \Gamma_2 \vdash w:A$. By the induction hypothesis, $\Upsilon; \Gamma_1 \vdash (u/x)v:\one$ and $\Upsilon; \Gamma_2 \vdash (u/x)w:A$. By rule $\one_e$, $\Upsilon; \Gamma \vdash (u/x)\elimone(v, w):A$.

	\item If $t = \lambda y^C.v$, then $A = C \multimap D$ and $\Upsilon, x^B;\Gamma, y^C \vdash v:D$. By the induction hypothesis, $\Upsilon;\Gamma, y^C \vdash (u/x)v:D$. Therefore, by rule $\multimap_i$, $\Upsilon;\Gamma \vdash (u/x)\lambda y^C.v:C \multimap D$.

	\item If $t = v~w$, then there are $\Gamma_1$ and $\Gamma_2$ such that $\Gamma = \Gamma_1, \Gamma_2$, $\Upsilon, x^B; \Gamma_1 \vdash v:C \multimap A$ and $\Upsilon, x^B; \Gamma_2 \vdash w:C$. By the induction hypothesis, $\Upsilon; \Gamma_1 \vdash (u/x)v:C \multimap A$ and $\Upsilon; \Gamma_2 \vdash (u/x)w:C$. By rule $\multimap_e$, $\Upsilon;\Gamma \vdash (u/x)(v~w):A$.

	\item If $t = v \otimes w$, then $A = C \otimes D$ and there are $\Gamma_1$ and $\Gamma_2$ such that $\Gamma = \Gamma_1, \Gamma_2$, $\Upsilon, x^B; \Gamma_1 \vdash v:C$ and $\Upsilon, x^B; \Gamma_2 \vdash w:D$. By the induction hypothesis, $\Upsilon; \Gamma_1 \vdash (u/x)v:C$ and $\Upsilon; \Gamma_2 \vdash (u/x)w:D$. By rule $\otimes_i$, $\Upsilon;\Gamma \vdash (u/x)(v \otimes w):C \otimes D$.

	\item If $t = \elimtens(v, y^C z^D.w)$, then there are $\Gamma_1$ and $\Gamma_2$ such that $\Gamma = \Gamma_1, \Gamma_2$, $\Upsilon, x^B; \Gamma_1 \vdash v:C \otimes D$ and $\Upsilon, x^B; \Gamma_2, y^C, z^D \vdash w:A$. By the induction hypothesis, $\Upsilon; \Gamma_1 \vdash (u/x)v:C \otimes D$ and $\Upsilon; \Gamma_2, y^C, z^D \vdash (u/x)w:A$. By rule $\otimes_e$, $\Upsilon;\Gamma \vdash (u/x)\elimtens(v, y^C z^D.w):A$.

	\item If $t = \langle\rangle$, then $A = \top$. By rule $\top_i$ $\Upsilon;\Gamma \vdash \langle\rangle:\top$.

	\item If $t = \elimzero(v)$, then there are $\Gamma_1$ and $\Gamma_2$ such that $\Gamma = \Gamma_1, \Gamma_2$ and $\Upsilon, x^B; \Gamma_1 \vdash v:\zero$. By the induction hypothesis $\Upsilon; \Gamma_1 \vdash (u/x)v:\zero$. By rule $\zero_e$, $\Upsilon;\Gamma \vdash (u/x)\elimzero(v):A$.

	\item If $t = \langle v, w \rangle$, then $A = C \with D$, $\Upsilon, x^B;\Gamma \vdash v:C$ and $\Upsilon, x^B;\Gamma \vdash w:D$. By the induction hypothesis, $\Upsilon;\Gamma, \vdash (u/x)v:C$ and $\Upsilon;\Gamma, \vdash (u/x)w:D$. By rule $\with_i$, $\Upsilon;\Gamma \vdash (u/x)\langle v,w \rangle:C \with D$.

	\item If $t = \elimwith^1(v, y^C.w)$, then there are $\Gamma_1$ and $\Gamma_2$ such that $\Gamma = \Gamma_1, \Gamma_2$, $\Upsilon, x^B; \Gamma_1 \vdash v:C \with D$ and $\Upsilon, x^B; \Gamma_2, y^C \vdash w:A$. By the induction hypothesis, $\Upsilon; \Gamma_1 \vdash (u/x)v:C \with D$ and $\Upsilon; \Gamma_2, y^C \vdash (u/x)w:A$. By rule $\with_{e1}$, $\Upsilon;\Gamma \vdash (u/x)\elimwith^1(v, y^C.w):A$.

	\item If $t = \elimwith^2(v, y^D.w)$, then there are $\Gamma_1$ and $\Gamma_2$ such that $\Gamma = \Gamma_1, \Gamma_2$, $\Upsilon, x^B; \Gamma_1 \vdash v:C \with D$ and $\Upsilon, x^B; \Gamma_2, y^C \vdash w:A$. By the induction hypothesis, $\Upsilon; \Gamma_1 \vdash (u/x)v:C \with D$ and $\Upsilon; \Gamma_2, y^C \vdash (u/x)w:A$. By rule $\with_{e2}$, $\Upsilon;\Gamma \vdash (u/x)\elimwith^2(v, y^D.w):A$.

	\item If $t = \inl(v)$, then $A = C \oplus D$ and $\Upsilon, x^B;\Gamma \vdash v:C$. By the induction hypothesis, $\Upsilon;\Gamma \vdash (u/x)v:C$. Therefore, by rule $\oplus_{i1}$, $\Upsilon;\Gamma \vdash (u/x)\inl(v):C \oplus D$.

	\item If $t = \inr(v)$, then $A = C \oplus D$ and $\Upsilon, x^B;\Gamma \vdash v:D$. By the induction hypothesis, $\Upsilon;\Gamma \vdash (u/x)v:D$. Therefore, by rule $\oplus_{i2}$, $\Upsilon;\Gamma \vdash (u/x)\inr(v):C \oplus D$.

	\item If $t = \elimplus(v, y^C.w, z^D.s)$, then there are $\Gamma_1$ and $\Gamma_2$ such that $\Gamma = \Gamma_1, \Gamma_2$, $\Upsilon, x^B; \Gamma_1 \vdash v:C \oplus D$, $\Upsilon, x^B; \Gamma_2, y^C \vdash w:A$ and $\Upsilon, x^B; \Gamma_2, z^D \vdash s:A$. By the induction hypothesis, $\Upsilon; \Gamma_1 \vdash (u/x)v:C \oplus D$, $\Upsilon; \Gamma_2, y^C \vdash (u/x)w:A$ and $\Upsilon; \Gamma_2, z^D \vdash (u/x)s:A$. By rule $\oplus_e$,\\ $\Upsilon;\Gamma \vdash (u/x)\elimplus(v, y^C.w, z^D.s):A$.

	\item If $t = \oc v$, then $A = \oc C$, $\Gamma$ is empty and $\Upsilon, x^B; \varnothing \vdash v:C$. By the induction hypothesis $\Upsilon; \varnothing \vdash (u/x)v:C$, and by rule $\oc_i$ $\Upsilon; \varnothing \vdash (u/x)\oc v: \oc C$.

	\item If $t = \elimbang(v, y^C.w)$, then there are $\Gamma_1$ and $\Gamma_2$ such that $\Gamma = \Gamma_1, \Gamma_2$, $\Upsilon, x^B; \Gamma_1 \vdash v:\oc C$ and $\Upsilon, x^B; \Gamma_2, y^C \vdash w:A$. By the induction hypothesis, $\Upsilon; \Gamma_1 \vdash (u/x)v:\oc C$ and $\Upsilon; \Gamma_2, y^C \vdash (u/x)w:A$. By rule $\oc_e$, $\Upsilon; \Gamma \vdash (u/x)\elimbang(v, y^C.w):A$.\qedhere
      \end{itemize}
  \end{enumerate}
\end{proof}

\SR*
\begin{proof}
  By induction on the relation $\lra$. 
  \begin{itemize}
    \item If $t = \elimone(a.\star, v)$ and $u = a \dotprod v$, then $\Upsilon;\Gamma \vdash v:A$. Therefore, by rule prod($a$), $\Upsilon;\Gamma \vdash a \dotprod v:A$.

    \item If $t = (\lambda x^B.v_1)~v_2$ and $u = (v_2/x)v_1$, then there are $\Gamma_1$ and $\Gamma_2$ such that $\Gamma = \Gamma_1, \Gamma_2$, $\Upsilon;\Gamma_1, x^B \vdash v_1:A$ and $\Upsilon;\Gamma_2 \vdash v_2:B$. By Lemma~\ref{lem:polysubstitution}, $\Upsilon;\Gamma \vdash (v_2/x)v_1:A$.

    \item If $t = \elimtens(v_1 \otimes v_2, x^B y^C.v_3)$ and $u = (v_1/x, v_2/y)v_3$, then there are $\Gamma_1$, $\Gamma_2$ and $\Gamma_3$ such that $\Gamma = \Gamma_1, \Gamma_2, \Gamma_3$, $\Upsilon;\Gamma_1 \vdash v_1:B$, $\Upsilon;\Gamma_2 \vdash v_2:C$ and $\Upsilon;\Gamma_3, x^B, y^C \vdash v_3:A$. By Lemma~\ref{lem:polysubstitution} twice, $\Upsilon;\Gamma \vdash (v_1/x, v_2/y)v_3:A$.

    \item If $t = \elimwith^1(\langle v_1, v_2 \rangle, x^B.v_3)$ and $u = (v_1/x)v_3$, then there are $\Gamma_1$ and $\Gamma_2$ such that $\Gamma = \Gamma_1, \Gamma_2$, $\Upsilon;\Gamma_1 \vdash v_1:B$, $\Upsilon;\Gamma_1 \vdash v_2:C$ and $\Upsilon;\Gamma_2, x^B \vdash v_3:A$. By Lemma~\ref{lem:polysubstitution}, $\Upsilon;\Gamma \vdash (v_1/x)v_3:A$.

    \item If $t = \elimwith^2(\langle v_1, v_2 \rangle, x^B.v_3)$ and $u = (v_2/x)v_3$, then there are $\Gamma_1$ and $\Gamma_2$ such that $\Gamma = \Gamma_1, \Gamma_2$, $\Upsilon;\Gamma_1 \vdash v_1:C$, $\Upsilon;\Gamma_1 \vdash v_2:B$ and $\Upsilon;\Gamma_2, x^B \vdash v_3:A$. By Lemma~\ref{lem:polysubstitution}, $\Upsilon;\Gamma \vdash (v_2/x)v_3:A$.

    \item If $t = \elimplus(\inl(v_1), x^B.v_2, y^C.v_3)$ and $u = (v_1/x)v_2$, then there are $\Gamma_1$ and $\Gamma_2$ such that $\Gamma = \Gamma_1, \Gamma_2$, $\Upsilon;\Gamma_1 \vdash v_1:B$, $\Upsilon;\Gamma_2, x^B \vdash v_2:A$ and $\Upsilon;\Gamma_2, y^C \vdash v_3:A$. By Lemma~\ref{lem:polysubstitution}, $\Upsilon;\Gamma \vdash (v_1/x)v_2:A$.

    \item If $t = \elimplus(\inr(v_1), x^B.v_2, y^C.v_3)$ and $u = (v_1/y)v_3$, then there are $\Gamma_1$ and $\Gamma_2$ such that $\Gamma = \Gamma_1, \Gamma_2$, $\Upsilon;\Gamma_1 \vdash v_1:C$, $\Upsilon;\Gamma_2, x^B \vdash v_2:A$ and $\Upsilon;\Gamma_2, y^C \vdash v_3:A$. By Lemma~\ref{lem:polysubstitution}, $\Upsilon;\Gamma \vdash (v_1/y)v_3:A$.

    \item If $t = \elimbang(\oc v_1, x^B.v_2)$ and $u = (v_1/x)v_2$, then $\Upsilon; \varnothing \vdash v_1:B$ and $\Upsilon, x^B; \Gamma \vdash v_2:A$. By Lemma~\ref{lem:polysubstitution}, $\Upsilon; \Gamma \vdash (v_1/x)v_2:A$.

    \item If $t = {a.\star} \plus b.\star$ and $u = (a+b).\star$, then $A = \one$ and $\Gamma$ is empty. By rule $\one_i$($a+b$), $\Upsilon; \varnothing \vdash (a+b).\star:\one$.

    \item If $t = (\lambda x^B.v_1) \plus (\lambda x^B.v_2)$ and $u = \lambda x^B.(v_1 \plus v_2)$, then $A = B \multimap C$, $\Upsilon; \Gamma, x^B \vdash v_1:C$ and $\Upsilon; \Gamma, x^B \vdash v_2:C$. By rule sum, $\Upsilon; \Gamma, x^B \vdash v_1 \plus v_2:C$. Therefore, by rule $\multimap_i$, $\Upsilon; \Gamma \vdash \lambda x^B.(v_1 \plus v_2): B \multimap C$.

    \item If $t = \elimtens(v_1 \plus v_2, x^B y^C.v_3)$ and $u = \elimtens(v_1, x^B y^C.v_3) \plus \elimtens(v_2, x^B y^C.v_3)$, then there are $\Gamma_1$ and $\Gamma_2$ such that $\Gamma = \Gamma_1, \Gamma_2$, $\Upsilon; \Gamma_1 \vdash v_1:B \otimes C$, $\Upsilon; \Gamma_1 \vdash v_2:B \otimes C$ and $\Upsilon; \Gamma_2, x^B, y^C \vdash v_3:A$. By rule $\otimes_e$, $\Upsilon; \Gamma \vdash \elimtens(v_1, x^B y^C.v_3):A$ and $\Upsilon; \Gamma \vdash \elimtens(v_2, x^B y^C.v_3):A$. Therefore, by rule sum, $\Upsilon; \Gamma \vdash \elimtens(v_1, x^B y^C.v_3) \plus \elimtens(v_2, x^B y^C.v_3):A$.

    \item If $t = \langle\rangle \plus \langle\rangle$ and $u = \langle\rangle$, then $A = \top$. By rule $\top_i$, $\Upsilon;\Gamma \vdash \langle\rangle:\top$.

    \item If $t = \langle v_1, v_2 \rangle \plus \langle v_3, v_4 \rangle$ and $u = \langle v_1 \plus v_3, v_2 \plus v_4 \rangle$, then $A = B \with C$, $\Upsilon; \Gamma \vdash v_1:B$, $\Upsilon; \Gamma \vdash v_2:C$, $\Upsilon; \Gamma \vdash v_3:B$ and $\Upsilon; \Gamma \vdash v_4:C$. By rule sum, $\Upsilon; \Gamma \vdash v_1 \plus v_3:B$ and $\Upsilon; \Gamma \vdash v_2 \plus v_4:C$. Therefore, by rule $\with_i$, $\Upsilon; \Gamma \vdash \langle v_1 \plus v_3, v_2 \plus v_4 \rangle:B \with C$.

    \item If $t = \elimplus(v_1 \plus v_2, x^B.v_3, y^C.v_4)$ and $u = \elimplus(v_1, x^B.v_3, y^C.v_4) \plus \elimplus(v_2, x^B.v_3, y^C.v_4)$, then there are $\Gamma_1$ and $\Gamma_2$ such that $\Gamma = \Gamma_1, \Gamma_2$, $\Upsilon; \Gamma_1 \vdash v_1:B \oplus C$, $\Upsilon; \Gamma_1 \vdash v_2:B \oplus C$, $\Upsilon; \Gamma_2, x^B \vdash v_3:A$ and $\Upsilon; \Gamma_2, y^C \vdash v_4:A$. By rule $\oplus_e$, $\Upsilon; \Gamma \vdash \elimplus(v_1, x^B.v_3, y^C.v_4):A$ and $\Upsilon; \Gamma \vdash \elimplus(v_2, x^B.v_3, y^C.v_4):A$. Therefore, by rule sum, $\Upsilon; \Gamma \vdash \elimplus(v_1, x^B.v_3, y^C.v_4) \plus \elimplus(v_2, x^B.v_3, y^C.v_4):A$.

    \item If $t = \elimbang(t \plus u, x^B.v)$ and $u = \elimbang(t, x^B.v)
      \plus \elimbang(u, x^B.v)$, the $\Gamma = \Gamma_1, \Gamma_2$,
      $\Upsilon;\Gamma_1 \vdash t:\oc B$, $\Upsilon;\Gamma_1 \vdash u:\oc B$
      and $\Upsilon, x^B; \Gamma_2 \vdash v:A$. By rule $\oc_e$,
      $\Upsilon;\Gamma \vdash \elimbang(t, x^B.v):A$ and $\Upsilon;\Gamma
      \vdash \elimbang(u, x^B.v):A$. By rule sum, $\Upsilon;\Gamma \vdash
    \elimbang(t, x^B.v) \plus \elimbang(u, x^B.v):A$.

    \item If $t = a \dotprod b.\star$ and $u = (a \times b).\star$, then $A = \one$ and $\Gamma$ is empty. By rule $\one_i$($a \times b$), $\Upsilon; \varnothing \vdash (a \times b).\star:\one$.

    \item If $t = a \dotprod \lambda x^B.v$ and $u = \lambda x^B.a \dotprod v$, then $A = B \multimap C$ and $\Upsilon; \Gamma, x^B \vdash v:C$. By rule prod($a$), $\Upsilon; \Gamma, x^B \vdash a \dotprod v:C$. Therefore, by rule $\multimap_i$, $\Upsilon; \Gamma \vdash \lambda x^B.a \dotprod v:B \multimap C$.

    \item If $t = \elimtens(a \dotprod v_1, x^B y^C.v_2)$ and $u = a \dotprod \elimtens(v_1, x^B y^C.v_2)$, then there are $\Gamma_1$ and $\Gamma_2$ such that $\Gamma = \Gamma_1, \Gamma_2$, $\Upsilon; \Gamma_1 \vdash v_1:B \otimes C$ and $\Upsilon; \Gamma_2, x^B, y^C \vdash v_2:A$. By rule $\otimes_e$, $\Upsilon; \Gamma \vdash \elimtens(v_1, x^B y^C.v_2):A$. Therefore, by rule prod($a$), $\Upsilon; \Gamma \vdash a \dotprod \elimtens(v_1, x^B y^C.v_2):A$.

    \item If $t = a \dotprod \langle\rangle$ and $u = \langle\rangle$, then $A = \top$. Therefore, by rule $\top_i$, $\Upsilon; \Gamma \vdash \langle\rangle:\top$.

    \item If $t = a \dotprod \langle v_1, v_2 \rangle$ and $u = \langle a \dotprod v_1, a \dotprod v_2 \rangle$, then $A = B \with C$, $\Upsilon; \Gamma \vdash v_1:B$ and $\Upsilon; \Gamma \vdash v_2:C$. By rule prod($a$), $\Upsilon; \Gamma \vdash a \dotprod v_1:B$ and $\Upsilon; \Gamma \vdash a \dotprod v_2:C$. Therefore, by rule $\with_i$, $\Upsilon; \Gamma \vdash \langle a \dotprod v_1, a \dotprod v_2 \rangle:B \with C$.

    \item If $t = \elimplus(a \dotprod v_1, x^B.v_2, y^C.v_3)$ and $u = a \dotprod \elimplus(v_1, x^B.v_2, y^C.v_3)$, then there are $\Gamma_1$ and $\Gamma_2$ such that $\Gamma = \Gamma_1, \Gamma_2$, $\Upsilon; \Gamma_1 \vdash v_1:B \oplus C$, $\Upsilon; \Gamma_2, x^B \vdash v_2:A$ and $\Upsilon; \Gamma_2, y^C \vdash v_3:A$. By rule $\oplus_e$, $\Upsilon; \Gamma \vdash \elimplus(v_1, x^B.v_2, y^C.v_3):A$. Therefore, by rule prod($a$), $\Upsilon; \Gamma \vdash a \dotprod \elimplus(v_1, x^B.v_2, y^C.v_3):A$.

    \item If $t = \elimbang(a \dotprod t, x^B.v)$ and $u = a \dotprod \elimbang(t, x^B.v)$, then $\Gamma = \Gamma_1, \Gamma_2$, $\Upsilon,\Gamma_1 \vdash t:\oc B$ and $\Upsilon,x^B;\Gamma_2 \vdash v:A$. By rule $\oc_e$, $\Upsilon;\Gamma \vdash \elimbang(t, x^B.v):A$. By rule prod($a$), $\Upsilon;\Gamma \vdash a \dotprod \elimbang(t, x^B.v):A$.
      \qedhere
  \end{itemize}
\end{proof}

\subsection{Lemma~\ref{lem:typeinterpretationsarerc}}\label{proof:ST}
\typeinterpretationsarerc*
\begin{proof}
  By induction on $A$.
  \begin{itemize}
    \item If $A = \one$, $\SN$ has the properties CR1, CR2, and CR3.
    \item If $A = B \multimap C$:
      \begin{itemize}
        \item Let $t \in \llbracket B \rrbracket \hatmultimap \llbracket C \rrbracket$, then $t \in \SN$.
        \item Let $t \in \llbracket B \rrbracket \hatmultimap \llbracket C \rrbracket$ such that $t \lra t'$. Then $t' \in \SN$, and if $t' \lras \lambda x^D.u$, $t \lras \lambda x^D.u$. Therefore, for all $v \in \llbracket B \rrbracket$, $(v/x)u \in \llbracket C \rrbracket$.
        \item Let $t$ be a proof-term that is not an introduction such that $Red(t) \subseteq \llbracket B \rrbracket \hatmultimap \llbracket C \rrbracket$. Since $Red(t) \subseteq \SN$, $t \in \SN$. If $t \lras \lambda x^D.u$, the rewrite sequence has at least one step because $t$ is not an introduction. Then, there is a proof-term $t' \in Red(t)$ such that $t' \lras \lambda x^D.u$. Therefore, for all $v \in \llbracket B \rrbracket$, $(v/x)u \in \llbracket C \rrbracket$.
      \end{itemize}
    \item If $A = B \otimes C$:
      \begin{itemize}
        \item Let $t \in \llbracket B \rrbracket \hatotimes \llbracket C \rrbracket$, then $t \in \SN$.
        \item Let $t \in \llbracket B \rrbracket \hatotimes \llbracket C \rrbracket$ such that $t \lra t'$. Then $t' \in \SN$, and if $t' \lras u \otimes v$, $t \lras u \otimes v$. Therefore, $u \in \llbracket B \rrbracket$ and $v \in \llbracket C \rrbracket$.
        \item Let $t$ be a proof-term that is not an introduction such that $Red(t) \subseteq \llbracket B \rrbracket \hatotimes \llbracket C \rrbracket$. Since $Red(t) \subseteq \SN$, $t \in \SN$. If $t \lras u \otimes v$, the rewrite sequence has at least one step because $t$ is not an introduction. Then, there is a proof-term $t' \in Red(t)$ such that $t' \lras u \otimes v$. Therefore, $u \in \llbracket B \rrbracket$ and $v \in \llbracket C \rrbracket$.
      \end{itemize}
    \item If $A = \top$, $\SN$ has the properties CR1, CR2, and CR3.
    \item If $A = \zero$, $\SN$ has the properties CR1, CR2, and CR3.
    \item If $A = B \with C$:
    \begin{itemize}
      \item Let $t \in \llbracket B \rrbracket \hatand \llbracket C \rrbracket$, then $t \in \SN$.
      \item Let $t \in \llbracket B \rrbracket \hatand \llbracket C \rrbracket$ such that $t \lra t'$. Then $t' \in \SN$, and if $t' \lras \langle u,v \rangle$, $t \lras \langle u,v \rangle$. Therefore, $u \in \llbracket B \rrbracket$ and $v \in \llbracket C \rrbracket$.
      \item Let $t$ be a proof-term that is not an introduction such that $Red(t) \subseteq \llbracket B \rrbracket \hatand \llbracket C \rrbracket$. Since $Red(t) \subseteq \SN$, $t \in \SN$. If $t \lras \langle u,v \rangle$, the rewrite sequence has at least one step because $t$ is not an introduction. Then, there is a proof-term $t' \in Red(t)$ such that $t' \lras \langle u,v \rangle$. Therefore, $u \in \llbracket B \rrbracket$ and $v \in \llbracket C \rrbracket$.
    \end{itemize}
    \item If $A = B \oplus C$:
      \begin{itemize}
        \item Let $t \in \llbracket B \rrbracket \hatoplus \llbracket C \rrbracket$, then $t \in \SN$.
        \item Let $t \in \llbracket B \rrbracket \hatoplus \llbracket C \rrbracket$ such that $t \lra t'$. Then $t' \in \SN$, and if $t' \lras \inl(u)$, $t \lras \inl(u)$. Therefore, $u \in \llbracket B \rrbracket$. If $t' \lras \inr(v)$, $t \lras \inr(v)$. Therefore, $v \in \llbracket C \rrbracket$.
        \item Let $t$ be a proof-term that is not an introduction such that $Red(t) \subseteq \llbracket B \rrbracket \hatoplus \llbracket C \rrbracket$. Since $Red(t) \subseteq \SN$, $t \in \SN$. If $t \lras \inl(u)$, the rewrite sequence has at least one step because $t$ is not an introduction. Then, there is a proof-term $t' \in Red(t)$ such that $t' \lras \inl(u)$. Therefore, $u \in \llbracket B \rrbracket$. If $t \lras \inr(v)$, the rewrite sequence has at least one step because $t$ is not an introduction. Then, there is a proof-term $t' \in Red(t)$ such that $t' \lras \inr(v)$. Therefore, $v \in \llbracket C \rrbracket$.
      \end{itemize}
    \item If $A = \oc B$:
      \begin{itemize}
	\item Let $t \in \hatbang\llbracket B\rrbracket$, then $t\in \SN$.
	\item Let $t \in \hatbang\llbracket B \rrbracket$ such that $t \lra t'$. Then $t' \in \SN$, and if $t' \lras \oc u$, $t \lras\oc u$. Therefore, $u \in \llbracket B \rrbracket$. 
	\item Let $t$ be a proof-term that is not an introduction such that $Red(t) \subseteq \hatbang\llbracket B \rrbracket$. Since $Red(t) \subseteq \SN$, $t \in \SN$. If $t \lras \oc u$, the rewrite sequence has at least one step because $t$ is not an introduction. Then, there is a proof-term $t' \in Red(t)$ such that $t' \lras \oc u$. Therefore, $u \in \llbracket B \rrbracket$.\qedhere 
      \end{itemize}
  \end{itemize}
\end{proof}

\subsection{Adequacy lemmas for the proof of Lemma~\ref{lem:adequacy}}
\label{proof:Adequacy}

In Lemmas~\ref{lem:sum}
to~{\ref{lem:elimbang}},
we prove the adequacy of each proof-term constructor. If $t$ is a strongly
normalising proof-term, we write $|t|$ for the maximum length of a reduction
sequence issued from $t$.

\begin{lemma}[Normalisation of a sum]
    \label{lem:terminationsum}
    If $t$ and $u$ strongly normalise, then so does $t \plus u$. 
  \end{lemma}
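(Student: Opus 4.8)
The plan is to establish a slightly more general statement by a well-founded induction and read off the lemma. Concretely, I would prove that for all strongly normalising $t$ and $u$ the sum $t \plus u$ is strongly normalising, arguing by induction on the lexicographically ordered pair $(|t| + |u|,\ s(t) + s(u))$, where $|{\cdot}|$ is the length of a longest reduction sequence (finite since $t,u \in \SN$) and $s(\cdot)$ is the syntactic size of a proof-term. The organising principle is that a proof-term lies in $\SN$ exactly when all of its one-step reducts lie in $\SN$ (the reduction relation is finitely branching, so this is König's lemma). Hence it suffices to check that every one-step reduct of $t \plus u$ is in $\SN$.

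I would then enumerate the reducts of $t \plus u$ in three groups. First, reductions in an operand, $t \plus u \lra t' \plus u$ with $t \lra t'$ (and the symmetric case): here $t' \in \SN$ and $|t'| < |t|$, so the first component of the measure strictly decreases and the induction hypothesis gives $t' \plus u \in \SN$. Second, the top-level commutation rules whose left-hand side has $\plus$ as its head symbol. Reading off Figure~\ref{fig:figureductionrules}, these are exactly $a.\star \plus b.\star \lra (a+b).\star$, $(\lambda x^A.t_1)\plus(\lambda x^A.u_1)\lra \lambda x^A.(t_1 \plus u_1)$, $\langle\rangle \plus \langle\rangle \lra \langle\rangle$, and $\pair{t_1}{t_2}\plus\pair{v_1}{v_2}\lra\pair{t_1 \plus v_1}{t_2 \plus v_2}$; the rules that push $\plus$ out of an elimination (for $\otimes$, $\oplus$, $\oc$) do not apply, since there the $\plus$ is a proper subterm of the redex rather than its head.

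The scalar and $\langle\rangle$ reducts are normal forms, hence trivially in $\SN$. The abstraction and the pair are the genuinely interesting cases, and they are where I expect the main obstacle: the commutations that move $\plus$ under an introduction do \emph{not} decrease $|t| + |u|$, which is precisely why a secondary size measure is needed. For the abstraction, since $|\lambda x^A.t_1| = |t_1|$ the first component is unchanged, but $s(t_1) + s(u_1) < s(t) + s(u)$, so the induction hypothesis yields $t_1 \plus u_1 \in \SN$, and then $\lambda x^A.(t_1 \plus u_1) \in \SN$ because its reductions are exactly those of $t_1 \plus u_1$. For the pair, using $|\pair{t_1}{t_2}| = |t_1| + |t_2|$ one gets $|t_1| + |v_1| \le |t| + |u|$ and $|t_2| + |v_2| \le |t| + |u|$; whenever a first component stays equal the size strictly drops, so in every case the lexicographic measure decreases and the induction hypothesis delivers $t_1 \plus v_1 \in \SN$ and $t_2 \plus v_2 \in \SN$, whence the reduced pair is in $\SN$. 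Having verified every reduct, I conclude $t \plus u \in \SN$. The only delicate choice is the secondary size measure, tuned so that the length-preserving commutations under $\lambda$ and $\pair{\cdot}{\cdot}$ are absorbed; everything else is routine bookkeeping about how $|{\cdot}|$ behaves under these constructors.
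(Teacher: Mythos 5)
Your strategy coincides with the paper's: both proofs reduce the claim to checking that every one-step reduct of $t \plus u$ strongly normalises, via a lexicographic induction whose first component is $|t|+|u|$ and whose second is a syntactic size (the paper inducts on $|t|+|u|$ and then on the size of $t$ alone, you use $s(t)+s(u)$; both work), and your handling of the cases you list --- contextual reductions, the zero-ary rules, and the $\lambda$- and pair-commutations --- is correct, including the observation that the commutations through $\otimes$-, $\oplus$- and $\oc$-eliminations cannot fire at the root of $t \plus u$.

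The gap is in your enumeration of the root reducts. This lemma sits inside the strong-normalisation argument of Section~\ref{sec:ST}, where $\SN$ means strong normalisation for \emph{ultra-reduction}, i.e.\ the rules of Figure~\ref{fig:figureductionrules} together with $t \plus u \lra t$, $t \plus u \lra u$ and $a \dotprod t \lra t$. Hence $t \plus u$ has two root reducts beyond the four you list, namely $t$ and $u$ themselves, so your claim that the four commutation rules are ``exactly'' the applicable root rules is false for the relation the lemma is about; the paper's proof treats these as its fifth and sixth cases. The repair is immediate --- those reducts are in $\SN$ by hypothesis --- but the point is not cosmetic: if the lemma were established only for the Figure~\ref{fig:figureductionrules} relation, it could not be used where it is needed, since the adequacy lemmas (e.g.\ for $\elimtens$, $\elimplus$, $\elimbang$) apply it to terms arising from ultra-reduction steps, and the paper's footnote stresses that ultra-reduction exists precisely for the adequacy of $\plus$ and $\dotprod$. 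So you should state at the outset that $\lra$ denotes ultra-reduction, add the two projection cases, and then your argument goes through unchanged.
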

  \begin{proof}
    We prove that all the one-step reducts of 
    $t \plus u$ strongly normalise, by induction first on 
    $|t| + |u|$ and then on the size of $t$. 
  
    If the reduction takes place in $t$ or in $u$ we apply the induction
    hypothesis.
    Otherwise, the reduction occurs at the root and the rule used is one of 
    \[
      \begin{array}{r@{\,}l@{\qquad}r@{\,}l}
	{a.\star} \plus {b.\star} &\lra  (a + b).\star
	&	\pair{t'_1}{t'_2} \plus \pair{u'_1}{u'_2} &\lra  \pair{t'_1 \plus u'_1}{t'_2 \plus u'_2}\\
	(\lambda x^A.t') \plus (\lambda x^A.u') &\lra  \lambda x^A.(t' \plus u')
	&	t \plus u &\lra t\\
	\langle \rangle \plus \langle \rangle &\lra \langle \rangle
	&	t \plus u &\lra u
      \end{array}
    \]
  
    In the first case, the proof-term $(a + b).\star$ is irreducible, hence it
    strongly normalises. In the second case,
    by induction hypothesis the proof-term $t' \plus u'$
    strongly normalises, thus so does the proof-term
    $\lambda x^A.(t' \plus u')$.
    In the third case, the proof-term $\langle \rangle$ is irreducible, hence it
    strongly normalises. 
    In the fourth case,
    by induction hypothesis the proof-terms
    $t'_1 \plus u'_1$ and $t'_2 \plus u'_2$
    strongly normalise, hence so does the proof-term
    $\pair{t'_1 \plus u'_1}{t'_2 \plus u'_2}$.
    In the fifth and sixth cases, the proof-terms $t$ and $u$ strongly normalise. 
  \end{proof}
  
  \begin{lemma}[Normalisation of a product]
  \label{lem:terminationprod}
  If $t$ strongly normalises, then so does $a \dotprod t$. 
  \end{lemma}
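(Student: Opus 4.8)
The plan is to mirror the proof of the Normalisation of a sum (Lemma~\ref{lem:terminationsum}). As there, it suffices to show that every one-step reduct of $a \dotprod t$ strongly normalises; I would argue by induction, ordering the arguments lexicographically first by $|t|$ and then by the size of $t$.

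If the reduction takes place inside $t$, say $a \dotprod t \lra a \dotprod t'$ with $t \lra t'$, then $|t'| < |t|$ and the induction hypothesis gives that $a \dotprod t'$ strongly normalises. Otherwise the reduction is at the root, and the applicable rules are exactly those whose left-hand side is headed by $\dotprod$:
\[
  a \dotprod b.\star \lra (a \times b).\star, \quad
  a \dotprod \lambda x^A.t' \lra \lambda x^A.(a \dotprod t'), \quad
  a \dotprod \topintro \lra \topintro, \quad
  a \dotprod \pair{t_1}{t_2} \lra \pair{a \dotprod t_1}{a \dotprod t_2},
\]
together with the ultra-reduction rule $a \dotprod t \lra t$. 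In the first and third cases the reduct is irreducible, hence strongly normalises; in the ultra-reduction case the reduct is $t$ itself, which strongly normalises by hypothesis.

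The two substantive cases are the $\lambda$- and pairing-commutations, where the outer scalar is pushed inward. Here the length of the longest reduction from the immediate subterm is unchanged ($|t'| = |t|$, and likewise $|t_i| \le |t|$), so the primary induction on $|t|$ does not strictly decrease; this is precisely why the secondary induction on size is needed. Since $t'$ (respectively $t_1, t_2$) is a strict subterm, the induction hypothesis applies and yields that $a \dotprod t'$ (respectively $a \dotprod t_1$ and $a \dotprod t_2$) strongly normalises. I then conclude using the fact that $\lambda$-abstraction and pairing preserve strong normalisation, so that $\lambda x^A.(a \dotprod t')$ and $\pair{a \dotprod t_1}{a \dotprod t_2}$ strongly normalise.

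I expect the only delicate point — the analogue of the main subtlety in the sum case — to be the verification that the lexicographic measure strictly decreases in exactly these two inward-commutation cases, which the size component handles; everything else is a routine enumeration of the root rules.
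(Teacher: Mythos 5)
Your proof is correct and takes essentially the same approach as the paper's: the same induction first on $|t|$ and then on the size of $t$, the same enumeration of the five root rules (the two zero-ary cases yielding irreducible terms, the ultra-reduction case handled by the hypothesis on $t$, and the two inward-commutation cases handled by the induction hypothesis). Your explicit remark that the size component of the measure is what makes the $\lambda$- and pairing-commutation cases go through, followed by closure of strong normalisation under abstraction and pairing, is exactly the reasoning the paper uses implicitly.
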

  \begin{proof}
  We prove that all the one-step reducts of 
  $a \dotprod t$ strongly normalise, by induction first on 
  $|t|$ and then on the size of $t$. 
  
  If the reduction takes place in $t$, we apply the induction
  hypothesis.
  Otherwise, the reduction occurs at the root, and the rule used is one of
  \[
    \begin{array}{r@{\,}l@{\qquad}r@{\,}l}
      a \dotprod b.\star &\lra  (a \times b).\star
      &		a \dotprod \pair{t'_1}{t'_2} &\lra  \pair{a \dotprod t'_1}{a \dotprod t'_2}\\
      a \dotprod (\lambda x^A.t') &\lra  \lambda x^A. a \dotprod t'
      &		a \dotprod t &\lra t\\
      a \dotprod \langle \rangle &\lra  \langle \rangle
    \end{array}
  \]
  In the first case, the proof-term $(a \times b).\star$ is irreducible,
  hence it strongly normalises. In the second case, by the induction hypothesis
  the proof-term $a \dotprod t'$ strongly normalises, thus so does the proof-term
  $\lambda x^A.a \dotprod t'$. In the third case, the proof-term $\langle
  \rangle$ is irreducible, hence it strongly normalises.  In the
  fourth case, by the induction hypothesis the proof-terms $a \dotprod t'_1$ and
  $a \dotprod t'_2$ strongly normalise, hence so does the proof-term
  $\pair{a \dotprod t'_1}{a \dotprod t'_2}$. In the fifth case, the proof-term $t$
  strongly normalises. 
  \end{proof}

\begin{lemma}[Adequacy of $\plus$]
\label{lem:sum}
If $\Upsilon;\Gamma \vdash t_1:A$, $\Upsilon;\Gamma \vdash t_2:A$, $t_1 \in \llbracket A \rrbracket$, and $t_2 \in \llbracket A
\rrbracket$, then $t_1 \plus t_2 \in \llbracket A \rrbracket$.
\end{lemma}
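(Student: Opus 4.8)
The plan is to derive the membership from the third closure condition of the reducibility candidate $\llbracket A\rrbracket$. By Lemma~\ref{lem:typeinterpretationsarerc} we have $\llbracket A\rrbracket\in\mathcal R$, so CR1, CR2, CR3 are available; and since $t_1\plus t_2$ is a sum, it is \emph{not} an introduction, so CR3 is applicable to it. First I would establish strong normalisation: by CR1, $t_1,t_2\in\llbracket A\rrbracket\subseteq\SN$, and Lemma~\ref{lem:terminationsum} then gives $t_1\plus t_2\in\SN$. This both settles the $\SN$ side condition and licenses an induction on the length of the longest reduction issued from $t_1\plus t_2$. By CR3 it remains to show that every member of $Red(t_1\plus t_2)$ lies in $\llbracket A\rrbracket$.

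The one-step reducts split into two families. The \emph{congruence} reducts $t_1'\plus t_2$ (with $t_1\lra t_1'$) and $t_1\plus t_2'$ (with $t_2\lra t_2'$): here CR2 yields $t_1',t_2'\in\llbracket A\rrbracket$, subject reduction (Theorem~\ref{thm:SR}) preserves the common type $A$, and the reduction measure strictly decreases, so the induction hypothesis of the present lemma applies and these reducts are in $\llbracket A\rrbracket$. The \emph{root} reducts arise only when $t_1$ and $t_2$ are matching introductions firing one of the commutation rules for $\plus$; by the design of the calculus such root redexes exist only for $\one$, $\multimap$, $\top$, and $\with$ (the connectives $\otimes$, $\oplus$, $\oc$ commute through their eliminations, so a sum of two tensors, injections, or bangs has no root redex).

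I would then dispatch these four cases, reading the shape of $A$ off the typing of $t_1$ by inversion. For $A=\one$ the reduct $(a+b).\star$ is irreducible, hence in $\SN=\llbracket\one\rrbracket$; for $A=\top$ the reduct $\topintro\in\SN=\llbracket\top\rrbracket$. The substantive cases are $A=B\multimap C$ and $A=B\with C$. For the arrow, $t_1=\lambda x^B.t'$ and $t_2=\lambda x^B.u'$ reduce to $\lambda x^B.(t'\plus u')$; using $t_1,t_2\in\llbracket B\multimap C\rrbracket$ together with the definition of $\hatmultimap$, for every $v\in\llbracket B\rrbracket$ both $(v/x)t'$ and $(v/x)u'$ lie in $\llbracket C\rrbracket$, and since $(v/x)(t'\plus u')=(v/x)t'\plus(v/x)u'$, I conclude by the induction hypothesis of this lemma at the strictly smaller type $C$. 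The pair case is analogous, extracting the two components through the definition of $\hatand$ and appealing to the lemma at types $B$ and $C$.

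Consequently the induction must be organised primarily on the type $A$, with the reduction-length induction nested inside it for the congruence reducts, because the root rules for $\multimap$ and $\with$ reduce the claim to the same statement at a strictly smaller type. The main obstacle is exactly these two ``push-in'' cases: one must keep the reduct strongly normalising, recover membership of the pushed-in components in the sub-candidates $\llbracket C\rrbracket$ (respectively $\llbracket B\rrbracket$ and $\llbracket C\rrbracket$) purely from the definitions of $\hatmultimap$ and $\hatand$ applied to the hypotheses $t_1,t_2\in\llbracket A\rrbracket$, and then invoke the inductive instance at the smaller type — the typing hypotheses serving only to pin down the shape of $A$ and to guarantee subject reduction in the congruence cases.
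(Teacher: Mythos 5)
Your overall strategy is sound, and it is in fact the method the paper itself uses for its \emph{elimination} adequacy lemmas (Lemmas~\ref{lem:elimone}, \ref{lem:application}, \ref{lem:elimtens}): invoke CR3, since a sum is not an introduction, and discharge the one-step reducts by an induction on $A$ with a reduction-length induction nested inside. The paper's own proof of Lemma~\ref{lem:sum}, however, takes a different route: it establishes strong normalisation via Lemma~\ref{lem:terminationsum} and then verifies the defining condition of $\llbracket A\rrbracket$ \emph{directly}, by analysing how a multi-step reduction $t_1\plus t_2\lras$ (introduction form) must decompose — e.g.\ for $A=B\multimap C$, either both $t_i$ reduce to abstractions and the commutation rule fires on the bodies, or the sum is discarded by an ultra-reduction step and a single $t_i$ reduces to the abstraction. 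It never uses CR3, never needs the nested reduction-length induction, and never uses the typing hypotheses. Both routes hinge on the point you identified correctly: the push-in cases for $\multimap$ and $\with$ force the primary induction to be on the type, invoking the lemma at strictly smaller types.

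The concrete gap is your enumeration of root reducts. This lemma lives inside the strong-normalisation proof for \emph{ultra-reduction} (Section~\ref{sec:ST}; the footnote there notes that ultra-reduction is used precisely in the adequacy of $\plus$ and $\dotprod$), so $\lra$ here includes the rules $t\plus u\lra t$ and $t\plus u\lra u$. Consequently $Red(t_1\plus t_2)$ contains $t_1$ and $t_2$ themselves at \emph{every} type, and your claim that root redexes exist only at $\one$, $\multimap$, $\top$, $\with$ is false in this system. The patch is immediate — these reducts lie in $\llbracket A\rrbracket$ by hypothesis — but the cases must be there, since CR3 demands \emph{all} one-step reducts. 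Two smaller points also need attention. First, your congruence case appeals to Theorem~\ref{thm:SR}, which is stated only for the rules of Figure~\ref{fig:figureductionrules}; preserving the typing invariant along an ultra-reduction step $t_1\lra t_1'$ requires the (easy, but unstated) observation that the ultra-rules preserve types — or, better, observe that the typing hypotheses can be dropped from the inductive statement altogether, as the paper's proof shows. Second, in the $\multimap$ and $\with$ root cases, after the induction hypothesis gives $(v/x)t'\plus(v/x)u'\in\llbracket C\rrbracket$, you still need one more step (CR2, or Lemmas~\ref{lem:abstraction} and~\ref{lem:pair}) to conclude that $\lambda x^B.(t'\plus u')$, respectively the pair, belongs to $\llbracket A\rrbracket$, because membership in $\hatmultimap$ and $\hatand$ quantifies over all further reductions of the body.
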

\begin{proof}
  By induction on $A$.
  The proof-terms $t_1$ and $t_2$ strongly normalise.  Thus, by
  Lemma~\ref{lem:terminationsum}, the proof-term $t_1 \plus t_2$ strongly
  normalises.
  Furthermore:
  \begin{itemize}
    \item If the proposition $A$ has the form $\one$, $\top$, or $\zero$, then $t_1\plus t_2\in\SN=\llbracket A\rrbracket$.
    \item If the proposition $A$ has the form $B \multimap C$, and $t_1
      \plus t_2 \lra^* \lambda x^B. v$ then either $t_1 \lra^*
      \lambda x^B. u_1$, $t_2 \lra^* \lambda x^B. u_2$, and $u_1
      \plus u_2 \lra^* v$, or $t_1 \lra^* \lambda x^B. v$, or $t_2
      \lra^* \lambda x^B. v$.

      In the first case, as $t_1$ and $t_2$ are in $\llbracket A
      \rrbracket$, for every $w$ in $\llbracket B \rrbracket$, $(w/x)u_1
      \in \llbracket C \rrbracket$ and $(w/x)u_2 \in \llbracket C
      \rrbracket$.  By induction hypothesis, $(w/x)(u_1 \plus u_2) =
      (w/x)u_1 \plus (w/x)u_2 \in \llbracket C \rrbracket$ and by
      {CR2}, $(w/x)v \in \llbracket C \rrbracket$.

      In the second and the third, as $t_1$ and $t_2$ are in $\llbracket A
      \rrbracket$, for every $w$ in $\llbracket B \rrbracket$, $(w/x)v \in
      \llbracket C \rrbracket$.

    \item If the proposition $A$ has the form $B \otimes C$, and $t_1
      \plus t_2 \lra^* v \otimes v'$ then $t_1 \lra^* v \otimes v'$, or
      $t_2 \lra^* v \otimes v'$.  As $t_1$ and $t_2$ are in $\llbracket
      A \rrbracket$, $v \in \llbracket B \rrbracket$ and $v' \in
      \llbracket C \rrbracket$.

    \item If the proposition $A$ has the form $B \with C$, and $t_1 \plus t_2
      \lra^* \pair{v}{v'}$ then $t_1 \lra^* \pair{u_1}{u'_1}$, $t_2 \lra^*
      \pair{u_2}{u'_2}$, $u_1 \plus u_2 \lra^* v$, and $u'_1 \plus u'_2
      \lra^* v'$, or $t_1 \lra^* \pair{v}{v'}$, or $t_2 \lra^*
      \pair{v}{v'}$.

      In the first case, as $t_1$ and $t_2$ are in $\llbracket A
      \rrbracket$, $u_1$ and $u_2$ are in $\llbracket B \rrbracket$ and
      $u'_1$ and $u'_2$ are in $\llbracket C \rrbracket$.  By induction
      hypothesis, $u_1 \plus u_2 \in \llbracket B \rrbracket$ and $u'_1
      \plus u'_2 \in \llbracket C \rrbracket$ and by {CR2}, $v \in \llbracket B \rrbracket$ and $v'
      \in \llbracket C \rrbracket$.

      In the second and the third, as $t_1$ and $t_2$ are in $\llbracket A
      \rrbracket$, $v \in \llbracket B \rrbracket$ and $v' \in \llbracket
      C \rrbracket$.

    \item If the proposition $A$ has the form $B \oplus C$, and $t_1 \plus
      t_2 \lra^* \inl(v)$ then $t_1 \lra^* \inl(v)$ or $t_2 \lra^*
      \inl(v)$.  As $t_1$ and $t_2$ are in $\llbracket A \rrbracket$, $v
      \in \llbracket B \rrbracket$.

      The proof is similar if $t_1 \plus t_2 \lra^* \inr(v)$.
    \item
	If the proposition $A$ has the form $\oc B$, and $t_1 \plus t_2
	\lra^* \oc v$ then $t_1 \lra^* \oc v$, or $t_2 \lra^* \oc v$.
	In both cases, as $t_1$ and $t_2$ are in $\llbracket A
	\rrbracket$, $v \in \llbracket B \rrbracket$.\qedhere
  \end{itemize}
\end{proof}

\begin{lemma}[Adequacy of $\dotprod$]
\label{lem:prod}
If $\Upsilon;\Gamma \vdash t:A$ and $t \in \llbracket A \rrbracket$, then $a \dotprod t \in \llbracket A
\rrbracket$.
\end{lemma}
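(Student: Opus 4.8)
The plan is to mirror the structure of the adequacy of $\plus$ (Lemma~\ref{lem:sum}) and argue by induction on $A$. Since $t \in \llbracket A \rrbracket \subseteq \SN$ by CR1, the proof-term $t$ strongly normalises, so by Lemma~\ref{lem:terminationprod} the proof-term $a \dotprod t$ strongly normalises as well; this already settles the cases where $A$ is $\one$, $\top$, or $\zero$, since there $\llbracket A \rrbracket = \SN$.

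For the remaining connectives I would inspect how $a \dotprod t$ can reduce to an introduction form. The point is that the head symbol $\dotprod$ can only be consumed by one of the root rules for $\dotprod$: the commutation rules pushing $\dotprod$ inside the introductions of $\one$, $\multimap$, $\top$, and $\with$, or the ultra-reduction rule $a \dotprod t' \lra t'$. In particular, for $\otimes$, $\oplus$, and $\oc$ there is no commutation rule, so the only way that $a \dotprod t \lras u \otimes u'$ (resp. $\inl(u)$, $\inr(u)$, or $\oc u$) is for the scalar to be erased by an ultra-reduction step, after which we reduce a reduct $t'$ of $t$; since $t' \in \llbracket A \rrbracket$ by CR2, the components land in the appropriate candidates. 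This is exactly the reason the sets $\hatotimes$, $\hatoplus$, and $\hatbang$ were defined using ultra-reduction.

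For $\multimap$ and $\with$, where a commutation rule does exist, I would treat the two shapes of reduction separately. If $a \dotprod t \lras \lambda x^B. v$ through the commutation rule, then $t \lras \lambda x^B. u$ with $a \dotprod u \lras v$; using $t \in \llbracket B \multimap C \rrbracket$, for every $w \in \llbracket B \rrbracket$ we get $(w/x)u \in \llbracket C \rrbracket$, the induction hypothesis on $C$ gives $a \dotprod (w/x)u \in \llbracket C \rrbracket$, and since substitution commutes with $\dotprod$ we have $a \dotprod (w/x)u = (w/x)(a \dotprod u) \lras (w/x)v$, so CR2 yields $(w/x)v \in \llbracket C \rrbracket$. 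The case $A = B \with C$ is analogous, applying the induction hypothesis to both components of the resulting pair $\pair{v}{v'}$. The remaining subcase, where the reduction passes through an ultra-reduction step, reduces to the fact that a reduct of $t$ is still in $\llbracket A \rrbracket$ by CR2.

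The step I expect to require the most care is the $\multimap$ case: I must check that substitution genuinely commutes with scalar multiplication, that the reduction $a \dotprod u \lras v$ is transported through the substitution of $w$ for $x$, and that every reduction sequence from $a \dotprod t$ to an abstraction decomposes into exactly one of the two shapes above. Beyond that, the argument is a routine case analysis parallel to the proof of Lemma~\ref{lem:sum}.
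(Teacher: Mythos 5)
Your proposal is correct and follows essentially the same route as the paper's proof: induction on $A$, strong normalisation via Lemma~\ref{lem:terminationprod}, and a case analysis distinguishing reductions that push the scalar through an introduction (for $\multimap$ and $\with$, handled with the induction hypothesis, the identity $(w/x)(a \dotprod u) = a \dotprod (w/x)u$, and CR2) from those that erase it by ultra-reduction (the only possibility for $\otimes$, $\oplus$, and $\oc$). The points you flag as needing care in the $\multimap$ case are exactly the ones the paper's proof relies on, and they go through as you expect.
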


\begin{proof}
  By induction on $A$.  The proof-term $t$ strongly normalises.  Thus, by
  Lemma~\ref{lem:terminationprod}, the proof-term $a \dotprod t$ strongly
  normalises.  Furthermore:

  \begin{itemize}
    \item If the proposition $A$ has the form $\one$, $\top$, or $\zero$, then $a\dotprod t\in\SN=\llbracket A\rrbracket$.
    \item If the proposition $A$ has the form $B \multimap C$, and $a
      \dotprod t \lra^* \lambda x^B. v$ then either $t \lra^* \lambda
      x^B. u$ and $a \dotprod u \lra^* v$, or $t \lra^* \lambda
      x^B. v$.

      In the first case, as $t$ is in $\llbracket A \rrbracket$, for
      every $w$ in $\llbracket B \rrbracket$, $(w/x)u \in \llbracket C
      \rrbracket$.  By induction hypothesis, $(w/x) (a \dotprod u) = a
      \dotprod (w/x)u \in \llbracket C \rrbracket$ and by
      {CR2}, $(w/x)v \in \llbracket C \rrbracket$.

      In the second, as $t$ is in $\llbracket A \rrbracket$, for every $w$
      in $\llbracket B \rrbracket$, $(w/x)v \in \llbracket C \rrbracket$.

    \item If the proposition $A$ has the form $B \otimes C$, and $a
      \dotprod t \lra^* v \otimes v'$ then $t \lra^* v \otimes v'$.  As $t$
      is in $\llbracket A \rrbracket$, $v \in \llbracket B \rrbracket$ and
      $v' \in \llbracket C \rrbracket$.

    \item If the proposition $A$ has the form $B \with C$, and $a \dotprod t
      \lra^* \pair{v}{v'}$ then $t \lra^* \pair{u}{u'}$, $a \dotprod u
      \lra^* v$, and $a \dotprod u' \lra^* v'$, or $t \lra^* \pair{v}{v'}$.

      In the first case, as $t$ is in $\llbracket A \rrbracket$, $u$ is in
      $\llbracket B \rrbracket$ and $u'$ is in $\llbracket C \rrbracket$.
      By induction hypothesis, $a \dotprod u \in \llbracket B \rrbracket$
      and $a \dotprod u' \in \llbracket C \rrbracket$ and by
      {CR2}, $v \in \llbracket B \rrbracket$ and $v'
      \in \llbracket C \rrbracket$.

      In the second, as $t$ is in $\llbracket A \rrbracket$, $v \in
      \llbracket B \rrbracket$ and $v' \in \llbracket C \rrbracket$.

    \item If the proposition $A$ has the form $B \oplus C$, and $a \dotprod t
      \lra^* \inl(v)$ then $t \lra^* \inl(v)$.
      Then, by {CR2},
      $\inl(v) \in \llbracket A \rrbracket$ hence, $v \in \llbracket
      B \rrbracket$.

      The proof is similar if $a \dotprod t \lra^* \inr(v)$.
    \item 
	If the proposition $A$ has the form $\oc B$, and $a \dotprod t \lra^*
	\oc v$ then $t \lra^* \oc v$. As $t$ is in $\llbracket A \rrbracket$,
	$v \in
	\llbracket B \rrbracket$.\qedhere
  \end{itemize}
\end{proof}

\begin{lemma}[Adequacy of $a.\star$]
  \label{lem:star}
   $a.\star \in \llbracket \one \rrbracket$.
  \end{lemma}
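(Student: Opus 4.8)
The plan is to unfold the definition of $\llbracket\one\rrbracket$ and reduce the claim to a strong normalisation statement about the single proof-term $a.\star$. By the definition of the type interpretation, $\llbracket\one\rrbracket = \SN$, so the goal $a.\star \in \llbracket\one\rrbracket$ is literally the claim that $a.\star$ strongly normalises. Thus everything hinges on analysing the possible reductions starting from the bare proof-term $a.\star$.

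The key observation I would make is that $a.\star$ is irreducible. Inspecting the reduction rules, the only rules mentioning $a.\star$ on their left-hand side are the zero-ary commutation rules ${a.\star}\plus b.\star \lra (a+b).\star$ and $a\dotprod b.\star \lra (a\times b).\star$, and both of these require $a.\star$ to occur \emph{inside} a sum or a scalar product, i.e.\ under an enclosing $\plus$ or $\dotprod$ constructor. There is no rule whose redex is $a.\star$ standing alone, and $a.\star$ contains no proper subterm in which a redex could be nested. Hence $Red(a.\star) = \varnothing$.

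From this I would conclude immediately that $a.\star$ has no infinite reduction sequence — indeed it has no one-step reduct at all — so $a.\star \in \SN$, which is exactly $a.\star \in \llbracket\one\rrbracket$. This is a base case of the larger adequacy argument (parallel to the already-proved Lemma~\ref{lem:Var} for variables, which used the emptiness of $Red(x)$ together with CR3), and the same style of reasoning applies.

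There is essentially no obstacle here: the statement is one of the simplest adequacy building blocks, and the only thing requiring care is the routine verification that the reduction system contains no rule firing directly on $a.\star$. The mild subtlety worth stating explicitly is that the scalar constants $a.\star$ do \emph{not} reduce in isolation even though they interact with $\plus$ and $\dotprod$; this is precisely what makes them the canonical closed irreducible inhabitants of $\one$, as already noted in Section~\ref{sec:secvectors}.
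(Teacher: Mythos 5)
Your proposal is correct and follows essentially the same route as the paper: the paper's proof is exactly ``as $a.\star$ is irreducible, it strongly normalises, hence $a.\star \in \llbracket \one \rrbracket$'', and your argument simply spells out in more detail why $a.\star$ is irreducible (no rule fires on it in isolation) and why irreducibility gives membership in $\SN = \llbracket\one\rrbracket$. The aside about CR3 and Lemma~\ref{lem:Var} is unnecessary here, since $\llbracket\one\rrbracket$ is literally $\SN$ and no candidate-closure property is needed, but this does not affect correctness.
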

  
  \begin{proof}
  As $a.\star$ is irreducible, it strongly normalises, hence
  $a.\star \in \llbracket \one \rrbracket$.
  \end{proof}
  
  \begin{lemma}[Adequacy of $\lambda$]
  \label{lem:abstraction}
  If, for all $u \in \llbracket A \rrbracket$, $(u/x)t \in \llbracket B
  \rrbracket$, then $\lambda x^C.t \in \llbracket A \multimap B
  \rrbracket$.
  \end{lemma}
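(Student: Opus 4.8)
The plan is to unfold the definition of the set $\llbracket A \multimap B\rrbracket = \llbracket A\rrbracket \hatmultimap \llbracket B\rrbracket$ and verify the two defining conditions directly. By definition, $\lambda x^C.t \in \llbracket A\rrbracket\hatmultimap\llbracket B\rrbracket$ means that (i) $\lambda x^C.t\in\SN$, and (ii) whenever $\lambda x^C.t \lras \lambda x^C.t'$, then for every $v\in\llbracket A\rrbracket$ we have $(v/x)t' \in \llbracket B\rrbracket$. The whole argument will rest on the observation that an abstraction admits no root reduction: inspecting the rules of Figure~\ref{fig:figureductionrules}, every rule with an abstraction on its left-hand side fires only under an application, a sum, or a product context, so all reductions of $\lambda x^C.t$ take place inside the body $t$. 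Consequently each reduct of $\lambda x^C.t$ has the shape $\lambda x^C.t'$ with $t\lras t'$, which is exactly what condition (ii) quantifies over.

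First I would establish strong normalisation. By the Variables lemma (Lemma~\ref{lem:Var}) there is a variable $z\in\llbracket A\rrbracket$, so the hypothesis applied to $u=z$ gives $(z/x)t\in\llbracket B\rrbracket$; since $\llbracket B\rrbracket\in\mathcal R$ (Lemma~\ref{lem:typeinterpretationsarerc}), property CR1 yields $(z/x)t\in\SN$, and as renaming a variable cannot create new reductions, $t\in\SN$. By the no-root-reduction observation above, any infinite reduction of $\lambda x^C.t$ would induce one of $t$, so $t\in\SN$ forces $\lambda x^C.t\in\SN$, settling condition (i).

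For condition (ii), suppose $\lambda x^C.t \lras \lambda x^C.t'$, so $t\lras t'$, and fix any $v\in\llbracket A\rrbracket$. The hypothesis gives $(v/x)t\in\llbracket B\rrbracket$, and since reduction is closed under substitution into contexts we have $(v/x)t\lras (v/x)t'$. Applying property CR2 of the candidate $\llbracket B\rrbracket$ along this reduction yields $(v/x)t'\in\llbracket B\rrbracket$, as required.

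I expect the only genuinely load-bearing step to be the shape analysis of reducts of an abstraction, namely that no reduction rule applies at the root of $\lambda x^C.t$ and hence every reduct is again an abstraction $\lambda x^C.t'$ with $t\lras t'$; once this is in hand, conditions (i) and (ii) follow mechanically from CR1, CR2, and the Variables lemma. The remaining mild care is bookkeeping: ensuring that the substituted variable in the hypothesis is precisely the binder $x$ of the abstraction, so that the hypothesis instance matches the quantifier ``for every $v\in\llbracket A\rrbracket$'' occurring in the definition of $\hatmultimap$.
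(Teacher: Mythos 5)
Your proof is correct and follows essentially the same route as the paper: strong normalisation of the body via the Variables lemma, the hypothesis, and CR1, then closure of the candidate under the reducts $\lambda x^C.t'$ via CR2 and the fact that an abstraction admits no root reduction. The only cosmetic difference is that the paper instantiates the hypothesis at the binder $x$ itself, so that $(x/x)t = t$ gives $t \in \SN$ directly, avoiding your auxiliary remark about renaming.
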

  
  \begin{proof}
  By Lemma~\ref{lem:Var}, $x \in \llbracket A \rrbracket$, thus
  $t = (x/x)t \in \llbracket B \rrbracket$. Hence, $t$ strongly
  normalises.  Consider a reduction sequence issued from $\lambda
  x^C.t$.  This sequence can only reduce $t$ hence it is finite. Thus,
  $\lambda x^C.t$ strongly normalises.
  
  Furthermore, if $\lambda x^C.t \lras \lambda x^C.t'$, then
  $t \lra^* t'$.  Let $u \in \llbracket A \rrbracket$,
  $(u/x)t \lra^* (u/x)t'$.
  As $(u/x)t \in \llbracket B
  \rrbracket$, by {CR2}, $(u/x)t' \in
  \llbracket B \rrbracket$.
  \end{proof}
  
  \begin{lemma}[Adequacy of $\otimes$]
  \label{lem:tensor}
  If $t_1 \in \llbracket A \rrbracket$ and $t_2 \in \llbracket B
  \rrbracket$, then $t_1 \otimes t_2 \in \llbracket A \otimes B
  \rrbracket$.
  \end{lemma}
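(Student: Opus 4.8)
The plan is to unfold the definition $\llbracket A \otimes B \rrbracket = \llbracket A \rrbracket \hatotimes \llbracket B \rrbracket$ and verify directly the two clauses defining $\hatotimes$: that $t_1 \otimes t_2 \in \SN$, and that whenever $t_1 \otimes t_2 \lras u \otimes v$ one has $u \in \llbracket A \rrbracket$ and $v \in \llbracket B \rrbracket$. Throughout I would use that, by Lemma~\ref{lem:typeinterpretationsarerc}, both $\llbracket A \rrbracket$ and $\llbracket B \rrbracket$ are reducibility candidates, so I may appeal to CR1 and CR2 for them.

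For strong normalisation, CR1 gives $t_1 \in \SN$ and $t_2 \in \SN$. The key observation is that a tensor introduction is never itself a redex: inspecting Figure~\ref{fig:figureductionrules}, no reduction rule fires at the root of a term of the form $u \otimes v$ (the $\otimes$ rules all act on $\elimtens$, not on the introduction, and the commuting rules for $\plus$ and $\dotprod$ never split a tensor). Consequently every reduction step from $t_1 \otimes t_2$ takes place inside $t_1$ or inside $t_2$, and any reduction sequence issued from $t_1 \otimes t_2$ has length at most $|t_1| + |t_2|$; hence $t_1 \otimes t_2 \in \SN$. This mirrors the argument of Lemma~\ref{lem:terminationsum}, but is simpler since here there is no root case to treat.

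For the second clause, suppose $t_1 \otimes t_2 \lras u \otimes v$. Since the root $\otimes$ is never rewritten, the tensor shape is preserved along the whole reduction, so it factors as $t_1 \lras u$ and $t_2 \lras v$. Applying CR2 to $\llbracket A \rrbracket$ with $t_1 \in \llbracket A \rrbracket$ yields $u \in \llbracket A \rrbracket$, and applying CR2 to $\llbracket B \rrbracket$ with $t_2 \in \llbracket B \rrbracket$ yields $v \in \llbracket B \rrbracket$. This establishes $t_1 \otimes t_2 \in \llbracket A \rrbracket \hatotimes \llbracket B \rrbracket = \llbracket A \otimes B \rrbracket$.

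I expect no serious obstacle here: unlike the adequacy of $\plus$ and $\dotprod$, there are no commuting rules that can dismantle a tensor introduction at the root, so no induction hypothesis on $A$ or $B$ is needed. The only point requiring a little care is the structural claim that every reduct of $t_1 \otimes t_2$ is again a tensor whose components are reducts of $t_1$ and $t_2$ respectively, which is a straightforward induction on the length of the reduction sequence using the absence of a root redex.
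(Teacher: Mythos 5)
Your proof is correct and follows essentially the same route as the paper's: strong normalisation of $t_1 \otimes t_2$ because every reduction step occurs inside $t_1$ or $t_2$ (no rule, including the ultra-reduction rules, fires at the root of a tensor introduction), and then CR2 applied componentwise once a reduct $u \otimes v$ is factored as $t_1 \lras u$, $t_2 \lras v$. The only difference is that you spell out the no-root-redex observation explicitly, which the paper leaves implicit.
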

  
  \begin{proof}
  The proof-terms $t_1$ and $t_2$ strongly normalise. Consider a reduction
  sequence issued from $t_1 \otimes t_2$.  This sequence can only reduce
  $t_1$ and $t_2$, hence it is finite.  Thus, $t_1 \otimes t_2$ strongly
  normalises.
  
  Furthermore, if $t_1 \otimes t_2 \lras t'_1 \otimes t'_2$,
  then $t_1 \lra^* t'_1$ and $t_2 \lra^* t'_2$.  By {CR2},
  $t'_1 \in \llbracket A \rrbracket$ and $t'_2 \in
  \llbracket B \rrbracket$. 
  \end{proof}

  \begin{lemma}[Adequacy of $\langle \rangle$]
  \label{lem:unit}
  $\langle \rangle \in \llbracket \top \rrbracket$.
  \end{lemma}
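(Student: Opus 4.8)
The plan is to unfold the definition of the type interpretation and reduce the claim to a statement about strong normalisation. By the definition of $\interpretation{A}$ for the additive unit, we have $\interpretation{\top} = \SN$, so the goal $\topintro \in \interpretation{\top}$ is literally the assertion that $\topintro$ is strongly normalising. Thus the entire task is to show $\topintro \in \SN$.

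First I would inspect the reduction rules of Figure~\ref{fig:figureductionrules} to confirm that $\topintro$ on its own heads no redex. The only rules mentioning $\topintro$ are the two commutation rules $\topintro \plus \topintro \lra \topintro$ and $a \dotprod \topintro \lra \topintro$; in both of these $\topintro$ occurs strictly inside a larger term (as an argument of $\plus$ or $\dotprod$), and neither matches $\topintro$ standing alone. Hence there is no $u$ with $\topintro \lra u$, i.e.\ $\topintro$ is irreducible.

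From irreducibility the conclusion is immediate: a proof-term with no one-step reducts has only the trivial (empty) reduction sequence, so it strongly normalises. Therefore $\topintro \in \SN = \interpretation{\top}$, which is exactly what is required. This mirrors the adequacy argument for $a.\star$ in Lemma~\ref{lem:star}, where irreducibility of the introduction term for $\one$ is used in the same way.

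I do not expect any real obstacle here: unlike the adequacy lemmas for $\plus$, $\dotprod$, or the binary connectives, this case requires no analysis of reduction sequences reaching a particular introduction form, precisely because $\interpretation{\top}$ is defined to be all of $\SN$ rather than a set constrained by a reducibility condition on reducts. The only point demanding care is the routine verification, by scanning Figure~\ref{fig:figureductionrules}, that $\topintro$ is genuinely irreducible as a standalone term.
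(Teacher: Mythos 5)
Your proposal is correct and follows the same argument as the paper: $\llbracket \top \rrbracket = \SN$, the standalone term $\langle\rangle$ is irreducible, hence strongly normalising, hence in $\llbracket \top \rrbracket$. The extra verification that no rule of Figure~\ref{fig:figureductionrules} fires on $\langle\rangle$ alone is just a more explicit rendering of the paper's one-line observation that $\langle\rangle$ is irreducible.
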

  
  \begin{proof}
  As $\langle \rangle$ is irreducible, it strongly normalises, hence
  $\langle \rangle \in \llbracket \top \rrbracket$. 
  \end{proof}
  
  \begin{lemma}[Adequacy of $\pair{}{}$]
  \label{lem:pair}
  If $t_1 \in \llbracket A \rrbracket$ and $t_2 \in \llbracket B
  \rrbracket$, then $\pair{t_1}{t_2} \in \llbracket A \with B
  \rrbracket$.
  \end{lemma}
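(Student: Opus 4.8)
The plan is to mirror the proof of the adequacy of $\otimes$ (Lemma~\ref{lem:tensor}), since the definitions of $\hatand$ and $\hatotimes$ are structurally identical. Unfolding $\llbracket A \with B\rrbracket = \llbracket A\rrbracket \hatand \llbracket B\rrbracket$, I need to establish two things about $\pair{t_1}{t_2}$: first, that it lies in $\SN$, and second, that whenever $\pair{t_1}{t_2} \lras \langle u, v\rangle$, the components satisfy $u \in \llbracket A\rrbracket$ and $v \in \llbracket B\rrbracket$.

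For strong normalisation, I would start from the hypotheses $t_1 \in \llbracket A\rrbracket$ and $t_2 \in \llbracket B\rrbracket$, which give $t_1, t_2 \in \SN$ by CR1 (using $\llbracket A\rrbracket, \llbracket B\rrbracket \in \mathcal R$ from Lemma~\ref{lem:typeinterpretationsarerc}). The key observation is that no reduction rule of Figure~\ref{fig:figureductionrules} fires at the root of a bare pair: the elimination rules for $\with$ require a surrounding $\elimwith^i(-,\,x.v)$, and the two commutation rules involving pairs, namely $\pair{t}{u}\plus\pair{v}{w}\lra\pair{t\plus v}{u\plus w}$ and $a\dotprod\pair{t}{u}\lra\pair{a\dotprod t}{a\dotprod u}$, require an outer $\plus$ or $\dotprod$. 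Hence every reduction issued from $\pair{t_1}{t_2}$ takes place entirely within $t_1$ or $t_2$, and projects to reductions of $t_1$ and $t_2$; since both strongly normalise, any such sequence is finite, so $\pair{t_1}{t_2}\in\SN$.

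For the second condition, the same observation shows that the pair shape is preserved under reduction: if $\pair{t_1}{t_2}\lras\langle u,v\rangle$, then necessarily $t_1\lras u$ and $t_2\lras v$. Since $\llbracket A\rrbracket,\llbracket B\rrbracket\in\mathcal R$ by Lemma~\ref{lem:typeinterpretationsarerc}, CR2 gives $u\in\llbracket A\rrbracket$ and $v\in\llbracket B\rrbracket$, which is exactly what membership in $\llbracket A\rrbracket\hatand\llbracket B\rrbracket$ demands.

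I do not expect any genuine obstacle here; the argument is routine and essentially identical to the $\otimes$ and $\langle\rangle$ cases proved just above. The only point deserving care is the verification that a bare pair admits no root redex, as this is precisely what licenses the ``project the reduction into the components'' argument, and it follows immediately by inspecting the left-hand sides of the reduction rules.
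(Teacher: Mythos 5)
Your proposal is correct and follows essentially the same argument as the paper: strong normalisation of the pair because every reduction must take place inside $t_1$ or $t_2$ (no rule has a bare pair as a root redex), and preservation of membership of the components under reduction via CR2. The paper's proof is just a terser version of the same reasoning, leaving the no-root-redex observation implicit.
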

  
  \begin{proof}
  The proof-terms $t_1$ and $t_2$ strongly normalise. Consider a reduction
  sequence issued from $\pair{t_1}{t_2}$.  This sequence can only
  reduce $t_1$
  and $t_2$, hence it is finite.  Thus, $\pair{t_1}{t_2}$
  strongly normalises.
  
  Furthermore, if $\pair{t_1}{t_2} \lras \pair
  {t'_1}{t'_2}$, then $t_1 \lra^* t'_1$ and $t_2 \lra^* t'_2$.  By
  {CR2}, $t'_1 \in \llbracket A \rrbracket$ and
  $t'_2 \in \llbracket B \rrbracket$. 
  \end{proof}
  
  \begin{lemma}[Adequacy of $\inl$]
  \label{lem:inl}
  If $t \in \llbracket A \rrbracket$, then $\inl(t) \in \llbracket A
  \oplus B \rrbracket$.
  \end{lemma}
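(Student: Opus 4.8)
The plan is to unfold the definition $\llbracket A \oplus B \rrbracket = \llbracket A \rrbracket \hatoplus \llbracket B \rrbracket$ and verify the two membership conditions directly, following the same template already used for the adequacy of $\otimes$ (Lemma~\ref{lem:tensor}) and of $\pair{}{}$ (Lemma~\ref{lem:pair}). Recall that $t' \in \llbracket A\rrbracket \hatoplus \llbracket B\rrbracket$ iff $t' \in \SN$ and, whenever $t' \lras \inl(u)$ we have $u \in \llbracket A\rrbracket$, and whenever $t' \lras \inr(v)$ we have $v \in \llbracket B\rrbracket$.

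First I would establish strong normalisation of $\inl(t)$. By Lemma~\ref{lem:typeinterpretationsarerc} the set $\llbracket A \rrbracket$ is a reducibility candidate, so by CR1 the hypothesis $t \in \llbracket A \rrbracket$ gives $t \in \SN$. Since no reduction rule applies at the root of a term of the form $\inl(\cdot)$ — the symbol $\inl$ only triggers a reduction when it sits inside an $\elimplus$ — every reduction issued from $\inl(t)$ acts inside $t$. Hence such a reduction sequence corresponds to a reduction sequence of $t$ and must be finite, giving $\inl(t) \in \SN$.

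Next I would check the reduct condition. Because the head constructor $\inl$ is preserved under reduction, any reduction $\inl(t) \lras \inl(u)$ comes from a reduction $t \lras u$; applying CR2 to $t \in \llbracket A \rrbracket$ then yields $u \in \llbracket A \rrbracket$. The second clause of $\hatoplus$, concerning reducts of the form $\inr(v)$, is vacuously satisfied, since $\inl(t)$ can never reduce to a term headed by $\inr$. Combining this with the strong normalisation just established gives $\inl(t) \in \llbracket A \rrbracket \hatoplus \llbracket B \rrbracket = \llbracket A \oplus B \rrbracket$.

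There is no real obstacle here: the statement is a routine base case of the adequacy argument, entirely parallel to the $\otimes$ and $\with$ cases. The only points requiring care are (i) invoking Lemma~\ref{lem:typeinterpretationsarerc} to license the use of CR1 and CR2 on $\llbracket A \rrbracket$, and (ii) observing that $\inl$ is a stable head constructor, so that the $\inr$-clause of $\hatoplus$ is discharged vacuously rather than needing any hypothesis on $\llbracket B \rrbracket$.
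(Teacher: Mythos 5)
Your proof is correct and follows essentially the same argument as the paper's: strong normalisation of $\inl(t)$ because reductions can only occur inside $t$, the use of CR2 to handle reducts of the form $\inl(u)$, and the observation that $\inl(t)$ never reduces to an $\inr$-headed term, discharging the second clause of $\hatoplus$ vacuously. The only difference is that you make explicit the appeal to Lemma~\ref{lem:typeinterpretationsarerc} and CR1 for $t \in \SN$, which the paper leaves implicit.
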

  
  \begin{proof}
  The proof-term $t$ strongly normalises. Consider a reduction sequence
  issued from $\inl(t)$.  This sequence can only reduce $t$, hence it is
  finite.  Thus, $\inl(t)$ strongly normalises.
  
  Furthermore, if $\inl(t) \lras \inl(t')$, then $t \lra^* t'$.  By {CR2}, $t' \in \llbracket A
  \rrbracket$. And $\inl(t)$ never reduces to $\inr(t')$. 
  \end{proof}
  
  \begin{lemma}[Adequacy of $\inr$]
  \label{lem:inr}
  If $t \in \llbracket B \rrbracket$, then $\inr(t) \in \llbracket A
  \oplus B \rrbracket$.
  \end{lemma}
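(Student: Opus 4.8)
The plan is to prove this exactly as the mirror image of the preceding case, Lemma~\ref{lem:inl} (Adequacy of $\inl$), since $\inr$ plays the symmetric role to $\inl$ in the definition of $\hatoplus$. Recall that $\llbracket A \oplus B \rrbracket = \llbracket A \rrbracket \hatoplus \llbracket B \rrbracket$, so membership of $\inr(t)$ in this set amounts to three obligations: that $\inr(t) \in \SN$; that whenever $\inr(t) \lras \inl(u)$ we have $u \in \llbracket A \rrbracket$; and that whenever $\inr(t) \lras \inr(v)$ we have $v \in \llbracket B \rrbracket$.

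First I would discharge strong normalisation. Since $t \in \llbracket B \rrbracket$, by the hypothesis (relying implicitly on CR1 via Lemma~\ref{lem:typeinterpretationsarerc}) the proof-term $t$ strongly normalises. Any reduction sequence issued from $\inr(t)$ can only act inside the subterm $t$, because no reduction rule of Figure~\ref{fig:figureductionrules} has $\inr$ at the head of a redex. Hence every such sequence is in bijective correspondence with a reduction sequence of $t$ and is therefore finite, giving $\inr(t) \in \SN$.

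Next I would verify the two reduction clauses. For the $\inr$ clause: if $\inr(t) \lras \inr(v)$, then since the outermost $\inr$ constructor is never consumed by a reduction, this reduction must arise from $t \lras v$; as $t \in \llbracket B \rrbracket$, condition CR2 yields $v \in \llbracket B \rrbracket$. For the $\inl$ clause: the proof-term $\inr(t)$ never reduces to a term of the form $\inl(u)$ — again because reductions preserve the head constructor — so this requirement is vacuously satisfied.

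There is no real obstacle here; the only point worth stating explicitly is the structural invariant that reduction never alters the outermost injection constructor, which is immediate by inspection of the reduction rules. This is the exact analogue of the closing sentence ``$\inl(t)$ never reduces to $\inr(t')$'' in the proof of Lemma~\ref{lem:inl}.
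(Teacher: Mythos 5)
Your proposal is correct and matches the paper's proof, which simply states that the argument is the mirror image of Lemma~\ref{lem:inl}: strong normalisation because reductions of $\inr(t)$ can only occur inside $t$, the $\inr$ clause by CR2, and the $\inl$ clause vacuously since the head constructor is never altered. Nothing is missing.
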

  
  \begin{proof}
    Similar to the proof of Lemma~\ref{lem:inl}. 
  \end{proof}

  \begin{lemma}[Adequacy of $\oc$]
    \label{lem:bang}
    If $t \in \llbracket A\rrbracket$, then $\oc t \in \llbracket\oc A\rrbracket$.
  \end{lemma}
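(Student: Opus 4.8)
The plan is to follow exactly the pattern of the adequacy lemmas for the other introduction constructors (Lemmas~\ref{lem:tensor}, \ref{lem:pair}, and \ref{lem:inl}). Since $\llbracket \oc A\rrbracket = \hatbang\llbracket A\rrbracket$, unfolding the definition of $\hatbang$ shows that the claim $\oc t\in\llbracket\oc A\rrbracket$ amounts to establishing two things: that $\oc t$ strongly normalises, and that whenever $\oc t\lras\oc u$ we have $u\in\llbracket A\rrbracket$.

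First I would argue strong normalisation. By hypothesis $t\in\llbracket A\rrbracket$, and by Lemma~\ref{lem:typeinterpretationsarerc} together with CR1 this gives $t\in\SN$. The key syntactic observation is that no reduction rule of Figure~\ref{fig:figureductionrules} fires at the root of a term of the shape $\oc t$: the exponential has no commuting rule with $\plus$ or $\dotprod$ on its \emph{introduction} (the interstitial rules commute only with $\elimbang$), and the only $\oc$-redex, namely $\elimbang(\oc t,x^A.u)$, requires an enclosing elimination. Hence every reduction sequence issued from $\oc t$ reduces only inside the subterm $t$, and therefore terminates because $t\in\SN$; thus $\oc t\in\SN$.

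For the second condition, suppose $\oc t\lras\oc u$. By the same observation that reductions can only occur within the argument, this sequence must arise from a reduction $t\lras u$. Since $t\in\llbracket A\rrbracket$ and, by Lemma~\ref{lem:typeinterpretationsarerc}, $\llbracket A\rrbracket$ satisfies CR2, closure under reduction yields $u\in\llbracket A\rrbracket$. Combining the two conditions gives $\oc t\in\hatbang\llbracket A\rrbracket=\llbracket\oc A\rrbracket$, as required.

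There is essentially no hard step here: the whole argument rests on the fact that $\oc$ is inert at its head, so the proof collapses to ``strong normalisation by inner reduction'' plus CR2, just as for $\otimes$, $\pair{}{}$, $\inl$, and $\inr$. The only point deserving care is precisely this inertness, and it is exactly where the design choice recorded in the remark after Figure~\ref{fig:figureductionrules} pays off: commuting the interstitial rules with the elimination rather than the introduction of $\oc$ is what keeps $\oc(t\plus u)$ and $\oc(a\dotprod t)$ irreducible at the root. Had these been root-reducible, both this clean argument and the very definition of $\hatbang$ (which demands reduction to a bare $\oc u$, not to a linear combination of such terms) would have to be reconsidered.
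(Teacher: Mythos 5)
Your proof is correct and follows essentially the same route as the paper's: strong normalisation of $\oc t$ from the fact that reductions can only occur inside $t$ (since no rule fires at the root of a $\oc$-introduction), and then CR2 to conclude that any $u$ with $\oc t \lras \oc u$ lies in $\llbracket A\rrbracket$. Your additional remark about why the design choice of commuting interstitial rules with $\elimbang$ rather than with $\oc$-introduction makes this work is accurate, but it is supplementary commentary rather than a different argument.
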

  \begin{proof}
    The proof-term $t$ strongly normalises. Consider a reduction sequence
    issued from $\oc t$.  This sequence can only reduce $t$, hence it is
    finite.  Thus, $\oc t$ strongly normalises.
    Furthermore, if $\oc t \lras \oc t'$, then $t \lra^* t'$.  By CR2, $t' \in \llbracket A
    \rrbracket$. 
  \end{proof}
  
  \begin{lemma}[Adequacy of $\elimone$]
  \label{lem:elimone}
  If $t_1 \in \llbracket \one \rrbracket$ and $t_2 \in \llbracket C \rrbracket$, 
  then $\elimone(t_1,t_2) \in \llbracket C \rrbracket$.
  \end{lemma}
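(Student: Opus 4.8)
The plan is to apply the closure property CR3, since $\elimone(t_1,t_2)$ is an elimination and hence not an introduction. By Lemma~\ref{lem:typeinterpretationsarerc}, $\llbracket C \rrbracket$ is a reducibility candidate; by CR1 together with the fact that $\llbracket \one \rrbracket = \SN$, both $t_1$ and $t_2$ are strongly normalising, so the quantities $|t_1|$ and $|t_2|$ are well defined. I would prove, by induction on $|t_1| + |t_2|$, that every one-step reduct of $\elimone(t_1,t_2)$ lies in $\llbracket C \rrbracket$; CR3 then yields $\elimone(t_1,t_2) \in \llbracket C \rrbracket$ directly, with strong normalisation following for free from $Red(\elimone(t_1,t_2)) \subseteq \llbracket C \rrbracket \subseteq \SN$.

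The key observation is that, unlike $\elimtens$, $\elimplus$, and $\elimbang$, the eliminator $\elimone$ carries no commutation rule with $\plus$ or $\dotprod$ on its principal argument (see Figure~\ref{fig:figureductionrules}). Consequently the one-step reducts of $\elimone(t_1,t_2)$ fall into exactly three families: (a)~the root reduction $\elimone(a.\star, t_2) \lra a \dotprod t_2$, available only when $t_1$ is literally $a.\star$; (b)~a reduction $\elimone(t_1', t_2)$ with $t_1 \lra t_1'$; and (c)~a reduction $\elimone(t_1, t_2')$ with $t_2 \lra t_2'$.

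For family (a) I would invoke the adequacy of scalar multiplication (Lemma~\ref{lem:prod}): since $t_2 \in \llbracket C \rrbracket$, we obtain $a \dotprod t_2 \in \llbracket C \rrbracket$. For families (b) and (c) I would use CR2 to keep the reduced subterm inside its candidate, so that $t_1' \in \llbracket \one \rrbracket$ and $t_2' \in \llbracket C \rrbracket$ respectively, and then apply the induction hypothesis, which is legitimate because the strict decrease $|t_1'| < |t_1|$ or $|t_2'| < |t_2|$ makes the measure $|t_1| + |t_2|$ drop. This exhausts $Red(\elimone(t_1,t_2))$ and completes the CR3 argument.

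There is no genuine obstacle here: the argument is the routine reducibility-candidate treatment of an eliminator, structurally identical to the remaining adequacy lemmas. The only point deserving care is recognising that $\elimone$ has no interstitial commutation rules, which is precisely what makes its reduct analysis shorter than that of the other eliminators and isolates the single interesting reduct $a \dotprod t_2$, dispatched by Lemma~\ref{lem:prod}.
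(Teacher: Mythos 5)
Your proposal is correct and takes essentially the same approach as the paper's proof: induction on $|t_1| + |t_2|$, CR3 reducing the task to showing every one-step reduct lies in $\llbracket C \rrbracket$, CR2 plus the induction hypothesis for reductions inside $t_1$ or $t_2$, and Lemma~\ref{lem:prod} for the sole root reduct $a \dotprod t_2$. Your added observation that $\elimone$ has no commutation rules with $\plus$ and $\dotprod$ on its principal argument (unlike $\elimtens$, $\elimplus$, $\elimbang$) is accurate and correctly explains why the reduct analysis here is shorter than in the other adequacy lemmas.
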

  
  \begin{proof}
  The proof-terms $t_1$ and $t_2$ strongly normalise.  We prove, by
  induction on $|t_1| + |t_2|$, that $\elimone(t_1,t_2)
  \in \llbracket C \rrbracket$.  Using {CR3}, we only
  need to prove that every of its one step reducts is in $\llbracket C
  \rrbracket$.  If the reduction takes place in $t_1$ or $t_2$, then we
  apply {CR2} and the induction hypothesis.
  
  Otherwise, the proof-term $t_1$ is $a.\star$ and the
  reduct is $a \dotprod t_2$. We conclude with Lemma~\ref{lem:prod}. 
  \end{proof}
  
  \begin{lemma}[Adequacy of application]
  \label{lem:application}
  If $t_1 \in \llbracket A \multimap B \rrbracket$ and $t_2 \in
  \llbracket A \rrbracket$, then $t_1~t_2 \in \llbracket B
  \rrbracket$.
  \end{lemma}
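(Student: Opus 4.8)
The plan is to mirror the pattern of the preceding adequacy lemmas (Lemmas~\ref{lem:sum}--\ref{lem:bang}) and close $t_1~t_2$ under {CR3}. First I note that, by CR1 together with Lemma~\ref{lem:typeinterpretationsarerc}, the hypotheses $t_1 \in \llbracket A \multimap B \rrbracket$ and $t_2 \in \llbracket A \rrbracket$ give $t_1, t_2 \in \SN$; I write $|t_1|$ and $|t_2|$ for the maximal lengths of reduction sequences issued from them. Since $\llbracket B \rrbracket \in \mathcal R$ and $t_1~t_2$ is an elimination (hence not an introduction), by CR3 it suffices to show that every one-step reduct of $t_1~t_2$ lies in $\llbracket B \rrbracket$. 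I would prove this, and hence the lemma, by induction on $|t_1| + |t_2|$.

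For the inductive step I case-split on where the reduction occurs. If it takes place inside $t_1$, say $t_1 \lra t_1'$, then $t_1' \in \llbracket A \multimap B \rrbracket$ by CR2, and since $|t_1'| < |t_1|$ the induction hypothesis gives $t_1'~t_2 \in \llbracket B \rrbracket$; the symmetric argument, using $t_2' \in \llbracket A \rrbracket$ by CR2, handles a reduction inside $t_2$. The only remaining possibility is a reduction at the root. Here the key observation is that the sole root rule applicable to an application is $\beta$-reduction $(\lambda x^A.u)~t_2 \lra (t_2/x)u$: the interstitial commutations move $\plus$ and $\dotprod$ through the \emph{introduction} $\lambda$, never through the application, so no other head reduct exists. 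In this case $t_1$ is literally $\lambda x^A.u$, so $t_1 \lras \lambda x^A.u$ holds reflexively, and the defining clause of $\llbracket A \rrbracket \hatmultimap \llbracket B \rrbracket$ applied to $t_2 \in \llbracket A \rrbracket$ yields $(t_2/x)u \in \llbracket B \rrbracket$.

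The argument is essentially routine, and the only point deserving care is the interface between the \emph{one-step} reduct analysis demanded by CR3 and the \emph{many-step} condition $t_1 \lras \lambda x^A.u$ built into the definition of $\hatmultimap$. This is resolved precisely by the induction on $|t_1|+|t_2|$: when $t_1$ is not yet an abstraction there is no root redex, so every reduct is internal and handled by CR2 together with the induction hypothesis; and when $t_1$ already is an abstraction, the reflexive instance of $\lras$ makes the $\hatmultimap$ clause directly available. I expect no genuine obstacle beyond confirming, from the reduction rules of Figure~\ref{fig:figureductionrules}, that application admits no commuting conversion at its root.
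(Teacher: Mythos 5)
Your proof is correct and follows essentially the same approach as the paper: induction on $|t_1|+|t_2|$, closure under CR3, with internal reductions handled by CR2 plus the induction hypothesis and the root case reduced to $\beta$, where the reflexive instance of $\lras$ activates the defining clause of $\hatmultimap$. Your explicit check that application admits no root commutation rule is left implicit in the paper but is exactly the reason its case analysis is exhaustive.
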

  
  \begin{proof}
  The proof-terms $t_1$ and $t_2$ strongly normalise. We prove, by induction
  on $|t_1| + |t_2|$, that $t_1~t_2 \in \llbracket B \rrbracket$. Using
  {CR3}, we only need to prove that every of its one
  step reducts is in $\llbracket B \rrbracket$.  If the reduction takes
  place in $t_1$ or in $t_2$, then we apply {CR2}
  and the induction hypothesis.
  
  Otherwise, the proof-term $t_1$ has the form $\lambda x^C.u$ and the reduct
  is $(t_2/x)u$.  As $\lambda x^C.u \in \llbracket A \multimap B
  \rrbracket$, we have $(t_2/x)u \in \llbracket B \rrbracket$. 
  \end{proof}
  
  \begin{lemma}[Adequacy of $\elimtens$]
  \label{lem:elimtens}
  If $t_1 \in \llbracket A \otimes B \rrbracket$,
  for all $u$ in $\llbracket A \rrbracket$,
  for all $v$ in $\llbracket B \rrbracket$,
  $(u/x,v/y)t_2 \in \llbracket C \rrbracket$, 
  then $\elimtens(t_1, x^D y^E.t_2) \in \llbracket C \rrbracket$.
  \end{lemma}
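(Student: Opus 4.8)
The plan is to mirror the structure of the preceding elimination-adequacy lemmas (Lemmas~\ref{lem:elimone} and~\ref{lem:application}), establishing membership via the candidate closure condition CR3 together with an induction on the combined reduction length $|t_1| + |t_2|$.

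First I would check that both $t_1$ and $t_2$ strongly normalise, so that the measure $|t_1|+|t_2|$ is defined. For $t_1$ this is immediate from $t_1\in\llbracket A\otimes B\rrbracket\subseteq\SN$ (CR1). For $t_2$, I would instantiate the hypothesis with the variables themselves: by Lemma~\ref{lem:Var}, $x\in\llbracket A\rrbracket$ and $y\in\llbracket B\rrbracket$, hence $t_2=(x/x,y/y)t_2\in\llbracket C\rrbracket\subseteq\SN$.

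Since $\elimtens(t_1,x^Dy^E.t_2)$ is not an introduction, by CR3 it suffices to show that every one-step reduct lies in $\llbracket C\rrbracket$. If the reduction is internal to $t_1$ or to $t_2$, then the reduct is $\elimtens(t_1',x^Dy^E.t_2)$ or $\elimtens(t_1,x^Dy^E.t_2')$; in both cases the relevant hypotheses are preserved by CR2 (that $t_1'\in\llbracket A\otimes B\rrbracket$; and that $(u/x,v/y)t_2'\in\llbracket C\rrbracket$ for all $u,v$, since $(u/x,v/y)t_2\lra(u/x,v/y)t_2'$), while the measure strictly decreases, so the induction hypothesis applies. The remaining reducts are root reductions. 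If $t_1=u'\otimes v'$, the reduct is $(u'/x,v'/y)t_2$; since $t_1\in\llbracket A\otimes B\rrbracket$ reduces in zero steps to the tensor $u'\otimes v'$, the definition of $\hatotimes$ gives $u'\in\llbracket A\rrbracket$ and $v'\in\llbracket B\rrbracket$, so the hypothesis yields the reduct directly.

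The genuinely new cases, absent from the application lemma, are the two commuting reductions $\elimtens(t'\plus t'',x^Dy^E.t_2)\lra\elimtens(t',x^Dy^E.t_2)\plus\elimtens(t'',x^Dy^E.t_2)$ and $\elimtens(a\dotprod t',x^Dy^E.t_2)\lra a\dotprod\elimtens(t',x^Dy^E.t_2)$. Here is where I expect the main subtlety, and where ultra-reduction does its work: from $t_1=t'\plus t''\in\llbracket A\otimes B\rrbracket$ the ultra-reduction rules $t'\plus t''\lra t'$ and $t'\plus t''\lra t''$ together with CR2 give $t',t''\in\llbracket A\otimes B\rrbracket$ with strictly smaller measure; the induction hypothesis then places $\elimtens(t',x^Dy^E.t_2)$ and $\elimtens(t'',x^Dy^E.t_2)$ in $\llbracket C\rrbracket$, and Lemma~\ref{lem:sum} (adequacy of $\plus$) closes the sum. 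The scalar case is analogous, using $a\dotprod t'\lra t'$, CR2, the induction hypothesis, and Lemma~\ref{lem:prod} (adequacy of $\dotprod$). Having shown $Red(\elimtens(t_1,x^Dy^E.t_2))\subseteq\llbracket C\rrbracket$, CR3 concludes that $\elimtens(t_1,x^Dy^E.t_2)\in\llbracket C\rrbracket$.
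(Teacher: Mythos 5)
Your proof is correct and follows essentially the same route as the paper's: strong normalisation of $t_1$ and $t_2$ via CR1 and Lemma~\ref{lem:Var}, induction on $|t_1|+|t_2|$, CR3 to reduce to one-step reducts, CR2 for internal reductions, the definition of $\hatotimes$ for the tensor redex, and ultra-reduction plus Lemmas~\ref{lem:sum} and~\ref{lem:prod} for the two commuting cases. If anything, your treatment is slightly more careful than the paper's (you spell out why the hypothesis on $t_2$ is preserved under internal reduction, and you correctly place the intermediate terms in $\llbracket C\rrbracket$ where the paper's text has a harmless typo writing $\llbracket A\otimes B\rrbracket$).
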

  
  \begin{proof}
  By Lemma~\ref{lem:Var}, $x \in \llbracket A \rrbracket$ and $y \in
  \llbracket B \rrbracket$, thus $t_2 = (x/x,y/y)t_2 \in \llbracket C
  \rrbracket$.  Hence, $t_1$ and $t_2$ strongly normalise.  We prove, by
  induction on $|t_1| + |t_2|$, that $\elimtens(t_1,
  x^D y^E.t_2) \in \llbracket C \rrbracket$.  Using
  {CR3}, we only need to prove that every of its one
  step reducts is in $\llbracket C \rrbracket$.  If the reduction takes
  place in $t_1$ or $t_2$, then we apply {CR2} and the
  induction hypothesis. Otherwise, either:
  \begin{itemize}
  \item The proof-term $t_1$ has the form $w_2 \otimes w_3$ and the reduct is
    $(w_2/x,w_3/y)t_2$. As
    $w_2 \otimes w_3 \in \llbracket A \otimes B \rrbracket$, we
    have $w_2 \in \llbracket A \rrbracket$
  and $w_3 \in \llbracket B \rrbracket$. 
    Hence, $(w_2/x,w_3/y)t_2 \in
    \llbracket C \rrbracket$.
  
  \item The proof-term $t_1$ has the form $t_1' \plus t''_1$ and the
    reduct is\\ $\elimtens(t'_1, x^D y^E.t_2) \plus
    \elimtens(t''_1, x^D y^E.t_2)$. As $t_1
    \lra t'_1$ with an ultra-reduction rule, we have by
    {CR2}, $t'_1 \in \llbracket A \otimes B
    \rrbracket$.  Similarly, $t''_1 \in \llbracket A \otimes B
    \rrbracket$.  Thus, by induction hypothesis, $\elimtens(t'_1,
    x^D y^E.t_2) \in \llbracket A \otimes B \rrbracket$
    and $\elimtens(t''_1, x^D y^E.t_2) \in \llbracket A
    \otimes B \rrbracket$.  We conclude with Lemma~\ref{lem:sum}.
  
  \item The proof-term $t_1$ has the form $a \dotprod t_1'$ and the
    reduct is $a \dotprod \elimtens(t'_1, x^D y^E.t_2)$. As $t_1
    \lra t'_1$ with an ultra-reduction rule, we have by
    {CR2}, $t'_1 \in \llbracket A \oplus B
    \rrbracket$.  
    Thus, by induction hypothesis, $\elimtens(t'_1,
    x^D y^E.t_2) \in \llbracket A \otimes B \rrbracket$.
    We conclude with Lemma~\ref{lem:prod}. \qedhere
  \end{itemize}
  \end{proof}
  
  \begin{lemma}[Adequacy of $\elimzero$]
  \label{lem:elimzero}
  If $t \in \llbracket \zero \rrbracket$, 
  then $\elimzero(t) \in \llbracket C \rrbracket$.
  \end{lemma}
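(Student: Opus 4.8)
The plan is to mirror the structure of the preceding adequacy lemmas, but the situation here is considerably simpler because $\zero$ carries no introduction rule and, crucially, there is no reduction rule having $\elimzero(t)$ as a redex at the root. Inspecting Figure~\ref{fig:figureductionrules}, neither a cut-elimination rule nor a commutation rule (with $\plus$ or $\dotprod$) for $\elimzero$ appears. Consequently, the only one-step reducts of $\elimzero(t)$ are of the form $\elimzero(t')$ with $t \lra t'$, i.e.\ every reduction acts inside the argument $t$. This single observation is what makes the argument collapse to the internal-reduction case.

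First I would record that, since $\llbracket \zero \rrbracket = \SN$, the hypothesis $t \in \llbracket \zero \rrbracket$ says exactly that $t$ strongly normalises. Any reduction sequence issued from $\elimzero(t)$ therefore reduces $t$ alone, hence is finite, so $\elimzero(t) \in \SN$. Next, I would prove by induction on $|t|$ that for every such $t$ we have $\elimzero(t) \in \llbracket C \rrbracket$. By Lemma~\ref{lem:typeinterpretationsarerc}, $\llbracket C \rrbracket \in \mathcal R$, so both CR2 and CR3 are available. Since $\elimzero(t)$ is not an introduction, by CR3 it suffices to check that $Red(\elimzero(t)) \subseteq \llbracket C \rrbracket$. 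By the observation above, any such reduct is $\elimzero(t')$ for some $t'$ with $t \lra t'$; then $t' \in \SN = \llbracket \zero \rrbracket$ with $|t'| < |t|$, so the induction hypothesis gives $\elimzero(t') \in \llbracket C \rrbracket$. Hence $Red(\elimzero(t)) \subseteq \llbracket C \rrbracket$, and CR3 yields $\elimzero(t) \in \llbracket C \rrbracket$.

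There is essentially no obstacle to overcome: unlike the cases for $\elimone$, application, or $\elimtens$, there is no root-reduction case to analyse, precisely because $\zero$ has no canonical inhabitants (cf.\ the clause of Theorem~\ref{thm:introductions} stating that $A$ is never $\zero$) and the reduction system contains no $\elimzero$ rule. The only point requiring care is the explicit verification that no rule at all fires at the head of $\elimzero(t)$, which is what licenses restricting the reduct analysis to the single internal case and applying CR3 directly.
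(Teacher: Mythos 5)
Your proof is correct, and the key observation is the same as the paper's: no reduction rule of Figure~\ref{fig:figureductionrules} (nor any ultra-reduction rule) has $\elimzero(\cdot)$ at the root, so every reduct of $\elimzero(t)$ is $\elimzero(t')$ with $t \lra t'$. Where you diverge is in how you cash this in. You run the standard candidate machinery: induction on $|t|$, appeal to Lemma~\ref{lem:typeinterpretationsarerc} to get $\llbracket C \rrbracket \in \mathcal R$, and close with CR3, exactly mirroring the template of the $\elimone$ and application cases. The paper instead argues directly: $\elimzero(t)$ is strongly normalising (any reduction sequence only reduces $t$, which is in $\SN$ since $\llbracket \zero \rrbracket = \SN$), and $\elimzero(t)$ \emph{never reduces to an introduction}; since every set $\llbracket C \rrbracket$ is by definition of the form $\{ s \in \SN \mid \text{if } s \lras \text{(introduction)}, \text{ then } \dots \}$, all the conditional clauses are vacuously satisfied and membership follows immediately, with no induction and no use of CR2/CR3. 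Your version is more uniform with the surrounding lemmas and relies only on the abstract candidate axioms; the paper's is shorter because it exploits the specific shape of the definition of $\llbracket C \rrbracket$. Both are complete proofs, so there is nothing to fix.
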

  
  \begin{proof}
  The proof-term $t$ strongly normalises.  Consider a reduction sequence
  issued from $\elimzero(t)$.  This sequence can only reduce $t$, hence it
  is finite.  Thus, $\elimzero(t)$ strongly normalises.  Moreover, it
  never reduces to an introduction. 
  \end{proof}
  
  \begin{lemma}[Adequacy of $\elimwith^1$]
  \label{lem:elimwith1}
  If $t_1 \in \llbracket A \with B \rrbracket$
  and, for all $u$ in $\llbracket A \rrbracket$,
  $(u/x)t_2 \in \llbracket C \rrbracket$, 
  then $\elimwith^1(t_1, x^D.t_2) \in \llbracket C \rrbracket$.
  \end{lemma}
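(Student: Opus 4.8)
The plan is to follow exactly the template established for the other elimination constructors, and in particular to mirror the adequacy of $\elimtens$ (Lemma~\ref{lem:elimtens}), while taking advantage of the fact that the $\with_{e1}$ elimination carries \emph{no} interstitial commutation rules. First I would invoke Lemma~\ref{lem:Var} to obtain $x \in \llbracket A \rrbracket$, so that the second hypothesis, instantiated at $u = x$, yields $t_2 = (x/x)t_2 \in \llbracket C \rrbracket$; by CR1 both $t_1 \in \llbracket A \with B \rrbracket$ and $t_2$ are then strongly normalising. This provides the well-founded measure $|t_1| + |t_2|$ on which the induction will run.

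The core is an induction on $|t_1| + |t_2|$ establishing $\elimwith^1(t_1, x^D.t_2) \in \llbracket C \rrbracket$. Since $\elimwith^1(\cdots)$ is not an introduction, CR3 reduces the goal to checking that every one-step reduct lies in $\llbracket C \rrbracket$. A reduction internal to $t_1$ or to $t_2$ is handled by CR2 (to stay inside the respective candidate) together with the induction hypothesis, because any such step strictly decreases $|t_1| + |t_2|$.

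The only remaining one-step reduct is a root reduction, and this is where the proof simplifies relative to $\elimtens$: the reduction system has a single rule firing at the head of $\elimwith^1$, namely $\elimwith^1(\pair{w_1}{w_2}, x^D.t_2) \lra (w_1/x)t_2$, with no rules pushing $\plus$ or $\dotprod$ through $\elimwith^1$. So I would only treat the case where $t_1$ is literally a pair $\pair{w_1}{w_2}$: unfolding $\llbracket A \with B \rrbracket = \llbracket A \rrbracket \hatand \llbracket B \rrbracket$ gives $w_1 \in \llbracket A \rrbracket$, and the second hypothesis applied to $u = w_1$ delivers the reduct $(w_1/x)t_2 \in \llbracket C \rrbracket$, as required. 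Unlike the $\otimes$, $\oplus$, and $\oc$ cases, no appeal to the adequacy of $\plus$ or $\dotprod$ (Lemmas~\ref{lem:sum} and~\ref{lem:prod}) is needed here.

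If I had to name an obstacle, it is only bookkeeping rather than anything conceptual. The one point that deserves care is extracting $w_1 \in \llbracket A \rrbracket$ from $t_1 \in \llbracket A \with B \rrbracket$: this uses the definition of $\hatand$, which is phrased via reductions $t_1 \lras \pair{u}{v}$, instantiated here at the trivial reduction $\pair{w_1}{w_2} \lras \pair{w_1}{w_2}$. One must also confirm that the measure $|t_1| + |t_2|$ strictly decreases on the internal reductions, so that the induction is well founded. Because $\with$ admits no commuting conversions with the interstitial rules, the argument is strictly a specialisation of the $\elimtens$ proof, and $\elimwith^2$ (and by symmetry the analogous $\oplus$ elimination branches) would be dispatched identically.
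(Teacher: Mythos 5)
Your proposal is correct and matches the paper's own proof essentially step for step: Lemma~\ref{lem:Var} to get $t_2 \in \llbracket C \rrbracket$ and strong normalisation, induction on $|t_1|+|t_2|$, CR3 to reduce to one-step reducts, CR2 plus the induction hypothesis for internal reductions, and the single root case where $t_1$ is a pair, extracting $u \in \llbracket A \rrbracket$ from the definition of $\hatand$. Your observation that no appeal to Lemmas~\ref{lem:sum} and~\ref{lem:prod} is needed (because the interstitial rules commute with the $\with$ introduction rather than its elimination) is exactly why the paper's proof for $\elimwith^1$ is shorter than that for $\elimtens$, $\elimplus$, and $\elimbang$.
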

  
  \begin{proof}
  By Lemma~\ref{lem:Var}, $x \in \llbracket A \rrbracket$
  thus $t_2 = (x/x)t_2 \in \llbracket C
  \rrbracket$.  Hence, $t_1$ and $t_2$ strongly normalise.  We prove, by
  induction on $|t_1| + |t_2|$, that $\elimwith^1(t_1, x^D.t_2)
  \in \llbracket C \rrbracket$.  Using {CR3}, we only
  need to prove that every of its one step reducts is in $\llbracket C
  \rrbracket$.  If the reduction takes place in $t_1$ or $t_2$, then we
  apply {CR2} and the induction hypothesis.
  
  Otherwise, the proof-term $t_1$ has the form $\pair{u}{v}$ and the
  reduct is $(u/x)t_2$.  As $\pair{u}{v} \in \llbracket A
  \with B \rrbracket$, we have $u \in \llbracket A \rrbracket$.
  Hence, $(u/x)t_2 \in \llbracket C \rrbracket$. 
  \end{proof}
  
  \begin{lemma}[Adequacy of $\elimwith^2$]
  \label{lem:elimwith2}
  If $t_1 \in \llbracket A \with B \rrbracket$ and,
  for all $u$ in $\llbracket B \rrbracket$,
  $(u/x)t_2 \in \llbracket C \rrbracket$, 
  then $\elimwith^2(t_1, x^D.t_2) \in \llbracket C \rrbracket$.
  \end{lemma}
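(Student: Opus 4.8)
The plan is to follow the proof of Lemma~\ref{lem:elimwith1} line for line, interchanging the two components of the pair. First I would apply Lemma~\ref{lem:Var} to get $x \in \llbracket B \rrbracket$, and instantiate the second hypothesis at $u = x$ to obtain $t_2 = (x/x)t_2 \in \llbracket C \rrbracket$; by CR1 this gives $t_2 \in \SN$, and since $t_1 \in \llbracket A \with B \rrbracket \subseteq \SN$ we also have $t_1 \in \SN$. Both $|t_1|$ and $|t_2|$ are therefore defined, and I set up an induction on $|t_1| + |t_2|$.

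Because $\elimwith^2(t_1, x^D.t_2)$ is not an introduction, I would appeal to CR3: it suffices to show that every one-step reduct lies in $\llbracket C \rrbracket$. When the reduction happens inside $t_1$ or inside $t_2$, CR2 keeps the contracted subterm in its candidate and preserves the hypotheses, so the induction hypothesis applies at a strictly smaller measure. The only case with real content is the head reduction, where $t_1$ is necessarily a pair $\pair{u}{v}$ and the reduct is $(v/x)t_2$ --- the second component being selected, as dictated by the reduction rule for $\elimwith^2$.

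This head case is the sole point at which the argument diverges from that of $\elimwith^1$: from $\pair{u}{v} \in \llbracket A \with B \rrbracket = \llbracket A \rrbracket \hatand \llbracket B \rrbracket$ I extract the \emph{second} component $v \in \llbracket B \rrbracket$ (rather than the first), and feeding $v$ into the hypothesis yields $(v/x)t_2 \in \llbracket C \rrbracket$. I do not anticipate any genuine obstacle: the proof is perfectly symmetric to Lemma~\ref{lem:elimwith1}, the well-foundedness of the measure follows from both premises forcing their arguments into $\SN$, and the definition of $\hatand$ is precisely what makes the projection onto $\llbracket B \rrbracket$ immediate.
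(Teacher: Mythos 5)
Your proposal is correct and is exactly the paper's argument: the paper proves this lemma by the remark ``similar to the proof of Lemma~\ref{lem:elimwith1}'', and what you spell out (Lemma~\ref{lem:Var} to get $t_2 \in \llbracket C \rrbracket$ and strong normalisation, induction on $|t_1|+|t_2|$ via CR3, CR2 plus the induction hypothesis for internal reductions, and extraction of the second component $v \in \llbracket B \rrbracket$ from $\pair{u}{v} \in \llbracket A \with B \rrbracket$ in the head case) is precisely that symmetric adaptation. You also correctly note that the pair is the only head-reduction case, since the interstitial rules commute with the introduction of $\with$ rather than its elimination.
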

  
  \begin{proof}
  Similar to the proof of Lemma~\ref{lem:elimwith1}. 
  \end{proof}
  
  \begin{lemma}[Adequacy of $\elimplus$]
  \label{lem:elimplus}
  If $t_1 \in \llbracket A \oplus B \rrbracket$, for all $u$ in $\llbracket A
  \rrbracket$, $(u/x)t_2 \in \llbracket C \rrbracket$, and, for all $v$
  in $\llbracket B \rrbracket$, $(v/y)t_3 \in \llbracket C \rrbracket$,
  then $\elimplus(t_1, x^D.t_2, y^E.t_3) \in \llbracket C \rrbracket$.
  \end{lemma}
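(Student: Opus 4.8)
The plan is to mirror the proof of Lemma~\ref{lem:elimtens} for $\elimtens$, since $\oplus$ is the other connective whose interstitial rules commute with its elimination rule. First I would note that, by Lemma~\ref{lem:Var}, $x \in \llbracket A \rrbracket$ and $y \in \llbracket B \rrbracket$, so that the two universally quantified hypotheses, instantiated at these variables, give $t_2 = (x/x)t_2 \in \llbracket C \rrbracket$ and $t_3 = (y/y)t_3 \in \llbracket C \rrbracket$; in particular $t_1$, $t_2$, and $t_3$ all strongly normalise. I would then prove $\elimplus(t_1, x^D.t_2, y^E.t_3) \in \llbracket C \rrbracket$ by induction on $|t_1| + |t_2| + |t_3|$. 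Since $\elimplus$ is not an introduction, CR3 reduces the goal to showing that every one-step reduct lies in $\llbracket C \rrbracket$.

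If the reduction occurs inside $t_1$, $t_2$, or $t_3$, I would re-establish the relevant hypothesis for the reduced subterm via CR2 (using that reduction commutes with substitution, so that $(u/x)t_2 \lra (u/x)t_2'$ whenever $t_2 \lra t_2'$) and conclude by the induction hypothesis, the measure strictly decreasing. Otherwise the reduction is at the root, giving four cases, one per $\elimplus$ rule of Figure~\ref{fig:figureductionrules}. If $t_1 = \inl(w)$, the reduct is $(w/x)t_2$; since $\inl(w) \in \llbracket A \oplus B\rrbracket = \llbracket A\rrbracket \hatoplus \llbracket B\rrbracket$ forces $w \in \llbracket A \rrbracket$, the first hypothesis yields $(w/x)t_2 \in \llbracket C\rrbracket$. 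The case $t_1 = \inr(w)$ is symmetric, using the second hypothesis and $t_3$.

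The two remaining cases are the commutation rules, and they are where ultra-reduction does its work. If $t_1 = t_1' \plus t_1''$, the reduct is $\elimplus(t_1', x^D.t_2, y^E.t_3) \plus \elimplus(t_1'', x^D.t_2, y^E.t_3)$; since $t_1 \lra t_1'$ and $t_1 \lra t_1''$ by the ultra-reduction rules $t \plus u \lra t$ and $t \plus u \lra u$, CR2 gives $t_1', t_1'' \in \llbracket A \oplus B\rrbracket$, the induction hypothesis (the measure having decreased) places both $\elimplus$ subterms in $\llbracket C \rrbracket$, and Lemma~\ref{lem:sum} closes the sum. If $t_1 = a \dotprod t_1'$, the reduct is $a \dotprod \elimplus(t_1', x^D.t_2, y^E.t_3)$, handled identically but invoking Lemma~\ref{lem:prod} instead.

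I do not expect any genuine obstacle: the argument is a routine adaptation of the $\elimtens$ case. The only points requiring care are the bookkeeping of the two continuations $t_2$ and $t_3$ — the induction measure must track both, and the $\inl$/$\inr$ cases must dispatch to the matching hypothesis — and the fact that in the $\plus$ and $\dotprod$ cases membership $t_1', t_1'' \in \llbracket A \oplus B\rrbracket$ is recovered purely from CR2, which is exactly what licenses defining $\hatoplus$ through $\inl$/$\inr$ reducts alone.
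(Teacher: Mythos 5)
Your proof is correct and takes essentially the same route as the paper's: the same setup via Lemma~\ref{lem:Var}, the same induction on $|t_1|+|t_2|+|t_3|$ using CR3 and CR2, the same four root cases ($\inl$, $\inr$, sum, scalar product), and the same appeals to Lemmas~\ref{lem:sum} and~\ref{lem:prod} for the commutation rules. The only difference is that you spell out why the inner-reduction case preserves the hypotheses (reduction commutes with substitution), a detail the paper leaves implicit.
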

  
  \begin{proof}
  By Lemma~\ref{lem:Var}, $x \in \llbracket A \rrbracket$, thus $t_2 =
  (x/x)t_2 \in \llbracket C \rrbracket$. In the same way, $t_3 \in
  \llbracket C \rrbracket$.  Hence, $t_1$, $t_2$, and $t_3$ strongly
  normalises.  We prove, by induction on $|t_1| + |t_2| + |t_3|$,
  that $\elimplus(t_1, x^D.t_2,
  y^E.t_3) \in \llbracket C \rrbracket$.  Using {CR3}, we
  only need to prove that every of its one step reducts
  is in $\llbracket C \rrbracket$.  If the reduction takes place in
  $t_1$, $t_2$, or $t_3$, then we apply {CR2} and
  the induction hypothesis. Otherwise, either:
  \begin{itemize}
  \item The proof-term $t_1$ has the form $\inl(w_2)$ and the reduct is
    $(w_2/x)t_2$. As $\inl(w_2) \in \llbracket A \oplus B \rrbracket$, we
    have $w_2 \in \llbracket A \rrbracket$.  Hence, $(w_2/x)t_2 \in
    \llbracket C \rrbracket$.
  
  \item The proof-term $t_1$ has the form $\inr(w_3)$ and the reduct is
    $(w_3/x)t_3$. As $\inr(w_3) \in \llbracket A \oplus B \rrbracket$, we
    have $w_3 \in \llbracket B \rrbracket$.  Hence, $(w_3/x)t_3 \in
    \llbracket C \rrbracket$.
  
  \item The proof-term $t_1$ has the form $t_1' \plus t''_1$ and the
    reduct is $\elimplus(t'_1, x^D.t_2, y^E.t_3) \plus
    \elimplus(t''_1, x^D.t_2, y^E.t_3)$. As $t_1
    \lra t'_1$ with an ultra-reduction rule, we have by
    {CR2}, $t'_1 \in \llbracket A \oplus B
    \rrbracket$.  Similarly, $t''_1 \in \llbracket A \oplus B
    \rrbracket$.  Thus, by induction hypothesis, $\elimplus(t'_1,
    x^D.t_2, y^E.t_3) \in \llbracket A \oplus B \rrbracket$
    and $\elimplus(t''_1, x^D.t_2, y^E.t_3) \in \llbracket A
    \oplus B \rrbracket$.  We conclude with Lemma~\ref{lem:sum}.
  
  \item The proof-term $t_1$ has the form $a \dotprod t_1'$ and the
    reduct is $a \dotprod \elimplus(t'_1, x^D.t_2, y^E.t_3)$. As $t_1
    \lra t'_1$ with an ultra-reduction rule, we have by
    {CR2}, $t'_1 \in \llbracket A \oplus B
    \rrbracket$.  
    Thus, by induction hypothesis, $\elimplus(t'_1,
    x^D.t_2, y^E.t_3) \in \llbracket A \oplus B \rrbracket$.
    We conclude with Lemma~\ref{lem:prod}. \qedhere
  \end{itemize}
  \end{proof}

\begin{lemma}[Adequacy of $\elimbang$]\label{lem:elimbang}
  If $t_1 \in \llbracket \oc A \rrbracket$
  and, for all $u$ in $\llbracket A \rrbracket$,
  $(u/x)t_2 \in \llbracket B \rrbracket$, 
  then $\elimbang(t_1, x^C.t_2) \in \llbracket B \rrbracket$.
\end{lemma}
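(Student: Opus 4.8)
The plan is to mirror the adequacy proofs for the other elimination constructors, in particular Lemmas~\ref{lem:elimone}, \ref{lem:elimtens} and~\ref{lem:elimplus}. First I would use Lemma~\ref{lem:Var} to note that $x \in \llbracket A \rrbracket$, so instantiating the second hypothesis with $u = x$ gives $t_2 = (x/x)t_2 \in \llbracket B \rrbracket$; hence $t_2 \in \SN$ by CR1, and likewise $t_1 \in \llbracket \oc A \rrbracket \subseteq \SN$. Since $\elimbang(t_1, x^C.t_2)$ is not an introduction, CR3 reduces the goal to showing that every one-step reduct lies in $\llbracket B \rrbracket$, which I would establish by induction on $|t_1| + |t_2|$.

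The reductions internal to $t_1$ or $t_2$ are dispatched immediately by CR2 together with the induction hypothesis. The substantive cases are the three root reductions. (i) If $t_1 = \oc w$, the reduct is $(w/x)t_2$; since $\oc w \in \llbracket \oc A \rrbracket = \hatbang\llbracket A \rrbracket$, the definition of $\hatbang$ forces $w \in \llbracket A \rrbracket$, and the hypothesis then yields $(w/x)t_2 \in \llbracket B \rrbracket$. (ii) If $t_1 = t_1' \plus t_1''$, the reduct is $\elimbang(t_1', x^C.t_2) \plus \elimbang(t_1'', x^C.t_2)$; applying the ultra-reduction rules $t_1' \plus t_1'' \lra t_1'$ and $t_1' \plus t_1'' \lra t_1''$ with CR2 places both $t_1'$ and $t_1''$ in $\llbracket \oc A \rrbracket$, so the induction hypothesis sends each $\elimbang$-redex into $\llbracket B \rrbracket$, and Lemma~\ref{lem:sum} concludes. (iii) If $t_1 = a \dotprod t_1'$, the reduct is $a \dotprod \elimbang(t_1', x^C.t_2)$; the ultra-reduction rule $a \dotprod t_1' \lra t_1'$ and CR2 give $t_1' \in \llbracket \oc A \rrbracket$, the induction hypothesis gives $\elimbang(t_1', x^C.t_2) \in \llbracket B \rrbracket$, and Lemma~\ref{lem:prod} finishes.

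I expect the only real subtlety to lie in cases (ii) and (iii), and to be precisely the reason ultra-reduction is built into the definition of $\hatbang$: to apply CR2 and recover membership of the immediate subterms $t_1', t_1''$ in $\llbracket \oc A \rrbracket$, I need a linear combination standing at an argument of $\oc$-type to itself reduce, which the extra ultra-reduction rules guarantee. Everything else is routine bookkeeping, structurally identical to the proof of Lemma~\ref{lem:elimtens}.
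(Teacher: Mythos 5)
Your proposal is correct and follows essentially the same argument as the paper's proof: Lemma~\ref{lem:Var} to get $t_2 \in \llbracket B \rrbracket$ and strong normalisation of both terms, induction on $|t_1| + |t_2|$ via CR3, CR2 plus the induction hypothesis for internal reductions, and the same three root cases ($\oc u$, sum via ultra-reduction and Lemma~\ref{lem:sum}, product via ultra-reduction and Lemma~\ref{lem:prod}). Your closing observation about the role of ultra-reduction matches the paper's own remark on why the sets $\hatbang E$, $E \hatotimes F$ and $E \hatoplus F$ are defined the way they are.
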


\begin{proof}
  By Lemma~\ref{lem:Var}, $x \in \llbracket A \rrbracket$
  thus $t_2 = (x/x)t_2 \in \llbracket B
  \rrbracket$.  Hence, $t_1$ and $t_2$ strongly normalise.  We prove, by
  induction on $|t_1| + |t_2|$, that $\elimbang(t_1, x^C.t_2)
  \in \llbracket B \rrbracket$.  Using CR3, we only
  need to prove that every of its one step reducts is in $\llbracket B
  \rrbracket$.  If the reduction takes place in $t_1$ or $t_2$, then we
  apply CR2 and the induction hypothesis. Otherwise, either:
  \begin{itemize}
	\item The proof-term $t_1$ has the form $\oc u$ and the reduct is $(u/x)t_2$.  As $\oc u \in \llbracket \oc A \rrbracket$, we have $u \in \llbracket A \rrbracket$.
  Hence, $(u/x)t_2 \in \llbracket B \rrbracket$.

\item The proof-term $t_1$ has the form $t_1' \plus t_1''$ and the reduct is $\elimbang(t_1', x^C.t_2) \plus \elimbang(t_1'', x^C.t_2)$. As $t_1 \lra t_1'$ with an ultra-reduction rule, we have that $t_1' \in \llbracket \oc A \rrbracket$ by CR2. Similarly, $t_1'' \in \llbracket \oc A \rrbracket$. Thus, by the induction hypothesis, $\elimbang(t_1', x^C.t_2) \in \llbracket B \rrbracket$ and $\elimbang(t_1'', x^C.t_2) \in \llbracket B \rrbracket$. We conclude with Lemma~\ref{lem:sum}.
  
\item The proof-term $t_1$ has the form $a \dotprod t_1'$ and the reduct is $a \dotprod \elimbang(t_1', x^C.t_2)$. As $t_1 \lra t_1'$ with an ultra-reduction rule, we have that $t_1' \in \llbracket \oc A \rrbracket$ by CR2. Thus, by the induction hypothesis, $\elimbang(t_1', x^C.t_2) \in \llbracket B \rrbracket$. We conclude with Lemma~\ref{lem:prod}.\qedhere
  \end{itemize}
\end{proof}

\subsection{Theorem~\ref{thm:introductions} (Introduction)}\label{proof:introductionthm}
\introductions*
\begin{proof}
  By induction on $t$.

  We first remark that, as the proof-term $t$ is closed, it is not a
  variable. Then, we prove that it cannot be an elimination.
  \begin{itemize}
    \item If $t = \elimone(u,v)$, then $u$ is a closed
      irreducible proof-term of $\one$, hence, by induction
      hypothesis, it has the form $a.\star$
      and the proof-term $t$ is reducible.

    \item If $t = u~v$, then $u$ is a closed  irreducible proof-term of $B
      \multimap A$, hence, by induction hypothesis, it has the form
      $\lambda \abstr{x:B}u_1$ and the proof-term $t$ is reducible.

    \item If $t = \elimtens(u,x^B y^C.v)$, then $u$ is a closed
      irreducible proof-term of $B \otimes C$, hence, by induction hypothesis, it
      has the form $u_1 \otimes u_2$, $u_1 \plus u_2$, or $a \dotprod
      u_1$ and the proof-term $t$ is reducible.

    \item If $t = \elimzero(u)$, then $u$ is a closed irreducible proof-term of
      $\zero$ and, by induction hypothesis, no such proof-term exists.

    \item If $t = \elimwith^1(u,x^B.v)$, then $u$ is a closed
      irreducible proof-term of $B \with C$, hence, by induction
      hypothesis, it has the form $\pair{u_1}{u_2}$
      and the proof-term $t$ is reducible.

    \item If $t = \elimwith^2(u,x^C.v)$, then $u$ is a closed
      irreducible proof-term of $B \with C$, hence, by induction
      hypothesis, it has the form $\pair{u_1}{u_2}$
      and the proof-term $t$ is reducible.

    \item If $t = \elimplus(u,x^B.v,y^C.w)$, then $u$ is a closed
      irreducible proof-term of $B \oplus C$, hence, by induction hypothesis, it
      has the form $\inl(u_1)$, $\inr(u_1)$, $u_1 \plus u_2$, or $a \dotprod
      u_1$ and the proof-term $t$ is reducible.

    \item If $t = \elimbang(u,x^B.v)$, then $u$ is a closed
      irreducible proof-term of $\oc B$, hence, by induction
      hypothesis, it has the form $\oc u_1$, $u_1 \plus u_2$, or $a \dotprod u_1$
      and the proof-term $t$ is reducible.

  \end{itemize}

  Hence, $t$ is an introduction, a sum, or a product.

  \begin{itemize}
    \item It $t$ has the form $a.\star$, then $A$ is $\one$. 
    \item If it has the form $\lambda x^B.u$, then $A$ has the form $B \multimap C$.
    \item If it has the form $u \otimes v$, then $A$ has the form $B \otimes C$.
    \item If it is $\langle \rangle$, then $A$ is $\top$.
    \item If it has the form $\pair{u}{v}$, then $A$ has the form $B \with C$.
    \item If it has the form $\inl(u)$ or $\inr(u)$, then $A$ has the form $B \oplus C$.
    \item If it has the form $\oc u$, then $A$ has the form $\oc B$.
  \end{itemize}
  We prove that, if it has the form $u \plus v$ or $a \dotprod u$, $A$ has
  the form $B \otimes C$, $B \oplus C$ or $\oc B$.
  \begin{itemize}
    \item 
      If $t = u \plus v$, then the proof-terms $u$ and $v$ are two closed and irreducible proof-terms of $A$.
      \begin{itemize}
	\item     If $A = \one$ then, by induction hypothesis, they both have the form $a.\star$ and the proof-term $t$ is reducible.
	\item If $A$ has the form $B \multimap C$ then, by induction hypothesis, they are both abstractions and the proof-term $t$ is reducible.
	\item If $A = \top$ then, by induction hypothesis, they both are $\langle \rangle$ and the proof-term $t$ is reducible.
	\item If $A = \zero$ then, they are irreducible proof-terms of $\zero$ and, by induction hypothesis, no such proof-terms exist.
	\item If $A$ has the form $B \with C$, then, by induction hypothesis,
	  they are both pairs and the proof-term $t$ is reducible.
      \end{itemize}
	Hence, $A$ has the form $B \otimes C$, $B \oplus C$ or $\oc B$.

    \item
      If $t = a \dotprod u$, then the proof-term $u$ is a closed and irreducible proof-term of $A$.
      \begin{itemize}
	\item If $A = \one$ then, by induction hypothesis, $u$ has the form $a.\star$ and the proof-term $t$ is reducible.
	\item If $A$ has the form $B \multimap C$ then, by induction hypothesis, it is an abstraction and the proof-term $t$ is reducible.
	\item If $A = \top$ then, by induction hypothesis, it is $\langle \rangle$ and the proof-term $t$ is reducible.
	\item If $A = \zero$ then, it is an irreducible proof-term of $\zero$ and, by induction hypothesis, no such proof-term exists.
	\item If $A$ has the form $B \with C$, then, by induction hypothesis, it is a pair and the proof-term $t$ is reducible. 
      \end{itemize}
	Hence, $A$ has the form $B \otimes C$, $B \oplus C$ or $\oc B$.\qedhere
  \end{itemize}
\end{proof}

\section{Proofs of the linearity properties (Section~\ref{sec:seclinearity})}\label{proof:seclinearity}

\subsection{Lemma~\ref{lem:decomp}}\label{proof:decomp}
\decomp*
\begin{proof}
	By induction on $t$.
	\begin{itemize}
    \item If $t$ is the variable $x$, an introduction, a sum, or a
     product, we take $K = \_$, $u = t$, and $B = A$.

    \item If $t = \elimone(t_1,t_2)$, then $t_1$ is not a closed proof-term, as
      otherwise it would be a closed irreducible proof of $\one$, hence,
      by Theorem \ref{thm:introductions}, it would be an introduction and $t$
      would not be irreducible. Then, $\varnothing;x^C \vdash t_1:\one$ and $\varnothing;\varnothing \vdash t_2:A$. By the induction hypothesis, there exist $K_1$, $u_1$ and $B_1$ such that $\Gamma_1;\Delta_1 \vdash K_1:\one$, $\varnothing;x^C \vdash u_1:B_1$ and $t_1 = K_1[u_1]$, where $\Gamma_1;\Delta_1 = \_^{B_1};\varnothing$ or $\Gamma_1;\Delta_1 = \varnothing; \_^{B_1}$. We take $K = \elimone(K_1, t_2)$, $u=u_1$ and $B=B_1$. We have $\varnothing;x^C \vdash u:B$, $K[u] = \elimone(K_1[u_1], t_2)=t$ and $\Gamma_1;\Delta_1 \vdash K:A$.
      
    \item If $t = t_1~t_2$,
          we apply the same method as for the case $t = \elimone(t_1,t_2)$.

    \item If $t = \elimtens(t_1,y^{D_1} z^{D_2}.t_2)$, then $t_1$ is not a
      closed proof-term, as otherwise it would be a closed irreducible proof of $D_1 \otimes D_2$, hence,
      by Theorem \ref{thm:introductions}, it would be an introduction, a sum, or a
      product, and $t$ would not be irreducible. Then, $\varnothing;x^C \vdash t_1:D_1 \otimes D_2$ and $\varnothing; y^{D_1}, z^{D_2} \vdash t_2:A$. By the induction hypothesis, there exist $K_1$, $u_1$ and $B_1$ such that $\Gamma_1;\Delta_1 \vdash K_1:D_1 \otimes D_2$, $\varnothing;x^C \vdash u_1:B_1$ and $t_1 = K_1[u_1]$, where $\Gamma_1;\Delta_1 = \_^{B_1};\varnothing$ or $\Gamma_1;\Delta_1 = \varnothing; \_^{B_1}$. We take $K = \elimtens(K_1, y^{D_1} z^{D_2}.t_2)$, $u=u_1$ and $B=B_1$. We have $\varnothing;x^C \vdash u:B$, $K[u] = \elimtens(K_1[u_1], t_2) = t$ and $\Gamma_1;\Delta_1 \vdash K:A$.

    \item If $t = \elimzero(t_1)$, then,
      by Theorem \ref{thm:introductions},
      $t_1$ is not a closed proof-term as there is
      no closed irreducible proof of $\zero$. Then, $\varnothing;x^C \vdash t_1:\zero$. By the induction hypothesis, there exist $K_1$, $u_1$ and $B_1$ such that $\Gamma_1;\Delta_1 \vdash K_1:\zero$, $\varnothing;x^C \vdash u_1:B_1$ and $t_1 = K_1[u_1]$, where $\Gamma_1;\Delta_1 = \_^{B_1};\varnothing$ or $\Gamma_1;\Delta_1 = \varnothing; \_^{B_1}$. We take $K = \elimzero(K_1)$, $u=u_1$ and $B=B_1$. We have $\varnothing;x^C \vdash u:B$, $K[u] = \elimzero(K_1[u_1]) = t$ and $\Gamma_1, \Delta_1 \vdash K:A$.

    \item If $t = \elimwith^1(t_1,y^D.t_2)$ or
      $t = \elimwith^2(t_1,y^D.t_2)$,
      we apply the same method as for the case $t = \elimone(t_1,t_2)$.

    \item If $t =  \elimplus(t_1,y^{D_1}.t_2,z^{D_2}.t_3)$, we apply the same
      method as for the case $t = \elimtens(t_1,y^{D_1} z^{D_2}.t_2)$.
	
	\item If $t = \elimbang(t_1, y^D.t_2)$, then $t_1$ is not a closed proof-term, as otherwise it would be a closed irreducible proof of $\oc D$, hence, by Theorem~\ref{thm:introductions}, it would be an introduction, a sum, or a product, and $t$ would not be irreducible. Then, $\varnothing;x^C \vdash t_1:\oc D$ and $y^D;\varnothing \vdash t_2:A$. By the induction hypothesis, there exist $K_1$, $u_1$ and $B_1$ such that $\Gamma_1;\Delta_1 \vdash K_1:\oc D$, $\varnothing;x^C \vdash u_1:B_1$ and $K_1[u_1] = t_1$, where $\Gamma_1;\Delta_1 = \_^{B_1};\varnothing$ or $\Gamma_1;\Delta_1 = \varnothing; \_^{B_1}$. We take $K = \elimbang(K_1, y^D.t_2)$, $u = u_1$ and $B = B_1$. We have $\varnothing;x^C \vdash u:B$, $K[u] = \elimbang(K_1[u_1], y^D.t_2) = t$ and $\Gamma_1;\Delta_1 \vdash K:A$.
	  \qedhere
\end{itemize}
\end{proof}

\subsection{Lemma~\ref{lem:measureofelimcontext}}\label{proof:measureofelimcontext}
\measureofelimcontext*
\begin{proof}
  By induction on $K$.

  \begin{itemize}
    \item If $K = \_$, then $\mu(K) = 0$ and $K[t] = t$.  We have
      $\mu(K[t]) = \mu(t) = \mu(K) + \mu(t)$.

    \item If $K = \elimone(K_1,u)$ then $K[t] = \elimone(K_1[t],u)$.
      We have, by
      induction hypothesis, $\mu(K[t]) = 1 + \mu(K_1[t]) + \mu(u)
      = 1 + \mu(K_1) + \mu(t) + \mu(u) = \mu(K) + \mu(t)$.

    \item If $K = K_1~u$ then $K[t] = K_1[t]~u$.  We have, by
      induction hypothesis, $\mu(K[t]) = 1 + \mu(K_1[t]) + \mu(u) = 1 +
      \mu(K_1) +   \mu(t) + \mu(u) = \mu(K) + \mu(t)$.

    \item 
      If $K = \elimtens(K_1,x^A y^B.v)$, then $K[t] =
      \elimtens(K_1[t],x^A y^B.v)$.  We have, by induction
      hypothesis, $\mu(K[t]) = 1 + \mu(K_1[t]) + \mu(v) = 1 +
      \mu(K_1) + \mu(t) + \mu(v) = \mu(K) + \mu(t)$.

    \item If $K = \elimzero(K_1)$, then $K[t] = \elimzero(K_1[t])$. We
      have, by induction hypothesis, $\mu(K[t]) = 1 + \mu(K_1[t])= 1 +
      \mu(K_1) + \mu (t) = \mu(K) + \mu(t)$.

    \item If $K = \elimwith^1(K_1,x^A.r)$, then $K[t] =
      \elimwith^1(K_1[t],x^A.r)$. We have, by induction hypothesis,
      $\mu(K[t]) = 1 + \mu(K_1[t]) + \mu(r) = 1 + \mu(K_1) + \mu (t) +
      \mu(r) = \mu(K) + \mu(t)$.

      The same holds if $K = \elimwith^2(K_1,x^A.s)$.

    \item 
      If $K = \elimplus(K_1,x^A.r,y^B.s)$, then $K[t] =
      \elimplus(K_1[t],x^A.r,y^B.s)$.  We have, by induction
      hypothesis, $\mu(K[t]) = 1 + \mu(K_1[t]) + \max(\mu(r), \mu(s)) = 1 +
      \mu(K_1) + \mu(t) + \max(\mu(r), \mu(s)) = \mu(K) + \mu(t)$.

	\item  If $K = \elimbang(K_1,x^A.r)$, then $K[t] = \elimbang(K_1[t],x^A.r)$. We have, by the induction hypothesis, $\mu(K[t]) = 1 + \mu(K_1[t]) + \mu(r) = 1 + \mu(K_1) + \mu(t) + \mu(r) = \mu(K) + \mu(t)$.
      \qedhere
  \end{itemize}
\end{proof}

\subsection{Lemma~\ref{lem:linearity}}\label{proof:linearity}
\linearity*
\begin{proof}
  Without loss of generality, we can assume that $t$ is irreducible.  We proceed by induction on $\mu(t)$. 

  Using Lemma~\ref{lem:decomp}, the proof-term $t$ can be decomposed as $K[t']$
  where $t'$ is either the variable $x$, an introduction, a sum, or a product, $\varnothing;x^A \vdash t':C$ and $\_^C;\varnothing \vdash K:B$ or $\varnothing; \_^C \vdash K:B$.
 \begin{itemize}
	\item If $t'$ is an introduction, as $t$ is irreducible, $K = \_$, and $t'$ is a proof-term of $B$ in $\mathcal V$, then $t'$ is either $a.\star$ or $\pair{t_1}{t_2}$. Since $\varnothing;x^A \nvdash a.\star:\one$, we have that $t' = \pair{t_1}{t_2}$, with $\mu(t_1) < \mu(t') = \mu(t)$ and $\mu(t_2) < \mu(t') = \mu(t)$. Then, by the induction hypothesis we have that:
	\begin{align*}
		t[u_1 \plus u_2]
		&= \pair{t_1[u_1 \plus u_2]}{t_2[u_1 \plus u_2]}\\
		&\equiv \pair{t_1[u_1 \plus u_2]}{t_2[u_1 \plus u_2]}\\
		&\longleftarrow \pair{t_1[u_1]}{t_2[u_1]} \plus \pair{t_1[u_2]}{t_2[u_2]}\\
		&= t[u_1] \plus t[u_2]
	\end{align*}
	\begin{align*}
		t[a \dotprod u_1]
		&= \pair{t_1[a \dotprod u_1]}{t_2[a \dotprod u_1]}\\
		&\equiv \pair{a \dotprod t_1[u_1]}{a \dotprod t_2[u_1]}\\
		&\longleftarrow a \dotprod \pair{t_1[u_1]}{t_2[u_1]}\\
		&= a \dotprod t[u_1]
	\end{align*}

	\item If $t'=t_1 \plus t_2$, since $\mu(K) < \mu(t)$, $\mu(t_1) < \mu(t)$ and $\mu(t_2) < \mu(t)$, by the induction hypothesis and by Lemma~\ref{lem:vecstructure} we have that:
	\begin{align*}
		t[u_1 \plus u_2] 
		&= K[t_1 \plus t_2][u_1 \plus u_2]\\
		&= K[t_1[u_1 \plus u_2] \plus t_2[u_1 \plus u_2]]\\
		&\equiv K[(t_1[u_1] \plus t_1[u_2]) \plus (t_2[u_1] \plus t_2[u_2])]\\
		&\equiv K[(t_1[u_1] \plus t_2[u_1]) \plus (t_1[u_2] \plus t_2[u_2])]\\
		&\equiv K[(t_1[u_1] \plus t_2[u_1])] \plus K[t_1[u_2] \plus t_2[u_2]]\\
		&= K[t_1 \plus t_2][u_1] \plus K[t_1 \plus t_2][u_2]\\
		&= t[u_1] \plus t[u_2]
	\end{align*}
	\begin{align*}
		t[a \dotprod u_1]
		&= K[t_1 \plus t_2][a \dotprod u_1]\\
		&= K[t_1[a \dotprod u_1] \plus t_2[a \dotprod u_1]]\\
		&\equiv K[(a \dotprod t_1[u_1]) \plus (a \dotprod t_2[u_1])]\\
		&\equiv K[a \dotprod (t_1[u_1] \plus t_2[u_1])]\\
		&\equiv a \dotprod K[t_1[u_1] \plus t_2[u_1]]\\
		&= a \dotprod K[t_1 \plus t_2][u_1]\\
		&= a \dotprod t[u_1]
	\end{align*}

	\item If $t' = b \dotprod t_1$, since $\mu(K) < \mu(t)$ and $\mu(t_1) < \mu(t)$, by the induction hypothesis and by Lemma~\ref{lem:vecstructure} we have that:
	\begin{align*}
		t[u_1 \plus u_2]
		&= K[b \dotprod t_1][u_1 \plus u_2]\\
		&= K[b \dotprod t_1[u_1 \plus u_2]]\\
		&\equiv K [b \dotprod (t_1[u_1] \plus t_1[u_2])]\\
		&\equiv K [b \dotprod t_1[u_1] \plus b \dotprod t_1[u_2]]\\
		&\equiv K[b \dotprod t_1[u_1]] \plus K[b \dotprod t_1[u_2]]\\
		&= K[b \dotprod t_1][u_1] \plus K[b \dotprod t_1][u_2]\\
		&= t[u_1] \plus t[u_2]
	\end{align*}
	\begin{align*}
		t[a \dotprod u_1]
		&= K[b \dotprod t_1][a \dotprod u_1]\\
		&= K[b \dotprod t_1[a \dotprod u_1]]\\
		&\equiv K[b \dotprod (a \dotprod t_1[u_1])]\\
		&\equiv K[(b \times a) \dotprod t_1[u_1]]\\
		&= K[(a \times b) \dotprod t_1[u_1]]\\
		&\equiv K[a \dotprod (b \dotprod t_1[u_1])]\\
		&\equiv a \dotprod K[b \dotprod t_1[u_1]]\\
		&= a \dotprod K[b \dotprod t_1][u_1]\\
		&= a \dotprod t[u_1]
	\end{align*}

	\item If $t'$ is the variable $x$, by Lemma~\ref{horrible}, $K$ has the form $K_1[K_2]$ and $K_2$ is an
  elimination of the top symbol of $A$. We consider the various cases for $K_2$. 

  \begin{itemize}
	\item If $K = K_1[\elimone(\_,r)]$, then $u_1$ and $u_2$ are closed proof-terms of $\one$, thus $u_1 \lras b.\star$ and $u_2 \lras c.\star$. Since $\mu(K_1) < \mu(K) = \mu(t)$, by the induction hypothesis and Lemma~\ref{lem:vecstructure} we have that:
	\begin{align*}
		K[u_1 \plus u_2] 
		&= K_1[\elimone(u_1 \plus u_2, r)]\\
		&\lras K_1[\elimone(b.\star \plus c.\star, r)]\\
		&\lra K_1[(b + c) \dotprod r]\\
		&\equiv (b + c) \dotprod K_1[r]\\
		&\equiv b \dotprod K_1[r] \plus c \dotprod K_1[r]\\
		&\equiv K_1[b \dotprod r] \plus K_1[c \dotprod r]\\
		&\llas K_1[\elimone(b.\star, r)] \plus K_1[\elimone(c.\star, r)]\\
		&\llas K_1[\elimone(u_1, r)] \plus K_1[\elimone(u_2, r)]\\
		&= K[u_1] \plus K[u_2]
	\end{align*}
	\begin{align*}
		K[a \dotprod u_1]
		&= K_1[\elimone(a \dotprod u_1, r)]\\
		&\lras K_1[\elimone(a \dotprod b.\star, r)]\\
		&\lra K_1[\elimone((a \times b) \dotprod r)]\\
		&\lra K_1[(a \times b) \dotprod r]\\
		&\equiv K_1[a \dotprod (b \dotprod r)]\\
		&\equiv a \dotprod K_1[b \dotprod r]\\
		&\lla a \dotprod K_1[\elimone(b.\star, r)]\\
		&\llas a \dotprod K_1[\elimone(u_1, r)]\\
		&= a \dotprod K[u_1]
	\end{align*}

	\item If $K = K_1[\_~r]$, then $u_1$ and $u_2$ are closed proof-terms of a linear implication, thus $u_1 \lras \lambda y^D.u_1'$ and $u_2 \lras \lambda y^D.u_2'$. Since $\mu(K_1) < \mu(K) = \mu(t)$, by the induction hypothesis we have that:
	\begin{align*}
		K[u_1 \plus u_2]
		&= K_1[(u_1 \plus u_2)~r]\\
		&\lras K_1[((\lambda y^D.u_1') \plus (\lambda y^D.u_2'))~r]\\
		&\lra K_1[(\lambda y^D.u_1' \plus u_2')~r]\\
		&\lra K_1[(r/y)(u_1' \plus u_2')]\\
		&= K_1[(r/y)u_1' \plus (r/y)u_2']\\
		&\equiv K_1[(r/y)u_1'] \plus K_1[(r/y)u_2']\\
		&\llas K_1[(\lambda y^D.u_1')~r] \plus K_1[(\lambda y^D.u_2')~r]\\
		&\llas K_1[u_1~r] \plus K_1[u_2~r]\\
		&= K[u_1] \plus K[u_2]
	\end{align*}
	\begin{align*}
		K[a \dotprod u_1]
		&= K_1[(a \dotprod u_1)~r]\\
		&\lras K_1[(a \dotprod (\lambda y^D.u_1'))~r]\\
		&\lra K_1[(\lambda y^D.a \dotprod u_1')~r]\\
		&\lra K_1[(r/y)(a \dotprod u_1')]\\
		&= K_1[a \dotprod (r/y)u_1']\\
		&\equiv a \dotprod K_1[(r/y)u_1']\\
		&\lla a \dotprod K_1[(\lambda y^D.u_1')~r]\\
		&\llas a \dotprod K_1[u_1~r]\\
		&= a \dotprod K[u_1]
	\end{align*}
	\item If $K = K_1[\elimtens(\_, y^{D_1} z^{D_2}.r)]$, since $\mu(K_1) < \mu(K) = \mu(t)$, by the induction hypothesis we have that:
	\begin{align*}
		K[u_1 \plus u_2]
		&= K_1[\elimtens(u_1 \plus u_2, y^{D_1} z^{D_2}.r)]\\
		&\lra K_1[\elimtens(u_1, y^{D_1} z^{D_2}.r) \plus \elimtens(u_2, y^{D_1} z^{D_2}.r)]\\
		&\equiv K_1[\elimtens(u_1, y^{D_1} z^{D_2}.r)] \plus K_1[\elimtens(u_2, y^{D_1} z^{D_2}.r)]\\
		&= K[u_1] \plus K[u_2]
	\end{align*}
	\begin{align*}
		K[a \dotprod u_1]
		&= K_1[\elimtens(a \dotprod u_1, y^{D_1} z^{D_2}.r)]\\
		&\lra K_1[a \dotprod \elimtens(u_1, y^{D_1} z^{D_2}.r)]\\
		&\equiv a \dotprod K_1[\elimtens(u_1, y^{D_1} z^{D_2}.r)]\\
		&= a \dotprod K[u_1]
	\end{align*}

	\item If $K = K_1[\elimzero(\_)]$, then $u_1$ and $u_2$ are closed proof-terms of $\zero$ and this is not possible.
	\item If $K = K_1[\elimwith^1(\_,y^D.r)]$, then $u_1$ and $u_2$ are closed proof-terms of the additive conjunction $\with$, thus $u_1 \lras \pair{u_{11}}{u_{12}}$ and $u_2 \lras \pair{u_{21}}{u_{22}}$. 
	
	By Lemma~\ref{lem:measureofelimcontext}, we have that $\mu(K_1[r]) = \mu(K_1) + \mu(r) < 1 + \mu(K_1) + \mu(r) = \mu(K) = \mu(t)$. We have that $\varnothing;y^D \vdash K_1[r]:B$. Then, by the induction hypothesis on $K_1[r]$, and since $y \notin \fv(K_1)$, we have that:
	\begin{align*}
		K[u_1 \plus u_2]
		&= K_1[\elimwith^1(u_1 \plus u_2,y^D.r)]\\
		&\lras K_1[\elimwith^1(\pair{u_{11}}{u_{12}} \plus \pair{u_{21}}{u_{22}}, y^D.r)]\\
		&\lra K_1[\elimwith^1(\pair{u_{11} \plus u_{21}}{u_{12} \plus u_{22}}, y^D.r)]\\
		&\lra K_1[r[u_{11} \plus u_{21}]]\\
		&= (K_1[r])[u_{11} \plus u_{21}]\\
		&\equiv (K_1[r])[u_{11}] \plus (K_1[r])[u_{21}]\\
		&= K_1[r[u_{11}]] \plus K_1[r[u_{21}]]\\
		&\llas K_1[\elimwith^1(\pair{u_{11}}{u_{12}}, y^D.r)] \plus K_1[\elimwith^1(\pair{u_{21}}{u_{22}}, y^D.r)]\\
		&\llas K_1[\elimwith^1(u_1, y^D.r)] \plus K_1[\elimwith^1(u_2, y^D.r)]\\
		&= K[u_1] \plus K[u_2]
	\end{align*}

	\begin{align*}
		K[a \dotprod u_1]
		&= K_1[\elimwith^1(a \dotprod u_1, y^D.r)]\\
		&\lras K_1[\elimwith^1(a \dotprod \pair{u_{11}}{u_{12}}, y^D.r)]\\
		&\lra K_1[\elimwith^1(\pair{a \dotprod u_{11}}{a \dotprod u_{12}}, y^D.r)]\\
		&\lra K_1[r[a \dotprod u_{11}]]\\
		&= (K_1[r])[a \dotprod u_{11}]\\
		&\equiv a \dotprod (K_1[r])[u_{11}]\\
		&= a \dotprod K_1[r[u_{11}]]\\
		&\lla a \dotprod K_1[\elimwith^1(\pair{u_{11}}{u_{12}}, y^D.r)]\\
		&\llas a \dotprod K_1[\elimwith^1(u_1, y^D.r)]\\
		&= a \dotprod K[u_1]
	\end{align*}
	\item If $K = K_1[\elimwith^2(\_,y^D.r)]$ the proof is similar.
	\item If $K = K_1[\elimplus(\_,y^{D_1}.r, z^{D_2}.s)]$, since $\mu(K_1) < \mu(K) = \mu(t)$, by the induction hypothesis we have that:
	\begin{align*}
		K[u_1 \plus u_2]
		&= K_1[\elimplus(u_1 \plus u_2, y^{D_1}.r, z^{D_2}.s)]\\
		&\lra K_1[\elimplus(u_1, y^{D_1}.r, z^{D_2}.s) \plus \elimplus(u_2, y^{D_1}.r, z^{D_2}.s)]\\
		&\equiv K_1[\elimplus(u_1, y^{D_1}.r, z^{D_2}.s)] \plus K_1[\elimplus(u_2, y^{D_1}.r, z^{D_2}.s)]\\
		&= K[u_1] \plus K[u_2]
	\end{align*}
	\begin{align*}
		K[a \dotprod u_1]
		&= K_1[\elimplus(a \dotprod u_1, y^{D_1}.r, z^{D_2}.s)]\\
		&\lra K_1[a \dotprod \elimplus(u_1, y^{D_1}.r, z^{D_2}.s)]\\
		&\equiv a \dotprod K_1[\elimplus(u_1, y^{D_1}.r, z^{D_2}.s)]\\
		&= a \dotprod K[u_1]
	\end{align*}

	\item If $K = K_1[\elimbang(\_,y^D.r)]$, since $\mu(K_1) < \mu(K) = \mu(t)$, by the induction hypothesis we have that:
	\begin{align*}
		K[u_1 \plus u_2]
		&= K_1[\elimbang(u_1 \plus u_2, y^D.r)]\\
		&\lra K_1[\elimbang(u_1, y^D.r) \plus \elimbang(u_2, y^D.r)]\\
		&\equiv K_1[\elimbang(u_1, y^D.r)] \plus K_1[\elimbang(u_2, y^D.r)]\\
		&= K[u_1] \plus K[u_2]
	\end{align*}
	\begin{align*}
		K(a \dotprod u_1)
		&= K_1[\elimbang(a \dotprod u_1, y^D.r)]\\
		&\lra K_1[a \dotprod \elimbang(u_1, y^D.r)]\\
		&\equiv a \dotprod K_1[\elimbang(u_1, y^D.r)]\\
		&= a \dotprod K[u_1]
		\tag*{\qedhere}
	\end{align*}
  \end{itemize}
\end{itemize}
\end{proof}

\subsection{Corollary~\ref{coro:linearitygeneral}}\label{proof:linearitygeneral}
\linearitygeneral*
\begin{proof}
  By strong normalisation and the introduction property, we have that $f \lras \lambda x^A.t$, where $\varnothing;x^A \vdash t:B$. Let $C \in {\mathcal V}$, and $c$ be a proof-term such that $\varnothing;\_^B \vdash c:C$. Thus,
	\begin{align*}
		c[f~(u_1 \plus u_2)] 
		&\lras c[t[u_1 \plus u_2]]\\
		c[f~u_1 \plus f~u_2]
		&\lras c[t[u_1] \plus t[u_2]]\\
		c[f~(a \dotprod u_1)]
		&\lras c[t[a \dotprod u_1]]\\
		c[a \dotprod (f~u_1)]
		&\lras c[a \dotprod (t[u_1])]
	\end{align*}

   Then, $\varnothing;x^A \vdash c[t]:C$, and applying Lemma~\ref{lem:linearity} to the proof-term $c[t]$ we get 
	\[
		c[t[u_1 \plus u_2]] = c[t][u_1 \plus u_2] \equiv c[t][u_1] \plus c[t][u_2] = c[t[u_1]] \plus c[t[u_2]]
	\]
	and
	\[
		c[t][a \dotprod u_1] = c[t[a \dotprod u_1]] \equiv a \dotprod c[t[u_1]] = a \dotprod c[t][u_1]
	\]

  and applying it again to the proof-term $c$ we get
  \[
	c[t[u_1] \plus t[u_2]] \equiv c[t[u_1]] \plus c[t[u_2]]
	\qquad\qquad\textrm{and}\qquad\qquad
	c[a \dotprod t[u_1]] \equiv a \dotprod c[t[u_1]]
  \]
  
  Thus
  \[
	c[t[u_1 \plus u_2]] \equiv c[t[u_1] \plus t[u_2]]
	\qquad\qquad\textrm{and}\qquad\qquad
	c[t[a \dotprod u_1]] \equiv c[a \dotprod t[u_1]]
  \]
  that is 
  \begin{align*}
    f~(u_1 \plus u_2) \sim f~u_1 \plus f~u_2
  	&\qquad\qquad\textrm{and}\qquad\qquad
  	f~(a \dotprod u_1) \sim a \dotprod (f~u_1)
    \tag*{\qedhere}
  \end{align*}
\end{proof}

\section{Proofs of the denotational semantics (Section~\ref{sec:denotationalsemantics})}\label{proof:secdenotationalsemantics}

\subsection{Lemma~\ref{lem:generalizacionprop9}}\label{proof:gen9}
\generalizacionpropnueve*
\begin{proof}
	This proof is a generalisation of the proof of \cite[Proposition 9]{bierman}. The following diagrams commute:
\[\begin{tikzcd}[ampersand replacement=\&,cramped]
	{A \otimes A} \&\& {\oc A \otimes \oc A} \&\& {\oc(A \otimes A)} \\
	\&\&\&\&\&\& {A \otimes A} \&\& {\oc A \otimes \oc A} \&\& {\oc(A \otimes A)} \\
	{\oc A \otimes \oc A} \&\& {\oc\oc A \otimes \oc\oc A} \&\& {\oc(\oc A \otimes \oc A)} \\
	\&\&\&\&\&\&\&\& {A \otimes A} \&\& {A \otimes A} \\
	{\oc(A \otimes A)} \&\&\&\& {\oc\oc(A \otimes A)}
	\arrow["{h \otimes h}", from=1-1, to=1-3]
	\arrow["{h \otimes h}"', from=1-1, to=3-1]
	\arrow["{(1)}"{description}, draw=none, from=1-1, to=3-3]
	\arrow["m", from=1-3, to=1-5]
	\arrow["{\oc h \otimes \oc h}"{description}, dashed, from=1-3, to=3-3]
	\arrow["{(2)}"{description}, draw=none, from=1-3, to=3-5]
	\arrow["{\oc(h \otimes h)}", from=1-5, to=3-5]
	\arrow["{h \otimes h}", from=2-7, to=2-9]
	\arrow[""{name=0, anchor=center, inner sep=0}, equals, from=2-7, to=4-9]
	\arrow["m", from=2-9, to=2-11]
	\arrow["{\varepsilon_A \otimes \varepsilon_A}"{description}, dashed, from=2-9, to=4-9]
	\arrow["{(5)}"{description}, draw=none, from=2-9, to=4-11]
	\arrow["{\varepsilon_{A \otimes A}}", from=2-11, to=4-11]
	\arrow["{\delta_A \otimes \delta_A}"{description}, dashed, from=3-1, to=3-3]
	\arrow["m"', from=3-1, to=5-1]
	\arrow["{(3)}"{description}, draw=none, from=3-1, to=5-5]
	\arrow["m"{description}, dashed, from=3-3, to=3-5]
	\arrow["{\oc m}", from=3-5, to=5-5]
	\arrow[equals, from=4-9, to=4-11]
	\arrow["{\delta_{A \otimes A}}"', from=5-1, to=5-5]
	\arrow["{(4)}"{description}, draw=none, from=2-9, to=0]
\end{tikzcd}\]
(1) commutes because $(A,h)$ is a coalgebra, (2) commutes by naturality of $m$, (3) commutes because $\delta$ is a monoidal natural transformation, (4) commutes because $(A,h)$ is a coalgebra and (5) commutes because $\varepsilon$ is a monoidal natural transformation.
\end{proof}

\subsection{The generalised properties}\label{proof:genprop}

\dnat*
\begin{proof}
	By induction on $n$ we show that the following diagram commutes, for every $f_i : A_i \to B_i$ where $1 \leq i \leq n$.
	\[\begin{tikzcd}[cramped, ampersand replacement=\&]
	{\bigotimes_{i=1}^n \oc A_i} \&\& {\bigotimes_{i=1}^n \oc A_i \otimes \bigotimes_{i=1}^n \oc A_i } \\
	\\
	{\bigotimes_{i=1}^n \oc B_i} \&\& {\bigotimes_{i=1}^n \oc B_i \otimes \bigotimes_{i=1}^n \oc B_i}
	\arrow["{d_{A_1, \dots, A_n}}", from=1-1, to=1-3]
	\arrow["{\bigotimes_{i=1}^n \oc f_i}"', from=1-1, to=3-1]
	\arrow["{\bigotimes_{i=1}^n \oc f_i \otimes \bigotimes_{i=1}^n \oc f_i}", from=1-3, to=3-3]
	\arrow["{d_{B_1, \dots, B_n}}"', from=3-1, to=3-3]
\end{tikzcd}\]

	\begin{itemize}
		\item If $n=0$, the diagram commutes trivially.
\[\begin{tikzcd}[cramped]
	I && {I \otimes I} \\
	\\
	I && {I \otimes I}
	\arrow["{\lambda_I^{-1}}", from=1-1, to=1-3]
	\arrow["id"', from=1-1, to=3-1]
	\arrow["{id \otimes id}", from=1-3, to=3-3]
	\arrow["{\lambda_I^{-1}}"', from=3-1, to=3-3]
\end{tikzcd}\]
		\item If $n=1$, the diagram commutes by naturality of $d$.
\[\begin{tikzcd}[cramped]
	{\oc A} && {\oc A \otimes \oc A} \\
	\\
	{\oc B} && {\oc B \otimes \oc B}
	\arrow["{d_A}", from=1-1, to=1-3]
	\arrow["{\oc f}"', from=1-1, to=3-1]
	\arrow["{\oc f \otimes \oc f}", from=1-3, to=3-3]
	\arrow["{d_B}"', from=3-1, to=3-3]
\end{tikzcd}\]
		\item If $n>1$:
\[\begin{tikzcd}[cramped,column sep=0pt]
	{\bigotimes_{i=1}^n \oc A_i} &&&&&&& \begin{array}{c} \bigotimes_{i=1}^n \oc A_i \\ \otimes \bigotimes_{i=1}^n \oc A_i \end{array} \\
	&& \begin{array}{c} \bigotimes_{i=1}^{n-1} \oc A_i \otimes \\ \bigotimes_{i=1}^{n-1} \oc A_i \otimes \oc A_n \end{array} && \begin{array}{c} \bigotimes_{i=1}^{n-1} \oc A_i \otimes \\ \bigotimes_{i=1}^{n-1} \oc A_i \\ \otimes \oc A_n \otimes \oc A_n \end{array} \\
	\\
	& \begin{array}{c} \bigotimes_{i=1}^{n-1} \oc B_i \\ \otimes \oc A_n \end{array} & \begin{array}{c} \bigotimes_{i=1}^{n-1} \oc B_i \otimes \\ \bigotimes_{i=1}^{n-1} \oc B_i \otimes \\ \oc A_n \end{array} && \begin{array}{c} \bigotimes_{i=1}^{n-1} \oc B_i \otimes \\ \bigotimes_{i=1}^{n-1} \oc B_i \otimes\\ \oc A_n \otimes \oc A_n \end{array} && \begin{array}{c} \bigotimes_{i=1}^{n-1} \oc B_i \otimes \oc A_n \otimes \\ \bigotimes_{i=1}^{n-1} \oc B_i \otimes \oc A_n \end{array} \\
	\\
	&& \begin{array}{c} \bigotimes_{i=1}^{n-1} \oc B_i \otimes \\ \bigotimes_{i=1}^{n-1} \oc B_i \otimes \\ \oc B_n \end{array} && \begin{array}{c} \bigotimes_{i=1}^{n-1} \oc B_i \otimes \\ \bigotimes_{i=1}^{n-1} \oc B_i \otimes\\ \oc B_n \otimes \oc B_n \end{array} \\
	{\bigotimes_{i=1}^n \oc B_i} &&&&&&& \begin{array}{c} \bigotimes_{i=1}^n \oc B_i \otimes \\ \bigotimes_{i=1}^n \oc B_i \end{array}
	\arrow[""{name=0, anchor=center, inner sep=0}, "{d_{A_1, \dots, A_n}}", from=1-1, to=1-8]
	\arrow["{d_{A_1, \dots, A_{n-1}} \otimes id}"{description}, dashed, from=1-1, to=2-3]
	\arrow["{\bigotimes_{i=1}^{n-1} \oc f_i \otimes id}"{description, pos=0.4}, dashed, from=1-1, to=4-2]
	\arrow["{(2)}"{description}, draw=none, from=1-1, to=4-3]
	\arrow["{\bigotimes_{i=1}^n \oc f_i}"{description}, from=1-1, to=7-1]
	\arrow["{(4)}"{description, pos=0.6}, draw=none, from=1-8, to=4-5]
	\arrow["\begin{array}{c} \bigotimes_{i=1}^{n-1} \oc f_i \otimes id \otimes \\ \bigotimes_{i=1}^{n-1} \oc f_i \otimes id \end{array}"{description}, dashed, from=1-8, to=4-7]
	\arrow[""{name=1, anchor=center, inner sep=0}, "\begin{array}{c} \bigotimes_{i=1}^n \oc f_i \otimes \\ \bigotimes_{i=1}^n \oc f_i \end{array}"{description}, from=1-8, to=7-8]
	\arrow["{id \otimes d_{A_n}}"', dashed, from=2-3, to=2-5]
	\arrow["\begin{array}{c} \bigotimes_{i=1}^{n-1} \oc f_i \otimes \\ \bigotimes_{i=1}^{n-1} \oc f_i \otimes id \end{array}"{description}, dashed, from=2-3, to=4-3]
	\arrow["{id \otimes \sigma \otimes id}"{description}, dashed, from=2-5, to=1-8]
	\arrow["\begin{array}{c} \bigotimes_{i=1}^{n-1} \oc f_i \otimes \\ \bigotimes_{i=1}^{n-1} \oc f_i \otimes id \end{array}"{description}, dashed, from=2-5, to=4-5]
	\arrow["{d_{B_1, \dots, B_{n-1}} \otimes id}"', curve={height=24pt}, dashed, from=4-2, to=4-3]
	\arrow["{(3)}"{description}, draw=none, from=4-3, to=2-5]
	\arrow["{id \otimes d_{A_n}}"', dashed, from=4-3, to=4-5]
	\arrow["{id \otimes \oc f_n}"{description}, dashed, from=4-3, to=6-3]
	\arrow["{id \otimes \sigma \otimes id}"', curve={height=18pt}, dashed, from=4-5, to=4-7]
	\arrow["{id \otimes \oc f_n \otimes \oc f_n}"{description}, dashed, from=4-5, to=6-5]
	\arrow["{id \otimes \oc f_n \otimes id \otimes \oc f_n}"{description}, dashed, from=4-7, to=7-8]
	\arrow["{(6)}"{description}, draw=none, from=6-3, to=4-5]
	\arrow["{id \otimes d_{B_n}}"', dashed, from=6-3, to=6-5]
	\arrow["{(7)}"{description}, curve={height=24pt}, draw=none, from=6-5, to=4-7]
	\arrow["{id \otimes \sigma \otimes id}"{description}, dashed, from=6-5, to=7-8]
	\arrow["{(5)}"{description, pos=0.6}, draw=none, from=7-1, to=4-2]
	\arrow["{d_{B_1, \dots, B_{n-1}} \otimes id}"{description}, dashed, from=7-1, to=6-3]
	\arrow[""{name=2, anchor=center, inner sep=0}, "{d_{B_1, \dots, B_n}}"', from=7-1, to=7-8]
	\arrow["{(1)}"{description}, draw=none, from=2-5, to=0]
	\arrow["{(9)}"{description}, curve={height=24pt}, draw=none, from=4-7, to=1]
	\arrow["{(8)}"{description}, draw=none, from=6-3, to=2]
\end{tikzcd}\]
(1) commutes by definition of $d_{A_1, \dots, A_n}$, (2) commutes by the induction hypothesis, (3) commutes by functoriality of $\otimes$, (4) commutes by naturality of $\sigma$ and functoriality of $\otimes$, (5) commutes by functoriality of $\otimes$, (6) commutes by naturality of $d$, (7) commutes by naturality of $\sigma$ and functoriality of $\otimes$, (8) commutes by definition of $d_{B_1, \dots, B_n}$ and (9) commutes by functoriality of $\otimes$.\qedhere
	\end{itemize}
\end{proof}

Lemmas \ref{lem:dcomonoidaxiom}, \ref{lem:comonoidn} and \ref{lem:dsymm} prove the properties needed to show that $(\bigotimes_{i=1}^n \oc A_i, d_{A_1, \dots, A_n}, e_{A_1, \dots, A_n})$ is a comonoid (Lemma \ref{lem:comonoidgen}).

\begin{lemma}
	\label{lem:dcomonoidaxiom}
	For every $n$ the following diagram commutes:
\[\begin{tikzcd}[ampersand replacement=\&,cramped]
	{\bigotimes_{i=1}^n \oc A_i \otimes \bigotimes_{i=1}^n \oc A_i} \&\& {\bigotimes_{i=1}^n \oc A_i} \&\& {\bigotimes_{i=1}^n \oc A_i \otimes \bigotimes_{i=1}^n \oc A_i} \\
	\\
	\&\& {\bigotimes_{i=1}^n \oc A_i \otimes \bigotimes_{i=1}^n \oc A_i \otimes \bigotimes_{i=1}^n \oc A_i}
	\arrow["{id \otimes d_{A_1, \dots, A_n}}"{description}, from=1-1, to=3-3]
	\arrow["{d_{A_1, \dots, A_n}}"', from=1-3, to=1-1]
	\arrow["{d_{A_1, \dots, A_n}}", from=1-3, to=1-5]
	\arrow["{d_{A_1, \dots, A_n} \otimes id}"{description}, from=1-5, to=3-3]
\end{tikzcd}\]
\end{lemma}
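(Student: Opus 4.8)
The plan is to proceed by induction on $n$, relying on the recursive definition of $d_{A_1, \dots, A_n}$ from Definition~\ref{def:generalizededeepsilon}, namely $d_{A_1, \dots, A_n} = (id \otimes \sigma \otimes id) \circ (d_{A_1, \dots, A_{n-1}} \otimes d_{A_n})$ for $n > 1$. The statement to establish is coassociativity, $(d_{A_1, \dots, A_n} \otimes id) \circ d_{A_1, \dots, A_n} = (id \otimes d_{A_1, \dots, A_n}) \circ d_{A_1, \dots, A_n}$.

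For the base cases: when $n = 0$ we have $d_\varnothing = \lambda_I^{-1}$, and the required identity $(\lambda_I^{-1} \otimes id) \circ \lambda_I^{-1} = (id \otimes \lambda_I^{-1}) \circ \lambda_I^{-1}$ is an instance of the coherence theorem for the symmetric monoidal structure — it is exactly the assertion that $(I, \lambda_I^{-1}, id_I)$ is a comonoid. When $n = 1$ we have $d_{A_1} = d_{A_1}$, the comultiplication of the free coalgebra $\oc A_1$, and coassociativity is precisely one of the comonoid axioms guaranteed by the first condition of Definition~\ref{def:linearcategory}.

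For the inductive step $n > 1$, I would unfold both composites using the recursive definition, so that both become morphisms on the six-fold tensor $(\bigotimes_{i=1}^{n-1} \oc A_i)^{\otimes 3} \otimes (\oc A_n)^{\otimes 3}$ precomposed and postcomposed with the appropriate permutation isomorphisms. The essential observation is that $d_{A_1, \dots, A_n}$ is the \emph{interleaved} comultiplication on the tensor product of two comonoids: the comonoid $(\bigotimes_{i=1}^{n-1} \oc A_i, d_{A_1, \dots, A_{n-1}}, e_{A_1, \dots, A_{n-1}})$, which is coassociative by the induction hypothesis, and the comonoid $(\oc A_n, d_{A_n}, e_{A_n})$, which is coassociative by Definition~\ref{def:linearcategory}. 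Coassociativity of a tensor product of comonoids is a standard consequence of the coassociativity of each factor together with the coherence of the symmetry. I would therefore assemble one large commuting diagram, in the style of the proof of Lemma~\ref{lem:dnat}, in which every region is justified either by the induction hypothesis, by the comonoid axiom for $\oc A_n$, by functoriality of $\otimes$, or by naturality and the hexagon coherence of $\sigma$ (together with the pentagon/associativity coherence).

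The main obstacle is the bookkeeping of the symmetry and associativity isomorphisms. When the two composites are written out, the two induced orderings of the six tensor factors differ by a permutation, and the real work is to exhibit this permutation as a composite of coherence moves. Concretely, the nontrivial content is to pin down exactly which instances of naturality and of the hexagon identity are needed in order to slide the two copies of $(id \otimes \sigma \otimes id)$ past the factorized comultiplications $d_{A_1, \dots, A_{n-1}} \otimes d_{A_n}$, so that both sides meet at the same reshuffling of factors. Once this shuffling is organized, the genuinely coassociative reductions come directly from the induction hypothesis and the comonoid axiom for $\oc A_n$; the care lies entirely in drawing the diagram so that each cell is manifestly an instance of a coherence square or of an inductive/axiomatic commuting square.
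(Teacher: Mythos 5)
Your proposal follows essentially the same route as the paper's proof: induction on $n$, with the $n=0$ case discharged by coherence of the symmetric monoidal structure, the $n=1$ case by the comonoid axiom for $\oc A_1$ coming from the linear category structure, and the inductive step assembled as one large commuting diagram (the paper's is Figure~\ref{fig:lem:dcomonoidaxiom}) whose cells are justified by the recursive definition of $d_{A_1,\dots,A_n}$, the induction hypothesis, the comonoid structure on $\oc A_n$, functoriality of $\otimes$, naturality of $\sigma$, and coherence. The bookkeeping of permutations you identify as the main burden is exactly what the paper's figure carries out, so the two arguments coincide in substance.
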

\begin{proof}
	By induction on $n$:
	\begin{itemize}
		\item If $n=0$, the diagram commutes by coherence.
\[\begin{tikzcd}[ampersand replacement=\&,cramped]
	{I \otimes I} \&\& I \&\& {I \otimes I} \\
	\\
	\&\& {I \otimes I \otimes I}
	\arrow["{id \otimes \lambda_I^{-1}}"{description}, from=1-1, to=3-3]
	\arrow["{\lambda_I^{-1}}"', from=1-3, to=1-1]
	\arrow["{\lambda_I^{-1}}", from=1-3, to=1-5]
	\arrow["{\lambda_I^{-1} \otimes id}"{description}, from=1-5, to=3-3]
\end{tikzcd}\]
		\item If $n=1$, the diagram commutes because $(\oc A, d_A, e_A)$ is a comonoid.
\[\begin{tikzcd}[ampersand replacement=\&,cramped]
	{\oc A \otimes \oc A} \&\& {\oc A} \&\& {\oc A \otimes \oc A} \\
	\\
	\&\& {\oc A \otimes \oc A \otimes \oc A}
	\arrow["{id \otimes d_A}"{description}, from=1-1, to=3-3]
	\arrow["{d_A}"', from=1-3, to=1-1]
	\arrow["{d_A}", from=1-3, to=1-5]
	\arrow["{d_A \otimes id}"{description}, from=1-5, to=3-3]
\end{tikzcd}\]
		\item If $n > 1$, the commuting diagram is shown in Figure~\ref{fig:lem:dcomonoidaxiom} (Appendix~\ref{app:diagramasgrandes}).
		
		(1) and (2) commute by definition of $d_{A_1,\dots,A_n}$, (3) commutes by the induction hypothesis, (4) and (5) commute by functoriality of $\otimes$, (6) and (7) commute by naturality of $\sigma$, (8) and (9) commute by definition of $d_{A_1,\dots,A_n}$, (10) and (11) commute by functoriality of $\otimes$, (12) and (13) commute trivially and (14) commutes by coherence.\qedhere
	\end{itemize}
\end{proof}

\begin{lemma}
	\label{lem:comonoidn}
	For every $n$ the following diagram commutes:
\[\begin{tikzcd}[ampersand replacement=\&,cramped, column sep=large]
	\&\& {\bigotimes_{i=1}^n \oc A_i} \\
	\\
	{I \otimes \bigotimes_{i=1}^n \oc A_i} \&\& {\bigotimes_{i=1}^n \oc A_i \otimes \bigotimes_{i=1}^n \oc A_i} \&\& {\bigotimes_{i=1}^n \oc A_i \otimes I}
	\arrow["{\lambda^{-1}}"', from=1-3, to=3-1]
	\arrow["{d_{A_1, \dots, A_n}}", from=1-3, to=3-3]
	\arrow["{\rho^{-1}}", from=1-3, to=3-5]
	\arrow["{e_{A_1, \dots, A_n} \otimes id}", from=3-3, to=3-1]
	\arrow["{id \otimes e_{A_1, \dots, A_n}}"', from=3-3, to=3-5]
\end{tikzcd}\]
  \end{lemma}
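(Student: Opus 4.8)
The plan is to establish the two counit (triangle) identities
\[
(e_{A_1,\dots,A_n} \otimes id) \circ d_{A_1,\dots,A_n} = \lambda^{-1}
\qquad\text{and}\qquad
(id \otimes e_{A_1,\dots,A_n}) \circ d_{A_1,\dots,A_n} = \rho^{-1}
\]
simultaneously, by induction on $n$, in exact parallel with the proof of Lemma~\ref{lem:dcomonoidaxiom}.

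For the base cases I would argue as follows. When $n=0$ we have $d_{A_1,\dots,A_n} = \lambda_I^{-1}$ and $e_{A_1,\dots,A_n} = id_I$, so both equalities reduce to coherence identities for the unitors (in particular $\lambda_I = \rho_I$). When $n=1$ the two equalities are exactly the counit laws of the commutative comonoid $(\oc A, d_A, e_A)$, which hold by the first clause of Definition~\ref{def:linearcategory} (i.e.\ the unit diagram of Definition~\ref{def:commutativecomonoid}).

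For the inductive step $n>1$ I would unfold the recursive definitions from Definition~\ref{def:generalizededeepsilon}, namely $d_{A_1,\dots,A_n} = (id \otimes \sigma \otimes id) \circ (d_{A_1,\dots,A_{n-1}} \otimes d_{A_n})$ and $e_{A_1,\dots,A_n} = \lambda_I \circ (e_{A_1,\dots,A_{n-1}} \otimes e_{A_n})$, and then assemble a single commuting diagram whose top composite is $(e_{A_1,\dots,A_n} \otimes id) \circ d_{A_1,\dots,A_n}$ and whose bottom edge is $\lambda^{-1}$ (and symmetrically $\rho^{-1}$ for the second identity). This diagram decomposes into regions commuting by: functoriality of $\otimes$, to split the action of the counit on the first $n-1$ factors from its action on the last factor; naturality of $\sigma$, to commute the counit morphisms past the shuffle $id \otimes \sigma \otimes id$; the induction hypothesis, applied to $d_{A_1,\dots,A_{n-1}}$ and $e_{A_1,\dots,A_{n-1}}$; the $n=1$ case, applied to $d_{A_n}$ and $e_{A_n}$; and coherence, for the remaining unitor bookkeeping involving the two occurrences of $\lambda_I$ (one inside $e_{A_1,\dots,A_n}$, one as the target map).

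The main obstacle is purely organisational: because $d_{A_1,\dots,A_n}$ hides the shuffle $id \otimes \sigma \otimes id$, the two comonoid factors are interleaved, so precomposing with $e_{A_1,\dots,A_n} \otimes id$ does not split cleanly into the two sub-counit laws until the symmetry has been pushed through by naturality of $\sigma$. As in Lemma~\ref{lem:dcomonoidaxiom}, the resulting chase is a large but entirely routine diagram; the only genuine care needed is to keep the coherence isomorphisms and the nested unitors aligned, which Mac Lane's coherence theorem guarantees.
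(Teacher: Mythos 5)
Your proposal is correct and follows essentially the same route as the paper: induction on $n$, with the $n=0$ case handled by unitor coherence, the $n=1$ case by the counit laws of the comonoid $(\oc A, d_A, e_A)$, and the inductive step by a diagram chase combining the recursive definitions of $d$ and $e$, functoriality of $\otimes$, naturality of $\sigma$, the induction hypothesis, and coherence. The only cosmetic difference is that you prove both triangle identities simultaneously, whereas the paper proves the left one and notes the right is symmetric.
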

\begin{proof}
We prove the commutation of the diagram on the left, the proof of diagram on the right is similar.

  By induction on $n$.
  \begin{itemize}
    \item If $n = 0$, the diagram commutes trivially.
\[\begin{tikzcd}[cramped]
	I && {I \otimes I} \\
	\\
	&& {I \otimes I}
	\arrow["{\lambda_I^{-1}}", from=1-1, to=1-3]
	\arrow["{\lambda_I^{-1}}"', from=1-1, to=3-3]
	\arrow["id", from=1-3, to=3-3]
\end{tikzcd}\]
    \item If $n = 1$, the diagram commutes because $(\oc A, d_A, e_A)$ is a comonoid.
\[\begin{tikzcd}[cramped]
	{\oc A} && {\oc A \otimes \oc A} \\
	\\
	&& {I \otimes \oc A}
	\arrow["{d_A}", from=1-1, to=1-3]
	\arrow["{\lambda^{-1}}"', from=1-1, to=3-3]
	\arrow["{e_A \otimes id}", from=1-3, to=3-3]
\end{tikzcd}\]
    \item If $n > 1$:
\[\begin{tikzcd}[ampersand replacement=\&,cramped,column sep=tiny]
	{\bigotimes_{i=1}^n \oc A_i} \&\&\&\& {\bigotimes_{i=1}^n \oc A_i \otimes \bigotimes_{i=1}^n \oc A_i} \\
	\& \begin{array}{c} \bigotimes_{i=1}^{n-1} \oc A_i \otimes \bigotimes_{i=1}^{n-1} \oc A_i \\\otimes \oc A_n \otimes \oc A_n \end{array} \\
	\\
	\&\&\& \begin{array}{c} \bigotimes_{i=1}^{n-1} \oc A_i \otimes \bigotimes_{i=1}^{n-1} \oc A_i \\\otimes I \otimes \oc A_n \end{array} \\
	\& {\bigotimes_{i=1}^{n-1} \oc A_i \otimes \bigotimes_{i=1}^{n-1} \oc A_i \otimes \oc A_n} \\
	\&\&\& {\bigotimes_{i=1}^{n-1} \oc A_i \otimes I \otimes \bigotimes_{i=1}^n \oc A_i} \\
	\\
	\&\&\& {I \otimes I \otimes \bigotimes_{i=1}^n \oc A_i} \\
	\\
	{I \otimes \bigotimes_{i=1}^n \oc A_i} \&\&\&\& {I\otimes \bigotimes_{i=1}^n \oc A_i}
	\arrow["{d_{A_1, \dots, A_n}}", from=1-1, to=1-5]
	\arrow[""{name=0, anchor=center, inner sep=0}, "{d_{A_1, \dots, A_{n-1}} \otimes id}"{description}, dashed, from=1-1, to=5-2]
	\arrow[""{name=1, anchor=center, inner sep=0}, "{\lambda^{-1}}"', from=1-1, to=10-1]
	\arrow["{(3)}"{description, pos=0.7}, draw=none, from=1-5, to=4-4]
	\arrow["{id \otimes e_{A_n} \otimes id}"{description}, curve={height=-12pt}, dashed, from=1-5, to=6-4]
	\arrow[""{name=2, anchor=center, inner sep=0}, "{e_{A_1, \dots, A_n} \otimes id}"{description}, from=1-5, to=10-5]
	\arrow["{id \otimes \sigma \otimes id}"{description}, dashed, from=2-2, to=1-5]
	\arrow[""{name=3, anchor=center, inner sep=0}, "{id \otimes e_{A_n} \otimes id}"{description}, curve={height=-18pt}, dashed, from=2-2, to=4-4]
	\arrow["{id \otimes \sigma \otimes id}"{description}, dashed, from=4-4, to=6-4]
	\arrow[""{name=4, anchor=center, inner sep=0}, "{id \otimes d_{A_n}}"{description}, curve={height=24pt}, dashed, from=5-2, to=2-2]
	\arrow["{id \otimes \lambda^{-1}}"{description}, curve={height=-6pt}, dashed, from=5-2, to=4-4]
	\arrow[""{name=5, anchor=center, inner sep=0}, "{id \otimes \lambda^{-1}}"{description}, dashed, from=5-2, to=6-4]
	\arrow[""{name=6, anchor=center, inner sep=0}, "{e_{A_1, \dots, A_{n-1}} \otimes id}"{description}, dashed, from=5-2, to=10-1]
	\arrow["{e_{A_1, \dots, A_{n-1}} \otimes id}"{description}, dashed, from=6-4, to=8-4]
	\arrow["{id \otimes \lambda}"{description}, curve={height=-12pt}, dashed, from=8-4, to=10-1]
	\arrow["{\lambda_I \otimes id = \rho_I \otimes id}"{description}, dashed, from=8-4, to=10-5]
	\arrow["{id \otimes \lambda^{-1}}"{description}, curve={height=-12pt}, dashed, from=10-1, to=8-4]
	\arrow[""{name=7, anchor=center, inner sep=0}, equals, from=10-1, to=10-5]
	\arrow["{(1)}"{description}, draw=none, from=2-2, to=0]
	\arrow["{(4)}"{description}, draw=none, from=4-4, to=5]
	\arrow["{(2)}"{description}, draw=none, from=4, to=3]
	\arrow["{(7)}"{description}, draw=none, from=5-2, to=1]
	\arrow["{(6)}"{description}, draw=none, from=6-4, to=6]
	\arrow["{(5)}"{description}, draw=none, from=8-4, to=2]
	\arrow["{(8)}"{description}, draw=none, from=8-4, to=7]
\end{tikzcd}\]

(1) commutes by definition of $d_{A_1, \dots, A_n}$, (2) commutes because $(\oc A_n, d_{A_n}, e_{A_n})$ is a comonoid, (3) commutes by naturality of $\sigma$, (4) commutes by coherence, (5) commutes by definition of $e_{A_1, \dots, A_n}$, (6) commutes by functoriality of $\otimes$, (7) commutes by the induction hypothesis and (8) commutes by coherence.\qedhere
  \end{itemize}
\end{proof}

\begin{lemma}
	\label{lem:dsymm}
	  For every $n$ the following diagram commutes:
  \[\begin{tikzcd}[ampersand replacement=\&,cramped]
	  {\bigotimes_{i=1}^n \oc A_i} \&\& {(\bigotimes_{i=1}^n \oc A_i) \otimes (\bigotimes_{i=1}^n \oc A_i)} \\
	  \\
	  {(\bigotimes_{i=1}^n \oc A_i) \otimes (\bigotimes_{i=1}^n \oc A_i)}
	  \arrow["{d_{A_1,\dots,A_n}}", from=1-1, to=1-3]
	  \arrow["{d_{A_1,\dots,A_n}}"', from=1-1, to=3-1]
	  \arrow["\sigma"', from=3-1, to=1-3]
  \end{tikzcd}\]
  \end{lemma}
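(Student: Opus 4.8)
The plan is to prove the cocommutativity identity $\sigma \circ d_{A_1,\dots,A_n} = d_{A_1,\dots,A_n}$ by induction on $n$, following the same pattern as the proofs of Lemmas~\ref{lem:dcomonoidaxiom} and~\ref{lem:comonoidn}. For $n=0$ the map $d_\varnothing$ is $\lambda_I^{-1}$, and the claim $\sigma_{I,I}\circ\lambda_I^{-1} = \lambda_I^{-1}$ holds by coherence. For $n=1$ the generalised map is the comonoid comultiplication $d_{A_1}$, and $\sigma \circ d_{A_1} = d_{A_1}$ is exactly the cocommutativity axiom of the commutative comonoid $(\oc A_1, d_{A_1}, e_{A_1})$ provided by Definition~\ref{def:linearcategory} together with the last diagram of Definition~\ref{def:commutativecomonoid}.

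For the inductive step I would write $X = \bigotimes_{i=1}^{n-1}\oc A_i$ and $Y = \oc A_n$, so that by definition
\[
  d_{A_1,\dots,A_n} = s \circ (d_{A_1,\dots,A_{n-1}}\otimes d_{A_n}), \qquad s := id \otimes \sigma \otimes id,
\]
where the shuffle $s$ sends $(X\otimes X)\otimes(Y\otimes Y)$ to $(X\otimes Y)\otimes(X\otimes Y)$. The key step is to rewrite the global block-swap $\sigma$ on $(X\otimes Y)\otimes(X\otimes Y)$ precomposed with $s$. By Mac Lane's coherence theorem for symmetric monoidal categories, both $\sigma\circ s$ and $s\circ(\sigma_{X,X}\otimes\sigma_{Y,Y})$ are the canonical rearrangement isomorphism realising the same permutation of the four tensor factors (both send $X_1X_2Y_1Y_2$ to $X_2Y_2X_1Y_1$), hence they are equal. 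Combining this identity with functoriality of $\otimes$, the induction hypothesis $\sigma_{X,X}\circ d_{A_1,\dots,A_{n-1}} = d_{A_1,\dots,A_{n-1}}$, and the $n=1$ case $\sigma_{Y,Y}\circ d_{A_n} = d_{A_n}$, I would compute
\[
  \sigma\circ d_{A_1,\dots,A_n} = s\circ(\sigma_{X,X}\otimes\sigma_{Y,Y})\circ(d_{A_1,\dots,A_{n-1}}\otimes d_{A_n}) = s\circ(d_{A_1,\dots,A_{n-1}}\otimes d_{A_n}) = d_{A_1,\dots,A_n}.
\]

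Concretely, I would render the inductive step as a commuting diagram in the style of Figure~\ref{fig:lem:dcomonoidaxiom}: the outer regions are discharged by the definition of $d_{A_1,\dots,A_n}$, a central region is closed by naturality of $\sigma$ and functoriality of $\otimes$ to move the swap past the two comultiplications, and the remaining regions are closed by the induction hypothesis and the base case. The main obstacle is precisely the coherence bookkeeping in that central region: since the associativity isomorphisms are suppressed in the definition of $d_{A_1,\dots,A_n}$, I must verify carefully that the global swap $\sigma$ factors, after the shuffle $s$, through the local swaps $\sigma_{X,X}$ and $\sigma_{Y,Y}$. This is the standard fact that the tensor product of two cocommutative comonoids is cocommutative, and it is justified entirely by coherence; every other step is a routine application of naturality and functoriality.
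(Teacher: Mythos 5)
Your proposal is correct and follows essentially the same route as the paper: induction on $n$, with the $n=0$ case by coherence, the $n=1$ case by cocommutativity of the comonoid $(\oc A_1, d_{A_1}, e_{A_1})$, and the inductive step obtained by unfolding $d_{A_1,\dots,A_n} = (id \otimes \sigma \otimes id) \circ (d_{A_1,\dots,A_{n-1}} \otimes d_{A_n})$ and combining the induction hypothesis with cocommutativity of $d_{A_n}$. The only difference is presentational: where you isolate the coherence identity $\sigma \circ s = s \circ (\sigma_{X,X} \otimes \sigma_{Y,Y})$ and then conclude by functoriality, the paper's pasting diagram absorbs exactly that bookkeeping into its region justified by ``$(\oc A_n, d_{A_n}, e_{A_n})$ is a commutative comonoid.''
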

\begin{proof}
	By induction on $n$:
	\begin{itemize}
		\item If $n=0$, the diagram commutes by coherence.
\[\begin{tikzcd}[ampersand replacement=\&,cramped]
	I \&\& {I \otimes I} \\
	\\
	{I \otimes I}
	\arrow["{\lambda_I^{-1}}", from=1-1, to=1-3]
	\arrow["{\lambda_I^{-1}}"', from=1-1, to=3-1]
	\arrow["\sigma"', from=3-1, to=1-3]
\end{tikzcd}\]

		\item If $n=1$, the diagram commutes because $(\oc A, d_A, e_A)$ is a commutative comonoid.
\[\begin{tikzcd}[ampersand replacement=\&,cramped]
	{\oc A} \&\& {\oc A \otimes \oc A} \\
	\\
	{\oc A \otimes \oc A}
	\arrow["{d_A}", from=1-1, to=1-3]
	\arrow["{d_A}"', from=1-1, to=3-1]
	\arrow["\sigma"', from=3-1, to=1-3]
\end{tikzcd}\]

\item If $n>1$:
\[\begin{tikzcd}[ampersand replacement=\&,cramped]
	{\bigotimes_{i=1}^{n-1} \oc A_i \otimes \oc A_n} \&\&\&\& \begin{array}{c} \bigotimes_{i=1}^{n-1} \oc A_i \otimes \oc A_n\\ \otimes \bigotimes_{i=1}^{n-1} \oc A_i \otimes \oc A_n \end{array} \\
	\\
	\&\&\& \begin{array}{c} \bigotimes_{i=1}^{n-1} \oc A_i \otimes \\\bigotimes_{i=1}^{n-1} \oc A_i \otimes \oc A_n \end{array} \\
	\& \begin{array}{c} \bigotimes_{i=1}^{n-1} \oc A_i \otimes \\\bigotimes_{i=1}^{n-1} \oc A_i \otimes \oc A_n \end{array} \\
	\&\&\&\& \begin{array}{c} \bigotimes_{i=1}^{n-1} \oc A_i \otimes \\\bigotimes_{i=1}^{n-1} \oc A_i \otimes \\ \oc A_n \otimes \oc A_n \end{array} \\
	\& \begin{array}{c} \bigotimes_{i=1}^{n-1} \oc A_i \otimes \\\bigotimes_{i=1}^{n-1} \oc A_i \otimes \\ \oc A_n \otimes \oc A_n \end{array} \\
	\\
	\begin{array}{c} \bigotimes_{i=1}^{n-1} \oc A_i \otimes \oc A_n \\ \otimes \bigotimes_{i=1}^{n-1} \oc A_i \otimes \oc A_n \end{array}
	\arrow[""{name=0, anchor=center, inner sep=0}, "{d_{A_1,\dots,A_n}}", from=1-1, to=1-5]
	\arrow[""{name=1, anchor=center, inner sep=0}, "{d_{A_1,\dots,A_{n-1}} \otimes id}"{description}, dashed, from=1-1, to=3-4]
	\arrow["{d_{A_1,\dots,A_{n-1}} \otimes id}"{description}, dashed, from=1-1, to=4-2]
	\arrow[""{name=2, anchor=center, inner sep=0}, "{d_{A_1,\dots,A_n}}"', from=1-1, to=8-1]
	\arrow["{id \otimes d_{A_n}}"{description}, dashed, from=3-4, to=5-5]
	\arrow["{\sigma \otimes id}"{description}, dashed, from=4-2, to=3-4]
	\arrow[""{name=3, anchor=center, inner sep=0}, "{id \otimes d_{A_n}}"{description}, dashed, from=4-2, to=6-2]
	\arrow["{id \otimes \sigma \otimes id}"{description}, dashed, from=5-5, to=1-5]
	\arrow["{(4)}"{description}, draw=none, from=6-2, to=5-5]
	\arrow["{id \otimes \sigma \otimes id}"{description}, dashed, from=6-2, to=8-1]
	\arrow["\sigma"', from=8-1, to=1-5, rounded corners, to path={-- ([xshift=330pt]\tikztostart.east) \tikztonodes |- (\tikztotarget.east)}]
	\arrow["{(1)}"{description}, draw=none, from=3-4, to=0]
	\arrow["{(3)}"{description}, draw=none, from=4-2, to=1]
	\arrow["{(2)}"{description}, draw=none, from=3, to=2]
\end{tikzcd}\]
(1) and (2) commute by definition of $d_{A_1, \dots, A_n}$, (3) commutes by the induction hypothesis and functoriality of $\otimes$ and (4) commutes because $(\oc A_n, d_{A_n}, e_{A_n})$ is a commutative comonoid.\qedhere
	\end{itemize}
\end{proof}

\comonoidgen*
\begin{proof}
	By Lemmas \ref{lem:dcomonoidaxiom}, \ref{lem:comonoidn} and \ref{lem:dsymm}.
\end{proof}

Lemmas \ref{lem:deltacoalgebraprop1} and \ref{lem:deltacoalgebraprop2} prove the properties needed to show that $(\bigotimes_{i=1}^n \oc A_i, \delta_{A_1, \dots, A_n})$ is a coalgebra (Lemma \ref{lem:coalgebragen}).

\begin{lemma}
	\label{lem:deltacoalgebraprop1}
	  For every $n$ the following diagram commutes:
  \[\begin{tikzcd}[ampersand replacement=\&,cramped]
	  {\bigotimes_{i=1}^n \oc A_i} \&\& {\oc (\bigotimes_{i=1}^n \oc A_i)} \\
	  \\
	  {\oc (\bigotimes_{i=1}^n \oc A_i)} \&\& {\oc \oc (\bigotimes_{i=1}^n \oc A_i)}
	  \arrow["{\delta_{A_1, \dots, A_n} }", from=1-1, to=1-3]
	  \arrow["{\delta_{A_1, \dots, A_n} }"', from=1-1, to=3-1]
	  \arrow["{\delta_{\bigotimes_{i=1}^n \oc A_i}}", from=1-3, to=3-3]
	  \arrow["{\oc (\delta_{A_1, \dots, A_n})}"', from=3-1, to=3-3]
  \end{tikzcd}\]
\end{lemma}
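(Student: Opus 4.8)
The plan is to prove this by induction on $n$, recognising that the diagram in question is precisely the coassociativity axiom of Definition~\ref{def:coalgebra} for the candidate coalgebra $(\bigotimes_{i=1}^n \oc A_i, \delta_{A_1,\dots,A_n})$; together with the (routine) counit axiom it will establish Lemma~\ref{lem:coalgebragen}. For the base cases, when $n=0$ the map $\delta_{A_1,\dots,A_n}$ is $m_I$ and the diagram is the coassociativity axiom for the coalgebra $(I, m_I)$, while for $n=1$ it is $\delta_{A_1}$ and the diagram is the coassociativity axiom for the free coalgebra $(\oc A_1, \delta_{A_1})$; both hold by Lemma~\ref{lem:biermancoalgebras} (equivalently, the latter is the comonad coassociativity of Definition~\ref{def:symmetricmonoidalcomonad}).

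For the inductive step $n>1$, write $B = \bigotimes_{i=1}^{n-1}\oc A_i$ so that $\bigotimes_{i=1}^n \oc A_i = B \otimes \oc A_n$ and, by definition, $\delta_{A_1,\dots,A_n} = m \circ (\delta_{A_1,\dots,A_{n-1}} \otimes \delta_{A_n})$. I would start from the left-bottom composite $\delta_{B \otimes \oc A_n} \circ \delta_{A_1,\dots,A_n}$ and rewrite it in the following order. First, I use the monoidality of $\delta$ as a monoidal natural transformation between the monoidal functors $\oc$ and $\oc\oc$, which lets me replace $\delta_{B\otimes \oc A_n} \circ m$ by $\oc m \circ m \circ (\delta_B \otimes \delta_{\oc A_n})$. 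Functoriality of $\otimes$ then lets me split the resulting pair of composites coordinatewise: on the left factor I apply the induction hypothesis, $\delta_B \circ \delta_{A_1,\dots,A_{n-1}} = \oc\delta_{A_1,\dots,A_{n-1}} \circ \delta_{A_1,\dots,A_{n-1}}$, and on the right factor I apply comonad coassociativity, $\delta_{\oc A_n} \circ \delta_{A_n} = \oc\delta_{A_n} \circ \delta_{A_n}$. Next, naturality of $m$ pulls the two freshly produced $\oc(-)$'s through $m$, turning $m \circ (\oc\delta_{A_1,\dots,A_{n-1}} \otimes \oc\delta_{A_n})$ into $\oc(\delta_{A_1,\dots,A_{n-1}} \otimes \delta_{A_n}) \circ m$. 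Finally, functoriality of $\oc$ recombines $\oc m \circ \oc(\delta_{A_1,\dots,A_{n-1}} \otimes \delta_{A_n})$ into $\oc(\delta_{A_1,\dots,A_n})$ and the trailing $m \circ (\delta_{A_1,\dots,A_{n-1}} \otimes \delta_{A_n})$ back into $\delta_{A_1,\dots,A_n}$, yielding exactly the top-right composite $\oc(\delta_{A_1,\dots,A_n}) \circ \delta_{A_1,\dots,A_n}$.

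The main obstacle is the bookkeeping in the inductive step: assembling these rewrites into a single large commuting diagram (in the style of Figure~\ref{fig:lem:dcomonoidaxiom}) and, in particular, invoking the monoidality coherence of $\delta$ in the correct form — namely that the monoidal structure on the composite functor $\oc\oc$ is $\oc m \circ m$, so that $\delta_{X\otimes Y} \circ m_{X,Y} = \oc m_{X,Y} \circ m_{\oc X, \oc Y} \circ (\delta_X \otimes \delta_Y)$. Once that coherence square is in place, every remaining cell commutes by the induction hypothesis, comonad coassociativity, or naturality and functoriality, so no genuinely new computation is required.
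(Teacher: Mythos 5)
Your proposal is correct and follows essentially the same route as the paper's proof: induction on $n$, with the base cases handled by the coalgebra/comonad axioms and the inductive step combining the definition of $\delta_{A_1,\dots,A_n}$, the induction hypothesis on the $\bigotimes_{i=1}^{n-1}\oc A_i$ factor, comonad coassociativity on the $\oc A_n$ factor, naturality of $m$, monoidality of $\delta$ (with the $\oc\oc$-structure $\oc m \circ m$, the paper's cell (5)), and functoriality of $\otimes$ and $\oc$. The only differences are presentational — you give an equational rewriting chain where the paper draws one large diagram, and you cite Lemma~\ref{lem:biermancoalgebras} for the base cases where the paper invokes the (monoidal) comonad axioms directly — which amounts to the same content.
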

\begin{proof}
	By induction on $n$:
\begin{itemize}
	\item If $n=0$, the diagram commutes because $(\oc,\varepsilon, \delta, m_{A,B}, m_I)$ is a monoidal comonad.
\[\begin{tikzcd}[ampersand replacement=\&,cramped]
	I \&\& {\oc I} \\
	\\
	{\oc I} \&\& {\oc\oc I}
	\arrow["{m_I}", from=1-1, to=1-3]
	\arrow["{m_I}"', from=1-1, to=3-1]
	\arrow["{\delta_I}", from=1-3, to=3-3]
	\arrow["{\oc m_I}"', from=3-1, to=3-3]
\end{tikzcd}\]

	\item If $n=1$, the diagram commutes because $(\oc,\varepsilon, \delta, m_{A,B}, m_I)$ is a comonad.
\[\begin{tikzcd}[ampersand replacement=\&,cramped]
	{\oc A} \&\& {\oc \oc A} \\
	\\
	{\oc \oc A} \&\& {\oc \oc \oc A}
	\arrow["{\delta_{A}}", from=1-1, to=1-3]
	\arrow["{\delta_A}"', from=1-1, to=3-1]
	\arrow["{\delta_{\oc A}}", from=1-3, to=3-3]
	\arrow["{\oc \delta_A}"', from=3-1, to=3-3]
\end{tikzcd}\]

\item If $n > 1$:
\[\begin{tikzcd}[ampersand replacement=\&,cramped,column sep=0pt]
	{\bigotimes_{i=1}^n \oc A_i} \&\&\&\&\& {\oc (\bigotimes_{i=1}^n \oc A_i)} \\
	\\
	\&\& \begin{array}{c} \oc (\bigotimes_{i=1}^{n-1} \oc A_i)\\ \otimes \oc A_n \end{array} \&\& \begin{array}{c} \oc (\bigotimes_{i=1}^{n-1} \oc A_i)\\ \otimes \oc \oc A_n \end{array} \\
	\\
	\&\&\& \begin{array}{c} \oc \oc (\bigotimes_{i=1}^{n-1} \oc A_i)\\ \otimes \oc \oc A_n \end{array} \\
	\\
	\\
	\& \begin{array}{c} \oc (\bigotimes_{i=1}^{n-1} \oc A_i)\\ \otimes \oc A_n \end{array} \\
	\&\&\& \begin{array}{c} \oc \oc (\bigotimes_{i=1}^{n-1} \oc A_i)\\ \otimes \oc A_n \end{array} \\
	\&\& \begin{array}{c} \oc (\bigotimes_{i=1}^{n-1} \oc A_i)\\ \otimes \oc \oc A_n \end{array} \\
	\\
	\\
	\\
	\&\& \begin{array}{c} \oc \oc (\bigotimes_{i=1}^{n-1} \oc A_i)\\ \otimes \oc \oc A_n \end{array} \&\& \begin{array}{c} \oc \oc (\bigotimes_{i=1}^{n-1} \oc A_i)\\ \otimes \oc \oc \oc A_n \end{array} \\
	\\
	\&\& {\oc (\oc (\bigotimes_{i=1}^{n-1} \oc A_i) \otimes \oc A_n)} \&\& {\oc (\oc (\bigotimes_{i=1}^{n-1} \oc A_i) \otimes \oc \oc A_n)} \\
	\\
	{\oc (\bigotimes_{i=1}^n \oc A_i)} \&\&\&\&\& {\oc \oc (\bigotimes_{i=1}^n \oc A_i)}
	\arrow["{\delta_{A_1, \dots, A_n} }", from=1-1, to=1-6]
	\arrow["{\delta_{A_1, \dots, A_{n-1}} \otimes id }"{description}, dashed, from=1-1, to=3-3]
	\arrow["{(1)}"{description}, draw=none, from=1-1, to=3-5]
	\arrow["{\delta_{A_1, \dots, A_{n-1}} \otimes id }"{description}, dashed, from=1-1, to=8-2]
	\arrow["{\delta_{A_1, \dots, A_n} }"{description}, from=1-1, to=18-1]
	\arrow["{\delta_{\bigotimes_{i=1}^n \oc A_i}}"{description}, from=1-6, to=18-6]
	\arrow["{id \otimes \delta_{A_n}}"{description}, dashed, from=3-3, to=3-5]
	\arrow["{(3)}"{description}, draw=none, from=3-3, to=5-4]
	\arrow["{(2)}"{description}, draw=none, from=3-3, to=8-2]
	\arrow["{\delta_{\bigotimes_{i=1}^{n-1} \oc A_i} \otimes id}"{description}, dashed, from=3-3, to=9-4]
	\arrow["m"{description}, dashed, from=3-5, to=1-6]
	\arrow["{\delta_{\bigotimes_{i=1}^{n-1} \oc A_i} \otimes id}"{description}, dashed, from=3-5, to=5-4]
	\arrow["{\delta_{\bigotimes_{i=1}^{n-1} \oc A_i} \otimes \delta_{\oc A_n}}"{description}, dashed, from=3-5, to=14-5]
	\arrow["{(4)}"{description, pos=0.3}, curve={height=30pt}, draw=none, from=3-5, to=14-5]
	\arrow["{(5)}"{description}, draw=none, from=3-5, to=18-6]
	\arrow["{id \otimes \delta_{\oc A_n}}"{description}, dashed, from=5-4, to=14-5]
	\arrow["{\oc (\delta_{A_1, \dots, A_{n-1}}) \otimes id}"{description}, curve={height=-24pt}, dashed, from=8-2, to=9-4]
	\arrow["{id \otimes \delta_{A_n}}"{description}, dashed, from=8-2, to=10-3]
	\arrow["{(6)}"{description, pos=0.3}, curve={height=12pt}, draw=none, from=8-2, to=18-1]
	\arrow["{id \otimes \delta_{A_n}}"{description}, dashed, from=9-4, to=5-4]
	\arrow["{(7)}"{description}, draw=none, from=9-4, to=10-3]
	\arrow["{id \otimes \delta_{A_n}}"{description}, dashed, from=9-4, to=14-3]
	\arrow["{(8)}"{description}, curve={height=30pt}, draw=none, from=9-4, to=14-5]
	\arrow["{\oc (\delta_{A_1, \dots, A_{n-1}}) \otimes id}"{description}, dashed, from=10-3, to=14-3]
	\arrow["m"{description}, dashed, from=10-3, to=18-1]
	\arrow["{id \otimes \oc \delta_{A_n}}"{description}, dashed, from=14-3, to=14-5]
	\arrow["m"{description}, dashed, from=14-3, to=16-3]
	\arrow["{(10)}"{description}, draw=none, from=14-3, to=16-5]
	\arrow["{(9)}"{description, pos=0.3}, draw=none, from=14-3, to=18-1]
	\arrow["m"{description}, dashed, from=14-5, to=16-5]
	\arrow["{\oc (id \otimes \delta_{A_n})}"{description}, dashed, from=16-3, to=16-5]
	\arrow["{(11)}"{description, pos=0.4}, draw=none, from=16-3, to=18-6]
	\arrow["{\oc m}"{description}, dashed, from=16-5, to=18-6]
	\arrow["{\oc (\delta_{A_1, \dots, A_{n-1}} \otimes id)}"{description}, dashed, from=18-1, to=16-3]
	\arrow["{\oc (\delta_{A_1, \dots, A_n})}"', from=18-1, to=18-6]
\end{tikzcd}\]
(1) commutes by definition of $\delta_{A_1, \dots, A_n}$, (2) commutes by the induction hypothesis, (3) and (4) commute by functoriality of $\otimes$, (5) commutes because $\delta$ is a monoidal natural transformation between $\oc$ and $\oc\oc$, (6) commutes by definition of $\delta_{A_1, \dots, A_n}$, (7) commutes by functoriality of $\otimes$, (8) commutes because $(\oc, \varepsilon, \delta)$ is a comonad and by functoriality of $\otimes$, (9) and (10) commute by naturality of $m$, and (11) commutes by definition of $\delta_{A_1, \dots, A_n}$ and functoriality of $\otimes$.\qedhere
\end{itemize}
\end{proof}

\begin{lemma}
	\label{lem:deltacoalgebraprop2}
	  For every $n$ the following diagram commutes:
  \[\begin{tikzcd}[ampersand replacement=\&,cramped]
	  {\oc (\bigotimes_{i=1}^n \oc A_i)} \\
	  \\
	  {\bigotimes_{i=1}^n \oc A_i} \& {\bigotimes_{i=1}^n \oc A_i}
	  \arrow["{\varepsilon_{\bigotimes_{i=1}^n \oc A_i}}", from=1-1, to=3-2]
	  \arrow["{\delta_{A_1,\dots,A_n}}", from=3-1, to=1-1]
	  \arrow[equals, from=3-1, to=3-2]
  \end{tikzcd}\]
  \end{lemma}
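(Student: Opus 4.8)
The plan is to prove the counit law $\varepsilon_{\bigotimes_{i=1}^n \oc A_i} \circ \delta_{A_1,\dots,A_n} = id$ by induction on $n$, exactly paralleling the structure of the proof of Lemma~\ref{lem:deltacoalgebraprop1}. Together, the two lemmas establish the two triangles of Definition~\ref{def:coalgebra} and so yield Lemma~\ref{lem:coalgebragen}.

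Both base cases reduce to axioms already available. When $n = 0$ the object $\bigotimes_{i=1}^0 \oc A_i$ is $I$, the map $\delta_{A_1,\dots,A_n}$ is $m_I : I \to \oc I$, and the counit is $\varepsilon_I : \oc I \to I$; the claim $\varepsilon_I \circ m_I = id_I$ is precisely the unit-compatibility condition of the monoidal natural transformation $\varepsilon$, equivalently the counit triangle for the coalgebra $(I, m_I)$ provided by Lemma~\ref{lem:biermancoalgebras}. When $n = 1$ the map is $\delta_{A_1}$ and the counit is $\varepsilon_{\oc A_1}$, so $\varepsilon_{\oc A_1} \circ \delta_{A_1} = id_{\oc A_1}$ is one of the two counit triangles of the comonad $(\oc, \varepsilon, \delta)$ from Definition~\ref{def:symmetricmonoidalcomonad}.

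For the inductive step $n > 1$ I would unfold the recursive definition $\delta_{A_1,\dots,A_n} = m \circ (\delta_{A_1,\dots,A_{n-1}} \otimes \delta_{A_n})$, writing the underlying object as $(\bigotimes_{i=1}^{n-1} \oc A_i) \otimes \oc A_n$. The key move is that $\varepsilon$ is a monoidal natural transformation, so $\varepsilon_{(\bigotimes_{i=1}^{n-1}\oc A_i) \otimes \oc A_n} \circ m = \varepsilon_{\bigotimes_{i=1}^{n-1}\oc A_i} \otimes \varepsilon_{\oc A_n}$, which lets the counit absorb the structure map $m$. Bifunctoriality of $\otimes$ then splits the composite $(\varepsilon_{\bigotimes_{i=1}^{n-1}\oc A_i} \otimes \varepsilon_{\oc A_n}) \circ (\delta_{A_1,\dots,A_{n-1}} \otimes \delta_{A_n})$ into $(\varepsilon_{\bigotimes_{i=1}^{n-1}\oc A_i} \circ \delta_{A_1,\dots,A_{n-1}}) \otimes (\varepsilon_{\oc A_n} \circ \delta_{A_n})$. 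The left factor is $id$ by the induction hypothesis and the right factor is $id$ by the comonad counit axiom, so the whole composite equals $id \otimes id = id$ on $\bigotimes_{i=1}^n \oc A_i$.

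The argument is essentially a single diagram chase, so the main obstacle is bookkeeping rather than mathematical depth: one must apply the monoidality of $\varepsilon$ to the correct pair of objects (the block $\bigotimes_{i=1}^{n-1}\oc A_i$ against the last factor $\oc A_n$) and track the coherence squares relating $m$ to the associators so that the split by bifunctoriality is legitimate. I expect the formal commuting diagram to decompose into one region for the definition of $\delta_{A_1,\dots,A_n}$, one region for the monoidality of $\varepsilon$, one region for functoriality of $\otimes$, and two leaf regions closed respectively by the induction hypothesis and the comonad counit triangle.
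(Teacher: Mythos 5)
Your proof is correct and follows essentially the same route as the paper's: induction on $n$, with the $n=0$ case closed by monoidality of $\varepsilon$, the $n=1$ case by the comonad counit triangle, and the inductive step unfolding $\delta_{A_1,\dots,A_n} = m \circ (\delta_{A_1,\dots,A_{n-1}} \otimes \delta_{A_n})$, absorbing $m$ via monoidality of $\varepsilon$, and splitting by bifunctoriality into the induction hypothesis and the counit axiom. Your predicted decomposition of the diagram into regions matches the paper's regions (1)--(4) almost exactly.
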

\begin{proof}
  By induction on $n$:
  \begin{itemize}
    \item If $n = 0$, the diagram commutes because $\varepsilon$ is a monoidal natural transformation.
  \[\begin{tikzcd}[ampersand replacement=\&,cramped]
    {\oc I} \\
    \\
    I \& I
    \arrow["{\varepsilon_I}", from=1-1, to=3-2]
    \arrow["{\delta_\varnothing = m_I}", from=3-1, to=1-1]
    \arrow[equals, from=3-1, to=3-2]
  \end{tikzcd}\]
    \item If $n = 1$, the diagram commutes because $(\oc,\varepsilon,\delta)$ is a comonad.
  \[\begin{tikzcd}[ampersand replacement=\&,cramped]
    {\oc \oc A} \\
    \\
    {\oc A} \& {\oc A}
    \arrow["{\varepsilon_{\oc A}}", from=1-1, to=3-2]
    \arrow["{\delta_A}", from=3-1, to=1-1]
    \arrow[equals, from=3-1, to=3-2]
  \end{tikzcd}\]
    \item If $n > 1$:
  \[\begin{tikzcd}[ampersand replacement=\&,cramped]
    {\oc (\bigotimes_{i=1}^n \oc A_i)} \\
    \\
    \&\& {\oc (\bigotimes_{i=1}^{n-1} \oc A_i) \otimes \oc \oc A_n} \& {(4)} \\
    \& {(1)} \\
    \&\& {\oc (\bigotimes_{i=1}^{n-1} \oc A_i) \otimes \oc A_n} \& {(3)} \\
    \&\& {(2)} \\
    {\bigotimes_{i=1}^{n-1} \oc A_i \otimes \oc A_n} \&\&\&\& {\bigotimes_{i=1}^{n-1} \oc A_i \otimes \oc A_n}
    \arrow["{\varepsilon_{\bigotimes_{i=1}^n \oc A_i}}", from=1-1, to=7-5, rounded corners, to path={ (\tikztostart.east) -- ([xshift=200pt]\tikztostart.east) \tikztonodes -| (\tikztotarget.north)}]
    \arrow["m"{description}, dashed, from=3-3, to=1-1]
    \arrow["{\varepsilon_{\bigotimes_{i=1}^{n-1}\oc A_i} \otimes \varepsilon_{\oc A_n}}"{description}, curve={height=-30pt}, dashed, from=3-3, to=7-5]
    \arrow["{id \otimes \delta_{A_n}}"{description}, dashed, from=5-3, to=3-3]
    \arrow["{\varepsilon_{\bigotimes_{i=1}^{n-1}\oc A_i} \otimes id_{\oc A_n}}"{description}, dashed, from=5-3, to=7-5]
    \arrow["{\delta_{A_1,\dots,A_n}}", from=7-1, to=1-1]
    \arrow["{\delta_{A_1, \dots, A_{n-1}} \otimes id_{\oc A_n}}"{description}, dashed, from=7-1, to=5-3]
    \arrow[equals, from=7-1, to=7-5]
  \end{tikzcd}\]
  (1) commutes by definition of $\delta_{A_1, \dots, A_n}$, (2) commutes by the induction hypothesis and functoriality of $\otimes$, (3) commutes by comonoidality of $(\oc,\varepsilon,\delta)$ and functoriality of $\otimes$, and (4) commutes because $\varepsilon$ is a monoidal natural transformation.\qedhere
  \end{itemize}
  \end{proof}

\coalgebragen*
\begin{proof}
	By Lemmas~\ref{lem:deltacoalgebraprop1} and \ref{lem:deltacoalgebraprop2}.
\end{proof}

The following is an auxiliary lemma used to prove that $d_{A_1, \dots, A_n}$ is a coalgebra morphism (Lemma~\ref{lem:dcoalgebramorph}).

\begin{lemma}
	\label{lem:dcoalgebramorph:aux}
	The following diagram commutes:
\[\begin{tikzcd}[ampersand replacement=\&,cramped]
	{!A \otimes !B \otimes !C \otimes !D} \&\&\& {!(A \otimes B) \otimes !(C \otimes D)} \&\&\& {!(A \otimes B \otimes C \otimes D)} \\
	\\
	\\
	{!A \otimes !C \otimes !B \otimes !D} \&\&\& {!(A \otimes C) \otimes !(B \otimes D)} \&\&\& {!(A \otimes C \otimes B \otimes D)}
	\arrow["{m \otimes m}", from=1-1, to=1-4]
	\arrow["{id \otimes \sigma \otimes id}"{description}, from=1-1, to=4-1]
	\arrow["m", from=1-4, to=1-7]
	\arrow["{!(id \otimes \sigma \otimes id)}"{description}, from=1-7, to=4-7]
	\arrow["{m \otimes m}"', from=4-1, to=4-4]
	\arrow["m"', from=4-4, to=4-7]
\end{tikzcd}\]
\end{lemma}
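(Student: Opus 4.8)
The plan is to reduce the outer square to the symmetry coherence axiom of the symmetric monoidal functor $\oc$ (Definition~\ref{def:symmetricmonoidalcomonad}). Writing the two composites out, the top-then-right path is $\oc(id_A\otimes\sigma_{B,C}\otimes id_D)\circ m_{A\otimes B,\,C\otimes D}\circ(m_{A,B}\otimes m_{C,D})$, while the left-then-bottom path is $m_{A\otimes C,\,B\otimes D}\circ(m_{A,C}\otimes m_{B,D})\circ(id_{\oc A}\otimes\sigma_{\oc B,\oc C}\otimes id_{\oc D})$. The essential observation is that both are two ways of forming the $4$-ary multiplication $\oc A\otimes\oc B\otimes\oc C\otimes\oc D\to\oc(A\otimes B\otimes C\otimes D)$ combined with the transposition of the middle two factors, performed once before and once after multiplying.

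First I would use the associativity coherence of the monoidal functor $\oc$ (the compatibility of $m$ with the associators, part of $\oc$ being monoidal) to rebracket so that $B$ and $C$ are multiplied by a single instance of $m$. Concretely, up to associators,
\[
  m_{A\otimes B,\,C\otimes D}\circ(m_{A,B}\otimes m_{C,D}) = m^{(3)}_{A,\,B\otimes C,\,D}\circ(id_{\oc A}\otimes m_{B,C}\otimes id_{\oc D}),
\]
where $m^{(3)}$ denotes the (bracketing-independent) ternary multiplication built from $m$, and symmetrically on the $A,C,B,D$ side. This is the step that resolves the "crossing the tensor boundary" difficulty, bringing $\oc B$ and $\oc C$ under one $m$.

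Next I would apply the naturality of $m^{(3)}$ in its middle argument to the morphism $\sigma_{B,C}\colon B\otimes C\to C\otimes B$, sliding $\oc(\sigma_{B,C})$ past $m^{(3)}$ so that the composite becomes $m^{(3)}_{A,\,C\otimes B,\,D}\circ(id\otimes \oc(\sigma_{B,C})\otimes id)\circ(id\otimes m_{B,C}\otimes id)$. Then I invoke the symmetry axiom of the symmetric monoidal functor $\oc$, namely $\oc(\sigma_{B,C})\circ m_{B,C}=m_{C,B}\circ\sigma_{\oc B,\oc C}$, to replace the middle block by $m_{C,B}\circ\sigma_{\oc B,\oc C}$. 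Rebracketing once more with the associativity coherence reassembles $m^{(3)}_{A,\,C\otimes B,\,D}\circ(id\otimes m_{C,B}\otimes id)$ into $m_{A\otimes C,\,B\otimes D}\circ(m_{A,C}\otimes m_{B,D})$, leaving precisely the left-then-bottom composite with $id\otimes\sigma_{\oc B,\oc C}\otimes id$ on the outside.

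I expect the main obstacle to be purely bookkeeping: the symmetry $id\otimes\sigma\otimes id$ does not respect the $(A,B)$--$(C,D)$ grouping used by $m\otimes m$, so the naturality of the binary $m$ cannot be applied directly. Handling this cleanly requires making the associativity-coherence rebracketing explicit (or, equivalently, invoking Mac Lane coherence for monoidal functors to treat the $4$-ary multiplication as bracketing-independent), so that the transposition acts on two \emph{adjacent} factors inside a single $m$; once that normal form is reached, the symmetry axiom finishes the argument at once. In a fully diagrammatic write-up this amounts to subdividing the square into naturality regions for $m$, one instance of the symmetry axiom, and associator/functoriality regions.
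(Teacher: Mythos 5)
Your proposal is correct and follows essentially the same route as the paper's proof: the paper also rebrackets via the monoidal-functor (associativity) coherence of $\oc$ so that $\oc B$ and $\oc C$ sit under a single $m$, applies the symmetry axiom $\oc(\sigma)\circ m = m\circ\sigma$ to those adjacent middle factors, and uses naturality of $m$ to slide the resulting $\oc$-image of the permutation outward, finishing with the coherence rebracketing on the other side. The obstacle you identify (the transposition not respecting the $(A,B)$--$(C,D)$ grouping) is exactly what the paper's regions (1), (2), (5) resolve.
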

\begin{proof}
\[\begin{tikzcd}[cramped,column sep=tiny]
	{!A \otimes !B \otimes !C \otimes !D} &&& {!(A \otimes B) \otimes !(C \otimes D)} &&& {!(A \otimes B \otimes C \otimes D)} \\
	&& {!A \otimes !(B \otimes C) \otimes !D} && {!(A \otimes B \otimes C) \otimes !D} \\
	\\
	&& {!A \otimes !(C \otimes B) \otimes !D} && {!(A \otimes C \otimes B) \otimes !D} \\
	{!A \otimes !C \otimes !B \otimes !D} &&& {!(A \otimes C) \otimes !(B \otimes D)} &&& {!(A \otimes C \otimes B \otimes D)}
	\arrow["{m \otimes m}", from=1-1, to=1-4]
	\arrow["{id \otimes m \otimes id}"{description}, dashed, from=1-1, to=2-3]
	\arrow[""{name=0, anchor=center, inner sep=0}, "{id \otimes \sigma \otimes id}"{description}, from=1-1, to=5-1]
	\arrow["m", from=1-4, to=1-7]
	\arrow[""{name=1, anchor=center, inner sep=0}, "{!(id \otimes \sigma \otimes id)}"{description}, from=1-7, to=5-7]
	\arrow[""{name=2, anchor=center, inner sep=0}, "{m \otimes id}"{description}, dashed, from=2-3, to=2-5]
	\arrow[""{name=3, anchor=center, inner sep=0}, "{id \otimes !\sigma \otimes id}"{description}, dashed, from=2-3, to=4-3]
	\arrow["{(3)}"{description}, draw=none, from=2-3, to=4-5]
	\arrow["m"{description}, dashed, from=2-5, to=1-7]
	\arrow[""{name=4, anchor=center, inner sep=0}, "{!(id \otimes \sigma) \otimes id}"{description}, dashed, from=2-5, to=4-5]
	\arrow[""{name=5, anchor=center, inner sep=0}, "{m \otimes id}"{description}, dashed, from=4-3, to=4-5]
	\arrow["m"{description}, dashed, from=4-5, to=5-7]
	\arrow["{id \otimes m \otimes id}"{description}, dashed, from=5-1, to=4-3]
	\arrow["{m \otimes m}"', from=5-1, to=5-4]
	\arrow["m"', from=5-4, to=5-7]
	\arrow["{(1)}"{description}, draw=none, from=1-4, to=2]
	\arrow["{(2)}"{description}, draw=none, from=3, to=0]
	\arrow["{(4)}"{description}, draw=none, from=4, to=1]
	\arrow["{(5)}"{description}, draw=none, from=5-4, to=5]
\end{tikzcd}\]
(1) commutes because $\oc$ is a monoidal functor, (2) commutes because $\oc$ is symmetric monoidal, (3) and (4) commute by naturality of $m$ and (5) commutes because $\oc$ is monoidal.
\end{proof}

\dcoalgebramorph*
\begin{proof}
	By induction on $n$ we show that the following diagram commutes:
\[\begin{tikzcd}[ampersand replacement=\&,cramped]
	{\bigotimes_{i=1}^n \oc A_i} \&\&\&\&\&\& {\oc (\bigotimes_{i=1}^n \oc A_i)} \\
	\\
	{\bigotimes_{i=1}^n \oc A_i \otimes \bigotimes_{i=1}^n \oc A_i} \&\&\& {\oc (\bigotimes_{i=1}^n \oc A_i) \otimes \oc (\bigotimes_{i=1}^n \oc A_i)} \&\&\& {\oc (\bigotimes_{i=1}^n \oc A_i \otimes \bigotimes_{i=1}^n \oc A_i)}
	\arrow["{\delta_{A_1, \dots, A_n}}", from=1-1, to=1-7]
	\arrow["{d_{A_1, \dots, A_n}}"', from=1-1, to=3-1]
	\arrow["{\oc (d_{A_1, \dots, A_n})}", from=1-7, to=3-7]
	\arrow["{\delta_{A_1, \dots, A_n} \otimes \delta_{A_1, \dots, A_n}}"', from=3-1, to=3-4]
	\arrow["{m_{\bigotimes_{i=1}^n \oc A_i,\bigotimes_{i=1}^n \oc A_i}}"', from=3-4, to=3-7]
\end{tikzcd}\]

	\begin{itemize}
		\item If $n=0$:
\[\begin{tikzcd}[cramped]
	I &&&&& {\oc I} \\
	&& {(1)} \\
	{I \otimes I} &&& {I \otimes !I} \\
	& {(2)} \\
	&&&& {(3)} \\
	{!I \otimes !I} &&&&& {!(I \otimes I)}
	\arrow["{m_I}", from=1-1, to=1-6]
	\arrow["{\lambda_I}"{description}, from=3-1, to=1-1]
	\arrow["{id \otimes m_I}"{description}, dashed, from=3-1, to=3-4]
	\arrow["{m_I \otimes m_I}"', from=3-1, to=6-1]
	\arrow["{\lambda_I}"{description}, dashed, from=3-4, to=1-6]
	\arrow["{m_I \otimes id}"{description}, dashed, from=3-4, to=6-1]
	\arrow["{m_{I,I}}"', from=6-1, to=6-6]
	\arrow["{!\lambda_I}"{description}, shift left=3, from=6-6, to=1-6]
\end{tikzcd}\]
(1) commutes by naturality of $\lambda$, (2) commutes by functoriality of $\otimes$ and (3) commutes because $\oc$ is a monoidal functor.
	\item If $n=1$, the diagram commutes because $d_A$ is a coalgebra morphism.
\[\begin{tikzcd}[cramped]
	{\oc A} && {\oc \oc A} \\
	\\
	{\oc A \otimes \oc A} \\
	\\
	{\oc \oc A \otimes \oc \oc A} && {\oc (\oc A \otimes \oc A)}
	\arrow["{\delta_A}", from=1-1, to=1-3]
	\arrow["{d_A}"', from=1-1, to=3-1]
	\arrow["{\oc d_A}", from=1-3, to=5-3]
	\arrow["{\delta_A \otimes \delta_A}"', from=3-1, to=5-1]
	\arrow["{m_{\oc A,\oc A}}"', from=5-1, to=5-3]
\end{tikzcd}\]

		\item If $n>1$, the commuting diagram is shown in Figure~\ref{fig:lem:dcoalgebramorph} (Appendix~\ref{app:diagramasgrandes}).

		(1) commutes by definition of $\delta_{A_1,\dots, A_n}$, (2) commutes by definition of $d_{A_1,\dots, A_n}$, (3) commutes by the induction hypothesis, (4) commutes by functoriality of $\otimes$, (5) commutes by naturality of $m$, (6) and (7) commute by functoriality of $\otimes$, (8) commutes because $d_{A_n}$ is a coalgebra morphism, (9) commutes by naturality of $\sigma$, (10) commutes by definition of $\delta_{A_1,\dots, A_n}$, (11) commutes by naturality of $\sigma$, (12) commutes by functoriality of $\otimes$, (13) commutes by naturality of $m$, (14) commutes by definition of $d_{A_1,\dots, A_n}$ and functoriality of $\oc$ and (15) commutes by Lemma~\ref{lem:dcoalgebramorph:aux}.\qedhere
	\end{itemize}
\end{proof}

\generalizedaxiom*
\begin{proof}
	By induction on $n$ we show that the following diagram commutes:
\[\begin{tikzcd}[ampersand replacement=\&,cramped]
	{\bigotimes_{i=1}^n \oc A_i} \&\& I \\
	\\
	{\oc (\bigotimes_{i=1}^n \oc A_i)} \&\& {\oc I}
	\arrow["{e_{A_1, \dots, A_n}}", from=1-1, to=1-3]
	\arrow["{\delta_{A_1, \dots, A_n}}"', from=1-1, to=3-1]
	\arrow["{m_I}", from=1-3, to=3-3]
	\arrow["{\oc (e_{A_1, \dots, A_n})}"', from=3-1, to=3-3]
\end{tikzcd}\]

	\begin{itemize}
		\item If $n = 0$, the diagram commutes trivially.
\[\begin{tikzcd}[ampersand replacement=\&,cramped]
	I \&\& I \\
	\\
	{\oc I} \&\& {\oc I}
	\arrow["{id_I}", from=1-1, to=1-3]
	\arrow["{m_I}"', from=1-1, to=3-1]
	\arrow["{m_I}", from=1-3, to=3-3]
	\arrow["{id_{\oc I}}"', from=3-1, to=3-3]
\end{tikzcd}\]

		\item If $n = 1$, the diagram commutes because $e_A$ is a coalgebra morphism.
\[\begin{tikzcd}[ampersand replacement=\&,cramped]
	{\oc A} \&\& I \\
	\\
	{\oc \oc A} \&\& {\oc I}
	\arrow["{e_A}", from=1-1, to=1-3]
	\arrow["{\delta_A}"', from=1-1, to=3-1]
	\arrow["{m_I}", from=1-3, to=3-3]
	\arrow["{\oc e_A}"', from=3-1, to=3-3]
\end{tikzcd}\]
		\item If $n > 1$:
\[\begin{tikzcd}[ampersand replacement=\&,cramped,sep=small]
	{\bigotimes_{i=1}^n \oc A_i} \&\&\&\&\& I \\
	\&\& {(1)} \\
	\& {I \otimes \oc A_n} \&\& {I \otimes I} \\
	\&\& {(2)} \\
	\& {(3)} \& {\oc I \otimes \oc A_n} \& {\oc I \otimes I} \\
	\&\& {(6)} \&\& {(4)} \\
	\& {\oc (\bigotimes_{i=1}^{n-1} \oc A_i) \otimes \oc A_n} \\
	\&\&\& {\oc (\bigotimes_{i=1}^{n-1} \oc A_i) \otimes I} \\
	\& {(5)} \& {(8)} \\
	\&\& {\oc (\bigotimes_{i=1}^{n-1} \oc A_i) \otimes \oc I} \& {(9)} \\
	\&\&\& {\oc (\bigotimes_{i=1}^{n-1} \oc A_i)} \\
	\& {\oc (\bigotimes_{i=1}^{n-1} \oc A_i) \otimes \oc \oc A_n} \& {\oc (\bigotimes_{i=1}^{n-1} \oc A_i \otimes I)} \&\& {(7)} \\
	\& {(10)} \\
	\&\& {(12)} \\
	\&\& {\oc (I \otimes I)} \& {(11)} \\
	\&\& {(13)} \\
	{\oc (\bigotimes_{i=1}^n \oc A_i)} \&\&\&\&\& {\oc I}
	\arrow["{e_{A_1, \dots, A_n}}", from=1-1, to=1-6]
	\arrow["{e_{A_1, \dots, A_{n-1}} \otimes id}"{description}, dashed, from=1-1, to=3-2]
	\arrow["{\delta_{A_1, \dots, A_{n-1}} \otimes id}"{description}, dashed, from=1-1, to=7-2]
	\arrow["{\delta_{A_1, \dots, A_n}}"', from=1-1, to=17-1]
	\arrow["{m_I}", from=1-6, to=17-6]
	\arrow["{id \otimes e_{A_n}}"{description}, dashed, from=3-2, to=3-4]
	\arrow["{m_I \otimes id}"{description}, dashed, from=3-2, to=5-3]
	\arrow["{\lambda_I = \rho_I}"{description}, dashed, from=3-4, to=1-6]
	\arrow["{m_I \otimes id}"{description}, dashed, from=3-4, to=5-4]
	\arrow["{id \otimes e_{A_n}}"{description}, dashed, from=5-3, to=5-4]
	\arrow["\rho"{description}, curve={height=-24pt}, dashed, from=5-4, to=17-6]
	\arrow["{\oc (e_{A_1, \dots, A_{n-1}}) \otimes id}"{description}, dashed, from=7-2, to=5-3]
	\arrow["{id \otimes e_{A_n}}"{description}, dashed, from=7-2, to=8-4]
	\arrow["{id \otimes \delta_{A_n}}"{description}, curve={height=-30pt}, dashed, from=7-2, to=12-2]
	\arrow["{\oc (e_{A_1, \dots, A_{n-1}}) \otimes id}"{description}, dashed, from=8-4, to=5-4]
	\arrow["{id \otimes m_I}"{description}, dashed, from=8-4, to=10-3]
	\arrow["\rho"{description}, curve={height=-24pt}, dashed, from=8-4, to=11-4]
	\arrow["m"{description}, dashed, from=10-3, to=12-3]
	\arrow["{\oc (e_{A_1, \dots, A_{n-1}})}"{description}, dashed, from=11-4, to=17-6]
	\arrow["{id \otimes \oc e_{A_n}}"{description}, dashed, from=12-2, to=10-3]
	\arrow["m"{description}, dashed, from=12-2, to=17-1]
	\arrow["{\oc \rho}"{description}, dashed, from=12-3, to=11-4]
	\arrow["{\oc (e_{A_1, \dots, A_{n-1}} \otimes id)}"{description}, curve={height=-30pt}, dashed, from=12-3, to=15-3]
	\arrow["{\oc \rho_I = \oc \lambda_I}"{description}, dashed, from=15-3, to=17-6]
	\arrow["{\oc (id \otimes e_{A_n})}"{description}, curve={height=-24pt}, dashed, from=17-1, to=12-3]
	\arrow["{\oc (e_{A_1, \dots, A_{n-1}} \otimes e_{A_n})}"{description}, dashed, from=17-1, to=15-3]
	\arrow["{\oc (e_{A_1, \dots, A_n})}"', from=17-1, to=17-6]
\end{tikzcd}\]

(1) commutes by definition of $e_{A_1, \dots, A_n}$, (2) commutes by functoriality of $\otimes$, (3) commutes by the induction hypothesis and functoriality of $\otimes$, (4) commutes by naturality of $\rho$, (5) commutes by definition of $\delta_{A_1, \dots, A_n}$, (6) commutes by functoriality of $\otimes$, (7) commutes by naturality of $\rho$, (8) commutes because $e_A$ is a coalgebra morphism and by functoriality of $\otimes$, (9) commutes because $\oc$ is a monoidal functor, (10) commutes by naturality of $m$, (11) commutes by naturality of $\rho$ and functoriality of $\oc$, (12) commutes by functoriality of $\otimes$ and $\oc$ and (13) commutes by definition of $e_{A_1, \dots, A_n}$ and functoriality of $\oc$.\qedhere
	\end{itemize}
\end{proof}

\subsection{Technical lemmas to prove Soundness}\label{proof:technical}

Lemmas \ref{lem:substitutionaux} to \ref{lem:weakeningcat}
are auxiliary results used in the proofs of
substitution and soundness (Lemmas \ref{lem:nonlinearsubstitution}
and
\ref{lem:linearsubstitution}, and Theorem
\ref{thm:soundness}).

\begin{lemma}
	\label{lem:substitutionaux}
	  Let $g: \bigotimes_{i=1}^n \oc A_i \otimes I \to D$ and let $f$ such that:
	  \begin{align*}
		  f =& \bigotimes_{i=1}^n \oc A_i
		  \xrightarrow{d_{A_1, \dots, A_n}}
		  \bigotimes_{i=1}^n \oc A_i \otimes \bigotimes_{i=1}^n \oc A_i
		  \xrightarrow{\delta_{A_1, \dots, A_n} \otimes id}
		  \oc (\bigotimes_{i=1}^n \oc A_i) \otimes 
		  \bigotimes_{i=1}^n \oc A_i\\
		  &\xrightarrow{\oc (\rho^{-1}) \otimes id}
		  \oc (\bigotimes_{i=1}^n \oc A_i \otimes I) \otimes 
		  \bigotimes_{i=1}^n \oc A_i
		  \xrightarrow{\oc g \otimes id}
		  \oc D \otimes
		  \bigotimes_{i=1}^n \oc A_i
		  \xrightarrow{\sigma}
		  \bigotimes_{i=1}^n \oc A_i \otimes \oc D
	  \end{align*}
	  Then the following diagram commutes:
  \[\begin{tikzcd}[ampersand replacement=\&,cramped]
	  {\bigotimes_{i=1}^n \oc A_i} \&\& {\bigotimes_{i=1}^n \oc A_i \otimes \bigotimes_{i=1}^n \oc A_i} \\
	  \\
	  {\bigotimes_{i=1}^n \oc A_i \otimes \oc B} \&\& {\bigotimes_{i=1}^n \oc A_i \otimes \oc B \otimes \bigotimes_{i=1}^n \oc A_i \otimes \oc B}
	  \arrow["{d_{A_1, \dots, A_n}}", from=1-1, to=1-3]
	  \arrow["f"', from=1-1, to=3-1]
	  \arrow["{f \otimes f}", from=1-3, to=3-3]
	  \arrow["{d_{A_1, \dots, A_n, B}}"', from=3-1, to=3-3]
  \end{tikzcd}\]
  \end{lemma}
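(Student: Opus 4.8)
The plan is to avoid a direct induction on $n$ and argue structurally. Write $C = \bigotimes_{i=1}^n \oc A_i$ and set $h = \oc g \circ \oc(\rho^{-1}) \circ \delta_{A_1,\dots,A_n} : C \to \oc B$ (identifying the codomain $D$ of $g$ with the object $B$ appearing in the diagram). First I would simplify $f$ itself: by functoriality of $\otimes$ the three maps acting on the left tensor factor compose to $h \otimes id_C$, so $f = \sigma \circ (h \otimes id_C) \circ d_{A_1,\dots,A_n}$; then, using naturality of $\sigma$ to move it across $h \otimes id_C$ and the cocommutativity $\sigma \circ d_{A_1,\dots,A_n} = d_{A_1,\dots,A_n}$ of Lemma~\ref{lem:dsymm}, I obtain the clean form $f = (id_C \otimes h) \circ d_{A_1,\dots,A_n}$.

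The crux is to recognise $h$ as a comonoid morphism. I would first show that $h$ is a \emph{coalgebra} morphism $(C,\delta_{A_1,\dots,A_n}) \to (\oc B,\delta_B)$: by Corollary~\ref{cor:deltacoalgebramorph} the map $\delta_{A_1,\dots,A_n}$ is a coalgebra morphism into the free coalgebra on $C$, while $\oc(\rho^{-1})$ and $\oc g$ are of the form $\oc(\,\cdot\,)$ and hence coalgebra morphisms between free coalgebras by naturality of $\delta$; composing these three gives $h$. Invoking Lemma~\ref{lem:biermanprop}, $h$ is then a comonoid morphism, which yields the key identity $d_B \circ h = (h \otimes h) \circ d_{A_1,\dots,A_n}$.

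It then remains to verify the square. Expanding the target comultiplication in its tensor form $d_{A_1,\dots,A_n,B} = (id \otimes \sigma \otimes id) \circ (d_{A_1,\dots,A_n} \otimes d_B)$ (Definition~\ref{def:generalizededeepsilon}) and substituting $f = (id_C \otimes h)\circ d_{A_1,\dots,A_n}$, both $d_{A_1,\dots,A_n,B}\circ f$ and $(f \otimes f)\circ d_{A_1,\dots,A_n}$ reduce to the fourfold comultiplication $(d_{A_1,\dots,A_n}\otimes d_{A_1,\dots,A_n})\circ d_{A_1,\dots,A_n}$ of $C$ with $h$ applied to two of the four resulting copies (using the comonoid-morphism identity for $h$ on one side and functoriality of $\otimes$ on the other). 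The two composites place the copies in different orders, and I would reconcile them using cocommutativity (Lemma~\ref{lem:dsymm}) and coassociativity (Lemma~\ref{lem:dcomonoidaxiom}) of $d_{A_1,\dots,A_n}$ — which make the fourfold comultiplication invariant under transposing the two middle factors — together with naturality of $\sigma$.

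The main obstacle I anticipate is the application of Lemma~\ref{lem:biermanprop}: it speaks of the comonoid structures \emph{canonically induced} by the coalgebras, so I must argue that the induced structure on $(C,\delta_{A_1,\dots,A_n})$ is exactly the generalised comonoid $(d_{A_1,\dots,A_n},e_{A_1,\dots,A_n})$. This is precisely what Lemmas~\ref{lem:comonoidgen}, \ref{lem:coalgebragen}, \ref{lem:dcoalgebramorph} and \ref{lem:generalizedaxiom} establish — that $d_{A_1,\dots,A_n}$ and $e_{A_1,\dots,A_n}$ form a commutative comonoid and are coalgebra morphisms out of $(C,\delta_{A_1,\dots,A_n})$ — so the identification is safe, but it should be spelled out. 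A minor point is the degenerate case $n=0$, where $C=I$ and the tensor form of $d_{A_1,\dots,A_n,B}$ must be recovered from $d_B$ by unit coherence; this can be treated separately or folded into the structural argument.
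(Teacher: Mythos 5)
Your proposal is correct, and it organizes the argument genuinely differently from the paper. The paper keeps $f$ expanded and pastes one large diagram: it applies Lemma~\ref{lem:biermanprop} only to $\delta_{A_1,\dots,A_n}$ (via Corollary~\ref{cor:deltacoalgebramorph}), absorbs the $\oc(\rho^{-1})$ and $\oc g$ legs by naturality of the generalised comultiplication (Lemma~\ref{lem:dnat}), and disposes of the trailing symmetries by coherence and the definition of $d$; its sub-diagram (1) is precisely your middle-swap invariance of $(d_{A_1,\dots,A_n}\otimes d_{A_1,\dots,A_n})\circ d_{A_1,\dots,A_n}$, justified by Lemma~\ref{lem:comonoidgen}. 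You instead compress $\oc g\circ \oc(\rho^{-1})\circ\delta_{A_1,\dots,A_n}$ into a single map $h$, prove $h$ is a coalgebra morphism (using naturality of $\delta$ for the $\oc(\cdot)$ legs, where the paper uses Lemma~\ref{lem:dnat}), apply Lemma~\ref{lem:biermanprop} once to obtain $d_B\circ h=(h\otimes h)\circ d_{A_1,\dots,A_n}$, and pre-normalise $f=(id\otimes h)\circ d_{A_1,\dots,A_n}$ via cocommutativity (Lemma~\ref{lem:dsymm}); the square then collapses to the same middle-swap identity. What your route buys: it is more modular (the comonoid-morphism statement for $h$ is reusable), it dispenses with Lemma~\ref{lem:dnat} and with the pasting diagram, and it makes explicit a point the paper leaves implicit — that Lemma~\ref{lem:biermanprop} concerns the comonoid structures canonically induced by the coalgebras, so these must be identified with $(d_{A_1,\dots,A_n},e_{A_1,\dots,A_n})$ and $(d_B,e_B)$. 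Your resolution of that point (uniqueness of the comonoid structure once $d_{A_1,\dots,A_n}$ and $e_{A_1,\dots,A_n}$ are known to be coalgebra morphisms forming a commutative comonoid, via Lemmas~\ref{lem:comonoidgen}, \ref{lem:dcoalgebramorph} and~\ref{lem:generalizedaxiom}) is the right one, and the paper's own proof needs the same identification just as much; likewise the degenerate case $n=0$ that you flag is glossed over by the paper as well, so neither issue is a gap relative to the paper's standard of rigour.
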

\begin{proof}
	Replacing $f$ by its definition, we want to show that the following diagram commutes:
\[\begin{tikzcd}[ampersand replacement=\&,cramped,column sep=small]
	{\bigotimes_{i=1}^n \oc A_i} \&\&\&\& {(\bigotimes_{i=1}^n \oc A_i)^{\otimes 2}} \\
	\&\& {(1)} \\
	{(\bigotimes_{i=1}^n \oc A_i)^{\otimes 2}} \& {(\bigotimes_{i=1}^n \oc A_i)^{\otimes 3}} \&\& {(\bigotimes_{i=1}^n \oc A_i)^{\otimes 4}} \& {(\bigotimes_{i=1}^n \oc A_i)^{\otimes 4}} \\
	\&\& {(2)} \\
	{\oc (\bigotimes_{i=1}^n \oc A_i) \otimes \bigotimes_{i=1}^n \oc A_i} \&\&\&\& {(\oc (\bigotimes_{i=1}^n \oc A_i) \otimes \bigotimes_{i=1}^n \oc A_i)^{\otimes 2}} \\
	\\
	{\oc (\bigotimes_{i=1}^n \oc A_i \otimes I) \otimes \bigotimes_{i=1}^n \oc A_i} \&\& {(3)} \&\& {(\oc (\bigotimes_{i=1}^n \oc A_i \otimes I) \otimes \bigotimes_{i=1}^n \oc A_i)^{\otimes 2}} \\
	\\
	{\oc B \otimes \bigotimes_{i=1}^n \oc A_i} \&\&\&\& {(\oc B \otimes \bigotimes_{i=1}^n \oc A_i)^{\otimes 2}} \\
	\&\& {(4)} \\
	{\bigotimes_{i=1}^n \oc A_i \otimes \oc B} \&\&\&\& {(\bigotimes_{i=1}^n \oc A_i \otimes \oc B)^{\otimes 2}}
	\arrow["{d_{A_1, \dots, A_n}}", from=1-1, to=1-5]
	\arrow["{d_{A_1, \dots, A_n}}"{description}, from=1-1, to=3-1]
	\arrow["{d_{A_1, \dots, A_n} \otimes d_{A_1, \dots, A_n}}"{description}, from=1-5, to=3-5]
	\arrow["{d_{A_1, \dots, A_n} \otimes id}"', dashed, from=3-1, to=3-2]
	\arrow["{\delta_{A_1, \dots, A_n} \otimes id}"{description}, from=3-1, to=5-1]
	\arrow["{id \otimes d_{A_1, \dots, A_n}}"', dashed, from=3-2, to=3-4]
	\arrow["{id \otimes \sigma \otimes id}"', dashed, from=3-4, to=3-5]
	\arrow["{\delta_{A_1, \dots, A_n} \otimes id \otimes \delta_{A_1, \dots, A_n} \otimes id}"{description}, from=3-5, to=5-5]
	\arrow["{d_{\bigotimes_{i=1}^n \oc A_i, A_1, \dots, A_n}}"{description}, dashed, from=5-1, to=5-5]
	\arrow["{\oc(\rho^{-1}) \otimes id}"{description}, from=5-1, to=7-1]
	\arrow["{\oc (\rho^{-1}) \otimes id \otimes \oc (\rho^{-1}) \otimes id}"{description}, from=5-5, to=7-5]
	\arrow["{\oc g \otimes id}"{description}, from=7-1, to=9-1]
	\arrow["{\oc g \otimes id \otimes \oc g \otimes id}"{description}, from=7-5, to=9-5]
	\arrow["{d_{B,A_1, \dots, A_n}}"{description}, dashed, from=9-1, to=9-5]
	\arrow["\sigma"{description}, from=9-1, to=11-1]
	\arrow["{\sigma \otimes \sigma}"{description}, from=9-5, to=11-5]
	\arrow["{d_{A_1, \dots, A_n, B}}"', from=11-1, to=11-5]
\end{tikzcd}\]
\begin{itemize}
	\item Commutation of (1): the diagram commutes because by Lemma~\ref{lem:comonoidgen} $(\bigotimes_{i=1}^n \oc A_i, d_{A_1,\dots,A_n}, e_{A_1,\dots,A_n})$ is a commutative comonoid.

	\item Commutation of (2):
\[\resizebox{.85\textwidth}{!}{\begin{tikzcd}[ampersand replacement=\&,cramped,column sep=small]
	{(\bigotimes_{i=1}^n \oc A_i)^{\otimes 2}} \&\& {(\bigotimes_{i=1}^n \oc A_i)^{\otimes 3}} \&\& {(\bigotimes_{i=1}^n \oc A_i)^{\otimes 4}} \& {(\bigotimes_{i=1}^n \oc A_i)^{\otimes 4}} \\
	\\
	\\
	\&\& {(\oc (\bigotimes_{i=1}^n \oc A_i))^{\otimes 2} \otimes \bigotimes_{i=1}^n \oc A_i} \&\& {(\oc (\bigotimes_{i=1}^n \oc A_i))^{\otimes 2} \otimes (\bigotimes_{i=1}^n \oc A_i)^{\otimes 2}} \\
	\\
	\\
	{\oc (\bigotimes_{i=1}^n \oc A_i) \otimes \bigotimes_{i=1}^n \oc A_i} \&\&\&\&\& {(\oc (\bigotimes_{i=1}^n \oc A_i) \otimes \bigotimes_{i=1}^n \oc A_i)^{\otimes 2}}
	\arrow["{d_{A_1, \dots, A_n} \otimes id}", from=1-1, to=1-3]
	\arrow["{(5)}"{description}, draw=none, from=1-1, to=4-3]
	\arrow["{\delta_{A_1, \dots, A_n} \otimes id}"{description}, from=1-1, to=7-1]
	\arrow["{id \otimes d_{A_1, \dots, A_n}}", from=1-3, to=1-5]
	\arrow["{\delta_{A_1, \dots, A_n} \otimes \delta_{A_1, \dots, A_n} \otimes id \otimes id}"{description}, dashed, from=1-3, to=4-3]
	\arrow["{id \otimes \sigma \otimes id}", from=1-5, to=1-6]
	\arrow["{\delta_{A_1, \dots, A_n} \otimes \delta_{A_1, \dots, A_n} \otimes id \otimes id}"{description}, dashed, from=1-5, to=4-5]
	\arrow["{(7)}"{description}, draw=none, from=1-6, to=4-5]
	\arrow["{\delta_{A_1, \dots, A_n} \otimes id \otimes \delta_{A_1, \dots, A_n} \otimes id}"{description}, from=1-6, to=7-6]
	\arrow["{(6)}"{description}, draw=none, from=4-3, to=1-5]
	\arrow[""{name=0, anchor=center, inner sep=0}, "{id \otimes d_{A_1,\dots,A_n}}"', curve={height=12pt}, dashed, from=4-3, to=4-5]
	\arrow["{id \otimes \sigma \otimes id}"{description}, dashed, from=4-5, to=7-6]
	\arrow["{d_{\bigotimes_{i=1}^n \oc A_i} \otimes id}"{description}, dashed, from=7-1, to=4-3]
	\arrow[""{name=1, anchor=center, inner sep=0}, "{d_{\bigotimes_{i=1}^n \oc A_i, A_1, \dots, A_n}}"', from=7-1, to=7-6]
	\arrow["{(8)}"{description}, draw=none, from=0, to=1]
\end{tikzcd}}\]
(5) commutes because by Corollary~\ref{cor:deltacoalgebramorph} $\delta_{A_1,\dots,A_n}$ is a coalgebra morphism, therefore by Lemma~\ref{lem:biermanprop} it is also a comonoid morphism between $(\bigotimes_{i=1}^n \oc A_i, d_{A_1,\dots,A_n},e_{A_1,\dots,A_n})$ and $(\oc(\bigotimes_{i=1}^n \oc A_i),d_{\bigotimes_{i=1}^n \oc A_i}, e_{\bigotimes_{i=1}^n \oc A_i})$, (6) commutes by functoriality of $\otimes$, (7) commutes by naturality of $\sigma$ and (8) commutes by definition of $d\_$.

	\item Commutation of (3): the diagram commutes by naturality of $d_{A_1,\dots,A_n}$ (Lemma~\ref{lem:dnat}).
	\item Commutation of (4):
\[\begin{tikzcd}[ampersand replacement=\&,cramped,column sep=small]
	{\oc B \otimes \bigotimes_{i=1}^n \oc A_i} \&\&\&\& \begin{array}{c} \oc B \otimes \bigotimes_{i=1}^n \oc A_i \otimes \\ \oc B \otimes \bigotimes_{i=1}^n \oc A_i \end{array} \\
	\\
	\&\& {\oc B \otimes \oc B \otimes \bigotimes_{i=1}^n \oc A_i \otimes \bigotimes_{i=1}^n \oc A_i} \\
	\\
	\&\& {\bigotimes_{i=1}^n \oc A_i \otimes \bigotimes_{i=1}^n \oc A_i \otimes \oc B \otimes \oc B} \\
	\\
	{\bigotimes_{i=1}^n \oc A_i \otimes \oc B} \&\&\&\& \begin{array}{c} \bigotimes_{i=1}^n \oc A_i \otimes \oc B \otimes \\ \bigotimes_{i=1}^n \oc A_i \otimes \oc B \end{array}
	\arrow[""{name=0, anchor=center, inner sep=0}, "{d_{B,A_1, \dots, A_n}}", from=1-1, to=1-5]
	\arrow["{d_B \otimes d_{A_1, \dots, A_n}}"{description}, dashed, from=1-1, to=3-3]
	\arrow[""{name=1, anchor=center, inner sep=0}, "\sigma"', from=1-1, to=7-1]
	\arrow[""{name=2, anchor=center, inner sep=0}, "{\sigma \otimes \sigma}", from=1-5, to=7-5]
	\arrow["{id \otimes \sigma \otimes id}"{description}, dashed, from=3-3, to=1-5]
	\arrow[""{name=3, anchor=center, inner sep=0}, "\sigma"{description}, dashed, from=3-3, to=5-3]
	\arrow["{id \otimes \sigma \otimes id}"{description}, dashed, from=5-3, to=7-5]
	\arrow["{d_{A_1, \dots, A_n} \otimes d_B}"{description}, dashed, from=7-1, to=5-3]
	\arrow[""{name=4, anchor=center, inner sep=0}, "{d_{A_1, \dots, A_n, B}}"', from=7-1, to=7-5]
	\arrow["{(10)}"{description}, draw=none, from=1, to=3]
	\arrow["{(9)}"{description}, draw=none, from=3-3, to=0]
	\arrow["{(12)}"{description}, draw=none, from=3, to=2]
	\arrow["{(11)}"{description}, draw=none, from=5-3, to=4]
\end{tikzcd}\]
(9) and (11) commute by definition of $d\_$, (10) commutes by naturality of $\sigma$, and (12) commutes by coherence.\qedhere
\end{itemize}
\end{proof}

\begin{corollary}
	\label{coro:substitutionaux}
	  Let $g$ and $f$ as in Lemma~\ref{lem:substitutionaux}, then the following diagram commutes:
  \[\begin{tikzcd}[ampersand replacement=\&,cramped]
	  {\bigotimes_{i=1}^n \oc A_i \otimes B \otimes C} \&\&\& \begin{array}{c} \bigotimes_{i=1}^n \oc A_i \otimes \\\bigotimes_{i=1}^n \oc A_i \otimes B \otimes C \end{array} \&\&\& \begin{array}{c} \bigotimes_{i=1}^n \oc A_i \otimes B \otimes \\ \bigotimes_{i=1}^n \oc A_i \otimes C \end{array} \\
	  \\
	  \begin{array}{c} \bigotimes_{i=1}^n \oc A_i \otimes \oc D\\ \otimes B \otimes C \end{array} \&\&\& \begin{array}{c} \bigotimes_{i=1}^n \oc A_i \otimes \oc D \otimes \\\bigotimes_{i=1}^n \oc A_i \otimes \oc D \otimes B \otimes C \end{array} \&\&\& \begin{array}{c} \bigotimes_{i=1}^n \oc A_i \otimes \oc D \otimes B \\\otimes \bigotimes_{i=1}^n \oc A_i \otimes \oc D \otimes C \end{array}
	  \arrow["{d_{A_1,\dots,A_n} \otimes id_B \otimes id_C}", from=1-1, to=1-4]
	  \arrow["{f \otimes id_B \otimes id_C}"{description}, from=1-1, to=3-1]
	  \arrow["{id_{\bigotimes_{i=1}^n \oc A_i} \otimes \sigma \otimes id_C}", from=1-4, to=1-7]
	  \arrow["{f \otimes id_B \otimes f \otimes id_C}"{description}, from=1-7, to=3-7]
	  \arrow["{d_{A_1,\dots,A_n,D} \otimes id_B \otimes id_C}", from=3-1, to=3-4]
	  \arrow["{id_{\bigotimes_{i=1}^n \oc A_i \otimes \oc D} \otimes \sigma \otimes id_C}", from=3-4, to=3-7]
  \end{tikzcd}\]
  \end{corollary}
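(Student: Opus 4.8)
The plan is to paste together two commuting squares obtained by inserting an intermediate vertical arrow down the middle column. Writing $X = \bigotimes_{i=1}^n \oc A_i$ and recalling from Definition~\ref{def:generalizededeepsilon} that $d_{A_1,\dots,A_n,D}$ is the generalised duplication map for the context extended by $D$ (so that $\oc D$ plays the role of the $(n+1)$-th factor and $f\colon X \to X \otimes \oc D$ is the morphism built from $g$), I would split the outer rectangle of the statement by the arrow
\[
  X \otimes X \otimes B \otimes C \xrightarrow{f \otimes f \otimes id_B \otimes id_C} (X \otimes \oc D) \otimes (X \otimes \oc D) \otimes B \otimes C,
\]
running from the top-middle object to the bottom-middle object.

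First I would check that the resulting \emph{left} square commutes. Its top edge is $d_{A_1,\dots,A_n}\otimes id_B \otimes id_C$, its bottom edge is $d_{A_1,\dots,A_n,D}\otimes id_B \otimes id_C$, its left edge is $f \otimes id_B \otimes id_C$, and its right edge is the inserted arrow. This is precisely the square established in Lemma~\ref{lem:substitutionaux} tensored on the right with $id_{B \otimes C}$, so it commutes by that lemma together with functoriality of $\otimes$.

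Next I would check that the \emph{right} square commutes. Here both paths differ only in the order in which one applies $f$ to the second copy of $X$ and the symmetry $\sigma$ that transposes that copy past $B$; the factor acting on the first copy of $X$ and the $id_C$ on the last factor are carried along passively. Concretely, the equality
\[
  (f \otimes id_B \otimes f \otimes id_C)\circ(id_X \otimes \sigma \otimes id_C)
  = (id_{X \otimes \oc D}\otimes \sigma \otimes id_C)\circ(f \otimes f \otimes id_B \otimes id_C)
\]
is an instance of the naturality of $\sigma$ applied to $f\colon X \to X\otimes\oc D$ and $id_B$, padded with identities on the remaining factors; it follows from naturality of $\sigma$ and functoriality of $\otimes$. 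Pasting the two squares then yields the corollary.

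The computation is routine; the only place demanding care is the bookkeeping of exactly which tensor factors $\sigma$ transposes and of the implicit associators, since the diagram suppresses the coherence isomorphisms. The main (and only minor) obstacle is therefore matching the padded naturality instance for $\sigma$ to the arrows as drawn, rather than any conceptual difficulty.
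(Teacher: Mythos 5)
Your proposal is correct and is exactly the paper's proof: the same intermediate arrow $f \otimes f \otimes id_B \otimes id_C$ splitting the rectangle into two squares, the left commuting by Lemma~\ref{lem:substitutionaux} together with functoriality of $\otimes$, and the right by naturality of $\sigma$ (padded with identities). Your closing remark about tracking which factors $\sigma$ transposes and the suppressed coherence isomorphisms is apt but raises no issue beyond what the paper also leaves implicit.
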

\begin{proof}
\[\begin{tikzcd}[ampersand replacement=\&,cramped]
	{\bigotimes_{i=1}^n \oc A_i \otimes B \otimes C} \&\& \begin{array}{c} \bigotimes_{i=1}^n \oc A_i \otimes \\\bigotimes_{i=1}^n \oc A_i \otimes B \otimes C \end{array} \&\& \begin{array}{c} \bigotimes_{i=1}^n \oc A_i \otimes B \otimes \\ \bigotimes_{i=1}^n \oc A_i \otimes C \end{array} \\
	\& {(1)} \&\& {(2)} \\
	\begin{array}{c} \bigotimes_{i=1}^n \oc A_i \otimes \oc D\\ \otimes B \otimes C \end{array} \&\& \begin{array}{c} \bigotimes_{i=1}^n \oc A_i \otimes \oc D \otimes \\\bigotimes_{i=1}^n \oc A_i \otimes \oc D \otimes B \otimes C \end{array} \&\& \begin{array}{c} \bigotimes_{i=1}^n \oc A_i \otimes \oc D \otimes B \\\otimes \bigotimes_{i=1}^n \oc A_i \otimes \oc D \otimes C \end{array}
	\arrow["{d_{A_1,\dots,A_n} \otimes id_B \otimes id_C}", from=1-1, to=1-3]
	\arrow["{f \otimes id_B \otimes id_C}"{description}, from=1-1, to=3-1]
	\arrow["{id_{\bigotimes_{i=1}^n \oc A_i} \otimes \sigma \otimes id_C}", from=1-3, to=1-5]
	\arrow["{f \otimes f \otimes id_B \otimes id_C}"{description}, dashed, from=1-3, to=3-3]
	\arrow["{f \otimes id_B \otimes f \otimes id_C}"{description}, from=1-5, to=3-5]
	\arrow["{d_{A_1,\dots,A_n,D} \otimes id_B \otimes id_C}", from=3-1, to=3-3]
	\arrow["{id_{\bigotimes_{i=1}^n \oc A_i \otimes \oc D} \otimes \sigma \otimes id_C}", from=3-3, to=3-5]
\end{tikzcd}\]
(1) commutes by Lemma~\ref{lem:substitutionaux} and functoriality of $\otimes$, and (2) commutes by naturality of $\sigma$.
\end{proof}

\begin{lemma}
	\label{lem:deltaaltdef}
	  For every $n$ the following diagram commutes:
  \[\begin{tikzcd}[ampersand replacement=\&,cramped]
	  {\oc B \otimes \bigotimes_{i=1}^n \oc A_i} \&\&\&\& {\oc (\oc B \otimes \bigotimes_{i=1}^n \oc A_i)} \\
	  \\
	  \&\& {\oc \oc B \otimes \oc (\bigotimes_{i=1}^n \oc A_i)}
	  \arrow["{\delta_{B,A_1, \dots, A_n}}", from=1-1, to=1-5]
	  \arrow["{\delta_B \otimes \delta_{A_1, \dots, A_n}}"', from=1-1, to=3-3]
	  \arrow["m"', from=3-3, to=1-5]
  \end{tikzcd}\]
  \end{lemma}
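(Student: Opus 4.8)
The plan is to establish the identity $\delta_{B, A_1, \dots, A_n} = m \circ (\delta_B \otimes \delta_{A_1, \dots, A_n})$ by induction on $n$, presenting the argument as a single commuting diagram in the style of the earlier generalised-property lemmas (e.g.\ Lemma~\ref{lem:deltacoalgebraprop1}). The essential point is a bracketing mismatch: the recursive clause of Definition~\ref{def:generalizededeepsilon} builds $\delta$ by splitting off the \emph{last} tensor factor, whereas the statement of the lemma splits off the \emph{first} factor $\oc B$. Reconciling these two decompositions is the whole content of the lemma, and it will be done by invoking the coherence of the monoidal functor $\oc$.

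For the base cases, when $n = 1$ the sequence $(B, A_1)$ has length two, so Definition~\ref{def:generalizededeepsilon} gives $\delta_{B, A_1} = m \circ (\delta_B \otimes \delta_{A_1})$ directly, which is exactly the claim (note $\delta_{A_1}$ is the single-element case). When $n = 0$ the source is $\oc B \otimes I$ and the right-hand side is $m \circ (\delta_B \otimes m_I)$; here the equality is the right-unit coherence of the monoidal functor $\oc$, i.e.\ the triangle relating $m_{B,I}$, $m_I$, $\rho_{\oc B}$ and $\oc(\rho_B^{-1})$, together with naturality of $\rho$ to absorb the coherence identification of $\oc B \otimes I$ with $\oc B$.

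For the inductive step ($n \geq 2$) I would expand the left-hand side using the recursive clause, $\delta_{B, A_1, \dots, A_n} = m \circ (\delta_{B, A_1, \dots, A_{n-1}} \otimes \delta_{A_n})$, and apply the induction hypothesis to the sequence $(B, A_1, \dots, A_{n-1})$ to rewrite $\delta_{B, A_1, \dots, A_{n-1}} = m \circ (\delta_B \otimes \delta_{A_1, \dots, A_{n-1}})$. On the right-hand side I would likewise expand $\delta_{A_1, \dots, A_n} = m \circ (\delta_{A_1, \dots, A_{n-1}} \otimes \delta_{A_n})$. Using functoriality of $\otimes$ to separate the two occurrences of $m$, both composites reduce to a triple tensor of $\delta_B$, $\delta_{A_1, \dots, A_{n-1}}$ and $\delta_{A_n}$ followed by two nested applications of $m$, associated in the two possible ways. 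These agree by the associativity coherence of $\oc$, namely $\oc(\alpha) \circ m_{X \otimes Y, Z} \circ (m_{X,Y} \otimes \mathrm{id}) = m_{X, Y \otimes Z} \circ (\mathrm{id} \otimes m_{Y,Z}) \circ \alpha$, with naturality of $m$ used to slide the inner $\delta$'s past the associators.

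The main obstacle is precisely this last chase: tracking the associators $\alpha$ and $\oc(\alpha)$ forced by the differing bracketings of the $(n+1)$-fold tensor. Everything else (functoriality of $\otimes$, naturality of $m$ and of $\rho$, and the definitions of the generalised $\delta$) is routine, so I expect the proof to be a labelled commuting diagram whose faces cite the definition of $\delta$, the induction hypothesis, functoriality, naturality, and the monoidal coherence of $\oc$.
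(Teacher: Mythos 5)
Your proposal is correct and takes essentially the same approach as the paper: both proceed by induction on $n$, with the $n=1$ case holding by definition and the inductive step combining the recursive clause of Definition~\ref{def:generalizededeepsilon}, the induction hypothesis, functoriality of $\otimes$, and the associativity coherence $m \circ (m \otimes \mathrm{id}) = m \circ (\mathrm{id} \otimes m)$ of the monoidal functor $\oc$ (the paper works as if strict and suppresses the associators you track explicitly). The only cosmetic difference is the case $n=0$, which the paper dispatches as holding by definition of $\delta_{B,\varnothing}$ (reading the empty tail as contributing $\delta_\varnothing = m_I$), whereas you derive the same identity from the right-unit coherence of $\oc$ together with naturality of $\rho$ --- both readings are sound.
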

\begin{proof}
	By induction on $n$:
	\begin{itemize}
		\item If $n=0$ the diagram commutes by definition of $\delta_{B,\varnothing}$:
\[\begin{tikzcd}[ampersand replacement=\&,cramped]
	{\oc B \otimes I} \&\&\&\& {\oc (\oc B \otimes I)} \\
	\\
	\&\& {\oc \oc B \otimes \oc I}
	\arrow["{\delta_{B,\varnothing}}", from=1-1, to=1-5]
	\arrow["{\delta_B \otimes m_I}"', from=1-1, to=3-3]
	\arrow["m"', from=3-3, to=1-5]
\end{tikzcd}\]
		\item If $n=1$ the diagram commutes by definition of $\delta_{B,A}$:
\[\begin{tikzcd}[ampersand replacement=\&,cramped]
	{\oc B \otimes \oc A} \&\&\&\& {\oc (\oc B \otimes \oc A)} \\
	\\
	\&\& {\oc \oc B \otimes \oc \oc A}
	\arrow["{\delta_{B,A}}", from=1-1, to=1-5]
	\arrow["{\delta_B \otimes \delta_A}"', from=1-1, to=3-3]
	\arrow["m"', from=3-3, to=1-5]
\end{tikzcd}\]
		\item If $n>1$:
\[\begin{tikzcd}[ampersand replacement=\&,cramped,column sep=small]
	{\oc B \otimes \bigotimes_{i=1}^n \oc A_i} \&\&\&\& {\oc (\oc B \otimes \bigotimes_{i=1}^n \oc A_i)} \\
	\&\& {(1)} \\
	\& {(2)} \& {\oc (\oc B \otimes \bigotimes_{i=1}^{n-1} \oc A_i) \otimes \oc \oc A_n} \\
	\\
	\& {(3)} \&\& {(4)} \\
	\&\& {\oc \oc B \otimes \oc (\bigotimes_{i=1}^{n-1} \oc A_i) \otimes \oc \oc A_n} \\
	{\oc \oc B \otimes \oc (\bigotimes_{i=1}^n \oc A_i)}
	\arrow["{\delta_{B,A_1, \dots, A_n}}", from=1-1, to=1-5]
	\arrow["{\delta_{B,A_1, \dots, A_{n-1}} \otimes \delta_{A_n}}"{description}, dashed, from=1-1, to=3-3]
	\arrow["{\delta_B \otimes \delta_{A_1, \dots, A_{n-1}} \otimes \delta_{A_n}}"{description}, curve={height=30pt}, dashed, from=1-1, to=6-3]
	\arrow["{\delta_B \otimes \delta_{A_1, \dots, A_n}}"', from=1-1, to=7-1]
	\arrow["m"{description}, dashed, from=3-3, to=1-5]
	\arrow["{m \otimes id}"{description}, dashed, from=6-3, to=3-3]
	\arrow["{id \otimes m}"{description}, dashed, from=6-3, to=7-1]
	\arrow["m"', from=7-1, to=1-5, rounded corners, to path={(\tikztostart.east) -| ([yshift=-130pt]\tikztotarget.south) -- (\tikztotarget.south) \tikztonodes}]
\end{tikzcd}\]
(1) commutes by definition of $\delta\_$, (2) commutes by the induction hypothesis and functoriality of $\otimes$, (3) commutes by definition of $\delta\_$ and functoriality of $\otimes$ and (4) commutes because $\oc$ is a monoidal functor.\qedhere
	\end{itemize}
\end{proof}

\begin{lemma}
	\label{lem:deltaprop}
	  Let $f:\bigotimes_{i=1}^n \oc A_i \otimes I \to B$, then the following diagram commutes:
  \[\begin{tikzcd}[ampersand replacement=\&,cramped,row sep=scriptsize]
	  {\bigotimes_{i=1}^n \oc A_i} \&\& {\oc (\bigotimes_{i=1}^n \oc A_i)} \\
	  \\
	  {\bigotimes_{i=1}^n \oc A_i \otimes \bigotimes_{i=1}^n \oc A_i} \&\& {\oc (\bigotimes_{i=1}^n \oc A_i \otimes \bigotimes_{i=1}^n \oc A_i)} \\
	  \\
	  {\oc (\bigotimes_{i=1}^n \oc A_i) \otimes \bigotimes_{i=1}^n \oc A_i} \&\& {\oc (\oc (\bigotimes_{i=1}^n \oc A_i) \otimes \bigotimes_{i=1}^n \oc A_i)} \\
	  \\
	  {\oc (\bigotimes_{i=1}^n \oc A_i \otimes I) \otimes \bigotimes_{i=1}^n \oc A_i} \&\& {\oc (\oc (\bigotimes_{i=1}^n \oc A_i \otimes I) \otimes \bigotimes_{i=1}^n \oc A_i)} \\
	  \\
	  {\oc B \otimes \bigotimes_{i=1}^n \oc A_i} \&\& {\oc (\oc B \otimes \bigotimes_{i=1}^n \oc A_i)} \\
	  \\
	  {\bigotimes_{i=1}^n \oc A_i \otimes \oc B} \&\& {\oc (\bigotimes_{i=1}^n \oc A_i \otimes \oc B)}
	  \arrow["{\delta_{A_1, \dots, A_n}}", from=1-1, to=1-3]
	  \arrow["{d_{A_1, \dots, A_n}}"', from=1-1, to=3-1]
	  \arrow["{\oc (d_{A_1, \dots, A_n})}", from=1-3, to=3-3]
	  \arrow["{\delta_{A_1, \dots, A_n} \otimes id}"', from=3-1, to=5-1]
	  \arrow["{\oc (\delta_{A_1, \dots, A_n} \otimes id)}", from=3-3, to=5-3]
	  \arrow["{\oc (\rho^{-1}) \otimes id}"', from=5-1, to=7-1]
	  \arrow["{\oc (\oc (\rho^{-1}) \otimes id)}", from=5-3, to=7-3]
	  \arrow["{\oc f \otimes id}"', from=7-1, to=9-1]
	  \arrow["{\oc (\oc f \otimes id)}", from=7-3, to=9-3]
	  \arrow["\sigma"', from=9-1, to=11-1]
	  \arrow["{\oc \sigma}", from=9-3, to=11-3]
	  \arrow["{\delta_{A_1, \dots, A_n, B}}"', from=11-1, to=11-3]
  \end{tikzcd}\]
  \end{lemma}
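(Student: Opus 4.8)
The plan is to recognise the outer rectangle as a coalgebra-morphism condition and then factor it through the coalgebra-morphism lemmas already established. First I observe that, by functoriality of $\oc$, the right-hand column is exactly $\oc$ applied to the composite of the left-hand column. Writing
\[
  \Phi = \sigma\circ(\oc f\otimes id)\circ(\oc(\rho^{-1})\otimes id)\circ(\delta_{A_1,\dots,A_n}\otimes id)\circ d_{A_1,\dots,A_n}
\]
for that left-hand composite, the whole diagram asserts precisely that $\oc\Phi\circ\delta_{A_1,\dots,A_n} = \delta_{A_1,\dots,A_n,B}\circ\Phi$, that is, that $\Phi$ is a coalgebra morphism from $(\bigotimes_{i=1}^n\oc A_i,\delta_{A_1,\dots,A_n})$ to $(\bigotimes_{i=1}^n\oc A_i\otimes\oc B,\delta_{A_1,\dots,A_n,B})$; both are coalgebras by Lemma~\ref{lem:coalgebragen}. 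So it suffices to exhibit $\Phi$ as a composite of coalgebra morphisms and paste, inserting the coalgebra structure map of each intermediate object as a horizontal rung of the ladder.

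The factorisation, reading the left column top to bottom, is checked as follows. The map $d_{A_1,\dots,A_n}$ is a coalgebra morphism into $(\bigotimes\oc A_i\otimes\bigotimes\oc A_i,\; m\circ(\delta_{A_1,\dots,A_n}\otimes\delta_{A_1,\dots,A_n}))$ by Lemma~\ref{lem:dcoalgebramorph}. Next, $\delta_{A_1,\dots,A_n}\otimes id$ is the tensor of the coalgebra morphism $\delta_{A_1,\dots,A_n}$ (Corollary~\ref{cor:deltacoalgebramorph}) with the identity, hence a coalgebra morphism into $(\oc(\bigotimes\oc A_i)\otimes\bigotimes\oc A_i,\; m\circ(\delta_{\bigotimes_{i=1}^n\oc A_i}\otimes\delta_{A_1,\dots,A_n}))$, the source structure matching the target structure of the previous step. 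The maps $\oc(\rho^{-1})\otimes id$ and $\oc f\otimes id$ are tensors of $id$ with $\oc$ of a morphism, and $\oc k$ is always a coalgebra morphism between cofree coalgebras by naturality of $\delta:\oc\Rightarrow\oc\oc$ (Definition~\ref{def:symmetricmonoidalcomonad}); after the second of these the structure on $\oc B\otimes\bigotimes\oc A_i$ is $m\circ(\delta_B\otimes\delta_{A_1,\dots,A_n})$, which by Lemma~\ref{lem:deltaaltdef} equals $\delta_{B,A_1,\dots,A_n}$. Throughout I use that both composition and tensoring of coalgebra morphisms yield coalgebra morphisms, the latter following from naturality of $m$ together with the coalgebra axioms.

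The main obstacle is the final step, the symmetry $\sigma\colon \oc B\otimes\bigotimes\oc A_i\to\bigotimes\oc A_i\otimes\oc B$, which must be shown to be a coalgebra morphism from $(\,\cdot\,,\delta_{B,A_1,\dots,A_n})$ to $(\,\cdot\,,\delta_{A_1,\dots,A_n,B})$. Here one unfolds both structures as $\delta_{B,A_1,\dots,A_n}=m\circ(\delta_B\otimes\delta_{A_1,\dots,A_n})$ (Lemma~\ref{lem:deltaaltdef}) and $\delta_{A_1,\dots,A_n,B}=m\circ(\delta_{A_1,\dots,A_n}\otimes\delta_B)$ (directly from Definition~\ref{def:generalizededeepsilon}), and concludes by the symmetric-monoidality of the comonad, $\oc\sigma\circ m = m\circ\sigma$, together with naturality of $\sigma$:
\[
  \oc\sigma\circ\delta_{B,A_1,\dots,A_n}
  = m\circ\sigma\circ(\delta_B\otimes\delta_{A_1,\dots,A_n})
  = m\circ(\delta_{A_1,\dots,A_n}\otimes\delta_B)\circ\sigma
  = \delta_{A_1,\dots,A_n,B}\circ\sigma .
\]
Pasting the five resulting rungs then yields the outer rectangle. (Equivalently, one could carry out the explicit diagram chase by induction on $n$ in the style of the other appendix lemmas, but the coalgebra-morphism argument makes the bookkeeping of intermediate structures transparent.)
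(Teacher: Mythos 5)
Your proposal is correct and follows essentially the same route as the paper: the paper's proof pastes exactly the same horizontal rungs (the intermediate structure maps $m\circ(\delta_{A_1,\dots,A_n}\otimes\delta_{A_1,\dots,A_n})$, $\delta_{(\bigotimes_{i=1}^n\oc A_i),A_1,\dots,A_n}$, $\delta_{B,A_1,\dots,A_n}$) and verifies each square using Lemma~\ref{lem:dcoalgebramorph}, Corollary~\ref{cor:deltacoalgebramorph}, Lemma~\ref{lem:deltaaltdef}, naturality of $m$, $\delta$, $\sigma$, and symmetric monoidality of $\oc$ — precisely the ingredients you cite. The only difference is presentational: you package the square-by-square chase as closure of coalgebra morphisms under composition and tensoring, which is a faithful summary of what the paper's sub-diagrams establish.
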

\begin{proof}
\[\begin{tikzcd}[ampersand replacement=\&,cramped,row sep=scriptsize]
	{\bigotimes_{i=1}^n \oc A_i} \&\&\&\& {\oc (\bigotimes_{i=1}^n \oc A_i)} \\
	\&\& {(1)} \\
	{\bigotimes_{i=1}^n \oc A_i \otimes \bigotimes_{i=1}^n \oc A_i} \&\& {\oc (\bigotimes_{i=1}^n \oc A_i) \otimes \oc (\bigotimes_{i=1}^n \oc A_i)} \&\& {\oc (\bigotimes_{i=1}^n \oc A_i \otimes \bigotimes_{i=1}^n \oc A_i)} \\
	\&\& {(2)} \\
	{\oc (\bigotimes_{i=1}^n \oc A_i) \otimes \bigotimes_{i=1}^n \oc A_i} \&\&\&\& {\oc (\oc (\bigotimes_{i=1}^n \oc A_i) \otimes \bigotimes_{i=1}^n \oc A_i)} \\
	\\
	{\oc (\bigotimes_{i=1}^n \oc A_i \otimes I) \otimes \bigotimes_{i=1}^n \oc A_i} \&\& {(3)} \&\& {\oc (\oc (\bigotimes_{i=1}^n \oc A_i \otimes I) \otimes \bigotimes_{i=1}^n \oc A_i)} \\
	\\
	{\oc B \otimes \bigotimes_{i=1}^n \oc A_i} \&\&\&\& {\oc (\oc B \otimes \bigotimes_{i=1}^n \oc A_i)} \\
	\&\& {(4)} \\
	{\bigotimes_{i=1}^n \oc A_i \otimes \oc B} \&\&\&\& {\oc (\bigotimes_{i=1}^n \oc A_i \otimes \oc B)}
	\arrow["{\delta_{A_1, \dots, A_n}}", from=1-1, to=1-5]
	\arrow["{d_{A_1, \dots, A_n}}"', from=1-1, to=3-1]
	\arrow["{\oc (d_{A_1, \dots, A_n})}", from=1-5, to=3-5]
	\arrow["{\delta_{A_1, \dots, A_n} \otimes \delta_{A_1, \dots, A_n}}"', curve={height=12pt}, dashed, from=3-1, to=3-3]
	\arrow["{\delta_{A_1, \dots, A_n} \otimes id}"', from=3-1, to=5-1]
	\arrow["m"', dashed, from=3-3, to=3-5]
	\arrow["{\oc (\delta_{A_1, \dots, A_n} \otimes id)}", from=3-5, to=5-5]
	\arrow["{\delta_{(\bigotimes_{i=1}^n \oc A_i), A_1, \dots, A_n}}"{description}, dashed, from=5-1, to=5-5]
	\arrow["{\oc (\rho^{-1}) \otimes id}"', from=5-1, to=7-1]
	\arrow["{\oc (\oc (\rho^{-1}) \otimes id)}", from=5-5, to=7-5]
	\arrow["{\oc f \otimes id}"', from=7-1, to=9-1]
	\arrow["{\oc (\oc f \otimes id)}", from=7-5, to=9-5]
	\arrow["{\delta_{B,A_1, \dots, A_n}}"{description}, dashed, from=9-1, to=9-5]
	\arrow["\sigma"', from=9-1, to=11-1]
	\arrow["{\oc \sigma}", from=9-5, to=11-5]
	\arrow["{\delta_{A_1, \dots, A_n, B}}"', from=11-1, to=11-5]
\end{tikzcd}\]
	\begin{itemize}
		\item Commutation of (1): the diagram commutes because $d_{A_1,\dots,A_n}$ is a coalgebra morphism (Lemma~\ref{lem:dcoalgebramorph}).
		\item Commutation of (2):
\[\resizebox{.85\textwidth}{!}{\begin{tikzcd}[ampersand replacement=\&,cramped,column sep=tiny]
	{\bigotimes_{i=1}^n \oc A_i \otimes \bigotimes_{i=1}^n \oc A_i} \&\&\&\& {\oc (\bigotimes_{i=1}^n \oc A_i) \otimes \oc (\bigotimes_{i=1}^n \oc A_i)} \& {\oc (\bigotimes_{i=1}^n \oc A_i \otimes \bigotimes_{i=1}^n \oc A_i)} \\
	\\
	\&\& {\oc (\bigotimes_{i=1}^n \oc A_i) \otimes \bigotimes_{i=1}^n \oc A_i} \\
	\\
	\\
	\&\& {\oc \oc (\bigotimes_{i=1}^n \oc A_i) \otimes \bigotimes_{i=1}^n \oc A_i} \&\& \begin{array}{c} \oc \oc (\bigotimes_{i=1}^n \oc A_i)\\ \otimes \oc (\bigotimes_{i=1}^n \oc A_i) \end{array} \\
	\\
	{\oc (\bigotimes_{i=1}^n \oc A_i) \otimes \bigotimes_{i=1}^n \oc A_i} \&\&\&\&\& {\oc (\oc (\bigotimes_{i=1}^n \oc A_i) \otimes \bigotimes_{i=1}^n \oc A_i)}
	\arrow["{\delta_{A_1, \dots, A_n} \otimes \delta_{A_1, \dots, A_n}}", from=1-1, to=1-5]
	\arrow["{\delta_{A_1, \dots, A_n} \otimes id}"{description}, curve={height=12pt}, dashed, from=1-1, to=3-3]
	\arrow["{(5)}"{description}, curve={height=-12pt}, draw=none, from=1-1, to=3-3]
	\arrow["{\delta_{A_1, \dots, A_n} \otimes id}"{description}, from=1-1, to=8-1]
	\arrow["m", from=1-5, to=1-6]
	\arrow["{\oc (\delta_{A_1, \dots, A_n}) \otimes id}"{description}, dashed, from=1-5, to=6-5]
	\arrow["{\oc (\delta_{A_1, \dots, A_n} \otimes id)}"{description}, from=1-6, to=8-6]
	\arrow["{id \otimes \delta_{A_1, \dots, A_n}}"{description}, dashed, from=3-3, to=1-5]
	\arrow["{\oc (\delta_{A_1, \dots, A_n}) \otimes id}"{description}, dashed, from=3-3, to=6-3]
	\arrow["{(7)}"{description, pos=0.4}, draw=none, from=3-3, to=6-5]
	\arrow["{(6)}"{description, pos=0.4}, draw=none, from=3-3, to=8-1]
	\arrow["{id \otimes \delta_{A_1, \dots, A_n}}", curve={height=-18pt}, dashed, from=6-3, to=6-5]
	\arrow["{(9)}"{description, pos=0.3}, draw=none, from=6-3, to=8-6]
	\arrow["{(8)}"{description}, draw=none, from=6-5, to=1-6]
	\arrow["m"{description}, dashed, from=6-5, to=8-6]
	\arrow["{\delta_{\bigotimes_{i=1}^n \oc A_i} \otimes id}"{description}, dashed, from=8-1, to=6-3]
	\arrow["{\delta_{(\bigotimes_{i=1}^n \oc A_i), A_1, \dots, A_n}}"', from=8-1, to=8-6]
\end{tikzcd}}\]
(5) commutes by functoriality of $\otimes$, (6) commutes because $\delta_{A_1, \dots, A_n}$ is a coalgebra morphism (Lemma~\ref{cor:deltacoalgebramorph}) and by functoriality of $\otimes$, (7) commutes by functoriality of $\otimes$, (8) commutes by naturality of $m$, and (9) commutes by Lemma~\ref{lem:deltaaltdef}.

		\item Commutation of (3):
\[\begin{tikzcd}[ampersand replacement=\&,cramped,row sep=small, column sep=tiny]
	{\oc (\bigotimes_{i=1}^n \oc A_i) \otimes \bigotimes_{i=1}^n \oc A_i} \&\&\&\& {\oc (\oc (\bigotimes_{i=1}^n \oc A_i) \otimes \bigotimes_{i=1}^n \oc A_i)} \\
	\&\& {(10)} \\
	\&\& {\oc \oc (\bigotimes_{i=1}^n \oc A_i) \otimes \oc (\bigotimes_{i=1}^n \oc A_i)} \\
	\& {(11)} \&\& {(12)} \\
	{\oc (\bigotimes_{i=1}^n \oc A_i \otimes I) \otimes \bigotimes_{i=1}^n \oc A_i} \&\&\&\& {\oc (\oc (\bigotimes_{i=1}^n \oc A_i \otimes I) \otimes \bigotimes_{i=1}^n \oc A_i)} \\
	\\
	\&\& {\oc \oc B \otimes \oc (\bigotimes_{i=1}^n \oc A_i)} \\
	\&\& {(13)} \\
	{\oc B \otimes \bigotimes_{i=1}^n \oc A_i} \&\&\&\& {\oc (\oc B \otimes \bigotimes_{i=1}^n \oc A_i)}
	\arrow["{\delta_{(\bigotimes_{i=1}^n \oc A_i), A_1, \dots, A_n}}", from=1-1, to=1-5]
	\arrow["{\delta_{(\bigotimes_{i=1}^n \oc A_i)} \otimes \delta_{A_1, \dots, A_n}}"{description}, dashed, from=1-1, to=3-3]
	\arrow["{\oc (\rho^{-1}) \otimes id}"', from=1-1, to=5-1]
	\arrow["{\oc (f \circ \rho^{-1}) \otimes id}"{description, pos=0.7}, curve={height=-90pt}, dashed, from=1-1, to=9-1]
	\arrow["{\oc (\oc (\rho^{-1}) \otimes id)}", from=1-5, to=5-5]
	\arrow["{\oc (\oc (f \circ \rho^{-1}) \otimes id)}"{description, pos=0.7}, curve={height=90pt}, dashed, from=1-5, to=9-5]
	\arrow["m"{description}, dashed, from=3-3, to=1-5]
	\arrow["{\oc \oc (f \circ \rho^{-1}) \otimes \oc id}"{description}, dashed, from=3-3, to=7-3]
	\arrow["{\oc f \otimes id}"', from=5-1, to=9-1]
	\arrow["{\oc (\oc f \otimes id)}", from=5-5, to=9-5]
	\arrow["m"{description}, dashed, from=7-3, to=9-5]
	\arrow["{\delta_B \otimes \delta_{A_1, \dots, A_n}}"{description}, dashed, from=9-1, to=7-3]
	\arrow["{\delta_{B,A_1, \dots, A_n}}"', from=9-1, to=9-5]
\end{tikzcd}\]
(10) and (13) commute by Lemma~\ref{lem:deltaaltdef}, (11) commutes by naturality of $\delta$ and functoriality of $\otimes$ and (12) commutes by naturality of $m$.
	\item Commutation of (4):
\[\begin{tikzcd}[ampersand replacement=\&,cramped,row sep=small]
	{\oc B \otimes \bigotimes_{i=1}^n \oc A_i} \&\&\&\& {\oc (\oc B \otimes \bigotimes_{i=1}^n \oc A_i)} \\
	\&\& {(14)} \\
	\&\& {\oc \oc B \otimes \oc (\bigotimes_{i=1}^n \oc A_i)} \\
	\& {(15)} \&\& {(16)} \\
	\&\& {\oc (\bigotimes_{i=1}^n \oc A_i) \otimes \oc \oc B} \\
	\&\& {(17)} \\
	{\bigotimes_{i=1}^n \oc A_i \otimes \oc B} \&\&\&\& {\oc (\bigotimes_{i=1}^n \oc A_i \otimes \oc B)}
	\arrow["{\delta_{B,A_1, \dots, A_n}}", from=1-1, to=1-5]
	\arrow["{\delta_B \otimes \delta_{A_1, \dots, A_n}}"{description}, dashed, from=1-1, to=3-3]
	\arrow["\sigma"', from=1-1, to=7-1]
	\arrow["{\oc \sigma}", from=1-5, to=7-5]
	\arrow["m"{description}, dashed, from=3-3, to=1-5]
	\arrow["\sigma"{description}, dashed, from=3-3, to=5-3]
	\arrow["m"{description}, dashed, from=5-3, to=7-5]
	\arrow["{\delta_{A_1, \dots, A_n} \otimes \delta_B}"{description}, dashed, from=7-1, to=5-3]
	\arrow["{\delta_{A_1, \dots, A_n, B}}"', from=7-1, to=7-5]
\end{tikzcd}\]
(14) commutes by Lemma~\ref{lem:deltaaltdef}, (15) commutes by naturality of $\sigma$, (16) commutes because $\oc$ is a symmetric monoidal functor and (17) commutes by definition of $\delta\_$.\qedhere
\end{itemize}
\end{proof}

\begin{lemma}
	\label{lem:eprop}
	  Let $f: (\bigotimes_{i=1}^n \oc A_i) \otimes I \to B$, then the following diagram commutes:
  \[\begin{tikzcd}[ampersand replacement=\&,cramped,row sep=scriptsize]
	  {\bigotimes_{i=1}^n \oc A_i} \&\& I \\
	  \\
	  {\bigotimes_{i=1}^n \oc A_i \otimes \bigotimes_{i=1}^n \oc A_i} \\
	  \&\& {\bigotimes_{i=1}^n \oc A_i \otimes \oc B} \\
	  {\bigotimes_{i=1}^n \oc A_i \otimes I\otimes \bigotimes_{i=1}^n \oc A_i} \\
	  \&\& {\oc B \otimes \bigotimes_{i=1}^n \oc A_i} \\
	  {\bigotimes_{i=1}^n \oc A_i \otimes \bigotimes_{i=1}^n \oc A_i} \\
	  \\
	  {\oc (\bigotimes_{i=1}^n \oc A_i) \otimes \bigotimes_{i=1}^n \oc A_i} \&\& {\oc (\bigotimes_{i=1}^n \oc A_i \otimes I) \otimes \bigotimes_{i=1}^n \oc A_i}
	  \arrow["{e_{A_1, \dots, A_n}}", from=1-1, to=1-3]
	  \arrow["{d_{A_1, \dots, A_n}}"', from=1-1, to=3-1]
	  \arrow["{\rho^{-1} \otimes id}"', from=3-1, to=5-1]
	  \arrow["{e_{A_1, \dots, A_n, B}}"', from=4-3, to=1-3]
	  \arrow["{\rho \otimes id}"', from=5-1, to=7-1]
	  \arrow["\sigma"', from=6-3, to=4-3]
	  \arrow["{\delta_{A_1, \dots, A_n} \otimes id}"', from=7-1, to=9-1]
	  \arrow["{\oc (\rho^{-1}) \otimes id}"', from=9-1, to=9-3]
	  \arrow["{\oc f \otimes id}"', from=9-3, to=6-3]
  \end{tikzcd}\]
  \end{lemma}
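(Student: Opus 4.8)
The plan is to avoid a direct $n$-fold diagram chase by recognising the long composite as the \emph{counit-preservation} law of a suitable comonoid morphism. First I would observe that in the left-hand column the map $\rho^{-1}\otimes id$ out of the node $\bigotimes_{i=1}^n \oc A_i \otimes \bigotimes_{i=1}^n \oc A_i$ and the subsequent $\rho\otimes id$ are mutually inverse, hence cancel. Writing $g = \oc f \circ \oc(\rho^{-1}) \circ \delta_{A_1,\dots,A_n} : \bigotimes_{i=1}^n \oc A_i \to \oc B$ for the composite obtained from the remaining lower-left portion of the diagram (note that $(\oc f\otimes id)\circ(\oc(\rho^{-1})\otimes id)\circ(\delta_{A_1,\dots,A_n}\otimes id) = g\otimes id$), the claim reduces to
\[
  e_{A_1,\dots,A_n,B} \circ \sigma \circ (g \otimes id) \circ d_{A_1,\dots,A_n} = e_{A_1,\dots,A_n}.
\]

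Next I would show that $g$ is a coalgebra morphism from $(\bigotimes_{i=1}^n \oc A_i, \delta_{A_1,\dots,A_n})$ to the free coalgebra $(\oc B, \delta_B)$. Indeed $\delta_{A_1,\dots,A_n}$ is a coalgebra morphism by Corollary~\ref{cor:deltacoalgebramorph}, while $\oc(\rho^{-1})$ and $\oc f$ are coalgebra morphisms between free coalgebras because $\oc h$ is always one (this is exactly naturality of $\delta$), and composites of coalgebra morphisms are again coalgebra morphisms. By Lemma~\ref{lem:biermanprop}, $g$ is therefore a comonoid morphism. The comonoid structure on the domain is $(\bigotimes_{i=1}^n \oc A_i, d_{A_1,\dots,A_n}, e_{A_1,\dots,A_n})$ (Lemma~\ref{lem:comonoidgen}), whose structure maps are coalgebra morphisms for $\delta_{A_1,\dots,A_n}$ by Lemmas~\ref{lem:dcoalgebramorph} and~\ref{lem:generalizedaxiom}, and the target comonoid is the free one $(\oc B, d_B, e_B)$. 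Since comonoid morphisms preserve counits, this yields $e_B \circ g = e_{A_1,\dots,A_n}$.

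Finally I would conclude by a short calculation. Unfolding $e_{A_1,\dots,A_n,B}$ by its recursive definition (Definition~\ref{def:generalizededeepsilon}) as $\lambda_I \circ (e_{A_1,\dots,A_n} \otimes e_B)$, pushing $\sigma$ through $g\otimes id$ by naturality, and substituting $e_B \circ g = e_{A_1,\dots,A_n}$, the left-hand side becomes $\lambda_I \circ (e_{A_1,\dots,A_n} \otimes e_{A_1,\dots,A_n}) \circ \sigma \circ d_{A_1,\dots,A_n}$. Commutativity of the comonoid (Lemma~\ref{lem:dsymm}) erases the $\sigma$, and then the counit law $(e_{A_1,\dots,A_n}\otimes id)\circ d_{A_1,\dots,A_n} = \lambda^{-1}$ (Lemma~\ref{lem:comonoidn}) together with naturality of $\lambda$ and unit coherence collapses $\lambda_I \circ (e_{A_1,\dots,A_n}\otimes e_{A_1,\dots,A_n}) \circ d_{A_1,\dots,A_n}$ to $e_{A_1,\dots,A_n}$, as required.

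The main obstacle I anticipate is not the algebra but justifying that the comonoid structure invoked through Lemma~\ref{lem:biermanprop} is precisely $(d_{A_1,\dots,A_n}, e_{A_1,\dots,A_n})$, rather than some a priori different comonoid induced by the coalgebra $\delta_{A_1,\dots,A_n}$; this is exactly what Lemmas~\ref{lem:comonoidgen}, \ref{lem:dcoalgebramorph} and~\ref{lem:generalizedaxiom} were arranged to guarantee, so the identification should be legitimate. Should this conceptual step prove delicate, the safe fallback is a direct induction on $n$ in the style of Lemmas~\ref{lem:deltaaltdef} and~\ref{lem:deltaprop}, with trivial base cases $n=0,1$ and an inductive step that unfolds the recursive definitions of $d$, $e$ and $\delta$ and chases the resulting diagram using naturality of $m$, $\sigma$ and $\rho$.
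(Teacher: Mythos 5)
Your argument is correct, but it takes a genuinely different route from the paper. The paper proves the lemma by one direct diagram chase: after cancelling $\rho\circ\rho^{-1}$ (encoded as an equality of nodes in its diagram), it uses the counit law of Lemma~\ref{lem:comonoidgen}, the coalgebra-morphism square of Lemma~\ref{lem:generalizedaxiom} (i.e.\ $\oc(e_{A_1,\dots,A_n})\circ\delta_{A_1,\dots,A_n}=m_I\circ e_{A_1,\dots,A_n}$), naturality of $e$ to absorb $\oc(\rho^{-1})$ and $\oc f$, the unit identity $e_I\circ m_I=id_I$ coming from monoidality, and naturality of $\sigma$ and $\lambda$ plus coherence. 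You instead package $g=\oc f\circ\oc(\rho^{-1})\circ\delta_{A_1,\dots,A_n}$ as a single coalgebra morphism into the free coalgebra $(\oc B,\delta_B)$ (via Corollary~\ref{cor:deltacoalgebramorph}, naturality of $\delta$, and closure of coalgebra morphisms under composition) and obtain counit preservation $e_B\circ g=e_{A_1,\dots,A_n}$ in one stroke from Lemma~\ref{lem:biermanprop}; your closing computation (unfolding $e_{A_1,\dots,A_n,B}$, naturality of $\sigma$, Lemma~\ref{lem:dsymm}, Lemma~\ref{lem:comonoidn}, coherence) is routine and parallels the remaining squares of the paper's chase. What your approach buys is brevity and conceptual clarity: counit preservation subsumes precisely the squares the paper justifies by Lemma~\ref{lem:generalizedaxiom}, naturality of $e$, and monoidality. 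The caveat you flag---that Lemma~\ref{lem:biermanprop} must be read with the comonoid structures $(d_{A_1,\dots,A_n},e_{A_1,\dots,A_n})$ on the domain coalgebra and $(d_B,e_B)$ on $\oc B$---is real but resolves in your favour: this is exactly how the paper itself applies Lemma~\ref{lem:biermanprop} in the proof of Lemma~\ref{lem:substitutionaux} (sub-diagram (5)), so the identification is consistent with the paper's conventions; the paper's own proof of the present lemma simply sidesteps the issue by never invoking Lemma~\ref{lem:biermanprop}, working instead with the more elementary naturality and coalgebra-morphism facts directly.
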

\begin{proof}
  \[\begin{tikzcd}[ampersand replacement=\&,cramped,column sep=tiny,row sep=scriptsize]
    {\bigotimes_{i=1}^n \oc A_i} \&\&\&\&\&\& I \\
    \&\& {(1)} \\
    {\bigotimes_{i=1}^n \oc A_i \otimes \bigotimes_{i=1}^n \oc A_i} \&\& {I \otimes I} \&\& {(3)} \\
    \&\&\&\& {I \otimes I} \& {(4)} \\
    \& {(2)} \& {I\otimes \bigotimes_{i=1}^n \oc A_i} \&\& {(5)} \&\& {\bigotimes_{i=1}^n \oc A_i \otimes \oc B} \\
    \&\&\&\& { I \otimes \bigotimes_{i=1}^n \oc A_i} \\
    {\bigotimes_{i=1}^n \oc A_i \otimes I \otimes \bigotimes_{i=1}^n \oc A_i} \&\&\& {(6)} \&\&\& {\oc B \otimes \bigotimes_{i=1}^n \oc A_i} \\
    \&\&\&\&\& {(9)} \\
    \&\& {\oc I \otimes \bigotimes_{i=1}^n \oc A_i} \&\& {(8)} \&\& {\oc (\bigotimes_{i=1}^n \oc A_i \otimes I) \otimes \bigotimes_{i=1}^n \oc A_i} \\
    \& {(7)} \\
    {\bigotimes_{i=1}^n \oc A_i \otimes \bigotimes_{i=1}^n \oc A_i} \&\&\&\&\&\& {\oc (\bigotimes_{i=1}^n \oc A_i) \otimes \bigotimes_{i=1}^n \oc A_i}
    \arrow["{e_{A_1, \dots, A_n}}", from=1-1, to=1-7]
    \arrow["{d_{A_1, \dots, A_n}}"', from=1-1, to=3-1]
    \arrow["\lambda"{description}, curve={height=-6pt}, dashed, from=1-1, to=5-3]
    \arrow["{\rho^{-1} \otimes id}"', from=3-1, to=7-1]
    \arrow[curve={height=-80pt}, equals, from=3-1, to=11-1]
    \arrow["\lambda"{description}, dashed, from=3-3, to=1-7]
    \arrow["\sigma"{description}, dashed, from=3-3, to=4-5]
    \arrow["\lambda"{description}, dashed, from=4-5, to=1-7]
    \arrow["{\lambda^{-1}}"{description}, curve={height=-12pt}, dashed, from=5-3, to=1-1]
    \arrow["{id_I\otimes e_{A_1,\dots,A_n}}"{description}, dashed, from=5-3, to=3-3]
    \arrow[equals, from=5-3, to=6-5]
    \arrow["{m_I \otimes id}"{description}, dashed, from=5-3, to=9-3]
    \arrow["{e_{A_1, \dots, A_n, B}}"', from=5-7, to=1-7]
    \arrow["{e_{A_1,\dots,A_n} \otimes e_B}"{description}, curve={height=-18pt}, dashed, from=5-7, to=4-5]
    \arrow["{\rho \otimes id}"', from=7-1, to=11-1]
    \arrow["\sigma"', from=7-7, to=5-7]
    \arrow["{e_B \otimes id}"{description}, dashed, from=7-7, to=6-5]
    \arrow["{e_I \otimes id}"{description}, curve={height=12pt}, dashed, from=9-3, to=6-5]
    \arrow["{\oc f \otimes id}"', from=9-7, to=7-7]
    \arrow["{e_{A_1,\dots,A_n} \otimes id}"{description}, curve={height=30pt}, dashed, from=11-1, to=5-3]
    \arrow["{\delta_{A_1, \dots, A_n} \otimes id}"', from=11-1, to=11-7]
    \arrow["{e_{\bigotimes_{i=1}^n \oc A_i} \otimes id}"{description}, curve={height=-18pt}, dashed, from=11-7, to=6-5]
    \arrow["{\oc (e_{A_1, \dots, A_n}) \otimes id}"{description}, dashed, from=11-7, to=9-3]
    \arrow["{\oc (\rho^{-1}) \otimes id}"', from=11-7, to=9-7]
  \end{tikzcd}\]
  (1) commutes by naturality of $\lambda$, (2) commutes by Lemma~\ref{lem:comonoidgen}, (3) commutes by coherence, (4) commutes by definition of $e\_$, (5) commutes by naturality of $\sigma$ and functoriality of $\otimes$, (6) commutes because $\varepsilon$ is a monoidal natural transformation, (7) commutes by Lemma~\ref{lem:generalizedaxiom} and functoriality of $\otimes$, and (8) and (9) commute by naturality of $e$ and functoriality of $\otimes$.
  \end{proof}

\begin{lemma}[Weakening]\label{lem:weakeningcat}
	Let $t$ be a proof-term such that $\Upsilon;\Gamma \vdash t:A$, and $x$ a variable such that $x \notin \Upsilon$ and $x \notin \Gamma$. Then, for every proposition $B$, $\Upsilon, x^B; \Gamma \vdash t:A$ and the following diagram commutes.
\[\begin{tikzcd}[ampersand replacement=\&,cramped]
	{\interpretation{\oc\Upsilon} \otimes \oc \interpretation B \otimes \interpretation \Gamma} \&\&\&\& {\interpretation A} \\
	\\
	{\interpretation{\oc\Upsilon} \otimes I \otimes \interpretation \Gamma} \&\&\&\& {\interpretation{\oc\Upsilon} \otimes \interpretation \Gamma}
	\arrow["{\interpretation{\Upsilon, x^B;\Gamma \vdash t:A}}", from=1-1, to=1-5]
	\arrow["{id \otimes e_B \otimes id}"', from=1-1, to=3-1]
	\arrow["{\rho \otimes id}"', from=3-1, to=3-5]
	\arrow["{\interpretation{\Upsilon;\Gamma \vdash t:A}}"', from=3-5, to=1-5]
\end{tikzcd}\]
\end{lemma}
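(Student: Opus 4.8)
The plan is to prove both conclusions by induction on the structure of the proof-term $t$ (equivalently, on the last rule of the derivation of $\Upsilon;\Gamma \vdash t:A$). The typing claim $\Upsilon, x^B;\Gamma \vdash t:A$ is immediate from the admissibility of weakening in the intuitionistic context, already noted after Figure~\ref{fig:figuretypingrules}; the substance is the commutation of the diagram, that is, the equation $\interpretation{\Upsilon, x^B;\Gamma \vdash t:A} = \interpretation{\Upsilon;\Gamma \vdash t:A} \circ (\rho \otimes id) \circ (id \otimes e_B \otimes id)$, which says that the weakened interpretation factors through discarding the fresh factor $\oc\interpretation{B}$.

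Before starting the induction I would record how the generalised structure maps of Definition~\ref{def:generalizededeepsilon} grow when the banged context is extended by $x^B$. Directly from that definition one reads off $e_{\Upsilon,x^B} = \lambda_I \circ (e_\Upsilon \otimes e_B)$, and $d_{\Upsilon,x^B}$ is built from $d_\Upsilon$ and $d_B$ followed by the symmetry; consequently $\overline{d}_{\Upsilon,x^B,\Gamma,\Delta}$ decomposes in terms of $\overline{d}_{\Upsilon,\Gamma,\Delta}$ together with $d_B$ acting on the new factor. The single algebraic fact that drives every interesting case is the counit law of the commutative comonoid $(\interpretation{\oc\Upsilon}, d_\Upsilon, e_\Upsilon)$ (Lemma~\ref{lem:comonoidgen}), used on the fresh factor in the form $(e_B \otimes e_B) \circ d_B = \lambda_I^{-1} \circ e_B$: duplicating $\oc\interpretation{B}$ and then discarding both copies is the same as discarding it once.

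The base cases are direct computations. For lin-ax and ax the diagram reduces to the shape of $e_{\Upsilon,x^B}$ and the counit law; for $\one_i(a)$ and $\top_i$ it reduces again to $e_{\Upsilon,x^B} = \lambda_I \circ (e_\Upsilon \otimes e_B)$. For the unary constructors that do not duplicate the banged context — $\mathrm{prod}(a)$, $\oplus_{i1}$, $\oplus_{i2}$, $\zero_e$ — the result follows from the induction hypothesis together with naturality of the maps involved (the scalar map $\widehat{\mono a}$ by Lemma~\ref{lem:hatanat}, and $i_1$, $i_2$, the zero map). The rule $\multimap_i$ is handled by the induction hypothesis and naturality of the currying transformation $\eta$, pushing $id \otimes e_B \otimes id$ through $[\interpretation{A} \to -]$.

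The heart of the argument is the multiplicative group of rules — $\one_e$, $\multimap_e$, $\otimes_i$, $\otimes_e$, $\with_{e1}$, $\with_{e2}$, $\oplus_e$, and $\oc_e$ — where $\overline{d}_{\Upsilon,\Gamma,\Delta}$ duplicates the banged context. There the fresh factor $\oc\interpretation{B}$ is duplicated by $d_B$ and sent into both premises; applying the induction hypothesis to each premise replaces each copy by a discard $e_B$, and the counit identity above collapses the two discards into the single $e_B$ of the conclusion, after which the composite matches $\interpretation{\Upsilon;\Gamma \vdash t:A}$. The sum rule is the additive analogue: the context is duplicated by $\Delta$ rather than by $\overline{d}$, so one uses naturality of $\Delta$ and the induction hypothesis on both premises. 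The genuinely separate case is $\oc_i$, whose interpretation passes through $\delta_\Upsilon$ and $\oc(-)$: discarding $\oc\interpretation{B}$ must be commuted past these maps, which is exactly the statement that the generalised $e$ is a coalgebra morphism into $(I,m_I)$ (Lemma~\ref{lem:generalizedaxiom}), so that $e_B$ slides through $\delta$ and under $\oc$. I expect this $\oc_i$ case, together with the bookkeeping of the symmetries hidden inside $\overline{d}_{\Upsilon,x^B,\Gamma,\Delta}$ across the multiplicative cases, to be the main obstacle; everything else is naturality combined with the comonoid counit law.
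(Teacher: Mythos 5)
Your proposal is correct and follows exactly the approach the paper takes: the paper's proof of this lemma is simply ``Straightforward by induction on $t$,'' and your induction on $t$ — using the comonoid counit law from Lemma~\ref{lem:comonoidgen} to collapse the duplicated discard $(e_B \otimes e_B) \circ d_B$ in the $\overline{d}$-cases, naturality of $\Delta$, $\eta$, $\sigma$, and $\widehat{\mono{a}}$ elsewhere, and Lemma~\ref{lem:generalizedaxiom} to slide $e_B$ past $\delta$ and $\oc$ in the $\oc_i$ case — supplies precisely the details the paper leaves implicit.
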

\begin{proof}
	Straightforward by induction on $t$.
\end{proof}

\subsection{Soundness}\label{proof:soundness}

\nonlinearsubstitution*
\begin{proof}
	To avoid cumbersome notation, let $\alpha = \sigma\circ(\interpretation{\bang u} \otimes id)\circ(\rho^{-1} \otimes id)\circ d_\Upsilon$.

	By definition, 
	\[
		\interpretation{\bang u} = 
		\interpretation{\bang\Upsilon} \otimes I
		\xrightarrow{\rho}
		\interpretation{\bang\Upsilon}
		\xrightarrow{\delta_\Upsilon}
		\oc \interpretation{\bang\Upsilon}
		\xrightarrow{\oc (\rho^{-1})}
		\oc (\interpretation{\bang\Upsilon} \otimes I)
		\xrightarrow{\oc \interpretation{u}}
		\oc \interpretation{B}
	\]
  By induction on $t$:
  \begin{itemize}
    \item If $t = x$, then $B = A$ and $\Gamma = \varnothing$: the commuting diagram is shown in Figure~\ref{fig:lem:nonlinearsubstitutionx} (Appendix~\ref{app:diagramasgrandes}).
    
	(1) commutes by definition of $\alpha$, (2) commutes by definition of $\interpretation{\oc u}$, (3) commutes by naturality of $\sigma$, (4) commutes by naturality f $\rho^{-1}$, (5) commutes because $\oc$ is a monoidal functor, (6) commutes by naturality of $\sigma$, (7) commutes by Lemma~\ref{lem:coalgebragen}, (8) commutes because $\varepsilon$ is a monoidal natural transformation and by functoriality of $\otimes$, (9) commutes by naturality of $\sigma$, (10) commutes because $\varepsilon$ is a monoidal natural transformation, (11) commutes trivially, (12) commutes by naturality of $\sigma$, (13) commutes by naturality of $\rho$, (14) commutes by naturality of $\varepsilon$, (15) commutes by naturality of $\rho$, and (16) commutes by functoriality of $\otimes$.

	The following diagram shows the commutation of (17):
\[\begin{tikzcd}[ampersand replacement=\&,cramped,column sep=tiny]
	{\interpretation{\oc \Upsilon} \otimes I} \&\&\&\&\&\&\&\& {\interpretation{A}} \\
	\\
	\&\& {\interpretation{\oc \Upsilon} \otimes I \otimes I} \&\& {\interpretation{\oc \Upsilon} \otimes I\otimes I} \&\& {\interpretation{A} \otimes I} \\
	\&\&\&\&\&\&\&\& {I \otimes \interpretation{A}} \\
	\\
	{\interpretation{\oc \Upsilon} \otimes \interpretation{\oc \Upsilon} \otimes I} \&\& {\interpretation{\oc \Upsilon} \otimes I \otimes \interpretation{\oc \Upsilon}} \&\& {\interpretation{A} \otimes \interpretation{\oc \Upsilon}} \\
	\&\&\&\&\& {\interpretation{A} \otimes I \otimes I} \&\&\& {I \otimes \interpretation{A} \otimes I} \\
	\\
	{\interpretation{\oc \Upsilon} \otimes I \otimes \interpretation{\oc \Upsilon} \otimes I} \&\&\&\& {\interpretation{A} \otimes \interpretation{\oc \Upsilon} \otimes I} \&\&\&\& {\interpretation{\oc \Upsilon} \otimes \interpretation{A} \otimes I}
	\arrow[""{name=0, anchor=center, inner sep=0}, "{\interpretation{u}}", from=1-1, to=1-9]
	\arrow[""{name=1, anchor=center, inner sep=0}, "{d_\Upsilon \otimes id}"', from=1-1, to=6-1]
	\arrow["{\rho \otimes id}"{description}, dashed, from=3-3, to=1-1]
	\arrow["{id \otimes \sigma}"{description}, dashed, from=3-3, to=3-5]
	\arrow[""{name=2, anchor=center, inner sep=0}, "\rho"{description}, curve={height=18pt}, dashed, from=3-5, to=1-1]
	\arrow["{\interpretation{u} \otimes id}"{description}, dashed, from=3-5, to=3-7]
	\arrow["\rho"{description}, dashed, from=3-7, to=1-9]
	\arrow["{(26)}"{description}, draw=none, from=3-7, to=7-9]
	\arrow["\lambda"', from=4-9, to=1-9]
	\arrow["{id \otimes e_\Upsilon \otimes id}"{description}, dashed, from=6-1, to=3-3]
	\arrow["{id \otimes \sigma}"{description}, dashed, from=6-1, to=6-3]
	\arrow["{\rho^{-1} \otimes id}"', from=6-1, to=9-1]
	\arrow["{(21)}"{description}, draw=none, from=6-3, to=3-3]
	\arrow["{id \otimes e_\Upsilon}"{description}, dashed, from=6-3, to=3-5]
	\arrow["{\interpretation{u} \otimes id}"{description}, dashed, from=6-3, to=6-5]
	\arrow["{(22)}"{description}, draw=none, from=6-5, to=3-5]
	\arrow["{id \otimes e_\Upsilon}"{description}, dashed, from=6-5, to=3-7]
	\arrow[""{name=3, anchor=center, inner sep=0}, "\rho"{description}, curve={height=12pt}, dashed, from=7-6, to=3-7]
	\arrow["{\sigma \otimes id}"{description}, dashed, from=7-6, to=7-9]
	\arrow["{(27)}"{description}, draw=none, from=7-6, to=9-9]
	\arrow["\rho"', from=7-9, to=4-9]
	\arrow[""{name=4, anchor=center, inner sep=0}, "\rho"{description}, dashed, from=9-1, to=6-3]
	\arrow[""{name=5, anchor=center, inner sep=0}, "{\interpretation{u} \otimes id \otimes id}"', from=9-1, to=9-5]
	\arrow["\rho"{description}, dashed, from=9-5, to=6-5]
	\arrow["{id \otimes e_\Upsilon \otimes id}"{description}, dashed, from=9-5, to=7-6]
	\arrow["{\sigma \otimes id}"', from=9-5, to=9-9]
	\arrow["{e_\Upsilon \otimes id \otimes id}"', from=9-9, to=7-9]
	\arrow["{(19)}"{description}, draw=none, from=3-3, to=2]
	\arrow["{(18)}"{description}, draw=none, from=3-3, to=1]
	\arrow["{(20)}"{description}, draw=none, from=3-5, to=0]
	\arrow["{(23)}"{description}, draw=none, from=6-1, to=4]
	\arrow["{(24)}"{description}, draw=none, from=6-3, to=5]
	\arrow["{(25)}"{description}, draw=none, from=6-5, to=3]
\end{tikzcd}\]
(18) commutes by Lemma~\ref{lem:comonoidgen}, (19)  commutes by coherence, (20) commutes by naturality of $\rho$, (21) commutes by naturality of $\sigma$, (22) commutes by functoriality of $\otimes$, (23) commutes by coherence, (24) and (25) commute by naturality of $\rho$, (26) commutes by coherence and (27) commutes by naturality of $\sigma$.

    \item If $t = y \neq x$, then there are two cases:
    \begin{itemize}
      \item If $\Gamma = y^A$, the diagram commutes by Lemma~\ref{lem:eprop} and functoriality of $\otimes$.
\[\begin{tikzcd}[ampersand replacement=\&,cramped]
	{\interpretation{\oc\Upsilon} \otimes \interpretation{A}} \&\& {I\otimes \interpretation{A}} \&\& {\interpretation{A}} \\
	\\
	{\interpretation{\Upsilon} \otimes \oc \interpretation{B} \otimes \interpretation{A}}
	\arrow["{e_\Upsilon \otimes id_A}", from=1-1, to=1-3]
	\arrow["{\interpretation{y}}"{description}, dotted, from=1-1, to=1-5, rounded corners, to path={ -- ([yshift=2ex]\tikztostart.north) -| ([yshift=2ex,xshift=100pt]\tikztostart.north) \tikztonodes -| (\tikztotarget.north)}]
	\arrow["{\alpha \otimes id_A}"', from=1-1, to=3-1]
	\arrow["\lambda", from=1-3, to=1-5]
	\arrow["{e_{\Upsilon, B} \otimes id_A}"', curve={height=30pt}, from=3-1, to=1-3]
	\arrow["{\interpretation{y}}"{description}, dotted, from=3-1, to=1-5, rounded corners, to path={ -- ([yshift=-2ex]\tikztostart.south) -| ([yshift=-2ex,xshift=100pt]\tikztostart.south) \tikztonodes -| (\tikztotarget.south)}]
\end{tikzcd}\]

      \item If $\Gamma = \varnothing$ and $\Upsilon = \Upsilon', y^A$: 
	  
	Let 
	\[
	\alpha' = \interpretation{\bang\Upsilon'} 
    \xrightarrow{d_{\Upsilon'}}
    \interpretation{\bang\Upsilon'} \otimes \interpretation{\bang\Upsilon'}
	\xrightarrow{\rho^{-1} \otimes id}
	\interpretation{\bang\Upsilon'} \otimes I \otimes \interpretation{\bang\Upsilon'}
	\xrightarrow{\interpretation{\bang u} \otimes id}
	\oc\interpretation{B} \otimes \interpretation{\bang\Upsilon'}
	\xrightarrow{\sigma}
	\interpretation{\bang\Upsilon'} \otimes \oc\interpretation{B}
  \]
  then it is easy to see that $(id \otimes \sigma_{\oc A,\oc B}) \circ \alpha = \alpha' \otimes id_{\oc A}$.
\[\begin{tikzcd}[ampersand replacement=\&,cramped,column sep=4pt,row sep=small]
	{\interpretation{\oc\Upsilon'} \otimes \oc \interpretation{A} \otimes I} \&\& {\interpretation{\oc\Upsilon'} \otimes \oc \interpretation{A}} \&\&\& {I \otimes \interpretation{A}} \\
	\& {(1)} \&\& {(4)} \\
	\&\&\& {\interpretation{\oc\Upsilon'} \otimes \interpretation{A}} \& {(5)} \\
	\&\& {(3)} \\
	\&\&\&\&\&\& {(7)} \& {\interpretation{A}} \\
	{\interpretation{\oc\Upsilon'} \otimes \oc \interpretation{A} \otimes \oc \interpretation{B} \otimes I} \&\&\& {\interpretation{\oc\Upsilon'} \otimes \oc \interpretation{B} \otimes \interpretation{A}} \\
	\& {(2)} \&\&\& {(6)} \\
	\\
	{\interpretation{\oc\Upsilon'} \otimes \oc \interpretation{A} \otimes \oc \interpretation{B}} \&\& {\interpretation{\oc\Upsilon'} \otimes \oc \interpretation{B} \otimes \oc \interpretation{A}} \&\&\& {I \otimes \interpretation{A}}
	\arrow["\rho", from=1-1, to=1-3]
	\arrow["{\interpretation{y}}"{description}, dotted, from=1-1, to=5-8, rounded corners, to path={-- ([yshift=3ex]\tikztostart.north) -- ([yshift=3ex,xshift=330pt]\tikztostart.north) \tikztonodes -| (\tikztotarget.north)}]
	\arrow["{\alpha \otimes id_I}"', from=1-1, to=6-1]
	\arrow["{e_{\Upsilon'} \otimes \varepsilon_A}", from=1-3, to=1-6]
	\arrow["{id \otimes \varepsilon_A}"{description}, dashed, from=1-3, to=3-4]
	\arrow["\alpha"{description}, dashed, from=1-3, to=9-1]
	\arrow["{\alpha' \otimes id_{\oc A}}"{description}, curve={height=30pt}, dashed, from=1-3, to=9-3]
	\arrow["\lambda", from=1-6, to=5-8]
	\arrow[equals, dashed, from=1-6, to=9-6]
	\arrow["{e_{\Upsilon'} \otimes id}"{description}, dashed, from=3-4, to=1-6]
	\arrow["{\alpha' \otimes id_A}"{description}, dashed, from=3-4, to=6-4]
	\arrow["{\interpretation{y}}"{description}, dotted, from=6-1, to=5-8, rounded corners, to path={-- ([xshift=-3ex]\tikztostart.west) |- ([yshift=-95pt,xshift=-400pt]\tikztotarget.south) -- ([yshift=-95pt]\tikztotarget.south) \tikztonodes -- (\tikztotarget.south)}]
	\arrow["\rho"', from=6-1, to=9-1]
	\arrow["{e_{\Upsilon',B} \otimes id_A}"{description}, curve={height=24pt}, dashed, from=6-4, to=1-6]
	\arrow["{id \otimes \sigma}"', from=9-1, to=9-3]
	\arrow["{id \otimes \varepsilon_A}"{description}, dashed, from=9-3, to=6-4]
	\arrow["{e_{\Upsilon',B} \otimes \varepsilon_A}"', from=9-3, to=9-6]
	\arrow["\lambda"', from=9-6, to=5-8]
\end{tikzcd}\]

(1) commutes by naturality of $\rho$, (2) commutes by defintion of $\alpha'$, (3) and (4) commute by functoriality of $\otimes$, (5) commutes by Lemma~\ref{lem:eprop}, (6) commutes by functoriality of $\otimes$ and (7) commutes trivially.

    \end{itemize}
    \item If $t = v \plus w$, then $\Upsilon,x^B;\Gamma \vdash v:A$ and $\Upsilon,x^B;\Gamma \vdash w:A$.
    
    By the induction hypothesis,
    \begin{align*}
      \interpretation{(u/x)v} &= \interpretation{!\Upsilon} \otimes \interpretation{\Gamma} \xrightarrow{\alpha \otimes id_\Gamma} \interpretation{\Upsilon} \otimes \oc\interpretation{B} \otimes \interpretation{\Gamma} \xrightarrow{\interpretation{v}} \interpretation{A} \\
      \interpretation{(u/x)v} &= \interpretation{!\Upsilon} \otimes \interpretation{\Gamma} \xrightarrow{\alpha \otimes id_\Gamma} \interpretation{\Upsilon} \otimes \oc\interpretation{B} \otimes \interpretation{\Gamma} \xrightarrow{\interpretation{w}} \interpretation{A}
    \end{align*}
\[\begin{tikzcd}[ampersand replacement=\&,cramped]
	{\interpretation{\oc\Upsilon} \otimes \interpretation{\Gamma}} \&\& \begin{array}{c} (\interpretation{\oc\Upsilon} \otimes \interpretation{\Gamma})\\ \oplus (\interpretation{\oc\Upsilon} \otimes \interpretation{\Gamma}) \end{array} \&\&\& {\interpretation{A} \oplus \interpretation{A}} \&\& {\interpretation{A}} \\
	\& {(1)} \&\& {(2)} \\
	{\interpretation{\oc\Upsilon} \otimes \oc \interpretation{B} \otimes \interpretation{\Gamma}} \&\& \begin{array}{c} (\interpretation{\oc\Upsilon} \otimes \oc \interpretation{B} \otimes \interpretation{\Gamma})\\ \oplus (\interpretation{\oc\Upsilon} \otimes \oc \interpretation{B} \otimes \interpretation{\Gamma}) \end{array}
	\arrow["\Delta", from=1-1, to=1-3]
	\arrow["{\interpretation{(u/x)v \plus (u/x)w}}"{description}, dotted, from=1-1, to=1-8, rounded corners, to path={ -- ([yshift=3ex]\tikztostart.north) -| ([yshift=3ex,xshift=200pt]\tikztostart.north) \tikztonodes -| (\tikztotarget.north)}]
	\arrow["{\alpha \otimes id_\Gamma}"', from=1-1, to=3-1]
	\arrow["{\interpretation{(u/x)v} \oplus \interpretation{(u/x)w}}", from=1-3, to=1-6]
	\arrow["{(\alpha \otimes id_\Gamma) \oplus (\alpha \otimes id_\Gamma)}"{description}, dashed, from=1-3, to=3-3]
	\arrow["\nabla", from=1-6, to=1-8]
	\arrow["{\interpretation{v \plus w}}"{description}, dotted, from=3-1, to=1-8, rounded corners, to path={ -- ([yshift=-3ex]\tikztostart.south) -| ([yshift=-3ex,xshift=200pt]\tikztostart.south) \tikztonodes -| (\tikztotarget.south)}]
	\arrow["\Delta", from=3-1, to=3-3]
	\arrow["{\interpretation{v} \oplus \interpretation{w}}"', curve={height=24pt}, from=3-3, to=1-6]
\end{tikzcd}\]

(1) commutes by naturality of $\Delta$, and (2) commutes by the induction hypothesis and universal property of the biproduct.

\item If $t = a \dotprod v$, then $\Upsilon, x^B; \Gamma \vdash v:A$.

By the induction hypothesis,
\[
	\interpretation{(u/x)v} = \interpretation{!\Upsilon} \otimes \interpretation{\Gamma} \xrightarrow{\alpha \otimes id_\Gamma} \interpretation{\Upsilon} \otimes \oc\interpretation{B} \otimes \interpretation{\Gamma} \xrightarrow{\interpretation{v}} \interpretation{A}
\]
\[\begin{tikzcd}[ampersand replacement=\&,cramped]
	{\interpretation{\oc\Upsilon} \otimes \interpretation{\Gamma}} \&\& {\interpretation{A}} \&\& {\interpretation{A}} \\
	\\
	{\interpretation{\oc\Upsilon} \otimes \oc\interpretation{B} \otimes \interpretation{\Gamma}}
	\arrow["{\interpretation{(u/x)v}}", from=1-1, to=1-3]
	\arrow["{\interpretation{a \dotprod (u/x)v}}"{description}, dotted, from=1-1, to=1-5, rounded corners, to path={ -- ([yshift=2ex]\tikztostart.north) -| ([yshift=2ex,xshift=90pt]\tikztostart.north) \tikztonodes -| (\tikztotarget.north)}]
	\arrow["{\alpha \otimes id_\Gamma}"', from=1-1, to=3-1]
	\arrow["{\widehat{\mono a}}", from=1-3, to=1-5]
	\arrow["{\interpretation v}"', curve={height=30pt}, from=3-1, to=1-3]
	\arrow["{\interpretation{a \dotprod v}}"{description}, dotted, from=3-1, to=1-5, rounded corners, to path={ -- ([yshift=-2ex]\tikztostart.south) -| ([yshift=-2ex,xshift=90pt]\tikztostart.south) \tikztonodes -| (\tikztotarget.south)}]
\end{tikzcd}\]

The diagram commutes by the induction hypothesis.
    \item If $t = a.\star$, then $A = \one$ and $\Gamma = \varnothing$.
\[\begin{tikzcd}[ampersand replacement=\&,cramped]
	{\interpretation{\oc\Upsilon} \otimes I} \&\& {\interpretation{\oc\Upsilon}} \&\& I \&\& I \\
	\& {(1)} \&\& {(2)} \\
	{\interpretation{\oc\Upsilon} \otimes \oc \interpretation{B} \otimes I} \&\& {\interpretation{\oc\Upsilon} \otimes \oc \interpretation{B}}
	\arrow["\rho", from=1-1, to=1-3]
	\arrow["{\interpretation{a.\star}}"{description}, dotted, from=1-1, to=1-7, rounded corners, to path={-- ([yshift=2ex]\tikztostart.north) -- ([yshift=2ex]\tikztotarget.north) \tikztonodes -- (\tikztotarget)}]
	\arrow["{\alpha \otimes id_I}"', from=1-1, to=3-1]
	\arrow["{e_\Upsilon}", from=1-3, to=1-5]
	\arrow["\alpha"{description}, dashed, from=1-3, to=3-3]
	\arrow["{\mono{a}}", from=1-5, to=1-7]
	\arrow["{\interpretation{a.\star}}"{description}, dotted, from=3-1, to=1-7, rounded corners, to path={-- ([yshift=-2ex]\tikztostart.south) -- ([yshift=-2ex,xshift=260pt]\tikztostart.south) \tikztonodes -| (\tikztotarget)}]
	\arrow["\rho"', from=3-1, to=3-3]
	\arrow["{e_{\Upsilon,B}}"', curve={height=30pt}, from=3-3, to=1-5]
\end{tikzcd}\]

    (1) commutes by naturality of $\rho$ and (2) commutes by Lemma~\ref{lem:eprop}.

    \item If $t = \elimone(v, w)$, then $\Gamma = \Gamma_1, \Gamma_2$, $\Upsilon, x^B;\Gamma_1 \vdash v:\one$ and $\Upsilon, x^B;\Gamma_2 \vdash w:A$.
    
    By the induction hypothesis, 
    \begin{align*}
      \interpretation{(u/x)v} &= \interpretation{!\Upsilon} \otimes \interpretation{\Gamma_1} \xrightarrow{\alpha \otimes id_{\Gamma_1}} \interpretation{\Upsilon} \otimes \oc\interpretation{B} \otimes \interpretation{\Gamma_1} \xrightarrow{\interpretation{v}} I \\
      \interpretation{(u/x)w} &= \interpretation{!\Upsilon} \otimes \interpretation{\Gamma_2} \xrightarrow{\alpha \otimes id_{\Gamma_2}} \interpretation{\Upsilon} \otimes \oc\interpretation{B} \otimes \interpretation{\Gamma_2} \xrightarrow{\interpretation{w}} \interpretation{A}
    \end{align*}
\[\begin{tikzcd}[ampersand replacement=\&,cramped,column sep=small]
	{\interpretation{\oc\Upsilon} \otimes \interpretation{\Gamma_1} \otimes \interpretation{\Gamma_2}} \&\& \begin{array}{c} \interpretation{\oc\Upsilon} \otimes \interpretation{\Gamma_1}\\\otimes \interpretation{\oc\Upsilon} \otimes \interpretation{\Gamma_2} \end{array} \&\&\& {I \otimes \interpretation A} \&\& {\interpretation A} \\
	\& {(1)} \&\& {(2)} \\
	{\interpretation{\oc\Upsilon} \otimes \oc \interpretation{B} \otimes \interpretation{\Gamma_1} \otimes \interpretation{\Gamma_2}} \&\& \begin{array}{c} \interpretation{\oc\Upsilon} \otimes \oc \interpretation{B} \otimes \interpretation{\Gamma_1}\\\otimes \interpretation{\oc\Upsilon} \otimes \oc \interpretation{B} \otimes \interpretation{\Gamma_2} \end{array}
	\arrow["{\overline d_{\Upsilon,\Gamma_1,\Gamma_2}}", from=1-1, to=1-3]
	\arrow["{\interpretation{\elimone((u/x)v, (u/x)w)}}"{description}, dotted, from=1-1, to=1-8, rounded corners, to path={ -- ([yshift=3ex]\tikztostart.north) -| ([yshift=3ex,xshift=170pt]\tikztostart.north) \tikztonodes -| (\tikztotarget.north)}]
	\arrow["{\alpha \otimes id_{\Gamma_1} \otimes id_{\Gamma_2}}"', from=1-1, to=3-1]
	\arrow["{\interpretation{(u/x)v} \otimes \interpretation{(u/x)w}}", from=1-3, to=1-6]
	\arrow["{\alpha \otimes id_{\Gamma_1} \otimes \alpha \otimes id_{\Gamma_2}}"{description}, dashed, from=1-3, to=3-3]
	\arrow["\lambda", from=1-6, to=1-8]
	\arrow["{\interpretation{\elimone(v, w)}}"{description}, dotted, from=3-1, to=1-8, rounded corners, to path={ -- ([yshift=-3ex]\tikztostart.south) -| ([yshift=-3ex,xshift=170pt]\tikztostart.south) \tikztonodes -| (\tikztotarget.south)}]
	\arrow["{\overline d_{(\Upsilon,B),\Gamma_1,\Gamma_2}}"', from=3-1, to=3-3]
	\arrow["{\interpretation v \otimes \interpretation w}"', curve={height=30pt}, from=3-3, to=1-6]
\end{tikzcd}\]

(1) commutes by Corollary~\ref{coro:substitutionaux}, and (2) commutes by the induction hypothesis and functoriality of $\otimes$.
    \item If $t = \lambda y^C.v$, then $A = C \multimap D$ and $\Upsilon, x^B; \Gamma, y^C \vdash v:D$.
    
    By the induction hypothesis, 
    \[
      \interpretation{(u/x)v} = \interpretation{\oc \Upsilon} \otimes \interpretation{\Gamma} \otimes \interpretation{C}
	  \xrightarrow{\alpha \otimes id_\Gamma \otimes id_C} 
	  \interpretation{\oc \Upsilon} \otimes \oc \interpretation{B} \otimes \interpretation{\Gamma} \otimes \interpretation{C} 
	  \xrightarrow{\interpretation{v}} \interpretation{D}
    \]
\[\begin{tikzcd}[ampersand replacement=\&,cramped]
	{\interpretation{\oc\Upsilon} \otimes \interpretation{\Gamma}} \&\& {[\interpretation C \to \interpretation{\oc\Upsilon} \otimes \interpretation{\Gamma} \otimes \interpretation{C}]} \&\& {[\interpretation C \to \interpretation D]} \\
	\& {(1)} \&\& {(2)} \\
	{\interpretation{\oc\Upsilon} \otimes \oc \interpretation{B} \otimes \interpretation{\Gamma}} \&\& {[\interpretation C \to \interpretation{\oc\Upsilon} \otimes \oc \interpretation{B} \otimes \interpretation{\Gamma} \otimes \interpretation{C}]}
	\arrow["{\eta_{\interpretation C}}", from=1-1, to=1-3]
	\arrow["{\interpretation{\lambda y^C.(u/x)v}}"{description}, dotted, from=1-1, to=1-5, rounded corners, to path={ -- ([yshift=2ex]\tikztostart.north) -| ([yshift=2ex,xshift=170pt]\tikztostart.north) \tikztonodes -| (\tikztotarget.north)}]
	\arrow["{\alpha \otimes id_\Gamma}"', from=1-1, to=3-1]
	\arrow["{[\interpretation C \to \interpretation{(u/x)v}]}", from=1-3, to=1-5]
	\arrow["{ [\interpretation C \to \alpha \otimes id_\Gamma \otimes id_C]}"{description}, dashed, from=1-3, to=3-3]
	\arrow["{\interpretation{\lambda y^C.v}}"{description}, dotted, from=3-1, to=1-5, rounded corners, to path={ -- ([yshift=-2ex]\tikztostart.south) -| ([yshift=-2ex,xshift=170pt]\tikztostart.south) \tikztonodes -| (\tikztotarget.south)}]
	\arrow["{\eta_{\interpretation C}}"', from=3-1, to=3-3]
	\arrow["{[\interpretation C \to \interpretation{v}]}"', curve={height=30pt}, from=3-3, to=1-5]
\end{tikzcd}\]

(1) commutes by naturality of $\eta$, and (2) commutes by the induction hypothesis and functoriality of $[\interpretation C \to -]$.
    \item If $t = v~w$, then $\Gamma = \Gamma_1, \Gamma_2$, $\Upsilon, x^B; \Gamma_1 \vdash w:C$ and $\Upsilon, x^B; \Gamma_2 \vdash v:C \multimap A$.
    
    By the induction hypothesis,
    \begin{align*}
	\interpretation{(u/x)w} &= \interpretation{!\Upsilon} \otimes \interpretation{\Gamma_1} \xrightarrow{\alpha \otimes id_{\Gamma_1}} \interpretation{\Upsilon} \otimes \oc\interpretation{B} \otimes \interpretation{\Gamma_1} \xrightarrow{\interpretation{w}} \interpretation{C}\\
    \interpretation{(u/x)v} &= \interpretation{!\Upsilon} \otimes \interpretation{\Gamma_2} \xrightarrow{\alpha \otimes id_{\Gamma_2}} \interpretation{\Upsilon} \otimes \oc\interpretation{B} \otimes \interpretation{\Gamma_2} \xrightarrow{\interpretation{v}} [\interpretation C \to \interpretation A]
    \end{align*}
\[\begin{tikzcd}[ampersand replacement=\&,cramped,column sep=small]
	{\interpretation{\oc\Upsilon} \otimes \interpretation{\Gamma_1} \otimes \interpretation{\Gamma_2}} \&\& \begin{array}{c} \interpretation{\oc\Upsilon} \otimes \interpretation{\Gamma_1} \\\otimes \interpretation{\oc\Upsilon} \otimes \interpretation{\Gamma_2} \end{array} \&\& {\interpretation C \otimes [\interpretation C \to \interpretation A]} \&\& {\interpretation A} \\
	\& {(1)} \&\& {(2)} \\
	{\interpretation{\oc\Upsilon} \otimes \oc \interpretation{B} \otimes \interpretation{\Gamma_1} \otimes \interpretation{\Gamma_2}} \&\& \begin{array}{c} \interpretation{\oc\Upsilon} \otimes \oc \interpretation{B} \otimes \interpretation{\Gamma_1} \\\otimes \interpretation{\oc\Upsilon} \otimes \oc \interpretation{B} \otimes \interpretation{\Gamma_2} \end{array}
	\arrow["{\overline d_{\Upsilon,\Gamma_1,\Gamma_2}}", from=1-1, to=1-3]
	\arrow["{\interpretation{(u/x)v~(u/x)w}}"{description}, dotted, from=1-1, to=1-7, rounded corners, to path={ -- ([yshift=3ex]\tikztostart.north) -| ([yshift=3ex,xshift=200pt]\tikztostart.north) \tikztonodes -| (\tikztotarget.north)}]
	\arrow["{\alpha \otimes id_{\Gamma_1} \otimes id_{\Gamma_2}}"', from=1-1, to=3-1]
	\arrow["{\interpretation{(u/x)w} \otimes \interpretation{(u/x)v}}", from=1-3, to=1-5]
	\arrow["{\alpha \otimes id_{\Gamma_1} \otimes \alpha \otimes id_{\Gamma_2}}"{description}, dashed, from=1-3, to=3-3]
	\arrow["{\varepsilon_{\interpretation A} \circ \sigma}", from=1-5, to=1-7]
	\arrow["{\interpretation{v~w}}"{description}, dotted, from=3-1, to=1-7, rounded corners, to path={ -- ([yshift=-3ex]\tikztostart.south) -| ([yshift=-3ex,xshift=200pt]\tikztostart.south) \tikztonodes -| (\tikztotarget.south)}]
	\arrow["{\overline d_{(\Upsilon,B),\Gamma_1,\Gamma_2}}"', from=3-1, to=3-3]
	\arrow["{\interpretation{w} \otimes \interpretation{v}}"', curve={height=30pt}, from=3-3, to=1-5]
\end{tikzcd}\]

(1) commutes by Corollary~\ref{coro:substitutionaux}, and (2) commutes by the induction hypothesis and functoriality of $\otimes$.
    \item If $t = v \otimes w$, then $A = C \otimes D$, $\Gamma = \Gamma_1, \Gamma_2$, $\Upsilon, x^B; \Gamma_1 \vdash v:C$ and $\Upsilon, x^B; \Gamma_2 \vdash w:D$.
    
    By the induction hypothesis, 
    \begin{align*}
      \interpretation{(u/x)v} &= \interpretation{!\Upsilon} \otimes \interpretation{\Gamma_1} \xrightarrow{\alpha \otimes id_{\Gamma_1}} \interpretation{\Upsilon} \otimes \oc\interpretation{B} \otimes \interpretation{\Gamma_1} \xrightarrow{\interpretation{v}} \interpretation{C} \\
      \interpretation{(u/x)w} &= \interpretation{!\Upsilon} \otimes \interpretation{\Gamma_2} \xrightarrow{\alpha \otimes id_{\Gamma_2}} \interpretation{\Upsilon} \otimes \oc\interpretation{B} \otimes \interpretation{\Gamma_2} \xrightarrow{\interpretation{w}} \interpretation{D}
    \end{align*}

\[\begin{tikzcd}[ampersand replacement=\&,cramped]
	{\interpretation{\oc\Upsilon} \otimes \interpretation{\Gamma_1} \otimes \interpretation{\Gamma_2}} \&\& \begin{array}{c} \interpretation{\oc\Upsilon} \otimes \interpretation{\Gamma_1} \\\otimes \interpretation{\oc\Upsilon} \otimes \interpretation{\Gamma_2} \end{array} \&\&\& {\interpretation C \otimes \interpretation D} \\
	\& {(1)} \&\& {(2)} \\
	{\interpretation{\oc\Upsilon} \otimes \oc \interpretation{B} \otimes \interpretation{\Gamma_1} \otimes \interpretation{\Gamma_2}} \&\& \begin{array}{c} \interpretation{\oc\Upsilon} \otimes \oc \interpretation{B} \otimes \interpretation{\Gamma_1} \\\otimes \interpretation{\oc\Upsilon} \otimes \oc \interpretation{B} \otimes \interpretation{\Gamma_2} \end{array}
	\arrow["{\overline d_{\Upsilon,\Gamma_1,\Gamma_2}}", from=1-1, to=1-3]
	\arrow["{\interpretation{(u/x)v\otimes (u/x)w}}"{description}, dotted, from=1-1, to=1-6, rounded corners, to path={ -- ([yshift=3ex]\tikztostart.north) -| ([yshift=3ex,xshift=170pt]\tikztostart.north) \tikztonodes -| (\tikztotarget.north)}]
	\arrow["{\alpha \otimes id_{\Gamma_1} \otimes id_{\Gamma_2}}"', from=1-1, to=3-1]
	\arrow["{\interpretation{(u/x)v} \otimes \interpretation{(u/x)w}}", from=1-3, to=1-6]
	\arrow["{\alpha \otimes id_{\Gamma_1} \otimes \alpha \otimes id_{\Gamma_2}}"{description}, dashed, from=1-3, to=3-3]
	\arrow["{\interpretation{v \otimes w}}"{description}, dotted, from=3-1, to=1-6, rounded corners, to path={ -- ([yshift=-3ex]\tikztostart.south) -| ([yshift=-3ex,xshift=170pt]\tikztostart.south) \tikztonodes -| (\tikztotarget.south)}]
	\arrow["{\overline d_{(\Upsilon,B),\Gamma_1,\Gamma_2}}"', from=3-1, to=3-3]
	\arrow["{\interpretation v \otimes \interpretation w}"', curve={height=30pt}, from=3-3, to=1-6]
\end{tikzcd}\]
    
$(1)$ commutes by Corollary~\ref{coro:substitutionaux}, and $(2)$ commutes by the induction hypothesis and functoriality of $\otimes$.
    \item If $t = \elimtens(v, y^C z^D.w)$, then $\Gamma = \Gamma_1, \Gamma_2$, $\Upsilon, x^B; \Gamma_1 \vdash v:C \otimes D$ and $\Upsilon, x^B; \Gamma_2, y^C, z^D \vdash w:A$.
    
    By the induction hypothesis, 
    \begin{align*}
      \interpretation{(u/x)v} &= \interpretation{\oc\Upsilon} \otimes \interpretation{\Gamma_1} \xrightarrow{\alpha \otimes id_{\Gamma_1}} \interpretation{\oc\Upsilon} \otimes \oc \interpretation{B} \otimes \interpretation{\Gamma_1} \xrightarrow{\interpretation{v}} \interpretation{C} \otimes \interpretation{D} \\
      \interpretation{(u/x)w} &= \interpretation{\oc\Upsilon} \otimes \interpretation{\Gamma_2} \otimes \interpretation{C} \otimes \interpretation{D} \xrightarrow{\alpha \otimes id_{\Gamma_2} \otimes id_C \otimes id_D} \interpretation{\oc\Upsilon} \otimes \oc \interpretation{B} \otimes \interpretation{\Gamma_2} \otimes \interpretation{C} \otimes \interpretation{D} \xrightarrow{\interpretation{w}} \interpretation{A}
    \end{align*}

\[\resizebox{.85\textwidth}{!}{\begin{tikzcd}[ampersand replacement=\&,cramped, column sep=tiny]
	{\interpretation{\oc\Upsilon} \otimes \interpretation{\Gamma_1} \otimes \interpretation{\Gamma_2}} \&\& \begin{array}{c} \interpretation{\oc\Upsilon} \otimes \interpretation{\Gamma_1} \\\otimes \interpretation{\oc\Upsilon} \otimes \interpretation{\Gamma_2} \end{array} \&\& \begin{array}{c} \interpretation{C} \otimes \interpretation D \otimes\\\interpretation{\oc\Upsilon} \otimes \interpretation{\Gamma_2} \end{array} \&\& \begin{array}{c} \interpretation{\oc\Upsilon} \otimes \interpretation{\Gamma_2}\\ \otimes \interpretation{C} \otimes \interpretation D \end{array} \&\& {\interpretation A} \\
	\& {(1)} \&\& {(2)} \&\& {(3)} \&\& {(4)} \\
	{\interpretation{\oc\Upsilon} \otimes \oc \interpretation{B} \otimes \interpretation{\Gamma_1} \otimes \interpretation{\Gamma_2}} \&\& \begin{array}{c} \interpretation{\oc\Upsilon} \otimes \oc \interpretation{B} \otimes \interpretation{\Gamma_1} \\\otimes \interpretation{\oc\Upsilon} \otimes \oc \interpretation{B} \otimes \interpretation{\Gamma_2} \end{array} \&\& \begin{array}{c} \interpretation{C} \otimes \interpretation D \otimes \\\interpretation{\oc\Upsilon} \otimes \oc \interpretation B \otimes \interpretation{\Gamma_2} \end{array} \&\& \begin{array}{c} \interpretation{\oc\Upsilon} \otimes \oc \interpretation B \otimes \interpretation{\Gamma_2} \\\otimes \interpretation{C} \otimes \interpretation D \end{array}
	\arrow["{\overline d_{\Upsilon,\Gamma_1,\Gamma_2}}", from=1-1, to=1-3]
	\arrow["{\interpretation{\elimtens((u/x)v, y^C z^D.(u/x)w)}}"{description}, dotted, from=1-1, to=1-9, rounded corners, to path={ -- ([yshift=3ex]\tikztostart.north) -| ([yshift=3ex,xshift=230pt]\tikztostart.north) \tikztonodes -| (\tikztotarget.north)}]
	\arrow["{\alpha \otimes id_{\Gamma_1} \otimes id_{\Gamma_2}}"', from=1-1, to=3-1]
	\arrow["{\interpretation{(u/x)v} \otimes id}", from=1-3, to=1-5]
	\arrow["{\alpha \otimes id_{\Gamma_1} \otimes \alpha \otimes id_{\Gamma_2}}"{description}, dashed, from=1-3, to=3-3]
	\arrow["\sigma", from=1-5, to=1-7]
	\arrow["{id_C \otimes id_D \otimes \alpha \otimes id_{\Gamma_2}}"{description}, dashed, from=1-5, to=3-5]
	\arrow["{\interpretation{(u/x)w}}", from=1-7, to=1-9]
	\arrow["{\alpha \otimes id_{\Gamma_2} \otimes id_C \otimes id_D}"{description}, dashed, from=1-7, to=3-7]
	\arrow["{\interpretation{\elimtens(v, y^C z^D.w)}}"{description}, dotted, from=3-1, to=1-9, rounded corners, to path={ -- ([yshift=-3ex]\tikztostart.south) -| ([yshift=-3ex,xshift=230pt]\tikztostart.south) \tikztonodes -| (\tikztotarget.south)}]
	\arrow["{\overline d_{(\Upsilon,B),\Gamma_1,\Gamma_2}}"', from=3-1, to=3-3]
	\arrow["{\interpretation{v} \otimes id}"', from=3-3, to=3-5]
	\arrow["\sigma"', from=3-5, to=3-7]
	\arrow["{\interpretation{w}}"', curve={height=24pt}, from=3-7, to=1-9]
\end{tikzcd}}\]

(1) commutes by Corollary~\ref{coro:substitutionaux}, (2) commutes by the induction hypothesis and functoriality of $\otimes$, (3) commutes by naturality of $\sigma$ and (4) commutes by the induction hypothesis.
    \item If $t = \topintro$, then $A = \top$. The diagram commutes because $\mathbf 0$ is a terminal object.
    
\[\begin{tikzcd}[ampersand replacement=\&,cramped]
	{\interpretation{\oc\Upsilon} \otimes \interpretation{\Gamma}} \&\& {\mathbf 0} \\
	\\
	{\interpretation{\oc\Upsilon} \otimes \oc \interpretation B \otimes \interpretation{\Gamma}}
	\arrow["\oc", from=1-1, to=1-3]
	\arrow["{\interpretation{\langle\rangle}}"{description}, dotted, from=1-1, to=1-3, rounded corners, to path={ -- ([yshift=2ex]\tikztostart.north) -| ([yshift=2ex,xshift=45pt]\tikztostart.north) \tikztonodes -| (\tikztotarget.north)}]
	\arrow["{\alpha \otimes id_\Gamma}"', from=1-1, to=3-1]
	\arrow["\oc"', curve={height=12pt}, from=3-1, to=1-3]
	\arrow["{\interpretation{\langle\rangle}}"{description}, dotted, from=3-1, to=1-3, rounded corners, to path={ -- ([yshift=-2ex]\tikztostart.south) -| ([yshift=-2ex,xshift=45pt]\tikztostart.south) \tikztonodes -| (\tikztotarget.south)}]
\end{tikzcd}\]

    \item If $t = \elimzero(v)$, then $\Gamma = \Gamma_1, \Gamma_2$ and $\Upsilon, x^B; \Gamma_1 \vdash v:\zero$.
    
    By the induction hypothesis, 
    \[
      \interpretation{(u/x)v} = \interpretation{\oc \Upsilon} \otimes \interpretation{\Gamma_1} \xrightarrow{\alpha \otimes id_{\Gamma_1}} \interpretation{\oc \Upsilon} \otimes \oc \interpretation{B} \otimes \interpretation{\Gamma_1} \xrightarrow{\interpretation{v}} \mathbf 0
    \]

\[\begin{tikzcd}[ampersand replacement=\&,cramped]
	{\interpretation{\oc\Upsilon} \otimes \interpretation{\Gamma_1} \otimes \interpretation{\Gamma_2}} \&\& {\mathbf 0 \otimes \interpretation{\Gamma_2}} \&\& {\interpretation{A}} \\
	\\
	{\interpretation{\oc\Upsilon} \otimes \oc \interpretation{B} \otimes \interpretation{\Gamma_1} \otimes \interpretation{\Gamma_2}}
	\arrow["{\interpretation{(u/x)v} \otimes id_{\Gamma_2}}", from=1-1, to=1-3]
	\arrow["{\interpretation{\elimzero((u/x)v)}}"{description}, dotted, from=1-1, to=1-5, rounded corners, to path={ -- ([yshift=2ex]\tikztostart.north) -| ([yshift=2ex,xshift=100pt]\tikztostart.north) \tikztonodes -| (\tikztotarget.north)}]
	\arrow["{\alpha \otimes id_{\Gamma_1} \otimes id_{\Gamma_2}}"', from=1-1, to=3-1]
	\arrow["0", from=1-3, to=1-5]
	\arrow["{\interpretation v \otimes id_{\Gamma_2}}"', curve={height=30pt}, from=3-1, to=1-3]
	\arrow["{\interpretation{\elimzero(v)}}"{description}, dotted, from=3-1, to=1-5, rounded corners, to path={ -- ([yshift=-2ex]\tikztostart.south) -| ([yshift=-2ex,xshift=100pt]\tikztostart.south) \tikztonodes -| (\tikztotarget.south)}]
\end{tikzcd}\]

The diagram commutes by the induction hypothesis and functoriality of $\otimes$.
    \item If $t = \pair vw$, then $A = C \with D$, $\Upsilon, x^B; \Gamma \vdash v:C$ and $\Upsilon, x^B; \Gamma \vdash w:D$.
    
    By the induction hypothesis, 
    \begin{align*}
      \interpretation{(u/x)v} &= \interpretation{!\Upsilon} \otimes \interpretation{\Gamma} \xrightarrow{\alpha \otimes id_\Gamma} \interpretation{\Upsilon} \otimes \oc\interpretation{B} \otimes \interpretation{\Gamma} \xrightarrow{\interpretation{v}} \interpretation{C} \\
      \interpretation{(u/x)w} &= \interpretation{!\Upsilon} \otimes \interpretation{\Gamma} \xrightarrow{\alpha \otimes id_\Gamma} \interpretation{\Upsilon} \otimes \oc\interpretation{B} \otimes \interpretation{\Gamma} \xrightarrow{\interpretation{w}} \interpretation{D}
    \end{align*}

\[\begin{tikzcd}[ampersand replacement=\&,cramped]
	{\interpretation{\oc\Upsilon} \otimes \interpretation{\Gamma}} \&\& \begin{array}{c} (\interpretation{\oc\Upsilon} \otimes \interpretation{\Gamma}) \\\oplus (\interpretation{\oc\Upsilon} \otimes \interpretation{\Gamma}) \end{array} \&\&\& {\interpretation C \oplus \interpretation D} \\
	\& {(1)} \&\& {(2)} \\
	{\interpretation{\oc\Upsilon} \otimes \oc \interpretation B \otimes \interpretation{\Gamma}} \&\& \begin{array}{c} (\interpretation{\oc\Upsilon} \otimes \oc \interpretation B \otimes \interpretation{\Gamma})\\ \oplus (\interpretation{\oc\Upsilon} \otimes \oc \interpretation B \otimes \interpretation{\Gamma}) \end{array}
	\arrow["\Delta", from=1-1, to=1-3]
	\arrow["{\interpretation{(u/x)\langle v,w \rangle}}"{description}, dotted, from=1-1, to=1-6, rounded corners, to path={ -- ([yshift=3ex]\tikztostart.north) -| ([yshift=3ex,xshift=170pt]\tikztostart.north) \tikztonodes -| (\tikztotarget.north)}]
	\arrow["{\alpha \otimes id_\Gamma}"', from=1-1, to=3-1]
	\arrow["{\interpretation{(u/x)v} \oplus \interpretation{(u/x)w}}", from=1-3, to=1-6]
	\arrow["{(\alpha \otimes id_\Gamma) \oplus (\alpha \otimes id_\Gamma)}"{description}, dashed, from=1-3, to=3-3]
	\arrow["{\interpretation{\langle v,w\rangle}}"{description}, dotted, from=3-1, to=1-6, rounded corners, to path={ -- ([yshift=-3ex]\tikztostart.south) -| ([yshift=-3ex,xshift=170pt]\tikztostart.south) \tikztonodes -| (\tikztotarget.south)}]
	\arrow["\Delta"', from=3-1, to=3-3]
	\arrow["{\interpretation v \oplus \interpretation w}"', curve={height=30pt}, from=3-3, to=1-6]
\end{tikzcd}\]

(1) commutes by naturality of $\Delta$ and (2) commutes by the induction hypothesis the universal property of the biproduct.
    \item If $t = \elimwith^1(v, y^C.w)$, then $\Gamma = \Gamma_1, \Gamma_2$, $\Upsilon, x^B; \Gamma_1 \vdash v:C \with D$ and $\Upsilon, x^B; \Gamma_2, y^C \vdash w:A$.
    
	By the induction hypothesis,
	\begin{align*}
		\interpretation{(u/x)v} &= \interpretation{\oc\Upsilon} \otimes \interpretation{\Gamma_1} \xrightarrow{\alpha \otimes id_{\Gamma_1}} \interpretation{\oc\Upsilon} \otimes \oc \interpretation{B} \otimes \interpretation{\Gamma_1}  \xrightarrow{\interpretation{v}} \interpretation{C} \oplus \interpretation{D} \\
		\interpretation{(u/x)w} &= \interpretation{\oc\Upsilon} \otimes \interpretation{\Gamma_2} \otimes \interpretation{C} \xrightarrow{\alpha \otimes id_{\Gamma_2} \otimes id_C} \interpretation{\oc\Upsilon} \otimes \oc \interpretation{B} \otimes \interpretation{\Gamma_2} \otimes \interpretation{C} \xrightarrow{\interpretation{w}} \interpretation{A}
	\end{align*}

\[\begin{tikzcd}[ampersand replacement=\&,cramped,column sep=small]
	{\interpretation{\oc\Upsilon} \otimes \interpretation{\Gamma_1} \otimes \interpretation{\Gamma_2}} \&\& \begin{array}{c} \interpretation{\oc\Upsilon} \otimes \interpretation{\Gamma_1} \\\otimes \interpretation{\oc\Upsilon} \otimes \interpretation{\Gamma_2} \end{array} \&\& \begin{array}{c} (\interpretation{C} \oplus \interpretation{D}) \\\otimes \interpretation{\oc\Upsilon} \otimes \interpretation{\Gamma_2} \end{array} \\
	\& {(1)} \& {(2)} \\
	{\interpretation{\oc\Upsilon} \otimes \oc \interpretation{B} \otimes \interpretation{\Gamma_1} \otimes \interpretation{\Gamma_2}} \& \begin{array}{c} \interpretation{\oc\Upsilon} \otimes \oc \interpretation B\otimes \interpretation{\Gamma_1} \\\otimes \interpretation{\oc\Upsilon} \otimes \interpretation{\Gamma_2} \end{array} \\
	\& {(3)} \&\&\& {\interpretation{C}  \otimes \interpretation{\oc\Upsilon} \otimes \interpretation{\Gamma_2}} \\
	\begin{array}{c} \interpretation{\oc\Upsilon} \otimes \oc \interpretation{B} \otimes \interpretation{\Gamma_1} \\\otimes \interpretation{\oc\Upsilon} \otimes \oc \interpretation{B} \otimes \interpretation{\Gamma_2} \end{array} \&\& {(4)} \\
	\&\&\&\& {\interpretation{\oc\Upsilon} \otimes \interpretation{\Gamma_2} \otimes \interpretation{C}} \\
	\begin{array}{c} (\interpretation{C} \oplus \interpretation{D}) \\\otimes \interpretation{\oc\Upsilon} \otimes \oc \interpretation{B} \otimes \interpretation{\Gamma_2} \end{array} \&\& {(5)} \\
	\&\&\& {(6)} \\
	\begin{array}{c} \interpretation{C} \otimes \interpretation{\oc\Upsilon} \\\otimes \oc \interpretation{B} \otimes \interpretation{\Gamma_2} \end{array} \&\& \begin{array}{c} \interpretation{\oc\Upsilon} \otimes \oc \interpretation{B} \\\otimes \interpretation{\Gamma_2} \otimes \interpretation{C} \end{array} \&\& {\interpretation A}
	\arrow["{\overline{d}_{\Upsilon,\Gamma_1,\Gamma_2}}", from=1-1, to=1-3]
	\arrow["{\alpha \otimes id_{\Gamma_1} \otimes id_{\Gamma_2}}"', from=1-1, to=3-1]
	\arrow["{\interpretation{\elimwith^1((u/x)v,y^C.(u/x)w)}}"{description}, dotted, from=1-1, to=9-5, rounded corners, to path={-- ([yshift=3ex]\tikztostart.north) -- ([yshift=3ex,xshift=350pt]\tikztostart.north) \tikztonodes -| ([xshift=8ex]\tikztotarget.east) -- (\tikztotarget.east)}]
	\arrow["{\interpretation{(u/x)v} \otimes id}", from=1-3, to=1-5]
	\arrow["{\alpha  \otimes id_{\Gamma_1} \otimes id_{\oc \Upsilon}  \otimes id_{\Gamma_2}}"{description}, dashed, from=1-3, to=3-2]
	\arrow["{\pi_1 \otimes id}", from=1-5, to=4-5]
	\arrow["{id_{C \oplus D} \otimes \alpha  \otimes id_{\Gamma_2}}"{description}, curve={height=-24pt}, dashed, from=1-5, to=7-1]
	\arrow["{\overline d_{(\Upsilon,B),\Gamma_1,\Gamma_2}}"', from=3-1, to=5-1]
	\arrow["{\interpretation{\elimwith^1(v,y^C.w)}}"{description}, dotted, from=3-1, to=9-5, rounded corners, to path={-- ([xshift=-1ex]\tikztostart.west) |- ([yshift=-3ex,xshift=-210pt]\tikztotarget.south) -- ([yshift=-3ex,xshift=-200pt]\tikztotarget.south) \tikztonodes -- ([yshift=-3ex]\tikztotarget.south) -- (\tikztotarget.south)}]
	\arrow["{\interpretation v \otimes id}"{description}, curve={height=24pt}, dashed, from=3-2, to=1-5]
	\arrow["{id_{\oc \Upsilon} \otimes id_{\oc B} \otimes id_{\Gamma_1} \otimes \alpha  \otimes id_{\Gamma_2}}"{description}, dashed, from=3-2, to=5-1]
	\arrow["\sigma", from=4-5, to=6-5]
	\arrow["{id_C \otimes \alpha  \otimes id_{\Gamma_2}}"{description}, dashed, from=4-5, to=9-1]
	\arrow["{\interpretation v \otimes id}"', from=5-1, to=7-1]
	\arrow["{\alpha  \otimes id_{\Gamma_2} \otimes id_C}"{description}, dashed, from=6-5, to=9-3]
	\arrow["{\interpretation{(u/x)w}}", from=6-5, to=9-5]
	\arrow["{\pi_1 \otimes id}"', from=7-1, to=9-1]
	\arrow["\sigma"', from=9-1, to=9-3]
	\arrow["{\interpretation w}"', from=9-3, to=9-5]
\end{tikzcd}\]

(1) commutes by Corollary~\ref{coro:substitutionaux}, (2) commutes by the induction hypothesis and functoriality of $\otimes$, (3) and (4) commute by functoriality of $\otimes$, (5) commutes by naturality of $\sigma$ and (6) commutes by the induction hypothesis.
	
    \item If $t = \elimwith^2(v, y^C.w)$, this case is similar to the previous one.

    \item If $t = \inl(v)$, then $A = C \oplus D$ and $\Upsilon, x^B; \Gamma \vdash v:C$.
    
    By the induction hypothesis, 

    \[
      \interpretation{(u/x)v} = \interpretation{!\Upsilon} \otimes \interpretation{\Gamma} \xrightarrow{\alpha \otimes id_\Gamma} \interpretation{\Upsilon} \otimes \oc\interpretation{B} \otimes \interpretation{\Gamma} \xrightarrow{\interpretation{v}} \interpretation{C}
    \]
\[\begin{tikzcd}[ampersand replacement=\&,cramped]
	{\interpretation{\oc\Upsilon} \otimes \interpretation{\Gamma}} \&\& {\interpretation{C}} \&\& {\interpretation{C} \oplus \interpretation{D}} \\
	\\
	{\interpretation{\oc\Upsilon} \otimes \oc \interpretation B \otimes \interpretation{\Gamma}}
	\arrow["{\interpretation{(u/x)v}}", from=1-1, to=1-3]
	\arrow["{\interpretation{inl((u/x)v)}}"{description}, dotted, from=1-1, to=1-5, rounded corners, to path={ -- ([yshift=2ex]\tikztostart.north) -| ([yshift=2ex,xshift=100pt]\tikztostart.north) \tikztonodes -| (\tikztotarget.north)}]
	\arrow["{\alpha \otimes id_\Gamma}"', from=1-1, to=3-1]
	\arrow["{i_1}", from=1-3, to=1-5]
	\arrow["{\interpretation v}"', from=3-1, to=1-3]
	\arrow["{\interpretation{inl(v)}}"{description}, dotted, from=3-1, to=1-5, rounded corners, to path={ -- ([yshift=-2ex]\tikztostart.south) -| ([yshift=-2ex,xshift=100pt]\tikztostart.south) \tikztonodes -| (\tikztotarget.south)}]
\end{tikzcd}\]

The diagram commutes by the induction hypothesis.
    \item If $t = \inr(v)$, this case is similar to the previous one.
    \item If $t = \elimplus(v, y^C.w_1, z^D.w_2)$, then $\Gamma = \Gamma_1, \Gamma_2$, $\Upsilon, x^B; \Gamma_1 \vdash v: C \oplus D$, $\Upsilon, x^B; \Gamma_2, y^C \vdash w_1:A$ and $\Upsilon, x^B; \Gamma_2, z^D \vdash w_2:A$.
    
	By the induction hypothesis,
	\begin{align*}
		\interpretation{(u/x)v} &= \interpretation{\oc\Upsilon} \otimes \interpretation{\Gamma_1} \xrightarrow{\alpha \otimes id_{\Gamma_1}} \interpretation{\oc\Upsilon} \otimes \oc \interpretation{B} \otimes \interpretation{\Gamma_1} \xrightarrow{\interpretation{v}} \interpretation{C} \oplus \interpretation{D} \\
		\interpretation{(u/x)w_1} &= \interpretation{\oc\Upsilon} \otimes \interpretation{\Gamma_2} \otimes \interpretation{C} \xrightarrow{\alpha \otimes id_{\Gamma_2} \otimes id_C} \interpretation{\oc\Upsilon} \otimes \oc \interpretation{B} \otimes \interpretation{\Gamma_2} \otimes \interpretation{C} \xrightarrow{\interpretation{w_1}} \interpretation{A}\\
		\interpretation{(u/x)w_2} &= \interpretation{\oc\Upsilon} \otimes \interpretation{\Gamma_2} \otimes \interpretation{D} \xrightarrow{\alpha \otimes id_{\Gamma_2} \otimes id_D} \interpretation{\oc\Upsilon} \otimes \oc \interpretation{B} \otimes \interpretation{\Gamma_2} \otimes \interpretation{D} \xrightarrow{\interpretation{w_2}} \interpretation{A}
	\end{align*}

\[\begin{tikzcd}[ampersand replacement=\&,cramped,column sep=tiny]
	{\interpretation{\oc\Upsilon} \otimes \interpretation{\Gamma_1} \otimes \interpretation{\Gamma_2}} \&\& \begin{array}{c} \interpretation{\oc\Upsilon} \otimes \interpretation{\Gamma_1} \\\otimes \interpretation{\oc\Upsilon} \otimes \interpretation{\Gamma_2} \end{array} \&\& \begin{array}{c} (\interpretation{C} \oplus \interpretation{D}) \\\otimes \interpretation{\oc\Upsilon} \otimes \interpretation{\Gamma_2} \end{array} \\
	\& {(1)} \\
	\begin{array}{c} \interpretation{\oc\Upsilon} \otimes \oc \interpretation{B} \\\otimes \interpretation{\Gamma_1} \otimes \interpretation{\Gamma_2} \end{array} \&\& {(2)} \\
	\&\&\&\& \begin{array}{c} (\interpretation{C} \otimes \interpretation{\oc\Upsilon} \otimes \interpretation{\Gamma_2}) \\\oplus (\interpretation{D} \otimes \interpretation{\oc\Upsilon} \otimes \interpretation{\Gamma_2}) \end{array} \\
	\begin{array}{c} \interpretation{\oc\Upsilon} \otimes \oc \interpretation{B} \otimes \interpretation{\Gamma_1} \\\otimes \interpretation{\oc\Upsilon} \otimes \oc \interpretation{B} \otimes \interpretation{\Gamma_2} \end{array} \&\& {(3)} \\
	\&\&\&\& \begin{array}{c} (\interpretation{\oc\Upsilon} \otimes \interpretation{\Gamma_2} \otimes \interpretation{C}) \\\oplus (\interpretation{\oc\Upsilon} \otimes \interpretation{\Gamma_2} \otimes \interpretation{D}) \end{array} \\
	\begin{array}{c} (\interpretation{C} \oplus \interpretation{D}) \\\otimes \interpretation{\oc\Upsilon} \otimes \oc \interpretation{B} \otimes \interpretation{\Gamma_2} \end{array} \&\& {(4)} \\
	\&\&\& {(5)} \\
	\begin{array}{c} (\interpretation{C} \otimes \interpretation{\oc\Upsilon} \otimes \oc \interpretation{B} \otimes \interpretation{\Gamma_2}) \\\oplus (\interpretation{D} \otimes \interpretation{\oc\Upsilon} \otimes \oc \interpretation{B} \otimes \interpretation{\Gamma_2}) \end{array} \&\& \begin{array}{c} (\interpretation{\oc\Upsilon} \otimes \oc \interpretation{B} \otimes \interpretation{\Gamma_2} \otimes \interpretation{C}) \\\oplus (\interpretation{\oc\Upsilon} \otimes \oc \interpretation{B} \otimes \interpretation{\Gamma_2} \otimes \interpretation{D}) \end{array} \&\& {\interpretation A}
	\arrow["{\overline d_{\Upsilon,\Gamma_1,\Gamma_2}}", from=1-1, to=1-3]
	\arrow["{\alpha \otimes id_{\Gamma_1} \otimes id_{\Gamma_2}}"', from=1-1, to=3-1]
	\arrow["{\interpretation{\elimplus((u/x)v, y^C.(u/x)w_1, z^D.(u/x)w_2)}}"{description}, dotted, from=1-1, to=9-5, rounded corners, to path={-- ([yshift=3ex]\tikztostart.north) -- ([yshift=3ex,xshift=350pt]\tikztostart.north) \tikztonodes -| ([xshift=10ex]\tikztotarget.east) -- (\tikztotarget.east)}]
	\arrow["{\interpretation{(u/x)v} \otimes id}", from=1-3, to=1-5]
	\arrow["{\alpha \otimes id_{\Gamma_1} \otimes \alpha \otimes id_{\Gamma_2}}"{description}, dashed, from=1-3, to=5-1]
	\arrow["dist", from=1-5, to=4-5]
	\arrow["{(id_C \oplus id_D) \otimes \alpha \otimes id_{\Gamma_2}}"{description}, dashed, from=1-5, to=7-1]
	\arrow["{\overline d_{(\Upsilon,B),\Gamma_1,\Gamma_2}}"', from=3-1, to=5-1]
	\arrow["{\interpretation{\elimplus(v, y^C.w_1, z^D.w_2)}}"{description}, dotted, from=3-1, to=9-5, rounded corners, to path={-- ([xshift=-8ex]\tikztostart.west) |- ([yshift=-3ex,xshift=-210pt]\tikztotarget.south) -- ([yshift=-3ex,xshift=-200pt]\tikztotarget.south) \tikztonodes -- ([yshift=-2ex]\tikztotarget.south) -- (\tikztotarget.south)}]
	\arrow["{\sigma \oplus \sigma}", from=4-5, to=6-5]
	\arrow["{(id_C \otimes \alpha \otimes id_{\Gamma_2}) \oplus (id_D \otimes \alpha \otimes id_{\Gamma_2})}"{description}, dashed, from=4-5, to=9-1]
	\arrow["{\interpretation v \otimes id}"', from=5-1, to=7-1]
	\arrow["{(\alpha \otimes id_{\Gamma_2} \otimes id_C) \oplus (\alpha \otimes id_{\Gamma_2} \otimes id_D)}"{description}, dashed, from=6-5, to=9-3]
	\arrow["{[\interpretation{(u/x)w_1},\interpretation{(u/x)w_2}]}"{description}, from=6-5, to=9-5]
	\arrow["dist"', from=7-1, to=9-1]
	\arrow["{\sigma \oplus \sigma}"', from=9-1, to=9-3]
	\arrow["{[\interpretation{w_1},\interpretation{w_2}]}"', from=9-3, to=9-5]
\end{tikzcd}\]

(1) commutes by Corollary~\ref{coro:substitutionaux}, (2) commutes by the induction hypothesis and functoriality of $\otimes$, (3) commutes by naturality of the distribution map $dist$, (4) commutes by naturality of $\sigma$ and the universal property of the biproduct and (5) commutes by the induction hypothesis and the universal property of the biproduct.
	
    \item If $t = \bang v$, then $A = \oc C$, $\Gamma = \varnothing$ and $\Upsilon, x^B;\varnothing \vdash v:C$.
    
    By the induction hypothesis,
    \[
      \interpretation{(u/x)v} = \interpretation{\oc\Upsilon} \otimes I \xrightarrow{\alpha \otimes id_I} \interpretation{\oc\Upsilon} \otimes \oc \interpretation{B} \otimes I \xrightarrow{\interpretation{v}} \interpretation{C}
    \]

\[\begin{tikzcd}[ampersand replacement=\&,cramped,column sep=tiny]
	{\interpretation{\oc\Upsilon} \otimes I} \&\& {\interpretation{\oc\Upsilon}} \&\& {\oc\interpretation{\oc\Upsilon}} \&\& {\oc(\interpretation{\oc\Upsilon} \otimes I)} \&\& {\oc\interpretation{A}} \\
	\& {(1)} \&\& {(2)} \&\& {(3)} \&\& {(4)} \\
	{\interpretation{\oc\Upsilon} \otimes \oc \interpretation B \otimes I} \&\& {\interpretation{\oc\Upsilon} \otimes \oc \interpretation B} \&\& {\oc(\interpretation{\oc\Upsilon} \otimes \oc \interpretation B)} \&\& {\oc(\interpretation{\oc\Upsilon} \otimes \oc \interpretation B \otimes I)}
	\arrow["\rho", from=1-1, to=1-3]
	\arrow["{\interpretation{\oc (u/x)v}}"{description}, dotted, from=1-1, to=1-9, rounded corners, to path={ -- ([yshift=2ex]\tikztostart.north) -| ([yshift=2ex,xshift=200pt]\tikztostart.north) \tikztonodes -| (\tikztotarget.north)}]
	\arrow["{\alpha \otimes id_I}"', from=1-1, to=3-1]
	\arrow["{\delta_\Upsilon}", from=1-3, to=1-5]
	\arrow["\alpha"{description}, dashed, from=1-3, to=3-3]
	\arrow["{\oc(\rho^{-1})}", from=1-5, to=1-7]
	\arrow["{\oc\alpha}"{description}, dashed, from=1-5, to=3-5]
	\arrow["{\oc\interpretation{(u/x)v}}", from=1-7, to=1-9]
	\arrow["{\oc(\alpha \otimes id_I)}"{description}, dashed, from=1-7, to=3-7]
	\arrow["{\interpretation{\oc v}}"{description}, dotted, from=3-1, to=1-9, rounded corners, to path={ -- ([yshift=-2ex]\tikztostart.south) -| ([yshift=-2ex,xshift=200pt]\tikztostart.south) \tikztonodes -| (\tikztotarget.south)}]
	\arrow["\rho"', from=3-1, to=3-3]
	\arrow["{\delta_{\Upsilon,B}}"', from=3-3, to=3-5]
	\arrow["{\oc(\rho^{-1})}"', from=3-5, to=3-7]
	\arrow["{\oc\interpretation{v}}"', curve={height=30pt}, from=3-7, to=1-9]
\end{tikzcd}\]

(1) commutes by naturality of $\rho$, (2) commutes by Lemma~\ref{lem:deltaprop}, (3) commutes by naturality of $\rho$ and functoriality of $\oc$, and (4) commutes by the induction hypothesis and functoriality of $\oc$.

    \item If $t = \elimbang(v, y^C.w)$, then $\Gamma = \Gamma_1, \Gamma_2$, $\Upsilon, x^B;\Gamma_1 \vdash v:\oc C$ and $\Upsilon,x^B, y^C;\Gamma_2 \vdash w:A$.
    
    By the induction hypothesis, 
    \begin{align*}
      \interpretation{(u/x)v} &= \interpretation{\oc\Upsilon} \otimes \interpretation{\Gamma_1} \xrightarrow{\alpha \otimes id_{\Gamma_1}} \interpretation{\oc\Upsilon} \otimes \oc \interpretation{B} \otimes \interpretation{\Gamma_1} \xrightarrow{\interpretation{v}} \oc\interpretation{C} \\
      \interpretation{(u/x)w} &= \interpretation{\oc\Upsilon} \otimes \oc \interpretation{C} \otimes \interpretation{\Gamma_2} \xrightarrow{\alpha' \otimes id_{\Gamma_2}} \interpretation{\oc\Upsilon} \otimes \oc \interpretation{C} \otimes \oc \interpretation{B} \otimes \interpretation{\Gamma_2} \xrightarrow{\interpretation{w}'} \interpretation{A}
    \end{align*}

	where 
	\begin{align*}
		\alpha'=& (id_{\interpretation{\oc \Upsilon}} \otimes \sigma) \circ (\alpha \otimes id_{\oc \interpretation{C}})\\
	\interpretation{w} =& \interpretation{\Upsilon, x^B,  y^C; \Gamma_2 \vdash w:A}\\
	\interpretation{w}' =& \interpretation{\Upsilon, y^C, x^B; \Gamma_2 \vdash w:A}
	\end{align*}

	Notice that $\interpretation{w}$ and $\interpretation{w}'$ are equal modulo permutations.

\[\begin{tikzcd}[ampersand replacement=\&,cramped,column sep=tiny]
	{\interpretation{\oc\Upsilon} \otimes \interpretation{\Gamma_1} \otimes \interpretation{\Gamma_2}} \&\& \begin{array}{c} \interpretation{\oc\Upsilon} \otimes \interpretation{\Gamma_1} \\\otimes \interpretation{\oc\Upsilon} \otimes \interpretation{\Gamma_2} \end{array} \&\&\&\& {\oc \interpretation C \otimes \interpretation{\oc\Upsilon} \otimes \interpretation{\Gamma_2}} \\
	\& {(1)} \\
	\begin{array}{c} \interpretation{\oc\Upsilon} \otimes \oc \interpretation{B} \\\otimes \interpretation{\Gamma_1} \otimes \interpretation{\Gamma_2} \end{array} \&\& {(2)} \&\& {(3)} \&\& {\interpretation{\oc\Upsilon} \otimes \oc \interpretation C \otimes \interpretation{\Gamma_2}} \\
	\\
	\begin{array}{c} \interpretation{\oc\Upsilon} \otimes \oc \interpretation{B} \otimes \interpretation{\Gamma_1} \\\otimes \interpretation{\oc\Upsilon} \otimes \oc \interpretation{B} \otimes \interpretation{\Gamma_2} \end{array} \&\&\&\& {(4)} \\
	\&\&\&\& \begin{array}{c} \interpretation{\oc\Upsilon} \otimes \oc \interpretation{C} \\\otimes \oc\interpretation{B} \otimes \interpretation{\Gamma_2} \end{array} \& {(6)} \\
	\&\&\&\& {(5)} \\
	\begin{array}{c} \oc\interpretation{C} \otimes \interpretation{\oc\Upsilon} \\\otimes \oc \interpretation{B} \otimes \interpretation{\Gamma_2} \end{array} \&\& \begin{array}{c} \interpretation{\oc\Upsilon} \otimes \oc \interpretation{B} \\\otimes \oc\interpretation{C} \otimes \interpretation{\Gamma_2} \end{array} \&\&\&\& {\interpretation A}
	\arrow["{\overline d_{\Upsilon,\Gamma_1,\Gamma_2}}", from=1-1, to=1-3]
	\arrow["{\alpha \otimes id_{\Gamma_1} \otimes id_{\Gamma_2}}"', from=1-1, to=3-1]
	\arrow["{\interpretation{\elimbang((u/x)v, y^C.(u/x)w)}}"{description}, dotted, from=1-1, to=8-7, rounded corners, to path={-- ([yshift=3ex]\tikztostart.north) -- ([yshift=3ex,xshift=340pt]\tikztostart.north) \tikztonodes -| ([xshift=10ex]\tikztotarget.east) -- (\tikztotarget.east)}]
	\arrow["{\interpretation{(u/x)v} \otimes id}", from=1-3, to=1-7]
	\arrow["{\alpha \otimes id_{\Gamma_1} \otimes \alpha \otimes id_{\Gamma_2}}"{description}, dashed, from=1-3, to=5-1]
	\arrow["{\sigma \otimes id}", from=1-7, to=3-7]
	\arrow["{id_{\oc C} \otimes \alpha \otimes id_{\Gamma_2}}"{description}, curve={height=40pt}, dashed, from=1-7, to=8-1]
	\arrow["{\overline d_{(\Upsilon,B),\Gamma_1,\Gamma_2}}"', from=3-1, to=5-1]
	\arrow["{\interpretation{\elimbang(v, y^C.w)}}"{description}, dotted, from=3-1, to=8-7, rounded corners, to path={-- ([xshift=-4ex]\tikztostart.west) |- ([yshift=-3ex,xshift=-210pt]\tikztotarget.south) -- ([yshift=-3ex,xshift=-200pt]\tikztotarget.south) \tikztonodes -- ([yshift=-3ex]\tikztotarget.south) -- (\tikztotarget.south)}]
	\arrow["{\alpha' \otimes id_{\Gamma_2}}"{description}, curve={height=-18pt}, dashed, from=3-7, to=6-5]
	\arrow["{\alpha \otimes id_{\oc C} \otimes id_{\Gamma_2}}"{description}, curve={height=80pt}, dashed, from=3-7, to=8-3]
	\arrow["{\interpretation{(u/x)w}}", from=3-7, to=8-7]
	\arrow["{\interpretation v \otimes id}"', from=5-1, to=8-1]
	\arrow["{\interpretation{w}'}"{description}, dashed, from=6-5, to=8-7]
	\arrow["{\sigma \otimes id_{\Gamma_2}}"', from=8-1, to=8-3]
	\arrow["{id \otimes \sigma \otimes id_{\Gamma_2}}"{description}, curve={height=18pt}, dashed, from=8-3, to=6-5]
	\arrow["{\interpretation{w}}"', from=8-3, to=8-7]
\end{tikzcd}\]

(1) commutes by Corollary~\ref{coro:substitutionaux}, (2) commutes by the induction hypothesis and functoriality of $\otimes$, (3) commutes by naturality of $\sigma$ and functoriality of $\otimes$, (4) commutes by definition of $\alpha'$, (5) commutes because $\interpretation{w}$ and $\interpretation{w}'$ are equal modulo permutations, and (6) commutes by the induction hypothesis.\qedhere

  \end{itemize}
\end{proof}

\linearsubstitution*
\begin{proof}
  By induction on $u$. This lemma is a straightforward extension of \cite[Lemma 4.5]{DiazcaroMalherbe24}. 
  
  We show some cases as an example:

  \begin{itemize}
      \item If $u = x$, then $B = A$ and $\Delta = \varnothing$.
  \[\begin{tikzcd}[ampersand replacement=\&,cramped,column sep=small]
    {\interpretation{\oc \Upsilon} \otimes \interpretation{\Gamma} \otimes I} \&\&\&\&\&\& {\interpretation{\oc \Upsilon} \otimes \interpretation{\oc \Upsilon} \otimes \interpretation{\Gamma} \otimes I} \\
    \&\&\&\& {(1)} \\
    \&\& {(2)} \&\& {I \otimes \interpretation{\oc \Upsilon} \otimes \interpretation{\Gamma} \otimes I} \\
    \&\&\&\&\& {(3)} \\
    {\interpretation{\oc \Upsilon} \otimes \interpretation{\Gamma}} \&\& {I \otimes \interpretation{\oc \Upsilon} \otimes \interpretation{\Gamma}} \&\& {I \otimes I\otimes \interpretation{\oc \Upsilon} \otimes \interpretation{\Gamma}} \\
    \& {(4)} \&\& {(5)} \&\&\& {\interpretation{\oc \Upsilon} \otimes I\otimes\interpretation{\oc \Upsilon} \otimes \interpretation{\Gamma}} \\
    \&\& {I \otimes\interpretation A} \&\& {I \otimes I\otimes\interpretation A} \& {(6)} \\
    \&\&\& {(7)} \\
    {\interpretation A} \&\&\&\&\&\& {\interpretation{\oc \Upsilon} \otimes I\otimes\interpretation A}
    \arrow["{d_\Upsilon \otimes id}", from=1-1, to=1-7]
    \arrow["{\lambda^{-1}}"{description}, dashed, from=1-1, to=3-5]
    \arrow["\rho"', from=1-1, to=5-1]
    \arrow["{e_\Upsilon \otimes id \otimes id}", dashed, from=1-7, to=3-5]
    \arrow["{id \otimes \sigma}", from=1-7, to=6-7]
    \arrow["{id \otimes \sigma}"{description}, dashed, from=3-5, to=5-5]
    \arrow["{\interpretation{v}}"', from=5-1, to=9-1]
    \arrow["\lambda"{description}, dashed, from=5-3, to=5-1]
    \arrow["{id \otimes \interpretation{v}}"{description}, dashed, from=5-3, to=7-3]
    \arrow["\lambda"{description}, dashed, from=5-5, to=5-3]
    \arrow["{id \otimes id \otimes \interpretation{v}}"{description}, dashed, from=5-5, to=7-5]
    \arrow["{e_\Upsilon \otimes id \otimes id}", dashed, from=6-7, to=5-5]
    \arrow["{id \otimes \interpretation v}", from=6-7, to=9-7]
    \arrow["\lambda"{description}, dashed, from=7-3, to=9-1]
    \arrow["\lambda"{description}, dashed, from=7-5, to=7-3]
    \arrow["{e_\Upsilon \otimes id \otimes id}"{description}, dashed, from=9-7, to=7-5]
    \arrow["{\interpretation x}", from=9-7, to=9-1]
  \end{tikzcd}\]
  
  (1) commutes by Lemma~\ref{lem:comonoidgen}, (2) commutes by coherence, (3) commutes by naturality of $\sigma$, (4) and (5) commute by naturality of $\lambda$, (6) commutes by functoriality of $\otimes$ and (7) commutes by definition of $\interpretation{x}$.
  
  \item If $u = t \plus w$, then $\Upsilon;\Delta,x^A \vdash t:B$ and $\Upsilon;\Delta,x^A \vdash w:B$.
\[\begin{tikzcd}[ampersand replacement=\&,cramped,column sep=2pt]
	{\interpretation{\oc \Upsilon} \otimes \interpretation{\Gamma} \otimes \interpretation{\Delta}} \&\&\&\&\& {\interpretation{\oc \Upsilon} \otimes \interpretation{\oc \Upsilon} \otimes \interpretation{\Gamma} \otimes \interpretation{\Delta}} \\
	\& \begin{array}{c} (\interpretation{\oc \Upsilon} \otimes \interpretation{\Gamma} \otimes \interpretation{\Delta}) \\\oplus (\interpretation{\oc \Upsilon} \otimes \interpretation{\Gamma} \otimes \interpretation{\Delta}) \end{array} \&\&\& \begin{array}{c} (\interpretation{\oc \Upsilon} \otimes \interpretation{\oc \Upsilon} \otimes \interpretation{\Gamma} \otimes \interpretation{\Delta}) \\\oplus (\interpretation{\oc \Upsilon} \otimes \interpretation{\oc \Upsilon} \otimes \interpretation{\Gamma} \otimes \interpretation{\Delta}) \end{array} \\
	\\
	\\
	\&\&\&\& \begin{array}{c} (\interpretation{\oc \Upsilon} \otimes \interpretation{\Delta} \otimes\interpretation{\oc \Upsilon} \otimes \interpretation{\Gamma}) \\ \oplus (\interpretation{\oc \Upsilon} \otimes \interpretation{\Delta} \otimes\interpretation{\oc \Upsilon} \otimes \interpretation{\Gamma}) \end{array} \& {\interpretation{\oc \Upsilon} \otimes \interpretation{\Delta} \otimes\interpretation{\oc \Upsilon} \otimes \interpretation{\Gamma}} \\
	\\
	\\
	\& {\interpretation B \oplus \interpretation B} \&\&\& \begin{array}{c} (\interpretation{\oc \Upsilon} \otimes \interpretation{\Delta}\otimes\interpretation A) \\ \oplus (\interpretation{\oc \Upsilon} \otimes \interpretation{\Delta}\otimes\interpretation A) \end{array} \\
	{\interpretation B} \&\&\&\&\& {\interpretation{\oc \Upsilon} \otimes \interpretation{\Delta}\otimes\interpretation A}
	\arrow[""{name=0, anchor=center, inner sep=0}, "{d_\Upsilon \otimes id}", from=1-1, to=1-6]
	\arrow["\Delta"{description}, dashed, from=1-1, to=2-2]
	\arrow[""{name=1, anchor=center, inner sep=0}, "{\interpretation{(v/x)t \plus (v/x)w}}"{description}, from=1-1, to=9-1]
	\arrow["\Delta"{description}, dashed, from=1-6, to=2-5]
	\arrow["{id \otimes \sigma}", from=1-6, to=5-6]
	\arrow[""{name=2, anchor=center, inner sep=0}, "\begin{array}{c} (d_\Upsilon \otimes id) \\ \oplus \\ (d_\Upsilon \otimes id) \end{array}"{description}, dashed, from=2-2, to=2-5]
	\arrow[""{name=3, anchor=center, inner sep=0}, "{\interpretation{(v/x)t} \oplus \interpretation{(v/x)w}}"{description}, dashed, from=2-2, to=8-2]
	\arrow["\begin{array}{c} (id \otimes \sigma) \\ \oplus \\ (id \otimes \sigma) \end{array}"{description}, dashed, from=2-5, to=5-5]
	\arrow["{(4)}"{description}, draw=none, from=5-5, to=1-6]
	\arrow["\begin{array}{c} (id \otimes \interpretation v) \\\oplus  \\ (id \otimes \interpretation v) \end{array}"{description}, dashed, from=5-5, to=8-5]
	\arrow["{(5)}"{description}, draw=none, from=5-5, to=9-6]
	\arrow["\Delta"{description}, curve={height=-18pt}, dashed, from=5-6, to=5-5]
	\arrow["{id \otimes \interpretation v}", from=5-6, to=9-6]
	\arrow["\nabla"{description}, dashed, from=8-2, to=9-1]
	\arrow[""{name=4, anchor=center, inner sep=0}, "{\interpretation t \oplus \interpretation w}"{description}, dashed, from=8-5, to=8-2]
	\arrow["\Delta"{description}, dashed, from=9-6, to=8-5]
	\arrow[""{name=5, anchor=center, inner sep=0}, "{\interpretation{t \plus w}}", from=9-6, to=9-1]
	\arrow["{(1)}"{description, pos=0.6}, draw=none, from=2, to=0]
	\arrow["{(2)}"{description, pos=0.6}, draw=none, from=3, to=1]
	\arrow["{(3)}"{description, pos=0.4}, draw=none, from=5-5, to=3]
	\arrow["{(6)}"{description}, draw=none, from=4, to=5]
\end{tikzcd}\]
  (1) commutes by naturality of $\Delta$, (2) commutes by definition of $\interpretation{(v/x)t \plus (v/x)w}$, (3) commutes by the induction hypothesis, (4) and (5) commute by naturality of $\Delta$ and (6) commutes by definition of $\interpretation{t \plus w}$.
  
  \item If $u = a \dotprod t$, then $\Upsilon;\Delta,x^A \vdash t:B$.

  \[\begin{tikzcd}[ampersand replacement=\&,cramped]
    {\interpretation{\oc \Upsilon} \otimes \interpretation{\Gamma} \otimes \interpretation{\Delta}} \&\&\& {\interpretation{\oc \Upsilon} \otimes \interpretation{\oc \Upsilon} \otimes \interpretation{\Gamma} \otimes \interpretation{\Delta}} \\
    \\
    \& {\interpretation B} \&\& {\interpretation{\oc \Upsilon} \otimes \interpretation{\Delta} \otimes\interpretation{\oc \Upsilon} \otimes \interpretation{\Gamma}} \\
    \\
    {\interpretation B} \&\&\& {\interpretation{\oc \Upsilon} \otimes \interpretation{\Delta} \otimes\interpretation A}
    \arrow["{d_\Upsilon \otimes id}", from=1-1, to=1-4]
    \arrow["{\interpretation{(v/x)t}}"{description}, dashed, from=1-1, to=3-2]
    \arrow[""{name=0, anchor=center, inner sep=0}, "{\interpretation{a \dotprod (v/x)t}}"', from=1-1, to=5-1]
    \arrow["{id \otimes \sigma}", from=1-4, to=3-4]
    \arrow["{(1)}"{description}, draw=none, from=3-2, to=1-4]
    \arrow["{\widehat{\mono{a}}}"{description}, dashed, from=3-2, to=5-1]
    \arrow["{id \otimes \interpretation v}", from=3-4, to=5-4]
    \arrow["{\interpretation t}"{description}, dashed, from=5-4, to=3-2]
    \arrow[""{name=1, anchor=center, inner sep=0}, "{\interpretation{a \dotprod t}}", from=5-4, to=5-1]
    \arrow["{(3)}"{description}, draw=none, from=3-2, to=1]
    \arrow["{(2)}"{description}, draw=none, from=3-2, to=0]
  \end{tikzcd}\]
  (1) commutes by the induction hypothesis, (2) commutes by definition of $\interpretation{a \dotprod (v/x)t}$ and (3) commutes by interpretation of $\interpretation{a \dotprod t}$.
  \end{itemize}

  The new cases are the following:
  \begin{itemize}
    \item If $u = \elimbang(t,y^C.w)$, then $\Delta = \Delta_1, \Delta_2$.
\begin{itemize}
	\item If $x \in \fv(t)$, then $\Upsilon;\Delta_1,x^A \vdash t:\oc C$ and $\Upsilon,y^C;\Delta_2 \vdash w:B$.
\[\resizebox{.82\textwidth}{!}{\begin{tikzcd}[ampersand replacement=\&,cramped,column sep=0pt]
	\begin{array}{c} \interpretation{\oc \Upsilon} \otimes \interpretation{\Gamma} \\ \otimes \interpretation{\Delta_1} \otimes \interpretation{\Delta_2} \end{array} \&\&\&\&\& \begin{array}{c} \interpretation{\oc \Upsilon} \otimes \interpretation{\oc \Upsilon} \otimes \interpretation{\Gamma} \\ \otimes \interpretation{\Delta_1} \otimes \interpretation{\Delta_2} \end{array} \\
	\&\&\& \begin{array}{c} \interpretation{\oc \Upsilon} \otimes \interpretation{\oc \Upsilon} \otimes \interpretation{\Gamma} \\ \otimes \interpretation{\Delta_1} \otimes \interpretation{\Delta_2} \end{array} \\
	\& \begin{array}{c} \interpretation{\oc \Upsilon} \otimes \interpretation{\Gamma} \otimes \interpretation{\Delta_1}\\ \otimes \interpretation{\oc \Upsilon} \otimes \interpretation{\Delta_2} \end{array} \&\&\&\& \begin{array}{c} \interpretation{\oc \Upsilon} \otimes \interpretation{\Delta_1} \otimes \interpretation{\Delta_2} \\ \otimes\interpretation{\oc \Upsilon} \otimes \interpretation{\Gamma} \end{array} \\
	\&\&\& \begin{array}{c} \interpretation{\oc \Upsilon} \otimes \interpretation{\Delta_1} \otimes \\ \interpretation{\oc \Upsilon} \otimes \interpretation{\Gamma} \otimes \interpretation{\Delta_2} \end{array} \\
	\&\& \begin{array}{c} \interpretation{\oc \Upsilon} \otimes\interpretation{\oc \Upsilon} \otimes \interpretation{\Gamma} \otimes \interpretation{\Delta_1} \\\otimes \interpretation{\oc \Upsilon} \otimes \interpretation{\Delta_2} \end{array} \\
	\\
	\& {\oc\interpretation{C} \otimes \interpretation{\oc \Upsilon} \otimes \interpretation{\Delta_2}} \&\& \begin{array}{c} \interpretation{\oc \Upsilon} \otimes \interpretation{\oc \Upsilon} \otimes \interpretation{\Delta_1} \otimes \\ \interpretation{\oc \Upsilon} \otimes \interpretation{\Gamma} \otimes \interpretation{\Delta_2} \end{array} \\
	\&\& \begin{array}{c} \interpretation{\oc \Upsilon} \otimes \interpretation{\Delta_1} \\\otimes \interpretation{\oc \Upsilon} \otimes \interpretation{\Gamma} \\\otimes \interpretation{\oc \Upsilon} \otimes \interpretation{\Delta_2} \end{array} \\
	\&\&\& \begin{array}{c} \interpretation{\oc \Upsilon} \otimes \interpretation{\oc \Upsilon} \otimes \\ \interpretation{\Delta_1} \otimes\interpretation A \otimes \interpretation{\Delta_2} \end{array} \\
	\\
	\& {\interpretation{\oc \Upsilon} \otimes \oc\interpretation{C} \otimes \interpretation{\Delta_2}} \& \begin{array}{c} \interpretation{\oc \Upsilon} \otimes \interpretation{\Delta_1} \otimes\interpretation A \\ \otimes \interpretation{\oc \Upsilon} \otimes  \interpretation{\Delta_2} \end{array} \&\& \begin{array}{c} \interpretation{\oc \Upsilon} \otimes \interpretation{\Delta_1} \otimes \\ \interpretation A \otimes \interpretation{\Delta_2} \end{array} \\
	\\
	{\interpretation B} \&\&\&\&\& \begin{array}{c} \interpretation{\oc \Upsilon} \otimes \interpretation{\Delta_1} \\ \otimes \interpretation{\Delta_2} \otimes\interpretation A \end{array}
	\arrow["{d_\Upsilon \otimes id}", from=1-1, to=1-6]
	\arrow[""{name=0, anchor=center, inner sep=0}, "{d_\Upsilon \otimes id}"{description}, dashed, from=1-1, to=2-4]
	\arrow["{\overline{d}_{\Upsilon,(\Gamma,\Delta_1),\Delta_2}}"{description}, dashed, from=1-1, to=3-2]
	\arrow[""{name=1, anchor=center, inner sep=0}, "{\interpretation{\elimbang((v/x)t, y^C.w)}}"{description}, from=1-1, to=13-1]
	\arrow["{id \otimes \sigma}", from=1-6, to=3-6]
	\arrow[""{name=2, anchor=center, inner sep=0}, "{id \otimes \sigma \otimes id}"{description}, dashed, from=2-4, to=3-2]
	\arrow["{d_\Upsilon \otimes id \otimes id}"{description}, dashed, from=3-2, to=5-3]
	\arrow["{\interpretation{(v/x)t} \otimes id}"{description}, dashed, from=3-2, to=7-2]
	\arrow["{(4)}"{description, pos=0.4}, draw=none, from=3-2, to=11-3]
	\arrow["{id \otimes \sigma}"{description}, dashed, from=3-6, to=4-4]
	\arrow[""{name=3, anchor=center, inner sep=0}, "{id \otimes \interpretation v}", from=3-6, to=13-6]
	\arrow["{d_\Upsilon \otimes id}"{description}, dashed, from=4-4, to=7-4]
	\arrow[""{name=4, anchor=center, inner sep=0}, "{id \otimes \interpretation v \otimes id}"{description}, curve={height=-30pt}, dashed, from=4-4, to=11-5]
	\arrow["{id \otimes \sigma \otimes id}"{description}, dashed, from=5-3, to=8-3]
	\arrow["{\sigma \otimes id}"{description}, dashed, from=7-2, to=11-2]
	\arrow["{id \otimes \sigma \otimes id}"{description}, dashed, from=7-4, to=8-3]
	\arrow["{id \otimes \interpretation v \otimes id}"{description}, dashed, from=7-4, to=9-4]
	\arrow["{(5)}"{description}, draw=none, from=8-3, to=9-4]
	\arrow["{id \otimes \interpretation v \otimes id}"{description}, dashed, from=8-3, to=11-3]
	\arrow[""{name=5, anchor=center, inner sep=0}, "{id \otimes \sigma \otimes id}"{description}, dashed, from=9-4, to=11-3]
	\arrow["{\interpretation w}"{description}, dashed, from=11-2, to=13-1]
	\arrow["{\interpretation{t} \otimes id}"{description}, dashed, from=11-3, to=7-2]
	\arrow["{d_\Upsilon \otimes id}"{description}, dashed, from=11-5, to=9-4]
	\arrow["{\overline{d}_{\Upsilon, (\Delta_1, A), \Delta_2}}", curve={height=-18pt}, dashed, from=11-5, to=11-3]
	\arrow["{id \otimes \sigma}"{description}, dashed, from=13-6, to=11-5]
	\arrow[""{name=6, anchor=center, inner sep=0}, "{\interpretation{\elimbang(t,y^C.w)}}", from=13-6, to=13-1]
	\arrow["{(1)}"{description}, draw=none, from=3-2, to=0]
	\arrow["{(2)}"{description}, draw=none, from=4-4, to=2]
	\arrow["{(7)}"{description}, draw=none, from=4, to=3]
	\arrow["{(3)}"{description}, curve={height=-30pt}, draw=none, from=7-2, to=1]
	\arrow["{(6)}"{description}, draw=none, from=9-4, to=4]
	\arrow["{(9)}"{description}, draw=none, from=11-3, to=6]
	\arrow["{(8)}"{description}, draw=none, from=11-5, to=5]
\end{tikzcd}}\]
	
	(1) commutes by definition of $\overline{d}$, (2) commutes by Lemma~\ref{lem:comonoidgen}, (3) commutes by definition of\\ $\interpretation{\elimbang((v/x)t, y^C.w)}$, (4) commutes by the induction hypothesis, (5) commutes by naturality of $\sigma$, (6) commutes by functoriality of $\otimes$, (7) commutes by naturality of $\sigma$, (8) commutes by definition of $\overline{d}$ and (9) commutes by definition of $\interpretation{\elimbang(t,y^C.w)}$.

	\item If $x \in \fv(w)$, then $\Upsilon;\Delta_1 \vdash t:\oc C$ and $\Upsilon,y^C;\Delta_2, x^A \vdash w:B$. The commuting diagram is shown in Figure~\ref{fig:lem:linearsubstitutionelimbang} (Appendix~\ref{app:diagramasgrandes}).

(1) and (2) commute by functoriality of $\otimes$, (3) commutes by definition
of $\overline d$, (4) commutes by coherence, (5) and (6) commute by naturality
of $\sigma$, (7) commutes by functoriality of $\otimes$, (8) commutes by
definition of $\interpretation{\elimbang(t,y^C.(v/x)w)}$, (9) commutes by
definition of $d\_$, (10) commutes by the induction hypothesis, (11) commutes
by coherence, (12) commutes by functoriality of $\otimes$, (13) commutes by
naturality of $\sigma$, (14) commutes by coherence, (15) commutes because $(\oc
\interpretation{C}, d_C, e_C)$ is a comonoid, (16), (17), (18), (19) and (20)
commute by naturality of $\sigma$, (21) commutes by
Lemma~\ref{lem:weakeningcat}, where
$\interpretation{v}':\interpretation{\oc \Upsilon} \otimes \oc
\interpretation{C} \otimes \interpretation{\Gamma} \to \interpretation{A}$ is
the interpretation of $v$ with $y^C$ in the intuitionistic context, (22) commutes
by naturality of $\sigma$, (23), (24), (25) and (26) commute by functoriality
of $\otimes$, (27) commutes by definition of $\overline d$ and (28) commutes by
definition of $\interpretation{\elimbang(t,y^C.w)}$.\qedhere
\end{itemize}
  \end{itemize}
\end{proof}

\soundness*
\begin{proof}
  By induction on $\to$.
  \begin{itemize}
    \item If $t = \elimone(a.\star, u)$ and $r = a \dotprod u$, then $\Upsilon;\varnothing \vdash a.\star:\one$ and $\Upsilon;\Gamma \vdash u:A$.
\[\begin{tikzcd}[ampersand replacement=\&,cramped, column sep=tiny]
	{\interpretation{\oc\Upsilon} \otimes \interpretation{\Gamma}} \&\&\&\&\& {\interpretation{\oc\Upsilon} \otimes I\otimes \interpretation{\oc\Upsilon} \otimes \interpretation{\Gamma}} \&\&\& {I \otimes \interpretation{A}} \&\&\&\&\& {\interpretation{A}} \\
	\\
	\&\&\& {\interpretation{\oc\Upsilon} \otimes \interpretation{\oc\Upsilon} \otimes \interpretation{\Gamma}} \&\& {\interpretation{\oc\Upsilon} \otimes \interpretation{\oc\Upsilon} \otimes \interpretation{\Gamma}} \\
	\&\&\&\&\&\&\&\&\&\& {I \otimes \interpretation{A}} \\
	\&\&\&\&\& {I \otimes \interpretation{\oc\Upsilon} \otimes \interpretation{\Gamma}} \\
	\\
	\&\&\&\&\&\& {\interpretation{A}}
	\arrow["{\overline{d}_{\Upsilon, \varnothing, \Gamma}}"{description}, from=1-1, to=1-6]
	\arrow["{\interpretation{\elimone(a.\star, u)}}"{description}, dotted, from=1-1, to=1-14, rounded corners, to path={(\tikztostart.north) -- ([yshift=3ex]\tikztostart.north) -- ([yshift=3ex]\tikztotarget.north) \tikztonodes -- (\tikztotarget.north)}]
	\arrow["{\interpretation{a \dotprod u}}"{description}, dotted, from=1-1, to=1-14, rounded corners, to path={(\tikztostart.south) -- ([yshift=-170pt]\tikztostart.south) -- ([yshift=-170pt]\tikztotarget.south) \tikztonodes -- (\tikztotarget.south)}]
	\arrow["{d_\Upsilon \otimes id}"{description}, dashed, from=1-1, to=3-4]
	\arrow["{(1)}"{description, pos=0.8}, shift left=3, curve={height=-30pt}, draw=none, from=1-1, to=3-4]
	\arrow["{\lambda^{-1}}"{description}, curve={height=30pt}, dashed, from=1-1, to=5-6]
	\arrow["{\interpretation{u}}"{description}, curve={height=80pt}, from=1-1, to=7-7]
	\arrow["{\interpretation{a.\star} \otimes \interpretation{u}}"{description}, from=1-6, to=1-9]
	\arrow["{\rho \otimes id}"{description}, dashed, from=1-6, to=3-6]
	\arrow["\lambda"{description}, from=1-9, to=1-14]
	\arrow["{(6)}"{description}, draw=none, from=1-9, to=4-11]
	\arrow["{\rho^{-1} \otimes id}"{description}, dashed, from=3-4, to=1-6]
	\arrow["{(2)}"{description}, shift left=3, curve={height=24pt}, draw=none, from=3-4, to=1-6]
	\arrow[equals, from=3-4, to=3-6]
	\arrow["{(4)}"{description, pos=0.3}, draw=none, from=3-4, to=5-6]
	\arrow["{e_\Upsilon \otimes id}"{description}, dashed, from=3-6, to=5-6]
	\arrow["{\mono{a} \otimes id}"{description}, curve={height=-24pt}, dashed, from=4-11, to=1-9]
	\arrow["{id \otimes \widehat{\mono{a}}}"{description}, curve={height=30pt}, dashed, from=4-11, to=1-9]
	\arrow["\lambda"{description}, dashed, from=4-11, to=7-7]
	\arrow["{(8)}"{description, pos=0.1}, shift left=3, curve={height=-30pt}, draw=none, from=4-11, to=7-7]
	\arrow["\lambda"{description}, curve={height=-40pt}, dashed, from=5-6, to=1-1]
	\arrow["{(3)}"{description}, curve={height=30pt}, draw=none, from=5-6, to=1-6]
	\arrow["{\mono{a} \otimes \interpretation{u}}"{description}, dashed, from=5-6, to=1-9]
	\arrow["{(5)}"{description}, shift left=3, curve={height=30pt}, draw=none, from=5-6, to=1-9]
	\arrow["{id \otimes \interpretation{u}}"{description}, dashed, from=5-6, to=4-11]
	\arrow["{(7)}"{description}, curve={height=-50pt}, draw=none, from=7-7, to=1-1]
	\arrow["{\widehat{\mono{a}}}"{description}, curve={height=70pt}, from=7-7, to=1-14]
\end{tikzcd}\]

(1) commutes by definition of $\overline d$, (2) commutes trivially, (3) commutes by definition of $\interpretation{a.\star}$ and functoriality of $\otimes$, (4) commutes by Lemma~\ref{lem:comonoidgen}, (5) commutes by functoriality of $\otimes$, (6) commutes by Lemma~\ref{lem:hataprop}, (7) and (8) commute by naturality of $\lambda$.

\item If $t = (\lambda x^B.u)~v$ and $r = (v/x)u$, then $\Gamma = \Gamma_1, \Gamma_2$, $\Upsilon;\Gamma_1, x^B \vdash u:A$ and $\Upsilon;\Gamma_2 \vdash v:B$.
\[\resizebox{.85\textwidth}{!}{\begin{tikzcd}[ampersand replacement=\&,cramped,column sep=0em]
	{\interpretation{\oc \Upsilon} \otimes \interpretation{\Gamma_1} \otimes \interpretation{\Gamma_2}} \&\&\&\&\& \begin{array}{c} \interpretation{\oc \Upsilon} \otimes \interpretation{\Gamma_1} \\\otimes \interpretation{\oc \Upsilon} \otimes \interpretation{\Gamma_2} \end{array} \\
	\&\&\& \begin{array}{c} \interpretation{\oc \Upsilon} \otimes \interpretation{\oc \Upsilon} \\\otimes \interpretation{\Gamma_1} \otimes \interpretation{\Gamma_2} \end{array} \\
	\&\&\& \begin{array}{c} \interpretation{\oc \Upsilon} \otimes \interpretation{\oc \Upsilon} \\\otimes \interpretation{\Gamma_1} \otimes \interpretation{\Gamma_2} \end{array} \\
	\\
	\& {\interpretation{\oc \Upsilon} \otimes \interpretation{\Gamma_2} \otimes\interpretation B} \&\& \begin{array}{c} \interpretation{\oc \Upsilon} \otimes \interpretation{\Gamma_2} \\\otimes\interpretation{\oc \Upsilon} \otimes \interpretation{\Gamma_1} \end{array} \\
	\&\&\&\& \begin{array}{c} \interpretation{B} \otimes \\{[}\interpretation B \to \interpretation{\oc \Upsilon} \otimes \interpretation{\Gamma_2} \otimes \interpretation B ] \end{array} \\
	\&\&\& \begin{array}{c} [\interpretation B \to \interpretation{\oc \Upsilon} \otimes \interpretation{\Gamma_2} \otimes \interpretation B ] \\ \otimes \interpretation B \end{array} \\
	\\
	{\interpretation{A}} \&\&\& {[\interpretation B \to \interpretation A] \otimes \interpretation B} \&\& {\interpretation B \otimes [\interpretation B \to \interpretation A]}
	\arrow[""{name=0, anchor=center, inner sep=0}, "{\overline{d}_{\Upsilon, \Gamma_1, \Gamma_2}}"{description}, from=1-1, to=1-6]
	\arrow["{d_\Upsilon \otimes id}"{description}, dashed, from=1-1, to=2-4]
	\arrow["{d_\Upsilon \otimes id}"{description}, dashed, from=1-1, to=3-4]
	\arrow["{\interpretation{(v/x)u}}"{description}, dotted, from=1-1, to=9-1]
	\arrow["{\interpretation{(\lambda x^B.u)\ v}}"{description}, dotted, from=1-1, to=9-1, rounded corners, to path={-- ([yshift=3ex]\tikztostart.north) -| ([xshift=455pt]\tikztostart.north) -- ([xshift=455pt,yshift=-3ex]\tikztotarget.south) -- ([yshift=-3ex]\tikztotarget.south) \tikztonodes -- (\tikztotarget.south)}]
	\arrow["\sigma"{description}, dashed, from=1-6, to=5-4]
	\arrow["{\interpretation v \otimes \eta_B}"{description}, dashed, from=1-6, to=6-5]
	\arrow["{\interpretation{v} \otimes \interpretation{\lambda x^B.u}}"{description}, from=1-6, to=9-6]
	\arrow["{id \otimes \sigma \otimes id}"{description}, dashed, from=2-4, to=1-6]
	\arrow["{id \otimes \sigma}"{description}, dashed, from=3-4, to=5-4]
	\arrow["{(2)}"{description, pos=0.3}, draw=none, from=5-2, to=1-1]
	\arrow["{\eta_B \otimes id}"{description}, dashed, from=5-2, to=7-4]
	\arrow["{\interpretation u}"{description}, dashed, from=5-2, to=9-1]
	\arrow["{(3)}"{description, pos=0.4}, shift left=3, curve={height=-24pt}, draw=none, from=5-4, to=1-6]
	\arrow["{id \otimes \interpretation v}"', curve={height=12pt}, dashed, from=5-4, to=5-2]
	\arrow["{\eta_B \otimes \interpretation v}"{description}, dashed, from=5-4, to=7-4]
	\arrow["{id \otimes [\interpretation B \to \interpretation u]}"{description}, dashed, from=6-5, to=9-6]
	\arrow["{\varepsilon_B}"{description}, curve={height=-30pt}, dashed, from=7-4, to=5-2]
	\arrow["{(5)}"{description}, shift left=3, curve={height=-6pt}, draw=none, from=7-4, to=5-2]
	\arrow["{(6)}"{description, pos=0.6}, shift left=3, curve={height=-30pt}, draw=none, from=7-4, to=5-4]
	\arrow["{(7)}"{description}, draw=none, from=7-4, to=6-5]
	\arrow["{(4)}"{description}, curve={height=12pt}, draw=none, from=7-4, to=9-1]
	\arrow["{[\interpretation B \to \interpretation u] \otimes id}"{description}, dashed, from=7-4, to=9-4]
	\arrow["{\varepsilon_A}"{description}, from=9-4, to=9-1]
	\arrow["{(8)}"{description}, shift left=3, curve={height=-30pt}, draw=none, from=9-6, to=1-6]
	\arrow["\sigma"{description}, from=9-6, to=9-4]
	\arrow["{(1)}"{description}, draw=none, from=2-4, to=0]
\end{tikzcd}}\]
(1) commutes by definition of $\overline d$, (2) commutes by Lemma~\ref{lem:linearsubstitution}, (3) commutes by Lemma~\ref{lem:comonoidgen}, (4) commutes by naturality of $\varepsilon$, (5) commutes by the adjuction axiom, (6) commutes by functoriality of $\otimes$, (7) commutes by naturality of $\sigma$ and (8) commutes by definition of $\interpretation{\lambda x^B.u}$ and functoriality of $\otimes$.

\item If $t = \elimtens(u \otimes v, x^B y^C.w)$ and $r = (u/x, v/y)w$, then $\Gamma = \Gamma_1, \Gamma_2, \Gamma_3$, $\Upsilon;\Gamma_1\vdash u:B$, $\Upsilon;\Gamma_2\vdash v:C$ and $\Upsilon;\Gamma_3,x^B,y^C \vdash w:A$.
\[\resizebox{.85\textwidth}{!}{\begin{tikzcd}[ampersand replacement=\&,cramped,column sep=tiny]
	{\interpretation{\oc \Upsilon} \otimes \interpretation{\Gamma_1} \otimes \interpretation{\Gamma_2} \otimes \interpretation{\Gamma_3}} \&\&\&\& \begin{array}{c} \interpretation{\oc \Upsilon} \otimes \interpretation{\Gamma_1} \otimes \interpretation{\Gamma_2}\\ \otimes \interpretation{\oc \Upsilon} \otimes \interpretation{\Gamma_3} \end{array} \\
	\&\& \begin{array}{c} \interpretation{\oc \Upsilon} \otimes \interpretation{\oc \Upsilon} \otimes\\ \interpretation{\Gamma_1} \otimes \interpretation{\Gamma_2} \otimes \interpretation{\Gamma_3} \end{array} \\
	\\
	\& \begin{array}{c} \interpretation{\oc \Upsilon} \otimes \interpretation{\oc \Upsilon} \otimes\\ \interpretation{\Gamma_1} \otimes \interpretation{\Gamma_2} \otimes \interpretation{\Gamma_3} \end{array} \& \begin{array}{c} \interpretation{\oc \Upsilon} \otimes \interpretation{\oc \Upsilon} \otimes \interpretation{\oc \Upsilon} \otimes\\ \interpretation{\Gamma_1} \otimes \interpretation{\Gamma_2} \otimes \interpretation{\Gamma_3} \end{array} \& \begin{array}{c} \interpretation{\oc \Upsilon} \otimes \interpretation{\oc \Upsilon} \otimes \interpretation{\Gamma_1} \otimes \interpretation{\Gamma_2} \\\otimes \interpretation{\oc \Upsilon} \otimes \interpretation{\Gamma_3} \end{array} \\
	\\
	\& \begin{array}{c} \interpretation{\oc \Upsilon} \otimes \interpretation{\Gamma_2} \otimes \interpretation{\Gamma_3}\\ \otimes \interpretation{\oc \Upsilon} \otimes \interpretation{\Gamma_1} \end{array} \& \begin{array}{c} \interpretation{\oc \Upsilon} \otimes \interpretation{\oc \Upsilon} \otimes \interpretation{\Gamma_2} \otimes \interpretation{\Gamma_3}\\ \otimes \interpretation{\oc \Upsilon} \otimes \interpretation{\Gamma_1} \end{array} \&\& \begin{array}{c} \interpretation{B} \otimes \interpretation{C} \\\otimes \interpretation{\oc \Upsilon} \otimes \interpretation{\Gamma_3} \end{array} \\
	\\
	\&\&\& \begin{array}{c} \interpretation{\oc \Upsilon} \otimes \interpretation{\Gamma_1} \otimes \interpretation{\oc \Upsilon} \otimes \interpretation{\Gamma_2} \\\otimes \interpretation{\oc \Upsilon} \otimes \interpretation{\Gamma_3} \end{array} \\
	\& \begin{array}{c} \interpretation{\oc \Upsilon} \otimes \interpretation{\Gamma_2} \otimes \interpretation{\Gamma_3} \\\otimes \interpretation B \end{array} \& \begin{array}{c} \interpretation{\oc \Upsilon} \otimes \interpretation{\oc \Upsilon} \otimes \\\interpretation{\Gamma_2} \otimes \interpretation{\Gamma_3} \otimes \interpretation B \end{array} \\
	\&\&\& \begin{array}{c} \interpretation{\oc \Upsilon} \otimes \interpretation{\Gamma_3} \otimes \\\interpretation{\oc \Upsilon} \otimes \interpretation{\Gamma_1} \otimes \interpretation{\oc \Upsilon} \otimes \interpretation{\Gamma_2} \end{array} \\
	\\
	\&\& \begin{array}{c} \interpretation{\oc \Upsilon} \otimes \interpretation{\Gamma_3} \otimes\\ \interpretation B \otimes \interpretation{\oc \Upsilon} \otimes \interpretation{\Gamma_2} \end{array} \\
	{\interpretation A} \&\&\&\& \begin{array}{c} \interpretation{\oc \Upsilon} \otimes \interpretation{\Gamma_3}\\ \otimes \interpretation{B} \otimes \interpretation{C} \end{array}
	\arrow[""{name=0, anchor=center, inner sep=0}, "{\overline{d}_{\Upsilon,(\Gamma_1,\Gamma_2),\Gamma_3}}"{description}, from=1-1, to=1-5]
	\arrow["{d_\Upsilon \otimes id}"{description}, dashed, from=1-1, to=2-3]
	\arrow["{d_\Upsilon \otimes id}"{description}, dashed, from=1-1, to=4-2]
	\arrow["{\interpretation{(u/x,v/y)w}}"{description}, dotted, from=1-1, to=13-1]
	\arrow["{\interpretation{\elimtens(u \otimes v, x^B y^C.w)}}"{description}, dotted, from=1-1, to=13-1, rounded corners, to path={-- ([yshift=3ex]\tikztostart.north) -| ([xshift=510pt]\tikztostart.north) -- ([yshift=-3ex,xshift=510pt]\tikztotarget.south) -- ([yshift=-3ex]\tikztotarget.south) \tikztonodes -- (\tikztotarget.south)}]
	\arrow["{d_\Upsilon \otimes id}"{description}, dashed, from=1-5, to=4-4]
	\arrow[""{name=1, anchor=center, inner sep=0}, "{\interpretation{u \otimes v} \otimes id}"{description}, from=1-5, to=6-5]
	\arrow["{\overline{d}_{\Upsilon, \Gamma_1, \Gamma_2} \otimes id}"{description}, dashed, from=1-5, to=8-4]
	\arrow["{id \otimes \sigma \otimes id}"{description}, dashed, from=2-3, to=1-5]
	\arrow["{d_\Upsilon \otimes id}"', curve={height=12pt}, dashed, from=4-2, to=4-3]
	\arrow["{id \otimes \sigma}"{description}, dashed, from=4-2, to=6-2]
	\arrow["{(3)}"{description, pos=0.3}, draw=none, from=4-3, to=1-5]
	\arrow["{id \otimes \sigma}"{description}, dashed, from=4-3, to=6-3]
	\arrow["{id \otimes \sigma \otimes id}"{description}, dashed, from=4-4, to=8-4]
	\arrow["{(7)}"{description, pos=0.3}, curve={height=-18pt}, draw=none, from=4-4, to=8-4]
	\arrow["{(4)}"{description}, draw=none, from=6-2, to=4-3]
	\arrow["{d_\Upsilon \otimes id}"', curve={height=12pt}, dashed, from=6-2, to=6-3]
	\arrow["{id \otimes \interpretation{u}}"{description}, dashed, from=6-2, to=9-2]
	\arrow["{id \otimes \interpretation{u}}"{description}, dashed, from=6-3, to=9-3]
	\arrow["{id \otimes \sigma}"{description}, dashed, from=6-3, to=10-4]
	\arrow[""{name=2, anchor=center, inner sep=0}, "\sigma"{description}, from=6-5, to=13-5]
	\arrow["{\interpretation{u} \otimes \interpretation{v} \otimes id}"{description}, dashed, from=8-4, to=6-5]
	\arrow["\sigma"{description}, dashed, from=8-4, to=10-4]
	\arrow["{(5)}"{description}, draw=none, from=9-2, to=6-3]
	\arrow["{d_\Upsilon \otimes id}"', curve={height=12pt}, dashed, from=9-2, to=9-3]
	\arrow["{(12)}"{description}, curve={height=30pt}, draw=none, from=9-2, to=12-3]
	\arrow["{\interpretation{(v/y)w}}"{description}, dashed, from=9-2, to=13-1]
	\arrow["{id \otimes \sigma}"{description}, dashed, from=9-3, to=12-3]
	\arrow["{(10)}"{description}, draw=none, from=10-4, to=9-3]
	\arrow["{id \otimes \interpretation u \otimes id}"{description}, dashed, from=10-4, to=12-3]
	\arrow["{id \otimes \interpretation{u} \otimes \interpretation{v}}"{description}, dashed, from=10-4, to=13-5]
	\arrow[""{name=3, anchor=center, inner sep=0}, "{id \otimes \interpretation v}"{description}, dashed, from=12-3, to=13-5]
	\arrow["{(1)}"{description, pos=0.6}, draw=none, from=13-1, to=4-2]
	\arrow["{\interpretation w}"{description}, from=13-5, to=13-1]
	\arrow["{(2)}"{description}, draw=none, from=2-3, to=0]
	\arrow["{(8)}"{description}, draw=none, from=8-4, to=1]
	\arrow["{(9)}"{description}, curve={height=-18pt}, draw=none, from=10-4, to=2]
	\arrow["{(11)}"{description}, draw=none, from=10-4, to=3]
\end{tikzcd}}\]
(1) commutes by Lemma~\ref{lem:linearsubstitution}, (2) commutes by definition of $\overline d$, (3) commutes by Lemma~\ref{lem:comonoidgen}, (4) commutes by functoriality of $\otimes$, (5) commutes by definition of $\overline d$, (6) commutes by definition of $\interpretation{u \otimes v}$, (7) commutes by naturality of $\sigma$, (8) commutes by naturality of $\sigma$ and functoriality of $\otimes$, (9) commutes by functoriality of $\otimes$ and (10) commutes by Lemma~\ref{lem:linearsubstitution}.

\item If $t = \elimwith^1(\pair{u}{v}, x^B.w)$ and $r = (u/x)w$, then $\Gamma = \Gamma_1, \Gamma_2$, $\Upsilon;\Gamma_1 \vdash u:B$, $\Upsilon;\Gamma_1 \vdash v:C$ and $\Upsilon;\Gamma_2,x^B \vdash w:A$.
\[\resizebox{.85\textwidth}{!}{\begin{tikzcd}[ampersand replacement=\&,cramped,column sep=tiny]
	{\interpretation{\oc\Upsilon} \otimes \interpretation{\Gamma_1} \otimes \interpretation{\Gamma_2}} \&\&\&\&\& {\interpretation{\oc\Upsilon} \otimes \interpretation{\Gamma_1} \otimes \interpretation{\oc\Upsilon} \otimes \interpretation{\Gamma_2}} \\
	\&\& {\interpretation{\oc\Upsilon} \otimes \interpretation{\oc\Upsilon} \otimes \interpretation{\Gamma_1} \otimes \interpretation{\Gamma_2}} \\
	\& {\interpretation{\oc\Upsilon} \otimes \interpretation{\oc\Upsilon} \otimes \interpretation{\Gamma_1} \otimes \interpretation{\Gamma_2}} \\
	\\
	\\
	\&\& {\interpretation{\oc\Upsilon} \otimes \interpretation{\Gamma_1} \otimes \interpretation{\oc\Upsilon} \otimes \interpretation{\Gamma_2}} \& \begin{array}{c} ((\interpretation{\oc\Upsilon} \otimes \interpretation{\Gamma_1}) \oplus (\interpretation{\oc\Upsilon} \otimes \interpretation{\Gamma_1}))\\ \otimes \interpretation{\oc\Upsilon} \otimes \interpretation{\Gamma_2} \end{array} \\
	\&\&\&\&\& {(\interpretation B \oplus \interpretation C) \otimes \interpretation{\oc\Upsilon} \otimes \interpretation{\Gamma_2}} \\
	\\
	\&\&\&\&\& {\interpretation B \otimes \interpretation{\oc\Upsilon} \otimes \interpretation{\Gamma_2}} \\
	\& {\interpretation{\oc\Upsilon} \otimes \interpretation{\Gamma_2} \otimes \interpretation{\oc\Upsilon} \otimes \interpretation{\Gamma_1}} \\
	\\
	{\interpretation A} \&\&\&\&\& {\interpretation{\oc\Upsilon} \otimes \interpretation{\Gamma_2} \otimes \interpretation B }
	\arrow[""{name=0, anchor=center, inner sep=0}, "{\overline{d}_{\Upsilon, \Gamma_1, \Gamma_2}}"{description}, from=1-1, to=1-6]
	\arrow["{d_\Upsilon \otimes id}"{description}, dashed, from=1-1, to=2-3]
	\arrow["{d_\Upsilon \otimes id}"{description}, dashed, from=1-1, to=3-2]
	\arrow[""{name=1, anchor=center, inner sep=0}, "{\interpretation{(u/x)w}}"{description}, dotted, from=1-1, to=12-1]
	\arrow["{\interpretation{\elimwith^1(\pair{u}{v}, x^B.w)}}"{description}, dotted, from=1-1, to=12-1, rounded corners, to path={-- ([yshift=3ex]\tikztostart.north) -| ([xshift=550pt]\tikztostart.north) -- ([yshift=-3ex,xshift=550pt]\tikztotarget.south) -- ([yshift=-3ex]\tikztotarget.south) \tikztonodes -- (\tikztotarget.south)}]
	\arrow["{\Delta \otimes id}"{description}, curve={height=6pt}, dashed, from=1-6, to=6-4]
	\arrow[""{name=2, anchor=center, inner sep=0}, "{\interpretation{\langle u,v \rangle} \otimes id}"{description}, from=1-6, to=7-6]
	\arrow["{id \otimes \sigma \otimes id}"{description}, dashed, from=2-3, to=1-6]
	\arrow["{id \otimes \sigma}"{description}, dashed, from=3-2, to=10-2]
	\arrow[""{name=3, anchor=center, inner sep=0}, curve={height=-6pt}, equals, from=6-3, to=1-6]
	\arrow["{(3)}"{description}, curve={height=30pt}, draw=none, from=6-3, to=3-2]
	\arrow["{\interpretation{u} \otimes id}"{description}, curve={height=30pt}, dashed, from=6-3, to=9-6]
	\arrow["\sigma"{description}, dashed, from=6-3, to=10-2]
	\arrow["{\pi_1 \otimes id}", curve={height=-18pt}, dashed, from=6-4, to=6-3]
	\arrow["{(\interpretation u \oplus \interpretation v) \otimes id}"{description}, curve={height=-30pt}, dashed, from=6-4, to=7-6]
	\arrow["{\pi_1 \otimes id}"{description}, from=7-6, to=9-6]
	\arrow["{(6)}"{description, pos=0.7}, curve={height=-24pt}, draw=none, from=9-6, to=6-4]
	\arrow["\sigma"{description}, from=9-6, to=12-6]
	\arrow[""{name=4, anchor=center, inner sep=0}, "{id \otimes \interpretation u}"{description}, dashed, from=10-2, to=12-6]
	\arrow["{\interpretation w}"{description}, from=12-6, to=12-1]
	\arrow["{(1)}"{description}, draw=none, from=2-3, to=0]
	\arrow["{(7)}"{description, pos=0.6}, draw=none, from=6-3, to=4]
	\arrow["{(4)}"{description}, draw=none, from=6-4, to=3]
	\arrow["{(5)}"{description}, curve={height=-12pt}, draw=none, from=6-4, to=2]
	\arrow["{(2)}"{description}, draw=none, from=10-2, to=1]
\end{tikzcd}}\]
(1) commutes by definition of $\overline d$, (2) commutes by Lemma~\ref{lem:linearsubstitution}, (3) commutes by Lemma~\ref{lem:comonoidgen}, (4) commutes by definition of $\Delta$, (5) commutes by definition of $\interpretation{\pair{u}{v}}$, (6) commutes by definition of $\interpretation{u} \oplus \interpretation{v}$ and (7) commutes by naturality of $\sigma$.

\item If $t = \elimwith^2(\pair{u}{v}, y^C.w)$ and $r = (v/y)w$, this case is similar to the previous one.

\item If $t = \elimplus(\inl(u), x^B.v, y^C.w)$ and $r = (u/x)v$, then $\Gamma = \Gamma_1,\Gamma_2$, $\Upsilon;\Gamma_1 \vdash u:B$, $\Upsilon;\Gamma_2, x^B \vdash v:A$ and $\Upsilon;\Gamma_2,y^C \vdash w:A$.
\[\begin{tikzcd}[ampersand replacement=\&,cramped,column sep=tiny]
	{\interpretation{\oc \Upsilon} \otimes \interpretation{\Gamma_1} \otimes \interpretation{\Gamma_2}} \&\&\&\&\& {\interpretation{\oc \Upsilon} \otimes \interpretation{\Gamma_1} \otimes \interpretation{\oc \Upsilon} \otimes \interpretation{\Gamma_2}} \\
	\& {\interpretation{\oc \Upsilon} \otimes \interpretation{\oc \Upsilon} \otimes \interpretation{\Gamma_1} \otimes \interpretation{\Gamma_2}} \\
	\\
	\& {\interpretation{\oc \Upsilon} \otimes \interpretation{\oc \Upsilon} \otimes \interpretation{\Gamma_1} \otimes \interpretation{\Gamma_2}} \\
	\\
	\& {\interpretation{\oc\Upsilon} \otimes \interpretation{\Gamma_2} \otimes \interpretation{\oc\Upsilon} \otimes \interpretation{\Gamma_1}} \&\&\&\& {(\interpretation B \oplus \interpretation C) \otimes \interpretation{\oc \Upsilon} \otimes \interpretation{\Gamma_2}} \\
	\&\& {\interpretation B \otimes \interpretation{\oc \Upsilon} \otimes \interpretation{\Gamma_2}} \\
	\& {\interpretation{\oc\Upsilon} \otimes \interpretation{\Gamma_2} \otimes \interpretation B} \&\&\&\& \begin{array}{c} (\interpretation B \otimes \interpretation{\oc \Upsilon} \otimes \interpretation{\Gamma_2})\\ \oplus (\interpretation C \otimes \interpretation{\oc \Upsilon} \otimes \interpretation{\Gamma_2}) \end{array} \\
	\\
	{\interpretation A} \&\&\&\&\& \begin{array}{c} (\interpretation{\oc \Upsilon} \otimes \interpretation{\Gamma_2} \otimes \interpretation B)\\ \oplus (\interpretation{\oc \Upsilon} \otimes \interpretation{\Gamma_2} \otimes \interpretation C) \end{array}
	\arrow[""{name=0, anchor=center, inner sep=0}, "{\overline{d}_{\Upsilon,\Gamma_1,\Gamma_2}}"{description}, from=1-1, to=1-6]
	\arrow["{d_\Upsilon \otimes id}"{description}, dashed, from=1-1, to=2-2]
	\arrow["{d_\Upsilon \otimes id}"{description}, dashed, from=1-1, to=4-2]
	\arrow[""{name=1, anchor=center, inner sep=0}, "{\interpretation{(u/x)v}}"{description}, dotted, from=1-1, to=10-1]
	\arrow["{\interpretation{\elimplus(\inl(u),x^B.v,y^C.w)}}"{description}, dotted, from=1-1, to=10-1, rounded corners, to path={-- ([yshift=3ex]\tikztostart.north) -| ([xshift=390pt]\tikztostart.north) -- ([yshift=-3ex,xshift=390pt]\tikztotarget.south) -- ([yshift=-3ex]\tikztotarget.south) \tikztonodes -- (\tikztotarget.south)}]
	\arrow[""{name=2, anchor=center, inner sep=0}, "{\interpretation{\inl(u)} \otimes id}"{description}, from=1-6, to=6-6]
	\arrow["{\interpretation u \otimes id}"{description}, curve={height=12pt}, dashed, from=1-6, to=7-3]
	\arrow[""{name=3, anchor=center, inner sep=0}, "{id \otimes \sigma \otimes id}"{description}, dashed, from=2-2, to=1-6]
	\arrow["{id \otimes \sigma}"{description}, dashed, from=4-2, to=6-2]
	\arrow[""{name=4, anchor=center, inner sep=0}, "\sigma"{description}, dashed, from=6-2, to=1-6]
	\arrow["{id \otimes \interpretation{u}}"{description}, dashed, from=6-2, to=8-2]
	\arrow[""{name=5, anchor=center, inner sep=0}, "dist"{description}, from=6-6, to=8-6]
	\arrow["{i_1 \otimes id}"{description}, dashed, from=7-3, to=6-6]
	\arrow["{i_1}"{description}, dashed, from=7-3, to=8-6]
	\arrow["\sigma"{description}, dashed, from=8-2, to=7-3]
	\arrow["{(7)}"{description}, draw=none, from=8-2, to=8-6]
	\arrow["{\interpretation v}"{description}, dashed, from=8-2, to=10-1]
	\arrow["{i_1}"{description}, dashed, from=8-2, to=10-6]
	\arrow["{\sigma \oplus \sigma}"{description}, from=8-6, to=10-6]
	\arrow[""{name=6, anchor=center, inner sep=0}, "{[\interpretation{v},\interpretation w]}"{description}, from=10-6, to=10-1]
	\arrow["{(1)}"{description}, draw=none, from=2-2, to=0]
	\arrow["{(3)}"{description}, draw=none, from=4-2, to=3]
	\arrow["{(2)}"{description}, curve={height=30pt}, draw=none, from=6-2, to=1]
	\arrow["{(4)}"{description}, curve={height=-12pt}, draw=none, from=7-3, to=4]
	\arrow["{(5)}"{description}, draw=none, from=7-3, to=2]
	\arrow["{(6)}"{description}, draw=none, from=7-3, to=5]
	\arrow["{(8)}"{description}, draw=none, from=8-2, to=6]
\end{tikzcd}\]
(1) commutes by definition of $\overline d$, (2) commutes by Lemma~\ref{lem:linearsubstitution}, (3) commutes by Lemma~\ref{lem:comonoidgen}, (4) commutes by naturality of $\sigma$, (5) commutes by definition of $\interpretation{\inl(u)}$, (6) commutes by definition of $dist$, (7) commutes by definition of $\sigma  \oplus \sigma$ and (8) commutes by definition of $[\interpretation v, \interpretation w]$.

\item If $t = \elimplus(\inr(u), x^B.v, y^C.w)$ and $r = (u/y)w$, this case is similar to the previous one.

\item If $t = \elimbang(\bang u, x^B.v)$ and $r = (u/x)v$, then $\Upsilon;\varnothing \vdash u:B$ and $\Upsilon, x^B;\Gamma \vdash v:A$.
\[\begin{tikzcd}[ampersand replacement=\&,cramped]
	{\interpretation{\oc \Upsilon} \otimes \interpretation{\Gamma}} \&\&\&\& {\interpretation{A}} \\
	\\
	{\interpretation{\oc \Upsilon} \otimes I\otimes \interpretation{\oc \Upsilon} \otimes \interpretation{\Gamma}} \&\& {\oc\interpretation{B} \otimes \interpretation{\oc \Upsilon} \otimes \interpretation{\Gamma}} \&\& {\interpretation{\oc \Upsilon} \otimes \oc\interpretation{B} \otimes \interpretation{\Gamma}}
	\arrow["{\interpretation{(u/x)v}}", dotted, from=1-1, to=1-5]
	\arrow["{\interpretation{\elimbang(\oc u,x^B.v)}}"{description}, dotted, from=1-1, to=1-5, rounded corners, to path = {-- ([xshift=-30pt]\tikztostart.west) |- ([yshift=-60pt]\tikztostart.south) -- ([yshift=-60pt]\tikztotarget.south) \tikztonodes -| ([xshift=40pt]\tikztotarget.east) -- (\tikztotarget.east)}]
	\arrow["{\overline{d}_{\Upsilon,\varnothing,\Gamma}}"', from=1-1, to=3-1]
	\arrow["{\interpretation{\oc u} \otimes id}"', from=3-1, to=3-3]
	\arrow["{\sigma \otimes id}"', from=3-3, to=3-5]
	\arrow["{\interpretation{v}}"', from=3-5, to=1-5]
\end{tikzcd}\]

	The diagram commutes by Lemma~\ref{lem:nonlinearsubstitution}.
    \item If $t = a.\star \plus b.\star$ and $r = (a + b).\star$, then $A = \one$ and $\Gamma = \varnothing$.
\[\begin{tikzcd}[ampersand replacement=\&,cramped]
	{\interpretation{\oc\Upsilon} \otimes I} \&\& {(\interpretation{\oc\Upsilon} \otimes I) \oplus (\interpretation{\oc\Upsilon} \otimes I)} \&\&\& {I \oplus I} \&\& I \\
	\&\&\& {(1)} \&\&\& {(4)} \\
	\& {(2)} \& {\interpretation{\oc\Upsilon} \oplus \interpretation{\oc\Upsilon}} \&\&\& {I \oplus I} \\
	\&\&\& {(3)} \\
	{\interpretation{\oc\Upsilon}} \&\&\&\&\&\&\& I
	\arrow["\Delta"{description}, from=1-1, to=1-3]
	\arrow["{\interpretation{a.\star \plus b.\star}}"{description}, dotted, from=1-1, to=1-8, rounded corners, to path = {-- ([yshift=3ex]\tikztostart.north) -- ([yshift=3ex]\tikztotarget.north) \tikztonodes -- (\tikztotarget)}]
	\arrow["{\interpretation{(a+b).\star}}"{description}, dotted, from=1-1, to=1-8, rounded corners, to path = {(\tikztostart.west) -- ([xshift=-3ex]\tikztostart.west) -- ([xshift=-3ex,yshift=-140pt]\tikztostart.west) -- ([xshift=3ex,yshift=-140pt]\tikztotarget.east) \tikztonodes |- (\tikztotarget.east)}]
	\arrow["\rho"{description}, from=1-1, to=5-1]
	\arrow["{\interpretation{a.\star} \oplus \interpretation{b.\star}}"{description}, from=1-3, to=1-6]
	\arrow["{\rho \oplus \rho}"{description}, dashed, from=1-3, to=3-3]
	\arrow["\nabla"{description}, from=1-6, to=1-8]
	\arrow["{e_\Upsilon \oplus e_\Upsilon}"{description}, dashed, from=3-3, to=3-6]
	\arrow["{\mono{a} \oplus \mono{b}}"{description}, dashed, from=3-6, to=1-6]
	\arrow["\Delta"{description}, dashed, from=5-1, to=3-3]
	\arrow["{e_\Upsilon}"{description}, from=5-1, to=5-8]
	\arrow["{\mono{a+b}}"{description}, from=5-8, to=1-8]
	\arrow["\Delta"{description}, dashed, from=5-8, to=3-6]
\end{tikzcd}\]

(1) commutes by definition of $\interpretation{a.\star}$ and $\interpretation{b.\star}$, (2) and (3) commute by naturality of $\Delta$, and (4) commutes because $\nabla \circ (\mono{a} \oplus \mono{b}) \circ \Delta = \mono{a} + \mono{b} = \mono{a+b}$, since $\mono{\cdot}$ is a semiring homomorphism.

    \item If $t = (\lambda x^B.u) \plus (\lambda x^B.v)$ and $r =\lambda x^B.(u \plus v)$, then $A = B \multimap C$, $\Upsilon;\Gamma, x^B \vdash u:C$ and $\Upsilon;\Gamma, x^B \vdash v:C$.
\[\resizebox{.85\textwidth}{!}{\begin{tikzcd}[ampersand replacement=\&,cramped,column sep=0pt]
	{\interpretation{\oc\Upsilon} \otimes \interpretation{\Gamma}} \&\& {(\interpretation{\oc\Upsilon} \otimes \interpretation{\Gamma}) \oplus (\interpretation{\oc\Upsilon} \otimes \interpretation{\Gamma})} \\
	\& \begin{array}{c} [\interpretation B \to \interpretation{\oc\Upsilon} \otimes \interpretation{\Gamma} \otimes \interpretation{B}] \\ \oplus \\ {[}\interpretation B \to \interpretation{\oc\Upsilon} \otimes \interpretation{\Gamma} \otimes \interpretation{B}] \end{array} \\
	\\
	\& {[\interpretation B \to (\interpretation{\oc\Upsilon} \otimes \interpretation{\Gamma} \otimes \interpretation{B}) \oplus (\interpretation{\oc\Upsilon} \otimes \interpretation{\Gamma} \otimes \interpretation{B})]} \\
	{[\interpretation B \to \interpretation{\oc\Upsilon} \otimes \interpretation{\Gamma} \otimes \interpretation{B}]} \\
	\& {[\interpretation B \to \interpretation{C} \oplus \interpretation{C}]} \\
	\\
	{[\interpretation B \to \interpretation C]} \&\& {[\interpretation B \to \interpretation C] \oplus [\interpretation B \to \interpretation C]}
	\arrow["\Delta"{description}, from=1-1, to=1-3]
	\arrow["{(1)}"{description}, draw=none, from=1-1, to=2-2]
	\arrow["{\eta_B}"{description}, from=1-1, to=5-1]
	\arrow["{\interpretation{\lambda x^B.u \plus \lambda x^B.v}}"{description}, dotted, from=1-1, to=8-1, rounded corners, to path={-- ([yshift=3ex]\tikztostart.north) -- ([yshift=3ex, xshift=380pt]\tikztostart.north) \tikztonodes |- ([yshift=-3ex]\tikztotarget.south) -- (\tikztotarget)}]
	\arrow["{\interpretation{\lambda x^B.u \plus v}}"{description}, dotted, from=1-1, to=8-1, rounded corners, to path={-- ([xshift=-8ex]\tikztostart.west) -- ([xshift=-8ex,yshift=-200pt]\tikztostart.west) \tikztonodes |- (\tikztotarget.west)}]
	\arrow["{\eta_B \oplus \eta_B}"{description}, dashed, from=1-3, to=2-2]
	\arrow["{\interpretation{\lambda x^B.u} \oplus \interpretation{\lambda x^B.v}}"{description}, from=1-3, to=8-3]
	\arrow[""{name=0, anchor=center, inner sep=0}, "{[\interpretation B \to \interpretation u] \oplus [\interpretation B \to \interpretation v]}"{description}, curve={height=-24pt}, dashed, from=2-2, to=8-3]
	\arrow["\gamma"{description}, dashed, from=4-2, to=2-2]
	\arrow["{[\interpretation B \to \interpretation u \oplus \interpretation v]}"{description}, dashed, from=4-2, to=6-2]
	\arrow[""{name=1, anchor=center, inner sep=0}, "\Delta"{description}, curve={height=-18pt}, dashed, from=5-1, to=2-2]
	\arrow["{[\interpretation B \to \Delta]}"{description}, dashed, from=5-1, to=4-2]
	\arrow["{(4)}"{description}, draw=none, from=5-1, to=6-2]
	\arrow["{[\interpretation B \to \interpretation{u \plus v}]}"{description}, from=5-1, to=8-1]
	\arrow["{[\interpretation B \to \nabla]}"{description}, dashed, from=6-2, to=8-1]
	\arrow["\gamma"{description}, dashed, from=6-2, to=8-3]
	\arrow[""{name=2, anchor=center, inner sep=0}, "\nabla"{description}, from=8-3, to=8-1]
	\arrow["{(2)}"{description}, draw=none, from=1-3, to=0]
	\arrow["{(3)}"{description}, draw=none, from=4-2, to=1]
	\arrow["{(6)}"{description}, draw=none, from=6-2, to=2]
	\arrow["{(5)}"{description}, draw=none, from=6-2, to=0]
\end{tikzcd}}\]

(1) commutes by naturality of $\Delta$, (2) commutes by definition of $\interpretation{\lambda x^B.u}$ and $\interpretation{\lambda x^B.v}$, (3) commutes by Corollary~\ref{cor:deltasemiadditive}, (4) commutes by definition of $\interpretation{u \plus v}$, (5) commutes by naturality of $\gamma$ (Lemma~\ref{lem:semiadditiveiso}) and (6) commutes by Corollary~\ref{cor:nablasemiadditive}.

    \item If $t = \elimtens(u \plus v,x^B y^C.w)$ and $r = \elimtens(u, x^B y^C.w) \plus \elimtens(v,x^B y^C.w)$, then $\Gamma = \Gamma_1,\Gamma_2$, $\Upsilon;\Gamma_1 \vdash u:B \otimes C$, $\Upsilon;\Gamma_1 \vdash v:B \otimes C$ and $\Upsilon;\Gamma_2,x^B,y^C \vdash w:A$.

\[\resizebox{.85\textwidth}{!}{\begin{tikzcd}[ampersand replacement=\&,cramped,column sep=tiny]
	{\interpretation{\oc\Upsilon} \otimes \interpretation{\Gamma_1} \otimes \interpretation{\Gamma_2}} \&\&\&\& {\interpretation{\oc\Upsilon} \otimes \interpretation{\Gamma_1} \otimes \interpretation{\oc\Upsilon} \otimes \interpretation{\Gamma_2}} \\
	\\
	\& {\interpretation{\oc\Upsilon} \otimes \interpretation{\oc\Upsilon} \otimes \interpretation{\Gamma_1} \otimes \interpretation{\Gamma_2}} \\
	\\
	\& \begin{array}{c} (\interpretation{\oc\Upsilon} \otimes \interpretation{\oc\Upsilon} \otimes \interpretation{\Gamma_1} \otimes \interpretation{\Gamma_2}) \\\oplus (\interpretation{\oc\Upsilon} \otimes \interpretation{\oc\Upsilon} \otimes \interpretation{\Gamma_1} \otimes \interpretation{\Gamma_2}) \end{array} \\
	\&\&\& \begin{array}{c} ((\interpretation{\oc\Upsilon} \otimes \interpretation{\Gamma_1} )\\\oplus (\interpretation{\oc\Upsilon} \otimes \interpretation{\Gamma_1}))\\ \otimes \interpretation{\oc\Upsilon} \otimes \interpretation{\Gamma_2} \end{array} \\
	\begin{array}{c} (\interpretation{\oc\Upsilon} \otimes \interpretation{\Gamma_1} \otimes \interpretation{\Gamma_2})\\ \oplus (\interpretation{\oc\Upsilon} \otimes \interpretation{\Gamma_1} \otimes \interpretation{\Gamma_2}) \end{array} \\
	\&\&\& \begin{array}{c} ((\interpretation B \otimes \interpretation C)\\\oplus (\interpretation B \otimes \interpretation C))\\ \otimes \interpretation{\oc\Upsilon} \otimes \interpretation{\Gamma_2} \end{array} \\
	\& \begin{array}{c} (\interpretation{\oc\Upsilon} \otimes \interpretation{\Gamma_1} \otimes \interpretation{\oc\Upsilon} \otimes \interpretation{\Gamma_2})\\\oplus (\interpretation{\oc\Upsilon} \otimes \interpretation{\Gamma_1} \otimes \interpretation{\oc\Upsilon} \otimes \interpretation{\Gamma_2}) \end{array} \\
	\\
	\& \begin{array}{c} (\interpretation B \otimes \interpretation C \otimes \interpretation{\oc\Upsilon} \otimes \interpretation{\Gamma_2})\\\oplus (\interpretation B \otimes \interpretation C \otimes \interpretation{\oc\Upsilon} \otimes \interpretation{\Gamma_2}) \end{array} \&\&\& {\interpretation B \otimes \interpretation C \otimes \interpretation{\oc\Upsilon} \otimes \interpretation{\Gamma_2}} \\
	\\
	\& \begin{array}{c} (\interpretation{\oc\Upsilon} \otimes \interpretation{\Gamma_2} \otimes \interpretation B \otimes \interpretation C) \\\oplus (\interpretation{\oc\Upsilon} \otimes \interpretation{\Gamma_2} \otimes \interpretation B \otimes \interpretation C) \end{array} \&\&\& {\interpretation{\oc\Upsilon} \otimes \interpretation{\Gamma_2} \otimes \interpretation B \otimes \interpretation C} \\
	\\
	{\interpretation A \oplus \interpretation A} \&\&\&\& {\interpretation A}
	\arrow[""{name=0, anchor=center, inner sep=0}, "{\overline{d}_{\Upsilon, \Gamma_1, \Gamma_2}}"{description}, from=1-1, to=1-5]
	\arrow[""{name=1, anchor=center, inner sep=0}, "{d_\Upsilon \otimes id}"{description}, dashed, from=1-1, to=3-2]
	\arrow["\Delta"{description}, from=1-1, to=7-1]
	\arrow["{\interpretation{\elimtens(u,x^B y^C.w) \plus \elimtens(v,x^B y^C.w)}}"{description}, dotted, from=1-1, to=15-5, rounded corners, to path={-- ([xshift=-5ex]\tikztostart.west) |- ([xshift=-5ex, yshift=-250pt]\tikztostart.west) \tikztonodes |- ([yshift=-3ex]\tikztotarget.south) -- (\tikztotarget.south)}]
	\arrow["{\interpretation{\elimtens(u \plus v,x^B y^C.w)}}"{description}, dotted, from=1-1, to=15-5, rounded corners, to path={-- ([yshift=3ex]\tikztostart.north) -| ([xshift=410pt]\tikztostart.north) -- ([xshift=410pt,yshift=-350pt]\tikztostart.north) \tikztonodes |- (\tikztotarget.east)}]
	\arrow["{\Delta \otimes id}"{description}, dashed, from=1-5, to=6-4]
	\arrow[""{name=2, anchor=center, inner sep=0}, "\Delta"{description}, curve={height=30pt}, dashed, from=1-5, to=9-2]
	\arrow[""{name=3, anchor=center, inner sep=0}, "{\interpretation{u \plus v} \otimes id}"{description}, from=1-5, to=11-5]
	\arrow[""{name=4, anchor=center, inner sep=0}, "{id \otimes \sigma \otimes id}"{description}, dashed, from=3-2, to=1-5]
	\arrow["\Delta"{description}, dashed, from=3-2, to=5-2]
	\arrow[""{name=5, anchor=center, inner sep=0}, "{(id \otimes \sigma \otimes id) \oplus (id \otimes \sigma \otimes id)}"{description}, dashed, from=5-2, to=9-2]
	\arrow["{(\interpretation u \oplus \interpretation v) \otimes id}"{description}, dashed, from=6-4, to=8-4]
	\arrow["dist"{description}, dashed, from=6-4, to=9-2]
	\arrow["{(d_\Upsilon \otimes id) \oplus (d_\Upsilon \otimes id)}"{description}, dashed, from=7-1, to=5-2]
	\arrow["{\overline{d}_{\Upsilon, \Gamma_1, \Gamma_2} \oplus \overline{d}_{\Upsilon, \Gamma_1, \Gamma_2}}"{description}, dashed, from=7-1, to=9-2]
	\arrow["\begin{array}{c} \interpretation{\elimtens(u, x^B y^c.w)} \\ \oplus \\ \interpretation{\elimtens(v, x^B y^c.w)} \end{array}"{description}, from=7-1, to=15-1]
	\arrow[""{name=6, anchor=center, inner sep=0}, "dist"{description}, curve={height=-30pt}, dashed, from=8-4, to=11-2]
	\arrow["{\nabla \otimes id}"{description}, dashed, from=8-4, to=11-5]
	\arrow["{(\interpretation u \otimes id) \oplus (\interpretation v \otimes id)}"{description}, dashed, from=9-2, to=11-2]
	\arrow["{(9)}"{description, pos=0.3}, curve={height=30pt}, draw=none, from=9-2, to=15-1]
	\arrow["\nabla"{description}, dashed, from=11-2, to=11-5]
	\arrow["{\sigma \oplus \sigma}"{description}, dashed, from=11-2, to=13-2]
	\arrow["{(10)}"{description}, draw=none, from=11-2, to=13-5]
	\arrow["\sigma"{description}, from=11-5, to=13-5]
	\arrow["\nabla"{description}, dashed, from=13-2, to=13-5]
	\arrow["{\interpretation w \oplus \interpretation w}"{description}, dashed, from=13-2, to=15-1]
	\arrow["{(11)}"{description, pos=0.3}, draw=none, from=13-2, to=15-5]
	\arrow["{\interpretation w}"{description}, from=13-5, to=15-5]
	\arrow["\nabla"{description}, from=15-1, to=15-5]
	\arrow["{(1)}"{description}, draw=none, from=3-2, to=0]
	\arrow["{(3)}"{description}, curve={height=18pt}, draw=none, from=5-2, to=4]
	\arrow["{(4)}"{description}, curve={height=-18pt}, draw=none, from=6-4, to=2]
	\arrow["{(2)}"{description, pos=0.6}, curve={height=18pt}, draw=none, from=7-1, to=1]
	\arrow["{(5)}"{description}, curve={height=-18pt}, draw=none, from=7-1, to=5]
	\arrow["{(6)}"{description}, draw=none, from=8-4, to=3]
	\arrow["{(7)}"{description}, curve={height=-30pt}, draw=none, from=9-2, to=6]
	\arrow["{(8)}"{description, pos=0.6}, draw=none, from=11-5, to=6]
\end{tikzcd}}\]

(1) commutes by definition of $\overline d$, (2) and (3) commute by naturality of $\Delta$, (4) commutes by Corollary~\ref{cor:deltasemiadditive}, (5) commutes by definition of $\overline d$, (6) commutes by definition of $\interpretation{u \plus v}$, (7) commutes by naturality of $dist$ (Lemma~\ref{lem:semiadditiveiso}), (8) commutes by Corollary~\ref{cor:nablasemiadditive}, (9) commutes by definition of $\interpretation{\elimtens(u,x^B y^C.w)}$ and $\interpretation{\elimtens(v,x^B y^C.w)}$, (10) and (11) commute by naturality of $\nabla$.

    \item If $t = \langle \rangle \plus \langle \rangle$ and $r = \langle \rangle$, then $A = \top$.
\[\begin{tikzcd}[cramped]
	{\interpretation{\oc\Upsilon}\otimes\interpretation{\Gamma}} && {(\interpretation{\oc\Upsilon}\otimes\interpretation{\Gamma}) \oplus (\interpretation{\oc\Upsilon}\otimes\interpretation{\Gamma})} && {\{\ast\} \oplus \{\ast\}} && {\{\ast\}}
	\arrow["\Delta", from=1-1, to=1-3]
	\arrow["\oc"', curve={height=60pt}, from=1-1, to=1-7]
	\arrow["{\interpretation{\langle\rangle \plus \langle\rangle}}"{description}, rounded corners, dotted, from=1-1, to=1-7, to path={-- ([yshift=3ex]\tikztostart.north) -- ([yshift=3ex]\tikztotarget.north) \tikztonodes -- (\tikztotarget)}]
	\arrow["{\interpretation{\langle\rangle}}"{description}, rounded corners, dotted, from=1-1, to=1-7, to path={-- ([yshift=-55pt]\tikztostart.south) -- ([yshift=-55pt]\tikztotarget.south) \tikztonodes -- (\tikztotarget)}]
	\arrow["{\interpretation{\langle\rangle} \oplus \interpretation{\langle\rangle}}", from=1-3, to=1-5]
	\arrow["{\oc \oplus \oc}"{description}, curve={height=18pt}, dashed, from=1-3, to=1-5]
	\arrow["\nabla", from=1-5, to=1-7]
\end{tikzcd}\]
The diagram commutes because $\{\ast\}$ is a terminal object.
    \item If $t = \pair{u_1}{u_2} \plus \pair{v_1}{v_2}$ and $r = \pair{u_1 \plus v_1}{u_2 \plus v_2}$, then $A = B \with C$, $\Upsilon;\Gamma \vdash u_1:B$, $\Upsilon;\Gamma \vdash u_2:C$, $\Upsilon;\Gamma \vdash v_1:B$ and $\Upsilon;\Gamma \vdash v_2:C$.

\[\resizebox{.85\textwidth}{!}{\begin{tikzcd}[cramped,column sep=small, ampersand replacement=\&]
	{\interpretation{\oc\Upsilon} \otimes \interpretation{\Gamma}} \&\& {(\interpretation{\oc\Upsilon} \otimes \interpretation{\Gamma}) \oplus (\interpretation{\oc\Upsilon} \otimes \interpretation{\Gamma})} \&\&\& {\interpretation B \oplus \interpretation C \oplus \interpretation B \oplus \interpretation C} \\
	\&\&\& {(2)} \\
	\&\&\& \begin{array}{c} (\interpretation{\oc\Upsilon} \otimes \interpretation{\Gamma}) \oplus (\interpretation{\oc\Upsilon} \otimes \interpretation{\Gamma}) \\ \oplus (\interpretation{\oc\Upsilon} \otimes \interpretation{\Gamma}) \oplus (\interpretation{\oc\Upsilon} \otimes \interpretation{\Gamma}) \end{array} \& {(3)} \\
	\&\& |[xshift=30pt]| \smash{(1)} \\
	\&\&\& {\interpretation B \oplus \interpretation B \oplus \interpretation C \oplus \interpretation C} \& {(4)} \\
	\&\&\&\&\& {\interpretation B \oplus \interpretation C}
	\arrow["\Delta"{description}, from=1-1, to=1-3]
	\arrow["{\interpretation{\langle u_1, u_2\rangle \plus \langle v_1, v_2\rangle}}"{description},  dotted, from=1-1, to=6-6, rounded corners, to path={-- ([yshift=3ex]\tikztostart.north) -- ([yshift=3ex,xshift=400pt]\tikztostart.north) \tikztonodes -| ([xshift=7ex]\tikztotarget.east) -- (\tikztotarget)}]
	\arrow["{\interpretation{\langle u_1 \plus v_1, u_2 \plus v_2\rangle}}"{description}, dotted, from=1-1, to=6-6, rounded corners, to path={ -- ([yshift=-150pt]\tikztostart.south) \tikztonodes |- ([yshift=-3ex]\tikztotarget.south) -- (\tikztotarget)}]
	\arrow["{\interpretation{\langle u_1, u_2\rangle} \oplus \interpretation{\langle v_1, v_2\rangle}}"{description}, from=1-3, to=1-6]
	\arrow["{\Delta \oplus \Delta}"{description}, dashed, from=1-3, to=3-4]
	\arrow["{\interpretation{u_1 \plus v_1} \oplus \interpretation{u_2 \plus v_2}}"{description}, from=1-3, to=6-6, rounded corners, to path={-- ([yshift=-130pt]\tikztostart.south) \tikztonodes |- (\tikztotarget)}]
	\arrow["\nabla"{description}, from=1-6, to=6-6]
	\arrow["{\interpretation{u_1} \oplus \interpretation{u_2} \oplus \interpretation{v_1} \oplus \interpretation{v_2}}"{description}, dashed, from=3-4, to=1-6]
	\arrow["{\interpretation{u_1} \oplus \interpretation{v_1} \oplus \interpretation{u_2} \oplus \interpretation{v_2}}"{description}, dashed, from=3-4, to=5-4]
	\arrow["{\nabla \oplus \nabla}"{description}, dashed, from=5-4, to=6-6]
	\arrow["{id \oplus \sigma' \oplus id}"{description}, dashed, from=1-6, to=5-4, curve={height=-40pt}]
\end{tikzcd}}\]
(1) commutes by definition of $\interpretation{u_1 \plus v_1}$ and $\interpretation{u_2 \plus v_2}$, (2) commutes by definition of $\interpretation{\pair{u_1}{u_2}}$ and $\interpretation{\pair{v_1}{v_2}}$, (3) commutes by naturality of $\sigma'$ and (4) commutes by Lemma~\ref{lem:sigmanabladelta}.

    \item If $t = \elimplus(u \plus v,x^B.w_1,y^C.w_2)$ and $r = \elimplus(u,x^B.w_1,y^C.w_2) \plus \elimplus(v,x^B.w_1,y^C.w_2)$, then $\Gamma = \Gamma_1, \Gamma_2$, $\Upsilon;\Gamma_1 \vdash u:B \oplus C$, $\Upsilon;\Gamma_1 \vdash v:B \oplus C$, $\Upsilon;\Gamma_2, x^B \vdash w_1:A$ and $\Upsilon;\Gamma_2, y^C \vdash w_2:A$.

\[\resizebox{.85\textwidth}{!}{\begin{tikzcd}[ampersand replacement=\&,cramped,column sep=tiny]
	{\interpretation{\oc\Upsilon} \otimes \interpretation{\Gamma_1} \otimes \interpretation{\Gamma_2}} \&\&\&\&\& {\interpretation{\oc\Upsilon} \otimes \interpretation{\Gamma_1} \otimes \interpretation{\oc\Upsilon} \otimes \interpretation{\Gamma_2}} \\
	\\
	\& {\interpretation{\oc\Upsilon} \otimes \interpretation{\oc\Upsilon} \otimes \interpretation{\Gamma_1} \otimes \interpretation{\Gamma_2}} \\
	\\
	\begin{array}{c} (\interpretation{\oc\Upsilon} \otimes \interpretation{\Gamma_1} \otimes \interpretation{\Gamma_2}) \\\oplus (\interpretation{\oc\Upsilon} \otimes \interpretation{\Gamma_1} \otimes \interpretation{\Gamma_2}) \end{array} \& \begin{array}{c} (\interpretation{\oc\Upsilon} \otimes \interpretation{\oc\Upsilon} \otimes \interpretation{\Gamma_1} \otimes \interpretation{\Gamma_2}) \\\oplus (\interpretation{\oc\Upsilon} \otimes \interpretation{\oc\Upsilon} \otimes \interpretation{\Gamma_1} \otimes \interpretation{\Gamma_2}) \end{array} \\
	\&\&\&\& \begin{array}{c} ((\interpretation{\oc\Upsilon} \otimes \interpretation{\Gamma_1} )\\\oplus (\interpretation{\oc\Upsilon} \otimes \interpretation{\Gamma_1}))\\ \otimes \interpretation{\oc\Upsilon} \otimes \interpretation{\Gamma_2} \end{array} \\
	\\
	\& \begin{array}{c} (\interpretation{\oc\Upsilon} \otimes \interpretation{\Gamma_1} \otimes \interpretation{\oc\Upsilon} \otimes \interpretation{\Gamma_2}) \\\oplus (\interpretation{\oc\Upsilon} \otimes \interpretation{\Gamma_1} \otimes \interpretation{\oc\Upsilon} \otimes \interpretation{\Gamma_2}) \end{array} \&\&\& \begin{array}{c} (\interpretation B \oplus \interpretation C\\\oplus \interpretation B \oplus \interpretation C)\\ \otimes \interpretation{\oc\Upsilon} \otimes \interpretation{\Gamma_2} \end{array} \\
	\\
	\& \begin{array}{c} ((\interpretation B \oplus \interpretation C) \otimes \interpretation{\oc\Upsilon} \otimes \interpretation{\Gamma_2}) \\\oplus ((\interpretation B \oplus \interpretation C) \otimes \interpretation{\oc\Upsilon} \otimes \interpretation{\Gamma_2}) \end{array} \&\&\&\& {(\interpretation B \oplus \interpretation C) \otimes \interpretation{\oc\Upsilon} \otimes \interpretation{\Gamma_2}} \\
	\\
	\& \begin{array}{c} (\interpretation B \otimes \interpretation{\oc\Upsilon} \otimes \interpretation{\Gamma_2}) \\\oplus (\interpretation C \otimes \interpretation{\oc\Upsilon} \otimes \interpretation{\Gamma_2}) \\ \oplus (\interpretation B \otimes \interpretation{\oc\Upsilon} \otimes \interpretation{\Gamma_2}) \\\oplus (\interpretation C \otimes \interpretation{\oc\Upsilon} \otimes \interpretation{\Gamma_2}) \end{array} \&\&\&\& \begin{array}{c} (\interpretation B \otimes \interpretation{\oc\Upsilon} \otimes \interpretation{\Gamma_2}) \\\oplus (\interpretation C \otimes \interpretation{\oc\Upsilon} \otimes \interpretation{\Gamma_2}) \end{array} \\
	\\
	\& \begin{array}{c} (\interpretation{\oc\Upsilon} \otimes \interpretation{\Gamma_2} \otimes \interpretation B) \\\oplus (\interpretation{\oc\Upsilon} \otimes \interpretation{\Gamma_2} \otimes \interpretation C) \\ \oplus (\interpretation{\oc\Upsilon} \otimes \interpretation{\Gamma_2} \otimes \interpretation B) \\\oplus (\interpretation{\oc\Upsilon} \otimes \interpretation{\Gamma_2} \otimes \interpretation C) \end{array} \&\&\&\& \begin{array}{c} (\interpretation{\oc\Upsilon} \otimes \interpretation{\Gamma_2} \otimes \interpretation B) \\\oplus (\interpretation{\oc\Upsilon} \otimes \interpretation{\Gamma_2} \otimes \interpretation C) \end{array} \\
	\\
	{\interpretation A \oplus \interpretation A} \&\&\&\&\& {\interpretation A}
	\arrow[""{name=0, anchor=center, inner sep=0}, "{\overline{d}_{\Upsilon, \Gamma_1, \Gamma_2}}"{description}, from=1-1, to=1-6]
	\arrow["{d_\Upsilon \otimes id}"{description}, dashed, from=1-1, to=3-2]
	\arrow["\Delta"{description}, from=1-1, to=5-1]
	\arrow["{\interpretation{\elimplus(u \plus v,x^B.w_1,y^C.w_2)}}"{description}, dotted, from=1-1, to=16-6, rounded corners, to path={-- ([yshift=3ex]\tikztostart.north) -| ([xshift=415pt]\tikztostart.north) -- ([xshift=415pt,yshift=-350pt]\tikztostart.north) \tikztonodes |- (\tikztotarget.east)}]
	\arrow["{\interpretation{\elimplus(u,x^B.w_1,y^C.w_2) \plus \elimplus(v,x^B.w_1,y^C.w_2)}}"{description}, dotted, from=1-1, to=16-6, rounded corners, to path={-- ([xshift=-3ex]\tikztostart.west) |- ([xshift=-3ex, yshift=-250pt]\tikztostart.west) \tikztonodes |- ([yshift=-3ex]\tikztotarget.south) -- (\tikztotarget.south)}]
	\arrow["{\Delta \otimes id}"{description}, dashed, from=1-6, to=6-5]
	\arrow[""{name=1, anchor=center, inner sep=0}, "\Delta"{description}, curve={height=30pt}, dashed, from=1-6, to=8-2]
	\arrow[""{name=2, anchor=center, inner sep=0}, "{\interpretation{u \plus v} \otimes id}"{description}, from=1-6, to=10-6]
	\arrow["{id \otimes \sigma \otimes id}"{description}, dashed, from=3-2, to=1-6]
	\arrow["\Delta"{description}, dashed, from=3-2, to=5-2]
	\arrow[""{name=3, anchor=center, inner sep=0}, "{(d_\Upsilon \otimes id) \oplus (d_\Upsilon \otimes id)}", curve={height=-24pt}, dashed, from=5-1, to=5-2]
	\arrow[""{name=4, anchor=center, inner sep=0}, "{\overline{d}_{\Upsilon, \Gamma_1, \Gamma_2} \oplus \overline{d}_{\Upsilon, \Gamma_1, \Gamma_2}}"{description}, curve={height=18pt}, dashed, from=5-1, to=8-2]
	\arrow[""{name=5, anchor=center, inner sep=0}, "\begin{array}{c} \interpretation{\elimplus(u,x^B.w_1,y^C.w_2)} \\\oplus \\ \interpretation{\elimplus(v,x^B.w_1,y^C.w_2)} \end{array}"{description}, from=5-1, to=16-1]
	\arrow["{(id \otimes \sigma \otimes id) \oplus (id \otimes \sigma \otimes id)}"{description}, dashed, from=5-2, to=8-2]
	\arrow["dist"{description}, dashed, from=6-5, to=8-2]
	\arrow["{(\interpretation u \oplus \interpretation v) \otimes id}"{description}, dashed, from=6-5, to=8-5]
	\arrow["{(7)}"{description}, draw=none, from=6-5, to=10-2]
	\arrow["{(\interpretation{u} \otimes id) \oplus (\interpretation{v} \otimes id)}"{description}, dashed, from=8-2, to=10-2]
	\arrow["dist"{description}, dashed, from=8-5, to=10-2]
	\arrow["{\nabla \otimes id}"{description}, dashed, from=8-5, to=10-6]
	\arrow[""{name=6, anchor=center, inner sep=0}, "\nabla"{description}, dashed, from=10-2, to=10-6]
	\arrow["{dist \oplus dist}"{description}, dashed, from=10-2, to=12-2]
	\arrow["dist"{description}, from=10-6, to=12-6]
	\arrow["{(10)}"{description}, draw=none, from=12-2, to=10-6]
	\arrow["\nabla"{description}, dashed, from=12-2, to=12-6]
	\arrow["{\sigma \oplus \sigma \oplus \sigma \oplus \sigma}"{description}, dashed, from=12-2, to=14-2]
	\arrow["{\sigma \oplus \sigma}"{description}, from=12-6, to=14-6]
	\arrow["{(11)}"{description}, draw=none, from=14-2, to=12-6]
	\arrow["\nabla"{description}, dashed, from=14-2, to=14-6]
	\arrow["{[\interpretation{w_1}, \interpretation{w_2}] \oplus [\interpretation{w_1}, \interpretation{w_2}]}"{description}, dashed, from=14-2, to=16-1]
	\arrow["{[\interpretation{w_1}, \interpretation{w_2}]}"{description}, from=14-6, to=16-6]
	\arrow["{(12)}"{description, pos=0.6}, draw=none, from=16-1, to=14-6]
	\arrow["\nabla"{description}, from=16-1, to=16-6]
	\arrow["{(2)}"{description, pos=0.6}, draw=none, from=1-1, to=3]
	\arrow["{(1)}"{description}, draw=none, from=3-2, to=0]
	\arrow["{(4)}"{description}, draw=none, from=3-2, to=1]
	\arrow["{(3)}"{description}, draw=none, from=5-2, to=4]
	\arrow["{(5)}"{description}, draw=none, from=6-5, to=1]
	\arrow["{(6)}"{description}, curve={height=24pt}, draw=none, from=6-5, to=2]
	\arrow["{(8)}"{description}, draw=none, from=8-5, to=6]
	\arrow["{(9)}"{description, pos=0.3}, curve={height=-30pt}, draw=none, from=12-2, to=5]
\end{tikzcd}}\]

(1) commutes by definition of $\overline d$, (2) commutes by naturality of $\Delta$, (3) commutes by definition of $\overline d$, (4) commutes by naturality of $\Delta$, (5) commutes by Corollary~\ref{cor:deltasemiadditive}, (6) commutes by definition of $\interpretation{u \plus v}$, (7) commutes by naturality of $dist$ (Lemma~\ref{lem:semiadditiveiso}), (8) commutes by Corollary~\ref{cor:nablasemiadditive}, (9) commutes by definition of $\interpretation{\elimplus(u,x^B.w_1,y^C.w_2)}$ and $\interpretation{\elimplus(v,x^B.w_1,y^C.w_2)}$, (10), (11) and (12) commute by naturality of $\nabla$.

    \item If $t = \elimbang(u \plus v, x^B.s)$ and $r = \elimbang(u, x^B.s) \plus \elimbang(v, x^B.s)$, then $\Gamma = \Gamma_1, \Gamma_2$, $\Upsilon;\Gamma_1 \vdash u:\oc B$, $\Upsilon;\Gamma_1 \vdash v:\oc B$ and $\Upsilon, x^B; \Gamma_2 \vdash s:A$.
\[\begin{tikzcd}[ampersand replacement=\&,cramped,column sep=small]
	{\interpretation{\oc \Upsilon} \otimes \interpretation{\Gamma_1} \otimes \interpretation{\Gamma_2}} \&\& {\interpretation{\oc \Upsilon} \otimes \interpretation{\Gamma_1} \otimes \interpretation{\oc \Upsilon} \otimes \interpretation{\Gamma_2}} \&\& \begin{array}{c} ((\interpretation{\oc \Upsilon} \otimes \interpretation{\Gamma_1}) \oplus (\interpretation{\oc \Upsilon} \otimes \interpretation{\Gamma_1})) \\ \otimes \interpretation{\oc \Upsilon} \otimes \interpretation{\Gamma_2} \end{array} \\
	\\
	\begin{array}{c} (\interpretation{\oc \Upsilon} \otimes \interpretation{\Gamma_1} \otimes \interpretation{\Gamma_2})\\ \oplus \\ (\interpretation{\oc \Upsilon} \otimes \interpretation{\Gamma_1} \otimes \interpretation{\Gamma_2}) \end{array} \&\&\&\& {(\oc \interpretation{B} \oplus \oc \interpretation{B}) \otimes \interpretation{\oc \Upsilon} \otimes \interpretation{\Gamma_2}} \\
	\\
	\begin{array}{c} (\interpretation{\oc \Upsilon} \otimes \interpretation{\Gamma_1} \otimes \interpretation{\oc \Upsilon} \otimes \interpretation{\Gamma_2}) \\ \oplus \\ (\interpretation{\oc \Upsilon} \otimes \interpretation{\Gamma_1} \otimes \interpretation{\oc \Upsilon} \otimes \interpretation{\Gamma_2}) \end{array} \&\&\&\& {\oc \interpretation{B} \otimes \interpretation{\oc \Upsilon} \otimes \interpretation{\Gamma_2}} \\
	\\
	\begin{array}{c} (\oc \interpretation{B} \otimes \interpretation{\oc \Upsilon} \otimes \interpretation{\Gamma_2}) \\ \oplus \\ (\oc \interpretation{B} \otimes \interpretation{\oc \Upsilon} \otimes \interpretation{\Gamma_2}) \end{array} \&\&\&\& {\interpretation{\oc \Upsilon} \otimes \oc \interpretation{B} \otimes \interpretation{\Gamma_2}} \\
	\\
	\begin{array}{c} (\interpretation{\oc \Upsilon} \otimes \oc \interpretation{B} \otimes \interpretation{\Gamma_2}) \\ \oplus \\ (\interpretation{\oc \Upsilon} \otimes \oc \interpretation{B} \otimes \interpretation{\Gamma_2}) \end{array} \&\& {\interpretation{A} \oplus \interpretation{A}} \&\& {\interpretation A}
	\arrow["{\overline d_{\Upsilon, \Gamma_1, \Gamma_2}}"{description}, from=1-1, to=1-3]
	\arrow["\Delta"{description}, from=1-1, to=3-1]
	\arrow["{\interpretation{\elimbang(u \plus v, x^B.s)}}"{description}, dotted, from=1-1, to=9-5, rounded corners, to path={-- ([yshift=3ex]\tikztostart.north) -- ([yshift=3ex,xshift=340pt]\tikztostart.north) \tikztonodes -| ([xshift=16ex]\tikztotarget.east) -- (\tikztotarget)}]
	\arrow["{\interpretation{\elimbang(u, x^B.s) \plus \elimbang(v, x^B.s)}}"{description}, dotted, from=1-1, to=9-5, rounded corners, to path={-- ([xshift=-20pt]\tikztostart.west) |- ([xshift=-260pt, yshift=-4ex]\tikztotarget.south) -- ([yshift=-4ex]\tikztotarget.south) \tikztonodes -- (\tikztotarget)}]
	\arrow["{\Delta \otimes id}"{description}, from=1-3, to=1-5]
	\arrow[""{name=0, anchor=center, inner sep=0}, "\Delta"{description}, dashed, from=1-3, to=5-1]
	\arrow["{(\interpretation u \oplus \interpretation v) \otimes id}"{description}, from=1-5, to=3-5]
	\arrow[""{name=1, anchor=center, inner sep=0}, "dist"{description}, dashed, from=1-5, to=5-1]
	\arrow["{\overline d_{\Upsilon, \Gamma_1, \Gamma_2} \oplus \overline d_{\Upsilon, \Gamma_1, \Gamma_2}}"{description}, from=3-1, to=5-1]
	\arrow["{\nabla \otimes id}"{description}, from=3-5, to=5-5]
	\arrow[""{name=2, anchor=center, inner sep=0}, "dist"{description}, dashed, from=3-5, to=7-1]
	\arrow["{(\interpretation{u} \otimes id) \oplus (\interpretation{v} \otimes id)}"{description}, from=5-1, to=7-1]
	\arrow["{\sigma \otimes id}"{description}, from=5-5, to=7-5]
	\arrow[""{name=3, anchor=center, inner sep=0}, "\nabla"{description}, dashed, from=7-1, to=5-5]
	\arrow["{(\sigma \otimes id) \oplus (\sigma \otimes id)}"{description}, from=7-1, to=9-1]
	\arrow["{\interpretation s}"{description}, from=7-5, to=9-5]
	\arrow[""{name=4, anchor=center, inner sep=0}, "\nabla"{description}, dashed, from=9-1, to=7-5]
	\arrow["{\interpretation{s} \oplus \interpretation{s}}"{description}, from=9-1, to=9-3]
	\arrow["\nabla"{description}, from=9-3, to=9-5]
	\arrow["{(1)}"{description}, draw=none, from=1-1, to=0]
	\arrow["{(2)}"{description}, draw=none, from=1-3, to=1]
	\arrow["{(3)}"{description}, draw=none, from=1, to=2]
	\arrow["{(4)}"{description}, draw=none, from=3-5, to=3]
	\arrow["{(5)}"{description}, draw=none, from=3, to=4]
	\arrow["{(6)}"{description}, draw=none, from=9-5, to=4]
\end{tikzcd}\]

(1) commutes by naturality of $\Delta$, (2) commutes by Corollary~\ref{cor:deltasemiadditive}, (3) commutes by naturality of $dist$, (4) commutes by Corollary~\ref{cor:nablasemiadditive}, (5) and (6) commute by naturality of $\nabla$.

    \item If $t = a \dotprod b.\star$ and $r = (a \times b).\star$, then $A = \one$ and $\Gamma = \varnothing$.

\[\begin{tikzcd}[ampersand replacement=\&,cramped]
	{\interpretation{\oc\Upsilon} \otimes I} \&\&\&\&\&\& I \\
	\&\& {(1)} \\
	{\interpretation{\oc\Upsilon}} \\
	\&\&\&\& {(2)} \\
	I \&\&\&\&\&\& I
	\arrow["{\interpretation{b.\star}}"{description}, from=1-1, to=1-7]
	\arrow["\rho"{description}, from=1-1, to=3-1]
	\arrow["{\interpretation{a \dotprod b.\star}}"{description}, dotted, from=1-1, to=5-7, rounded corners, to path={-- ([yshift=3ex]\tikztostart.north) -| ([xshift=225pt]\tikztostart.north) -- ([xshift=225pt,yshift=-110pt]\tikztostart.north) \tikztonodes |- (\tikztotarget.east)}]
	\arrow["{\interpretation{(a \times b).\star}}"{description}, dotted, from=1-1, to=5-7, rounded corners, to path={-- ([xshift=-3ex]\tikztostart.west) -- ([xshift=-3ex,yshift=-130pt]\tikztostart.west) \tikztonodes |- ([yshift=-3ex]\tikztotarget.south) -- (\tikztotarget.south)}]
	\arrow["{\widehat{\mono{a}}}"{description}, from=1-7, to=5-7]
	\arrow["{e_\Upsilon}"{description}, from=3-1, to=5-1]
	\arrow["\mono{b}"{description}, dashed, from=5-1, to=1-7]
	\arrow["{\mono{a \times b}}"{description}, from=5-1, to=5-7]
\end{tikzcd}\]

(1) commutes by definition of $\interpretation{b.\star}$ and (2) commutes because $\widehat{\mono{a}} = \mono{a}$ by Lemma~\ref{lem:hataprop} and because $\mono{a \times b} = \mono{a} \circ \mono{b}$, since $\mono{\cdot}$ is a semiring homomorphism.

    \item If $t = a \dotprod \lambda x^B.u$ and $r = \lambda x^B. a \dotprod u$, then $A = B \multimap C$ and $\Upsilon;\Gamma,x^B \vdash u:C$.
    
\[\begin{tikzcd}[cramped]
	{\interpretation{\oc\Upsilon} \otimes \interpretation{\Gamma}} &&&&&& {[\interpretation B \to \interpretation C]} \\
	& {(1)} \\
	&&&&& {(3)} \\
	&& {(2)} \\
	{[\interpretation B \to \interpretation{\oc\Upsilon} \otimes \interpretation{\Gamma} \otimes \interpretation B]} &&&&&& {[\interpretation B \to \interpretation C]}
	\arrow["{\interpretation{\lambda x^B.u}}"{description}, from=1-1, to=1-7]
	\arrow["{\eta_B}"{description}, from=1-1, to=5-1]
	\arrow["{\interpretation{a \dotprod \lambda x^B.u}}"{description}, dotted, from=1-1, to=5-7, rounded corners, to path={-- ([yshift=3ex]\tikztostart.north) -| ([xshift=330pt]\tikztostart.north) -- ([xshift=330pt,yshift=-120pt]\tikztostart.north) \tikztonodes |- (\tikztotarget.east)}]
	\arrow["{\interpretation{\lambda x^B. a \dotprod u}}"{description}, dotted, from=1-1, to=5-7, rounded corners, to path={-- ([xshift=-9ex]\tikztostart.west) -- ([xshift=-9ex,yshift=-130pt]\tikztostart.west) \tikztonodes |- ([yshift=-3ex]\tikztotarget.south) -- (\tikztotarget.south)}]
	\arrow["{\widehat{\mono{a}}}"{description}, from=1-7, to=5-7]
	\arrow["{[\interpretation B \to \widehat{\mono{a}}]}"{description}, curve={height=140pt}, dashed, from=1-7, to=5-7]
	\arrow["{[\interpretation B \to \interpretation{u}]}"{description}, curve={height=-30pt}, dashed, from=5-1, to=1-7]
	\arrow["{[\interpretation B \to \interpretation{a \dotprod u}]}"{description}, from=5-1, to=5-7]
\end{tikzcd}\]

(1) commutes by definition of $\interpretation{\lambda x^B.u}$, (2) commutes by definition of $\interpretation{a \dotprod u}$ and functoriality of $[\interpretation{B} \to -]$ and (3) commutes by Lemma~\ref{lem:ahathom}.

    \item If $t = \elimtens(a \dotprod u,x^B y^C.v)$ and $r = a \dotprod \elimtens(u,x^B y^C.v)$, then $\Gamma = \Gamma_1, \Gamma_2$, $\Upsilon;\Gamma_1\vdash u: B \otimes C$ and $\Upsilon;\Gamma_2,x^B,y^C \vdash v:A$.

\[\resizebox{.85\textwidth}{!}{\begin{tikzcd}[ampersand replacement=\&,cramped, column sep=scriptsize]
	{\interpretation{\oc \Upsilon} \otimes \interpretation{\Gamma_1} \otimes \interpretation{\Gamma_2}} \&\&\&\& {\interpretation{\oc \Upsilon} \otimes \interpretation{\Gamma_1} \otimes \interpretation{\oc \Upsilon} \otimes \interpretation{\Gamma_2}} \\
	\& {(1)} \\
	\& {\interpretation B \otimes \interpretation C \otimes \interpretation{\oc \Upsilon} \otimes \interpretation{\Gamma_2}} \&\& {(2)} \\
	\&\& {(3)} \&\& {\interpretation B \otimes \interpretation C \otimes \interpretation{\oc \Upsilon} \otimes \interpretation{\Gamma_2}} \\
	\& {\interpretation{\oc \Upsilon} \otimes \interpretation{\Gamma_2} \otimes \interpretation B \otimes \interpretation C} \\
	\&\&\&\& {\interpretation{\oc \Upsilon} \otimes \interpretation{\Gamma_2} \otimes \interpretation B \otimes \interpretation C} \\
	\& {(4)} \\
	{\interpretation A} \&\&\&\& {\interpretation A}
	\arrow["{\overline{d}_{\Upsilon, \Gamma_1, \Gamma_2}}"{description}, from=1-1, to=1-5]
	\arrow["{\interpretation{\elimtens(u,x^B y^C.v)}}"{description}, from=1-1, to=8-1]
	\arrow["{\interpretation{\elimtens(a \dotprod u,x^B y^C.v)}}"{description}, dotted, from=1-1, to=8-5, rounded corners, to path={-- ([yshift=3ex]\tikztostart.north) -| ([xshift=360pt]\tikztostart.north) -- ([xshift=360pt,yshift=-170pt]\tikztostart.north) \tikztonodes |- (\tikztotarget.east)}]
	\arrow["{\interpretation{a \dotprod \elimtens(u,x^B y^C.v)}}"{description}, dotted, from=1-1, to=8-5, rounded corners, to path={-- ([xshift=-3ex]\tikztostart.west) -- ([xshift=-3ex,yshift=-200pt]\tikztostart.west) \tikztonodes |- ([yshift=-3ex]\tikztotarget.south) -- (\tikztotarget.south)}]
	\arrow["{\interpretation{u} \otimes id}"{description}, dashed, from=1-5, to=3-2]
	\arrow["{\interpretation{a \dotprod u} \otimes id}"{description}, from=1-5, to=4-5]
	\arrow["{\widehat{\mono{a}} \otimes id}"{description}, dashed, from=3-2, to=4-5]
	\arrow["\sigma"{description}, dashed, from=3-2, to=5-2]
	\arrow["\sigma"{description}, from=4-5, to=6-5]
	\arrow["{id \otimes \widehat{\mono{a}}}"{description}, dashed, from=5-2, to=6-5]
	\arrow["{\interpretation v}"{description}, dashed, from=5-2, to=8-1]
	\arrow["{\interpretation v}"{description}, from=6-5, to=8-5]
	\arrow["{\widehat{\mono{a}}}"{description}, from=8-1, to=8-5]
\end{tikzcd}}\]

(1) commutes by definition of $\interpretation{\elimtens(u,x^B y^C.v)}$, (2) commutes by definition of $\interpretation{a \dotprod u}$, (3) commutes by naturality of $\sigma$ and (4) commutes by Lemma~\ref{lem:hatanat}.

    \item If $t = a \dotprod \langle\rangle$ and $r = \langle\rangle$, then $A = \top$.

\[\begin{tikzcd}[ampersand replacement=\&,cramped]
	{\interpretation{\oc\Upsilon} \otimes \interpretation{\Gamma}} \&\&\&\&\&\& {\mathbf 0} \\
	\\
	\\
	\&\&\& {\mathbf 0}
	\arrow["{\interpretation{\langle\rangle}}"{description}, from=1-1, to=1-7]
	\arrow["\oc"{description}, from=1-1, to=4-4]
	\arrow["{\interpretation{a \dotprod \langle\rangle}}"{description}, dotted, from=1-1, to=4-4, rounded corners, to path={-- ([yshift=3ex]\tikztostart.north) -- ([yshift=3ex,xshift=190pt]\tikztostart.north) \tikztonodes |- (\tikztotarget.east)}]
	\arrow["{\interpretation{\langle\rangle}}"{description}, curve={height=30pt}, dotted, from=1-1, to=4-4]
	\arrow["{\widehat{\mono{a}}}"{description}, from=1-7, to=4-4]
\end{tikzcd}\]

The diagram commutes because $\mathbf 0$ is a terminal object.

    \item If $t = a \dotprod \pair{u}{v}$ and $r = \pair{a \dotprod u}{a \dotprod v}$, then $A = B \with C$, $\Upsilon;\Gamma \vdash u:B$ and $\Upsilon;\Gamma \vdash v:C$.
    
\[\begin{tikzcd}[cramped]
	{\interpretation{\oc\Upsilon} \otimes \interpretation{\Gamma}} &&&&&& {\interpretation B \oplus \interpretation C} \\
	& {(1)} \\
	&&&&& {(3)} \\
	&& {(2)} \\
	{(\interpretation{\oc\Upsilon} \otimes \interpretation{\Gamma}) \oplus (\interpretation{\oc\Upsilon} \otimes \interpretation{\Gamma})} &&&&&& {\interpretation B \oplus \interpretation C}
	\arrow["{\interpretation{\pair{u}{v}}}"{description}, from=1-1, to=1-7]
	\arrow["\Delta"{description}, from=1-1, to=5-1]
	\arrow["{\interpretation{a \dotprod \pair{u}{v}}}"{description}, dotted, from=1-1, to=5-7, rounded corners, to path={-- ([yshift=3ex]\tikztostart.north) -| ([xshift=335pt]\tikztostart.north) -- ([xshift=335pt,yshift=-100pt]\tikztostart.north) \tikztonodes |- (\tikztotarget.east)}]
	\arrow["{\interpretation{\pair{a \dotprod u}{a \dotprod v}}}"{description}, dotted, from=1-1, to=5-7, rounded corners, to path={-- ([xshift=-9ex]\tikztostart.west) -- ([xshift=-9ex,yshift=-110pt]\tikztostart.west) \tikztonodes |- ([yshift=-3ex]\tikztotarget.south) -- (\tikztotarget.south)}]
	\arrow["{\widehat{\mono{a}}}"{description}, from=1-7, to=5-7]
	\arrow["{\widehat{\mono{a}} \oplus \widehat{\mono{a}}}"{description}, curve={height=130pt}, dashed, from=1-7, to=5-7]
	\arrow["{\interpretation u \oplus \interpretation v}"{description}, curve={height=-30pt}, dashed, from=5-1, to=1-7]
	\arrow["{\interpretation{a \dotprod u} \oplus \interpretation{a \dotprod v}}"{description}, from=5-1, to=5-7]
\end{tikzcd}\]

(1) commutes by definition of $\interpretation{\pair{u}{v}}$, (2) commutes by definition of $\interpretation{a \dotprod u}$ and $\interpretation{a \dotprod v}$, and (3) commutes by Lemma~\ref{lem:hataprop}.

    \item If $t = \elimplus(a \dotprod u,x^B.v,y^C.w)$ and $r = a \dotprod \elimplus(u,x^B.v,y^C.w)$, then $\Gamma = \Gamma_1, \Gamma_2$, $\Upsilon;\Gamma_1 \vdash u:B \oplus C$, $\Upsilon;\Gamma_2,x^B \vdash v:A$ and $\Upsilon;\Gamma_2,y^C \vdash w:A$.

\[\begin{tikzcd}[ampersand replacement=\&,cramped,column sep=small]
	{\interpretation{\oc\Upsilon} \otimes \interpretation{\Gamma_1} \otimes \interpretation{\Gamma_2}} \&\&\& {\interpretation{\oc\Upsilon} \otimes \interpretation{\Gamma_1} \otimes \interpretation{\oc\Upsilon} \otimes \interpretation{\Gamma_2}} \\
	\\
	\& {(\interpretation B \oplus \interpretation C) \otimes \interpretation{\oc\Upsilon} \otimes \interpretation{\Gamma_2}} \\
	\&\&\& {(\interpretation B \oplus \interpretation C) \otimes \interpretation{\oc\Upsilon} \otimes \interpretation{\Gamma_2}} \\
	\& \begin{array}{c} (\interpretation B \otimes \interpretation{\oc\Upsilon} \otimes \interpretation{\Gamma_2})\\ \oplus \\(\interpretation C \otimes \interpretation{\oc\Upsilon} \otimes \interpretation{\Gamma_2}) \end{array} \\
	\&\&\& \begin{array}{c} (\interpretation B \otimes \interpretation{\oc\Upsilon} \otimes \interpretation{\Gamma_2})\\ \oplus \\(\interpretation C \otimes \interpretation{\oc\Upsilon} \otimes \interpretation{\Gamma_2}) \end{array} \\
	\& \begin{array}{c} (\interpretation{\oc\Upsilon} \otimes \interpretation{\Gamma_2} \otimes \interpretation B)\\ \oplus \\(\interpretation{\oc\Upsilon} \otimes \interpretation{\Gamma_2} \otimes \interpretation C) \end{array} \\
	\&\&\& \begin{array}{c} (\interpretation{\oc\Upsilon} \otimes \interpretation{\Gamma_2} \otimes \interpretation B)\\ \oplus \\(\interpretation{\oc\Upsilon} \otimes \interpretation{\Gamma_2} \otimes \interpretation C) \end{array} \\
	\\
	{\interpretation A} \&\&\& {\interpretation A}
	\arrow["{\overline d_{\Upsilon,\Gamma_1,\Gamma_2}}"{description}, from=1-1, to=1-4]
	\arrow["{\interpretation{\elimplus(u,x^B.v,y^C.w)}}"{description}, from=1-1, to=10-1]
	\arrow["{\interpretation{\elimplus(a \dotprod u,x^B.v,y^C.w)}}"{description}, dotted, from=1-1, to=10-4, rounded corners, to path={-- ([yshift=3ex]\tikztostart.north) -| ([xshift=300pt]\tikztostart.north) -- ([xshift=300pt,yshift=-300pt]\tikztostart.north) \tikztonodes |- (\tikztotarget.east)}]
	\arrow["{\interpretation{a \dotprod \elimplus(u,x^B.v,y^C.w)}}"{description}, dotted, from=1-1, to=10-4, rounded corners, to path={-- ([xshift=-3ex]\tikztostart.west) -- ([xshift=-3ex,yshift=-200pt]\tikztostart.west) \tikztonodes |- ([yshift=-3ex]\tikztotarget.south) -- (\tikztotarget.south)}]
	\arrow[""{name=0, anchor=center, inner sep=0}, "{\interpretation{u} \otimes id}"{description}, dashed, from=1-4, to=3-2]
	\arrow["{\interpretation{a \dotprod u} \otimes id}"{description}, from=1-4, to=4-4]
	\arrow["{(1)}"{description}, draw=none, from=3-2, to=1-1]
	\arrow["{\widehat{\mono{a}} \otimes id}"{description}, curve={height=-24pt}, dashed, from=3-2, to=4-4]
	\arrow["{\widehat{\mono{a}}}"{description}, shift left=3, curve={height=24pt}, dashed, from=3-2, to=4-4]
	\arrow["{(3)}"{description}, draw=none, from=3-2, to=4-4]
	\arrow["dist"{description}, dashed, from=3-2, to=5-2]
	\arrow["dist"{description}, from=4-4, to=6-4]
	\arrow["{(4)}"{description}, draw=none, from=5-2, to=4-4]
	\arrow["{\widehat{\mono{a}}}"{description}, dashed, from=5-2, to=6-4]
	\arrow["{\sigma \oplus \sigma}"{description}, dashed, from=5-2, to=7-2]
	\arrow["{\sigma \oplus \sigma}"{description}, from=6-4, to=8-4]
	\arrow["{(5)}"{description}, draw=none, from=7-2, to=6-4]
	\arrow["{\widehat{\mono{a}}}"{description}, dashed, from=7-2, to=8-4]
	\arrow["{[\interpretation v, \interpretation w]}"{description}, dashed, from=7-2, to=10-1]
	\arrow["{[\interpretation v, \interpretation w]}"{description}, from=8-4, to=10-4]
	\arrow[""{name=1, anchor=center, inner sep=0}, "{\widehat{\mono{a}}}"{description}, from=10-1, to=10-4]
	\arrow["{(2)}"{description}, draw=none, from=4-4, to=0]
	\arrow["{(6)}"{description}, draw=none, from=7-2, to=1]
\end{tikzcd}\]

(1) commutes by definition of $\interpretation{\elimplus(u,x^B.v,y^C.w)}$, (2) commutes by definition of $\interpretation{a \dotprod u}$, (3) commutes by Lemma~\ref{lem:hataprop}, (4), (5) and (6) commute by naturality of $\widehat{\mono{a}}$ (Lemma~\ref{lem:hatanat}).

    \item If $t = \elimbang(a \dotprod u, x^B.s)$ and $r = a \dotprod \elimbang(u, x^B.s)$, then $\Gamma = \Gamma_1, \Gamma_2$, $\Upsilon;\Gamma_1 \vdash u:\oc B$ and $\Upsilon, x^B;\Gamma_2 \vdash s:A$.

\[\begin{tikzcd}[ampersand replacement=\&,cramped,column sep=scriptsize]
	{\interpretation{\oc \Upsilon} \otimes \interpretation{\Gamma_1} \otimes \interpretation{\Gamma_2}} \&\& {\interpretation{\oc \Upsilon} \otimes \interpretation{\Gamma_1} \otimes \interpretation{\oc \Upsilon} \otimes \interpretation{\Gamma_2}} \&\& {\oc \interpretation B \otimes \interpretation{\oc \Upsilon} \otimes \interpretation{\Gamma_2}} \\
	\\
	{\interpretation{\oc \Upsilon} \otimes \interpretation{\Gamma_1} \otimes \interpretation{\oc \Upsilon} \otimes \interpretation{\Gamma_2}} \&\&\&\& {\oc \interpretation B \otimes \interpretation{\oc \Upsilon} \otimes \interpretation{\Gamma_2}} \\
	\\
	{\oc \interpretation B \otimes \interpretation{\oc \Upsilon} \otimes \interpretation{\Gamma_2}} \&\&\&\& {\interpretation{\oc \Upsilon} \otimes \oc \interpretation B \otimes \interpretation{\Gamma_2}} \\
	\\
	{\interpretation{\oc \Upsilon} \otimes \oc \interpretation B \otimes \interpretation{\Gamma_2}} \&\& {\interpretation A} \&\& {\interpretation A}
	\arrow["{\overline d_{\Upsilon, \Gamma_1, \Gamma_2}}"{description}, from=1-1, to=1-3]
	\arrow["{\overline d_{\Upsilon, \Gamma_1, \Gamma_2}}"{description}, from=1-1, to=3-1]
	\arrow["{\interpretation{\elimbang(a \dotprod u, x^B.s)}}"{description}, dotted, from=1-1, to=7-5, rounded corners, to path={-- ([yshift=3ex]\tikztostart.north) -- ([yshift=3ex,xshift=310pt]\tikztostart.north) \tikztonodes -| ([xshift=10ex]\tikztotarget.east) -- (\tikztotarget)}]
	\arrow["{\interpretation{a \dotprod \elimbang(u, x^B.s)}}"{description}, dotted, from=1-1, to=7-5, rounded corners, to path={-- ([xshift=-20pt]\tikztostart.west) |- ([xshift=-260pt, yshift=-4ex]\tikztotarget.south) -- ([yshift=-4ex]\tikztotarget.south) \tikztonodes -- (\tikztotarget)}]
	\arrow["{\interpretation u \otimes id}"{description}, from=1-3, to=1-5]
	\arrow["{\widehat{\mono{a}} \otimes id}"{description}, from=1-5, to=3-5]
	\arrow["{\interpretation u \otimes id}"{description}, from=3-1, to=5-1]
	\arrow["{\sigma \otimes id}"{description}, from=3-5, to=5-5]
	\arrow[""{name=0, anchor=center, inner sep=0}, equals, dashed, from=5-1, to=1-5]
	\arrow[""{name=1, anchor=center, inner sep=0}, "{\widehat{\mono{a}}}"{description}, dashed, from=5-1, to=3-5]
	\arrow["{\sigma \otimes id}"{description}, from=5-1, to=7-1]
	\arrow["{\interpretation s}"{description}, from=5-5, to=7-5]
	\arrow[""{name=2, anchor=center, inner sep=0}, "{\widehat{\mono{a}}}"{description}, dashed, from=7-1, to=5-5]
	\arrow["{\interpretation s}"{description}, from=7-1, to=7-3]
	\arrow["{\widehat{\mono{a}}}"{description}, from=7-3, to=7-5]
	\arrow["{(1)}"{description}, draw=none, from=1-1, to=0]
	\arrow["{(2)}"{description}, draw=none, from=3-5, to=0]
	\arrow["{(3)}"{description}, draw=none, from=1, to=2]
	\arrow["{(4)}"{description}, draw=none, from=7-5, to=2]
\end{tikzcd}\]

(1) commutes trivially, (2) commutes by Lemma~\ref{lem:hataprop}, (3) and (4) commute by naturality of $\widehat{\mono{a}}$ (Lemma~\ref{lem:hatanat}).

  \end{itemize} 
  Inductive cases are trivial by composition.
\end{proof}

\subsection{Adequacy}\label{proof:adequacy}
\semanticsadequacy*
\begin{proof}
  By induction on the size of $A$.
  \begin{itemize}
    \item Let $A = \one$. By Theorems~\ref{thm:SR}, \ref{thm:introductions}, and \ref{thm:ST}, we have $t \lras a.\star$ and $r \lras b.\star$. By Theorem~\ref{thm:soundness}, $\interpretation{t} = \interpretation{a.\star}$ and $\interpretation{r} = \interpretation{b.\star}$. Hence, since $\interpretation{t} = \interpretation{r}$, the following diagram commutes:

\[\begin{tikzcd}[cramped]
	{\oc I \otimes I} && {\oc I} && I && I
	\arrow["\rho", from=1-1, to=1-3]
	\arrow["{e_I}", from=1-3, to=1-5]
	\arrow["{\mono{a}}", curve={height=-18pt}, from=1-5, to=1-7]
	\arrow["{\mono{b}}"', curve={height=18pt}, from=1-5, to=1-7]
\end{tikzcd}\]

Since $e_I \circ \rho \circ \rho^{-1} \circ m_I = id_I$, we have that $\mono{a} = \mono{b}$. Therefore $a=b$, because $\mono{\cdot}$ is a monomorphism, and then $t \equiv r$.

    \item Let $A = B \multimap C$. By Theorems~\ref{thm:SR}, \ref{thm:introductions} and \ref{thm:ST}, we have $t \lras \lambda x^B.t'$ and $r \lras \lambda x^B.r'$. By Theorem~\ref{thm:soundness}, $\interpretation{t} = \interpretation{\lambda x^B.t'}$ and $\interpretation{r} = \interpretation{\lambda x^B.r'}$. Hence, since $\interpretation{t} =\interpretation{r}$, the following diagram commutes:

\[\begin{tikzcd}[cramped]
	{\oc I \otimes I} && {[\interpretation B \to \oc I \otimes I \otimes \interpretation B]} && {[\interpretation B \to \interpretation C]}
	\arrow["{\eta_{\interpretation B}}", from=1-1, to=1-3]
	\arrow["{[\interpretation B \to \interpretation{t'}]}", curve={height=-18pt}, from=1-3, to=1-5]
	\arrow["{[\interpretation B \to \interpretation{r'}]}"', curve={height=18pt}, from=1-3, to=1-5]
\end{tikzcd}\]

Therefore, we have $\interpretation{t'} = \interpretation{r'}$, by the bijection arising from the adjunction. By Lemma~\ref{lem:linearsubstitution}, we have that for all $w$ such that $\vdash w:B$:
\begin{align*}
  \interpretation{(w/x)t'} &= \interpretation{t'} \circ (id \otimes \interpretation{w}) \circ (id \otimes \sigma) \circ (d_I \otimes id)\\
  \interpretation{(w/x)r'} &= \interpretation{r'} \circ (id \otimes \interpretation{w}) \circ (id \otimes \sigma) \circ (d_I \otimes id)
\end{align*}

Then, $\interpretation{(w/x)t'} = \interpretation{(w/x)r'}$. Hence, by the induction hypothesis, $(w/x)t' \equiv (w/x)r'$. Since this is true for all $w$, we have $\lambda x^B.t' \equiv \lambda x^B.r'$. Therefore, $t \equiv \lambda x^B.t' \equiv \lambda x^B.r' \equiv r$.

    \item Let $A = B \otimes C$. Any elimination context for $t$ and $r$ has the shape $K = K'[\elimtens(\_, x^B y^C.u)]$, where $\varnothing;x^B,y^C \vdash u:D$ with $|D| < |B \otimes C|$. By definition, we have that
    \begin{align*}
      \interpretation{\elimtens(t,x^B y^C.u)} &= \interpretation{u} \circ \sigma \circ (\interpretation{t} \otimes id) \circ \overline d\\
      \interpretation{\elimtens(r,x^B y^C.u)} &= \interpretation{u} \circ \sigma \circ (\interpretation{r} \otimes id) \circ \overline d
    \end{align*}

    $\interpretation{t} = \interpretation{r}$ by hypothesis, therefore $\interpretation{\elimtens(t,x^B y^C.u)} = \interpretation{\elimtens(r,x^B y^C.u)}$. Thus, by the induction hypothesis, we have $\elimtens(t,x^B y^C.u) \equiv \elimtens(r,x^B y^C.u)$.

      Hence,
      \(
	K[t]=K'[\elimtens(t,x^B y^C.u)]\equiv  K'[\elimtens(r,x^B y^C.u)]=K[r]
      \).

      Then, $t\equiv r$.

    \item Let $A = \top$. 
      There is no elimination context ${\_}^\top\vdash K:\one$, then the observational equivalence is vacuously true.

    \item Let $A = \zero$. By Theorems~\ref{thm:SR}, \ref{thm:introductions} and \ref{thm:ST}, this case is not possible.

    \item Let $A = B \with C$. By Theorems~\ref{thm:SR}, \ref{thm:introductions} and \ref{thm:ST}, we have $t \lras \pair{t_1}{t_2}$ and $r \lras \pair{r_1}{r_2}$. By Theorem~\ref{thm:soundness}, $\interpretation{t} = \interpretation{\pair{t_1}{t_2}}$ and $\interpretation{r} = \interpretation{\pair{r_1}{r_2}}$. Hence, since $\interpretation{t} = \interpretation{r}$, the following diagram commutes:
    
\[\begin{tikzcd}[cramped]
	{\oc I \otimes I} && {(\oc I \otimes I) \oplus (\oc I \otimes I)} && {\interpretation B \oplus \interpretation C}
	\arrow["\Delta", from=1-1, to=1-3]
	\arrow["{\interpretation{t_1} \oplus \interpretation{t_2}}", curve={height=-18pt}, from=1-3, to=1-5]
	\arrow["{\interpretation{r_1} \oplus \interpretation{r_2}}"', curve={height=18pt}, from=1-3, to=1-5]
\end{tikzcd}\]

Since $\pi_1 \circ (f \oplus g) \circ \Delta = f$, and $\pi_2 \circ (f \oplus g) \circ \Delta = g$, we have that $\interpretation{t_1} = \interpretation{r_1}$ and $\interpretation{t_2} = \interpretation{r_2}$. Hence, by the induction hypothesis, we have $t_1 \equiv r_1$ and $t_2 \equiv r_2$. Therefore, $t \equiv \pair{t_1}{t_2} \equiv \pair{r_1}{r_2} \equiv r$.

    \item Let $A = B \oplus C$. Any elimination context for $t$ and $r$ has the shape $K = K'[\elimplus(\_, x^B.u, y^C.v)]$, where $\varnothing;x^B \vdash u:D$ and $\varnothing;y^C \vdash v:D$ with $|D| < |B \oplus C|$. By definition, we have that
    \begin{align*}
      \interpretation{\elimplus(t, x^B.u, y^C.v)} &= [\interpretation{u},\interpretation{v}] \circ (\sigma \oplus \sigma) \circ dist \circ (\interpretation{t} \otimes id) \circ \overline{d}\\
      \interpretation{\elimplus(r, x^B.u, y^C.v)} &= [\interpretation{u},\interpretation{v}] \circ (\sigma \oplus \sigma) \circ dist \circ (\interpretation{r} \otimes id) \circ \overline{d}
    \end{align*}

    $\interpretation{t} = \interpretation{r}$ by hypothesis, therefore $\interpretation{\elimplus(t, x^B.u, y^C.v)} = \interpretation{\elimplus(r, x^B.u, y^C.v)}$. Thus, by the induction hypothesis, we have $\elimplus(t, x^B.u, y^C.v) \equiv \elimplus(r, x^B.u, y^C.v)$.

      Hence,
      \(
	K[t]=K'[\elimplus(t, x^B.u, y^C.v)]\equiv  K'[\elimplus(r, x^B.u, y^C.v)]=K[r]
      \).

      Then, $t\equiv r$.
    \item Let $A = \oc B$. Any elimination context for $t$ and $r$ has the shape $K = K'[\elimbang(\_, x^B.u)]$, where $x^B; \varnothing \vdash u:C$ with $|C| < |\oc B|$. By definition, we have that
    \begin{align*}
		\interpretation{\elimbang(t, x^B.u)} &= \interpretation{u} \circ (\sigma \otimes id) \circ (\interpretation{t} \otimes id) \circ \overline d\\
		\interpretation{\elimbang(r, x^B.u)} &= \interpretation{u} \circ (\sigma \otimes id) \circ (\interpretation{r} \otimes id) \circ \overline d
	\end{align*}

	$\interpretation{t} = \interpretation{r}$ by hypothesis, therefore $\interpretation{\elimbang(t, x^B.u)} = \interpretation{\elimbang(r, x^B.u)}$. Thus, by the induction hypothesis, we have $\elimbang(t, x^B.u) \equiv \elimbang(r, x^B.u)$.

	Hence,
	\(
		K[t] = K'[\elimbang(t, x^B.u)] \equiv K'[\elimbang(r, x^B.u)] = K[r]
	\).

	Then, $t \equiv r$.\qedhere
  \end{itemize}
\end{proof}

\subsection{Diagrams of Section~\ref{sec:denotationalsemantics}}\label{app:diagramasgrandes}
This appendix contains diagrams corresponding to different proofs:
\begin{itemize}
	\item Figure \ref{fig:lem:dcomonoidaxiom} corresponds to the proof of Lemma~\ref{lem:dcomonoidaxiom} (Appendix~\ref{proof:genprop}).
	\item Figure \ref{fig:lem:dcoalgebramorph} corresponds to the proof of Lemma~\ref{lem:dcoalgebramorph} (Appendix~\ref{proof:genprop}).
	\item Figures \ref{fig:lem:nonlinearsubstitutionx} and \ref{fig:lem:linearsubstitutionelimbang} correspond to the proof of Lemma~\ref{lem:nonlinearsubstitution} (Appendix~\ref{proof:soundness}).
\end{itemize}

\begin{figure}
\[\resizebox{!}{.8\paperheight}{\rotatebox{90}{\begin{tikzcd}[ampersand replacement=\&,cramped]
	{(\bigotimes_{i=1}^n \oc A_i)^{\otimes 2}} \&\&\&\& {\bigotimes_{i=1}^n \oc A_i} \&\&\&\& {(\bigotimes_{i=1}^n \oc A_i)^{\otimes 2}} \\
	\\
	\&\&\& {(\bigotimes_{i=1}^{n-1} \oc A_i)^{\otimes 2} \otimes \oc A_n} \&\& {(\bigotimes_{i=1}^{n-1} \oc A_i)^{\otimes 2} \otimes \oc A_n} \\
	\\
	\&\& {(\bigotimes_{i=1}^{n-1} \oc A_i)^{\otimes 2} \otimes \oc A_n \otimes  \oc A_n} \&\&\&\& {(\bigotimes_{i=1}^{n-1} \oc A_i)^{\otimes 2} \otimes \oc A_n \otimes  \oc A_n} \\
	\& {\bigotimes_{i=1}^n \oc A_i \otimes (\bigotimes_{i=1}^{n-1} \oc A_i)^{\otimes 2} \otimes \oc A_n} \&\&\&\&\&\& {\bigotimes_{i=1}^n \oc A_i \otimes (\bigotimes_{i=1}^{n-1} \oc A_i)^{\otimes 2} \otimes \oc A_n} \\
	\&\&\& {(\bigotimes_{i=1}^{n-1} \oc A_i)^{\otimes 3} \otimes \oc A_n} \&\& {(\bigotimes_{i=1}^{n-1} \oc A_i)^{\otimes 3} \otimes \oc A_n} \\
	\& {\bigotimes_{i=1}^n \oc A_i \otimes (\bigotimes_{i=1}^{n-1} \oc A_i)^{\otimes 2} \otimes (\oc A_n)^{\otimes 2}} \&\&\&\&\&\& {\bigotimes_{i=1}^n \oc A_i \otimes (\bigotimes_{i=1}^{n-1} \oc A_i)^{\otimes 2} \otimes (\oc A_n)^{\otimes 2}} \\
	\&\& {(\bigotimes_{i=1}^{n-1} \oc A_i)^{\otimes 3} \otimes  (\oc A_n)^{\otimes 2}} \&\&\&\& {(\bigotimes_{i=1}^{n-1} \oc A_i)^{\otimes 3} \otimes  (\oc A_n)^{\otimes 2}} \\
	\\
	\&\& {(\bigotimes_{i=1}^{n-1} \oc A_i)^{\otimes 3} \otimes  (\oc A_n)^{\otimes 3}} \&\&\&\& {(\bigotimes_{i=1}^{n-1} \oc A_i)^{\otimes 3} \otimes  (\oc A_n)^{\otimes 3}} \\
	{(\bigotimes_{i=1}^n \oc A_i)^{\otimes 3}} \&\&\&\&\&\&\&\& {(\bigotimes_{i=1}^n \oc A_i)^{\otimes 3}}
	\arrow["{id \otimes d_{A_1, \dots, A_{n-1}} \otimes id}"{description}, dashed, from=1-1, to=6-2]
	\arrow[""{name=0, anchor=center, inner sep=0}, "{id \otimes d_{A_1, \dots, A_n}}"{description}, from=1-1, to=12-1]
	\arrow[""{name=1, anchor=center, inner sep=0}, "{d_{A_1, \dots, A_n}}"', from=1-5, to=1-1]
	\arrow[""{name=2, anchor=center, inner sep=0}, "{d_{A_1, \dots, A_n}}", from=1-5, to=1-9]
	\arrow["{d_{A_1, \dots, A_{n-1}} \otimes id}"{description}, dashed, from=1-5, to=3-4]
	\arrow["{d_{A_1, \dots, A_{n-1}} \otimes id}"{description}, dashed, from=1-5, to=3-6]
	\arrow["{id \otimes d_{A_1, \dots, A_{n-1}} \otimes id}"{description}, dashed, from=1-9, to=6-8]
	\arrow[""{name=3, anchor=center, inner sep=0}, "{d_{A_1, \dots, A_n} \otimes id}"{description}, from=1-9, to=12-9]
	\arrow["{id \otimes d_{A_n}}"{description}, dashed, from=3-4, to=5-3]
	\arrow["{id \otimes d_{A_1, \dots, A_{n-1}} \otimes id}"{description}, dashed, from=3-4, to=7-4]
	\arrow["{id \otimes d_{A_n}}"{description}, dashed, from=3-6, to=5-7]
	\arrow["{d_{A_1, \dots, A_{n-1}} \otimes id \otimes id}"{description}, dashed, from=3-6, to=7-6]
	\arrow["{id \otimes \sigma \otimes id}"{description}, dashed, from=5-3, to=1-1]
	\arrow["{(4)}"{description}, draw=none, from=5-3, to=7-4]
	\arrow["{id \otimes d_{A_1, \dots, A_{n-1}} \otimes id}"{description}, dashed, from=5-3, to=9-3]
	\arrow["{id \otimes \sigma \otimes id}"{description}, dashed, from=5-7, to=1-9]
	\arrow["{(7)}"{description}, draw=none, from=5-7, to=6-8]
	\arrow["{id \otimes d_{A_1, \dots, A_{n-1}} \otimes id}"{description}, dashed, from=5-7, to=9-7]
	\arrow["{(6)}"{description}, draw=none, from=6-2, to=5-3]
	\arrow[""{name=4, anchor=center, inner sep=0}, "{id \otimes id \otimes d_{A_n}}"{description}, dashed, from=6-2, to=8-2]
	\arrow[""{name=5, anchor=center, inner sep=0}, "{id \otimes id \otimes d_{A_n}}"{description}, dashed, from=6-8, to=8-8]
	\arrow[""{name=6, anchor=center, inner sep=0}, equals, from=7-4, to=7-6]
	\arrow["{id \otimes d_{A_n}}"{description}, dashed, from=7-4, to=9-3]
	\arrow["{(5)}"{description}, draw=none, from=7-6, to=5-7]
	\arrow["{id \otimes d_{A_n}}"{description}, dashed, from=7-6, to=9-7]
	\arrow["{(10)}"{description}, draw=none, from=8-2, to=9-3]
	\arrow["{id \otimes \sigma \otimes id}"{description}, dashed, from=8-2, to=12-1]
	\arrow["{id \otimes \sigma \otimes id}"{description}, dashed, from=8-8, to=12-9]
	\arrow["{id \otimes \sigma \otimes id}"{description}, dashed, from=9-3, to=6-2]
	\arrow[""{name=7, anchor=center, inner sep=0}, equals, from=9-3, to=9-7]
	\arrow["{id \otimes d_{A_n}}"{description}, dashed, from=9-3, to=11-3]
	\arrow["{id \otimes \sigma \otimes id}"{description}, dashed, from=9-7, to=6-8]
	\arrow["{(11)}"{description}, draw=none, from=9-7, to=8-8]
	\arrow["{id \otimes d_{A_n}}"{description}, dashed, from=9-7, to=11-7]
	\arrow["{id \otimes \sigma \otimes id}"{description}, dashed, from=11-3, to=8-2]
	\arrow[""{name=8, anchor=center, inner sep=0}, equals, from=11-3, to=11-7]
	\arrow["{id \otimes \sigma \otimes id}"{description}, dashed, from=11-7, to=8-8]
	\arrow[""{name=9, anchor=center, inner sep=0}, equals, from=12-1, to=12-9]
	\arrow["{(8)}"{description}, draw=none, from=0, to=4]
	\arrow["{(1)}"{description}, draw=none, from=1, to=5-3]
	\arrow["{(3)}"{description}, draw=none, from=1-5, to=6]
	\arrow["{(9)}"{description}, draw=none, from=3, to=5]
	\arrow["{(2)}"{description}, draw=none, from=5-7, to=2]
	\arrow["{(12)}"{description}, draw=none, from=6, to=7]
	\arrow["{(13)}"{description}, draw=none, from=7, to=8]
	\arrow["{(14)}"{description}, draw=none, from=8, to=9]
\end{tikzcd}}}\]
\caption{Diagram for case $n > 1$ of Lemma~\ref{lem:dcomonoidaxiom}.}\label{fig:lem:dcomonoidaxiom}
\end{figure}

\begin{figure}
  \[\resizebox{!}{.8\paperheight}{\rotatebox{90}{\begin{tikzcd}[ampersand replacement=\&,cramped,row sep=small, column sep=0pt]
    {\bigotimes_{i=1}^n \oc A_i} \&\&\&\&\&\& {\oc (\bigotimes_{i=1}^n \oc A_i)} \\
    \&\&\& {\oc (\bigotimes_{i=1}^{n-1} \oc A_i) \otimes \oc A_n} \&\& \begin{array}{c} \oc (\bigotimes_{i=1}^{n-1} \oc A_i)\\ \otimes \oc \oc A_n \end{array} \\
    \& \begin{array}{c} \bigotimes_{i=1}^{n-1} \oc A_i \otimes \bigotimes_{i=1}^{n-1} \oc A_i\\\otimes \oc A_n \end{array} \\
    \&\&\& \begin{array}{c} \oc (\bigotimes_{i=1}^{n-1} \oc A_i \otimes \bigotimes_{i=1}^{n-1} \oc A_i)\\ \otimes \oc A_n \end{array} \& \begin{array}{c} \oc (\bigotimes_{i=1}^{n-1} \oc A_i \otimes \bigotimes_{i=1}^{n-1} \oc A_i)\\ \otimes \oc \oc A_n \end{array} \\
    \&\& \begin{array}{c} \oc (\bigotimes_{i=1}^{n-1} \oc A_i) \otimes \oc (\bigotimes_{i=1}^{n-1} \oc A_i)\\\otimes \oc A_n \end{array} \\
    {\bigotimes_{i=1}^n \oc A_i \otimes \bigotimes_{i=1}^n \oc A_i} \& \begin{array}{c} \bigotimes_{i=1}^{n-1} \oc A_i \otimes \bigotimes_{i=1}^{n-1} \oc A_i\\\otimes \oc A_n \otimes \oc A_n \end{array} \&\& \begin{array}{c} \oc (\bigotimes_{i=1}^{n-1} \oc A_i \otimes \bigotimes_{i=1}^{n-1} \oc A_i)\\ \otimes \oc A_n \otimes \oc A_n \end{array} \\
    \&\&\&\&\& \begin{array}{c} \oc (\bigotimes_{i=1}^{n-1} \oc A_i \otimes\\ \bigotimes_{i=1}^{n-1} \oc A_i \otimes \oc A_n) \end{array} \\
    \&\& \begin{array}{c} \oc (\bigotimes_{i=1}^{n-1} \oc A_i) \otimes \oc (\bigotimes_{i=1}^{n-1} \oc A_i)\\\otimes \oc A_n \otimes \oc A_n \end{array} \\
    \& \begin{array}{c} \oc (\bigotimes_{i=1}^{n-1} \oc A_i) \otimes \oc A_n\\ \otimes \oc (\bigotimes_{i=1}^{n-1} \oc A_i) \otimes \oc A_n \end{array} \\
    \&\&\& \begin{array}{c} \oc (\bigotimes_{i=1}^{n-1} \oc A_i \otimes \bigotimes_{i=1}^{n-1} \oc A_i)\\ \otimes \oc \oc A_n \otimes \oc \oc A_n \end{array} \& \begin{array}{c} \oc (\bigotimes_{i=1}^{n-1} \oc A_i \otimes \bigotimes_{i=1}^{n-1} \oc A_i)\\ \otimes \oc (\oc A_n \otimes \oc A_n) \end{array} \& \begin{array}{c} \oc (\bigotimes_{i=1}^{n-1} \oc A_i \otimes \bigotimes_{i=1}^{n-1} \oc A_i\\ \otimes \oc A_n \otimes \oc A_n) \end{array} \\
    \& \begin{array}{c} \oc (\bigotimes_{i=1}^{n-1} \oc A_i) \otimes \oc \oc A_n\\ \otimes \oc (\bigotimes_{i=1}^{n-1} \oc A_i) \otimes \oc \oc A_n \end{array} \& \begin{array}{c} \oc (\bigotimes_{i=1}^{n-1} \oc A_i) \otimes \oc (\bigotimes_{i=1}^{n-1} \oc A_i)\\\otimes \oc \oc A_n \otimes \oc \oc A_n \end{array} \\
    {\oc (\bigotimes_{i=1}^n \oc A_i) \otimes \oc (\bigotimes_{i=1}^n \oc A_i)} \&\&\&\&\&\& {\oc (\bigotimes_{i=1}^n \oc A_i \otimes \bigotimes_{i=1}^n \oc A_i)}
    \arrow[""{name=0, anchor=center, inner sep=0}, "{\delta_{A_1, \dots, A_n} }", from=1-1, to=1-7]
    \arrow["{\delta_{A_1, \dots, A_{n-1}} \otimes id}"{description}, dashed, from=1-1, to=2-4]
    \arrow["{d_{A_1, \dots, A_{n-1}} \otimes id }"{description}, dashed, from=1-1, to=3-2]
    \arrow[""{name=1, anchor=center, inner sep=0}, "{d_{A_1, \dots, A_n} }"{description}, from=1-1, to=6-1]
    \arrow["{\oc (d_{A_1, \dots, A_{n-1}} \otimes id)}"{description}, dashed, from=1-7, to=7-6]
    \arrow[""{name=2, anchor=center, inner sep=0}, "{\oc (d_{A_1, \dots, A_n})}"{description}, from=1-7, to=12-7]
    \arrow["{id \otimes \delta_{A_n}}"{description}, dashed, from=2-4, to=2-6]
    \arrow["{\oc (d_{A_1, \dots, A_{n-1}}) \otimes id}"{description}, dashed, from=2-4, to=4-4]
    \arrow["m"{description}, dashed, from=2-6, to=1-7]
    \arrow["{\oc (d_{A_1, \dots, A_{n-1}}) \otimes id}"{description}, dashed, from=2-6, to=4-5]
    \arrow["{(5)}"{description}, draw=none, from=2-6, to=7-6]
    \arrow[""{name=3, anchor=center, inner sep=0}, "{\delta_{A_1, \dots, A_{n-1}} \otimes \delta_{A_1, \dots, A_{n-1}} \otimes id}"{description}, dashed, from=3-2, to=5-3]
    \arrow["{id \otimes d_{A_n}}"{description}, dashed, from=3-2, to=6-2]
    \arrow["{(4)}"{description}, draw=none, from=4-4, to=2-6]
    \arrow["{id \otimes \delta_{A_n}}"', curve={height=12pt}, dashed, from=4-4, to=4-5]
    \arrow["{id \otimes d_{A_n}}"{description}, dashed, from=4-4, to=6-4]
    \arrow["m"{description}, dashed, from=4-5, to=7-6]
    \arrow[""{name=4, anchor=center, inner sep=0}, "{id \otimes \oc d_{A_n}}"{description}, dashed, from=4-5, to=10-5]
    \arrow["{m \otimes id}"{description}, dashed, from=5-3, to=4-4]
    \arrow["{(7)}"{description}, draw=none, from=5-3, to=6-4]
    \arrow["{id \otimes d_{A_n}}"{description}, dashed, from=5-3, to=8-3]
    \arrow[""{name=5, anchor=center, inner sep=0}, "{\delta_{A_1, \dots, A_{n-1}} \otimes id \otimes \delta_{A_1, \dots, A_{n-1}} \otimes id}"{description}, dashed, from=6-1, to=9-2]
    \arrow["{\delta_{A_1, \dots, A_n} \otimes \delta_{A_1, \dots, A_n}}"{description}, from=6-1, to=12-1]
    \arrow["{(6)}"{description}, draw=none, from=6-2, to=5-3]
    \arrow["{id \otimes \sigma \otimes id}"', curve={height=18pt}, dashed, from=6-2, to=6-1]
    \arrow["{\delta_{A_1, \dots, A_{n-1}} \otimes \delta_{A_1, \dots, A_{n-1}} \otimes id}"{description}, dashed, from=6-2, to=8-3]
    \arrow["{(9)}"{description}, curve={height=-12pt}, draw=none, from=6-2, to=9-2]
    \arrow["{id \otimes \delta_{A_n} \otimes \delta_{A_n}}"{description}, dashed, from=6-4, to=10-4]
    \arrow["{(13)}"{description}, draw=none, from=7-6, to=10-5]
    \arrow["{\oc (id \otimes d_{A_n})}"{description}, dashed, from=7-6, to=10-6]
    \arrow["{m \otimes id}"{description}, dashed, from=8-3, to=6-4]
    \arrow["{id \otimes \sigma \otimes id}", curve={height=-18pt}, dashed, from=8-3, to=9-2]
    \arrow["{(12)}"{description}, draw=none, from=8-3, to=10-4]
    \arrow["{id \otimes id \otimes \delta_{A_n} \otimes \delta_{A_n}}"{description}, dashed, from=8-3, to=11-3]
    \arrow["{id \otimes \delta_{A_n} \otimes id \otimes \delta_{A_n}}"{description}, dashed, from=9-2, to=11-2]
    \arrow["{id \otimes m}"', curve={height=18pt}, dashed, from=10-4, to=10-5]
    \arrow["m"', curve={height=18pt}, dashed, from=10-5, to=10-6]
    \arrow["{\oc (id \otimes \sigma \otimes id)}"{description}, dashed, from=10-6, to=12-7]
    \arrow["{(11)}"{description, pos=0.4}, draw=none, from=11-2, to=8-3]
    \arrow["{m \otimes m}"{description}, dashed, from=11-2, to=12-1]
    \arrow["{m \otimes id}"', curve={height=18pt}, dashed, from=11-3, to=10-4]
    \arrow["{id \otimes \sigma \otimes id}", curve={height=-18pt}, dashed, from=11-3, to=11-2]
    \arrow[""{name=6, anchor=center, inner sep=0}, "m"', from=12-1, to=12-7]
    \arrow["{(1)}"{description}, draw=none, from=2-4, to=0]
    \arrow["{(3)}"{description}, draw=none, from=2-4, to=3]
    \arrow["{(2)}"{description}, draw=none, from=3-2, to=1]
    \arrow["{(8)}"{description}, draw=none, from=6-4, to=4]
    \arrow["{(14)}"{description}, draw=none, from=7-6, to=2]
    \arrow["{(15)}"{description}, draw=none, from=10-4, to=6]
    \arrow["{(10)}"{description}, draw=none, from=12-1, to=5]
  \end{tikzcd}}}\]
  \caption{Diagram for case $n > 1$ of Lemma~\ref{lem:dcoalgebramorph}.}\label{fig:lem:dcoalgebramorph}
  \end{figure}

  \begin{figure}
  \[\resizebox{!}{.8\paperheight}{\rotatebox{90}{\begin{tikzcd}[ampersand replacement=\&,cramped]
    {\interpretation{\oc \Upsilon} \otimes I} \&\&\&\&\&\&\&\&\& {\interpretation{A}} \\
    \\
    \& {\interpretation{\oc \Upsilon} \otimes \interpretation{\oc \Upsilon} \otimes I} \&\& {\interpretation{\oc \Upsilon} \otimes \interpretation{\oc \Upsilon} \otimes I} \&\& {\interpretation{\oc \Upsilon} \otimes I \otimes \interpretation{\oc \Upsilon} \otimes I} \&\&\& {\interpretation{A} \otimes \interpretation{\oc \Upsilon} \otimes I} \\
    \\
    \& {\interpretation{\oc \Upsilon} \otimes I\otimes \interpretation{\oc \Upsilon} \otimes I} \\
    \&\&\&\&\& {\oc \interpretation{\oc \Upsilon} \otimes I \otimes \interpretation{\oc \Upsilon} \otimes I} \&\& {\interpretation{\oc \Upsilon} \otimes I \otimes \interpretation{\oc \Upsilon} \otimes I} \\
    \& {\oc \interpretation{A} \otimes \interpretation{\oc \Upsilon} \otimes I} \&\& {\oc \interpretation{\oc \Upsilon} \otimes \interpretation{\oc \Upsilon} \otimes I} \\
    \&\&\&\&\& {\oc \interpretation{\oc \Upsilon} \otimes \oc I \otimes \interpretation{\oc \Upsilon} \otimes I} \&\&\& {\interpretation{\oc \Upsilon} \otimes \interpretation{A} \otimes I} \\
    \&\&\& {\oc (\interpretation{\oc \Upsilon} \otimes I) \otimes \interpretation{\oc \Upsilon} \otimes I} \\
    \&\&\&\&\& {\interpretation{\oc \Upsilon} \otimes \oc \interpretation{\oc \Upsilon} \otimes \oc I \otimes I} \\
    \&\&\&\&\&\&\& {\interpretation{\oc \Upsilon} \otimes \interpretation{\oc \Upsilon} \otimes \oc \otimes I} \\
    \&\&\& {\interpretation{\oc \Upsilon} \otimes \oc (\interpretation{\oc \Upsilon} \otimes I) \otimes I} \&\&\&\&\& {I\otimes \interpretation{A} \otimes I} \\
    \&\&\&\&\& {\interpretation{\oc \Upsilon} \otimes \oc (\interpretation{\oc \Upsilon} \otimes I)} \&\& {\interpretation{\oc \Upsilon} \otimes \interpretation{\oc \Upsilon} \otimes I} \\
    \&\&\&\&\& {\interpretation{\oc \Upsilon} \otimes \interpretation{A}} \\
    {\interpretation{\oc \Upsilon} \otimes \oc \interpretation{A} \otimes I} \&\&\& {\interpretation{\oc \Upsilon} \otimes \oc \interpretation{A}} \&\&\&\&\&\& {I \otimes \interpretation{A}}
    \arrow["{\interpretation{u}}", from=1-1, to=1-10]
    \arrow["{d_\Upsilon \otimes id}"{description}, dashed, from=1-1, to=3-2]
    \arrow[""{name=0, anchor=center, inner sep=0}, "{\alpha \otimes id_I}"', from=1-1, to=15-1]
    \arrow[equals, from=3-2, to=3-4]
    \arrow["{\rho^{-1} \otimes id}"{description}, dashed, from=3-2, to=5-2]
    \arrow["{\rho^{-1} \otimes id}"{description}, dashed, from=3-4, to=3-6]
    \arrow["{\delta_\Upsilon \otimes id}"{description}, dashed, from=3-4, to=7-4]
    \arrow["{\interpretation{u} \otimes id \otimes id}"{description}, dashed, from=3-6, to=3-9]
    \arrow["{\delta_\Upsilon \otimes id}"{description}, dashed, from=3-6, to=6-6]
    \arrow[""{name=1, anchor=center, inner sep=0}, curve={height=-30pt}, equals, from=3-6, to=6-8]
    \arrow["{(17)}"{description}, draw=none, from=3-9, to=1-10]
    \arrow["{\sigma \otimes id}"{description}, dashed, from=3-9, to=8-9]
    \arrow["{\rho \otimes id}"{description}, dashed, from=5-2, to=3-4]
    \arrow["{\interpretation{\oc u} \otimes id}"{description}, dashed, from=5-2, to=7-2]
    \arrow["{\varepsilon_{\interpretation{\oc \Upsilon}} \otimes id}"{description}, curve={height=-18pt}, dashed, from=6-6, to=6-8]
    \arrow["{id \otimes m_I \otimes id}"{description}, dashed, from=6-6, to=8-6]
    \arrow["{\interpretation{u} \otimes id \otimes id}"{description}, dashed, from=6-8, to=3-9]
    \arrow["{(12)}"{description}, draw=none, from=6-8, to=8-9]
    \arrow[""{name=2, anchor=center, inner sep=0}, "{\sigma \otimes id}"{description}, dashed, from=6-8, to=11-8]
    \arrow["{(2)}"{description}, draw=none, from=7-2, to=3-4]
    \arrow[""{name=3, anchor=center, inner sep=0}, "{\sigma \otimes id}"{description}, dashed, from=7-2, to=15-1]
    \arrow["{(4)}"{description}, draw=none, from=7-4, to=3-6]
    \arrow["{\rho^{-1} \otimes id}"{description}, dashed, from=7-4, to=6-6]
    \arrow["{(5)}"{description}, draw=none, from=7-4, to=8-6]
    \arrow["{\oc (\rho^{-1}) \otimes id}"{description}, dashed, from=7-4, to=9-4]
    \arrow["{\varepsilon_{\interpretation{\oc \Upsilon}} \otimes \varepsilon_I \otimes id}"{description}, curve={height=18pt}, dashed, from=8-6, to=6-8]
    \arrow["{(8)}"{description, pos=0.6}, curve={height=-18pt}, draw=none, from=8-6, to=6-8]
    \arrow["{m \otimes id}"{description}, dashed, from=8-6, to=9-4]
    \arrow["{\sigma \otimes id}"{description}, dashed, from=8-6, to=10-6]
    \arrow["{e_\Upsilon \otimes id \otimes id}"{description}, dashed, from=8-9, to=12-9]
    \arrow["{\oc \interpretation{u} \otimes id \otimes id}"{description}, dashed, from=9-4, to=7-2]
    \arrow["{\sigma \otimes id}"{description}, dashed, from=9-4, to=12-4]
    \arrow["{id \otimes \varepsilon_{\interpretation{\oc \Upsilon}} \otimes \varepsilon_I \otimes id}", curve={height=-30pt}, dashed, from=10-6, to=11-8]
    \arrow["{id \otimes m \otimes id}"{description}, dashed, from=10-6, to=12-4]
    \arrow["{id \otimes \interpretation{u} \otimes id}"{description}, dashed, from=11-8, to=8-9]
    \arrow["{(6)}"{description}, draw=none, from=12-4, to=8-6]
    \arrow[""{name=4, anchor=center, inner sep=0}, "{id \otimes \varepsilon_{\interpretation{\oc \Upsilon} \otimes I} \otimes id}"{description}, dashed, from=12-4, to=11-8]
    \arrow["\rho"{description}, dashed, from=12-4, to=13-6]
    \arrow["{id \otimes \oc \interpretation{u} \otimes id}"{description}, dashed, from=12-4, to=15-1]
    \arrow["{(13)}"{description}, draw=none, from=12-4, to=15-4]
    \arrow["\rho"{description}, dashed, from=12-9, to=15-10]
    \arrow["{id \otimes \varepsilon_{\interpretation{\oc \Upsilon} \otimes I}}"', curve={height=-18pt}, dashed, from=13-6, to=13-8]
    \arrow["{(14)}"{description}, draw=none, from=13-6, to=14-6]
    \arrow["{id \otimes \oc \interpretation{u}}"{description}, dashed, from=13-6, to=15-4]
    \arrow["{(15)}"{description}, draw=none, from=13-8, to=11-8]
    \arrow["{id \otimes \interpretation{u}}", dashed, from=13-8, to=14-6]
    \arrow["{e_\Upsilon \otimes id}"{description}, dashed, from=14-6, to=15-10]
    \arrow["\rho"', from=15-1, to=15-4]
    \arrow["{id \otimes \varepsilon_A}"{description}, dashed, from=15-4, to=14-6]
    \arrow[""{name=5, anchor=center, inner sep=0}, "{e_\Upsilon \otimes \varepsilon_A}"', from=15-4, to=15-10]
    \arrow["\lambda"', from=15-10, to=1-10]
    \arrow["{(11)}"{description}, draw=none, from=3-9, to=1]
    \arrow["{(7)}"{description}, draw=none, from=6-6, to=1]
    \arrow["{(9)}"{description}, draw=none, from=2, to=8-6]
    \arrow["{(1)}"{description}, draw=none, from=7-2, to=0]
    \arrow["{(10)}"{description}, draw=none, from=10-6, to=4]
    \arrow["{(3)}"{description}, draw=none, from=12-4, to=3]
    \arrow["{(16)}"{description}, draw=none, from=14-6, to=5]
  \end{tikzcd}}}\]
  \caption{Diagram for case $t = x$ of Lemma~\ref{lem:nonlinearsubstitution}.}\label{fig:lem:nonlinearsubstitutionx}
  \end{figure}

  \begin{figure}
\[\resizebox{!}{.8\paperheight}{\rotatebox{90}{\begin{tikzcd}[ampersand replacement=\&,cramped]
	\begin{array}{c} \interpretation{\oc \Upsilon} \otimes \interpretation{\Gamma} \\ \otimes \interpretation{\Delta_1} \otimes \interpretation{\Delta_2} \end{array} \&\&\&\&\&\&\&\&\&\&\&\&\& \begin{array}{c} \interpretation{\oc \Upsilon} \otimes \interpretation{\oc \Upsilon} \otimes \interpretation{\Gamma} \\ \otimes \interpretation{\Delta_1} \otimes \interpretation{\Delta_2} \end{array} \\
	\\
	\& \begin{array}{c} \interpretation{\oc \Upsilon} \otimes \interpretation{\Delta_1} \\ \otimes \interpretation{\Gamma} \otimes \interpretation{\Delta_2} \end{array} \&\& \begin{array}{c} \interpretation{\oc \Upsilon} \otimes \interpretation{\oc \Upsilon} \otimes \interpretation{\Delta_1} \\ \otimes \interpretation{\Gamma} \otimes \interpretation{\Delta_2} \end{array} \&\&\&\&\&\&\&\& \begin{array}{c}  \interpretation{\oc \Upsilon} \otimes\interpretation{\oc \Upsilon} \otimes \interpretation{\oc \Upsilon} \otimes \interpretation{\Gamma} \\ \otimes \interpretation{\Delta_1} \otimes \interpretation{\Delta_2} \end{array} \\
	\&\&\&\&\&\& \begin{array}{c} \interpretation{\oc \Upsilon} \otimes \interpretation{\oc \Upsilon} \otimes \interpretation{\oc \Upsilon} \otimes \interpretation{\Delta_1} \\ \otimes \interpretation{\Gamma} \otimes \interpretation{\Delta_2} \end{array} \&\&\&\&\&\&\& \begin{array}{c} \interpretation{\oc \Upsilon} \otimes \interpretation{\Delta_1} \otimes \interpretation{\Delta_2} \\ \otimes\interpretation{\oc \Upsilon} \otimes \interpretation{\Gamma} \end{array} \\
	\& \begin{array}{c} \interpretation{\oc \Upsilon} \otimes \interpretation{\Delta_1} \\ \otimes \interpretation{\oc \Upsilon} \otimes \interpretation{\Gamma} \otimes \interpretation{\Delta_2} \end{array} \&\& \begin{array}{c} \interpretation{\oc \Upsilon} \otimes \interpretation{\oc \Upsilon} \otimes \interpretation{\Delta_1} \\ \otimes \interpretation{\Gamma} \otimes \interpretation{\Delta_2} \end{array} \&\& \begin{array}{c} \interpretation{\oc \Upsilon} \otimes \interpretation{\oc \Upsilon} \otimes \\ \interpretation{\oc \Upsilon} \otimes \interpretation{\Delta_1} \\ \otimes \interpretation{\Gamma} \otimes \interpretation{\Delta_2} \end{array} \&\&\&\&\&\&\& \begin{array}{c} \interpretation{\oc \Upsilon} \otimes \interpretation{\oc \Upsilon} \otimes \\ \interpretation{\Delta_1} \otimes \interpretation{\Delta_2} \\ \otimes\interpretation{\oc \Upsilon} \otimes \interpretation{\Gamma} \end{array} \\
	\\
	\& \begin{array}{c} \oc \interpretation C \otimes \interpretation{\oc \Upsilon} \\ \otimes \interpretation{\Gamma} \otimes \interpretation{\Delta_2} \end{array} \&\& \begin{array}{c} \interpretation{\oc \Upsilon} \otimes \interpretation{\oc \Upsilon} \otimes \oc \interpretation C \\ \otimes \interpretation{\Gamma} \otimes \interpretation{\Delta_2} \end{array} \&\&\&\&\&\&\& \begin{array}{c} \interpretation{\oc \Upsilon} \otimes \interpretation{\oc \Upsilon} \\\otimes \interpretation{\Delta_1}  \otimes\interpretation{\oc \Upsilon} \\ \otimes \interpretation{\Gamma} \otimes \interpretation{\Delta_2} \end{array} \\
	\&\&\&\&\&\& \begin{array}{c} \interpretation{\oc \Upsilon} \otimes \interpretation{\oc \Upsilon} \\ \otimes \oc \interpretation C \otimes I \\ \otimes \interpretation{\Gamma} \otimes \interpretation{\Delta_2} \end{array} \&\& \begin{array}{c} \interpretation{\oc \Upsilon} \otimes \interpretation{\oc \Upsilon} \otimes \oc \interpretation C \\ \otimes \interpretation{\Gamma} \otimes \interpretation{\Delta_2} \end{array} \&\&\& \begin{array}{c} \interpretation{\oc \Upsilon} \otimes \interpretation{\Delta_1} \otimes\\ \interpretation{\oc \Upsilon} \otimes \interpretation{\Delta_2} \\ \otimes\interpretation{\oc \Upsilon} \otimes \interpretation{\Gamma} \end{array} \\
	\&\&\& \begin{array}{c} \interpretation{\oc \Upsilon} \otimes \interpretation{\oc \Upsilon} \\ \otimes \oc \interpretation C \otimes \oc \interpretation C \\ \otimes \interpretation{\Gamma} \otimes \interpretation{\Delta_2} \end{array} \&\&\&\&\&\&\& \begin{array}{c} \interpretation{\oc \Upsilon} \otimes \interpretation{\oc \Upsilon} \\\otimes \interpretation{\Delta_1}  \otimes \interpretation{\Delta_2} \\ \otimes\interpretation{\oc \Upsilon} \otimes \interpretation{\Gamma} \end{array} \\
	\& \begin{array}{c} \interpretation{\oc \Upsilon} \otimes \oc \interpretation C \\ \otimes \interpretation{\Gamma} \otimes \interpretation{\Delta_2} \end{array} \&\&\&\&\& \begin{array}{c} \interpretation{\oc \Upsilon} \otimes \oc \interpretation C \\ \otimes \interpretation{\oc \Upsilon} \otimes I \\ \otimes \interpretation{\Gamma} \otimes \interpretation{\Delta_2} \end{array} \&\& \begin{array}{c} \interpretation{\oc \Upsilon} \otimes \oc \interpretation C \otimes \interpretation{\oc \Upsilon} \\ \otimes \interpretation{\Gamma} \otimes \interpretation{\Delta_2} \end{array} \&\&\&\& \begin{array}{c} \interpretation{\oc \Upsilon} \otimes \interpretation{\oc \Upsilon} \otimes \interpretation{\Delta_1} \\ \otimes \interpretation{\Delta_2} \otimes\interpretation A \end{array} \\
	\\
	\&\&\& \begin{array}{c} \interpretation{\oc \Upsilon} \otimes \oc \interpretation C \\ \otimes \interpretation{\oc \Upsilon} \otimes \oc \interpretation C \\ \otimes \interpretation{\Gamma} \otimes \interpretation{\Delta_2} \end{array} \&\&\& \begin{array}{c} \interpretation{\oc \Upsilon} \otimes \oc \interpretation C \otimes \interpretation{\Delta_2}\\\otimes \interpretation{\oc \Upsilon} \otimes I \otimes \interpretation{\Gamma} \end{array} \&\& \begin{array}{c} \interpretation{\oc \Upsilon} \otimes\oc\interpretation{C} \otimes \interpretation{\Delta_2} \\ \otimes\interpretation{\oc \Upsilon} \otimes \interpretation{\Gamma} \end{array} \&\& \begin{array}{c} \oc\interpretation{C} \otimes \interpretation{\oc \Upsilon} \otimes \interpretation{\Delta_2} \\ \otimes\interpretation{\oc \Upsilon} \otimes \interpretation{\Gamma} \end{array} \\
	\\
	\&\& \begin{array}{c} \interpretation{\oc \Upsilon} \otimes \oc \interpretation C \otimes \interpretation{\Delta_2}\\\otimes \interpretation{\oc \Upsilon} \otimes \oc \interpretation C \otimes \interpretation{\Gamma} \end{array} \\
	\&\&\&\&\&\&\&\&\&\& \begin{array}{c} \oc \interpretation C \otimes \interpretation{\oc \Upsilon} \\ \otimes \interpretation{\Delta_2} \otimes\interpretation A \end{array} \&\& \begin{array}{c} \interpretation{\oc \Upsilon} \otimes \interpretation{\Delta_1} \otimes \\ \interpretation{\oc \Upsilon} \otimes \interpretation{\Delta_2} \otimes\interpretation A \end{array} \\
	\&\& \begin{array}{c} \interpretation{\oc \Upsilon} \otimes \oc \interpretation C \\ \otimes \interpretation{\Delta_2} \otimes \interpretation A \end{array} \\
	{\interpretation B} \&\&\&\&\&\&\&\&\&\&\&\&\& \begin{array}{c} \interpretation{\oc \Upsilon} \otimes \interpretation{\Delta_1} \\ \otimes \interpretation{\Delta_2} \otimes\interpretation A \end{array}
	\arrow[""{name=0, anchor=center, inner sep=0}, "{d_\Upsilon \otimes id}", from=1-1, to=1-14]
	\arrow["{id \otimes \sigma \otimes id}"{description}, dashed, from=1-1, to=3-2]
	\arrow[""{name=1, anchor=center, inner sep=0}, "{\interpretation{\elimbang(t, y^C.(v/x)w)}}"{description}, from=1-1, to=17-1]
	\arrow[""{name=2, anchor=center, inner sep=0}, "{id \otimes \sigma \otimes id}"{description}, dashed, from=1-14, to=3-4]
	\arrow[""{name=3, anchor=center, inner sep=0}, "{d_\Upsilon \otimes id}"{description}, dashed, from=1-14, to=3-12]
	\arrow["{id \otimes \sigma}", from=1-14, to=4-14]
	\arrow[""{name=4, anchor=center, inner sep=0}, "{d_\Upsilon \otimes id}", dashed, from=3-2, to=3-4]
	\arrow["{\overline d_{\Upsilon,\Delta_1,(\Gamma,\Delta_2)}}"{description}, dashed, from=3-2, to=5-2]
	\arrow["{id \otimes d_\Upsilon \otimes id}"{description}, dashed, from=3-4, to=4-7]
	\arrow[""{name=5, anchor=center, inner sep=0}, "{id \otimes \sigma \otimes id}"{description}, dashed, from=3-4, to=5-2]
	\arrow["{\sigma \otimes id}"{description}, dashed, from=3-4, to=5-4]
	\arrow["{(5)}"{description}, draw=none, from=3-4, to=5-6]
	\arrow[""{name=6, anchor=center, inner sep=0}, "{id \otimes \sigma \otimes id}"{description}, dashed, from=3-12, to=4-7]
	\arrow["{id \otimes \sigma}"{description}, dashed, from=3-12, to=5-13]
	\arrow["{\sigma \otimes id}"{description}, curve={height=24pt}, dashed, from=4-7, to=5-6]
	\arrow[""{name=7, anchor=center, inner sep=0}, "{d_\Upsilon \otimes id}"{description}, curve={height=18pt}, dashed, from=4-14, to=5-13]
	\arrow[""{name=8, anchor=center, inner sep=0}, "{id \otimes \interpretation v}", from=4-14, to=17-14]
	\arrow["{\sigma \otimes id}"{description}, dashed, from=5-2, to=5-4]
	\arrow["{\interpretation t \otimes id}"{description}, dashed, from=5-2, to=7-2]
	\arrow["{d_\Upsilon \otimes id}"{description}, dashed, from=5-4, to=5-6]
	\arrow[""{name=9, anchor=center, inner sep=0}, "{id \otimes \interpretation t \otimes id}"{description}, curve={height=30pt}, dashed, from=5-4, to=10-2]
	\arrow["{id \otimes \interpretation t \otimes id}"{description}, curve={height=18pt}, dashed, from=5-6, to=7-4]
	\arrow[""{name=10, anchor=center, inner sep=0}, "{id \otimes \sigma \otimes id}"{description}, dashed, from=5-6, to=7-11]
	\arrow["{id \otimes \sigma \otimes id}"{description}, curve={height=12pt}, dashed, from=5-13, to=8-12]
	\arrow[""{name=11, anchor=center, inner sep=0}, "{id \otimes \interpretation v}"{description}, dashed, from=5-13, to=10-13]
	\arrow["{\sigma \otimes id}"{description}, dashed, from=7-2, to=10-2]
	\arrow[""{name=12, anchor=center, inner sep=0}, "{id \otimes \rho^{-1} \otimes id}"{description}, dashed, from=7-4, to=8-7]
	\arrow[""{name=13, anchor=center, inner sep=0}, curve={height=-30pt}, equals, from=7-4, to=8-9]
	\arrow["{id \otimes d_C \otimes id}"{description}, dashed, from=7-4, to=9-4]
	\arrow["{(11)}"{description}, draw=none, from=7-11, to=3-12]
	\arrow["{id \otimes \sigma}"{description}, dashed, from=7-11, to=9-11]
	\arrow["{id \otimes \interpretation t \otimes id}"{description}, dashed, from=7-11, to=10-9]
	\arrow["{id \otimes \lambda \otimes id}"{description}, dashed, from=8-7, to=8-9]
	\arrow["{id \otimes \sigma \otimes id}"{description}, dashed, from=8-7, to=10-7]
	\arrow["{(17)}"{description}, draw=none, from=8-9, to=10-7]
	\arrow["{id \otimes \sigma \otimes id}"{description}, dashed, from=8-9, to=10-9]
	\arrow["{\sigma \otimes id}"{description}, dashed, from=8-12, to=9-11]
	\arrow["{\interpretation t \otimes id}"{description}, dashed, from=8-12, to=12-11]
	\arrow[""{name=14, anchor=center, inner sep=0}, "{id \otimes \interpretation v}"{description}, curve={height=24pt}, dashed, from=8-12, to=15-13]
	\arrow["{id \otimes e_C \otimes id}"{description}, dashed, from=9-4, to=8-7]
	\arrow["{(16)}"{description}, draw=none, from=9-4, to=10-7]
	\arrow["{id \otimes \sigma \otimes id}"{description}, dashed, from=9-4, to=12-4]
	\arrow["{id \otimes \interpretation t \otimes id}"{description}, dashed, from=9-11, to=12-9]
	\arrow["{(22)}"{description}, draw=none, from=9-11, to=12-11]
	\arrow["{d_\Upsilon \otimes id}"{description}, dashed, from=10-2, to=7-4]
	\arrow["{(9)}"{description, pos=0.6}, shift left, curve={height=-12pt}, draw=none, from=10-2, to=9-4]
	\arrow["{d_{\Upsilon,C} \otimes id}"{description}, dashed, from=10-2, to=12-4]
	\arrow["{(10)}"{description}, draw=none, from=10-2, to=16-3]
	\arrow["{\interpretation{(v/x)w}}"{description}, dashed, from=10-2, to=17-1]
	\arrow["{id \otimes \lambda \otimes id}"{description}, dashed, from=10-7, to=10-9]
	\arrow["{id \otimes \sigma}"{description}, dashed, from=10-7, to=12-7]
	\arrow["{(20)}"{description}, draw=none, from=10-9, to=9-11]
	\arrow["{id \otimes \sigma}"{description}, dashed, from=10-9, to=12-9]
	\arrow["{id \otimes \sigma \otimes id}"{description}, dashed, from=10-13, to=15-13]
	\arrow["{id \otimes e_C \otimes id}"{description}, dashed, from=12-4, to=10-7]
	\arrow["{(18)}"{description}, draw=none, from=12-4, to=12-7]
	\arrow["{id \otimes \sigma}"{description}, dashed, from=12-4, to=14-3]
	\arrow["{(19)}"{description}, draw=none, from=12-7, to=10-9]
	\arrow["{id \otimes \lambda}"{description}, dashed, from=12-7, to=12-9]
	\arrow["{(23)}"{description}, draw=none, from=12-9, to=15-11]
	\arrow[""{name=15, anchor=center, inner sep=0}, "{id \otimes \interpretation v}"{description}, curve={height=-12pt}, dashed, from=12-9, to=16-3]
	\arrow["{\sigma \otimes id}"{description}, dashed, from=12-11, to=12-9]
	\arrow[""{name=16, anchor=center, inner sep=0}, "{id \otimes \interpretation v}"{description}, dashed, from=12-11, to=15-11]
	\arrow[""{name=17, anchor=center, inner sep=0}, "{id \otimes e_C \otimes id}"{description}, dashed, from=14-3, to=12-7]
	\arrow["{id \otimes \interpretation{v}'}"{description}, dashed, from=14-3, to=16-3]
	\arrow["{\sigma \otimes id}", curve={height=24pt}, dashed, from=15-11, to=16-3]
	\arrow["{\interpretation t \otimes id}"', curve={height=24pt}, dashed, from=15-13, to=15-11]
	\arrow["{\interpretation w}"{description}, dashed, from=16-3, to=17-1]
	\arrow[""{name=18, anchor=center, inner sep=0}, "{d_\Upsilon \otimes id}"{description}, curve={height=18pt}, dashed, from=17-14, to=10-13]
	\arrow["{\overline d_{\Upsilon,\Delta_1,(\Delta_2,A)}}"{description}, dashed, from=17-14, to=15-13]
	\arrow[""{name=19, anchor=center, inner sep=0}, "{\interpretation{\elimbang(t,y^C.w)}}", from=17-14, to=17-1]
	\arrow["{(1)}"{description}, draw=none, from=0, to=4]
	\arrow["{(2)}"{description}, draw=none, from=2, to=6]
	\arrow["{(12)}"{description}, draw=none, from=3, to=7]
	\arrow["{(3)}"{description}, draw=none, from=3-2, to=5]
	\arrow["{(4)}"{description}, draw=none, from=5, to=5-4]
	\arrow["{(6)}"{description}, draw=none, from=5-2, to=9]
	\arrow["{(8)}"{description}, draw=none, from=5-2, to=1]
	\arrow["{(26)}"{description}, draw=none, from=11, to=8]
	\arrow["{(13)}"{description}, draw=none, from=13, to=10]
	\arrow["{(7)}"{description}, draw=none, from=7-4, to=9]
	\arrow["{(14)}"{description}, draw=none, from=8-7, to=13]
	\arrow["{(25)}"{description}, draw=none, from=14, to=11]
	\arrow["{(15)}"{description}, draw=none, from=9-4, to=12]
	\arrow["{(24)}"{description}, draw=none, from=16, to=14]
	\arrow["{(21)}"{description}, draw=none, from=17, to=15]
	\arrow["{(28)}"{description}, draw=none, from=15-11, to=19]
	\arrow["{(27)}"{description}, curve={height=12pt}, draw=none, from=15-13, to=18]
\end{tikzcd}}}\]
\caption{Diagram for case $u = \elimbang(t, y^C.w)$ when $x \in \fv(w)$ of Lemma~\ref{lem:linearsubstitution}.}\label{fig:lem:linearsubstitutionelimbang}
  \end{figure}

\end{document}